\documentclass[11pt,american,british,english,intoc]{report}
\usepackage[T1]{fontenc}
\usepackage[latin9]{inputenc}
\usepackage[a4paper]{geometry}
\geometry{verbose,tmargin=3cm,bmargin=3cm,lmargin=3.5cm,rmargin=3cm}
\setcounter{secnumdepth}{3}
\setcounter{tocdepth}{3}
\usepackage{color}
\usepackage{babel}
\usepackage{float}
\usepackage{rotfloat}
\usepackage{textcomp}
\usepackage{amsmath}
\usepackage{amsthm}
\usepackage{amssymb}
\usepackage{graphicx}
\usepackage{setspace}
\usepackage[authoryear]{natbib}
\usepackage{nomencl}
\providecommand{\printnomenclature}{\printglossary}
\providecommand{\makenomenclature}{\makeglossary}
\makenomenclature
\onehalfspacing
\usepackage[unicode=true,pdfusetitle,
 bookmarks=true,bookmarksnumbered=true,bookmarksopen=false,
 breaklinks=true,pdfborder={0 0 1},backref=false,colorlinks=true]
 {hyperref}
\hypersetup{
 linkcolor=darkblue, citecolor=darkblue, urlcolor=darkblue, pdfstartview=XYZ, plainpages=false, pdfpagelabels}

\makeatletter

\providecommand{\tabularnewline}{\\}
\floatstyle{ruled}
\newfloat{algorithm}{tbp}{loa}[chapter]
\providecommand{\algorithmname}{Algorithm}
\floatname{algorithm}{\protect\algorithmname}

\numberwithin{equation}{section}
\numberwithin{figure}{section}
\numberwithin{table}{section}
\newcommand{\lyxrightaddress}[1]{
	\par {\raggedleft \begin{tabular}{l}\ignorespaces
	#1
	\end{tabular}
	\vspace{1.4em}
	\par}
}
\theoremstyle{plain}
\newtheorem{thm}{\protect\theoremname}[section]
\theoremstyle{plain}
\newtheorem{prop}[thm]{\protect\propositionname}
\theoremstyle{definition}
\newtheorem{defn}[thm]{\protect\definitionname}
\theoremstyle{plain}
\newtheorem{lem}[thm]{\protect\lemmaname}
\theoremstyle{remark}
\newtheorem{rem}[thm]{\protect\remarkname}
\theoremstyle{definition}
\newtheorem{example}[thm]{\protect\examplename}
\theoremstyle{plain}
\newtheorem{cor}[thm]{\protect\corollaryname}

\usepackage{graphicx}
\usepackage{color}
\usepackage{url}
\usepackage{hyperref}

\definecolor{darkblue}{rgb}{0,0,0.7}


\usepackage[titles]{tocloft}

\renewcommand{\@tocrmarg}{1.75em}


\setlength{\nomitemsep}{-\parsep} 


\usepackage{fancyhdr}


\fancyhf{}
\newcommand{\headerfont}{\color[gray]{0.7}\fontfamily{phv}\fontsize{10}{11}\selectfont}
\newcommand{\footerfont}{\fontfamily{phv}\fontsize{11}{11}\selectfont}

\lhead{\headerfont\leftmark}
\fancyfoot[C]{\footerfont\thepage}

\pagestyle{plain} 


\hyphenpenalty=5000  
\clubpenalty=9999  
\widowpenalty=9999 

\makeatother

\addto\captionsamerican{%
}
\addto\captionsamerican{\renewcommand{\algorithmname}{Algorithm}}
\addto\captionsamerican{\renewcommand{\corollaryname}{Corollary}}
\addto\captionsamerican{\renewcommand{\definitionname}{Definition}}
\addto\captionsamerican{\renewcommand{\examplename}{Example}}
\addto\captionsamerican{\renewcommand{\lemmaname}{Lemma}}
\addto\captionsamerican{\renewcommand{\propositionname}{Proposition}}
\addto\captionsamerican{\renewcommand{\remarkname}{Remark}}
\addto\captionsamerican{\renewcommand{\theoremname}{Theorem}}
\addto\captionsbritish{%
}
\addto\captionsbritish{\renewcommand{\algorithmname}{Algorithm}}
\addto\captionsbritish{\renewcommand{\corollaryname}{Corollary}}
\addto\captionsbritish{\renewcommand{\definitionname}{Definition}}
\addto\captionsbritish{\renewcommand{\examplename}{Example}}
\addto\captionsbritish{\renewcommand{\lemmaname}{Lemma}}
\addto\captionsbritish{\renewcommand{\propositionname}{Proposition}}
\addto\captionsbritish{\renewcommand{\remarkname}{Remark}}
\addto\captionsbritish{\renewcommand{\theoremname}{Theorem}}
\addto\captionsenglish{%
}
\addto\captionsenglish{\renewcommand{\corollaryname}{Corollary}}
\addto\captionsenglish{\renewcommand{\definitionname}{Definition}}
\addto\captionsenglish{\renewcommand{\examplename}{Example}}
\addto\captionsenglish{\renewcommand{\lemmaname}{Lemma}}
\addto\captionsenglish{\renewcommand{\propositionname}{Proposition}}
\addto\captionsenglish{\renewcommand{\remarkname}{Remark}}
\addto\captionsenglish{\renewcommand{\theoremname}{Theorem}}

\providecommand{\corollaryname}{Corollary}
\providecommand{\definitionname}{Definition}
\providecommand{\examplename}{Example}
\providecommand{\lemmaname}{Lemma}
\providecommand{\propositionname}{Proposition}
\providecommand{\remarkname}{Remark}
\providecommand{\theoremname}{Theorem}

\begin{document}
\title{Variational Bayes Inference in Digital Receivers}
\author{Viet Hung Tran}
\date{\includegraphics[width=0.1\textwidth]{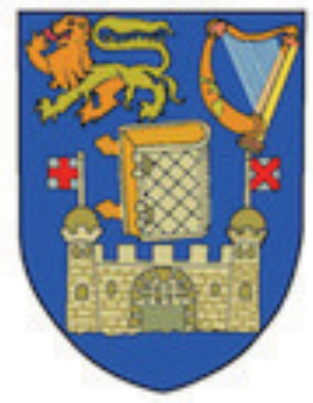}\\
A thesis submitted to Trinity College Dublin \\
for the degree of Doctor of Philosophy \\
(June 2014) \\
Supervisor: Assoc. Prof. Anthony Quinn }
\maketitle
\begin{abstract}
The digital telecommunications receiver is an important context for
inference methodology, the key objective being to minimize the expected
loss function in recovering the transmitted information. For that
criterion, the optimal decision is the Bayesian minimum-risk estimator.
However, the computational load of the Bayesian estimator is often
prohibitive and, hence, efficient computational schemes are required.
The design of novel schemes---striking new balances between accuracy
and computational load---is the primary concern of this thesis. 

Because Bayesian methodology seeks to construct the joint distribution
of all uncertain parameters in a hierarchial manner, its computational
complexity is often prohibitive. A solution for efficient computation
is to re-factorize this joint model into an appropriate conditionally
independent (CI) structure, whose factors are Markov models of appropriate
order. By tuning the order from maximum to minimum, this Markov factorization
is applicable to all parametric models. The associated computational
complexity ranges from prohibitive to minimal. For efficient Bayesian
computation, two popular techniques, one exact and one approximate,
will be studied in this thesis, as described next. 

The exact scheme is a recursive one, namely the generalized distributive
law (GDL), whose purpose is to distribute all operators across the
CI factors of the joint model, so as to reduce the total number of
operators required. In a novel theorem derived in this thesis, GDL---if
applicable---will be shown to guarantee such a reduction in all cases.
An associated lemma also quantifies this reduction. For practical
use, two novel algorithms, namely the no-longer-needed (NLN) algorithm
and the generalized form of the Forward-Backward (FB) algorithm, recursively
factorizes and computes the CI factors of an arbitrary model, respectively. 

The approximate scheme is an iterative one, namely the Variational
Bayes (VB) approximation, whose purpose is to find the independent
(i.e. zero-order Markov) model closest to the true joint model in
the minimum Kullback-Leibler divergence (KLD) sense. Despite being
computationally efficient, this naive mean field approximation confers
only modest performance for highly correlated models. A novel approximation,
namely Transformed Variational Bayes (TVB), will be designed in the
thesis in order to relax the zero-order constraint in the VB approximation,
further reducing the KLD of the optimal approximation. 

Together, the GDL and VB schemes are able to provide a range of trade-offs
between accuracy and speed in digital receivers. Two demodulation
problems in digital receivers will be considered in this thesis, the
first being a Markov-based symbol detector, and the second being a
frequency estimator for synchronization. The first problem will be
solved using a novel accelerated scheme for VB inference of a hidden
Markov chain (HMC). When applied to weakly correlated M-state HMCs
with n samples, this accelerated scheme reduces the computational
load from $O(nM^{2})$ in the state-of-the-art Viterbi algorithm to
$O(nM)$, with comparable accuracy. The second problem is addressed
via the TVB approximation. Although its performance is only modest
in simulation, it nevertheless opens up new opportunities for approximate
Bayesian inference to address high Quality-of-Service (QoS) tasks
in 4G mobile networks. 
\end{abstract}
\clearpage{}

\section*{Acknowledgements}

Looking back, it seems to me that writing this thesis is a natural
consequence of my life:\\

Firstly, I would like to thank my supervisor, Prof.$\ $Anthony Quinn,
for all the training and encouragement during my time at Trinity.
In fact, he is the best supervisor I could ever hope for. Without
his guidance and support, this thesis would never be complete.\\

I wish to thank Prof.$\ $Jean-Pierre Barbot and Prof.$\ $Pascal
Larzabal for encouraging me to pursue Bayesian methodology when I
finished my masters in ENS Cachan, Paris. \\

I also wish to acknowledge the academic support of Ho Chi Minh City
University of Technology, of which I was an undergraduate student
and currently am a lecturer. \\

Regarding my family, I am especially grateful to my elder brother,
Dr.$\ $Viet Hong Tran, whom I have followed since the time I was
born - from preliminary, middle, high school to the same college.
Without his passion for physics and electronics, I would never have
pursued them either. It is rather interesting to note that our sole
difference in our pathways is the time after entering College. While
he studied telecommunicaitons as a undergraduate and finished his
PhD thesis on automatics, I did exactly the opposite, i.e. I studied
automatics as a undergraduate and finished this PhD thesis on telecommunications.\\

Last, but definitely not least, I wish to dedicate this thesis to
my parents, whom I love much more than myself. Actually, writing this
thesis is the best way I know to make them feel proud of me.\\

\lyxrightaddress{\textbf{Viet Hung Tran }\\
University of Dublin, Trinity College \\
June 2014}

\clearpage{}

\section*{Summary }

This thesis is primarily concerned with the trade-off between computational
complexity and accuracy in digital receivers. Furthermore, because
OFDM modulation is key in 4G systems, there is an interest in better
demodulation schemes for the fading channel, which is the environment
that all mobile receivers must confront. The demodulation challenge
may then be divided into two themes, i.e. digital detection and synchronization,
both of which are inference tasks. 

A range of state-of-the-art estimation techniques for DSP are reviewed
in this thesis, focussing particularly on the Bayesian minimum-risk
(MR) estimator. The latter takes account of all uncertainties in the
receiver before returning the optimal estimate minimizing average
error. A key drawback is that exact Bayesian inference in the digital
detection context is intractable, since the number of possible states
grows exponentially with incoming data. 

In an attempt to design efficient algorithms for these probabilistic
telecommunications problems, we propose a core principle for computational
reduction: Markovianity. In order to generalize this principle to
arbitrary objective functions, we design a novel topology on the variable
indices, namely the conditionally independent (CI) structure. We achieve
this by a new algorithm, called the no-longer-needed (NLN) algorithm,
which returns a bi-directional CI structure for an arbitrary objective
function. Owing to the generalized distributive law (GDL) in ring
theory, any generic ring-sum operator can then be distributed across
this CI structure, and all the ring-product factors involving NLN
variables can then be computed via a novel Forward-Backward (FB) recursion
(not to be confused with the FB algorithm from conventional digital
detection). The reduction in the number of operators, when GDL can
validly be applied, is guaranteed to be strictly positive, because
of this CI structure, a fact established by a novel theorem on GDL
presented in this thesis. Note that, since the number of operations
falls exponentially with the number of NLN variables, the FB recursion
is expected to be attractive in practical telecommunications context.
The GDL principle is useful when designing and evaluating exact efficient
recursive computational flows. Furthermore, the application of GDL
to approximate iterative schemes---as opposed to recursive schemes---is
another focus of this thesis. 

When applied to a probability model, the topological CI structure
that NLN returns is shown to be equivalent to one of the CI factorizations,
returned by an appropriate chain rule, for the joint distribution.
In particular, the FB recursion is shown in this thesis to specialize
to both the state-of-the-art FB algorithm and to the Viterbi algorithm
(VA)---depending on which inference task (i.e. operators) we define---in
the case of the M -state hidden Markov chain (HMC), which is a key
model for digital receivers. Owing to the exponential fall in the
computational complexity of the FB recursion, as explained above,
FB and VA can return exact MR estimates, i.e. the sequence of marginal
MAP estimates and the joint MAP trajectory, respectively, with a complexity
$O(nM^{2})$ in both cases, i.e. growing linearly with the number,
$n$, of data. 

To achieve a trade-off between computational complexity and accuracy
in HMC trajectory estimation, the exact strategies are then relaxed
via deterministic distributional approximation of the posterior distribution,
via the Variational Bayes (VB) approximation. A novel accelerated
scheme is designed for the iterative VB (IVB) algorithm, which leaves
out the converged VB-marginal distributions in the next IVB cycle,
and hence, reduces the effective number of IVB cycles to about one
on average. This accelerated scheme is then carried over to the functionally
constrained VB (FCVB) algorithm, which is shown, for the first time,
to be equivalent to the famous Iterated Conditional Modes (ICM) algorithm,
returning a local joint MAP estimate. This new interpretation casts
fresh light on the properties of the ICM algorithm. When applied to
the digital detection problem for the quantized Rayleigh fading channel,
the accelerated ICM/FCVB algorithm yields attractive results. When
correlation in the Rayleigh process is not too high, i.e. the fading
process is not too slow, the simulation results show that this accelerated
ICM scheme achieves almost the same accuracy as FB and VA, but with
much lower computational load, i.e. $O(nM)$ instead of $O(nM^{2})$.
These properties follow from the newly-discovered VB interpretation
of ICM. 

In an attempt to deal with Bayesian intractability in more general
contexts, VB seeks the inde-pendence-constrained approximation that
minimizes the Kullback-Leibler divergence (KLD) to the true but intractable
posterior. The novel transformed Variational Bayes (TVB) approximation
is proposed as a way of reducing this KLD further, thereby improving
the accuracy of the deterministic approximation. The parameters are
transformed into a metric in which coupling between the transformed
parameters is weakened. VB is applied in this transformed metric and
the transformation is then inverted, yielding an approximation with
reduced KLD. 

As an application in telecommunications, the synchronization problem
in demodulation is then specialized to the frequency-offset estimation
problem, whose accuracy is critical for OFDM systems. When the frequency
offset of the basic single-tone sinusoidal model is off-bin, the accuracy
of the DFT-based maximum likelihood (ML) estimate is shown to be far
worse than that of the Bayesian MR estimate, the latter being the
continuous-valued posterior mean estimate. As stated above, the Bayesian
MR estimate is often not available in more general contexts, e.g.
joint synchronization and channel decoding, because the posterior
distribution is not in closed-form in these cases. When applied to
the single frequency estimation problem, TVB achieves an accuracy
far greater than that of VB, slightly better than that of ML, and
comparable to that of the marginal MAP estimate, as shown in simulation.
This experience encourages the exploration of the TVB approximation
in the general contexts above. 

\clearpage{}

\pagenumbering{roman}

\tableofcontents{}

\printnomenclature[3cm]{}

\input{\string"List_of_abbreviation.tex\string"}

\pagestyle{fancy}

\chapter{Introduction \label{=00005BChapter 1=00005D}	}

\pagenumbering{arabic} 
\setcounter{page}{1}

In the early years of this decade, 4G mobile systems have been widely
deployed around the world, in response to the complete dominance of
smartphones over traditional telephone. Then, in order to maintain
the timescale of ten years between each mobile generation, the 5G
system standard awaits a comprehensive specification in the next year
or two. 5G systems are currently expected to be about ten times faster
than 4G systems, much more energy-efficient, and moving towards massive
machine-to-machine communication {[}\citet{ch1:5G:IEEE_part1:2014,ch1:5G:IEEE_part2:2014}{]}.
This urgent challenge in mobile systems reflects the rapid development
of information technology, which will be looking for better methodologies
and faster computing algorithms from digital signal processing (DSP)
in coming years.

In order to propose new methods, we need a deeper understanding of
available techniques. This is the philosophy we will adopt in this
thesis.

\section{Motivation for the thesis}

Unlike fixed-line communication, the major challenge in the mobile
receiver is to maintain high Quality of Service (QoS) in the face
of challenging and rapidly changing physical environment. For the
same QoS, the mobile receiver requires more computational load than
a fixed-line one. Yet, the energy resource from a mobile's battery
is highly constrained resource. The trade-off between accuracy and
computational load \foreignlanguage{british}{favours} the reduction
in computational load. This motivates our research into efficient
inference scheme in mobile receivers. In this thesis, we seek new
trade-off possibility for digital receiver algorithm be on those provided
by conventional solution. 

The formal proof of central limit theorem (CLT) in the early twentieth
century encourages the focus on probability modeling and random processes.
Particularly, the point estimation via Maximum Likelihood (ML), after
the Fisher's work in the early 1920s. ML has become the state-of-the-art
estimator in DSP systems, owing to good accuracy. In the late 1960s,
the Viterbi algorithm (VA) was designed as a computationally efficient
recursive technique for ML sequence estimation (MLSE) for digital
sequence. While not achieving the highest accuracy for digital detection,
VA is still the state-of-the-art algorithm, owing to its computational
efficiency.

For a long time, Bayesian inference was not focused on the delivery
of practical systems, despite its consistency and ability to exploit
known prior structure. Being a probabilistic framework, the normalizing
constant is required for evaluating any posterior distribution, as
well as associated moments and interval probability. This normalizing
constant is usually intractable because it must account for all states
whose number increases exponentially with the number of data in digital
detection, i.e. curse of dimensionality. 

The Bayesian techniques were revived in the 1980s, owing to tractable
Markov Chain Monte Carlo (MCMC) simulation and other stochastic approximations
for posterior distributions. Because this stochastic approach is not
\foreignlanguage{british}{favoured} in energy- and space-constrained
mobile devices, the main impact of Bayesian results is mostly in offline
contexts. Particle filtering is making an impact in online processing,
but its various implementations are computationally expensive. Therefore,
their impact in mobile receiver design has been slight up to date.
More recently, deterministic distributional approximation methods,
e.g. Variational Bayes (VB), have shown great promise in providing
principled Bayesian iterative designs that are accurate/robust, while
also incurring far smaller computational load. Indeed, it is timely
to investigate how deterministic approximations in Bayesian inference
can furnish principled designs for iterative receivers.

Note that, the above historical review highlights the interesting
role of recursion and iteration techniques in signal processing in
telecommunications. In particular, we focus on exact recursive schemes
like VA and approximate iterative techniques like VB. Hence, on one
hand, the technical aim of this thesis is to design computationally
efficient iterative schemes, which are applicable to 4G mobile receivers.
On the other hand, the theoretical aim is to synthesize new exact
recursive computational flows, which have the potential to be used
in 5G mobile receivers. The effective combination of these two techniques,
i.e. recursion within iteration and vice versa, will also be considered.

\section{Scope of the thesis}

The thesis falls into the area of statistical signal processing for
telecommunications. In common with other areas of mathematical engineering,
we seek trade-off between accuracy and computational load in the devices
and algorithms. Then, from motivation above, the natural questions
are (i) whether there is a general principle guaranteeing faster computation
in the \textit{exact} case and (ii) whether we can find attractive
trade-off between accuracy and speed in \textit{approximate} computation. 

This thesis will resolve these two questions via two approaches, one
in computational management and one in Bayesian methodology. In turn,
theses are applied to two tasks of interest in telecommunications,
firstly inference for Hidden Markov Chain (HMC), and, secondly, iterative
receiver design. We will now summarize these two questions and these
two applications.

\subsection{Computational management for objective function \label{subsec:=00005BI=00005D:GDL}}

Regarding the first question (i) above, a reasonable answer is to
exploit conditionally independent (CI) structure. The trade-off can
be seen intuitively as follows: If the objective function involves
factors exhibiting full dependence on variables, then we expect the
exact valuation of the objective function has a maximum computational
complexity. Instead, it may be possible to factorize the objective
function so that the factor exhibits various degree of independence
from variables, in which case we should expect the computational load
to be reduced. The minimum complexity should occurs when no variables
are shared between factors. 

The task we set ourselves is to verify this intuition via a mathematical
tool, namely the generalized distributive law (GDL) in ring theory.
In computer science, the GDL has recently been applied to computation
on graphical models of arbitrary order and, also, a similar trade-off
was expressed using a graphical language. However, a theoretical result
guaranteeing that the GDL always reduces the computational load has
not yet been derived. Furthermore, we would like to derive such a
result from the perspective of set theory (i.e. set of variable indices
consistent with DSP culture) rather than the graph-theoretic culture
of machine learning. Addressing this problem is the principle task
of the thesis.

\subsection{Bayesian methodology \label{subsec:=00005BI=00005D:TVB}}

Regarding the second question (ii) above, we will confine ourselves
to the area of probabilistic inference. As we know, the optimal point
estimate is obtained by relaxing from minimum bit-error-rate (BER)
criterion to minimizing the average BER, corresponding to Bayesian
minimum risk (MR) estimate. Although the performance of this Bayesian
estimate is only optimal in the average sense, it is nevertheless
the most robust solution because it incorporates all the uncertainties
actually present in the system. 

The posterior distribution is often not tractable. Its stochastic
approximation via MCMC is typically slow, as mentioned previously.
In order to address this computational intractability, the zero-order
Markovian model (i.e. independent field) can be adopted, not as an
approximating model, but as a deterministic approximation of the posterior
distribution. It is important to recognize that the original model
is unchanged in this case: only the inference technique is changed
from exact computation to an independent approximation (the so-called
naive mean field approximation). The most important technique in this
context is the iterative Variational Bayes (VB) approximation, which
guarantees convergence to a local minimum of the Kullback-Leibler
divergence (KLD) from the approximate to the exact posterior. The
complexity of this converged iterative scheme is usually lower than
that of stochastic sampling methods. The accuracy of VB is, intuitively,
dependent on how small the KLD minimum is, and, in turn, how close
the original posterior distribution is to an independent field. In
this thesis, this inspires a new VB variant---which we call transformed
VB (TVB)---in which we transform the original model into one closer
to an independent structure, reducing KLD in this transformed metric.
This implies that KLD is also reduced in the original metric. This
is the second task of this thesis.

\subsection{Application I - Hidden Markov Chain \label{subsec:=00005BI=00005D:HMC}}

In theory, the most popular model of Markov model in DSP is the first-order
hidden Markov chain (HMC). The challenge is to compute the Bayesian
maximum a posteriori (MAP) estimate efficiently. Currently, there
are three well-known algorithms for the HMC, namely the Forward-Backward
(FB) algorithm, Viterbi algorithm (VA) and Iterated Conditional Modes
(ICM) algorithm, which computes exactly the sequence of maximum marginal
likelihood, the (joint) ML estimate, and a local (joint) ML estimate,
respectively. Using the GDL, we would like to explain why these three
estimation strategies achieve a computational load that is linear
in the number of samples. Also, we want to adopt the Bayesian perspective,
and verify that they return estimate based-on posterior distribution.
Using the VB approximation, we would also like to verify whether ICM
is a special case of VB, and, if so, to understand why the accuracy
of ICM is inferior to that of VA. 

Finally, from an understanding of GDL and VB, the challenge in computation
is to design a novel accelerated algorithm, not for recursion within
one iterative VB (IVB) cycle, but for iteration between IVB cycles.
The third task of this thesis is, therefore, to achieve a better trade-off
between accuracy and speed using accelerated VB scheme and to determine
if this trade-off is better than that of the state-of-the-art VA.

\subsection{Application II - Digital receiver \label{subsec:=00005BI=00005D:DigitalReceivers}}

The main application interest of this thesis is the telecommunications
system, particularly the mobile digital receivers, where the emphasis
is on computational reduction rather than on improving accuracy. Because
the digital demodulator is the critical inference stage in the receiver.
It will be our main application focus in this thesis. 

For a digital demodulator, there are three cases of inference problem
to be considered: unknown carrier\footnote{In Chapter \ref{=00005BChapter 3=00005D} we will take care to distinguish
between carrier and the channel.} but known data (pilot symbols), known (synchronized) carrier with
unknown data and both unknown. For each case, we examine a specific
demodulator problem, as follows: unsynchronized carrier frequency
estimation, synchronized symbol detection, and symbol detection for
the Rayleigh fading channel, respectively. Despite their ideality,
these problems address key challenges in current 4G mobile systems.
We will provide simulation evidence demonstrating the enhanced trade-off
for these demodulator problems using the techniques in this thesis.
This is the fourth task of this thesis.

\section{Structure of the thesis}

The inner chapters (\ref{=00005BChapter 2=00005D}-\ref{=00005BChapter 8=00005D})
of thesis will be divided into three main parts. In Chapter \ref{=00005BChapter 2=00005D},
we seek to map the current landscape of DSP for telecommunications,
motivating the aim of the thesis (Section \ref{sec:chap2:The-roadmap}).
In Chapters \ref{=00005BChapter 3=00005D}-\ref{=00005BChapter 5=00005D},
which are three methodological chapters of the thesis, we address
the computational management issue, raised in Section \ref{subsec:=00005BI=00005D:GDL}
above. The third main part of the thesis, consisting of Chapters \ref{=00005BChapter 6=00005D}-\ref{=00005BChapter 8=00005D},
will apply these methods to the three tasks, described in sections
\ref{subsec:=00005BI=00005D:TVB}-\ref{subsec:=00005BI=00005D:DigitalReceivers}.

The summary of each of the forthcoming chapters now follows.
\begin{itemize}
\item \textbf{Chapter }\ref{=00005BChapter 2=00005D}\textbf{ - Literature
review:} This chapter is divided into three sections in order to review,
briefly but thoroughly, the history and challenges of telecommunications
systems, state-of-the-art inference techniques in DSP, and applications
of these techniques in current telecommunications system. Because
digital demodulation is the main practical application of this thesis,
it is specifically addressed in the last section of this chapter.
Another aim of this chapter is to clarify and show evidence that the
Markov principle is ubiquitous in telecommunications systems.
\item \textbf{Chapter }\ref{=00005BChapter 3=00005D}\textbf{ - Observation
models for the digital receiver: }There are three purposes in this
chapter. Firstly, this chapter can be regarded as a technical review
of demodulation, focussing particularly on conventional techniques,
such as the matched filter and frequency-offset estimation. Secondly,
it establishes three practical digital receiver models for later considerations
and simulations in the thesis. Thirdly, this chapter aims to present
the brief, but insightful derivation of the Rayleigh model for the
fading channel. 
\item \textbf{Chapter }\ref{=00005BChapter 4=00005D}\textbf{ - Bayesian
parametric }\foreignlanguage{british}{\textbf{modelling}}\textbf{:}
The purpose of this chapter is two-fold. On one hand, this chapter
reviews the technical foundation of Bayesian methodology. On the other
hand, we emphasize the important role of the loss function in designing
optimal Bayesian point estimates, particularly minimum average BER
estimator. The VB approximation and its variant, FCVB, will also be
introduced in this chapter.
\item \textbf{Chapter }\ref{=00005BChapter 5=00005D}\textbf{ - Generalized
distributive law (GDL) for conditionally independent (CI) structure:}
The aim of this chapter is to solve the first task of the thesis (see
Section \ref{subsec:=00005BI=00005D:GDL}). A new theorem will be
derived, that guarantees the reduction in computational load in evaluating
the objective function via GDL, in those cases where GDL is applicable.
An algorithm, namely the no-longer-needed (NLN) algorithm, for applying
GDL to general ring-products of objective functions is then established.
Also, a generalized FB recursion for computing that objective function
via GDL is designed. The application of GDL to computational flow
for Bayesian estimation in Chapter \ref{=00005BChapter 4=00005D}
will also be provided. Lastly, the technique for optimal computational
reduction via GDL will be considered. 
\item \textbf{Chapter }\ref{=00005BChapter 6=00005D}\textbf{ }-\textbf{
Variational Bayes variants of the Viterbi algorithm:} The aim of this
chapter is to solve the third task of the thesis (see Section \ref{subsec:=00005BI=00005D:HMC}),
by applying the GDL's computational flow to an inference of HMC. The
insight of computational reduction in state-of-the-art FB and VA is
clarified by showing that, in this chapter, they are special cases
of FB recursion. Furthermore, the FB is shown to return an inhomogeneous
HMC, which is the posterior distribution of a homogeneous HMC. The
VA is then re-interpreted as a certainty equivalent (CE) approximation
of the inhomogeneous HMC. This re-interpretation motivates the design
of VB approximation for HMC, together with an accelerated scheme for
VB in this case. By specializing VB to FCVB, the FCVB is shown to
be equivalent to ICM algorithm, and hence, Accelerated FCVB is a faster
version of ICM, while maintaining exactly the same output, i.e. local
joint MAP estimate.
\item \textbf{Chapter }\ref{=00005BChapter 7=00005D}\textbf{ - The transformed
Variational Bayes (TVB) approximation:} The aim of this chapter is
to solve the second task of the thesis (see Section \ref{subsec:=00005BI=00005D:TVB}),
by improving the accuracy of the naive mean field approximation, produced
by VB (Chapter \ref{=00005BChapter 4=00005D}), via TVB. As a theoretical
application, the TVB algorithm is applied to the spherical distribution
family. As a practical application, TVB is then applied to the frequency-offset
synchronization problem, defined in Chapter \ref{=00005BChapter 3=00005D}.
\item \textbf{Chapter }\ref{=00005BChapter 8=00005D}\textbf{ - Performance
evaluation of VB variants for digital detection:} The aim of this
chapter is to resolve the fourth task of the thesis (see Section \ref{subsec:=00005BI=00005D:DigitalReceivers}),
by applying the results in Chapter \ref{=00005BChapter 6=00005D}
to Markovian digital detectors, established in Chapter \ref{=00005BChapter 3=00005D}.
Firstly, a homogenous Markov source transmitted over AWGN channel
is studied. The simulations will show the superiority of Accelerated
ICM/FCVB to VA. The possibility that Accelerated ICM/FCVB can run
faster than the currently-supposed fastest ML algorithm are also illustrated
and discussed in this case. Secondly, an augmented finite state Markov
model, constructed by Markov source and quantized Rayleigh fading
process, are considered. The simulations will illustrate three regimes
that Accelerated ICM/FCVB is superior, compatible and inferior to
VA, corresponding to low, middle and high correlation between samples
of Rayleigh process. The KLD is also plotted in this case, in order
to explain those three regimes via VB approximation perspective.
\item \textbf{Chapter }\ref{=00005BChapter 9=00005D}\textbf{ - Contributions
of the thesis and future works:} The contributions, proposal for future
works, and overall conclusion are provided in this chapter.
\end{itemize}


\global\long\def\calQ{\mathcal{Q}}%

\global\long\def\calX{\mathcal{X}}%

\global\long\def\calH{\mathcal{H}}%

\global\long\def\calL{\mathcal{L}}%

\global\long\def\vtheta{\theta}%

\global\long\def\vphi{\phi}%

\global\long\def\vxi{\xi}%

\global\long\def\vOmega{\Omega}%

\global\long\def\xbold{\mathbf{x}}%

\global\long\def\btheta{\boldsymbol{\theta}}%

\global\long\def\htheta{\widehat{\theta}}%

\global\long\def\hxi{\widehat{\xi}}%

\global\long\def\fung#1{g\left(#1\right)}%

\global\long\def\funh#1{h\left(#1\right)}%

\global\long\def\setd#1#2{\{#1{}_{1},#1{}_{2},\ldots,#1_{#2}\}}%

\global\long\def\TRIANGLEQ{\triangleq}%


\global\long\def\ndata{n}%

\global\long\def\norder{p}%

\global\long\def\matA{A}%

\global\long\def\fDoppler{f_{D}}%

\global\long\def\Tsample{T_{s}}%

\chapter{Literature review \label{=00005BChapter 2=00005D}}

The facts used for thesis' motivation in Chapter \ref{=00005BChapter 1=00005D}
will be verified in this chapter via a brief literature review, which
focuses on three themes - the telecommunications systems, the available
inference techniques, and the application of those techniques in telecommunications
- corresponding to three sections \ref{sec:chap2:The-roadmap}, \ref{sec:cahp2:Inference-methodology}
and \ref{sec:chap2:Review-of-digital} below.

\section{The roadmap of telecommunications \label{sec:chap2:The-roadmap}}

The ultimate aim of a telecommunications system is reliably to transfer
information over a noisy physical channel. These transmission systems
can be categorized into two domains: analogue and digital, although
the latter completely dominates the former in telecommunications nowadays
{[}\citet{ch2:bk:Tri_T_Ha}{]}. 

In order to motivate the research on digital receivers in this thesis,
some historical milestones and evolution of telecommunications will
be briefly reviewed in this section. 

\subsection{Analogue communication systems}

The origin of telecommunications is perhaps the discovery of the existence
of electromagnetic waves, as theoretically proved and experimentally
demonstrated firstly by Maxwell in 1873 {[}\citet{ch2:origin:Maxwell:1873}{]}
and Hertz in 1887 {[}\citet{ch2:origin:Hertz:1887}{]}, respectively.

Following the discoveries in physics, an analogue system was experimented
for radio transmission around mid-1870s {[}\citet{ch2:art:AM_history}{]}.
In early history, the most popular methods were Amplitude Modulation
(AM) and Frequency Modulation (FM), firstly appeared in {[}\citet{ch2:origin:AM_Mayer}{]}
and {[}\citet{ch2:patent:FM}{]}, respectively. Some of their breakthrough
applications were radio and television transmission (via AM), mobile
telephone and satellite communication (via FM), firstly experimented
by Pittsburgh's radio station in 1920, Zworykin in 1929, American
public service in 1946 and project SCORE in 1958, respectively {[}\citet{ch2:bk:4G:RF10}{]}. 

The first analogue cellular mobile system was also introduced by AT\&T
Laboratories in 1970 {[}\citet{ch2:origin:mobile_cellular:1970}{]}.
Based on radio transmission techniques, the analogue telephone systems
in 1980s could only offer speech and related services. The first international
mobile communications at the time were NMT in Nordic countries, AMPS
in USA, TARCS in Europe and J-TACS in Japan {[}\citet{ch2:bk:4G:LTE11}{]}.
The mobile system in this era is often called \textbf{\textit{``the
first-generation (1G) - Analogue transmission''}} in the literature. 

In general, the key task of analogue receiver is to reconstruct the
original waveform from noisily modulated signal {[}\citet{ch2:bk:Tri_T_Ha}{]}.
However, this analogue system only produces a modest performance,
compared with later invented digital system, in which the information
is extracted directly without the need of reconstructing carrier waveform.
Hence, different from analogue system, where roaming is not possible
and frequency spectrum of channel cannot be used efficiently {[}\citet{ch2:bk:4G:cellular04}{]},
the digital system is capable of providing flexibly multiplexing and
computable bit stream, which efficiently exploits the channel capacity.

\subsection{Digital communication systems}

The earliest digital form of telecommunications is perhaps the Morse
code, developed by Samuel Morse in 1837 for telegraphy {[}\citet{ch2:bk:Proakis:Comm01}{]}.
However, the modern digital communication only became practical in
1924 when Nyquist sampling-rate, i.e. a sufficient condition for fully
reconstructing continuous signal from its digital samples, was firstly
introduced in {[}\citet{ch2:origin:ShannonNyquist:Nyquist_rate}{]}.
Following Nyquist's work, Harley also studied the issue of maximal
data-rate that can be transmitted reliably over a band-limited channel
in {[}\citet{ch2:origin:ShannonHartley:Hartley}{]}. Finally, in 1948,
Shannon synthesized both Nyquist's and Harley's works and provided
existence proof for reliable transmission scheme, i.e the Shannon's
limit theorems, which serve as mathematical foundation for information
theory. 

\subsubsection{Generational evolution of digital communication systems \label{subsec:chap2:Generational-evolution-of-mobile}}
\begin{itemize}
\item \textbf{2G - Digital transmission:}
\end{itemize}
In the 1990s, although the analogue voice-centric system was still
dominant, the digital packet system gradually became popular. Internet
evolved from a low rate of 9.6 kbits/s with very few online people,
to a fixed-line dial-up modem of 56 kbits/s with graphical webpages
{[}\citet{ch2:bk:4G:beyond13}{]}. The concept of Internet Protocol
(IP) and Domain name servers (DNS) for digital data transmission were
also introduced {[}\citet{ch2:bk:4G:cellular04}{]}. 

In digital mobile system, \textbf{\textit{the second-generation (2G)}}
was also developed in this decade. The circuit-switched data connection
enabled text-based communication like Short Messages Service (SMS)
and emails at the rate 9.6 kbits/s {[}\citet{ch2:bk:4G:LTE11}{]}.
At the time, two well-known systems achieving that speed by assigning
multiple slots to users were GSM project of Europe, which exploited
Time-Division Multiple Access (TDMA), and IS-95 of Qualcomm in USA,
which exploited Code-Division Multiple Access (CDMA) {[}\citet{ch2:bk:4G:LTE11}{]}. 

By incorporating both analogue voice band and digital data packet
into single air interface, the GSM and IS-95 became the well-known
GPRS and IS-95B systems (also referred to as 2.5G systems), respectively
{[}\citet{ch2:bk:4G:LTE12}{]}. 
\begin{itemize}
\item \textbf{3G - Multimedia communication:}
\end{itemize}
In 2000s, the major breakthrough was broadband Digital Subscriber
Lines (DSL) and TV cable modem, which increased the Internet speed
from 56 kbits/s in dial-up modem to 1 Mbits/s and higher (e.g. 15
Mbits/s with ADSL 2+) {[}\citet{ch2:bk:4G:beyond13}{]}. The Internet
users were not only passive receivers but suddenly became creators
on the so-called Web 2.0 version. Since 2005, the effective Voice
over Internet Protocol (VoIP) has also become a high trend, while
the traditional fixed-line network telephone has seen a steady decline
in number of customers {[}\citet{ch2:bk:4G:beyond13}{]}. 

In mobile system, the UTMS and CDMA2000 systems have evolved from
GSM and IS-95 in Europe and USA, owing to the Third Generation Partnership
Project (3GPP) and 3GPP2 in International Telecommunications Union
(ITU), respectively {[}\citet{ch2:bk:4G:LTE12}{]}. Although the core
network of \textbf{\textit{the 3G system}} is almost the same as 2G,
except the variant air-interfaces of CDMA like Wideband CDMA (WCDMA)
{[}\citet{ch2:bk:4G:LTE12}{]}, the standard data rates has reached
1 Mbits/s and higher {[}\citet{ch2:bk:4G:RF10}{]}, owing to optimizing
operational process. Other 3G air-interfaces can also be designed
via microwave links like WiMAX and Mobile WiMAX, developed on the
basis IEEE 802.16 and 802.16e, respectively. Owing to high transfer
speed, both digital video and online multimedia streaming became widely
available. Hence, the 3G was also called the multimedia communication
era {[}\citet{ch2:bk:4G:cellular04}{]}. 

The digital broadcasting system also dominated the analogue communication
gradually. As of 2009, ten countries had shutdown analogue TV broadcast
{[}\citet{ch2:bk:4G:RF10}{]}. Based on the state-of-the-art H.264/MJPEG4
compression codec, the DVB-S2 and DVB-T2 (Digital Video Broadcasting
- Satellite and Terrestrial Second Generation, respectively) were
standardized in 2007 and 2009 respectively {[}\citet{ch2:bk:4G:RF10}{]}. 

Another application of satellite communication is USA Global Positioning
System (GPS) service, which provides relatively accurate user position.
By using spread-spectrum tracking code circuitry and triangulation
principle, mobile devices can track a propagation delay between transmitted
and received signal to four GPS satellites from any position on the
earth {[}\citet{ch2:bk:4G:RF10}{]}. 
\begin{itemize}
\item \textbf{4G - All-IP networks: }
\end{itemize}
In order to keep mobile system competitive in timescale of ten years,
3GPP organized a workshop to study the long term evolution (LTE) of
UTMS in 2004 {[}\citet{ch2:bk:4G:LTE12}{]} and then released a technical
report {[}\citet{ch2:tech:3GPP:release7}{]}. Afterward, the standardization
of \textbf{\textit{the fourth generation (4G-LTE) system}} was an
overlapped and iterative process {[}\citet{ch2:bk:4G:LTE11}{]}, which
took a lot of consideration on available technology, testing and verification. 

Since air interface is the interface that mobile subscriber is exposed
to, its frequency spectrum usage is crucial for mobile network success
{[}\citet{ch2:bk:4G:cellular04}{]}. Hence, although the core shared-channel
transmission scheme of 4G is still the same as that of previous generation,
i.e. dynamic time-frequency resource should be shared between users
{[}\citet{ch2:bk:4G:LTE11}{]}, 4G system employed the Orthogonal
Frequency-Division Multiple Access (OFDMA) air interface and other
variants, in place of WCDMA in 3G. Owing to small latency in OFDMA,
the data packet switching in 4G are smooth enough for continuous data
connection (e.g. speech communication and video chat), which could
not work seamlessly via busty data transmission of previous generations
{[}\citet{ch2:bk:4G:RF10}{]}. 

For that reason, 4G is also known as All-IP generation {[}\citet{ch2:bk:4G:cellular04}{]},
in which both voice and data transmission can be divided and re-merged
via individual packet routing (e.g. VoIP). The voice calls, although
enjoying the same Quality of Service (QoS), will be processed via
packet-switching circuit on mobile receivers, which is completely
different from voice-switching circuit requiring continuously physical
connection during the call {[}\citet{ch2:bk:4G:beyond13}{]} in previous
generations. 

In 2008, ITU published requirement sets for 4G system under the name
International Mobile Telecommunications - Advanced (IMT-Advanced)
{[}\citet{ch2:bk:4G:LTE12}{]}, which targets peak data rates of 100
Mbits/s for highly mobility access (i.e. with speeds of up to 250
km/h) and 1 Gbit/s for low mobility access (pedestrian speed or fixed
position) {[}\citet{ch2:bk:4G:RF10}{]}, together with other requirements
on spectral efficiency, user latency, etc. With that target, the High
definition (HD) TV programs is expected to be delivered soon on 4G
networks {[}\citet{ch2:bk:4G:video09}{]}. In 2010, both LTE-Advanced
and WiMAX 2.0 (IEEE 802.16m) systems were announced to meet IMT-Advanced
requirements {[}\citet{ch2:bk:4G:LTE12}{]}. The deployment of 4G
is also expected to be around 2015 {[}\citet{ch2:bk:4G:RF10}{]}. 
\begin{itemize}
\item \textbf{5G (undefined):}
\end{itemize}
Currently, the 4G standard was properly set up. Hence the current
trend is to define and set up \textbf{\textit{the 5G standard}}, just
like ten years ago. In 2012, the UK's University of Surrey secured
£35 million for new 5G research centre {[}\citet{ch2:link:5G:UK12}{]}.
In 2013, European Commission announced €50 million research grants
for developing 5G technology in 2020 {[}\citet{ch2:link:5G:EU13}{]}.
Although there is not any standard definition for 5G yet, a call for
submission on this topic has been circulated in digital signal processing
(DSP) society {[}\citet{ch2:link:5G:IEEE14}{]}. 

\subsubsection{Challenges in mobile systems \label{subsec:ch2:Challenges-in-mobile}}

For very long time, the mobile system had been dominated by voice
communication. Together with 4G launching, however, mobile data traffic
dramatically increased by a factor of over 100 and completely dominated
voice calls around 2010 {[}\citet{ch2:tech:4G:Ericsson11,ch2:bk:4G:LTE12}{]}.
In the same trend, about half of mobile phones sold in Germany in
2012 was actually smart-phones {[}\citet{ch2:bk:4G:beyond13}{]}.
The increase of network capacity is now critically demanded by the
growing use of smart-phones and IP-based service. Nevertheless, the
channel capacity in mobile system is theoretically bounded by Shannon's
channel capacity theorem (also known as Shannon--Hartley theorem),
which can be written in the simplest form as follows {[}\citet{ch2:bk:4G:LTE12}{]}:

\begin{equation}
C=B\log_{2}(1+SINR)\label{eq:ch2:Shannon:SINR}
\end{equation}
where $C$ is the channel capacity (bit/s) representing the maximum
data rate of all mobiles that one station can control, \textbf{$B$}
is the bandwidth of communication system in Hz and $SINR$ is the
\textit{signal to interference plus noise ratio}, i.e. the power of
receiver's desired signal divided by the power of noise and network
interference. Based on Shannon's channel capacity theorem (\ref{eq:ch2:Shannon:SINR}),
there are three main ways to increase the data transmission rate in
practice {[}\citet{ch2:bk:4G:LTE12}{]}, as explained below.

The first and natural way is to increase $SINR$. By constructing
more base stations, we can increase the maximum data rate that mobile
system can handle. However, this way is not always efficient because
of energy and economical cost. 

The second and fairly good way is to increase the bandwidth $B$.
Nevertheless, this method is rather limited since there is only finite
amount of radio spectrum, which is allocated and managed by ITU. 

The third and current way is to approach closer to channel capacity
$C$, determined by $B$ and $SINR$ (\ref{eq:ch2:Shannon:SINR}),
via communication technology. Overall, there are three phases in mobile
system that digital technology can assist to improve traffic performance: 
\begin{itemize}
\item The first phase is the transmitter: By applying multiplexing techniques
and/or by inserting reference header and error control packets, the
bandwidth can be efficiently exploited via user-sharing scheme. The
header normally consists of network information and Automatic-repeat
request (ARQ), which helps reducing the noise and interference effect
{[}\citet{ch2:bk:4G:RF10}{]}. For example, the overhead in 4G-LTE
is about 10\% of transmitted data {[}\citet{ch2:bk:4G:LTE12}{]}.
Nevertheless, too high overhead will cause latency and slow down the
overall data rate in mobile system. The challenge is to keep a low
overhead ratio while maintaining the overall QoS. 
\item The second phase is physical channel: a dynamic wireless channel is
more challenging than stationary guided channel or optical channel
{[}\citet{ch2:bk:4G:RF10}{]}. A typical phenomenon is the so-called
fading channel, in which the received signal is disturbed by Doppler
effect. Such an effect might happen because of receivers' mobility
or of environment reflection. For example, a challenge in users location
is to maintain the quality of GPS, which is recognized to be less
accurate in rural region and in building area {[}\citet{ch2:art:GPS:challenges05}{]}.
In 4G system, the required peak data rate for high-speed receiver
is also much less than that of stationary receiver, as shown above.
\item The third phase is receiver's performance: Owing to current popularity
of smart-phones, the computational capability of mobile devices is
improved significantly. By incorporating more complex processors (e.g.
VLSI), mobile receiver nowadays can compute more complicated operators.
Hence the most notable challenge is to optimize decoding algorithm
such that the number of operators can be reduced significantly. 
\end{itemize}
From the history of mobile system above, we can recognize a common
trend: the maximum data rate of mobile generation (e.g 9.6 kbits/s
in 2G, 56 kbits/s in 3G and 1 Mbits/s in 4G) was set almost the same
as that of previous fixed-line generation (e.g 9.6 kbits/s in early
internet, 56 kbits/s in dial-up modem and 1 Mbits/s in DSL). Therefore,
the challenges in mobile system are more about efficient operation
in different environment, rather than breaking the record of possible
maximum data rate of fixed-line communication. In other words, optimizing
the latency and computational load is a more serious issue in mobile
system than increasing the limit of decoding performance.

\subsubsection{The layer structure of telecommunications}

In practice, the design of telecommunications system is separated
into hierarchical abstraction levels. Each level hides unnecessary
details to higher levels and focuses on essential tasks driven by
features of lower levels. In general, parts of a system can be categorized
into two structures: hardware and software. 

In a hardware system, the typical levels are: physical level for physical
laws in semiconductor; circuit level for basic components like resistors
and transistors; element level for gates and logical ports; module
level for complex entities like CPUs and logic units; etc. {[}\citet{ch2:bk:SEP:sync02}{]}.

In a software system, communication protocols can be considered as
software module. The most popular model is the ITU's Open System Interconnection
(OSI) reference protocol model, which consists of seven stacked abstraction
layers {[}\citet{ch2:bk:4G:cellular04}{]}. From the lowest to highest
level, those seven layers are: 

- The Physical Layer represents interface connections (e.g. optical
cable, radio, satellite transmission, etc.), which are responsible
for actual transmission of data; 

- The Data Link Layer implements data packaging, error correction
and protocol testing; 

- The Network Layer provides network routing services; 

- The Transport Layer provides flow control, error detection and multiplexing
for transporting services through a network; 

- The Session Layer enables application identification; 

- The Presentation Layer prepares the data (e.g compression or de-compression); 

- The Application Layer acts as an interface of services provided
to the end users. 

The inference algorithms for digital receivers in this thesis (Chapters
\ref{=00005BChapter 6=00005D}-\ref{=00005BChapter 8=00005D}) can
be regarded belonging to Physical Layer of software system, although
some aspects on running-time in Physical Level of hardware system
are also taken into account (e.g. Section \ref{ch6:sub:bubble-sort-like}).
Nevertheless, as discussed in Chapter \ref{=00005BChapter 9=00005D},
those algorithms can be feasibly extended and applied to problems
in higher layers, e.g. decoding in Data Link Layer and network transmission
in Network and Transport Layer.

\section{Inference methodology \label{sec:cahp2:Inference-methodology}}

From the brief review in previous section, it is clear that communication
technology must rely on mathematical solutions in order to increase
both transmission speed and accuracy, particularly in the current
digital era. Because the ultimate aim is reliably to transmit a message
over a noisy channel, as mentioned before, the original transmitted
message is considered as unknown, as far as the receiver is concerned.
Hence, a methodology for inferring unknown quantities is obviously
critical in communication. In this section, state-of-the-art inference
techniques in digital signal processing will be briefly reviewed,
while their application in communication system will be presented
in next section. 

\subsection{A brief history of inference techniques \label{subsec:chap2:inference-history}}

In history, the Least Squares (LS) method was firstly presented in
print by Legendre in 1805 and quickly became standard tool for astronomy
in the early nineteenth century {[}\citet{ch2:BK:statistics:History86}{]}.
Because LS relies on inner product concept, which is considered to
underly most of applied mathematics and statistics {[}\citet{ch2:bk:Silverman:RKHS05}{]},
LS and its variant minimum mean square error (MMSE), proposed firstly
by Gauss {[}\citet{ch2:origin:LMS:Gauss1821}{]}, have been the most
popular criterions for inference technique since then. 

Earlier in 1713, the Bernoulli's book {[}\citet{ch2:origin:Bernoulli:Prob1713}{]},
which introduced the first law of large number (LLN), is widely regarded
as the beginning of mathematical probability theory {[}\citet{ch2:BK:statistics:History86}{]}.
From Bernoulli's results, De Moivre presented the first form of central
limit theorem (CLT) in 1738 {[}\citet{ch2:origin:CLT:Moivre1738}{]}
via Stirling's approximation {[}\citet{ch2:origin:CLT:Stirling}{]}.
Following De Moivre, the first attempts on dealing with inference
problem were presented separately in {[}\citet{ch2:origin:Bayes:Simpson1755}{]}
and {[}\citet{ch2:origin:Bayes:Bayes1763}{]}, via the concept of
inverse probability at the time {[}\citet{ch2:BK:statistics:History86}{]}.
The latter work was later called Bayes' theorem, firstly generalized
by Laplace in {[}\citet{ch2:origin:Bayes:Laplace1774,ch2:origin:Bayes:Laplace1781}{]}.
Those memoirs of Laplace were the most influential work of inference
probability in the eighteenth century {[}\citet{ch2:BK:statistics:History86}{]}. 

Nevertheless, probability theory only became widely recognized in
twentieth century, owing to the formal proof of CLT in {[}\citet{ch2:origin:CLT:Lyapunov:00}{]}.
The maximum likelihood, which is perhaps the most influential inference
technique in frequentist probability {[}\citet{ch2:origin:Fisher_ML:1912_1922}{]},
was introduced by Fisher in {[}\citet{ch2:origin:Fisher_info:Fisher22}{]}.
However, Fisher strongly rejected Bayesian inference techniques {[}\citet{ch2:origin:Fisher_ML:1912_1922}{]},
which he treated as the same as inverse probability concept. The Bayesian
theory has only revived and become popular since 1980s {[}\citet{ch2:origin:Bayes:modern1980}{]},
owing to the famous Markov Chain Monte Carlo (MCMC) algorithm invented
in physical statistics {[}\citet{ch2:origin:MCMC:Metropolis}{]}. 

\subsection{Inference formalism}

Given observed data, $\xbold\in\calX$, the aim of mathematical estimation
is to deduce some information, under a form of function $\widehat{\vxi}\TRIANGLEQ\funh{\xbold}$,
about unknown quantity $\xi\in\calQ$. A typical inference method
can be implemented via the following stages:
\begin{description}
\item [{(i)}] The very first stage is to impose models on $\xbold$ and
$\xi$. Those models are called either parametric or non-parametric,
if they only depend on either a set of parameters $\vtheta\in\vOmega$
or the whole spaces $\{\calX,\calQ\}$, respectively. Hence, loosely
speaking, a parametric model is designed specifically for $\xbold$
and $\xi$ (via $\vtheta$), while a non-parametric model is defined
specifically for the spaces $\calX$ and $\calQ$ (without any $\vtheta$).
\item [{(ii)}] The second stage is to choose a criterion in order to design
the function $\funh{\cdot}\in\calH$. The most common criterion is
to pick the optimal function $\hxi=h(\xbold)$ minimizing loss function
$\calL(\vxi,\xbold)$ (also known as error function). Note that, for
deterministic parametric model $\xbold=\fung{\vtheta}$, the loss
$\calL(\vxi,\vtheta)$ can be used instead. In some cases, such a
function $\funh{\cdot}$ is fixed and imposed by physical system.
Then, the remaining option is to study the behavior of function $\funh{\xbold}$.
Such a study is still useful, since we might be able to transmit the
$\xbold$ that minimizes the loss. 
\item [{(iii)}] The third stage, which is optional but mostly preferred,
is to impose a probability model dependent on both $\xbold$ and $\xi$.
Hence, the value of loss function $\calL$ is a random variable, whose
moments can be extracted. Because the computation of statistical moments
is often more feasible in practice, the optimized criterion in second
stage can be relaxed and loss function $\calL$ is required to be
minimized on average. 
\item [{(iv)}] The last stage, which is again optional but often applied
in practice, is to design good approximation for difficult computations
in above stages. The approximation techniques are vast and varied
from numerical computation, distributional approximation to model
approximation. In this thesis, however, distributional approximation
is of interest the most.
\end{description}
Based on the above procedure, some concrete inference methods will
be reviewed subsequently in the following, from the method involving
the least number of stages to the one with most of stages.

\subsection{Optimization techniques for inference}

In practice, when we know nothing about the model of $\xbold$, a
reasonable choice is to consider non-parametric approach. For a fast
algorithm, however, there are two choices: either artificially assuming
a parametric model for $\xbold$ or imposing an estimation model (either
linear or non-linear) for $\vxi$. The latter case will be considered
in this subsection. Note that, the optimization techniques here only
involve the first two stages (i-ii), because there is no probabilistic
model assumption at the moment.

\subsubsection{Estimation via linear models}

Regarding optimization's criterion, although the total variation (i.e.
$L_{1}$-norm) has gained popularity recently (e.g. in compressed
sensing {[}\citet{ch2:Art:ComSensing:Magazine}{]}), only Euclidean
distance (i.e. $L_{2}$-norm) for the loss $\calL(\vxi,\xbold)$ will
be reviewed here. The reason is that, the latter is still the dominant
criterion in DSP, owing to the Least Square (LS) method and its variants
{[}\citet{ch2:BK:Kay:Estimation98,ch2:bk:Proakis:DSP06}{]}. 

In the simplest linear form, the unknown quantity can be written in
vector calculus $\vxi=\matA\vtheta$, where matrix $\matA$ is assumed
known. The output of LS method is, therefore, the optimal value of
parameter $\htheta$ that minimizes the square error function $\left\Vert \xbold-\vxi\right\Vert _{2}^{2}$.
Note that, in this case, the loss has taken into account both model
design error for $\vxi$ and unknown noise embedded in $\xbold$.
Owing to linear property, the minimum point of loss function can be
found feasibly by setting derivative equal to zero, which yields the
set of linear normal equations {[}\citet{ch2:BK:Kay:Estimation98}{]}.
Such a technique is also called linear regression. In more general
form, where the matrix $A$ can be replaced by impulse response of
a linear filter, the LS method is also called adaptive filter method
in DSP {[}\citet{ch2:bk:Hayes:DSP96}{]}. 

The linear form also yields recursion form for LS in two cases {[}\citet{ch2:BK:Kay:Estimation98}{]}: 

- In spatial domain, if $\btheta_{\norder}\TRIANGLEQ\setd{\vtheta}{\norder}$,
where $\norder$ is the order of parameter model, the order-recursive
least square (Order-RLS) method returns the optimal $\widehat{\btheta_{\norder}}$
recursively from the LS optimal $\widehat{\btheta_{\norder-1}}$. 

- In temporal domain, if $\xbold_{\ndata}=\setd x{\ndata}$, where
$\ndata$ is the number of received data, the sequential LS (SLS)
method can return the optimal $\htheta$ for $\xbold_{\ndata}$ recursively
from the one for $\xbold_{\ndata-1}$. Owing to important online property,
the SLS has several variants, such as the weighted LS method {[}\citet{ch2:BK:Kay:Estimation98}{]}
or Recursive LS (RLS) methods {[}\citet{ch2:bk:Hayes:DSP96}{]}. The
latter cases are special cases of the former, in which the weights
are designed in order to either decrease the dependence of $\htheta$
on past values $\xbold_{i\in(-\infty,n)}$ exponentially down to zero
from the present time $\ndata$ (exponential weighted RLS method),
or truncate that dependence by a window (sliding window RLS method)
{[}\citet{ch2:bk:Hayes:DSP96}{]}.

The LS method can also be extended to decision problem under constraints.
In Constrained LS method, the parameter $\vtheta$ is subject to some
linear constraints, which can be solved feasibly via Lagrange multiplier
technique {[}\citet{ch2:BK:Kay:Estimation98}{]}. In Penalized LS
method, the square error function is added by a smoothly penalized
function dependent on $\vtheta$ {[}\citet{ch2:bk:Silverman:Penalized94}{]}.

\subsubsection{Estimation via non-linear models}

The LS criterion in linear case can also be applied to non-linear
model $\vxi=\fung{\vtheta}$, which is also called non-linear regression
{[}\citet{ch2:bk:nonlinear:Bard74}{]}. Because the minimization of
square error is often difficult in this case, a common solution is
to convert the non-linear problem back to a linear problem. There
are three popular techniques for that purpose. 

The first technique is transformation of parameters $\vphi=q(\vtheta)$,
such that $\vxi=\matA\vphi$ is a linear model. Although this method
can be applied successfully to sinusoidal parameter estimation via
trigonometric formula {[}\citet{ch2:BK:Kay:Estimation98}{]}, only
few non-linear cases can be solved by this way. 

The second technique is numerical approximation. A numerical grid
search on non-linear function can be implemented via Newton-Raphson
iteration, which returns a local minimum for loss function. Another
approximation is to linearize the loss function at a specific parameter
value of $\vtheta$ at each iteration. Such a technique is called
Gauss-Newton method, which omits the second derivatives from Newton-Raphson
iteration {[}\citet{ch2:BK:Kay:Estimation98}{]}.

The third technique is to solve the non-linear loss function via linear
regression in augmented space, namely Reproducing Kernel Hilbert Space
(RKHS). By Riesz representation theorem, a non-linear function can
be represented as an inner product between designed kernels in RKHS
{[}\citet{ch2:bk:Silverman:RKHS05}{]}, although the kernel form is
not always feasible to design.

\subsection{Probabilistic inference}

As a relaxation, we can consider $\xbold$ and $\vxi$ as realization
of unknown quantities. Based on Axioms of Probability, firstly formalized
in {[}\citet{ch2:origin:Kolmogorov:Axiom33}{]}, the unknown quantity
can be regarded either as random variable, which is a function mapping
a realization event $\omega\in\vOmega$ in a probability space of
triples $(\vOmega,\mathfrak{F},P)$ to (possibly vector) real value
{[}\citet{ch2:BK:Bernardo:Bayes94}{]}, or more generally as random
element, which maps that $\omega$ to measurable space $(E,{\cal E})$,
firstly defined in {[}\citet{ch2:origin:random:element48}{]}. By
this way, probabilistic model can be applied to $\xbold$ and $\vxi$,
instead of deterministic model. 

\subsubsection{Estimation techniques for stationary processes \label{subsec:chap2:Stationary-process}}

Firstly, let us regard a sequence of observed data $\xbold_{\ndata}=\setd x{\ndata}$
as a stochastic process of $\ndata$ random quantities. Although the
joint probabilistic model will not be specified, such a stochastic
process will be confined to be either strict- or wide-sense stationary
in this subsection. By definition, the strict-sense stationary process
requires that the joint distribution of any two data only depends
on the difference between their time points, while the wide-sense
relaxes the joint distribution constrain with the first two orders
of moments only. 

Because the covariance function of wide-sense stationary (WSS) signal
only depends on the lagged time, which, in turn, can be represented
as a power spectral density (PSD) in frequency domain, the computation
in that linear parametric model greatly facilitates the inference
task. Hence, the WSS property is widely assumed in DSP methods. Similarly,
the additive white Gaussian noise (AWGN) is the most popular noise
assumption in the literature, because a WSS Gaussian process, solely
characterized by the first two orders moment, is also a strict-sense
stationary process {[}\citet{ch3:BK:DigiComm:Madhow08}{]}. 

In theory, the famous Wold's representation theorem, firstly presented
in his thesis {[}\citet{ch2:origin:Wold:thesis54}{]}, guarantees
that any WSS process can be written as a weighted linear combination
of a lagged innovation sequence, which is a realization of white noise
process. In other words, given innovation sequence as the input, any
WSS discrete signal can be expressed either as the output of a causal
and stable innovation filter (i.e. an infinite impulse response (IIR)
filter) in frequency domain, or as Moving Average (MA) model with
infinite order in time domain {[}\citet{ch2:bk:Proakis:DSP06}{]}.
The latter is also called Wold decomposition theorem, which decomposes
the current value of any stationary time series into two different
parts: the first (deterministic) part is a linear combination of its
own past and the second (indeterministic) part is a MA component of
infinite order {[}\citet{ch2:bk:Wold:Bierens2004,ch2:art:Wold:Bierens2012}{]}. 

In practice, because the MA order can only be set finite, another
linear model with finite order, namely Auto-Regressive Moving-Average
(ARMA), is wildly applied to $\vxi$ as an approximation for Wold's
representation of WSS signal $\xbold$. The popular criterion in this
case is the Least Mean Square (LMS) error, in which the parameters
$\vtheta$ of the ARMA model of $\vxi$ has to be designed such that
the mean square error (MSE) function $E_{f(\xbold_{n})}(\left\Vert \vxi-\xbold_{n}\right\Vert _{2}^{2})$
is minimized. Owing to the similar form of square error, LMS criterion
can be solved efficiently via LS optimization techniques. Note that,
although the distribution form $f(\xbold_{n})$ is undefined, the
WSS assumption for $\xbold_{n}$ has greatly facilitated the computation
of minimum MSE (MMSE) criterion, which only requires the first and
second order moments of $f(\xbold_{n})$ {[}\citet{ch2:BK:Kay:Estimation98}{]}.
\begin{itemize}
\item \textbf{Wiener Filter:}
\end{itemize}
The MMSE estimator in this linear model is the well-known Wiener filter,
proposed in {[}\citet{ch2:origin:Wiener49}{]}. The engineering term
'filter' is used because it often refers to a process taking a mixture
of separate elements from input and returning manipulated separate
elements at the output {[}\citet{ch2:bk:AdaptiveFilter99}{]}. Such
elements might be frequency components or temporal sampling data. 

Wiener filter can be applied in three scenarios: filtering, smoothing
and prediction: 

- In filtering scenario, the underlying process value $\vxi_{n}$
at current time is estimated from $\xbold_{n}$ by solving the set
of linear normal equations, which is called Wiener-Hopf filtering
equations because the normal matrix in this case is the Toeplitz autocovariance
matrix {[}\citet{ch2:BK:Kay:Estimation98}{]}. In frequency domain,
such a correlation-based estimator can be considered as a time-varying
finite impulse response (FIR) filter. When the past data is considered
as infinite, the FIR filter becomes an IIR Wiener filter {[}\citet{ch2:bk:Proakis:DSP06}{]}. 

- In smoothing scenario, the underlying value $\vxi_{i}$ at any time
point $i$ is estimated from a theoretically infinite length signal
$\xbold$. Owing to the infinite length assumption, Fourier transform
is applicable and can be used to return the spectrum of estimator,
which is called infinite Wiener smoother {[}\citet{ch2:BK:Kay:Estimation98}{]}
in this case. 

- In prediction scenario, the unknown future data is estimated from
the current batch of data $\xbold_{n}$. In other words, the unknown
quantity in this case is $\vxi=x_{l>n}$ rather than underlying process
values. The normal equations in this case are called Wiener-Hopf equations
for $l$-step prediction {[}\citet{ch2:BK:Kay:Estimation98}{]}. If
$l=1$, those normal equations of linear prediction are identical
to Yule-Walker equations {[}\citet{ch2:origin:YuleWalker:Yule27,ch2:origin:YuleWalker:Walker31}{]},
which is used for finding Auto-Regressive (AR) parameters in AR process
{[}\citet{ch2:BK:Kay:Estimation98}{]}. 

Because the normal matrix has an extra Toeplitz property in this case,
many efficient algorithms were proposed to solve those normal equations.
Among them, Levinson-Durbin {[}\citet{ch2:origin:LevinsonDurbin:Levinson47,ch2:origin:LevinsonDurbin:Durbin60}{]}
and Schur algorithms {[}\citet{ch2:origin:Schur_algo:Shcur17,ch2:origin:Schur_algo:Gohberg86}{]},
which exploit recursive lattice filter structure, are the most well-known
{[}\citet{ch2:bk:Proakis:DSP06}{]}. In linear prediction, that two-stage
forward-backward lattice filter is also applied in forward and backward
linear prediction for the right-next future and right-previous past
data {[}\citet{ch2:bk:Proakis:DSP06}{]}, respectively. 

Note that the Wiener filter requires the true value of first and second
moments. i.e. the parameter of WSS $f(\xbold_{n})$, in order to compute
the estimators $\htheta$ for linear parameter $\vtheta$ of $\vxi$.
If those two moments are unknown a priori, they also need to be estimated.
For that purpose, a trivial method is to use empirical statistics,
extracted from available data, as their estimators. This method relies
on assumption of ergodic process, in which the moments of data at
arbitrary time point are equal to temporal statistics of one realization
of the process {[}\citet{ch2:bk:Proakis:DSP06}{]}. Nevertheless,
a good empirical approximation for statistical moments requires a
lot of observed data, which might cause latency and energy consuming
in practice. 
\begin{itemize}
\item \textbf{Adaptive filters:}
\end{itemize}
In cases where the block of data is too short or the first two moments
of WSS $\xbold_{n}$ are not known a priori, a popular approach is
to consider those two moments as unknown nuisance parameters. In DSP
literature, this approach is implemented via variants of Wiener filter,
namely adaptive filters, where finite blocks of observed data are
treated sequentially and adaptively. 

Instead of using the Levinson-Durbin algorithm for solving the normal
equations in Wiener filters, adaptive filters exploit variants of
the recursive LMS algorithms. Owing to the quadratic form of MSE,
the LMS algorithms always converge faster to the unique minimum of
MSE {[}\citet{ch2:bk:Proakis:DSP06}{]}. The standard LMS algorithm,
proposed in {[}\citet{ch2:origin:RLS:standard60}{]}, is a stochastic-gradient-decent
algorithm. Its complexity can be reduced via other gradient-based
LMS methods, such as averaging LMS or normalized LMS algorithms {[}\citet{ch2:bk:Proakis:DSP06}{]}.
For faster convergence, adaptive filters exploit the class of variant
Recursive Least Square (RLS) algorithm. The three major RLS algorithms
are standard RLS {[}\citet{ch2:origin:RLS:standard60}{]}, square-root
RLS {[}\citet{ch2:origin:RLS:squareroot77,ch2:origin:RLS:squareroot82}{]}
and Fast RLS {[}\citet{ch2:origin:RLS:fast78,ch2:origin:RLS:fast83}{]},
which exploit the eigenvalues of covariance matrix, the matrix inversion
via matrix decomposition and lattice-ladder filters via Kalman gain,
respectively {[}\citet{ch2:bk:AdaptiveFilter99,ch2:bk:Proakis:DSP06}{]}. 

Note that, the adaptive filters are also applicable to non-stationary
process. In that case, adaptive filters are merely parametric estimators,
which are artificially imposed on non-parametric model of data process
$\xbold_{\ndata}$.
\begin{itemize}
\item \textbf{Power Spectral Density} \textbf{(PSD) estimation:}
\end{itemize}
The autocovariance function can also be estimated via its PSD in the
frequency domain. In the literature, the three major PSD estimations
are non-parametric approach via the periodogram, a parametric approach
via ARMA modelling and a frequency-detection approach via filter banks
{[}\citet{ch2:bk:Proakis:DSP06}{]}: 

- By definition, the periodogram is the discrete-time Fourier transform
(DTFT) of the autocorrelation sequence (ACS) of sampled data. Because
of frequency leakage in windowing approaches, the periodogram does
not converge to the true PSD, although the sample ACS does converge
to the true ACS in the time domain {[}\citet{ch2:bk:Proakis:DSP06}{]}.
For this non-parametric approach, the proposed solution is to apply
averaging and smoothing operations upon the periodogram in order to
achieve a consistent estimator of the PSD. Such operations decrease
frequency resolution, and hence, reduce the variance of the spectral
estimate. The three well-known methods are Barllett {[}\citet{ch2:origin:PSD:Barlett48}{]},
Barlett-Tukey {[}\citet{ch2:origin:PSD:BlackmanTukey58}{]} and Welch
{[}\citet{ch2:origin:PSD:Welch67}{]}.

- For the parametric approach, the solution is to estimate the parameters
of an ARMA model representing the WSS process. Those parameters can
be estimated via linear prediction methods like Yule-Walker (for AR
model) or via order-RLS algorithms above. In the latter case, the
maximum order can be pre-defined via some asymptotic criterion like
Akaike information criterion (AIC) {[}\citet{ch2:origin:AIC:74}{]}.
In special cases, where the underlying signal is a linear combination
of sinusoidal components, the parameters can be detected via subspace
techniques like MUSIC {[}\citet{ch2:origin:MUSIC:86}{]} or rotational-invariance
technique like ESPRIT {[}\citet{ch2:origin:ESPRIT:86}{]}.

- In the filter bank method, as proposed in {[}\citet{ch2:origin:PSD:filter_bank69}{]},
the main idea is that the temporal signal can be processed in parallel
by a sequence of FIR filters, which serve as a spatial windows truncating
the spectrum in the frequency domain.

\subsubsection{Frequentist estimation}

In above random process, a parametric model is defined for $\vxi$,
whose purpose is to approximate data $\xbold$. In this subsection,
let us consider the other way around: a probabilistic model $f(\xbold|\vtheta)$
will be defined for $\xbold$, whose parameter $\vtheta$ can be estimated
via $\vxi$. 

In the frequentist viewpoint, probability relates to the frequencies
of possible outcome in an infinite number of realization of random
variable. The repeatability is, obviously, the basic requirement for
random variable in this philosophy. The frequentist literature often
replaces the notation $f(\xbold|\vtheta)$ of conditional distribution
with notation $f(\xbold;\vtheta)$ of likelihood if the unknown parameter
$\vtheta$ is regarded as fixed and/or unrepeatable value {[}\citet{ch2:BK:Kay:Estimation98}{]}. 
\begin{itemize}
\item \textbf{Consistent estimator:}
\end{itemize}
A popular criterion for frequentist's estimator $\vxi$ of parameter
$\vtheta$ is consistency condition, which states that $\vxi=h(\xbold_{\ndata})$
converges to $\vtheta$ in probability as $n\rightarrow\infty$. In
this asymptotic approach, the Maximum Likelihood estimator (MLE),
which maximizes the likelihood, can be shown to be consistent. Owing
to feasibility and constructive definition, MLE is perhaps the most
popular estimator in frequentist approach. 
\begin{itemize}
\item \textbf{Unbiased estimator:}
\end{itemize}
Another popular frequentist's criterion is unbiased condition, $b(\vtheta)=0$,
where $b(\vtheta)=E(\vxi)-\vtheta$ and the conditional mean is taken
via likelihood $f(\xbold|\vtheta)$. If the loss function $L(\vxi,\vtheta)$
is chosen as Euclidean distance (i.e squared error), the motivation
of unbiased condition is rooted from Mean Square Error (MSE) $mse(\vxi|\vtheta)$,
which is the sum of variance $var(\vxi|\vtheta)$ and squared bias
$b(\vtheta)^{2}$ {[}\citealp{ch2:BK:Bernardo:Bayes94}{]}. 

For minimum MSE (MMSE), the desired estimator in frequentist literature
is Minimum Variance Unbiased (MVU), in which unbiased condition $b(\vtheta)=0$
is assumed first, and Minimum Variance (MV) condition for $var(\vxi|\vtheta)$
is sought afterward. The important result for this MVU approach is
the Cramer-Rao bound (CRB) {[}\citealp{ch2:origin:CramerRao:Cramer,ch2:origin:CramerRao:Rao}{]},
which provides the bound for MVU estimator under regularity conditions. 

Nevertheless, the unbiased estimators might not exist in practice,
and hence, the applicability of CRB estimator is very limited. Moreover,
in term of MMSE, this unbiased approach is too constrained. A direct
computation of MMSE estimator, regardless of biased or not, should
be the ultimate aim after all. 

\subsubsection{Bayesian inference}

In Bayesian viewpoint, the probability is regarded as quantification
of belief, while Axioms of Probability are mathematical foundation
for calculating and manipulating that belief's quantification. In
this sense, Bayesian inference must involve two steps: 

- Firstly, the joint probability model $f(\xbold,\vtheta)$ must be
imposed, via e.g. empirical evidence in the past, uncertainty model
for unrepeatable physical system, our belief on frequencies of repeatable
outcome in future, or quantification of ignorance, etc. 

- Secondly, the posterior distribution $f(\vtheta|\xbold)$, which
quantifies our belief on parameter $\vtheta$ given observed data
$\xbold$, has to be derived from $f(\xbold,\vtheta)$ via probability
chain rule. This second step is also called Bayes' rule if $f(\xbold,\vtheta)$
is factored further into observation $f(\xbold|\vtheta)$ and prior
$f(\vtheta)$ distributions. In the past, the form $f(\vtheta|\xbold)$
was also called inverse probability, because conditional order between
parameters and data is reverse to that of likelihood $f(\xbold|\vtheta)$.

In practice, different from Frequentist approach, the aim of Bayesian
point estimator is to minimize expected value of loss function $E(L(\vxi,\vtheta))$,
but with respect to posterior $f(\vtheta|\xbold)$ instead of likelihood
$f(\xbold|\vtheta)$. Nevertheless, the main difficulty of Bayesian
techniques is that posterior distribution in practice is often intractable,
in the sense that the regular normalizing constant is not available
in closed-form. In that case, the distributional approximation for
posterior can be applied. In fact, as mentioned above, the availability
of multi-dimensional distributional approximations like MCMC is the
main reason for reviving Bayesian techniques in 1980s {[}\citet{ch2:origin:Bayes:modern1980}{]}. 

For convention, the pdf $f(\cdot)$ in this thesis is used for both
pdf and pmf distribution. The pmf is simply regarded a special case
of pdf and represented by probability weights of Dirac-delta functions
$\delta(\vtheta-\vtheta_{i})$ located at corresponding singular points
$\vtheta_{i}$. Note that, in this case, $\delta(\vtheta-\vtheta_{i})$
has to be regarded as a Radon--Nikodym probability measure, $\delta_{\vtheta}({\cal A})$,
for arbitrarily small $\sigma$-algebra set ${\cal A}$ in the sample
space $\Omega$, such that $\vtheta_{i}\in{\cal A}$, and the integral
involving $\delta(\cdot)$ needs to be understood as a Lebesgue integral. 

In an attempt to derive equivalence between Bayesian and Frequentist
techniques, the following two models for prior distribution are often
considered: 
\begin{itemize}
\item \textit{Uniform prior:} If the prior $f(\vtheta)$ is uniform over
sample space $\Omega$, the posterior distribution for $\vtheta$
is proportional to the likelihood. The Bayesian and Frequentist computational
results for MAP and ML estimates are then the same, although their
philosophy remains different. However, when the measure of sample
space $\Omega$ for $\vtheta$ is infinite, such a uniform prior will
become an improper prior.
\item \textit{Singular (Dirac-delta) function for prior:} If the prior is
assigned as $\delta(\vtheta-\vtheta_{0})$ at a singular value $\vtheta_{0}$,
the likelihood becomes $f(\xbold|\vtheta_{0})$ owing to sifting property
of Dirac delta function and, hence, justifies the philosophy of notation
$f(\xbold;\vtheta_{0})$ in frequentist. This prior is, however, not
a model of choice for Bayesian technique, because the posterior for
a Dirac-delta prior is exactly the same as that prior, by the sifting
property. In other words, once the prior belief on $\vtheta$ is fixed
at $\vtheta_{0}$, regardless of $\vtheta_{0}$ being known or unknown,
there is no observation or evidence that can alter that belief \textit{a
posteriori}. Hence, this singular function is not a good prior model
because it ignores any contrary evidence under Bayesian learning.
In application, the Dirac-delta function is mostly used in Certainty
Equivalent (CE) estimation, i.e. the plug-in method, for a nuisance
parameter subset of $\vtheta$ or in sampling distribution, as explained
in Section \ref{subsec:chap4:CE-approx}. 
\end{itemize}
Hence, care must be taken when interpreting Frequentist result as
special case of Bayesian result. For technical details of Bayesian
inference and its comparison with Frequentist, please see Chapter
\ref{=00005BChapter 4=00005D} of this thesis. 

\section{Review of digital communication systems \label{sec:chap2:Review-of-digital}}

In 1948, Shannon published his foundational paper {[}\citet{ch2:origin:Shannon:Limit}{]},
which guaranteed the existence of reliable transmission in digital
systems. By quantizing the original messages into a bit stream, the
digital system can feasibly manipulate the bit sequence, e.g. extracting
or adding redundant bits. The result is an encoded bit stream, which
is ready to be modulated into a robust analogue waveform transmitted
over noisy channel. The key advantage of digital receiver is that
it only has to extract the original bit stream from noisy modulated
signal, without the need of reconstructing carrier or baseband waveform
{[}\citet{ch2:bk:Tri_T_Ha}{]}. Hence, the aim of digital receiver
can be regarded as relaxation of that of analogue receiver. 

In its simplest form, a typical digital system can be divided into
several main blocks, as illustrated in Fig. \ref{fig:CHANNEL}. Note
that, owing to advances in methodology and technology nowadays, the
interface between those blocks becomes more and more blur. This unification
process is a steady trend in recent researches, as noted below. In
following subsections, both historical origin and state-of-the-art
inference techniques for telecom system will be briefly reviewed. 

\begin{figure}
\begin{centering}
\includegraphics[width=1\columnwidth]{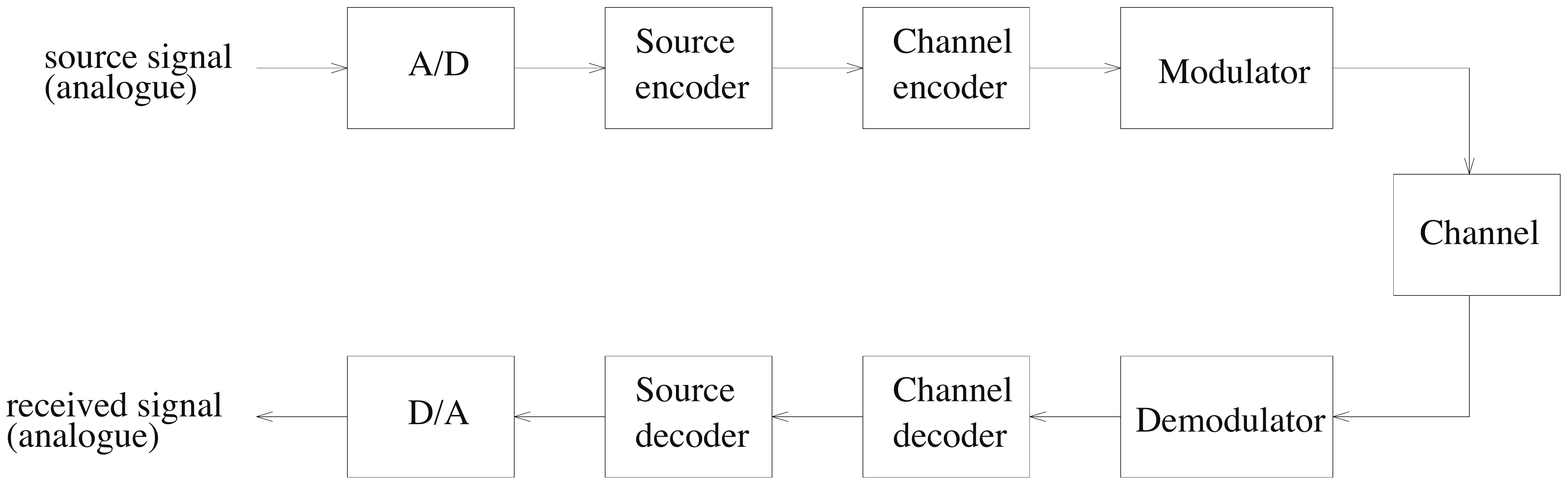}
\par\end{centering}
\caption{\label{fig:CHANNEL}Conventional stages of a digital communication
system}
\end{figure}

\subsection{A/D and D/A converters}

At the input, an analogue message will be converted into a digital
form as binary digits (or bits). Such a conversion is implemented
by the so-called Analogue-to-Digital (A/D) converter. At the output,
the Digital-to-Analogue (D/A) converter is in charge of reverse process,
which converts digital signal back to continuous form. In practice,
the A/D and D/A are used in both sampling and quantizing methods upon
temporal axis and spatial axis, respectively. Those two methods will
be briefly reviewed in this subsection.

\subsubsection{Temporal sampling}

The most important criterion in sampling process is invertible mapping,
which guarantees perfect reconstruction of original signal at the
output. A sufficient condition for successfully reconstructing signal
from its samples is that the sampling frequency at A/D is kept higher
than Nyquist rate (i.e. twice the highest frequency) of analogue message,
as firstly introduced in {[}\citet{ch2:origin:ShannonNyquist:Nyquist}{]}
and later proved in {[}\citet{ch2:origin:ShannonNyquist:Shannon}{]}.
Note that, the Shannon-Nyquist sampling theorem is, however, not a
necessary condition {[}\citet{ch2:origin:subNyquist:Landau67}{]}.
Because non-aliased sampled signal in frequency domain is a sequence
of shifted replicas of original signal, the reconstruction at D/A
is simply an ideal low-pass filter, which crops original signal out
of replicas. In time domain, such an ideal filter is called sync filter,
which is often replaced by pole-zero low-pass (smoothing) filters
like Butterworth or Chebyshev filters {[}\citet{ch2:bk:SEP:Haykin06}{]}. 

Since 1990s, the compressed sensing (CS) technique (also called compressive
sampling) has been proposed for sampling sparse signal {[}\citet{ch2:Art:ComSensing:Magazine}{]}.
Exploiting the sparsity, compressed sensing projects message signal
from original space into much smaller subspace spanned by general
waveforms (instead of sinusoidal waveforms in classical technique),
while exact recovery is still guaranteed under some conditions {[}\citet{ch2:art:ComSensing:Donoho}{]}.
Nevertheless, a drawback of CS is that the reconstruction has to rely
on global convex optimization via linear programming (LP) {[}\citet{ch2:Art:ComSensing:Magazine}{]},
instead of deterministic solution like traditional filters. In practice,
this technique has been applied to sub-Nyquist rate of multi-band
analogue signal {[}\citet{ch2:art:ComSensing:analog,ch2:art:ComSensing:subNyquist}{]}.

\subsubsection{Spatial quantization}

In typical quantization, there are three issues to be considered:
vertices of quantized cells, the quantized level within each cell
and the binary codeword associated with each level. The first two
issues, which are relevant to quantization's performance, will be
reviewed here, while the third issue, which is relevant to practical
compression rate, will be mentioned in next subsection on source encoder.

The simplest technique in quantization is to truncate and round analogue
value to the nearest boundary in a (either uniform or non-uniform)
grid of cells of amplitude axis {[}\citet{ch2:bk:Proakis:DSP06}{]}.
Each quantized level (i.e. the nearest boundary value in this case)
will then be assigned by a specific block of bits, which is often
called a codeword or symbol. For multi-dimensional case, each dimension
of analogue signal can be quantized separately. Such a technique is
commonly called scalar quantization in the literature (e.g. {[}\citet{ch2:bk:VQ_and_signalcompress}{]}).
In some cases, a transform coding, in which a linear transformation
is applied to message signal before implementing scalar quantization,
as firstly introduced in {[}\citet{ch2:origin:TransformCoding}{]},
might yield better performance than direct scalar quantization, e.g.
{[}\citet{ch2:art:VQ:TransformedGauss}{]}. 

If we regard data as a vector in multi-dimensional space, the vector
quantization (VQ) can be used as generalization of scalar quantization
(see e.g. {[}\citet{ch2:art:VQ:vec_vs_scalar}{]} for their comparison).
Instead of using parallel cells in scalar version, VQ divides data
space into multiple polytopes pointed to by boundary vectors. Then,
VQ maps each message vector within polytope cell into a quantized
vector (often being the nearest boundary vector) within that polytope.
The virtue of VQ is that any (either linear or non-linear) quantization
mapping can be represented equivalently as a specific VQ mapping {[}\citet{ch2:bk:VQ_and_signalcompress}{]}.
Hence, VQ is definitely among the best quantization mappings that
we can design. In history, original idea of VQ was scattered in the
literature. For example, VQ was firstly studied for asymptotic \foreignlanguage{british}{behaviour}
of random signal in {[}\citet{ch2:origin:VQ:Zador63}{]}, although
a version of VQ was used earlier in speech coding {[}\citet{ch2:origin:VQ:Dudley50}{]}.
In computer science, VQ is also known as k-means method, which is
named after {[}\citet{ch2:origin:k_means_MacQueen67}{]} and regarded
as cluster classification or pattern recognition method {[}\citet{ch2:art:VQ:pattern2011}{]}.

Different from sampling, quantization is an irreversible mapping.
Hence, the state-of-the-art reconstruction in D/A converter is simply
sample-and-holding (S/H) or higher-ordered interpolation operator
{[}\citet{ch2:bk:Proakis:DSP06}{]}.

\subsection{Source encoder and decoder}

The purpose of source encoder/decoder is to provide a compromise between
compression rate (i.e. number of representative bits per signal symbol)
and distortion measure (i.e. error quantity between reconstructed
and original signals). Given one of them, the ideal criterion is to
minimize the other. 

In history, the purely theoretical concept for compression rate was
Kolmogorov complexity {[}\citet{ch2:origin:Kolmogorov:Solomonoff64,ch2:origin:Kolmogorov:Kolmogorov65,ch2:origin:Kolmogorov:Chaitin69}{]},
which can be regarded as the smallest number of bits representing
the whole data. Because of analysis difficulty, Kolmogorov complexity
was subsequently replaced by minimum length description (MDL) principle
{[}\citet{ch2:origin:MDL:Rissanen78}{]}, in which the criterion was
shifted from finding shortest representative bit-block length to finding
an approximate model such that: the total number of bits representing
both approximate model and the original signal described by that model
is minimal. However, the computation for MDL is still complicated
and a subject for current researches {[}\citet{ch2:bk:MDL:Advances}{]}.

The historical breakthrough was a relaxation form of MDL in asymptotic
sense, which is the asymptotic bound of rate-distortion function,
firstly introduced and proved in foundational papers {[}\citet{ch2:origin:Shannon:Limit}{]}
and {[}\citet{ch2:origin:Shannon:RateDistortProof}{]}, respectively.
On one extreme of this bound, where desired rate is as small as null,
we achieve the best compression, but distortion would be high. On
other extreme where desired distortion is null, we achieve the so-called
lossless data compression scenario, but minimized compression rate
is still modest. Compromising those two cases, the lossy data compression
scenario, whose purpose is to reduce compression rate significantly
within tolerated small loss of distortion, has been widely studied
in the literature, as briefly reviewed below.

\subsubsection{Lossy data compression}

This scenario is sometimes called fixed-length code in the literature.
Given a fixed compression rate, its popular criterion is to minimize
the distortion measure (often chosen as mean square error (MSE)).
The research on lossy compression is vast, but currently it can be
roughly divided into three domains: A/D converter, transform coding
and wavelet-based compression. Note that their separation border might
be vague, e.g. Compressed sensing (CS) can be regarded as a hybrid
method of the first two methodologies. 

In information theory, the A/D converter can be regarded as special
case of lossy data compression (e.g. {[}\citet{ch2:Art:ComSensing:Magazine,ch2:bk:SourceChannel:Supelec}{]}): 
\begin{itemize}
\item For a fast sampling scenario, CS has been applied to speed-up medical
imaging acquisition {[}\citet{ch2:art:ComSensing_app:MRI}{]} or compressive
sensor network {[}\citet{ch2:art:ComSensing_app:sensor}{]} with acceptable
loss from sparsity recovery. Hence, CS is very useful in the context
of expensive sampling, although its compression performance is quite
modest {[}\citet{ch2:Art:ComSensing:Magazine}{]}. 
\item In quantization, the Lloyd-Max algorithm {[}\citet{ch2:origin:VQ:Lloyd_Max:60,ch2:origin:VQ:Lloyd_Max:82}{]}
and its generalization Linde-Buzo-Gray algorithm {[}\citet{ch2:origin:VQ:LBG_algo:80}{]}
(also known as k-means algorithm) are well-known algorithms for scalar
and vector quantization, respectively. Given initial quantized vectors
(or quantized levels in scalar case), the k-means algorithm iteratively
computes and assigns quantized vector as centroid of probability density
function (pdf) of the source signal within quantized cells {[}\citet{ch2:bk:compression:KhalidSayood}{]}.
At the convergence, the algorithm returns both boundary values of
the cells and quantized levels such that MSE is locally minimized
{[}\citet{ch2:bk:compression:KhalidSayood}{]}. In the case of discrete
source, if probability mass function (pmf) of the source is unknown,
it can be approximated by clusters of input source signal vectors
(i.e. similarly to histogram in scalar case) in offline basis, or
by adaptive algorithms {[}\citet{ch2:art:VQ:AdaptiveVQ}{]} in online
basis.
\item Another quantization technique resembling Vector Quantization (VQ)
is Trellis Coded Quantization (TCQ), introduced in {[}\citet{ch2:origin:TCQ:Marcellin_Fisher}{]}.
The main purpose of TCQ is to minimize the MSE of entire sequence
of quantized source samples, instead of individual MSE of each quantized
source sample. In order to avoid the exponential cardinality of trajectory
of quantized levels for entire source sequence, TCQ imposes Markovianity---local
dependence structure---on those trajectories (i.e. the quantized
levels of current source sample will depend on quantized values of
previous source sample) {[}\citet{ch2:bk:compression:KhalidSayood}{]}.
Then, the trajectory that minimizes MSE can be found via recursive
Viterbi algorithm for trellis diagram {[}\citet{ch2:origin:VA:Forney73}{]}.
In practice, this TCQ scheme was used in standard image compression
JPEG 2000 part II {[}\citet{ch2:art:TCQ:JPEG2000}{]}.
\end{itemize}
Before applying A/D converter, a pre-processing step involving transform
coding might be preferred in order to exploit both temporal and spatial
correlation in source signal. The main idea of transform coding is
to project source signal vector onto the basis capturing the most
important features {[}\citet{ch2:bk:SourceChannel:Supelec}{]}. 
\begin{itemize}
\item The earliest transform coding is a de-correlation technique, namely
Principal Components Analysis (PCA), firstly proposed for discrete
and continuous signal in {[}\citet{ch2:origin:PCA:Hotelling33}{]}
and {[}\citet{ch2:origin:PCA:Karhunen47,ch2:origin:PCA:Loeve}{]},
respectively. The main idea of PCA is to minimize the geometric mean
of variance of transformed components {[}\citet{ch2:bk:compression:KhalidSayood}{]}
by using eigenvectors of autocorrelation matrix as a transform matrix.
Then, only components with largest variances can be retained as important
features of source signal. 
\item The continuous version of PCA, which is also called Karhunen-Loeve
Transform (KLT), is the optimal transformation under MSE criterion,
yet its computational load is pretty high {[}\citet{ch2:origin:DCT:Ahmed91}{]}. 
\item Discrete Cosine Transform (DCT), firstly proposed in {[}\citet{ch2:origin:DCT:Ahmed74}{]},
is another transformation method, with similar performance to KLT
but much faster in operation {[}\citet{ch2:origin:DCT:Ahmed91}{]}.
In data compression, the DCT also yields much better performance than
Discrete Fourier Transform (DFT), particularly for correlated sources
like Markov source, since DCT mirrors the windowed signal and avoids
the distorted high frequency effect of sharp discontinuities at the
edges {[}\citet{ch2:bk:compression:KhalidSayood}{]}. In practice,
DCT is widely used in current standard image compressions, e.g. JPEG,
and video compressions, e.g MPEG {[}\citet{ch2:bk:compression:KhalidSayood}{]}.
\end{itemize}
Another pre-processing step is the so-called subband coding, which
can be generalized to be wavelet-based compression. The main idea
of subband coding is to separate source signal into different bands
of frequencies via digital filters, before pushing the outputs through
downsampling, quantization and encoding steps {[}\citet{ch2:bk:compression:KhalidSayood}{]}.
A drawback of traditional subband method is that the Fourier transform
is only perfectly local in frequency domain and none in time, i.e.
we cannot indicate when a specific frequency occurs. A trivial method
to work around this issue is the short temp Fourier transform (STFT),
which divides signal into separate block before applying Fourier transform
to each block. However, by uncertainty principle, a fixed window in
STFT cannot provide the localization in both time and frequency domains.
The wavelet method addresses this problem by re-scaling the window
such that low frequencies (longer time window) has higher frequency
resolution and high frequencies (shorter time window) has higher time
resolution. The wavelet kernel can also be designed with various orthogonal
waveforms, instead of strictly sinusoidal waveform in Fourier transform.
In practice, wavelet-based method is still not a standard compression
technique at the moment {[}\citet{ch2:bk:SourceChannel:Supelec}{]},
although there were several applications in image compression, e.g.
in JPEG2000 standard. 

For reconstruction, since both transform coding and wavelet technique
are reversible representation of data, the reconstruction in source
decoder is straightforward. Note that, in those cases, the lossy term
comes from the fact that some information is truncated by nullifying
a subset of coefficients, which does not affect the inverse mapping
process. For A/D converter, the reconstruction is similar to D/A converter
above, although the reconstructed data is merely an approximation
of original data in this case.

\subsubsection{Lossless data compression}

In many cases, lossless compression is required; e.g. in text compression,
a wrong character in the reconstructed message might lead to a completely
wrong meaning for the whole sentence. Since distortion is assumed
null, the ultimate aim of lossless compression is to minimize the
total number of representative bits of the source message. Hence,
in the data compression process, lossless compression is often applied
as the last step in order to reduce further the code length. In that
case, the input of lossless compression is discrete source and its
p.m.f will be used as an approximation of the p.d.f. of the original
source.

In practice, a relaxed criterion, which only requires that minimum
message length in an average sense be achieved, is widely accepted.
For the sake of simple computation in that average criterion, Shannon's
coding theorem further imposes a fairly strict assumption, where the
message source is iid. It shows that no lossless code mapping can
produce shorter average compression length than its entropy, which
is a function of the p.m.f of that iid source {[}\citet{ch2:origin:Shannon:Limit}{]}.
Following that result, the current three state-of-the-art algorithms,
namely Huffman-code, arithmetic code and Lempel-Ziv code, were designed
for three practical relaxations of Shannon's assumption, respectively:
iid source with known p.m.f; iid source with unknown p.m.f and correlated
source. These are now reviewed next:
\begin{itemize}
\item \textit{Huffman code} {[}\citet{ch2:origin:HuffmanCode}{]}: Given
the p.m.f of a discrete iid source, the famous Huffman code is the
optimal (and practical) code mapping, which yields the absolute minimal
average code length for the given iid source. Huffman's minimal average
length differs from Shannon's entropy by one compression bit per source
sample, since that is the difference between the integer value of
minimal length and the continuous values of entropy {[}\citet{ch2:bk:compression:KhalidSayood}{]}.
For fast compression, the Huffman code is designed as a prefix code,
in which no code word is a prefix to another codeword {[}\citet{ch2:bk:compression:KhalidSayood}{]}.
A prefix code, in turn, can be constructed for Huffman code as a binary
tree and facilitate the computation. The Huffman decoder is, owing
to this binary tree, fast and feasible since it can traverse through
the tree in the same manner as Huffman encoder {[}\citet{ch2:bk:compression:KhalidSayood}{]}.
In practice, variants of Huffman code have been used in standard image
compression, e.g. JPEG {[}\citet{ch2:origin:JPEG:Chen84,ch2:bk:compression:KhalidSayood}{]}.
\item \textit{Arithmetic code} {[}\citet{ch2:origin:ArithmeticCode:79}{]}:
The absolute minimal rate of Huffman code can be further reduced by
applying Huffman code to joint p.m.f of multiple block iid source
symbols, instead of p.m.f of a single symbol {[}\citet{ch2:BK:CoverAndThomas}{]}.
However, despite the fast reciprocal rate reduction, the computation
of the joint p.m.f grows exponentially with that number of symbols.
In order to avoid that computation, the arithmetic code makes use
of two key ideas: Firstly, it quantizes joint cumulative distribution
function (c.d.f), instead of joint p.m.f. Secondly, those quantized
intervals are refined recursively in an online Markovian fashion (i.e
the ``child'' c.m.f sub-interval of current trajectory divides its
``parent'' c.m.f interval of previous trajectory). Because the range
of c.m.f is the unit interval of real axis, this arithmetic code is
capable of representing infinite number of trajectories. Each of them
can be assigned as a rational number, with possibly long fractional
part (hence the name ``arithmetic''), within this unit interval.
Owing to Markovianity, the arithmetic code is able tractably to produce
a good code rate (within two bits of entropy {[}\citet{ch2:BK:CoverAndThomas}{]}),
compared to the minimal rate (within one bit of entropy) of Huffman
code above. For decoding, the arithmetic coded sequence in binary
base needs to be converted back to its original base value. This conversion
raises two issues: the decoding complexity and the rounding of converted
values. The former can be solved in a similar manner to the arithmetic
encoder, owing to the Markovianity. The latter can be solved in two
ways: either the length of original sequence is set a priori, or a
pilot symbol is included as the end-of-transmission {[}\citet{ch2:bk:compression:KhalidSayood}{]}
\item \textit{Lempel-Ziv code} {[}\citet{ch2:origin:LZ77,ch2:origin:LZ78}{]}:
For correlated source symbols from a set of alphabet, there are two
key issues to be solved: firstly, the design for codeword corresponding
to each alphabet and, secondly, the design for allocation indices
of appearance of that alphabet in a source trajectory. For the first
issue, a reasonable approach is to find the p.m.f of alphabet and
design a codebook based on that p.m.f. For the second issue, the allocation
indices need to be feasible to look up. The famous Lempel-Ziv (LZ)
codes solved both issues in an adaptive (i.e online) fashion: the
alphabet p.m.f is approximated by sequentially counting the frequency
of appearance of the alphabet, while allocation indices can be updated
by either sliding window approach (LZ77 algorithm {[}\citet{ch2:origin:LZ77}{]}),
or a tree-structure approach (LZ78 algorithm {[}\citet{ch2:origin:LZ78}{]}).
The LZ77 algorithm was proved to be asymptotically optimal in {[}\citet{ch2:art:LZ77:optimal}{]},
by showing that the compression rate of LZ77 converges to entropy
for ergodic source {[}\citet{ch2:BK:CoverAndThomas}{]}. Hence LZ77
has been used in many compression standards, e.g. ZIP, and in image
compressions, e.g. PNG {[}\citet{ch2:bk:compression:KhalidSayood}{]}.
The LZW algorithm, proposed in {[}\citet{ch2:origin:LZW}{]} as a
variant of LZ78, is widely used in many compression standards, e.g.
GIF {[}\citet{ch2:bk:compression:KhalidSayood}{]}, owing to its similar
performance to LZ78 and feasible implementation {[}\citet{ch2:bk:SourceChannel:Supelec}{]}.
The LZ codes also belong to the class of dictionary code, because
of its alphabet-frequency-index (hence the name dictionary) technique.
The decoding process for dictionary code is similar to table-lookup
process, where the table is the constructed dictionary and the look-up
process is implemented via allocation indices sent to the receiver.
\end{itemize}
In the literature, the types of lossless compression technique can
also be divided in several ways, such as: fixed-length-code (e.g Huffman
code) versus variable-length-code (e.g. arithmetic and LZ codes),
static code (i.e. offline) versus adaptive code (i.e. online), or
entropy code (i.e. for a known source p.m.f like Huffman code and
arithmetic codes) versus universal code (i.e. for an unknown source
p.m.f like arithmetic and LZ codes), etc.

The early history of data compression is interesting and involves
the generation of students in the era after Shannon: Shannon and Fano,
who were among the first pioneers of information theory, proposed
the theoretical Shannon-Fano coding in order to exploit c.d.f of the
source {[}\citet{ch2:BK:CoverAndThomas}{]}, although it had never
been used until arithmetic code was invented. In Fano's class of information
theory at MIT, his two students, Huffman and Peter Elias, also designed
two recursive lossless compression techniques, the Huffman and Shannon-Fano-Elias
coding, respectively, although the later was never published {[}\citet{ch2:bk:compression:KhalidSayood}{]}. 

Similarly to data compression, the early history of channel coding,
as briefly reviewed below, is just as interesting: Hamming was a colleague
of Shannon at Bell Labs when he invented the Hamming code {[}\citet{ch2:origin:HammingCode}{]},
which was also mentioned in {[}\citet{ch2:origin:Shannon:Limit}{]}.
Soon after, Peter Elias invented convolutional code {[}\citet{ch2:origin:ConvolutionCode}{]},
which, much later, subsequently led to the invention of the revolutionary
Turbo code {[}\citet{ch2:origin:Turbo:Berrou93}{]}. Gallager, a PhD
student of Elias, invented another revolutionary code, namely low-density-parity-check
(LDPC) code, in his doctoral thesis {[}\citet{ch2:origin:LDPC:Gallager}{]}. 

\subsection{Channel encoder and decoder}

When transmitted through a noisy channel, the bit stream can become
corrupted and unrecoverable. A reasonable solution is to strengthen
the transmitted message by adding in some extra bits, whose purpose
is to protect against the noise effect on message bits. Together,
the message and extra bits construct the so-called code bits, which
are transmitted through the channel. When the original message bits
are corrupted, those extra bits will be a valuable reference for the
channel decoder at a receiver to recover the message bits (hence the
name error-correcting-code in the literature). 

Nevertheless, too many extra bits requires too much redundant energy
in the transmitter and thereby increases the operational cost of communication
devices. The purpose of the channel encoder, therefore, is to maximize
the code rate (i.e. the ratio between number of message bits and number
of code bits), while maintaining the possibility of acceptable distortion
at the receiver. There exists, however, a limit for the code rate.
In the foundational paper of information theory {[}\citet{ch2:origin:Shannon:Limit}{]},
Shannon's channel capacity theorem introduced the asymptotic upper
bound of channel code rate, which is called channel capacity and is
solely dependent on the channel characteristics, provided that the
asymptotic average distortion is zero. Since then, a lot of effort
has been made to design the optimal channel codes, whose code rate
is close to that upper bound. Because of analysis difficulty in optimal
case, a relaxed criterion, where distortion is not zero but very small,
has been widely accepted in practice. 

In summary, a good channel code should satisfy three practical requirements:
high code rate, low computational load and sufficiently large Hamming
distance between any two codewords. The first and second ones represent
the requirement of low operational cost and speed of the communication
system, respectively. The third one is a consequence of the channel
characteristics (e.g. large Hamming or Euclidean distance between
codewords would reduce the uncertain error in binary symmetric channel
(BSC) and additive white Gaussian noise (AWGN) channel, respectively).
In order to facilitate the analysis of this requirement, all linear
codes currently make use of a max-min criterion: maximizing the minimum
codeword weight (i.e. its Hamming distance to the origin). Because
the sum of any two linear codewords in the finite field is also a
codeword, that criterion is equivalent to the task of maximizing the
minimum distance between any two codewords {[}\citet{ch2:bk:ToddMoon}{]}.
At the moment, non-linear channel codes have not been much investigated
or applied in practice {[}\citet{ch2:bk:ShuLin}{]}

The first codes satisfying all three requirements are two capacity-approaching
codes: LDPC and Turbo codes, which also reflect two main research
domains of channel code, namely block code and stream code, respectively.
We review this domain next.

\subsubsection{Block code}

A typical block code is a bijective linear mapping from the original
message space into a larger space, namely codeword space, over the
binary finite field. Owing to the tractability of linearity and the
availability of Galois field theory, research over channel codes has
been mostly focussed on this algebraic coding type in the early decades,
see e.g. {[}\citet{ch2:origin:AlgebraicCode:68,ch2:origin:AlgebraicCode:72}{]}.
Four historical milestones of block code in this period will be briefly
reviewed below.
\begin{itemize}
\item \textit{Hamming code} {[}\citet{ch2:origin:HammingCode}{]}: The first
practical channel code is Hamming code, whose minimum Hamming distance
is three. Hence, it is capable of correcting one error bit with a
hard-information decoder {[}\citet{ch2:bk:ToddMoon}{]} and it is
also called the class of single-error-correcting code {[}\citet{ch2:Art:ChannelCode:Review07}{]}.
However, the performance of Hamming is pretty modest in the AWGN channel,
even with soft-decoder.
\item \textit{Reed-Muller code} {[}\citet{ch2:origin:ReedMuller:Muller,ch2:origin:ReedMuller:Reed}{]}:
Reed-Muller codes were the first codes providing a mechanism to design
a code with desired minimum distance {[}\citet{ch2:bk:ToddMoon}{]}.
Another attractive property is speed of decoding, which is based on
fast Hadamard transform algorithms {[}\citet{ch2:bk:ToddMoon}{]}.
Although it was soon replaced by slightly better performance codes
(e.g. BCH code), it is still the best binary code for short-length
block codes and is currently being revisited in the literature, owing
to its good trade-off between performance and complexity via trellis
decoder {[}\citet{ch2:Art:ChannelCode:Review07}{]}.
\item \textit{Cyclic codes} {[}\citet{ch2:origin:CyclicCode:Prange}{]}:
In cyclic codes, any cyclic shift of a codeword yields another codeword.
Hence, its efficient encoding via cyclic shift-register implementation
is of advantage over other block codes. The major class of cyclic
codes with large minimum distance is the BCH codes, which was firstly
introduced in {[}\citet{ch2:origin:BCH:BoseChaudhuri,ch2:origin:BCH:Hocquengbem}{]}.
However, the asymptotic performance of BCH is not good (indeed, when
its code length at a fixed code rate becomes longer, the fraction
of errors that is possible to correct is close to zero {[}\citet{ch2:bk:ToddMoon}{]}).
Eventually BCH was dominated by its non-binary version, namely Reed-Solomon
code {[}\citet{ch2:origin:ReedSolomonCode}{]}. The ability of correcting
burst-error in RS makes it suitable for disk storage systems, and
hence, RS was widely used in compact disk (CD) writing system {[}\citet{ch2:Art:ChannelCode:Review07}{]}.
The important property of both BCH and RS is that they can be efficiently
decoded via finite field arithmetic {[}\citet{ch2:bk:ToddMoon,ch2:Art:ChannelCode:Review07}{]}.
\item \textit{LDPC code} {[}\citet{ch2:origin:LDPC:Gallager}{]}: Instead
of designing the encoder directly, LDPC code relies on the design
of sparsity in the parity-check matrix, which imposes sum-to-zero
constraint on multiple linear combinations of codewords. LDPC code
had been forgotten for over thirty years, until it was re-discovered
in {[}\citet{ch2:origin:LDPC:MacKay}{]}, which demonstrated the capacity-approaching
performance of LDPC code. Indeed, LDPC has excellent minimum distance,
which increases with the block code length (i.e. as the parity-check
matrix becomes more sparse) {[}\citet{ch2:bk:ToddMoon}{]}. As the
code length goes to infinity, LDPC codes have been shown to reach
channel capacity {[}\citet{ch2:art:LDPC:CapacityProof}{]}. Owing
to much lower error floor than Turbo code, LDPC code has been chosen
in many standard communication system, e.g. in DVB-S2 for digital
television or satellite communication system in NASA-Goddard Space
Flight Center {[}\citet{ch2:art:LDPC:+NASA}{]}. However, a drawback
of LDPC is the complicated encoding. Although the iterative decoding
complexity via message-passing algorithm only grows linearly with
block code length, owing to the sparsity, its dual encoder complexity
grows quadratically in block length. By some pre-processing steps,
the LDPC encoding complexity can be reduced significantly, and close
to linear in some cases, as proposed in {[}\citet{ch2:art:LDPC:Richardson01}{]}.
Applying finite geometry, a class of quasi-cyclic LDPC codes was recently
proposed in order to achieve linear encoding via cyclic shift-registers
{[}\citet{ch2:art:LDPC:+NASA,ch2:bk:ShuLin}{]}. Another drawback
of LDPC is that its efficient decoding is only an approximation, not
an exact arithmetic solution. 
\end{itemize}
Following the success of LDPC code, many researchers have focussed
on improving its decoding approximation, rather than designing a new
class of block encoder. Nevertheless, block codes have been received
more attention recently. Owing to the availability of algebraic methodologies
for finite field, the current trend is to design a good performance
code with shorter block length {[}\citet{ch2:Art:ChannelCode:Review07}{]},
since Shannon's channel limit is only valid for asymptotic case after
all.

\subsubsection{Stream Code \label{subsec:chap2:Stream-Code}}

By general definition, a stream code is a block code with infinite
length. In practice, the distinction between block code and stream
code is that the latter can continuously operate on a stream of source
bits (i.e. online case) and produce the same code bits as if it operates
on divided blocks of sources bits (i.e. offline case). The key advantage
of stream code is low complexity and delay of encoder. However, its
key drawback is the lack of a mathematical framework for evaluating
the encoder's performance. Currently, all good stream codes have to
be designed via trial-and-error simulation {[}\citet{ch2:bk:ToddMoon,ch2:Art:ChannelCode:Review07}{]}.
For a brief review, two state-of-the-art stream codes will be introduced
below:
\begin{itemize}
\item \textit{Convolutional code} {[}\citet{ch2:origin:ConvolutionCode}{]}:
The first practical stream code is Convolutional code, which maps
source bits to code bits via linear filtering operators (hence the
name ``convolution'') {[}\citet{ch2:bk:Richardson}{]}. In the encoder,
the filtering process is very fast, owing to the deployment of shift-register
memory circuits. In decoder, the changing of shift-register value
can be represented via the trellis diagram, whose transitions, in
turn, can be inferred jointly or sequentially via state-of-the-art
Viterbi {[}\citet{ch6:origin:VA:Viterbi67}{]} or Bahl-Cocke-Jelinek-Raviv
(BCJR) {[}\citet{ch6:origin:BCJR:Bahl74}{]} algorithms, respectively.
Owing to Markovianity, the transmitted sequence can be recursively
traced back via those inferred transitions, provided that the pilot
symbol at the beginning-of-transmission is known a priori. 
\item \textit{Turbo code} {[}\citet{ch2:origin:Turbo:Berrou93}{]}: The
first practical stream code approaching Shannon limit is Turbo code,
which was initially designed as a parallel concatenation of two Convolutional
codes connected by a permutation operator. That initial proposal for
Turbo code, although designed via trial-and-error process {[}\citet{ch2:bk:Turbo:Berrou2011}{]},
has revived the study of near-Shannon-limit channel codes and is still
the best performing method for the Turbo code class at the moment
{[}\citet{ch2:bk:ToddMoon}{]}. For Turbo decoding, the output of
the Convolutional decoder in each branch are iteratively fed as input
to the other brach until convergence. This iterative decoding (hence
the name 'Turbo' ) is, however, not guaranteed to converge, although
empirical studies show that it often converges in practical applications
{[}\citet{ch2:bk:ToddMoon,ch2:bk:Richardson}{]}. A drawback of Turbo
code is that its error-floor of bit-error-rate (BER) is rather high
at $10^{-5}$, owing to low minimum codeword distance. For high quality
transmission, whose BER requirement is much lower than $10^{-5}$,
the LDPC code with BER error-floor around $10^{-10}$ is much preferred
{[}\citet{ch2:bk:ShuLin}{]}. Currently, Turbo code is being applied
in many standard systems (e.g. CDMA2000, WiMAX IEEE 802.16, etc. {[}\citet{ch2:bk:Turbo:Berrou2011}{]}),
although telecommunications systems have gradually shifted from Turbo
code to LDPC code, owing to LDPC's good performance, as explained
above.
\end{itemize}

\subsection{Digital modulator \label{chap2:sub:Digital-modulator}}

The primary purpose of the digital modulator is to map a set of bits
to a set of sinusoidal bandpass signals for transmission. In the design
process of this mapping, there are three practical criteria to be
considered: power efficiency, spectral efficiency and bit detection
performance. In practice, if one of them is fixed, the other two are
reciprocally dependent on one another. 

Hence, there exists a natural trade-off in modulation design. Loosely
speaking, a modulation scheme is called more power efficient and/or
more spectral efficient if it needs less SNR-per-bit $E_{b}/N_{0}$
(also called bit energy per noise density) and/or it can transfer
more bit-rate $R$ as a ratio of the ideal channel's bandwidth $B$
(also called spectral-bit-rate or spectral efficiency $R/B$) , respectively,
to achieve the same bit-error-rate (BER) performance. Note that the
maximum bit-rate $R$ for reliable transmission is the Shannon channel
capacity $C$ while the maximum sampling rate for zero intersymbol
interference (ISI) is the Nyquist rate $2B$. The spectral efficiency
$R/B$, therefore, reflects the ratio between transmitter's bit rate
and receiver's sampling rate in a reliable and zero ISI transmission.
In the ideal scenario, where BER average is asymptotically zero, the
reciprocal dependence between power and spectral efficiency is given
by equation (\ref{eq:ch2:Shannon:SINR}), owing to Shannon channel
capacity theorem {[}\citet{ch2:bk:Tri_T_Ha}{]}. 

Despite the similarity in mathematical formulae and performance criteria,
the channel encoder and digital modulator are basically different
in the choice of the fixed term in three way trade-off above. Indeed,
the study of the channel encoder often neglects the spectral efficiency
issue, while digital modulation is mostly designed for reliable transmission
only (i.e. BER performance is kept fixed and small). Hence, the channel
encoder, whose main purpose is to achieve low BER at low SNR, can
be regarded as a pre-processing step for digital modulation, which
focuses on trade-off between power efficiency and spectral efficiency.
This is a reasonable division of tasks, because modulation involves
a D/A step and, hence, requires the study of signal spectrum, while
encoding only involves digital bits. Nevertheless, the interface between
them is, sometimes, not specific {[}\citet{ch2:bk:OFDM_CDMA}{]}.
This vague interface also happens between channel decoder and digital
demodulator, as we will see in Section \ref{subsec:chap2:Digital-detector}. 

As stated above, the purpose of digital modulation is to map information
bits to amplitude and/or frequency and/or phase of a sinusoidal carrier,
such that the trade-off between power efficiency and spectral efficiency
in a reliable transmission scheme is optimal. The solution for that
trade-off can be divided into two modulation schemes: either ``memoryless
mapping'' or ``mapping with memory'', i.e. either on a symbol-by-symbol
basis or via a Markov chain, respectively {[}\citet{ch2:bk:Proakis:Comm01}{]}.

\subsubsection{Memoryless modulation \label{subsec:chap2:Memoryless-modulation}}

When bandwidth is the premium resource, modulation on the carrier
amplitude and/or phase is preferred to carrier frequency. In the memoryless
scheme, each block of $k$ bits, called symbol, is separately mapped
to $M=2^{k}$ bandpass signals, whose characteristic is represented
by constellation points, as follows: 
\begin{itemize}
\item \textit{Binary (bit) modulation:} In this case, the carrier's amplitude,
phase or frequency can be modulated via Amplitude, Phase or Frequency
Shift Keying, i.e. ASK, PSK and FSK scheme, respectively. Regarding
ASK, also called on-off keying (OOK), it is mostly used in fiber optic
communication, because it can produce a bias threshold for turning
on-off the light-emitting diode (LED) {[}\citet{ch2:art:ASK:LED:2012}{]}.
Regarding PSK, which is perhaps the most popular binary modulation
{[}\citet{ch2:bk:Tri_T_Ha}{]}, it simply exploits the antipodal points
to represent the binary bits. That simple implementation, however,
might introduce phase ambiguity at the receiver. Regarding FSK, which
is a special and discrete form of Frequency Modulation (FM), the bit
is represented by a pair of orthogonal carriers, whose frequencies
are integer multiple of the bit rate $1/T_{b}$. That bit rate is
also the minimum frequency spacing required for preserving orthogonality
{[}\citet{ch2:bk:Tri_T_Ha}{]}. Despite its simplicity, binary modulation
has, however, low spectral efficiency.
\item \textit{$M$-ary modulation:} In this case, the spectral efficiency
can be greatly improved at the expense of power efficiency. The popular
criterion in this case is average Euclidean distance between constellation
points and origin, which reflects both error vulnerability and power
consumption of a transmitter. Hence, the $M$-ary ASK, which produces
$M$-ary pulse amplitude modulation (PAM), uses more power to achieve
the same BER performance as ASK, owing to the increase of average
Euclidean distance. Inheriting the high spectral efficiency of $M$-ary
ASK, the $M$-ary PSK provides higher power efficiency by increasing
the symbol's dimension from a line in $M$-ary ASK to a circle in
the Argand plane. \\
For large $M$, however, $M$-ary PSK is not power efficient, because
the Euclidean distance of adjacent points in the circle of $M$-ary
PSK becomes small with increasing $M$. The $M$-ary Quadrature Amplitude
Modulation (QAM) resolves this large $M$ issue by placing the constellation
points throughout the Argand plane, also known as in-phase and quadrature
(I-Q) plane, corresponding to real and imaginary axis in this case.
Furthermore, QAM often employs the Gray code, in which two \foreignlanguage{british}{neighbour}
points differ only by one bit, in order to decrease the uncertainty
error between \foreignlanguage{british}{neighbour} points. Owing to
those two steps, QAM is widely used in high spectral efficiency communication,
e.g. in current standard ITU-T G.hn of broadband power-line, wireless
IEEE 802.11a.g or WiMAX IEEE 802.16 {[}\citet{ch2:bk:mobile:system_2011}{]}.
\end{itemize}
For further increasing spectral efficiency with small loss of power
efficiency, a solution is to introduce orthogonality between carriers.
In this way, the correlator at the receiver can recover transmitted
information without interference, even when the set of possible carriers
is overlapped at the same time and/or frequency (hence the increase
in spectral efficiency). Because scaling does not affect the orthogonality
{[}\citet{ch2:bk:Tri_T_Ha}{]}, the amplitude of orthogonal carriers
can be normalized such that the transmitted power average is kept
unchanged (hence small loss in power efficiency). In practice, there
are two approaches for designing such orthogonality, either via digital
coded signals or via analogue multi-carriers, respectively, as reviewed
next:
\begin{itemize}
\item \textit{Code Shift Keying (CSK):} The simplest method of orthogonal
coded modulation is CSK, which bijectively maps $M$ information symbols
to $M$ orthogonal coded signals, notably Walsh functions {[}\citet{ch2:bk:Tri_T_Ha}{]}.
In practice, the number of orthogonal coded signal can be limited,
given a fixed symbol period. The orthogonal condition of coded streams
can be relaxed to a low cross-correlation condition of pseudo-random
or pseudo-noise (PN) code streams. Depending on the modulation scheme
(PSK or FSK) that the PN code is applied to, the result is called
direct sequence (DS) or frequency hopped (FH) spread spectrum signal,
respectively {[}\citet{ch2:bk:Proakis:Comm01}{]}. The name spread
spectrum comes from the fact that the rate is higher than unity, i.e.
the coded symbol period $T_{c}$ is smaller than the information symbol
period $T_{s}$. The signal bandwidth is, therefore, increased from
$1/T_{s}$ to $1/T_{c}$ {[}\citet{ch2:bk:OFDM_CDMA}{]}. \\
In multiple-access communication systems, where multiple users share
a single carrier, a similar orthogonal coding scheme is implemented
via Code Division Multiplexing (CDM), which multiplexes different
user bit streams orthogonally onto one carrier. In broader communication
networks, where users share the same channel bandwidth, CDM is generalized
to Code Division Multiple Access (CDMA), which is the standard spread
spectrum technique in 3G systems {[}\citet{ch2:bk:OFDM_CDMA}{]}.
The combination of CSK modulation with CDMA system is also a research
topic, aimed at increasing recovery performance and decreasing user
data interference {[}\citet{ch2:art:modem:CSK_2009}{]}.
\item \textit{Orthogonal Frequency Division Multiplexing (OFDM):} OFDM is
a key of 4G system, as reviewed in Section \ref{subsec:chap2:Generational-evolution-of-mobile}.
The key idea of OFDM is to employ orthogonal sub-carriers, before
multiplexing them into a single carrier for transmission. There are,
however, two different viewpoints on this concept of multi-carrier
modulation {[}\citet{ch2:bk:OFDM_CDMA}{]}: (i) in practical viewpoint,
each time slot  $T_{s}$ for the symbol period is fixed, while a filter
bank for $K$ bandpass filters, whose minimum frequency spacing is
symbol rate $1/T_{s}$ to preserve the orthogonality of the sub-carriers,
is employed for $K$ symbol pulse shapes of parallel data sub-streams;
(ii) In the textbook's viewpoint, the number $K$ of sub-carrier frequencies
is fixed, while modulation (e.g. QAM) is employed for sub-carriers
in time direction (hence, OFDM is also called Discrete Multi-Tone
(DMT) modulation if QAM is used {[}\citet{ch2:bk:Proakis:Comm01}{]}).
Note that, because a pulse cannot be strictly limited in both time
and frequency domains, (i) and (ii) are merely two viewpoints of the
same process: either truncating frequency-orthogonal time-limited
pulse in frequency domain, or truncating time-orthogonal band-limited
pulse in time domain, respectively. The method (i) is preferred in
practice, because, owing to the inverse Fast Fourier Transform (FFT),
the OFDM is a fast synthesis of Fourier coefficients modulated by
data sub-streams. Furthermore, before transmission, the D/A converter
for that synthesized digital carrier can be feasibly implemented via
oversampling (i.e. padding zero over unused DFT bins) in digital filters,
instead of via complicated analogue filters {[}\citet{ch2:bk:OFDM_CDMA}{]}.
\\
Historically, although the original idea of multi-carrier transmission
was first proposed in {[}\citet{ch2:origin:OFDM:Chang66,ch2:origin:OFDM:Saltzberg67}{]}
via a large number of oscillators, the implementation via digital
circuits for high-speed transmission was out of question for a long
time {[}\citet{ch2:bk:OFDM_CDMA}{]}. By including an extra guard
time-interval, which adds a cyclic prefix in the DFT block to itself,
OFDM was rendered suitable for mobile channels {[}\citet{ch2:origin:OFDM:Cimini85}{]}.
Indeed, the periodic nature of the DFT sequence in guarding-interval
makes the start of original symbols always continuous and greatly
facilitates the time synchronization issue in mobile systems {[}\citet{ch2:bk:Tri_T_Ha}{]}.
Since then, the main motivation of multi-carrier systems is to reduce
the effects of intersymbol interference (ISI), although the application
is two-fold. One one hand, longer symbol period and, hence, smaller
frequency spacing makes the system more robust against channel-time
dispersion and channel-spectrum variation within each frequency slot,
respectively {[}\citet{ch2:bk:Proakis:Comm01}{]}. On the other hand,
phase and frequency synchronization issues become more severe than
in traditional systems. Hence, in practice, OFDM has been used in
Wireless IEEE 802.11 and the terrestrial DVB-T standard, while the
cable DVB-C and satellite DVB-S systems still employ conventional
single carrier modulations {[}\citet{ch2:bk:OFDM_CDMA}{]}. The combination
of CDMA and OFDM, namely Multi-carrier CDMA, is also a promising method
for future mobile systems {[}\citet{ch2:bk:MC-CDMA:Hanzo03,ch2:bk:MC-CDMA:Fazel03}{]}. 
\end{itemize}

\subsubsection{Modulation with memory}

When power is the premium resource, modulation of carrier frequency
and of phase is preferred to carrier amplitude. Because the signal
is often continuously modulated in this case, it introduces a memory
(Markov) effect, in which the current $k$ symbol depends on the most
recent, say $L$, symbols. Hence, this modulation with memory can
be effectively represented via a Markov chain model, as follows:
\begin{itemize}
\item \textit{Differential encoding:} The simplest modulation scheme with
memory is differential encoding, in which the transition from one
level to another only occurs when a specific symbol is transmitted.
That level can be a Markov state of either phase or amplitude, corresponding
to differential $M$-ary PSK or Differential QAM {[}\citet{ch2:art:modem:differential_2006}{]},
respectively. 
\item \textit{$M$-ary FSK:} In this case, the frequency bandwidth can be
divided into $M$ frequency slots, whose minimum frequency spacing
is symbol rate $1/T_{s}$ to preserve the orthogonality of $M$ sinusoidal
carriers. Different from linear modulation scheme like QAM (i.e. the
sum of two QAM signals is another QAM signal), FSK is a non-linear
modulation scheme, which is more difficult to demodulate. Another
major drawback of $M$-ary FSK is the large spectral side lobes, owing
to abrupt switching between $M$ separate oscillators of assigned
frequencies {[}\citet{ch2:bk:Proakis:Comm01}{]}. The solution is
to consider a continuous-phase FSK (CP-FSK) signal, which, in turn,
modulates the single carrier's frequency continuously. In general,
CP-FSK is a special case of continuous phase modulation (CPM), where
the carrier's time-varying phase is the integral of pulse signals,
and represents the accumulation (memory) of all symbols up to the
modulated time {[}\citet{ch2:bk:Proakis:Comm01,ch2:bk:Tri_T_Ha}{]}.
In the literature, CPM is an important modulation scheme and widely
studied because of its efficient use of bandwidth {[}\citet{ch2:art:modem:CPM_1988,ch2:art:modem:CPM_2009}{]}. 
\end{itemize}
In order to increase power efficiency further with small loss of spectral
efficiency, a solution is to design an effective constellation mapping
in higher dimensional space. For example, the dimension of a trajectory
of $n$ $M$-state symbols is $M^{n}$, which increases exponentially
with $n$. Although the constellation design in the I-Q plane can
only be applied in two dimensions, its design principle can be applied
to a trajectory in $M^{n}$ dimensions. Then, the criterion for the
trajectory's constellation mapping can be chosen as a max-min problem,
which is to maximize the minimum Euclidean distance between any two
trajectories. Note that, the Gray code mapping for $n$ separate symbols
may fail to achieve that criterion and may yield only a small increase
in power efficiency {[}\citet{ch2:bk:Tri_T_Ha}{]}. 
\begin{itemize}
\item \textit{Trellis Coded Modulation (TCM)} {[}\citet{ch2:origin:TCM:Gottfried82}{]}:
The power efficiency, also known as coding gain, can be greatly improved
via TCM. The trajectory's constellation in TCM is designed via a principle
of mapping by set partitioning, i.e. the minimum Euclidean distance
between any two trajectories is increased with any partition of I-Q
constellation of each new symbol. Hence, the active point in current
I-Q constellation depends on both current symbol and the active point
in previous I-Q constellation. In other words, the current symbol
does not point to a point in I-Q constellation like in QAM, but to
a transition state between two consecutive I-Q constellation planes
(hence the name Trellis in TCM) {[}\citet{ch2:bk:Proakis:Comm01}{]}.
In the literature, the TCM is mostly designed to map the channel stream-code
bits (e.g. via Convolutional or Turbo code), instead of original bit
stream, in order to increase the joint decoder and demodulator performance
{[}\citet{ch2:bk:ToddMoon}{]}. There is, however, no mapping guaranteed
to achieve optimal Euclidean distance or BER performance {[}\citet{ch2:bk:TCM:Anderson_2003}{]}.
In current standard systems, although TCM is not selected in Wireless
IEEE 802.11a {[}\citet{ch2:bk:modem:TCM_2002}{]}, owing to difficulty
in design, it has been applied in Wireless IEEE 802.11b and IEEE 802.15.3
{[}\citet{ch2:bk:modem:TCM_2011}{]}
\end{itemize}

\subsection{The communication channel}

The communication channel is, by definition, the physical medium for
transmission of information. In practical channels, the common problem
is that, owing to unknown characteristics of the channel and transmission
devices,  random noise will be added to the transmitted signal. These
unknown quantities yields an uncertain signal, whose original form
can be inferred from the channel's probability model. Two major concerns
in the channel model are transmitted power and available channel bandwidth,
which represents the robustness against noise effect and the physical
limitations of the medium, respectively. Corresponding to those two
concerns, the characteristics of three major models in the literature,
namely AWGN, band-limited and fading channels, respectively, will
be briefly reviewed below. Those three channels represent the former
concern, the latter concern and both of them, respectively.

\subsubsection{The Additive White Gaussian Noise (AWGN) channel \label{subsec:chap2:AWGN-channel}}

The simplest model for the communication channel is the additive noise
process, which often arises from electronic components, amplifiers
and transmission interference. The noise model for the third cause
will be discussed in fading channel model below. For the first two
causes, the primary type of noise is thermal noise, which can be characterized
as Gaussian noise process {[}\citet{ch2:bk:Proakis:Comm01}{]} (hence
the name additive Gaussian channel). The noise is often assumed to
be white, that is, it has constant power spectral density (PSD), usually
denoted $N_{0}$ or $N_{0}/2$ for one-sided or two-sided PSD, respectively.
The Gaussian noise process is, therefore, wide-sense stationary and
zero mean. In the literature, the AWGN channel is perhaps the most
widely used model, owing to its mathematical simplicity. 

Nevertheless, the AWGN model is a mathematical fiction, because its
total power (i.e. the PSD integrated over all frequencies) is infinite
{[}\citet{ch2:bk:OFDM_CDMA}{]}. Its time sample has infinite average
power and, therefore, cannot be sampled directly without a filter.
In the literature, the ideal output of that filter at the receiver
is a noise model with finite power, namely discrete AWGN, which is
an iid Gaussian process with zero mean. 

\subsubsection{Band-limited channel \label{subsec:chap2:ISI-channel}}

In some communication channels, the transmitted signals are constrained
within a bandwidth limitation in order to avoid interference with
one another. If the channel bandwidth $B$ is smaller than the signal
bandwidth, the modulated pulse will be distorted in transmission.
In theory, the Nyquist criterion for zero intersymbol interference
(ISI) provides the necessary and sufficient condition for a pulse
shape to be transmitted over flat response channel without ISI. Such
a pulse with zero ISI is also called the Nyquist pulse in the literature.
The sampling rate $1/T_{s}$ must be greater than or equal to the
Nyquist rate $2B$, otherwise the Nyquist pulse does not exist {[}\citet{ch2:bk:Tri_T_Ha}{]}.
The Nyquist pulse with minimum bandwidth is the ideal sinc pulse.
However, because the sinc pulse is idealized, a raised-cosine filter
with small excess bandwidth is often used as Nyquist pulse in practice. 

The band-limited channel is also the simplest form of dispersive channel,
which responds differently with signal frequency. In practice, a noisy
dispersive channel is often modeled as linear filter channel with
additive noise. Such a noise can be white or \foreignlanguage{british}{colour}ed,
depending on whether the channel filter is put before or after the
noise. The additive colored noise with filtered PSD, which implies
correlation between samples, is more complicated and, therefore, is
typically transformed back to white noise model via a whitening filter
at the receiver.

\subsubsection{Fading channel \label{subsec:chap2:Fading-channel}}

In the mobile system, the transmitted signal always arrives at the
receivers as a superposition of multiple propagation paths, which
generally arise from signal reflection and scattering in the environment.
This type of fading channel actually appears in all forms of mobile
wireless communication {[}\citet{ch2:bk:Tri_T_Ha}{]}. The fading
process is characterized by two main factors, space varying (or frequency
selectivity) and/or time varying (or Doppler shift):
\begin{itemize}
\item \textbf{Space varying (or frequency selectivity) }
\end{itemize}
In the literature, this issue is characterized by a correlation frequency
$f_{corr}$ (also called coherence bandwidth), which is the inverse
of the delay spread $\Delta\tau$ arising from different traveling
time of multiple paths. The fading channel with or without ISI is
called frequency selective or flat (non-selective) fading channel,
respectively. In practice, flat fading can be approximately achieved
if the channel bandwidth $B$ satisfies $B\ll f_{corr}$. Since $B$
is of the order of $T_{s}^{-1}$ for a Nyquist basis, such a condition
is corresponding to $\Delta\tau\ll T_{s}$, i.e. the time delay is
much smaller than symbol period $T_{s}$. 

Note that, unlike ISI caused by channel filtering, the ISI in fading
channel is caused by random arrival time of multi-path signal copies
and, therefore, cannot be eliminated by pulse shaping in Nyquist criterion
for zero ISI {[}\citet{ch2:bk:Tri_T_Ha}{]}. For example, a symbol
period $T_{s}=10\ \mu s$, leading to an approximate data rate of
$200\ kbits/s$ for QPSK modulation, has the same order as $\Delta\tau=10\ \mu s$
of delay time corresponding to $3\ km$ difference of traveling paths
at light speed $c$ {[}\citet{ch2:bk:OFDM_CDMA}{]}. It means that
such a data transmission in practice cannot be free of ISI without
sophisticated techniques like equalizers, spread spectrum or multi-carrier
modulation. For example, the main motivation of OFDM is, intuitively,
to prolong the symbol period and, in turn, narrow the signal bandwidth.
In this way, OFDM avoids the use of a complex equalizer in demodulation,
although the Doppler spreading effect, as reviewed below, destroys
the orthogonality of OFDM sub-carriers and results in intercarrier
interference (ICI) {[}\citet{ch2:bk:Proakis:Comm01}{]}. 
\begin{itemize}
\item \textbf{Time varying (or Doppler shift)}
\end{itemize}
In the literature, this issue is characterized by correlation time
$t_{corr}$ (also called variation's timescale), which is the inverse
of maximum Doppler shift $\fDoppler=\frac{v}{c}f_{c}$ , given relative
speed $v$ between transmitter and receiver and carrier frequency
$f_{c}$. For example, the amplitude might be faded up to $-40\ dB$
at Doppler shift $f_{D}=50\ Hz$, corresponding to vehicle speed $v=60\ km/h$
and frequency carrier $f_{c}=900\ MHz$ {[}\citet{ch2:bk:OFDM_CDMA}{]}.
The fading channel is called slow or fast fading if signal envelope
fluctuates little or substantially within symbol period $T_{s}$,
respectively. The condition for slow fading is $T_{s}\ll t_{corr}$,
or equivalently $\fDoppler T_{s}\ll1$ (hence the name normalized
Doppler frequency $\fDoppler T_{s}$). In practice, the fluctuation
in carrier amplitude and phase is the superposition of multiple Doppler
shifts corresponding to multiple paths, which results in the so-called
Doppler spectrum instead of Doppler sharp spectral line at $f_{D}$. 

In a probability modelling context, each propagation path is considered
to contribute random delay and attenuation to the received signal,
whose envelope can be described as Rayleigh fading process (no guided
line-of-sight path), or Rician fading process (one strong line-of-sight
path), or the most general model, Nakagami fading process. Out of
the three, Clarke's Rayleigh model {[}\citet{ch2:origin:Fading:Clark68}{]}
is the most widely used model for wide-sense stationary fading process,
owing to its mathematical simplicity and tractability, compared to
the other two general models for real channel {[}\citet{ch2:bk:Proakis:Comm01}{]}.
Because the power spectral density (PSD) in flat fading with Clarke's
model is not a rational function, such a random process cannot be
represented by an auto-regressive (AR) model. This drawback leads
to difficulty in evaluation of channel detection {[}\citet{ch3:ART:FadingMarkov:tutorial08}{]}
and channel simulation {[}\citet{ch2:art:Fading:simulation06}{]}. 

For feasible evaluation of channel's detection, the finite-state Markov
channel (FSMC) was revived in {[}\citet{ch2:art:Fading:FSMC_95}{]}
for modeling the fading channel, owing to its computational simplicity.
Since then, the first-order FSMC model has been a model of choice
for fading channel, although it is more accurate for slow fading rates
than fast fading rates {[}\citet{ch3:ART:FadingMarkov:tutorial08}{]}. 

For feasible simulation, an approximate AR model, with sufficiently
large order $M$, for Clarke's model was proposed in {[}\citet{ch3:art:Fading:AR05}{]}.
Because such an AR model is essentially a Markov process with order
$M$, a high-order FSMC is also more accurate for fast fading channel
{[}\citet{ch2:art:Fading:Capacity_05}{]}. 

\subsection{Digital demodulator \label{subsec:chap2:Digital-demodulator}}

The purpose of the digital demodulator at receiver is to recover the
transmitted symbols carried on the modulated signal. The demodulation
is, therefore, an inference task based on the designed modulation
scheme and the noise model of the channel. In the practical system,
digital demodulation involves two steps, namely a signal processor
and digital detector {[}\citet{ch2:bk:Tri_T_Ha}{]}. The former, which
can be interpreted as the general form of A/D converter, is to convert
the noisy signal into a digital observation sequence, such that sufficient
statistics for transmitted symbols is preserved. The latter is to
infer transmitted symbols from this converted digital sequence. Those
two steps can be implemented sequentially or iteratively, as reviewed
below.

\subsubsection{Signal processor \label{subsec:chap2:Signal-processor}}

If the digital detector process is implemented directly in digital
software, high sampling rate around carrier frequency would be required.
To avoid such over-sampling, the solution is to obtain the lowpass
equivalent signal and produce a digital sequence from this lowpass
signal. For this purpose, the most important signal processor is the
matched filter {[}\citet{ch2:bk:Proakis:Comm01}{]}. Owing to equivalence
between convolution and correlation operator at sampling time point,
the matched filter is also equivalent to a correlator, whose purpose
is to extract the transmitted symbol via correlation between reference
and received signal. For a noiseless channel, this is obviously a
perfect extraction. For an additive noisy channel, the matched filter,
which is linear, can preserve the additivity of noise model in digital
sequence and, therefore, provide sufficient statistics for inferring
transmitted symbols. 

Nevertheless, there are two major difficulties for the matched filter.
Firstly, traditional matched filter requires a coherent demodulation,
which assumes that local reference carrier can be synchronized perfectly
with received signal in frequency and phase. Secondly, in the band-limited
channel, a cascade of a matched filter and an equalizer, which is
the compensator for reducing intersymbol interference (ISI), is no
longer a matched filter and might destroy the optimality of the original
matched filter {[}\citet{ch2:bk:Tri_T_Ha}{]}. Those two issues are
also major concern for signal processor in practical system. 
\begin{itemize}
\item \textbf{Synchronization}
\end{itemize}
At the receiver, three main parameters, which need to be synchronized
between the transmitted and received signal, are carrier phase, carrier
frequency and symbol timing. In practice, the state-of-the-art estimation
method for those three issues is Maximum Likelihood (ML), although
Bayesian techniques for synchronization have been proposed recently
in the literature, as reviewed below.

For phase synchronization, the ML phase estimator $\widehat{\phi}_{ML}$
is typically tracked by a Phase-Locked-Loop (PLL) circuit in practice.
Instead of computing $\widehat{\phi}_{ML}$ directly, which would
require a long observed time and cause delay in computation, the PLL
continuously tunes reference phase $\widehat{\phi}$ via a feedback
circuit until the value of the likelihood's derivative against carrier
phase becomes zero {[}\citet{ch2:bk:Proakis:Comm01}{]}. Owing to
its adaptability to the channel's variation and non-data-aided (NDA)
scheme, PLL research is extensive in the literature (see e.g. {[}\citet{ch2:bk:sync_phase:Lindsey72},\citet{ch2:bk:SEP:sync02}{]}).
Nevertheless, the data-aided (DA) (either pilot-based or non-pilot-based)
scheme for phase synchronization significantly increases the accuracy.
For pilot-based scheme, ML estimation $\widehat{\phi}_{ML}$ can be
derived feasibly {[}\citet{ch2:bk:sync_phase:Meyr97}{]}. For non-pilot-based
scheme, the estimation accuracy for low SNR is only acceptable via
channel code-aided (CA) scheme, although it is only possible to return
a local ML phase estimator via iterative EM algorithm in this case
{[}\citet{ch2:art:sync_Turbo:Herzet07}{]}. Recently, in {[}\citet{ch2:art:sync_phase:SEP_ICASSP}{]},
a Bayesian technique was applied to phase inference in a simple single-tone
carrier model, and showing that the Von Mises distribution is a suitable
conjugate prior for phase uncertainty in this case.

For frequency synchronization, three state-of-the-art methods in the
literature are the periodogram, phase increment and auto-correlation
methods {[}\citet{ch3:PhD:Freq:pilot09}{]}. The periodogram method
is equivalent to the ML estimation method {[}\citet{ch3:origin:Freq:ML74}{]},
while the other two are low-complexity sub-optimal methods, which
exploit the rotational invariance of displaced cisoidal signals {[}\citet{ch2:bk:Proakis:DSP06}{]}.
The technical details of these three methods will be presented in
Section \ref{sec:chap3.2:Freq}. In practice, the accuracy of frequency
estimation is high and, hence, frequency synchronization is much less
severe than phase synchronization. For example, in the AWGN channel,
the frequency-offset error in OFDM is typically around $1\%$ of the
sub-carrier spacing in high SNR {[}\citet{ch2:bk:Tri_T_Ha}{]}. Nevertheless,
frequency synchronization for OFDM is more severe in practical channels,
e.g. low SNR or fading channel {[}\citet{ch2:bk:OFDM_CDMA,ch2:bk:Proakis:Comm01}{]},
because the error in frequency estimation will destroy orthogonality
between sub-carriers. Recently, in {[}\citet{ch3:art:Freq:esprit13}{]},
the ESPRIT method was proposed for estimating OFDM's frequency offset
in low-cost direct-conversion receiver (DCR), which demodulates the
received signal in analogue domain and, hence, is prone to frequency-selective
I-Q imbalance. Regarding Bayesian techniques, the posterior distribution
for uncertain frequency is typically not in closed-form because sinusoidal
signal is a non-linear model of frequency. Hence, Bayesian techniques
for frequency have not been applied in practice, although, recently,
the MCMC method was proposed in the literature {[}\citet{ch2:art:sync_freq:MCMC_05}{]}
for approximating the frequency posterior distribution. 

For symbol timing, the typical scenario is data-aided (DA) synchronization
{[}\citet{ch2:art:sync_time:Bergman95}{]}. Such a scheme is reasonable,
since symbol timing involves the symbol data to begin with. For the
DA scheme, the ML delay estimator can be tracked via delay-locked
loop (DLL) circuit, which operates similarly to the PLL for phase.
Note that, a variant of DLL, namely Early-gate DLL, can also be applied
to non-data-aided (NDA) or low SNR scheme, by marginalizing out assumed
equi-probable symbols from observation model {[}\citet{ch2:bk:Proakis:Comm01}{]}.
Carrier phase and symbol timing can also be estimated jointly via
joint ML estimator in order to achieve higher accuracy {[}\citet{ch2:art:sync_time:time_phase_77}{]}.
Similarly, the delay time can be estimated along with phase offset
in above CA synchronization {[}\citet{ch2:art:sync_Turbo:Herzet07}{]}.
In OFDM, the symbol timing issue is less severe, owing to cyclic prefix
of symbols in guard time period {[}\citet{ch2:bk:OFDM_CDMA}{]}. 
\begin{itemize}
\item \textbf{Equalization}
\end{itemize}
If $H(f)$ is the product of the transmission filter and known channel
response, the matched filter $H^{*}(f)$ for zero ISI at receiver
can be designed such that the folded spectrum $X(f)$, i.e. the new
channel spectrum $H(f)H^{*}(f)$, satisfies the Nyquist criterion
for zero interference. Since the channel response is typically unknown,
a block of digital equalization filter for reducing ISI usually consists
of two parts. In the first part, a digital noise-whitening filter
$H_{w}(z)$ is designed such that the colored noise, i.e. the channel's
white noise filtered by matched filter, can be whitened and, hence,
uncorrelated. In the second part, a linear digital equalizer, $G(z)$,
can be chosen as one of two popular models, namely zero-forcing equalizer
$(H(z)H_{w}(z))^{-1}$ (for ideally noiseless channel) and MMSE equalizer
(for unknown noisy channel, but with WSS symbol sequence). The former
filter forces ISI to zero, but tends to increase noise power and,
hence, reduce SNR {[}\citet{ch2:bk:Tri_T_Ha}{]}. The latter filter,
whose coefficients can be estimated via the LMS or RLS algorithm,
minimizes the mean square error (MSE) between randomly WSS symbol
sequence and the output of equalizer {[}\citet{ch2:bk:Proakis:Comm01}{]}.
The performance of linear filter can be significantly increased via
the data-aided scheme, also known as the decision-feedback scheme,
when combined with digital detector. These decision-feedback equalizers
(DFE) are, however, non-linear filters {[}\citet{ch2:bk:Tri_T_Ha}{]}.
Note that, since the channel response is not known in practice, adaptive
algorithms need to be applied to these equalizers in order to track
channel's characteristics {[}\citet{ch2:bk:Proakis:Comm01}{]}.

\subsubsection{Digital detector \label{subsec:chap2:Digital-detector}}

As mentioned earlier in Section \ref{chap2:sub:Digital-modulator},
the interface between the channel decoder and digital detector is
blurred. In traditional definition, however, the output of digital
detector and, hence, of digital demodulator is hard-information of
transmitted symbol sequence, while the channel decoder increases the
detector's performance further, owing to channel encoding methods
{[}\citet{ch3:BK:DigiComm:Madhow08}{]}. In the literature, the term
``demodulation'' generally focuses on the output of the digital
detector, which relies on modulation scheme and channel characteristics,
while the term ``decoder'' implies the channel decoder, which relies
on encoding model (see e.g. {[}\citet{ch2:art:detector:eg_joint_03}{]}). 

The state-of-the-art techniques for digital detector are, therefore,
ML estimator and ML sequence estimator (MLSE), corresponding to two
detector schemes, namely symbol-by-symbol and joint symbol sequence,
respectively. Note that, for simplicity in study of demodulation,
the bit sequence at the input of modulator is often assumed as an
iid Bernoulli uniform sequence. Although ML and MAP estimators are
equivalent in this case, the term MAP estimators are often preserved
for more complicated input, e.g. Markov source or channel encoding
sequence.
\begin{itemize}
\item \textbf{Symbol-by-symbol detector}
\end{itemize}
In this scheme, each transmitted symbol is detected independently
from each other. This simple scheme is mostly applied to memoryless
modulation with AWGN channel. For the case of $M$-ary modulation,
the ML estimator simply returns the constellation point closest in
Euclidean distance to observed symbol in constellation plane. For
the case of orthogonal modulation, the user streams (in CDMA) and
sub-carriers (OFDM) are well separated via correlation in the signal
processor above {[}\citet{ch2:bk:OFDM_CDMA}{]}. Hence, the detector
for each user stream or sub-carrier is similar to $M$-ary case.
\begin{itemize}
\item \textbf{Symbol sequence detector}
\end{itemize}
In this scheme, transmitted symbol sequence is detected jointly via
Markovian model. Given transition values in trellis diagram, the MLSE
can be found efficiently via Viterbi algorithm (VA). This powerful
scheme has been applied to all kind of modulation and channel models. 

For memoryless modulation, the MLSE is applied when ISI occurs and/or
symbol timing is difficult. Firstly, $K$ observation samples per
overlapped time period will be collected into a sequence of length,
say $n$, of $M^{K}$-ary symbols. Secondly, assuming that two consecutive
$M^{K}$-ary symbols overlap in $L$ symbols, an $M^{m}\times M^{m}$
augmented transition matrix with $M^{2m}$ valid transitions can be
constructed, where $m=2K-L$. Thirdly, the soft-information for each
state in $M^{m}$ states are computed from the observation sequence.
Finally, the MLSE for those $n$ symbols are returned by VA. For high
$n$, such a MLSE is the closest point to the observation sequence
in exponentially $O(M^{nm})$-ary constellation plane, while VA's
computational load $O(nM^{2m})$ is always linear with $n$. 

Hence, the performance of MLSE is significantly better than that of
both symbol-by-symbol detector in AWGN channel and equalizer method
in band-limited channel, with modest increase in computational complexity
{[}\citet{ch2:bk:Proakis:Comm01}{]}. A similar result is also achieved
in the modulation with memory effect and FSMC-fading channel, since
both of them are Markovian model to begin with. 

\section{Summary}

In this chapter, we asserted that the challenge for mobile systems
is more about efficient computation than performance breakthroughs,
as we put this insight into the context of the generational evolution
of telecommunications systems. An interesting remark is that the standard
transmission speed of any mobile generation, from 1G to 5G, was always
set equal to that of fixed-line communication in the previous generation.
This insight in the mobile generations illustrates the need for efficient
computation and will be used in discussion of future work in Section
\ref{sec:chap9:Progress-achieved}.

The fading channel, i.e. the environment that all mobile phones must
confront, was also picked out as the main cause of performance degradation.
This review also raises the need for better trade-off schemes - between
accuracy and speed - for symbol detection in the receiver. For this
reason, the fading channel and the search for new trade-offs will
receive more attention in this thesis (e.g. sections \ref{sec:ch3:Fading-channel}
and \ref{subsec:chap8:Rayleigh-fading-channel}, respectively).

To motivate Bayesian inference in this thesis, non-Bayesian techniques
were first reviewed, along with their drawbacks. The Least Squares
(LS), optimal MMSE estimator, Wiener filter and ML estimator are central
techniques in non-probabilistic estimation. Some major limitations
in DSP for frequentist techniques, e.g. unbiased estimator, and Bayesian
theory, e.g. subjectivity of the prior model, were explained and clarified.
The Bayesian methodology introduced in this chapter will be presented
in more detail in Chapter \ref{=00005BChapter 4=00005D}. 

All major operational blocks in the telecommunications system were
briefly reviewed in order to emphasize the significant role of Markovianity.
Indeed, Markovianity appears in every block of telecommunications
system and, more importantly, in most computationally-efficient schemes
for these blocks. By Markovianity, we mean the invariant of the \foreignlanguage{british}{neighbourhood}
structure in each objective functional factor. The distributive law
for ring theory will be used in Chapter \ref{=00005BChapter 5=00005D}
as an attempt to exploit further the advantage of this Markovianity,
which is also one of the main themes for our future work (Section
\ref{sec:chap9:Progress-achieved}).


\global\long\def\xbold{\mathbf{x}}%

\global\long\def\calA{\mathcal{A}}%

\global\long\def\calO{\mathcal{O}}%

\global\long\def\calI{\mathcal{I}}%

\global\long\def\calN{\mathcal{N}}%

\global\long\def\calCN{\mathcal{CN}}%

\global\long\def\PSD{N_{0}}%

\global\long\def\vtheta{\theta}%

\global\long\def\fung#1{g\left(#1\right)}%

\global\long\def\seti#1#2{#1\in\{1,2,\ldots,#2\}}%

\global\long\def\setd#1#2{\{#1{}_{1},#1{}_{2},\ldots,#1_{#2}\}}%

\global\long\def\TRIANGLEQ{\triangleq}%


\global\long\def\ndata{n}%

\global\long\def\npath{K}%

\global\long\def\itime{i}%

\global\long\def\istate{k}%

\global\long\def\ipath{p}%

\global\long\def\Mstate{M}%

\global\long\def\acarrier{a}%

\global\long\def\fcarrier{f_{c}}%

\global\long\def\fset{f_{o}}%

\global\long\def\Oset{\Omega_{o}}%

\global\long\def\fsampling{f_{s}}%

\global\long\def\fdelta{\Delta f}%

\global\long\def\fDoppler{f_{D}}%

\global\long\def\carrier{\psi}%

\global\long\def\noise{z}%

\global\long\def\wgn{w}%

\global\long\def\phasenoise{\eta}%

\global\long\def\source{s}%

\global\long\def\sbold{\mathbf{\source}}%

\global\long\def\pulse{u}%

\global\long\def\Tperiod{T}%

\global\long\def\Tsample{T_{s}}%

\global\long\def\Tsampling{T_{s}}%

\global\long\def\tdelay{\Delta\tau}%

\global\long\def\phidelay{\phi}%

\global\long\def\phix{\phi}%

\global\long\def\phasepath{\varphi}%

\global\long\def\hdelay{h}%

\global\long\def\power{P_{0}}%

\global\long\def\ACF{R}%

\chapter{Observation models for the digital receiver \label{=00005BChapter 3=00005D}}

In this chapter, the demodulation task for three communication scenarios
will be briefly reviewed. For simplicity, the channel noise will be
assumed to be additive white Gaussian noise (AWGN). As explained in
Section \ref{subsec:chap2:AWGN-channel}, despite being idealistic,
AWGN model is a major type of corruption, which serves as basic assumption
in many practical channels {[}\citet{ch2:bk:Proakis:Comm01}{]}. In
this thesis, let us assume further that there is no intersymbol interference
(ISI) and channel time-delay. 

As explained in Section \ref{subsec:chap2:Stationary-process}, analogue
form of the received signal can be represented via the Wold decomposition
as follows:
\begin{equation}
x(t)=Re\left\{ \fung t\sum_{\itime=1}^{n}\source_{\itime}\pulse\left(t-\itime\Tsampling\right)\right\} +w(t)\label{eq:ch3:x(t)}
\end{equation}
where $w(t)$ is AWGN process with PSD $\sigma^{2}=N_{0}/2$ (W/Hz),
$\source_{\itime}\in\calA_{\Mstate}$ is the $\itime$th complex symbol
belonging to $\Mstate$-ary alphabet $\calA_{\Mstate}\TRIANGLEQ\left\{ \xi_{1},\ldots,\xi_{\Mstate}\right\} $,
the wave form $g(t)$ represents both carrier and channel characteristics,
$\pulse(t)$ is unit-energy Nyquist pulse shape of duration $\Tsampling$. 

Let us recall that the case of non-zero ISI can be arranged to be
close to zero via equalization, as explained in Section \ref{subsec:chap2:Signal-processor}.
Hence, for simplicity, the case of zero ISI will be assumed in this
thesis. Also, as reviewed in the same section, the symbol timing can
be synchronized jointly with carrier phase offset. Because the phase
synchronization issue will be left for future work, let us assume
that the symbol timing can be synchronized perfectly, and that the
duration $\Tsampling$ is the same for both symbol period and sampling
period in this thesis.

Note that, the parameterization $\theta$ of probabilistic observation
model $f(x(t)|\theta,\mathcal{I})$ depends on design of model $\mathcal{I}$,
which, in turn, depends on characteristics of both carrier signal
over channel $\fung t$ and the source $\sbold_{n}\TRIANGLEQ\setd{\source}{\ndata}$.
Then, four possible scenarios of model $\mathcal{I}$ are: 
\begin{itemize}
\item $\mathcal{I}_{1}$ - both known $\{\fung t,\sbold_{n}\}$
\item $\mathcal{I}_{2}$ - known $\fung t$, unknown $\sbold_{n}$
\item $\mathcal{I}_{3}$ - unknown $\fung t$, known $\sbold_{n}$
\item $\mathcal{I}_{4}$ - both unknown $\{\fung t,\sbold_{n}\}$
\end{itemize}
Leaving out the trivial case $\mathcal{I}_{1}$, three remaining scenarios
will be studied in this thesis. More specifically, three basic problems
in communication will be considered respectively: 
\begin{itemize}
\item $\mathcal{I}_{2}$: synchronized symbol detection in AWGN channel
(i.e. known carrier and channel characteristics, but unknown symbols)
\item $\mathcal{I}_{3}$: pilot-based frequency offset estimation in AWGN
channel (i.e. unknown carrier characteristic, but known channel characteristic
and symbols)
\item $\mathcal{I}_{4}$: synchronized symbol detection in quantized fading
channel (i.e. unknown channel characteristic and symbols, but known
carrier characteristic)
\end{itemize}
Those three problems, in respective order, will be presented in three
sections below. 

\section{Symbol detection in the AWGN channel}

As reviewed in Section \ref{chap2:sub:Digital-modulator}, the simplest
carrier wave form $\carrier(t)$ in this case is a complex sinusoid,
as follows:

\begin{equation}
\fung t=\psi(t)\triangleq ae^{j2\pi f_{c}t}\label{eq:ch3:g(t):case1}
\end{equation}
where carrier parameters $\vtheta=\{\acarrier,\fcarrier\}$, i.e.
amplitude $\acarrier$ and carrier frequency $\fcarrier$, are assumed
known in this section. For simplicity, the carrier phase is assumed
to be null. 

Since $\vtheta$ is given, the baseband signal can be retrieved and
sampled from $x(t)$ via matched filter, designed at carrier frequency.
As explained in Section \ref{subsec:chap2:Signal-processor}, the
matched filter is also equivalent to a correlator, owing to duality
between convolution and correlation operators. Note that, this equivalence
is only valid at sampling point $t=\itime\Tsampling$, $\seti{\itime}{\ndata}$
{[}\citet{ch2:bk:Tri_T_Ha}{]}, i.e. we have $x_{I}(\itime\Tsample)=\tilde{x}_{I}(\itime\Tsample)$
and $x_{Q}(\itime\Tsample)=\tilde{x}_{Q}(\itime\Tsample)$ in Fig.
\ref{fig:chap3:filter_correlator}. As explained in Section \ref{subsec:chap2:Digital-demodulator},
the correlator is also a general form of A/D converter. The key difference
is that the sample in the output of correlator is the result of a
projection, instead of sampling values in traditional A/D. Hence,
the correlator form will be presented below for the sake of clarity
and intuition.

\begin{figure}
\begin{centering}
\includegraphics[width=0.8\columnwidth]{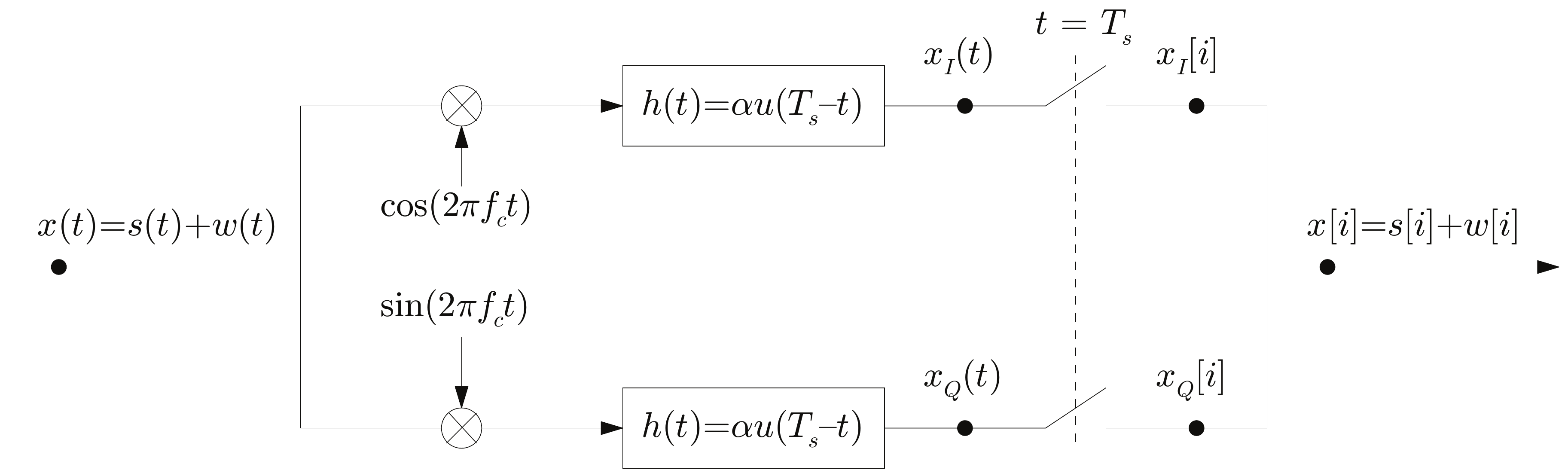}
\par\end{centering}
\begin{centering}
(a)
\par\end{centering}
\begin{centering}
$\ $
\par\end{centering}
\begin{centering}
\includegraphics[width=0.8\columnwidth]{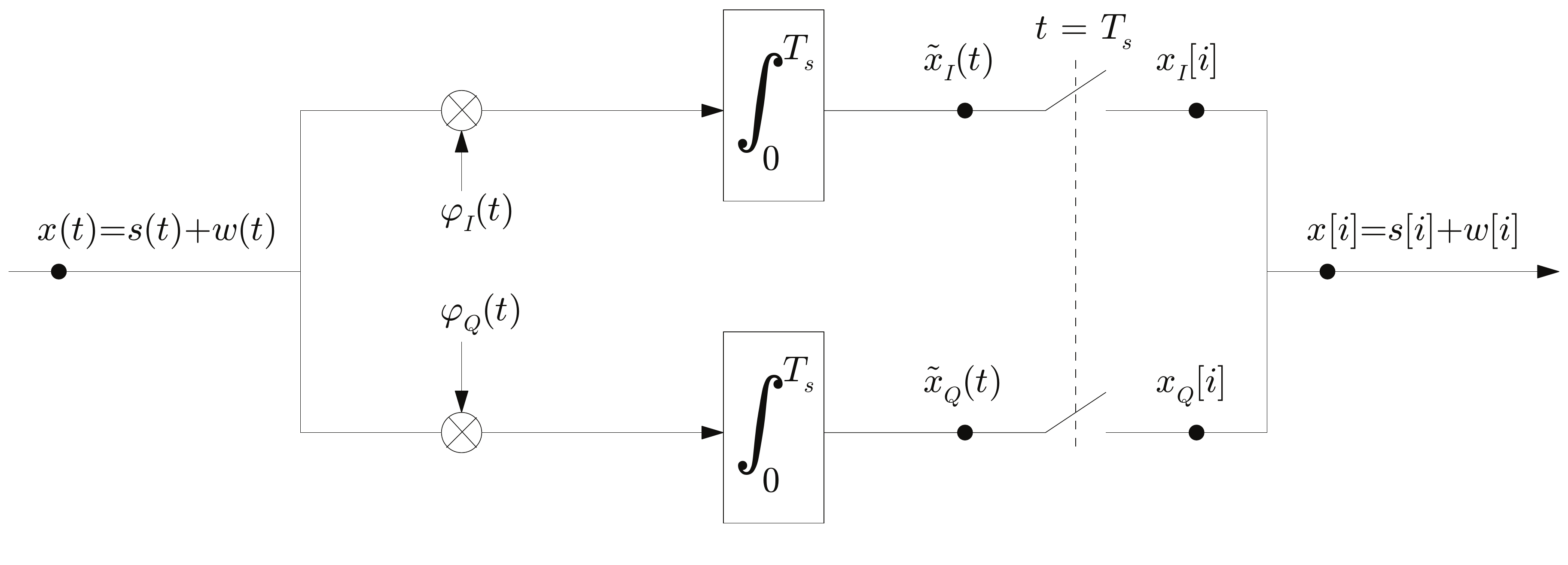}
\par\end{centering}
\begin{centering}
(b)
\par\end{centering}
\caption{\label{fig:chap3:filter_correlator}The equivalence between matched
filter (a) and correlator (b) in quadrature demodulator}
\end{figure}
In order to exploit the AWGN assumption, the key idea is to construct
an orthonormal basis spanning signal space, since the projection of
AWGN process in Hilbert space ${\cal H}$ onto that orthonormal basis
yields another AWGN process in sub-Hilbert space ${\cal S}$ with
the same PSD as the AWGN in ${\cal H}$ {[}\citet{ch3:BK:DigiComm:Madhow08}{]}.
For this purpose, let us consider a normalization constant $\alpha$
such that the in-phase $\varphi_{I}(t)=\alpha\pulse(t)\cos(2\pi f_{c}t)$
and quadrature $\varphi_{Q}(t)=\alpha\pulse(t)\sin(2\pi f_{c}t)$
components are orthonormal basis functions of signal space. Then,
the projections of $x(t)$ and $w(t)$ onto this signal space can
be collected as discrete complex variables, as follows:

\begin{equation}
\begin{cases}
x_{\itime} & =\left\langle x(t),\varphi_{I}(t)\right\rangle +j\left\langle x(t),\varphi_{Q}(t)\right\rangle \\
\wgn_{\itime} & =\left\langle w(t),\varphi_{I}(t)\right\rangle +j\left\langle w(t),\varphi_{Q}(t)\right\rangle 
\end{cases},\ \frac{t}{\Tsample}\in[\itime-1,\itime)\label{eq:ch3:w_k}
\end{equation}
where $\left\langle \cdot,\cdot\right\rangle $ is the inner product
in $\itime$th symbol interval, $\seti{\itime}{\ndata}$. Hence, the
output of quadrature demodulator in this case is the basic discrete
complex receiver in AWGN channel {[}\citet{ch3:art:AWGN:Forney98}{]},
as follows:

\begin{equation}
x_{\itime}=s_{\itime}+\wgn_{\itime},\ \itime\in\{1,\ldots,n\}\label{eq:ch3:x_k:case1}
\end{equation}
where the projection $\wgn_{\itime}$ of $w(t)$ via (\ref{eq:ch3:w_k})
is a discrete complex AWGN sequence, i.e. $\wgn_{\itime}\sim\calCN(0,\sigma^{2})$,
with the same variance $\sigma^{2}=N_{0}/2$ as $w(t)$ {[}\citet{ch3:BK:DigiComm:Madhow08}{]}.
Hence, the observation model can be written as follows:

\begin{equation}
f(x_{\itime}|s_{\itime})=\calCN_{x_{\itime}}(s_{\itime},\sigma^{2}),\ \itime\in\{1,\ldots,n\}\label{eq:ch3:f(x|s):case1}
\end{equation}

In classical estimation, the Maximum Likelihood (ML) estimator for
Gaussian noise can be found via the Least Squares (LS) method. In
the simplest case where $\sbold_{n}$ is a uniform iid sequence, each
symbol can be estimated separately, i.e. $\widehat{s_{\itime}}=\arg\min_{s_{\itime}\in\calA_{\Mstate}}\left\Vert x_{\itime}-s_{\itime}\right\Vert ^{2}$,
$\seti{\itime}{\ndata}$ . For more general case where $\sbold_{n}$
is a Markov source, Bayesian inference is needed. This Markovian case
will be studied in Chapter \ref{=00005BChapter 8=00005D}. 

\section{Frequency estimation in the AWGN channel \label{sec:chap3.2:Freq}}

When carrier parameters $\vtheta=\{\acarrier,\fcarrier\}$ are unknown,
e.g. owing to offset corruption in channel, a common solution for
these time-invariant parameters is to transmit a block of $\ndata$
known symbols (also called pilot symbols), in order to estimate $\vtheta$
before estimating true messages coming afterward {[}\citet{ch2:bk:Tri_T_Ha}{]}.
Hence, without loss of generality, let us assume that $s_{\itime}=1$
for all $\seti{\itime}{\ndata}.$ The carrier frequency $\fcarrier$
in this case becomes $\fcarrier+\fdelta$, where $\fdelta$ is the
unknown offset frequency. Denoting $\fsampling\TRIANGLEQ1/\Tsampling$
as symbol rate and $\fset\TRIANGLEQ\Tsampling\fdelta=\frac{\fdelta}{\fsampling}\in[-0.5,0.5)$
as normalized offset frequency, the channel wave form $\fung t$ in
(\ref{eq:ch3:g(t):case1}) now becomes: 

\begin{eqnarray}
\fung t & = & ae^{j2\pi(f_{c}+\fdelta)t}\label{eq:ch3:g(t):case2}\\
 & = & \psi(t)e^{j2\pi\fdelta t}\nonumber \\
 & \approx & \psi(t)e^{j2\pi\itime\fset},\ \frac{t}{\Tsample}\in[\itime-1,\itime)\nonumber 
\end{eqnarray}
in which the approximation in (\ref{eq:ch3:g(t):case2}) is valid
if $\fdelta$ is sufficiently smaller than symbol rate $\fsampling$,
or equivalently $\fset\ll1$. If this assumption is valid, receiver
can feasibly operate in steady-state condition, i.e. symbol timing
is synchronized first before $\fset$ is estimated {[}\citet{ch3:bk:Freq:Mengali97}{]}.
Applying the matched filter (i.e. correlator) at nominal carrier frequency
 $\fcarrier$ like above, the discrete complex data (\ref{eq:ch3:x_k:case1})
now becomes:

\begin{eqnarray}
x_{\itime} & = & ae^{j2\pi\itime\fset}+w_{\itime}\label{eq:ch3:x_k:case2}\\
 & = & ae^{j\Oset\itime}+w_{\itime},\ \itime\in\{1,\ldots,n\}\nonumber 
\end{eqnarray}
where $w_{\itime}$ is discrete AWGN sequence with variance $\sigma^{2}=N_{0}/2$
and $\Oset$ is digital offset frequency, with $\Oset\TRIANGLEQ2\pi\fset\in[-\pi,\pi)$.
The observation model can be written as follows:

\begin{equation}
f(x_{\itime}|a,\Oset)=\calCN_{x_{\itime}}(ae^{j\Oset\itime},\sigma^{2}),\ \itime\in\{1,\ldots,n\}\label{eq:ch3:f(x|s):case2}
\end{equation}

In the literature, the common concern is to estimate $\Oset$, while
amplitude $\acarrier$ is regarded as a nuisance parameter. Despite
being classical, single frequency estimation is still an interesting
issue in practice, as revised by {[}\citet{ch3:art:Freq:fast06,ch3:art:Freq:esprit13}{]}.
The important challenge is a trade-off between computational complexity
and estimation performance, particularly in the case of low signal-to-noise
ratio ($SNR$) {[}\citet{ch3:art:Freq:fast06}{]}. For this issue,
the periodogram, phase increment and auto-correlation are currently
three most common techniques in practice {[}\citet{ch3:PhD:Freq:pilot09}{]}
and will be briefly reviewed below.

\subsection{Single frequency estimation via Periodogram}

For a batch of data $\xbold_{\ndata}$, the classical Maximum Likelihood
(ML) estimator for frequency $\Oset$ is equivalent to the maximum
of periodogram {[}\citet{ch3:origin:Freq:ML74}{]}:

\begin{equation}
\widehat{\Oset}=\arg\max_{\Oset}\left|\sum_{\itime=1}^{\ndata}x_{\itime}e^{-j\Oset(\itime-1)}\right|^{2}\label{eq:ch3:Freq:Periodogram}
\end{equation}

In practice, the value of periodogram at DFT bins $\Oset=\frac{2\pi}{\ndata}m$
with $m\in\{0,\ldots,\ndata-1\}$ can be computed via Discrete Fourier
Transform (DFT) {[}\citet{ch3:PhD:Freq:pilot09}{]}. When $SNR=\acarrier^{2}/2\sigma^{2}$
is sufficiently high, the Mean Square Error (MSE) of ML estimator
$\widehat{\Oset}$ approaches the Cramér-Rao bound (CRB) for frequency
estimators {[}\citet{ch3:origin:Freq:ML74}{]}. However, when $SNR$
is below a certain threshold, the MSE of $\widehat{\Oset}$ rapidly
increases {[}\citet{ch3:art:Freq:iterative02}{]}. Another drawback
of periodogram is a high computational cost, even with sub-linear
complexity $\calO(\ndata\log\ndata)$ of FFT algorithm. {[}\citet{ch3:art:Freq:iterative02}{]}.
Hence, many sub-optimal estimators have been proposed to reduce the
computational complexity {[}\citet{ch3:art:Freq:fast06}{]}.

\subsection{Single frequency estimation via phase increment}

In order to avoid the high complexity in non-linear estimator in (\ref{eq:ch3:Freq:Periodogram}),
the noisy model for $x_{\itime}$ in (\ref{eq:ch3:x_k:case2}) can
be approximated by a noisy linear form when $SNR$ is sufficiently
high, as follows {[}\citet{ch3:origin:Freq:Tretter85}{]}:

\[
x_{\itime}\approx\acarrier e^{j\phix_{\itime}}=\acarrier e^{j(\Oset\itime+\eta_{\itime})},\ \itime\in\{1,\ldots,n\}
\]

Hence, the observed data in this case is: 
\begin{equation}
\phix_{\itime}=\Oset\itime+\eta_{\itime},\ \itime\in\{1,\ldots,n\}\label{eq:ch3:Freq:Tretter}
\end{equation}
where $\eta_{\itime}$ is discrete AWGN sequence with variance $\sigma^{2}/2\acarrier^{2}$,
i.e. $var(\eta_{\itime})=0.5/SNR$ {[}\citet{ch3:origin:Freq:Tretter85,ch3:origin:Freq:Kay89}{]}.
In order to avoid phase unwrapping in (\ref{eq:ch3:Freq:Tretter}),
Kay's method {[}\citet{ch3:origin:Freq:Kay89}{]} considered the differenced
phase data $\Delta\phix_{\itime}\TRIANGLEQ\phix_{\itime}-\phix_{\itime-1}=\angle x_{\itime}-\angle x_{\itime-1}$,
as follows:

\begin{eqnarray}
\Delta\phix_{\itime} & = & \arg\{x_{\itime}x_{\itime-1}^{*}\}\nonumber \\
 & = & \Oset+\Delta\phasenoise_{\itime},\ \itime\in\{2,\ldots,n\}\label{eq:ch3:Freq:Kay}
\end{eqnarray}
where $\Delta\phasenoise_{\itime}\TRIANGLEQ\phasenoise_{\itime}-\phasenoise_{\itime-1}$.
Hence, the phase increment method has replaced the original non-linear
model $f(x_{\itime}|a,\Oset)$ (\ref{eq:ch3:f(x|s):case2}) with its
approximated linear phase model $f(\Delta\phix_{\itime}|\Oset)$ in
(\ref{eq:ch3:Freq:Kay}). The ML estimator $\widehat{\Oset}$ for
$f(\Delta\phix_{\itime}|\Oset)$ is {[}\citet{ch3:origin:Freq:Kay89,ch3:bk:Freq:Mengali97}{]}:

\begin{equation}
\widehat{\Oset}=\sum_{\itime=2}^{\ndata}\lambda_{\itime}\arg\{x_{\itime}x_{\itime-1}^{*}\}\label{eq:ch3:Freq:Kay:ML}
\end{equation}
where $\lambda_{\itime}\TRIANGLEQ\frac{3}{2}\frac{\ndata}{\ndata^{2}-1}\left[1-\left(\frac{2\itime-\ndata}{\ndata}\right)^{2}\right]$
. Because the Kay's estimator (\ref{eq:ch3:Freq:Kay:ML}) is a weighted
average of phase increment, its complexity is low, with merely $\calO(\ndata)$
of complex multiplication. However, its main drawback is the moderate
performance, owing to linear approximation (\ref{eq:ch3:Freq:Tretter})
. Although Kay's estimator is unbiased and approaches Modified CRB
{[}\citet{ch3:bk:Freq:Mengali97}{]}, the $SNR$ value for low error
is high {[}\citet{ch3:PhD:Freq:pilot09}{]}. Many variants have been
proposed to improve the performance, while maintaining the low computational
cost (see for instance {[}\citet{ch3:art:Freq:iterative02,ch3:art:Freq:fast06}{]}). 

\subsection{Single frequency estimation via auto-correlation}

From (\ref{eq:ch3:Freq:Kay:ML}), we can see that the phase increment
has exploited the underlying rotational invariance of two temporally
displaced cisoidal signals {[}\citet{ch2:bk:Proakis:DSP06}{]}. Such
a property can also be exploited via discrete autocorrelation function
of $x_{\itime}$, defined as follows:

\begin{equation}
R[m]\TRIANGLEQ\frac{1}{\ndata-m}\sum_{\itime=m+1}^{\ndata}x_{\itime}x_{\itime-m}^{*},\ \seti m{\ndata-1}\label{eq:ch2:R=00005Bm=00005D:case2}
\end{equation}

Substituting $x_{\itime}$ (\ref{eq:ch3:x_k:case2}) into (\ref{eq:ch2:R=00005Bm=00005D:case2}),
we have {[}\citet{ch3:bk:Freq:Mengali97}{]}:

\begin{equation}
R[m]=e^{j\Oset m}+\vartheta_{m},\ \seti m{\ndata-1}\label{eq:ch3:Freq:R(m)}
\end{equation}
where $\vartheta_{m}$ is a zero-mean random noise. For efficiently
estimating $\Oset$ in (\ref{eq:ch3:Freq:R(m)}), because of lacking
noise model, Fitz method {[}\citet{ch3:origin:Freq:Fitz94}{]} considers
the time average error for the phase of $R[m]$, as follows:

\begin{equation}
\frac{1}{L}\sum_{m=1}^{L}\left(\arg\{R[m]\}-\arg\{e^{j\Oset m}\}\right)=\frac{1}{L}\sum_{m=1}^{L}\epsilon_{m}\approx0,\ 1<L<\ndata\label{eq:ch3:Freq:Fitz(0)}
\end{equation}
where the error $\epsilon_{m}$ is very small if $SNR$ is high and
the range $L<\frac{\pi}{\left|\Oset^{(\max)}\right|}$ can be properly
chosen via maximum uncertainty range $\pm\Oset^{(\max)}$ of $\Oset$
{[}\citet{ch3:bk:Freq:Mengali97}{]}. From linear equation (\ref{eq:ch3:Freq:Fitz(0)}),
we can compute the Fitz's estimator $\widehat{\Oset}$ feasibly:

\[
\widehat{\Oset}=\frac{2}{\ndata(\ndata-1)}\sum_{m=1}^{\ndata-1}\arg\{R[m]\}
\]

The Fitz's estimator is unbiased in the range $\pm\frac{\pi}{\ndata}$
of $\Oset$ and its MSE achieves the Modified CRB at $L=\frac{\ndata}{2}$.
When the range $L$ decreases, the computational load is lighter but
the accuracy also degrades {[}\citet{ch3:bk:Freq:Mengali97}{]}. Hence,
there is a trade-off between complexity and performance again. Some
improved versions of Fitz method can be found in {[}\citet{ch3:art:Freq:FitzVariant95,ch3:art:Freq:FitzVariant97}{]}.

\section{Symbol detection in the fading channel \label{sec:ch3:Fading-channel}}

As reviewed in Section \ref{subsec:chap2:Fading-channel}, in fading
channel, the transmitted signal is reflected from surroundings (e.g.
buildings, vehicles, etc.) and duplicated into multiple copies before
reaching mobile receiver. Because of multi-path environment, the received
signal is superposition of those copies, coming from different path
with various angles. In this thesis, the temporally fading effect,
which arises owing to motion of mobile receiver, of received signal
will be considered.

In flat fading channel (FFC), the multi-path delay spread $T_{d}$,
which is the maximum of difference $\tdelay$ in delay time of all
paths, is assumed to be small compared to symbol period $\Tsampling$.
Note that, because the coherence bandwidth $1/T_{d}$ in flat fading
is therefore larger than signal bandwidth $1/\Tsampling$, all frequency
components of signal will suffer the same magnitude of fading (hence
the name ``flat''). In this case, because the signal pulse is not
much affected by delay time on any $\ipath$th path, i.e. $\pulse\left(t-\itime\Tsampling-\tdelay_{\ipath}\right)\approx\pulse\left(t-\itime\Tsampling\right)$
{[}\citet{ch3:bk:Fading:cavers00}{]}, the channel waveform $\fung t$
of received signal in (\ref{eq:ch3:x(t)}) can be derived via superposition
principle, as follows:

\begin{eqnarray}
g(t) & = & \sum_{\ipath=1}^{\npath}ae^{j2\pi f_{c}(t-\tdelay_{\ipath})}\nonumber \\
 & = & \psi(t)\hdelay(t)\label{eq:ch3:g(t):case3}
\end{eqnarray}
in which the fading gain $\hdelay(t)$ of all $\npath$ arriving paths
is defined as:

\begin{eqnarray}
h(t) & = & \sum_{\ipath=1}^{\npath}e^{-j2\pi f_{c}\tdelay_{\ipath}}\nonumber \\
 & = & \sum_{\ipath=1}^{\npath}e^{j\phidelay_{\ipath}(t)}\label{eq:ch3:h(t)}
\end{eqnarray}

\begin{figure}
\begin{centering}
\includegraphics[width=0.5\columnwidth]{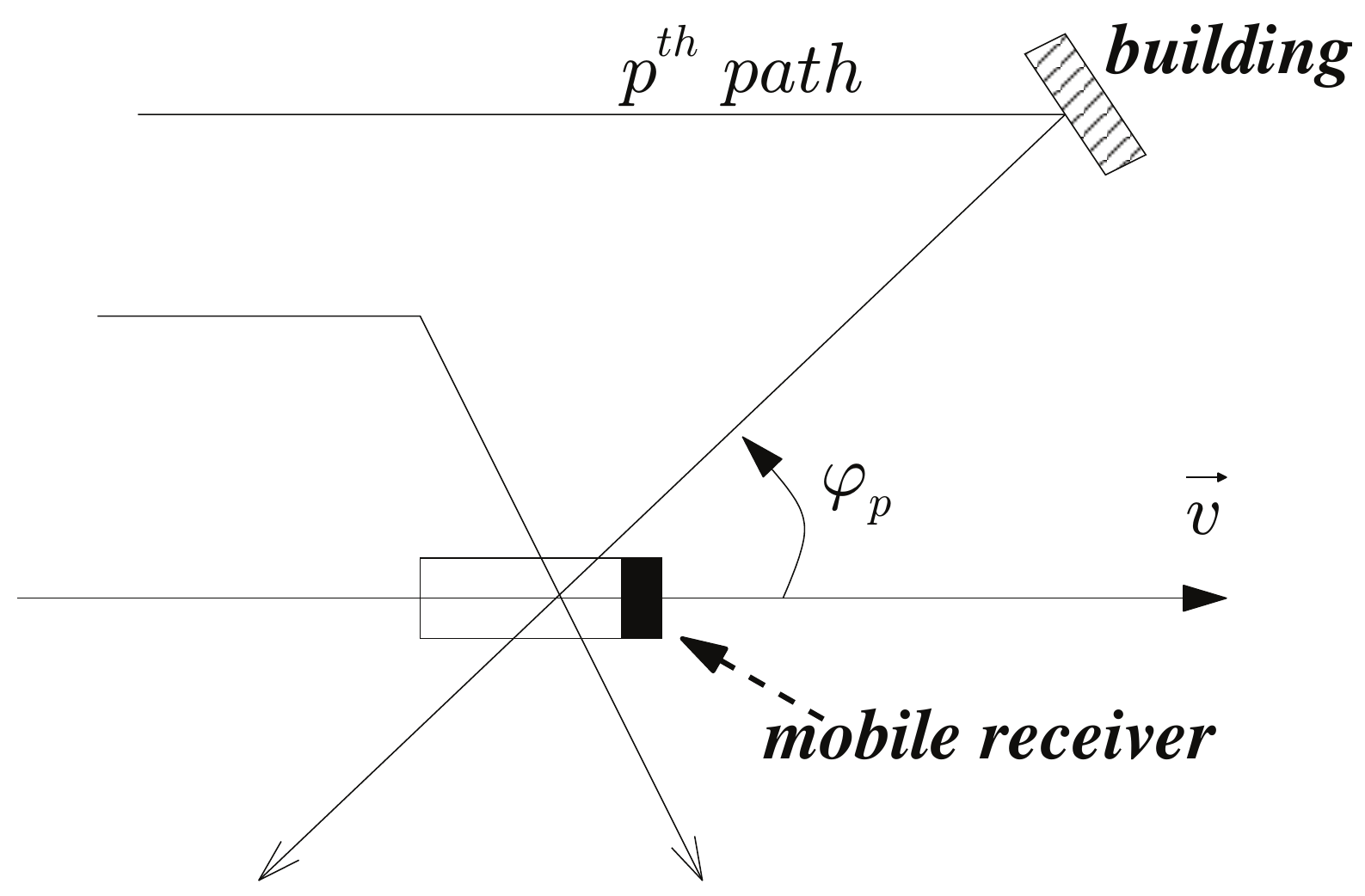}
\par\end{centering}
\caption{\label{fig:chap3:fading_path}Scattered transmission paths and receiver's
travelling velocity $\protect\overrightarrow{v}$}
\end{figure}
The computation for delay phase $\phidelay_{\ipath}(t)$ in (\ref{eq:ch3:h(t)})
can be carried out as follows: Denoting $v$ as velocity of mobile
receiver and $\phasepath_{\ipath}$ as arriving angle of $\ipath$th
path to receiver's moving direction {[}\citet{ch2:origin:Fading:Clark68}{]},
as illustrated in Fig. \ref{fig:chap3:fading_path}, the delay time
$\tdelay_{\ipath}$ caused by each path is $\tdelay_{\ipath}=\frac{\Delta x_{\ipath}}{c}=\frac{-(v\cos\phasepath_{\ipath})t}{c}$,
where $c$ is light speed and $\Delta x_{\ipath}=-(v\cos\phasepath_{\ipath})t$
is the path length change owing to receiver's motion {[}\citet{ch3:bk:Fading:cavers00}{]}.
Then, we have:

\begin{eqnarray}
\phidelay_{\ipath}(t) & = & -2\pi f_{c}\tdelay_{\ipath}=-2\pi f_{c}\frac{\Delta x_{\ipath}}{c}\nonumber \\
 & = & 2\pi f_{c}\frac{(v\cos\phasepath_{\ipath})t}{c}\nonumber \\
 & = & 2\pi(\fDoppler\mbox{\ensuremath{\cos\phasepath_{\ipath}})t},\ \seti{\ipath}{\npath}\label{eq:ch3:phi(t)}
\end{eqnarray}
in which the carrier Doppler shift for each path is $\fDoppler\mbox{\ensuremath{\cos\phasepath_{\ipath}}}=$$\frac{\fcarrier}{c}(v\cos\phasepath_{\ipath})$
and the frequency $\fDoppler$ is often called maximum Doppler shift,
since the maximum value $\cos\phasepath_{\ipath}=1$ is reached at
$\phasepath_{\ipath}=0$.

In practical communication system, since the extremely-fast FFC is
uncommon {[}\citet{ch3:ART:FadingMarkov:tutorial08}{]}, it is safe
to assume that complex fading gain will stay constant during any symbol
period, i.e. $\hdelay(t)\approx\hdelay_{\itime}\TRIANGLEQ\hdelay(\itime\Tsampling)$,
in which $\frac{t}{\Tsample}\in[\itime-1,\itime)$ and $\hdelay_{\itime}$
is the constant fading gain during $\itime$th symbol period. Then,
the channel waveform $\fung t$ in (\ref{eq:ch3:g(t):case3}) now
becomes: 
\begin{equation}
g(t)\approx\psi(t)\hdelay_{\itime},\ \frac{t}{\Tsample}\in[\itime-1,\itime)\label{eq:ch3:g=00005Bk=00005D:case3}
\end{equation}

Applying the matched filter at carrier frequency $\fcarrier$, the
discrete received data in this case are:

\begin{eqnarray*}
x_{\itime} & = & \hdelay_{\itime}s_{\itime}+\wgn_{\itime}\\
 & = & \left\Vert \hdelay_{\itime}\right\Vert e^{j\phidelay_{\itime}}s_{\itime}+\wgn_{\itime},\ \itime\in\{1,\ldots,n\}
\end{eqnarray*}

Because the fading phase $\phidelay_{\itime}$ can be separately estimated
along with channel synchronization, amplitude-only fading channel
is a popular choice for studying fading phenomenon in the literature
{[}\citet{ch3:ART:FadingMarkov:tutorial08}{]}. Then, for the amplitude-only
fading scenario, which is our interest in this thesis, we have:

\begin{equation}
x_{\itime}=\left\Vert \hdelay_{\itime}\right\Vert s_{\itime}+\wgn_{\itime},\ \itime\in\{1,\ldots,n\}\label{eq:ch3:receiver_case3}
\end{equation}

\subsection{Stationary Gaussian process in fading channel}

In practice, the sequence of arriving angle $\phasepath_{\ipath}$
can be modeled as uniformly iid random variable over the range $[-\pi,\pi)$
{[}\citet{ch3:ART:FadingMarkov:tutorial08}{]}. Therefore, the delay
phase $\phidelay_{\ipath}(t)$ in (\ref{eq:ch3:phi(t)}) is also distributed
uniformly over $[-\pi,\pi)$ at anytime $t$ {[}\citet{ch3:BK:Fading:Stuber11}{]}.
If the number $\npath$ of scattering paths is sufficiently large,
then owing to central limit theorem (CLT), the in-phase and quadrature
components of $h(t)=h_{I}(t)+jh_{Q}(t)$ in (\ref{eq:ch3:h(t)}) are,
respectively:

\begin{eqnarray}
h_{I}(t) & = & \sum_{\ipath=1}^{\npath}\cos\phidelay_{\ipath}(t)\label{eq:ch3:h_I}\\
h_{Q}(t) & = & \sum_{\ipath=1}^{\npath}\sin\phidelay_{\ipath}(t)\nonumber 
\end{eqnarray}
which can be approximated as independent Gaussian random variables
$\calN(0,\power/2)$, where $\power/2$ is called fading power per
dimension {[}\citet{ch3:BK:Fading:Stuber11}{]}. The average fading
gain $h(t)$ is a complex WSS Gaussian process, whose autocorrelation
function (ACF) is {[}\citet{ch3:BK:Fading:Stuber11}{]}:

\begin{eqnarray}
\ACF_{\hdelay}(\tau) & = & \ACF_{\hdelay_{I}}(\tau)+\ACF_{\hdelay_{Q}}(\tau)\nonumber \\
 & = & \power J_{0}(2\pi\fDoppler\tau)\label{eq:ch3:ACF(t)}
\end{eqnarray}
where $\ACF_{\hdelay_{I}}(\tau)=\ACF_{\hdelay_{I}}(\tau)=\frac{\power}{2}J_{0}(2\pi\fDoppler\tau)$
and $J_{0}(\cdot)$ is zero-order Bessel function of the first kind.
The result (\ref{eq:ch3:ACF(t)}) for Rayleigh fading channel is well
known in the literature {[}\citet{ch2:origin:Fading:Clark68,ch3:bk:Fading:cavers00,ch3:BK:Fading:Stuber11}{]}.
However, for a quick verification, let us derive (\ref{eq:ch3:ACF(t)})
briefly here. Substitute (\ref{eq:ch3:phi(t)}) to (\ref{eq:ch3:h_I}),
we have {[}\citet{ch3:BK:Fading:Stuber11}{]}:

\begin{eqnarray*}
\ACF_{\hdelay_{I}}(\tau) & = & E[h_{I}(t)h_{I}(t+\tau)]\\
 & = & \frac{\power}{2}E_{\phasepath_{\ipath}}[\cos\left(2\pi(\fDoppler\mbox{\ensuremath{\cos\phasepath_{\ipath}})}\tau\right)]
\end{eqnarray*}
owing to the central limit theorem (CLT). Then, by definition of zero-order
Bessel function of the first kind, $J_{0}(x)\TRIANGLEQ\frac{1}{\pi}\int_{0}^{\pi}\cos(x\sin(\phasepath))d\phasepath$,
we have:

\begin{eqnarray*}
\ACF_{\hdelay_{I}}(\tau) & = & \frac{\power}{2}\int_{-\pi}^{\pi}\cos\left(2\pi(\fDoppler\mbox{\ensuremath{\cos\phasepath_{\ipath}})}\tau\right)f(\phasepath_{\ipath})d\phasepath_{\ipath}\\
 & = & \frac{\power}{2}\frac{1}{\pi}\int_{0}^{\pi}\cos\left(2\pi(\fDoppler\mbox{\ensuremath{\sin\phasepath_{\ipath}})}\tau\right)d\phasepath_{\ipath}\\
 & = & \frac{\power}{2}J_{0}(2\pi\fDoppler\tau)
\end{eqnarray*}
where $f(\phasepath_{\ipath})=\frac{1}{2\pi}$ for $\phasepath_{\ipath}\in(-\pi,\pi]$
and zero orthewise. The computation for $\ACF_{\hdelay_{Q}}(\tau)$
is similar to $\ACF_{\hdelay_{I}}(\tau)$, therefore we have: 
\begin{equation}
\ACF_{\hdelay_{Q}}(\tau)=\ACF_{\hdelay_{I}}(\tau)=\frac{\power}{2}J_{0}(2\pi\fDoppler\tau)\label{eq:ch3:Rh_Q}
\end{equation}
The independence between $h_{I}(t)$ and $h_{Q}(t)$ can be verified
by first evaluating the cross-correlation function, as follows {[}\citet{ch3:BK:Fading:Stuber11}{]}:

\begin{eqnarray*}
\ACF_{\hdelay_{I}\hdelay_{Q}}(\tau) & = & E[h_{I}(t)h_{Q}(t+\tau)]\\
 & = & \frac{\power}{2}E_{\phasepath_{\ipath}}[\sin\left(2\pi(\fDoppler\mbox{\ensuremath{\cos\phasepath_{\ipath}})}\tau\right)]\\
 & = & 0
\end{eqnarray*}
owing to the central limit theorem (CLT). Since $h_{I}(t)$ and $h_{Q}(t)$
are jointly Gaussian processes, the uncorrelation (orthogonal) result
$\ACF_{\hdelay_{I}\hdelay_{Q}}(\tau)=0$ is sufficient for indicating
the independence of $h_{I}(t)$ and $h_{Q}(t)$, i.e.: 
\begin{equation}
\ACF_{\hdelay}(\tau)=\ACF_{\hdelay_{I}}(\tau)+\ACF_{\hdelay_{Q}}(\tau)\label{eq:ch3:Rh}
\end{equation}
which yields (\ref{eq:ch3:ACF(t)}). In the literature, this model
(\ref{eq:ch3:ACF(t)}) is often called Clarke's model for fading channel,
which has been widely applied in practice {[}\citet{ch3:ART:FadingMarkov:tutorial08}{]}
since its first appearance in {[}\citet{ch2:origin:Fading:Clark68}{]}. 

Because fading gain $h(t)$ is a wide-sense stationary (WSS) process,
it is possible to construct its Wold representation by infinite-order
AR process, as explained in Section \ref{subsec:chap2:Stationary-process}.
A finite-order AR model does not, however, fit the above Clarke's
model, since PSD of fading gain, as computed from (\ref{eq:ch3:ACF(t)}),
is not a rational function of frequency {[}\citet{ch3:ART:FadingMarkov:tutorial08,ch3:BK:Fading:Stuber11}{]}.
Recently, a simulation of approximate Clarke's model via high order
AR model (with memory up to 1000) was considered in {[}\citet{ch3:art:Fading:AR05}{]}.
Despite small simulation error, this AR approach is prohibitive in
the fading channel {[}\citet{ch3:ART:FadingMarkov:tutorial08}{]}

\subsection{Rayleigh process in fading channel \label{subsec:chap3:Rayleigh-process}}

For complex stationary Gaussian process in polar form, i.e. $h(t)=\left\Vert h(t)\right\Vert e^{j\phi(t)}$,
it is also well known that the squared magnitude $\gamma\TRIANGLEQ\left\Vert h(t)\right\Vert ^{2}$
is a random variable with $\chi^{2}$- distribution of two degree
of freedoms, as follows {[}\citet{ch3:bk:Fading:cavers00,ch3:BK:Fading:Stuber11}{]}:

\begin{equation}
f\left(\gamma\right)=\chi_{\gamma}^{2}(2)=\frac{1}{P_{0}}\exp\left(-\frac{\gamma}{P_{0}}\right)\label{eq:ch3:Chi-squared}
\end{equation}
in which the mean of $\gamma$ is $E[\gamma]=E\left[\left\Vert h(t)\right\Vert ^{2}\right]=P_{0}$,
as in (\ref{eq:ch3:ACF(t)}). Similarly, because the square-root of
$\chi^{2}-$random variable in this case is a Rayleigh random variable,
the distribution of magnitude $\left\Vert h(t)\right\Vert =\sqrt{\gamma}$
can be written as follows {[}\citet{ch3:bk:Fading:cavers00,ch3:BK:Fading:Stuber11}{]}:

\begin{equation}
f\left(\left\Vert h(t)\right\Vert \right)=\frac{\left\Vert h(t)\right\Vert }{\sigma_{0}^{2}}\exp\left(-\frac{\left\Vert h(t)\right\Vert ^{2}}{2\sigma_{0}^{2}}\right)\label{eq:ch3:Rayleigh}
\end{equation}
in which the Rayleigh squared-average is $E\left[\left\Vert h(t)\right\Vert ^{2}\right]=2\sigma_{0}^{2}$.
From (\ref{eq:ch3:Chi-squared}), we have $\sigma_{0}^{2}=E[\gamma]/2=P_{0}/2$.
Note that, because $J_{0}(0)=1$, the power average $E\left[\left\Vert h(t)\right\Vert ^{2}\right]$
can also be computed by setting $\tau=0$ in autocorrelation function
in (\ref{eq:ch3:ACF(t)}), i.e. $\sigma_{0}^{2}=\ACF_{\hdelay_{I}}(0)=\power/2$. 

Owing to standard form (\ref{eq:ch3:Rayleigh}), the dominant approach
in fading channel estimation is to quantize the Rayleigh distribution
$f\left(\left\Vert h(t)\right\Vert \right)$ in (\ref{eq:ch3:Rayleigh})
into finite cells and approximate $f\left(\left\Vert h(t)\right\Vert \right)$
via probability mass function {[}\citet{ch3:ART:FadingMarkov:tutorial08}{]}.
For representing the correlated Fading process, a finite-state Markov
chain (FSMC) is widely exploited, as reviewed in Section \ref{subsec:chap2:Fading-channel}.
The time-invariant transition matrix of that FSMC can be designed
via quantization of jointly Rayleigh variables, as presented in Appendix
\ref{App:chap:Quantization}. 

\section{Summary}

For \foreignlanguage{british}{modelling} digital receivers, three
fundamental system models were presented in this chapter. 

In the first case, the synchronized scheme, the matched filter was
shown to be an orthonormal correlator and, hence, preserves the sufficient
statistics in the data in the case of AWGN channel.

In the second case, the un-synchronized scheme, three state-of-the-art
(non-Bayesian) techniques for frequency-offset estimation were reviewed. 

In the third case, the synchronized fading scheme, the derivation
of the Rayleigh fading process for the amplitude of the received signal
- a derivation based solely on the original Gaussian process assumption
- was also presented briefly, providing us with the insight that it
is actually the square-root of Chi-square process. 

These models will be used, later in Chapters \ref{=00005BChapter 7=00005D}
and \ref{=00005BChapter 8=00005D}, in order to evaluate the performance
of novel inference methods in this thesis.


\global\long\def\xbold{\mathbf{x}}%

\global\long\def\btheta{\mathbf{\boldsymbol{\theta}}}%

\global\long\def\xdata{x}%

\global\long\def\vtheta{\theta}%

\global\long\def\htheta{\widehat{\theta}}%

\global\long\def\vxi{\xi}%

\global\long\def\vpsi{\psi}%

\global\long\def\veta{\eta}%

\global\long\def\REAL{\mathbb{R}}%

\global\long\def\calTheta{\Theta}%

\global\long\def\spaceO{\Theta}%

\global\long\def\calX{\mathcal{X}}%

\global\long\def\calA{\mathcal{A}}%

\global\long\def\calE{\mathcal{E}}%

\global\long\def\calF{\mathcal{F}}%

\global\long\def\calN{\mathcal{N}}%

\global\long\def\calI{\mathcal{I}}%

\global\long\def\ndata{n}%

\global\long\def\nstate{m}%

\global\long\def\itime{i}%

\global\long\def\istate{k}%

\global\long\def\funh#1{h\left(#1\right)}%

\global\long\def\fung#1{g\left(#1\right)}%

\global\long\def\seti#1#2{#1\in\{1,2,\ldots,#2\}}%

\global\long\def\setd#1#2{\{#1{}_{1},#1{}_{2},\ldots,#1_{#2}\}}%

\global\long\def\TRIANGLEQ{\triangleq}%


\global\long\def\ftilde{\tilde{f}}%

\global\long\def\fdelta{f_{\delta}}%

\global\long\def\KLD{KLD}%

\global\long\def\kldff{KLD_{\ftilde||f}}%

\global\long\def\Loss{L}%

\global\long\def\funL#1#2{\Loss\left(#1,#2\right)}%

\global\long\def\etheta#1{\htheta\left(#1\right)}%

\global\long\def\vL{\mathrm{L}}%

\global\long\def\vH{\mathrm{H}}%

\global\long\def\vutility{u}%

\global\long\def\vaction{a}%

\global\long\def\vmoment{m}%

\global\long\def\vphi{\phidelay}%

\global\long\def\vlambda{\lambda}%

\global\long\def\vy{y}%

\global\long\def\iVB{\nu}%

\global\long\def\sufficient{\mathcal{\tau}}%

\global\long\def\Fisher#1{I\left(#1\right)}%

\global\long\def\joint{f(\xbold,\vtheta)}%

\global\long\def\obs{f(\xbold|\vtheta)}%

\global\long\def\prior{f(\vtheta)}%

\global\long\def\posterior{f(\vtheta|\xbold)}%

\global\long\def\funftilde#1{\ftilde(#1|\xbold)}%

\global\long\def\zbinary{\vtheta_{\itime},\vtheta_{\backslash\itime}}%

\global\long\def\calL{\mathcal{L}}%

\global\long\def\ELoss{\bar{\calL}}%

\global\long\def\mrtheta{\widehat{\vtheta}_{MR}}%

\global\long\def\ztheta{\vtheta_{0}}%

\global\long\def\zEntropy{\calE_{\prior}}%

\chapter{Bayesian parametric modelling\label{=00005BChapter 4=00005D}}

The purpose of this chapter is to show that Bayesian inference method
is an effective tool for system modelling in telecommunication contexts
of interest in this thesis. Since efficient computation is a major
concern in mobile receivers, tractable Bayesian methods are primary
concern in this chapter. Moreover, for the sake of clarity, the general
form of Bayesian inference will be considered here, without any constraint
on model design, while specific models of interest will be studied
in later parts of this thesis.

\section{Bayes' rule}

The aim of parametric inference is to infer some information about
unknown quantity $\vtheta\in\spaceO$, given observed data $\xbold$.
For that purpose, the probabilistic solution is to impose a joint
distribution $f(\xbold,\vtheta)$ on both $\xbold$ and $\vtheta$.
By probabilistic chain rule, $f(\xbold,\vtheta)$ can be factorized
into two equivalent ways, as follows:

\begin{eqnarray}
f(\vtheta|\xbold)f(\xbold) & = & f(\xbold,\vtheta)\nonumber \\
 & = & f(\xbold|\vtheta)f(\vtheta)\label{eq:ch4:JOINT}
\end{eqnarray}

Hence, any information that data $\xbold$ can provide us about $\vtheta$
must be contained within the posterior distribution $f(\vtheta|\xbold)$.
Because the value $\xbold$ is known, the data inference $f(\xbold)=\int f(\xbold,\vtheta)d\vtheta$,
also known as predictive inference or occasionally as the evidence
{[}\citet{ch2:BK:Bernardo:Bayes94}{]}, is regarded as normalizing
constant for posterior distribution $f(\vtheta|\xbold)$, as follows:

\begin{eqnarray}
f(\vtheta|\xbold) & = & \frac{f(\xbold|\vtheta)f(\vtheta)}{f(\xbold)}\nonumber \\
 & \propto & f(\xbold|\vtheta)f(\vtheta)\label{eq:ch4:BAYES's rule}
\end{eqnarray}
where $\propto$ denotes normalizing operator, i.e. the right hand
side of $\propto$ is normalized to be sum-to-one over $\vtheta$.
The well-known formula (\ref{eq:ch4:BAYES's rule}) is called Bayes'
rule in the literature. Important texts on Bayes' theory and calculus
are found in {[}\citet{ch2:BK:Bernardo:Bayes94,ch4:bk:Bayes:bible:Jaynes,ch4:BK:Bible:Robert07}{]}.

\section{Subjectivity versus Objectivity}

Despite simplicity, Bayes' rule raises a philosophical issue on subjectivity
of probability, which is perhaps the most critical issue in Bayesian
inference {[}\citet{ch4:BK:Bible:Robert07}{]}. The most popular criticism
is to regard the prior and posterior distributions objectively, i.e.
as a measure of a repeatable or generable quantity, rather than ``quantification
of belief'' about $\vtheta$ in Bayesian philosophy {[}\citet{ch2:BK:Bernardo:Bayes94}{]}. 

Putting Bayesian interpretation aside, it can be seen that both prior
and posterior models are simply consequences of joint model $f(\xbold,\vtheta)$
design (\ref{eq:ch4:JOINT}), which is subjective \textit{per se}.
Box's famous comment that ``all models are wrong, but some are useful''
{[}\citet{ch4:origin:Models:Box79a}{]} was also stated as ``models
are never true, but it is only necessary that they are useful'' {[}\citet{ch4:origin:Models:Box79b}{]}.
The usefulness is, therefore, a necessary and subjective criterion
for model design. Hence, in order to avoid the philosophical ambiguity
between objectivity and subjectivity of probability (see e.g. {[}\citet{ch4:art:Bayes:Lindley00}{]}
and discussions therein), in this thesis, the imposed joint model
$f(\xbold,\vtheta)$ is considered as subjective belief, while the
prior and posteriors are considered as deductive consequences of the
subjective modelling. The justification is the following:

- Regarding subjectivity: In deterministic model, the purpose of parametric
inference is to return the optimal estimate $\htheta$ such that a
criterion like loss function $\calL\TRIANGLEQ\Loss(\xbold,\vtheta)$
can be minimized at $\htheta$. In probability context, the distribution
$f(\calL)$ is, therefore, a transformation of $f(\xbold,\vtheta)$.
Since joint model $\joint$, regardless of repeatability or unrepeatability,
is fundamentally imposed by our belief on the system, it is obviously
subjective and need to be useful under criterion $\Loss(\xbold,\vtheta)$. 

- Regarding objectivity: Note that chain rule of factorization in
(\ref{eq:ch4:JOINT}) follows the Axioms of Probability {[}\citet{ch2:BK:Bernardo:Bayes94}{]},
i.e. the computation of prior and posterior would never change the
originally set-up joint model $f(\xbold,\vtheta)$. Hence, the term
``Bayesian inference'' in this thesis simply refers to computation
of posterior distribution in (\ref{eq:ch4:BAYES's rule}), rather
than its Bayesian subjectivity meaning. Bayesian inference is, at
least in this thesis, a mathematical technique to solve the above
subjective problem. 

\section{Bayesian estimation as a Decision-Theoretic task}

Owing to Bayes' rule (\ref{eq:ch4:BAYES's rule}), the data information
about $\vtheta$ is all contained in posterior distribution $\posterior$.
In practice, however, the desired output is often a single point estimation
or decision $\vxi\TRIANGLEQ\etheta{\xbold}:\calX\rightarrow\spaceO$
of $\vtheta\in\spaceO$. In this section, Bayesian criteria for optimizing
value $\vxi$ will be reviewed.

\subsection{Utility and loss function \label{subsec:chap4:Utility-and-loss}}

The utility and loss definition was often expressed in different forms
in different fields. In this subsection, let us review some of these
forms, before applying the approach to Bayesian estimator in the consequence.

In decision theory, a chosen action $\vaction(\vtheta):\spaceO\rightarrow\calA$,
defined as a function of event $\vtheta$, is justified by a gain
or benefit of that action. The measure of that benefit is called utility,
denoted $\vutility(\vaction(\vtheta))$, whose axiomatization is firstly
presented in {[}\citet{ch4:origin:unility:vonNeumann44}{]}. 

In statistical decision, however, the action $\vaction$ is typically
chosen via a loss, i.e. negative utility, firstly formalized by Wald
{[}\citet{ch4:origin:loss:Wald49}{]}:

\begin{equation}
\Loss_{\vutility}(\vaction,\vtheta)=-\vutility(\vaction(\vtheta))\label{eq:ch4:loss-utility}
\end{equation}
where $\Loss_{\vutility}(\vaction,\vtheta):\calA\times\spaceO\rightarrow\REAL$.
The key role of definition (\ref{eq:ch4:loss-utility}) is that the
action $\vaction$ is now regarded as independent of parameter event
$\vtheta$ in $L_{\vutility}(\vaction,\vtheta)$. Moreover, in order
to guarantee a non-negative loss action, the loss definition in (\ref{eq:ch4:loss-utility})
is further constrained into a (regret) loss function, as follows {[}\citet{ch4:BK:DecisionTheory:09}{]}: 

\begin{equation}
\funL{\vaction}{\vtheta}\TRIANGLEQ\Loss_{\vutility}(\vaction,\vtheta)-\inf_{\vaction\in\calA}\Loss_{\vutility}(\vaction,\vtheta)\label{eq:ch4:regret_loss}
\end{equation}
where $\funL{\vaction}{\vtheta}:\calA\times\spaceO\rightarrow\mathbb{R}_{\geq0}$. 

In parametric inference, the action $a$ is to pick a value $\vxi$
in parameter space $\calTheta$, i.e. $a=\vxi\in\calA=\calTheta$.
By this way, the definition (\ref{eq:ch4:regret_loss}) becomes a
loss function $\calL\TRIANGLEQ\Loss(\vxi,\vtheta)$ for estimator
$\xi$ of $\vtheta$ such that $\Loss(\vxi,\vtheta):\Omega\times\Omega\rightarrow\mathbb{R}_{\geq0}$
and $\Loss(\vxi,\vtheta)=0\Leftrightarrow\vxi=\vtheta$. In deterministic
scenario, the standard criterion is to pick the value $\widehat{\vxi}$
of $\vxi$ such that the loss function is minimized, i.e. we have:
\[
\widehat{\vxi}=\arg\min_{\vxi\in\calTheta}\Loss(\vxi,\vtheta)
\]
. 

Lastly, in probability context, the estimator $\vxi$ is assigned
as a function of data $\xi\TRIANGLEQ\etheta{\xbold}:\calX\rightarrow\Omega$
in a functional space $\xi\in\Xi$ . In this context, the value $\calL=\Loss(\vxi,\vtheta)=\funL{\etheta{\xbold}}{\vtheta}$
is a deterministic function of two random variables $\{\xbold,\vtheta\}$.
Then, the aim of parametric inference in this case is to pick the
minimum risk (MR) function $\widehat{\vxi}=\mrtheta(\xbold)$ in functional
space $\Xi$ such that the expected loss function is minimized, i.e.
we have: 
\begin{equation}
\mrtheta(\xbold)=\arg\min_{\etheta{\cdot}\in\Xi}E_{f(\calL)}\funL{\etheta{\xbold}}{\vtheta}\label{eq:ch4:Minimum_risk}
\end{equation}
In general, a loss function $\calL$ can be designed via definition
of gain function (\ref{eq:ch4:loss-utility}), via definition of regret
function (\ref{eq:ch4:regret_loss}), by system requirement or simply
by imposing tractably mathematic form. Then, theoretically, the distribution
$f(\calL)$ in (\ref{eq:ch4:Minimum_risk}), $\calL\in\mathbb{R}_{\geq0}$,
can be derived via transformation of joint distribution $f(\xbold,\vtheta)$.
However, in practice, the exact form of $f(\calL)$ is often difficult
to compute. The most common solution is to confine ourself to expected
loss function $\ELoss\TRIANGLEQ E_{f(\calL)}(\calL)$. For this reason,
a theory of functional mean will be briefly reviewed first, before
we derive the computation of that mean value $\ELoss$.

\subsubsection{Expectation of a function}

In probability theory, law of the unconscious statistician (LOTUS)
is important {[}\citet{ch4:bk:LOTUS:proof:2005}{]}. Historically,
the term ``unconscious'' was used because some people forgot that
this law was \textit{not} a definition {[}\citet{ch4:origin:LOTUS:Ross70,ch4:bk:LOTUS:Schervish95}{]},
although some statisticians, e.g. {[}\citet{ch4:bk:LOTUS:not_amusing:2002}{]},
did not find that term amusing. The virtue of LOTUS is that we can
compute the expected value of a deterministic function $\vpsi=\fung{\vtheta}$
from original distribution $f(\vtheta)$, without the need of computing
transformed distribution $f(\psi)$, which might be difficult to carry
out in practice.
\begin{prop}
\label{prop:ch4:LOTUS} (Law of unconscious statistician (LOTUS))
(see e.g. {[}\citet{ch4:bk:LOTUS:proof:1985,ch4:bk:LOTUS:proof:2005}{]}
for rigorous proof)

If $\theta$ is a random variable with probability function $f(\vtheta)$
and $\vpsi=\fung{\vtheta}:\spaceO\rightarrow\REAL$ is a measurable
function, then $E_{f(\psi)}(\vpsi)=E_{f(\vtheta)}(\fung{\vtheta})=\int\fung{\vtheta}f(\vtheta)d\vtheta$.
\end{prop}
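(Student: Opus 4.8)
The statement is that $E_{f(\psi)}(\psi) = \int g(\theta) f(\theta)\, d\theta$, where $\psi = g(\theta)$. The plan is to verify the identity by working from the fundamental definition of expectation and building up in the standard measure-theoretic layers: first indicator functions, then simple functions, then nonnegative measurable functions, and finally general measurable functions by splitting into positive and negative parts. This "standard machine" is the natural route because the whole point of the lemma is that computing the pushforward law $f(\psi)$ is unnecessary; we want to reduce everything to integration against the original law $f(\theta)$.

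First I would set up the framework precisely. Let $(\Omega, \mathfrak{F}, P)$ be the underlying probability space on which $\theta$ lives as a random variable, so that $f(\theta)$ is its density (with respect to an appropriate reference measure — Lebesgue, or counting measure in the p.m.f. case, exactly as the paper's convention in the Bayesian chapter permits). By definition, $E_{f(\psi)}(\psi) = \int_{\mathbb{R}} y\, dP_\psi(y)$, where $P_\psi$ is the distribution (pushforward measure) of the random variable $\psi = g(\theta)$. The goal is to show this equals $\int_{\Omega} g(\theta)\, dP_\theta(\theta)$, which unwinds to $\int g(\theta) f(\theta)\, d\theta$ once the density is introduced.

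The key steps, in order, are as follows. For an indicator $g = \mathbf{1}_A$ with $A$ a measurable set, both sides reduce to $P_\theta(A)$ by the very definition of the pushforward measure, so the identity holds trivially. By linearity of the integral, it then extends to simple functions $g = \sum_{k} c_k \mathbf{1}_{A_k}$. Next, for a general nonnegative measurable $g$, I would choose an increasing sequence of simple functions $g_m \uparrow g$ pointwise and invoke the Monotone Convergence Theorem on both sides simultaneously, passing the limit through each integral to obtain the identity for $g \geq 0$. Finally, for arbitrary measurable $g$, write $g = g^{+} - g^{-}$ and apply the nonnegative case to each part, subtracting — valid precisely under the integrability hypothesis that guarantees the expectation exists and the difference is not $\infty - \infty$.

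The step I expect to carry the real content is the passage from indicators to nonnegative functions via monotone approximation: this is where the change-of-variables through the pushforward measure is genuinely justified, rather than merely asserted. The conceptual subtlety worth flagging is that the identity on indicators \emph{is} the defining property of the distribution of $\psi$, so the whole proof hinges on recognizing that $P_\psi(B) = P_\theta(g^{-1}(B))$ is not an extra assumption but the definition itself; everything downstream is the standard simple-function-approximation argument. Since the paper cites \citet{ch4:bk:LOTUS:proof:1985,ch4:bk:LOTUS:proof:2005} for the rigorous proof and treats the p.m.f. case as a special instance of the p.d.f. case (with Dirac measures and Lebesgue integration, per the stated convention), I would not belabour the measure-theoretic minutiae but would present exactly this four-layer structure as the skeleton.
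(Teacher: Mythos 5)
The paper gives no proof of this proposition at all---it defers entirely to the cited textbooks---and your four-layer standard-machine argument (indicators, simple functions, monotone convergence, positive/negative parts) is correct and is precisely the canonical proof those references contain. One small precision worth making: at the simple-function stage, use the canonical representation $g=\sum_{k}c_{k}\mathbf{1}_{A_{k}}$ with \emph{disjoint} $A_{k}$, because $E_{f(\psi)}(\psi)$ written against the pushforward law is not manifestly linear in $g$; with disjoint $A_{k}$ the variable $\psi$ is discrete and $E(\psi)=\sum_{k}c_{k}P_{\theta}(A_{k})$ follows by direct computation (equivalently, route every step through the abstract integral $\int_{\Omega}\psi\,dP$ on the underlying space $(\Omega,\mathfrak{F},P)$ you set up, which is linear and to which the monotone convergence theorem applies directly, whereas the pushforward measures $P_{\psi_{m}}$ change with $m$).
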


Somewhat relevant to LOTUS is the concept of certainty equivalent
(CE) in decision theory, defined as follows:
\begin{defn}
(Certainty Equivalent) {[}\citet{ch4:BK:DecisionTheory:09}{]} The
certainty equivalent $\htheta_{CE}\in\spaceO$, if existent, is a
special value of $\vtheta\in\spaceO$, such that:

\begin{equation}
\fung{\htheta_{CE}}=E_{f(\vtheta)}\fung{\vtheta}\label{eq:ch4:CE}
\end{equation}
\end{defn}

The equation (\ref{eq:ch4:CE}) means that expectation of functional
form can be evaluated by a single CE point $\htheta_{CE}$, if that
CE exists. Note that, the sufficient condition for existence of $\htheta_{CE}$
is that $\fung{\vtheta}$ is linear with $\vtheta$ (\ref{eq:ch4:CE}).

\subsection{Bayes risk \label{subsec:chap4:Bayes-risk}}

By Proposition \ref{prop:ch4:LOTUS}, the expected value $\ELoss$
in (\ref{eq:ch4:Minimum_risk}) can be found via the joint distribution
$f(\xbold,\vtheta)$, as follows:

\begin{equation}
\ELoss\TRIANGLEQ E_{f(\vxi,\vtheta)}\Loss(\vxi,\vtheta)=E_{f(\xbold,\vtheta)}\Loss(\etheta{\xbold},\vtheta)\label{eq:ch4:Bayesian_risk}
\end{equation}
In the literature, the expected loss in (\ref{eq:ch4:Bayesian_risk})
is also called Bayes risk {[}\citet{ch4:bk:Bayes:bible:Berger85}{]}.
In practice, because the computation of expectation (\ref{eq:ch4:Bayesian_risk})
via joint distribution $f(\xbold,\vtheta)$ is often not in closed
form, the Bayesian risk  (\ref{eq:ch4:Bayesian_risk}) can be estimated
via empirical (Monte Carlo) sampling of $f(\xbold,\vtheta)$. 

\subsubsection{Posterior expected loss \label{subsec:chap4:Posterior-expected-loss}}

The MR estimator $\mrtheta\TRIANGLEQ\mrtheta(\xbold)$ in (\ref{eq:ch4:Minimum_risk})
can also be found via posterior expected loss function, without the
need of computing the form $\joint$. By factorization $\joint=f(\xbold)\posterior$
in (\ref{eq:ch4:JOINT}), the Bayesian risk (\ref{eq:ch4:Bayesian_risk})
can also be computed by averaging posterior distribution, i.e.: $\ELoss=E_{f(\xbold)}\left\{ E_{f(\vtheta|\xbold)}\Loss(\etheta{\xbold},\vtheta)\right\} $
. Since $f(\xbold)\geq0$ for any $\xbold$, we have an equivalent
way to find the optimal estimator $\mrtheta$ in (\ref{eq:ch4:Minimum_risk}),
as follows:

\begin{equation}
\mrtheta=\arg\min_{\etheta{\cdot}}E_{f(\vtheta|\xbold)}\funL{\etheta{\xbold}}{\vtheta}\label{eq:ch4:Bayesian_Estimator}
\end{equation}

Hence, an advantage of Bayesian estimation method is that deriving
optimal estimator $\mrtheta$ via posterior distribution $\posterior$
is often much more feasible than via joint distribution $\joint$
{[}\citet{ch4:bk:Bayes:bible:Berger85}{]}. 

\subsubsection{Minimum risk estimators \label{subsec:chap4:Minimum-risk-estimators}}

From (\ref{eq:ch4:Bayesian_Estimator}), it is feasible to derive
the optimal estimators for several well-known loss functions. For
example, if $\Loss(\htheta,\vtheta)$ is quadratic loss $\left\Vert \htheta-\vtheta\right\Vert _{2}^{2}$,
zero-one loss $\delta(\htheta-\vtheta)$ or scalar absolute loss $\left\Vert \htheta-\vtheta\right\Vert _{1}$,
the minimum risk estimator $\mrtheta$ is the mean, mode or median
of posterior distribution $\posterior$, respectively {[}\citet{ch4:bk:Bayes:bible:Berger85}{]}.

In information theory, the Hamming distance is an important function.
Generally, a Hamming loss can be defined for continuous case, as follows:

\begin{equation}
\funL{\htheta}{\vtheta}\equiv Q(\htheta,\vtheta)=1-\frac{1}{\ndata}\sum_{\itime=1}^{n}\delta\left(\htheta_{\itime}-\vtheta_{\itime}\right)\label{eq:ch4:MR:Hamming}
\end{equation}
where $\widehat{\vtheta}=\setd{\htheta}{\ndata}$ is the set of estimates
and $\vtheta=\setd{\vtheta}{\ndata}$ is the set of parameters. The
Hamming loss in (\ref{eq:ch4:MR:Hamming}) can be minimized via the
following Lemma:
\begin{lem}
\label{lem:chap4:MR-Hamming} The minimum risk (MR) estimate $\widehat{\vtheta}_{MR}=\{\htheta_{1}(\xbold),\ldots,\htheta_{\ndata}(\xbold)\}$,
which minimizes $E_{f(\xbold,\vtheta)}Q(\htheta,\vtheta)$, is the
sequence of marginal MAP:
\end{lem}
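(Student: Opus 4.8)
The plan is to avoid the joint expectation entirely and instead exploit the posterior-expected-loss characterization of the minimum-risk estimator, equation (\ref{eq:ch4:Bayesian_Estimator}), so that the minimization is carried out against the posterior $\posterior$ for a fixed observed $\xbold$. First I would substitute the Hamming loss $Q(\htheta,\vtheta)$ of (\ref{eq:ch4:MR:Hamming}) into $E_{\posterior}Q(\htheta,\vtheta)$ and use linearity of expectation to write
\begin{equation}
E_{\posterior}Q(\htheta,\vtheta)=1-\frac{1}{\ndata}\sum_{\itime=1}^{\ndata}E_{\posterior}\,\delta(\htheta_{\itime}-\vtheta_{\itime}).
\end{equation}

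The central step is then to evaluate each term $E_{\posterior}\,\delta(\htheta_{\itime}-\vtheta_{\itime})$. Reading $\delta$ as the Radon--Nikodym measure described earlier in this chapter and the expectation as a Lebesgue integral, I would first integrate out all components other than $\vtheta_{\itime}$ (replacing the joint posterior by the marginal posterior $f(\vtheta_{\itime}|\xbold)$) and then apply the sifting property, obtaining $E_{\posterior}\,\delta(\htheta_{\itime}-\vtheta_{\itime})=f(\htheta_{\itime}|\xbold)$. Substituting back yields
\begin{equation}
E_{\posterior}Q(\htheta,\vtheta)=1-\frac{1}{\ndata}\sum_{\itime=1}^{\ndata}f(\htheta_{\itime}|\xbold).
\end{equation}

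Finally, I would note that minimizing this quantity over the estimate $\htheta$ is equivalent to maximizing the sum $\sum_{\itime}f(\htheta_{\itime}|\xbold)$, and that this sum is separable, since its $\itime$th summand depends only on the $\itime$th component $\htheta_{\itime}$. Hence the joint maximization decouples into $\ndata$ independent scalar problems, each solved by $\htheta_{\itime}(\xbold)=\arg\max_{\htheta_{\itime}}f(\htheta_{\itime}|\xbold)$, which is precisely the marginal MAP estimate for coordinate $\itime$. Collecting these gives $\mrtheta=\{\htheta_{1}(\xbold),\ldots,\htheta_{\ndata}(\xbold)\}$, as claimed.

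The step I expect to be the main obstacle is the rigorous evaluation of $E_{\posterior}\,\delta(\htheta_{\itime}-\vtheta_{\itime})$ in the continuous case: one must justify that the measure-theoretic reading of $\delta$ genuinely returns the marginal posterior density value $f(\htheta_{\itime}|\xbold)$ rather than an ill-defined pointwise object, and that the subsequent $\arg\max$ is attained (uniqueness holding only up to ties). In the purely discrete (pmf) setting these difficulties dissolve, because $\delta$ reduces to a Kronecker delta and $f(\htheta_{\itime}|\xbold)$ is an ordinary marginal probability, so the decoupling argument and the identification with the marginal MAP are immediate.
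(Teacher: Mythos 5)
Your proposal is correct and takes essentially the same route as the paper's own proof: both pass to the posterior expected loss via (\ref{eq:ch4:Bayesian_Estimator}), use linearity to reduce the Hamming loss to the sum of terms $E_{f(\vtheta|\xbold)}\delta(\htheta_{\itime}-\vtheta_{\itime})=f(\htheta_{\itime}|\xbold)$, and exploit separability of that sum to decouple the minimization into $\ndata$ marginal MAP problems. Your closing remark on the measure-theoretic reading of $\delta$ in the continuous case aligns with the paper's own stated reason for giving this proof, namely to cover continuous $\vtheta$ beyond the discrete Kronecker-$\delta$ setting treated in the cited references.
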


\begin{equation}
\htheta_{\itime}(\xbold)=\arg\max_{\vtheta_{\itime}}f(\vtheta_{\itime}|\xbold),\ \seti{\itime}{\ndata}\label{eq:ch4:MR:marginalMAP}
\end{equation}

\begin{proof}
From (\ref{eq:ch4:MR:Hamming}), we can see that the MR estimate for
Hamming loss is:

\begin{eqnarray}
\min_{\widehat{\vtheta}(\cdot)}E_{f(\vtheta|\xbold_{n})}\funL{\widehat{\vtheta}(\xbold)}{\vtheta} & = & 1-\frac{1}{\ndata}\sum_{\itime=1}^{n}\max_{\widehat{\vtheta}(\cdot)}E_{f(\vtheta|\xbold_{n})}\delta\left(\htheta_{\itime}(\xbold)-\vtheta_{\itime}\right)\nonumber \\
 & = & 1-\frac{1}{\ndata}\sum_{\itime=1}^{\ndata}\max_{\vtheta_{\itime}}f(\vtheta_{\itime}|\xbold)\label{eq:ch4:MR:minHamming}
\end{eqnarray}
which yields (\ref{eq:ch4:MR:marginalMAP}).
\end{proof}
In the literature, a special case of Lemma \ref{lem:chap4:MR-Hamming},
in which $\vtheta$ is discrete and Dirac-$\delta(\cdot)$ is replaced
by Kronecker-$\delta[\cdot]$, is proved in {[}\citet{ch4:bk:lossHamming:image95,ch4:art:JuriLember11}{]}.
The above proof is provided in this thesis in order to cover the case
of continuous r.v. $\vtheta$. 

\section{Bayesian inference}

As explained in subsection \ref{subsec:chap4:Posterior-expected-loss},
although the ultimate aim of estimation task is to minimize the Bayesian
risk $\ELoss$ (\ref{eq:ch4:Bayesian_risk}) via joint model $f(\xbold,\vtheta)$,
the Minimum Risk (MR) estimated decision (\ref{eq:ch4:Bayesian_Estimator})
can be found equivalently via posterior distribution $\posterior$.
The tractable computation of posterior is, therefore, the main interest
in this thesis. 

Because the joint model $f(\xbold,\vtheta)$ depends on the design
of observation part $f(\xbold|\vtheta)$ and the prior part $f(\vtheta)$,
the technical issues with those two parts will be reviewed first in
this section. The role of posterior part $f(\vtheta|\xbold)$, which
is a mere consequence of the chain rule (\ref{eq:ch4:JOINT}), will
then be reviewed.

\subsection{Observation model}

In contrast to the deterministic approach, the probabilistic approach
considers the observed data $\xbold$ as \textit{one} realization
of observation distribution $f(\xbold|\vtheta)$, given a shaping
parameter $\vtheta$. In other words, the parametric model $f(\cdot|\vtheta)$
is a quantization model of the observer's/\foreignlanguage{british}{modeller's}
belief about the possible realization $\xbold$, of $\mathbf{X}$,
when observed. 

For clarifying potential confusion, let us emphasize again that, in
this thesis, the data $\xbold$ is regarded as one and only one realization,
drawn from $f(\xbold|\vtheta)$. Owing to this convention, it does
not matter whether the random quantity $\xbold$ is repeatable or
unrepeatable. In stochastic case where there are $\ndata$ observed
data, the notation $\xbold$ will be specialized to $\xbold_{\ndata}\TRIANGLEQ\setd{\xdata}{\ndata}$. 

In practice, the observation model $f(\xbold|\vtheta)$ is often imposed
by physical laws. In mathematical models, $f(\xbold|\vtheta)$ can
be flexibly parameterized by exploiting exchangeability, invariance
or sufficiency properties of data $\xbold$ {[}\citet{ch2:BK:Bernardo:Bayes94}{]}.
In that theoretical context, $f(\xbold|\vtheta)$ often belongs to
standard distributions, which are derived from experiments or defined
in probability textbooks {[}\citet{ch4:bk:Johnson_Kotz:DisMul97,ch4:bk:Johnson_Kotz:ContMul04,ch4:bk:Johnson_Kotz:ContUni04,ch4:bk:Johnson_Kotz:DisMul05}{]}.

For computational efficiency, the data sufficiency is the most important
property for us to exploit. If a statistic, i.e. a function of data,
extracts information on its parameter partially, a sufficient statistic
is much more efficient since it can extract that information fully.
Furthermore, sufficient statistics can represent the whole data in
a parametric model and, hence, might reduce the data complexity significantly.
For that reason, the parameterization technique via sufficient statistics
and its typical class, namely Exponential Family, is the key in this
thesis and will be briefly reviewed in this chapter.

\subsubsection{Sufficient statistics}

The sufficient statistics of an observation model $f(\xbold|\vtheta)$
can be identified via a well-known criterion, as follows:
\begin{prop}
\textbf{(Fisher-Neyman factorization criterion)} The statistics $\sufficient(\xbold)$
is called sufficient if and only if the observation distribution can
be factorized as {[}\citet{ch2:BK:Bernardo:Bayes94}{]}:

\begin{equation}
f(\xbold|\vtheta)=\funh{\sufficient(\xbold),\vtheta}\fung{\xbold}\label{eq:ch4:Neyman}
\end{equation}

for some functions $\funh{\cdot}\geq0$ and $\fung{\cdot}>0$. 
\end{prop}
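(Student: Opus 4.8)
The plan is to prove this characterisation as a two-way implication, taking the operative definition of sufficiency to be that the conditional law $f(\xbold\mid\sufficient(\xbold),\vtheta)$ does not depend on $\vtheta$; the displayed factorisation is then an equivalent, and computationally far more convenient, certificate of that property. First I would establish everything in the discrete setting, where the thesis convention of treating each pmf as a pdf assembled from Dirac measures makes all the conditional densities elementary, and only afterwards indicate the measure-theoretic upgrade needed in the genuinely continuous case.

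For the forward direction ($\Rightarrow$), suppose $\sufficient(\xbold)$ is sufficient. The key observation is that the event $\{\mathbf{X}=\xbold\}$ is contained in the coarser event $\{\sufficient(\mathbf{X})=\sufficient(\xbold)\}$, so the joint density of the pair $(\xbold,\sufficient(\xbold))$ collapses onto $\obs$ itself. I would then apply the chain rule
\[
\obs = f(\xbold\mid\sufficient(\xbold),\vtheta)\,f(\sufficient(\xbold)\mid\vtheta),
\]
and read off the required form by setting $\fung{\xbold}\TRIANGLEQ f(\xbold\mid\sufficient(\xbold))$, which is free of $\vtheta$ precisely by the hypothesis of sufficiency, and $\funh{\sufficient(\xbold),\vtheta}\TRIANGLEQ f(\sufficient(\xbold)\mid\vtheta)$. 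Non-negativity of both factors is immediate, and positivity of $\fung{\cdot}$ holds on the support of the data.

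For the reverse direction ($\Leftarrow$), assume the factorisation and compute the conditional density directly. Writing $t=\sufficient(\xbold)$ and marginalising the factored density over the fibre $\{\mathbf{y}:\sufficient(\mathbf{y})=t\}$,
\[
f(\xbold\mid\sufficient(\xbold)=t,\vtheta) = \frac{\funh{t,\vtheta}\fung{\xbold}}{\sum_{\mathbf{y}:\sufficient(\mathbf{y})=t}\funh{t,\vtheta}\fung{\mathbf{y}}} = \frac{\fung{\xbold}}{\sum_{\mathbf{y}:\sufficient(\mathbf{y})=t}\fung{\mathbf{y}}},
\]
where the common factor $\funh{t,\vtheta}$ cancels between numerator and denominator. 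Since the surviving expression involves only $\fung{\cdot}$, it carries no dependence on $\vtheta$, which is exactly the defining property of a sufficient statistic.

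The hard part will be the passage from the discrete to the continuous case: there the fibre $\{\sufficient(\mathbf{y})=t\}$ is typically a set of measure zero, so the naive ratio above is a $0/0$ object and the conditioning must be interpreted through Radon--Nikodym derivatives against a dominating measure. The rigorous argument (in the Halmos--Savage spirit) factors the entire family $\{\obs\}_{\vtheta}$ through a single dominating measure and shows the factor $\funh{\cdot,\vtheta}$ may be chosen measurable with respect to $\sufficient(\xbold)$; this is precisely where the thesis convention of regarding $\delta(\cdot)$ as a genuine Radon--Nikodym probability measure and all integrals as Lebesgue integrals does the real work. I would therefore present the transparent discrete computation above in full and defer the general measurability step to the cited source, since the sums are simply replaced by Lebesgue integrals over the conditional level sets without altering the cancellation that drives the proof.
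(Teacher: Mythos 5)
Your argument is correct, but be aware that the paper itself offers no proof of this proposition: it is stated as a classical result and delegated entirely to the cited reference \citet{ch2:BK:Bernardo:Bayes94}, after which the text merely comments on its interpretation. What you have supplied is the standard two-direction textbook proof, made explicit: you fix the operative definition of sufficiency as $\vtheta$-freeness of $f(\xbold\mid\sufficient(\xbold),\vtheta)$ (which the paper leaves implicit), derive the forward direction from the collapse of the joint law of $(\xbold,\sufficient(\xbold))$ onto $\obs$ plus the chain rule, and derive the converse by cancelling $\funh{\sufficient(\xbold),\vtheta}$ between numerator and denominator of the conditional over the fibre $\{\mathbf{y}:\sufficient(\mathbf{y})=\sufficient(\xbold)\}$. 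Both computations are sound in the discrete case; your remark that positivity of $\fung{\cdot}$ only holds on the support quietly repairs the proposition's blanket ``$\fung{\cdot}>0$'', which as stated is slightly too strong (a point of $\xbold$ outside the support whose fibre has positive mass forces $\fung{\xbold}=0$). The one technicality you pass over silently is the case $\funh{t,\vtheta}=0$ in the converse, but this is a $\vtheta$-null event on which the conditional may be fixed arbitrarily, so nothing breaks. Your deferral of the genuinely continuous case to the Halmos--Savage dominated-family machinery is appropriate and honest --- that is where the real measure-theoretic content lies, and it matches the thesis's own convention of reading $\delta(\cdot)$ as a Radon--Nikodym measure. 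In sum: the paper's citation-only treatment keeps the exposition lean, since the criterion is used purely as a tool for the subsequent Exponential Family discussion; your version buys self-containedness and pins down precisely which notion of sufficiency the factorization is equivalent to, at the cost of a proof whose full rigour in the continuous case must still be outsourced.
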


Because the parameter $\vtheta$ only interacts with data $\xbold$
via function $\funh{\sufficient(\xbold),\vtheta}$ in (\ref{eq:ch4:Neyman}),
all the information of data $\xbold$ regarding $\vtheta$ is summarized
in $\sufficient(\xbold)$, hence its name sufficient statistic. 

In history, the notion of sufficient statistics was firstly defined
in {[}\citet{ch2:origin:Fisher_info:Fisher22}{]}, while the factorization
(\ref{eq:ch4:Neyman}) is explicitly established in {[}\citet{ch4:origin:NeymanFactor:Neyman35}{]}.
In classical inference, the sufficient statistics plays an important
role, mostly owing to Rao-Blackwell-Kolmogorov theorem {[}\citet{ch4:origin:Rao_Blackwell:Blackwell47,ch4:origin:Rao_Blackwell:Kolmogorov50,ch4:origin:Rao_Blackwell:Rao65}{]},
which establishes that unbiased estimators based on sufficient statistic
are the best estimators. In Bayesian inference, however, sufficient
statistics are simply regarded as a consequence of the Bayesian method
{[}\citet{ch2:BK:Bernardo:Bayes94}{]}. Owing to Bayes' rule (\ref{eq:ch4:BAYES's rule})
and Neyman factorization (\ref{eq:ch4:Neyman}), the posterior inference
normalizes out any data factor $\fung{\xbold}$ irrelevant to parameter
and, hence, always exploits the minimal sufficient statistics {[}\citet{ch4:bk:PointEst:Lehmann98}{]}. 

Nevertheless, the sufficiency principle plays a central role for data
simplification of two major observation classes: transformation family
(TF) and exponential family (EF). Indeed, most standard distributions,
see e.g. {[}\citet{ch4:bk:Johnson_Kotz:DisMul97,ch4:bk:Johnson_Kotz:ContMul04,ch4:bk:Johnson_Kotz:ContUni04,ch4:bk:Johnson_Kotz:DisMul05}{]},
belong to just these two classes {[}\citet{ch4:bk:PointEst:Lehmann98}{]}.
The former extracts sufficient statistics of interesting parameters
via a group of transformations, such as scaling or location shift,
while, on the other hand, preserving the original distribution's structure
{[}\citet{ch4:bk:SufficientStat:Cox06}{]}. The latter reduces data
complexity, regardless of sample size, to a fixed (usually small)
number of sufficient statistics without loss of information {[}\citet{ch4:bk:PointEst:Lehmann98}{]}.
Since EF class is widely exploited in signal processing for tractable
computation, it will be reviewed in this chapter. Some distributions
in TF class, namely spherical distributions, that also belong to EF
class will be presented as an application in Section \ref{sec:chap7:Spherical-family}.

\subsubsection{Exponential Family}

In the class of distributions parameterized via sufficient statistics,
the most important is the exponential family (EF), firstly pioneered
by {[}\citet{ch4:origin:EF:Darmois35,ch4:origin:EF:Koopman36,ch4:origin:EF:Pitman36}{]}.
The first motivation of the EF class is to exploit the fixed-dimension
property of sufficient statistics, as stated via Darmois-Koopman-Pitman
theorem {[}\citet{ch4:art:EF:Andersen70}{]}: ``Under regularity
conditions, a necessary and sufficient condition for the existence
of a sufficient statistic of fixed dimension is that the probability
density belongs to the exponential family''. Following up that result,
the sufficient data reduction in the EF class was then studied thoroughly
in {[}\citet{ch4:art:EF:Andersen70,ch4:art:EF:Brown86}{]}. The main
motivation for EF's usage nowadays is simply the computational tractability
engendered in the posterior distribution {[}\citet{ch4:BK:Bible:Robert07}{]}.
\begin{defn}
\textbf{(Exponential Family) }The observation model $f(\xbold|\vtheta)$
is a member of EF if and only if there is a separability between parameters
and data kernels, as follows {[}\citet{ch4:art:EF:Brown86}{]}: 

\begin{equation}
f(\xbold|\vtheta)=C(\vtheta)\funh{\xbold}\exp\left(R(\vtheta)\sufficient(\xbold)\right)\label{eq:ch4:EF:obs}
\end{equation}
In a more relaxed form, the scalar product $R(\vtheta)\sufficient(\xbold)$
in (\ref{eq:ch4:EF:obs}) might be replaced by a scalar product $\left\langle R(\vtheta),\sufficient(\xbold)\right\rangle $
that is linear in the second argument (such as the Euclidean inner
product in the case where $R(\vtheta)$ and $\sufficient(\xbold)$
are vector structures of equal dimension) {[}\citet{ch4:BK:AQUINN_06,ch4:art:EF:Nielsen09}{]}. 
\end{defn}

Comparing (\ref{eq:ch4:Neyman}) with (\ref{eq:ch4:EF:obs}), we recognize
that the data kernel $\sufficient(\xbold)$ is a sufficient statistic
in EF. Moreover, $\sufficient(\xbold)$ is also invariant with increasing
numbers of observation. For example, given an iid sequence $\xbold_{\ndata}=\setd{\xdata}{\ndata}$,
the EF observation (\ref{eq:ch4:EF:obs}) becomes $f(\xbold_{\ndata}|\vtheta)=\prod_{\itime=1}^{\ndata}f(\xdata_{\itime}|\vtheta)=C(\vtheta)^{\ndata}\funh{\xbold_{\ndata}}\exp\left(R(\vtheta)\sufficient(\xbold_{\ndata})\right)$,
where the sufficient statistic $\sufficient(\xbold_{\ndata})=\sum_{\itime=1}^{\ndata}\sufficient(\xdata_{\itime})$
preserves the dimension of initial statistics $\sufficient(\xdata_{1})$. 

Note that, if we regard the empirical mean $\frac{1}{\ndata}\sufficient(\xbold_{\ndata})$
as a moment constrain of iid sequence $\xbold_{\ndata}=\setd{\xdata}{\ndata}$,
the EF form (\ref{eq:ch4:EF:obs}) can also be found via maximum entropy
(MaxEnt) principle {[}\citet{ch4:art:Jordan:VB08}{]}.

\subsection{Prior distribution}

As explained above, the prior design $\prior$ cannot be separated
from the design of joint model $\joint$, which, in turn, depends
on the data characteristics. Hence, the goodness of estimation does
not only rely on inference techniques, but also on the quality of
model design. For any optimal decision, the first and foremost question
is whether we have considered all possible options, since too narrow
a set of options might lead us to a sub-optimal solution at best {[}\citet{ch4:origin:Models:Box79b}{]}.
The aim of prior design is, therefore, to embrace all possibilities
of parameter for the data set in the joint model.

In practice, there are three scenarios for prior design: 

- If we have no information on $\vtheta$ a priori, a non-informative
approach will be applied to prior design (see e.g {[}\citet{ch4:art:prior:Kass96}{]}
for full review and bibliography of this approach). The most well-known
priors in this case are uniform prior (also known as Laplace's prior)
{[}\citet{ch4:origin:prior:Laplace1814}{]}, Jeffreys' prior {[}\citet{ch4:origin:prior:Jeffreys46}{]}
and reference prior {[}\citet{ch4:origin:prior:Bernardo79}{]}, as
reviewed below.

- If we have all information on $\vtheta$, i.e. the prior distribution
is already given along with joint model, there is nothing for us to
do. However, if only the form of prior is defined, the prior design
becomes a tuning problem on shaping parameters of that form. An example
of this case is a conjugate prior, as explained below. Another example
is the multinomial distribution, which is the uniquely available form
for any discrete compact-support random variable. This multinomial
prior will be considered in Section \ref{sec:chap6:HMC}.

- If we have access to partial information on $\vtheta$, such as
moments or some constraints, a distributional optimizer can be sought.
For this case, some approximation methods, e.g. MaxEnt or moments
matching, can be applied {[}\citet{ch4:BK:Bible:Robert07}{]}. These
approximations will be explained in Section \ref{subsec:chap4:MaxEnt-Prior}.

In this subsection, the typical priors for those three cases will
be briefly reviewed. Laplace's prior, which is ignorant to data characteristics,
will be presented first in order to recall the ignorance principle
in prior design.

\subsubsection{Uniform prior}

The earliest principle, dated back to {[}\citet{ch4:origin:prior:Laplace1814}{]},
in prior design is the ``principle of indifference'' (also called
``principle of insufficient reason'' or Laplace's rule {[}\citet{ch4:art:prior:Kass96}{]}),
which imposes a uniform prior. This principle, however, receives some
serious criticism. Firstly, the uniform distribution is improper for
non-compact support $\vtheta$. Secondly and more fundamentally, the
principle of indifference ignores the re-parameterization issue in
observation model, which contains all the information of data about
parameter. Without taking that issue into account, the non-informative
prior $f(\vtheta)$ for \textit{a posteriori} estimation on $\vtheta$
might become an informative prior $f(\psi)$ for \textit{a posteriori}
estimation on $\vpsi$, where $\vpsi=\fung{\vtheta}$ is a one-to-one
mapping. 

\subsubsection{Jeffreys' prior}

In order to preserve the non-informative property in the re-parmeterization
issue, a criterion, namely ``invariance under re-parameterization'',
was originally required in Jeffreys' prior {[}\citet{ch4:origin:prior:Jeffreys46}{]}. 

Given observation model $f(\xbold|\vtheta)$, the Fisher information
is defined as follows: $\Fisher{\vtheta}\TRIANGLEQ E_{\obs}$$\left(\frac{\partial\log\obs}{\partial\vtheta}\right)^{2}$.
Owing to Jacobian transformation, the Fisher information is actually
invariant under re-parameterization, as follows: $\Fisher{\vtheta}=\Fisher{\vpsi}\left(\partial\vpsi/\partial\vtheta\right)^{2}$,
where $\vpsi=\fung{\vtheta}$ is a one-to-one mapping. Because square-root
of $\Fisher{\vtheta}$ yields a distributional transformation, the
Jeffreys prior is defined as $\prior\propto\sqrt{\Fisher{\vtheta}}$.
In the case of multi-dimensional parameter, the Jeffrey prior becomes
$\prior\propto\sqrt{\det\{\Fisher{\vtheta}\}}$, where $\Fisher{\vtheta}$
is the Fisher information matrix. 

Nevertheless, Jeffreys' prior is often improper, particularly in the
multi-dimensional case. For this reason, Jeffreys' prior is considered
as intuitive proposal, rather than a practical approach {[}\citet{ch2:BK:Bernardo:Bayes94}{]}.

\subsubsection{Reference prior}

A Bayesian, and somewhat objective, approach for prior design is to
consider the relationship between posterior and prior distributions,
given a fixed observation model. The criterion for optimal estimation
is, as explained above, to maximize the range of possibilities that
the prior can contribute to the posterior distribution. The reference
prior, firstly proposed in {[}\citet{ch4:origin:prior:Bernardo79}{]},
solved this problem via a variational approach, as follows: 
\[
\prior=\arg\max_{\prior}\left\{ E_{f(\xbold)}KLD\left(\posterior||\prior\right)\right\} 
\]
where KLD denotes Kullback-Leibler divergence {[}\citet{ch2:BK:CoverAndThomas}{]}:

\begin{equation}
KLD\left(\ftilde(\vtheta)||f(\vtheta)\right)\TRIANGLEQ E_{\ftilde(\vtheta)}\log\frac{\ftilde(\vtheta)}{f(\vtheta)}\geq0\label{eq:chap4:DEF=00003DKLD}
\end{equation}
If $\vtheta$ is scalar and continuous, the reference prior is identical
to Jeffreys' prior {[}\citet{ch2:BK:Bernardo:Bayes94}{]}, but this
is not true in the multi-dimensional case.

\subsubsection{Conjugate prior \label{subsec:chap4:Conjugate-prior}}

Another method for prior design is conjugate principle, which preserves
the prior and posterior within the same functional class, as follows:
\begin{defn}
(Conjugacy) {[}\citet{ch4:BK:Bible:Robert07}{]} A prior distribution
$f(\vtheta|\eta)\in\calF$ in distributional class $\calF$ is called
conjugate to an observation $\obs$ if its posterior distribution
also belongs to $\calF$, i.e. the distributional form is closed under
Bayes' rule (\ref{eq:ch4:BAYES's rule}). 
\end{defn}

Owing to conjugacy, the data update for posterior parameter can be
computed directly within data space itself, while the distributional
form stays unchanged. 

In particular, this invariance property under Bayes' rule plays a
central role in tractable and efficient computation for the EF class.
Indeed, because the EF observation model (\ref{eq:ch4:EF:obs}) preserves
data dimension, the dimension of its conjugate prior's parameters,
which are defined within the same data space, is also preserved \textit{a
posteriori}. 
\begin{defn}
(CEF class) The conjugate prior for EF observation model, which we
call the CEF distribution, can be defined as follows:
\end{defn}

\begin{equation}
f(\vtheta|\veta_{0})=C(\vtheta)^{\nu_{0}}\exp\left(R(\vtheta)v_{0}\right)\label{eq:ch4:EF:prior}
\end{equation}
where $\eta_{0}=\{\nu_{0},v_{0}\}$ is the shaping parameter. 

Obviously, the initialization of $\eta_{0}$ makes conjugate prior
somewhat informative, although special values of $\eta_{0}$ can make
conjugate EF prior identical to the non-informative Jeffrey prior
in some standard distributions.

\subsubsection{MaxEnt Prior \label{subsec:chap4:MaxEnt-Prior}}

Let us assume that, a priori, there is a set of mean constraints on
the parameter, as follows: 
\begin{equation}
f(\vtheta)\in\calF_{\vmoment}:\ E_{\prior}g_{\itime}(\vtheta)=\vmoment_{\itime}\label{eq:ch4:MaxEnt:mean_constraint}
\end{equation}
where all $\vmoment_{\itime}$ are known and $\calF_{\vmoment}$ is
the set of constrained distributions. The Maximum Entropy (MaxEnt)
principle, implied by {[}\citet{ch4:origin:MaxEnt:Jaynes80,ch4:origin:MaxEnt:Jaynes83}{]},
chooses the prior $\ftilde(\vtheta)\in\calF_{\vmoment}$ whose entropy
$\zEntropy$ is maximized, i.e. $\ftilde(\vtheta)=\arg\max_{\prior\in\calF_{\vmoment}}\zEntropy$
(once again, a variational principle). 

For defining entropy, however, there are two distinct cases. In the
discrete case, the entropy is traditionally defined as: $\zEntropy=-\sum_{\vtheta_{\istate}\in\spaceO}f(\vtheta_{\istate})\log f(\vtheta_{\istate})$.
In the continuous case, the relative Entropy is preferred, i.e. $\zEntropy=-\KLD(f(\vtheta)||f_{0}(\vtheta))$,
where $f_{0}(\vtheta)$ is a reference distribution. In practice,
$f_{0}(\vtheta)$ might be designed as  reference prior above. The
differential entropy, i.e. integration form of discrete Entropy $\zEntropy=-\int_{\vtheta}f(\vtheta)\log f(\vtheta)d\vtheta$,
is not always applicable in continuous case since it is sometimes
negative. 

The MaxEnt solution $\ftilde(\vtheta)$ for discrete and continuous
cases are $\ftilde(\vtheta)\propto\exp\left(-\sum_{\itime}g_{\itime}(\vtheta)\lambda_{\itime}\right)$
and $\ftilde(\vtheta)\propto\exp\left(-\sum_{\itime}g_{\itime}(\vtheta)\lambda_{\itime}\right)f_{0}(\vtheta)$,
respectively, where $\lambda_{\itime}$ are Lagrange multipliers of
the mean constraints (\ref{eq:ch4:MaxEnt:mean_constraint}). With
those forms, we can recognize that MaxEnt prior $\ftilde(\vtheta)$
is also a member of CEF class (\ref{eq:ch4:EF:prior}).

\subsection{Posterior distribution}

Given both observation and prior above, the joint model $f(\xbold,\vtheta)$
is already properly defined, and, hence, the computation of posterior
distribution in (\ref{eq:ch4:BAYES's rule}) is straight forward.
In this subsection, the main advantages of posterior distribution
as an inference object will be briefly reviewed and compared with
other inference techniques.

\subsubsection{Predictive inference}

Firstly, let us recall that, the predictive model $f(\xbold)$ on
observable $\xbold$ can be represented by marginalization over all
possible values of its parameter, as follows:

\[
f(\xbold)=\int\obs\prior d\vtheta
\]

Similarly, the posterior predictive model $f(\vy|\xbold)$ on observable
$\vy$, given data $\xbold$, can be represented via marginalization
over posterior distribution, as follows:

\begin{equation}
f(\vy|\xbold)=\int f(\vy|\xbold,\vtheta)\posterior d\vtheta\label{eq:ch4:predict:posterior}
\end{equation}

From (\ref{eq:ch4:predict:posterior}), we can see that posterior
$\posterior$ can be used as intermediate step to derive the inference
$f(y|\xbold)$ for unknown quantity $\vy$. This simple, yet elegant,
form of Bayesian prediction (\ref{eq:ch4:predict:posterior}) has
been used extensively in density estimation {[}\citet{ch4:bk:predict80}{]},
data classification {[}\citet{ch4:art:predict:classification92,ch4:art:predict:classification92b}{]},
model checking {[}\citet{ch4:art:predict:modelcheck96,ch4:bk:Gelman03}{]},
model averaging {[}\citet{ch4:art:predict:model95}{]}, etc. 

Note that, in the frequentist approach, because prior part $\prior$
is missing, the prediction (\ref{eq:ch4:predict:posterior}) has to
rely on plug-in approximation $\ftilde(\vtheta|\xbold)=\delta(\vtheta-\htheta)$,
sometimes referred to as a CE approximation: 
\begin{eqnarray}
\ftilde(\vy|\xbold) & = & \int f(\vy|\xbold,\vtheta)\delta(\vtheta-\htheta)d\vtheta\nonumber \\
 & = & f(\vy|\xbold,\htheta)\label{eq:ch4:predict:CE}
\end{eqnarray}

where $\htheta$ is often chosen as the Maximum Likelihood (ML) estimate,
i.e. $\htheta=\arg\max_{\vtheta\in\spaceO}\obs$ (see e.g. {[}\citet{ch4:bk:predict80}{]}
for the details). In the model-selection problem, such a substitution
is also popular, i.e. parameter model is often chosen first, before
the prediction step is carried out, although this approach is often
criticized for neglecting model uncertainty (see e.g. {[}\citet{ch4:art:predict:model95}{]}
and discussion therein).

\subsubsection{Hierarchical and nuisance parameters \label{subsec:ch4:Hierarchical-and-nuisance}}

In the case of binary partition $\vtheta=\{\zbinary\}$, where $\vtheta_{\itime}$
is the parameter of interest and $\vtheta_{\backslash\itime}$ is
the nuisance parameter, the Bayesian inference for $\vtheta_{\itime}$
can be readily derived via posterior $f(\vtheta|\xbold)$, as follows: 

\begin{eqnarray}
f(\vtheta_{\itime}|\xbold) & = & \int f(\vtheta|\xbold)d\vtheta_{\backslash\itime}\nonumber \\
 & \propto & \int f(\xbold|\vtheta)f(\vtheta)d\vtheta_{\backslash\itime}\label{eq:ch4:nuiss:theta_i}
\end{eqnarray}

By substituting the chain rule $f(\vtheta)=f(\vtheta_{\backslash\itime}|\vtheta_{\itime})f(\vtheta_{\itime})$
into (\ref{eq:ch4:nuiss:theta_i}), we can also derive a Bayes' rule
for $\vtheta_{\itime}$ directly, as follows:

\[
f(\vtheta_{\itime}|\xbold)\propto f(\xbold|\vtheta_{\itime})f(\vtheta_{\itime})
\]

where:

\begin{eqnarray}
f(\xbold|\vtheta_{\itime}) & = & \int f(\xbold|\vtheta)f(\vtheta_{\backslash\itime}|\vtheta_{\itime})d\vtheta_{\backslash\itime}\nonumber \\
 & = & \int f(\xbold,\vtheta_{\backslash\itime}|\vtheta_{\itime})d\vtheta_{\backslash\itime}\label{eq:ch4:nuiss:obs}
\end{eqnarray}

Note that this nuisance parameter issue is more difficult to solve
with frequentist method. Since prior $\prior$ is missing in this
case, the marginalization in (\ref{eq:ch4:nuiss:obs}) has to be approximated.
The common solution is to apply plug-in method, i.e. the nuisance
$\vtheta_{\backslash\itime}$ is replaced by its point estimation
$\widehat{\vtheta_{\backslash\itime}}$, which yields the so-called
profile likelihood {[}\citet{ch2:BK:Bernardo:Bayes94}{]}, as follows: 

\begin{eqnarray}
f_{p}(\xbold|\vtheta_{\itime}) & = & f(\xbold|\vtheta_{\itime},\widehat{\vtheta_{\backslash\itime}})\label{eq:ch4:nuiss:profile}
\end{eqnarray}

where again, $\widehat{\vtheta_{\backslash\itime}}$ is typically
chosen via ML principle. From (\ref{eq:ch4:nuiss:theta_i}) and (\ref{eq:ch4:nuiss:obs}),
we can see that the Bayesian inference for any subset of parameters
can be found by marginalizing out all nuisance parameters. However,
this approach often yields a complicated conditional distribution
which is often intractable {[}\citet{ch7:art:Transform:nuisance:Liseo06}{]}.
From (\ref{eq:ch4:nuiss:obs}), note that $f(\xbold|\vtheta_{\itime})$
is an infinite mixture of full observation model $f(\xbold|\vtheta)$,
with mixing density $f(\vtheta_{\backslash\itime}|\vtheta_{\itime})$.

Another approach is to produce an asymptotic integrated likelihood
via reference prior {[}\citet{ch7:art:Transform:ref_prior:Bernardo92,ch7:art:Transform:nuisance:Liseo93}{]}.
A major difficulty is that this approach depends strongly on an order
of parameters, which is relevant to the ordered grouping problem {[}\citet{ch2:BK:Bernardo:Bayes94}{]}.

In the $\ndata$-ary partition $\vtheta=\setd{\vtheta}{\ndata}$,
the direct computation of (\ref{eq:ch4:nuiss:theta_i}) is not feasible
in general {[}\citet{ch4:bk:Gelman03}{]}. This case is called hierarchical
parameter in the literature and will be studied in Section \ref{sec:chap6:HMC}.

\subsubsection{Sufficient statistics and shaping parameters}

From Fisher-Neyman factorization (\ref{eq:ch4:Neyman}), it is feasible
to recognize that, owing to normalizing operator in Bayes' rule (\ref{eq:ch4:BAYES's rule}),
the posterior inference for $\vtheta$ only depends on sufficient
statistic, rather than the whole data, as follows:

\[
f(\vtheta|\sufficient(\xbold))\propto f(\xbold|\vtheta)f(\vtheta)
\]

Hence, $\sufficient(\xbold)$ now becomes a shaping parameter for
the posterior distribution $\posterior$. In a slightly more general
case, where the prior $f(\vtheta|\veta)$ depends on known hyper-parameter
$\eta$ (also called shaping parameter in this thesis), the posterior
form can be written as follows: 

\[
f(\vtheta|\eta(\xbold))\propto\obs f(\vtheta|\veta)
\]
where $\eta(\xbold)=\{\sufficient(\xbold),\eta\}$ is called (data-updated)
shaping parameter for the posterior. 

The main challenges for posterior tractability are, therefore, to
identify the sufficient statistic and to design a prior such that
the computation of $\eta(\xbold)$ is feasible. Both of them can be
feasibly solved via definition of EF class. Multiplying the conjugate
prior (\ref{eq:ch4:EF:prior}) with EF observation (\ref{eq:ch4:EF:obs}),
the conjugate posterior can be feasibly derived, as follows:

\begin{equation}
f(\vtheta|\veta_{1}(\xbold))\propto C(\vtheta)^{\nu_{1}}\exp\left(R(\vtheta)v_{1}(\xbold)\right)\label{eq:ch4:EF:posterior}
\end{equation}
where $\nu_{1}=\nu_{0}+1$, $v_{1}(\xbold)=v_{0}+\sufficient(\xbold)$
and $\veta_{1}(\xbold)=\{\nu_{1},v_{1}(\xbold)\}$. Note that, because
the dimension of EF sufficient statistic $\sufficient(\xbold_{\ndata})=\sum_{\itime=1}^{\ndata}\sufficient(\xdata_{\itime})$
is preserved in iid case $\xbold_{\ndata}=\setd{\xdata}{\ndata}$,
the computation of posterior's shaping parameter $\veta_{1}(\xbold)$
is always tractable. Hence, the EF class plays an important role in
tractable Bayesian inference. In the online scheme, the computation
of $\sufficient(\xbold_{\ndata})$ can be carried out recursively
{[}\citet{ch4:BK:AQUINN_06}{]}.

\subsubsection{Asymptotic inference \label{subsec:chap4:Asymptotic-inference}}

Let us consider the negative logarithm of posterior distribution $\vL(\vtheta)=-\log f(\vtheta|\xbold_{\ndata})$
expanded up to second order of Taylor approximation, as follows:
\begin{equation}
\vL(\vtheta)=\vL(\ztheta)+(\vtheta-\ztheta)'\nabla\vL(\ztheta)-\frac{1}{2}(\phi-\phi_{0})'\vH(\ztheta)(\vtheta-\ztheta)+\ldots\label{eq:ch4:Taylor}
\end{equation}
 where $\nabla\vL(\ztheta)$ and $\vH(\ztheta)=-\nabla^{2}\vL(\ztheta)$
are the gradient vector and Hessian matrix evaluated at vector point
$\vtheta=\ztheta$, respectively, assuming regularity conditions on
$\vL(\vtheta)$ {[}\citet{ch2:BK:Bernardo:Bayes94}{]}

In the special case of the iid observation model, the first two orders
of the Taylor expansion yields an asymptotically converged form for
posterior distribution, as follows:
\begin{prop}
\label{prop:ch4:posteriorCLT} (Asymptotic posterior normality) {[}\citet{ch2:BK:Bernardo:Bayes94}{]}
Given iid observation model $f(\xbold_{\ndata}|\vtheta)=\prod_{\itime=1}^{\ndata}f(x_{i}|\vtheta)$
and maximum a posteriori (MAP) $\htheta$, i.e. $\nabla\vL(\htheta)=0$,
then under regularity conditions, the posterior distribution $f(\vtheta|\xbold_{\ndata})$
converges to the normal distribution $\calN_{\vtheta}\left(\htheta,\vH^{-1}(\htheta)\right)$,
when $\ndata\rightarrow\infty$. 
\end{prop}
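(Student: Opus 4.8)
The plan is to make the Laplace (saddle-point) approximation of the posterior rigorous, using the second-order Taylor expansion (\ref{eq:ch4:Taylor}) of the negative log-posterior $\vL(\vtheta)=-\log f(\vtheta|\xbold_{\ndata})$ about the MAP point $\htheta$. First I would set $\ztheta=\htheta$ in (\ref{eq:ch4:Taylor}) and invoke the defining property $\nabla\vL(\htheta)=0$ of the MAP, which annihilates the linear term. Exponentiating the truncated expansion and absorbing the constant $\vL(\htheta)$ into the normalizer then gives
\[
f(\vtheta|\xbold_{\ndata})\ \propto\ \exp\left(-\tfrac{1}{2}(\vtheta-\htheta)'\vH(\htheta)(\vtheta-\htheta)+r_{\ndata}(\vtheta)\right),
\]
where $\vH(\htheta)$ is the (positive-definite) curvature matrix at the MAP and $r_{\ndata}(\vtheta)$ collects all cubic-and-higher terms. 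Recognizing the quadratic form as the kernel of $\calN_{\vtheta}\left(\htheta,\vH^{-1}(\htheta)\right)$, the whole task reduces to showing that $r_{\ndata}$ is asymptotically negligible on the region that carries the posterior mass, and that the normalizer converges to that of the Gaussian.

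The key step exploits the iid structure $f(\xbold_{\ndata}|\vtheta)=\prod_{\itime=1}^{\ndata}f(x_{\itime}|\vtheta)$, so that $\vL(\vtheta)=-\sum_{\itime}\log f(x_{\itime}|\vtheta)-\log f(\vtheta)$. Each derivative of $\vL$ is then a sum of $\ndata$ contributions plus an $O(1)$ prior term, whence the curvature $\vH(\htheta)$ grows like $\ndata$ and the third derivatives are also $O(\ndata)$. Since the curvature scales with $\ndata$, the posterior concentrates on a neighbourhood of $\htheta$ of radius $O(\ndata^{-1/2})$; I would make this precise through the change of variable $u=\sqrt{\ndata}\,(\vtheta-\htheta)$. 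Under this rescaling the quadratic term stays $O(1)$, the cubic remainder is $O\!\left(\ndata\cdot\ndata^{-3/2}\right)=O(\ndata^{-1/2})\to0$, and, by the law of large numbers, $\tfrac{1}{\ndata}\vH(\htheta)$ converges to the Fisher information $\Fisher{\vtheta^{*}}$ at the true parameter $\vtheta^{*}$, guaranteeing a non-degenerate positive-definite limiting curvature.

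To close the argument I would split the integral over $\vtheta$ into an inner ball of radius $\delta_{\ndata}\to0$ chosen so that $\sqrt{\ndata}\,\delta_{\ndata}\to\infty$, and its complement. On the inner ball the quadratic term dominates and the rescaled density converges pointwise to the standard normal; on the complement I would use consistency of $\htheta$ together with a uniform lower bound on $\vL(\vtheta)-\vL(\htheta)$ (furnished by the regularity conditions) to show the tail mass decays exponentially in $\ndata$. Combining the two contributions, and checking that the two normalizing constants agree in the limit, yields convergence of $f(\vtheta|\xbold_{\ndata})$ to $\calN_{\vtheta}\left(\htheta,\vH^{-1}(\htheta)\right)$. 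The hard part is precisely this remainder control: it is where the regularity conditions---consistency of the MAP $\htheta$, uniform integrability and domination of the third derivatives on shrinking neighbourhoods, and non-singularity of $\Fisher{\vtheta^{*}}$---must be marshalled carefully, since without them neither the concentration of the posterior mass nor the negligibility of $r_{\ndata}$ can be guaranteed, and the Gaussian limit may fail.
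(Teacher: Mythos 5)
Your proposal is correct and follows essentially the same route the paper intends: the paper states this proposition without its own proof, citing \citet{ch2:BK:Bernardo:Bayes94} (and, historically, Laplace and Le Cam), and its only groundwork is the second-order Taylor expansion (\ref{eq:ch4:Taylor}) of $\vL(\vtheta)$ about a point with vanishing gradient---precisely the first step of your Laplace-approximation argument. Your rescaling $u=\sqrt{\ndata}\,(\vtheta-\htheta)$, the $O(\ndata^{-1/2})$ cubic-remainder bound, and the inner-ball/tail split are exactly the standard remainder control carried out in the cited references, under the same regularity conditions you name (consistency of $\htheta$, dominated higher derivatives on shrinking neighbourhoods, non-singular Fisher information).
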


The asymptotic posterior normality was firstly proposed and rigorously
proved in {[}\citet{ch4:origin:posteriorCLT:Laplace1810}{]} and {[}\citet{ch4:origin:posteriorCLT:LeCam53}{]},
respectively. Note that, given the iid observation model, we also
have a special converged form $\vH(\htheta_{ML})\rightarrow\ndata\calI(\htheta_{ML})$
as $\ndata\rightarrow\infty$, where $\htheta_{ML}=\arg\max_{\vtheta}f(\xbold_{\ndata}|\vtheta)$
{[}\citet{ch4:bk:Gelman03,ch2:BK:Bernardo:Bayes94}{]}. However, unless
the prior is uniform, the posterior $\posterior$ does not necessarily
converge to $\calN_{\vtheta}\left(\htheta_{ML},\vH^{-1}(\htheta_{ML})\right)$.

\section{Distributional approximation}

The computation of posterior distribution $\posterior$ is obviously
the main focus of Bayesian theory. However, the computation of the
posterior form via Bayes' rule (\ref{eq:ch4:BAYES's rule}) is often
intractable in practice. A common solution is, therefore, to use distributional
approximation $\funftilde{\vtheta}$ of $\posterior$, in which the
form $\funftilde{\vtheta}$ is tractable. The tractability means that
the computation can be carried out via a closed-form formula and/or
can be determined analytically in polynomial time.

The study of such approaches was the main reason for the revival of
Bayesian methodology in the 1980s. In this subsection, the most important
approximations will be briefly reviewed.

\subsection{Deterministic approximations}

\subsubsection{Certainty equivalence (CE) approximation \label{subsec:chap4:CE-approx}}

When moments of $\posterior$ are needed, but posterior form is hard
to derive, we can confine the posterior distribution into a single
point $\htheta(\xbold)$, as follows:

\begin{equation}
\funftilde{\vtheta}=\delta(\vtheta-\htheta(\xbold))\label{eq:ch4:approx:CE}
\end{equation}

Note that, by the sifting property, the functional moments $\widehat{\fung{\vtheta}}=E_{\posterior}\fung{\vtheta}$
can be approximated via substitution, as follows:

\begin{eqnarray}
\widehat{\fung{\vtheta}} & \approx & E_{\funftilde{\vtheta}}\fung{\vtheta}\label{eq:ch4:CE:moment}\\
 & = & E_{\delta(\vtheta-\htheta(\xbold))}\fung{\vtheta}\nonumber \\
 & = & \fung{\htheta(\xbold)}\label{eq:ch4:CE:substitution}
\end{eqnarray}

In the literature, this approximation (\ref{eq:ch4:CE:substitution})
is widely known as the plug-in substitution technique {[}\citet{ch4:BK:Bible:Robert07}{]}.
However, it will be called the Certainty Equivalent (CE) approximation
in this thesis, owing to its expectation form in (\ref{eq:ch4:CE:moment})
and the fact that it encodes all the uncertainty about $\vtheta$
\textit{a posteriori} by a single value $\htheta(\xbold)$ (\ref{eq:ch4:approx:CE}).
Although this CE approximation concept (\ref{eq:ch4:CE:moment}) is
different from the exact solution via CE principle (\ref{eq:ch4:CE}),
they will coincide if we can assign the CE, i.e. $\htheta(\xbold)=\htheta_{CE}(\xbold)$,
in (\ref{eq:ch4:approx:CE}) that satisfies (\ref{eq:ch4:CE}). The
name CE hence reflects the concept of distributional representation
via a representative value. In practice, $\htheta(\xbold)$ is often
chosen as the mean, mode, median of posterior distribution $\posterior$,
or as other minimum risk estimates (\ref{eq:ch4:Bayesian_Estimator}).

\subsubsection{Laplace approximation}

Owing to asymptotic posterior normality in Proposition \ref{prop:ch4:posteriorCLT},
the posterior $\posterior$ can be approximated via its asymptotic
form, i.e. $\funftilde{\vtheta}=\calN_{\vtheta}\left(\htheta,\vH^{-1}(\htheta)\right)$,
where $\htheta$ is the MAP estimate (mode) of $\posterior$. The
quality of this approximation obviously depends on the number of observations
and is typically poor in small samples. 

\subsubsection{MaxEnt approximation}

The posterior $\posterior$ can also be approximated via MaxEnt technique
in Section \ref{subsec:chap4:MaxEnt-Prior}, if distributional class
$\posterior\in\calF_{m}$ (\ref{eq:ch4:MaxEnt:mean_constraint}) is
already given. Similarly to derivation of MaxEnt prior, the MaxEnt
posterior $\funftilde{\vtheta}\in\calF_{m}$ is also a member of CEF
class, as follows:

\begin{equation}
\funftilde{\vtheta}\propto\exp(-\sum_{\itime}g_{\itime}(\vtheta)\lambda_{\itime}(\xbold))\label{eq:ch4:approx:MaxEnt:post}
\end{equation}
where Lagrangre multipliers $\lambda_{\itime}(\xbold)$ depend on
data $\xbold$ in this case. Note that, MaxEnt is a free-form variational
technique, since CEF form (\ref{eq:ch4:approx:MaxEnt:post}) of $\funftilde{\vtheta}$
is not fixed during approximation, but merely a solution of free-form
Entropy maximization process. 

\subsection{Variational Bayes (VB) approximation \label{subsec:chap4:Variational-Bayes-(VB)}}

In this thesis, the main deterministic approximation is the free-form
(variational) VB approximation. Let us provide a brief review of VB
and its variants in this subsection.

\subsubsection{Mean field theory}

The term ``variational'' originates from the term ``calculus of
variations'' {[}\citet{ch4:PhDThesis:VB:ICA:2002}{]}, in which the
(optimum) value of a definite integral (or a functional) deterministically
depends on the function in the argument of that integral {[}\citet{ch4:bk:VB:variational_calculus:1990}{]}.
The idea of the VB approximation has its roots in mean field theory
(MFT), which is originally a statistical quantum mechanics term, although
the definition of MFT was not specific in that early era {[}\citet{ch4:bk:VB:Mean_field:thermostatistics}{]}.
Loosely speaking, the MFT originally represents a technique for approximating
an interacting particle model by another non-interacting particle
model, such that the Helmholz free-energy is corrected up to the first
order {[}\citet{ch4:bk:VB:Mean_field:thermostatistics}{]}. The MFT
was later defined as a deterministic approximation for the expected
value of individual quantities in a generic statistical model, as
firstly introduced in neural networks in {[}\citet{ch4:origin:VB_in_neural_network:87}{]}. 

In Bayesian learning, the MFT was called ``ensemble learning'' {[}\citet{ch4:art:VB:MacKay:ensemble_learning}{]}
and re-defined as an approximate distribution, from a class of ``separable''
(i.e. independent) variables, to an arbitrary distribution, such that
the ``variational free-energy'' from the approximate distribution
to original distribution is minimized. The MFT was then, once again,
re-defined as the Variational Bayes method {[}\citet{ch4:art:VB:Attias:1999,ch4:art:VB:Jordan:2000}{]},
which minimizes the variational free-energy via the iterative Expectation-Maximization
(EM) and an iterative EM-like algorithm, called the VB EM algorithm
{[}\citet{ch4:PhdThesis:VB:VEM:2003}{]}. Note that, the methods based
on variational free-energy above mostly focus on point estimates and
neglect the optimization of the distributional form within the class
of approximate distributions of independent variables. 

Finally, the VB methodology was properly defined in {[}\citet{ch4:BK:AQUINN_06}{]}
as a free-form distributional approximation in the class of independent
variables, such that the Kullback-Leibler divergence (KLD) from the
approximate distribution to the original distribution is minimized.
An Iterative VB (IVB) algorithm was also proposed in {[}\citet{ch4:BK:AQUINN_06}{]}
in order to reach the local minimum of the KLD via an iterative gradient-based
method. Because this free-form definition of VB approximation is more
consistent with Bayesian methodology, this thesis will adopt this
VB approach.

\subsubsection{Iterative VB algorithm}

Let us consider a binary partition of parameters $\vtheta=\{\vtheta_{\itime},\theta_{\backslash\itime}\}$,
where $\theta_{\backslash i}$ denotes the complement of $\theta_{i}$
in $\vtheta$, i.e. the joint model $\joint$ has the following form:

\begin{equation}
\joint=f(\xbold,\vtheta_{\itime},\theta_{\backslash\itime})\label{eq:ch4:VB:joint}
\end{equation}
Then, the purpose of VB method is to seek an approximated distribution
$\funftilde{\vtheta}\in\calF_{c}$ in independent distribution class
$\calF_{c}:\ \breve{f}(\vtheta|\xbold)=\breve{f}(\vtheta_{\itime}|\xbold)\breve{f}(\vtheta_{\backslash\itime}|\xbold)$,
for which the Kullback-Leibler divergence $KLD_{\breve{f}||f}\TRIANGLEQ KLD(\breve{f}(\vtheta|\xbold)||\posterior)=E_{\breve{f}(\vtheta|\xbold)}\log\left(\frac{\breve{f}(\vtheta|\xbold)}{\posterior}\right)$
is minimized. 
\begin{thm}
(Iterative VB (IVB) algorithm) {[}\citet{ch4:BK:AQUINN_06}{]} \label{thm:chap4:Iterative-VB-(IVB)}Given
an arbitrary initial distribution $\ftilde^{[0]}(\vtheta|\xbold)=\ftilde^{[0]}(\theta_{i}|\xbold)\ftilde^{[0]}(\theta_{\backslash i}|\xbold)$,
the IVB algorithm updates VB-marginals in iterative cycle $\nu=1,2,\ldots$
until $\kldff$ is converged to a local minimum, as follows:

\begin{eqnarray}
\ftilde^{[\nu]}(\theta_{i}|\xbold) & \propto & \exp(E_{\ftilde^{[\nu-1]}(\theta_{\backslash i}|\xbold)}\log\joint)\label{eq:ch4:IVB}\\
\ftilde^{[\nu]}(\theta_{\backslash i}|\xbold) & \propto & \exp(E_{\ftilde^{[\nu]}(\theta_{\itime}|\xbold)}\log\joint)\nonumber 
\end{eqnarray}
\end{thm}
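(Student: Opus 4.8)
The plan is to recognize that minimizing $\kldff$ over the independence class $\calF_{c}$ is a coordinate-wise free-form optimization, and to solve each coordinate subproblem by re-casting the objective as a single Kullback--Leibler divergence of type (\ref{eq:chap4:DEF=00003DKLD}), which is then minimized by Gibbs' inequality. First I would decompose the divergence: writing $\posterior=\joint/f(\xbold)$ and imposing the factorization $\ftilde(\vtheta|\xbold)=\ftilde(\theta_{i}|\xbold)\ftilde(\theta_{\backslash i}|\xbold)$ that defines $\calF_{c}$, one obtains
\[
\kldff=E_{\ftilde(\theta_{i}|\xbold)}\log\ftilde(\theta_{i}|\xbold)+E_{\ftilde(\theta_{\backslash i}|\xbold)}\log\ftilde(\theta_{\backslash i}|\xbold)-E_{\ftilde(\vtheta|\xbold)}\log\joint+\log f(\xbold),
\]
where $\log f(\xbold)$ is constant in $\vtheta$ and hence irrelevant to the optimization.

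Next, to update $\ftilde(\theta_{i}|\xbold)$ with $\ftilde(\theta_{\backslash i}|\xbold)$ held fixed, I would isolate the terms that depend on $\ftilde(\theta_{i}|\xbold)$. Defining the tilted density $\ftilde^{*}(\theta_{i}|\xbold)\propto\exp(E_{\ftilde(\theta_{\backslash i}|\xbold)}\log\joint)$ --- precisely the right-hand side of the first line of (\ref{eq:ch4:IVB}) --- the $\theta_{i}$-dependent part of $\kldff$ reduces, up to a term independent of $\ftilde(\theta_{i}|\xbold)$, to $E_{\ftilde(\theta_{i}|\xbold)}\log\bigl(\ftilde(\theta_{i}|\xbold)/\ftilde^{*}(\theta_{i}|\xbold)\bigr)$, which is a divergence of type (\ref{eq:chap4:DEF=00003DKLD}). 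By Gibbs' inequality this is non-negative and vanishes exactly when $\ftilde(\theta_{i}|\xbold)=\ftilde^{*}(\theta_{i}|\xbold)$, establishing the first VB-marginal update. Since $\theta_{i}$ and $\theta_{\backslash i}$ enter the partition symmetrically, the identical argument yields the second update for $\ftilde(\theta_{\backslash i}|\xbold)$.

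It then remains to argue convergence. Because each update returns the \emph{global} minimizer of $\kldff$ along its coordinate while the complementary marginal is frozen, the divergence cannot increase across the alternating cycles $\nu=1,2,\ldots$; being bounded below by $0$, the monotone sequence $\kldff^{[\nu]}$ converges, and any fixed point satisfies both stationarity conditions in (\ref{eq:ch4:IVB}) simultaneously and is therefore a local minimum of $\kldff$ on $\calF_{c}$.

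I expect the main obstacle to be the rigorous justification of the \emph{free-form} (rather than parametric) optimization in each coordinate subproblem: one must verify that the unconstrained functional minimum over all normalized densities $\ftilde(\theta_{i}|\xbold)$ is genuinely attained at $\ftilde^{*}(\theta_{i}|\xbold)$, and that the interchange of integration order implicit in $E_{\ftilde(\vtheta|\xbold)}\log\joint=E_{\ftilde(\theta_{i}|\xbold)}E_{\ftilde(\theta_{\backslash i}|\xbold)}\log\joint$ is legitimate. The completing-the-divergence device avoids an explicit calculus-of-variations computation with a Lagrange multiplier for the normalization constraint, but integrability of $\log\joint$ against the VB-marginals must still be assumed as a regularity condition so that $\ftilde^{*}$ is a proper density.
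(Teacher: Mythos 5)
Your proof is correct: the paper itself states this theorem without proof, citing \citet{ch4:BK:AQUINN_06}, and your completing-the-divergence argument (split $\kldff$ using $\posterior=\joint/f(\xbold)$, freeze one factor, recognize the $\theta_{i}$-dependent part as a divergence of type (\ref{eq:chap4:DEF=00003DKLD}) against the tilted density, and invoke Gibbs' inequality, then monotonicity plus boundedness below for convergence) is exactly the standard derivation in that reference. The only soft spot --- the inference that a fixed point of the alternating coordinate-wise minimization is a \emph{local minimum} rather than merely a stationary point of $\kldff$ on $\calF_{c}$ --- is glossed in precisely the same way by the paper, which remarks after the theorem only that the scheme is gradient-based and so guarantees a local, not global, minimizer.
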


Note that, because $\ftilde^{[\iVB]}(\theta|\xbold)$ in IVB algorithm
(\ref{eq:ch4:IVB}) results from a gradient-based technique, convergence
to the global minimum is not guaranteed {[}\citet{ch4:BK:AQUINN_06}{]}.
Hence, $\ftilde^{[\infty]}(\vtheta|\xbold)$ is a local minimizer
of $KLD_{\breve{f}||f}$.

In practice, the computation of expectation in IVB algorithm (\ref{eq:ch4:IVB})
might be prohibitive or intractable. There are, however, some cases
in which this intractability can be avoided, as presented below.

\subsubsection{Separable-in-parameter (SEP) family}

From IVB cycles (\ref{eq:ch4:IVB}), it is feasible to recognize that
there exists a tractable class of joint distribution, such that the
IVB algorithm is tractable, as follows:
\begin{defn}
\label{DEF:(Separable-in-parameter-(SEP)}(Separable-in-parameter
(SEP) family) {[}\citet{ch4:BK:AQUINN_06}{]} The joint distribution
$\joint$ is said to belong to SEP family if its sub-parameters can
be split between separated kernels $\fung{\cdot}$and $\funh{\cdot}$,
as follows: 

\begin{equation}
\log\joint=\fung{\vtheta_{\itime},\xbold}\funh{\vtheta_{\backslash\itime},\xbold}\label{eq:ch4:VB:SEP_family}
\end{equation}
\end{defn}

Substituting the joint model (\ref{eq:ch4:VB:SEP_family}) back into
(\ref{eq:ch4:IVB}), the IVB scheme now becomes:

\begin{eqnarray}
\ftilde^{[\nu]}(\theta_{i}|\xbold) & \propto & \exp\left(\fung{\vtheta_{\itime},\xbold}\widehat{\funh{\vtheta_{\backslash\itime},\xbold}}^{[\iVB-1]}\right)\label{eq:ch4:IVB-SEP}\\
\ftilde^{[\nu]}(\theta_{\backslash i}|\xbold) & \propto & \exp\left(\widehat{\fung{\vtheta_{\itime},\xbold}}^{[\iVB]}\funh{\vtheta_{\backslash\itime},\xbold}\right)\nonumber 
\end{eqnarray}
where the iterative functional moments (called the VB moments) are
defined as: 
\begin{eqnarray*}
\widehat{\funh{\vtheta_{\backslash\itime},\xbold}}^{[\iVB-1]} & \TRIANGLEQ & E_{\ftilde^{[\nu-1]}(\theta_{\backslash i}|\xbold)}\funh{\vtheta_{\backslash\itime},\xbold}\\
\widehat{\fung{\vtheta_{\itime},\xbold}}^{[\iVB]} & \TRIANGLEQ & E_{\ftilde^{[\nu]}(\theta_{\itime}|\xbold)}\fung{\vtheta_{\itime},\xbold}
\end{eqnarray*}

From (\ref{eq:ch4:IVB-SEP}), we can see that VB marignals are available
if the VB moments can be computed. The advantage of the separability
constant is that these computations remain invariant (i.e. the integral
in each VB moment is not a function of $\iVB$). In this thesis, we
are only interested in SEP family for the IVB algorithm, owing to
this tractability property. 

Note that, similar to the Exponential Family (\ref{eq:ch4:EF:obs}),
the key motivation of the SEP family is to exploit the separability
between functional variables in order to achieve the tractability
in integral computation. In the former case, the computation in normalizing
constant is solved via separability between parameters and observed
data. In the latter case, the computation in IVB expectation is feasible
owing to separability between sub-parameters, given observed data.

\subsubsection{Functionally constrained VB (FCVB) approximation \label{subsec:chap4:Functionally-Constraint-VB}}

Another solution for tractably computing (\ref{eq:ch4:IVB}) is to
project one or all of the VB-marginals $\ftilde^{[\nu]}(\theta_{\backslash i}|\xbold)$,
$\ftilde^{[\nu]}(\theta_{\itime}|\xbold)$ into functionally constrained
classes $\fdelta^{[\nu]}(\theta_{\backslash i}|\xbold)$, $\fdelta^{[\nu]}(\theta_{i}|\xbold)$,
in particular the CE class (\ref{eq:ch4:approx:CE}), before they
are used in the expectation step of the IVB cycle (\ref{eq:ch4:IVB}).
This approximation scheme is called the FCVB approximation. 

In this way, the well-known Expectation-Maximization (EM) algorithm
can be recognized as a special case of FCVB, in which $\ftilde^{[\nu]}(\theta_{\backslash i}|\xbold)$
is projected to its local MAP point $\widehat{\theta_{\backslash i}}$,
i.e. $\fdelta^{[\nu]}(\theta_{\backslash i}|\xbold)=\delta(\theta_{\backslash i}-\widehat{\theta_{\backslash i}}^{[\nu]})$,
while $\ftilde^{[\nu]}(\theta_{i}|\xbold)$ is kept unchanged {[}\citet{ch4:BK:AQUINN_06}{]}.

Similarly, another form of FCVB is when both VB-marginals $\ftilde^{[\nu]}(\theta_{\backslash i}|\xbold)$,
$\ftilde^{[\nu]}(\theta_{\itime}|\xbold)$ are each projected into
their local MAP point $\fdelta^{[\nu]}(\theta_{\backslash i}|\xbold)=\delta(\theta_{\backslash i}-\widehat{\theta_{\backslash i}}^{[\nu]})$,
$\fdelta^{[\nu]}(\theta_{i}|\xbold)=\delta(\theta_{i}-\widehat{\theta_{i}}^{[\nu]})$,
respectively, as follows: 
\begin{lem}
\label{lem:(Iterative-FCVB-algorithm)}(Iterative FCVB algorithm)
Given an arbitrary initial value $\htheta^{[0]}=\{\widehat{\theta_{i}}^{[0]},$
$\widehat{\theta_{\backslash i}}^{[0]}\}$, the distribution $\fdelta^{[\iVB]}(\vtheta|\xbold)\in\calF_{\delta}:\ \breve{\fdelta}(\vtheta)=\delta(\theta_{i}-\widehat{\theta_{i}})\delta(\theta_{\backslash i}-\widehat{\theta_{\backslash i}})$
that locally minimizes $KLD(\fdelta(\vtheta|\xbold)||f(\vtheta|\xbold))$
can be found via Iterative FCVB algorithm at cycle $\nu=1,2,\ldots,$
as follows:

\textup{
\begin{eqnarray}
\widehat{\theta_{i}}^{[\nu]} & = & \arg\max_{\theta_{i}}(\exp(E_{\fdelta^{[\nu-1]}(\theta_{\backslash i}|\xbold)}\log\joint))\label{eq:ch4:FCVB:CE}\\
 & = & \arg\max_{\theta_{i}}f(\xbold,\theta_{i},\vtheta_{\backslash i}=\widehat{\theta_{\backslash i}}^{[\nu-1]})\nonumber \\
\widehat{\theta_{\backslash i}}^{[\nu]} & = & \arg\max_{\vtheta_{\backslash i}}(\exp(E_{\fdelta^{[\nu]}(\theta_{i}|\xbold)}\log\joint))\nonumber \\
 & = & \arg\max_{\vtheta_{\backslash i}}f(\xbold,\vtheta_{\itime}=\widehat{\theta_{i}}^{[\iVB]},\vtheta_{\backslash i})\nonumber 
\end{eqnarray}
}
\end{lem}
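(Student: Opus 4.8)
The plan is to treat this lemma as the specialisation of the Iterative VB algorithm (Theorem~\ref{thm:chap4:Iterative-VB-(IVB)}) to the doubly functionally-constrained class $\calF_{\delta}$, in which \emph{both} VB-marginals are confined to the CE (Dirac-delta) class of Section~\ref{subsec:chap4:CE-approx}. Since a member of $\calF_{\delta}$ is parameterised solely by the two locations $\widehat{\theta_{i}}$ and $\widehat{\theta_{\backslash i}}$, I would establish the result by coordinate descent on these locations, minimising $KLD(\fdelta(\vtheta|\xbold)||\posterior)$ over one location while holding the other fixed, and showing that each such sub-minimisation reproduces one line of (\ref{eq:ch4:FCVB:CE}).

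For the $\theta_{i}$-update, with $\widehat{\theta_{\backslash i}}=\widehat{\theta_{\backslash i}}^{[\nu-1]}$ held fixed, I would first expand the divergence as
\begin{equation}
KLD(\fdelta(\vtheta|\xbold)||\posterior)=E_{\fdelta(\vtheta|\xbold)}\log\fdelta(\vtheta|\xbold)-\log f(\widehat{\theta_{i}},\widehat{\theta_{\backslash i}}^{[\nu-1]}|\xbold),
\end{equation}
where the first term is the (differential) entropy of the delta. Interpreting the delta as the limit of a location family of sharply concentrated densities, this entropy term depends only on the (vanishing) scale and not on the location $\widehat{\theta_{i}}$, so it is constant under the optimisation and drops out. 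Minimising the divergence over $\widehat{\theta_{i}}$ is therefore equivalent to maximising $\log f(\widehat{\theta_{i}},\widehat{\theta_{\backslash i}}^{[\nu-1]}|\xbold)$, and since $\posterior\propto\joint$ this yields the conditional MAP $\widehat{\theta_{i}}^{[\nu]}=\arg\max_{\theta_{i}}f(\xbold,\theta_{i},\vtheta_{\backslash i}=\widehat{\theta_{\backslash i}}^{[\nu-1]})$, i.e. the second form in (\ref{eq:ch4:FCVB:CE}).

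To recover the first form (exhibiting the FCVB update as the CE projection, onto the mode, of the free-form IVB marginal (\ref{eq:ch4:IVB}) in which the expectation is taken against $\fdelta^{[\nu-1]}$ rather than $\ftilde^{[\nu-1]}$), I would invoke the sifting property of the delta, i.e. the CE substitution (\ref{eq:ch4:CE:substitution}) applied to $\fdelta^{[\nu-1]}(\theta_{\backslash i}|\xbold)=\delta(\theta_{\backslash i}-\widehat{\theta_{\backslash i}}^{[\nu-1]})$:
\begin{equation}
\exp\left(E_{\fdelta^{[\nu-1]}(\theta_{\backslash i}|\xbold)}\log\joint\right)=f(\xbold,\theta_{i},\vtheta_{\backslash i}=\widehat{\theta_{\backslash i}}^{[\nu-1]}),
\end{equation}
so that the $\arg\max_{\theta_{i}}$ of the two sides coincide. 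The $\theta_{\backslash i}$-update then follows by the symmetric argument, now holding $\widehat{\theta_{i}}=\widehat{\theta_{i}}^{[\nu]}$ fixed. Because each half-step minimises the divergence in its own block, $KLD$ is non-increasing across the cycle, and the iteration converges to a coordinate-wise (hence local) minimiser in $\calF_{\delta}$, inheriting the local-only guarantee of the gradient-based IVB scheme.

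The main obstacle I anticipate is making the entropy-dropping step rigorous: the differential entropy of a Dirac delta is formally $-\infty$, so the expansion above is meaningful only as a limit. I would address this by carrying the concentrated-density regularisation explicitly (e.g. a Gaussian of variance $\varepsilon\to 0$, or any kernel with a location--scale parameterisation), verifying that the location-dependent part of the regularised $KLD$ is exactly $-\log f(\widehat{\theta_{i}},\widehat{\theta_{\backslash i}}|\xbold)+o(1)$ so that the argmax over the location is unaffected in the limit, and only then passing to the delta. This also pins down precisely in what sense ``locally minimizes $KLD$'' is to be read for the degenerate class $\calF_{\delta}$.
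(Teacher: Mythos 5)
Your proposal is correct and takes essentially the same route as the paper's own proof: the sifting property reduces the divergence to $-\log f(\vtheta=\htheta|\xbold)$ up to a term independent of the location, each half-step of (\ref{eq:ch4:FCVB:CE}) is the conditional MAP of the joint $\joint$ (so the posterior value at $\htheta$ never decreases), and hence the KLD is non-increasing and settles at a local joint MAP at convergence. The only difference is rigor, not route: the paper simply asserts $KLD(\fdelta(\vtheta|\xbold)||\posterior)=\log\left(1/f(\vtheta=\htheta|\xbold)\right)$ via sifting, silently discarding the delta's divergent self-entropy, whereas you carry the concentrated-density regularisation explicitly --- a refinement of the paper's argument rather than a departure from it.
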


\begin{proof}
By definition, we have: 
\[
KLD(\fdelta(\vtheta|\xbold)||f(\vtheta|\xbold))=E_{\fdelta(\vtheta|\xbold)}\log\left(\frac{\fdelta(\vtheta|\xbold)}{\posterior}\right)=\log\left(\frac{1}{f(\vtheta=\htheta|\xbold)}\right)
\]
owing to the sifting property of $\fdelta(\vtheta|\xbold)=\delta(\vtheta-\htheta)$.
The local minimization can be seen intuitively by comparing (\ref{eq:ch4:FCVB:CE})
with (\ref{eq:ch4:VB:joint}). The value of posterior at any $\htheta$,
i.e. $f(\vtheta=\htheta|\xbold)=f(\xbold,\vtheta=\htheta)/f(\xbold)$,
does not decrease at any step in (\ref{eq:ch4:FCVB:CE}). Hence, the
value of $KLD(\fdelta(\vtheta|\xbold)||f(\vtheta|\xbold))$ is locally
minimized at a local MAP $\htheta$ at the convergence.
\end{proof}
Note that, the Iterative FCVB is not a double approximation, i.e.
it is not an approximation of VB approximation. Both IVB and Iterative
FCVB schemes are local minimizers of KLD distance from an independent
class, namely $\calF_{c}$ and $\calF_{\delta}$, respectively, to
the original distribution. However, because of similarity between
(\ref{eq:ch4:IVB}) and (\ref{eq:ch4:FCVB:CE}), the Iterative FCVB
approximation is considered as variant of IVB approximation in this
thesis.

It can be seen that the iterative CE updates (\ref{eq:ch4:FCVB:CE})
in FCVB algorithm is identical to Iterated Conditional Modes (ICM)
algorithm {[}\citet{ch4:art:ICM:Besag86}{]}, which is well known
to yield a local joint MAP estimate $\htheta=\{\widehat{\theta_{i}},\widehat{\theta_{\backslash i}}\}$
of original distribution $f(\vtheta|\xbold)$ at convergence {[}\citet{ch4:art:ICM:localMAP_2006}{]}.
Nevertheless, the name ``Iterative FCVB algorithm'' is preferred
to ``ICM algorithm'' in this thesis, since performance of this scheme
is easier to explain via independence property of distributional VB
approximation. 

\subsubsection{Non-iterative VB-related approximations}

In the literature, there are other non-iterative approximations that
can be considered as variants of the VB scheme. Although they will
not be used in this thesis, let us briefly review three typical cases
for the sake of completeness.

- If the purpose of the approximation is to minimize $\KLD_{MR}\TRIANGLEQ KLD(\posterior||\funftilde{\vtheta})$
instead of $\kldff$ in the VB scheme, the minimizer in this case
is $\funftilde{\vtheta}$ $=$ $f(\vtheta_{\itime}|\xbold)$ $f(\vtheta_{\backslash i}|\xbold)$,
i.e. the product of true posterior marginals. This scheme is widely
known as Minimum Risk (MR) approximation. 

- Obviously, the above MR approximation may not be interesting, since
those posterior marginals may be hard to compute in the first place.
However, if one posterior marginal, say $f(\vtheta_{\backslash i}|\xbold)$,
is given, the VB-marginal $\funftilde{\vtheta_{\itime}}$ can be found
by a single step of IVB algorithm (\ref{eq:ch4:IVB}), with $\funftilde{\vtheta_{\backslash i}}$
replaced by the true marginal $f(\vtheta_{\backslash i}|\xbold)$.
This non-iterative scheme is called the Quasi-Bayes approximation
in the literature. 

- If the true marginal $f(\vtheta_{\backslash i}|\xbold)$ in above
Quasi-Bayes scheme is not given, we can still replace $\funftilde{\vtheta_{\backslash i}}$
with a restricted form $\overline{f}(\vtheta_{\backslash i}|\xbold)$
, which can be imposed via some constraints on $\vtheta_{\backslash i}$.
Such a single-step VB scheme is called Restricted VB approximation. 

\subsection{Stochastic approximation}

A wide class of distributional approximation is the stochastic approximation,
in which the posterior $\posterior$ can be empirically approximated,
as follows:

\begin{equation}
\funftilde{\vtheta}=\frac{1}{\nstate}\sum_{\istate=1}^{\nstate}\delta(\vtheta-\vtheta^{(\istate)})\label{eq:ch4:sampling}
\end{equation}
where $\{\vtheta^{(1)},\vtheta^{(2)},\ldots,\vtheta^{(\nstate)}\}$
is an iid sample set (random sample), i.e. $\vtheta^{(\istate)}\sim\posterior$,
with $\seti{\istate}{\nstate}$. 

In low dimensions, the generation of $\vtheta^{(\istate)}$ can be
implemented via a wide range of sampling techniques, notably inversion,
rejection and importance sampling. In inversion sampling, the value
$F^{(\istate)}$ of cumulative function density (c.d.f) $F(\vtheta)\in[0,1]$
is generated first, while the scalar random variable $\vtheta$ can
be found via inverse function $F^{-1}$, i.e. $\vtheta^{(\istate)}=F^{-1}(F^{(\istate)})$.
In rejection sampling, the sample $\vtheta^{(\istate)}\sim f_{0}(\vtheta)$
is generated from the so-called dominated distribution $f_{0}(\vtheta)\geq f(\vtheta)$,
$\forall\vtheta\in\spaceO$ and, then, each sample $\vtheta^{(\istate)}$
is accepted with probability $p_{\istate}=f(\vtheta^{(\istate)})/f_{0}(\vtheta^{(\istate)})$
or else rejected. In importance sampling, we have $\ftilde(\vtheta)=\frac{1}{\nstate}\sum_{\istate=1}^{\nstate}w_{\istate}\delta(\vtheta-\vtheta^{(\istate)})$
, where each sample $\vtheta^{(\istate)}\sim\pi(\vtheta)$ is generated
from a reference distribution $\pi(\vtheta)$ and the weights are
computed as $w_{\istate}=f(\vtheta^{(\istate)})/\pi(\vtheta^{(\istate)})$. 

In high dimensions, the set $\{\vtheta^{(1)},\vtheta^{(2)},\ldots,\vtheta^{(\nstate)}\}$
can be generated dependently via a homogeneous Markov process $f(\vtheta_{\istate}|\vtheta_{\istate-1})$.
After a transition period $k_{c}$, the set $\{\vtheta^{(\istate)},\ldots,$
$\vtheta^{(\nstate)}\}$, $\istate>k_{c}$, is converged to iid sampling
set of $f_{s}(\vtheta)$, where $f_{s}(\vtheta)$ is, under mild regularity
conditions, the stationary distribution of homogeneous Markov process
$f(\vtheta_{\istate}|\vtheta_{\istate-1})$. This convergence in distribution
is independent of initialization $\vtheta^{(1)}$of this Markov process,
known as the forgetting property. By careful design of Markov transition
kernel $f(\vtheta_{\istate}|\vtheta_{\istate-1})$, we can achieve
the equality $f_{s}(\vtheta)=f(\vtheta)$. This is the basic principle
of Gibbs sampling in Markov Chain Monte Carlo (MCMC) method {[}\citet{ch2:origin:MCMC:Gibbs}{]}.
Another well known technique for designing transition kernel in MCMC
is Metropolis--Hastings algorithm {[}\citet{ch2:origin:MCMC:Metropolis}{]},
whose stationary distribution is the objective distribution. The key
advantage of MCMC is that an intractably high-dimensional distribution
can be tractably generated from multiple sampling steps of low-dimensional
conditional distributions. 

The most important condition of this technique is obviously the repeatability
of $\vtheta$. Then, owing to the sifting property of the Dirac-$\delta$
function, all functional moments of the empirical approximation, $\funftilde{\vtheta}$,
in (\ref{eq:ch4:sampling}) can be point-wise evaluated and, hence,
are always tractable. In prediction or online schemes, where the current
posterior becomes prior of next Bayesian inference step, the empirical
form (\ref{eq:ch4:sampling}) also satisfies the conjugacy principle,
and, hence, is always tractable. This idea is at the heart of particle
filtering for stochastic approximation of the nonlinear filtering
problem for time-variant parameters {[}\citet{ch6:art:VB:filtering:AQuinn08}{]}.
For reducing computational load, some variants of particle filtering,
which do not require a re-sampling step, were recently proposed in
the telecommunications context {[}\citet{ch4:art:particle_filtering:OFDM_2011,ch4:art:partical_filtering:ICASSP_2013}{]}.

\section{Summary}

A brief, but thorough, review for Bayesian techniques was given in
this chapter. It began with emphasis on the joint model, rather than
the posterior distribution, along with clarification on the issue
of subjectivity in belief quantification. 

Initially, the expectation of the loss function was taken with respect
to the joint model of parameters and data, via the axioms of decision
theory, and not with respect to the posterior distribution. The law
of the unconscious statistician (LOTUS) then showed that the mean
of loss function for the joint model, i.e. Bayesian risk, can be approximated
by Monte Carlo sampling and presented in simulation (e.g. as averaged
BER). Since Bayesian risk can be minimized equivalently via the minimum
risk (MR) estimator for posterior expected loss function, this motivated
the review of the posterior distribution, which, in turn, motivated
the reviews of prior and observation models, particularly the conjugacy
property in this chapter. 

Some distributional approximations, both deterministic and stochastic,
were surveyed, with a thorough review of VB and its special case,
functionally constrained VB (FCVB), for reducing Kullback-Leibler
divergence (KLD) associated with approximate distribution. These VB-based
methods will be used for designing novel algorithms in the context
of the hidden Markov chain for digital receivers, in Chapters \ref{=00005BChapter 6=00005D}.


\global\long\def\REAL{\mathbb{R}}%

\global\long\def\DEAL{\mathbb{D}}%

\global\long\def\COMPLEX{\mathbb{C}}%

\global\long\def\RING{\mathcal{R}}%

\global\long\def\MSET{\mathcal{M}}%

\global\long\def\ASET{\mathcal{A}}%

\global\long\def\OMEGA{\Omega}%

\global\long\def\calO{{\cal O}}%

\global\long\def\calX{\mathcal{X}}%

\global\long\def\ndata{n}%

\global\long\def\nstate{m}%

\global\long\def\itime{i}%

\global\long\def\istate{k}%

\global\long\def\funh#1{h\left(#1\right)}%

\global\long\def\fung#1{g\left(#1\right)}%

\global\long\def\seti#1#2{#1=1,2,\ldots,#2}%

\global\long\def\setd#1#2{\{#1{}_{1},#1{}_{2},\ldots,#1_{#2}\}}%

\global\long\def\TRIANGLEQ{\triangleq}%


\global\long\def\Ring{\mathcal{R}^{\calX_{\OMEGA}}}%

\global\long\def\spaceX{\mathcal{X}_{\OMEGA}}%

\global\long\def\noperator{\kappa}%

\global\long\def\ndim{m}%

\global\long\def\xBold{\mathbf{x}}%

\global\long\def\yBold{\mathbf{y}}%

\global\long\def\xHat{\widehat{\mathbf{x}}}%

\global\long\def\xMath{\mathrm{x}}%

\global\long\def\xbar{\bar{\mathrm{x}}}%

\global\long\def\xEta#1{\mathrm{x}_{\eta_{#1}}}%

\global\long\def\xpEta#1{\mathrm{x}_{\eta_{#1}^{+}}}%

\global\long\def\xbEta#1{\mathrm{x}_{\boldsymbol{\eta}_{#1}^{+}}}%

\global\long\def\xReverse#1{x_{[#1]}}%

\global\long\def\xbReverse#1{\mathbf{x}_{[#1]}}%

\global\long\def\Eta#1{\eta_{#1}^{+}}%

\global\long\def\xReverse#1{x_{[#1]}}%

\global\long\def\Etbold{\boldsymbol{\eta}}%

\global\long\def\bracket#1{[#1]}%

\global\long\def\bbracket#1{\boldsymbol{[#1]}}%

\global\long\def\bparent#1{\boldsymbol{(#1)}}%

\global\long\def\bcommon#1{\eta_{[#1]}}%

\global\long\def\kstate{k}%

\global\long\def\iton{i=1,\ldots,n}%

\global\long\def\iinn{i\in\{1,\ldots,n\}}%

\global\long\def\jtom{j=1,\ldots,m}%

\global\long\def\kinm{k\in\{1,\ldots,\ndim\}}%

\global\long\def\ktom{k=1,\ldots,m}%

\global\long\def\fhat{\widehat{\boldsymbol{f}}}%

\global\long\def\fBold{\mathbf{f}}%

\global\long\def\fbold{\boldsymbol{f}}%

\global\long\def\fprofile{f_{p}}%

\global\long\def\gBold{\mathbf{g}}%

\global\long\def\gbold{\boldsymbol{g}}%

\global\long\def\gHat{\widehat{\mathbf{g}}}%

\global\long\def\ghat{\widehat{g}}%

\global\long\def\gbar{\overline{g}}%

\global\long\def\oBold{\boldsymbol{\omega}}%

\global\long\def\COUNT#1#2{\calO_{#1}[#2]}%

\global\long\def\SUM{\boxplus}%

\global\long\def\SSUM#1{\underset{#1}{\boxplus}}%

\global\long\def\PROD{\odot}%

\global\long\def\SET{\mathcal{S}}%

\global\long\def\Sc{\mathcal{S}^{c}}%

\global\long\def\Ssubstract#1{\bar{\mathcal{S}}_{#1}}%

\global\long\def\Pibar{\overline{\pi}}%

\chapter{Generalized distributive law (GDL) for CI structure\label{=00005BChapter 5=00005D}}

A typical scenario in Bayesian inference is the marginalization over
hierarchical and nuisance parameters, as shown in Section \ref{subsec:ch4:Hierarchical-and-nuisance}.
Such a computation can be efficiently computed via the generalized
distributive law (GDL), as explained in this chapter. This GDL scheme
will also reveal efficient algorithms for Markovian model, i.e. a
special case of conditionally independent (CI) model, in Chapter \ref{=00005BChapter 6=00005D}. 

\section{Introduction}

When computing a sequence of operators upon a product of multivariate
functions (factors), the generalized distributive law (GDL) {[}\citet{ch2:origin:GDL:McEliece}{]}
has been proposed for reducing the computational load. Nevertheless,
the proposed computational flow of GDL has to be designed via a graph
representation of those factors. 

In this chapter, we will propose a novel topology representation,
namely conditional independent (CI) structure, for those factors.
This topology divides the factors into separate partitions, across
which the operators can be freely distributed via GDL. Because two
design stages, one for factors and one for operators, are isolated
in this scheme, the total number of arithmetic operators can be tuned
feasibly. This flexibility is useful for designing an optimal reduction
in computational complexity. 

\subsection{Objective functions \label{subsec:chap5:Objective-functions}}

In order to be consistent with the literature, the standard notation
$\{\cdot,\cdot,\ldots,\cdot\}$ for a set with separated partitions
will be applied throughout this chapter. Then, let $\xBold_{\OMEGA}$
be $m$-tuples variables within $\calX_{\OMEGA}$ space, i.e. $\xBold_{\OMEGA}=\{x_{1},\dots,x_{m}\}\in\calX_{\OMEGA}$,
where $\OMEGA$ is the total index set (universe), as follows: 
\begin{equation}
\OMEGA\TRIANGLEQ\{1,\ldots,k,\ldots,\nstate\}\label{eq:ch5:OMEGA}
\end{equation}
and dimension is equal to its cardinality. For simplicity, let us
assume here that all $x_{\kstate}$ belong to the same finite set
$\calX$, i.e. $x_{\kstate}\in\calX_{\kstate}=\calX$, $\ktom$, and
hence: 
\begin{equation}
\xBold_{\OMEGA}\in\calX_{\OMEGA}\TRIANGLEQ\underset{\nstate\ times}{\underbrace{\calX\times\ldots\times\calX}}=\calX^{\nstate}\label{eq:ch5:X:Omega}
\end{equation}
We also denote: 
\begin{eqnarray}
M & \TRIANGLEQ & |\calX|\label{eq:ch5:DEF:M}\\
\nstate & \TRIANGLEQ & |\OMEGA|\nonumber 
\end{eqnarray}
where $|\cdot|$ is the cardinal number of a finite set. Note that
all the results in this chapter can be generalized feasibly to the
case of different sets $\calX_{\kstate}$ and/or the case of continuous
variables {[}\citet{ch5:art:GDL:NewLook04}{]}.

Throughout this chapter, let us define $g:\calX^{|\omega|}\rightarrow\REAL$
as a generic (i.e. wildcard) function $g$ over index set $\omega\subseteq\OMEGA$
of variables $\xBold_{\omega}$. Then, as an imposed model, the main
function, $g(\xBold_{\OMEGA})$, is assumed to be a product of $\ndata$
known factors, as follows:

\begin{equation}
g(\xBold_{\OMEGA})=\prod_{\itime=1}^{\ndata}g(\xBold_{\omega_{i}})\label{eq:ch5:MODEL:x}
\end{equation}
where index sets are defined as $\omega_{i}\subseteq\OMEGA=\{1,\ldots,m\}$
and $\omega_{i}\neq\emptyset$, $\seti{\itime}{\ndata}$, such that:

\begin{equation}
\OMEGA=\omega_{1}\cup\cdots\cup\omega_{n}\label{eq:ch5:OMEGA:model}
\end{equation}

For shortening notation in (\ref{eq:ch5:MODEL:x}), let us also denote
$\gBold_{1:n}\TRIANGLEQ g(\xBold_{\OMEGA})$ and $g_{i}\TRIANGLEQ g(\xBold_{\omega_{i}})$,
which yield a neater form:

\begin{equation}
\gBold_{1:n}\equiv\prod_{i=1}^{n}g_{i}=g(\xBold_{\OMEGA})\label{eq:ch5:MODEL:i}
\end{equation}

Finally, if we define a generic operator ring-product $\PROD$ from
ring theory (see Definition \ref{DEF=00003DRing-product}) instead
of product $\prod$, the model (\ref{eq:ch5:MODEL:i}) can be treated
generally as follows:

\begin{equation}
\gBold_{1:n}\equiv\PROD_{i=1}^{n}g_{i}=g(\xBold_{\OMEGA})\label{eq:ch5:MODEL}
\end{equation}

For illustration, several examples of (\ref{eq:ch5:MODEL:i}) in this
thesis are (\ref{eq:ch4:VB:SEP_family}), (\ref{eq:Joint}) (see e.g.
{[}\citet{ch2:bk:ToddMoon}{]} for more examples in telecommunication
context). 

For later use, let us propose the following definition:
\begin{defn}
\label{DEF:chap5:variable and factor index}\textbf{ }\textit{(Index
of variable and index set of factor)} \\
The index $\kstate\in\OMEGA$ in (\ref{eq:ch5:OMEGA}) is called index
of variable (or variable index). In (\ref{eq:ch5:MODEL:x}-\ref{eq:ch5:OMEGA:model}),
$\omega_{i}$ is called the index set of the factor $g_{i}$ (or the
factor index set of $g_{i}$) in (\ref{eq:ch5:MODEL:i}-\ref{eq:ch5:MODEL}).
\end{defn}

\subsubsection{Single objective function}

In practice, it is often required to compute a sequence of operators
upon a sequence of factors $\gBold_{1:n}$ (\ref{eq:ch5:MODEL:i}).
For example, the objective function might be the output of summation:
\[
\sum_{\xBold_{S}}\gBold_{1:n}=\sum_{\xBold_{S}}g(\xBold_{\OMEGA})=g(\xBold_{\Sc})
\]
or maximization: 
\[
\max_{\xBold_{S}}\gBold_{1:n}=\max_{\xBold_{S}}g(\xBold_{\OMEGA})=g(\xBold_{\Sc})
\]
where $\xBold_{\OMEGA}=\{\xBold_{\Sc},\xBold_{S}\}$ and: 
\begin{equation}
\OMEGA=\{\Sc,\SET\}\label{eq:ch5:OMEGA:single}
\end{equation}

Nevertheless, a generic operator ring-sum $\SUM_{\xBold_{\SET}}$
from ring theory (Definition \ref{DEF=00003DRing-sum}) over $\xBold_{\SET}$
is not necessarily the sum or max. The objective function $g(\xBold_{\Sc})$
is then defined as the output of that operator upon $\gBold_{1:n}$
in (\ref{eq:ch5:MODEL}), as follows:

\begin{equation}
\SUM_{\xBold_{\SET}}\gBold_{1:n}=\SUM_{\xBold_{\SET}}g(\xBold_{\OMEGA})=g(\xBold_{\Sc})\label{eq:ch5:SUM:single}
\end{equation}
When dimension $m$ (\ref{eq:ch5:X:Omega}) is too high, the task
(\ref{eq:ch5:SUM:single}) leads to a heavy computational load in
practice {[}\citet{ch2:origin:GDL:McEliece,ch2:bk:ToddMoon}{]}. 

For later use, let us propose the following definition:
\begin{defn}
\label{DEF:chap5:operator and objective set}\textbf{ }\textit{(Index
set of operator $S$ and objective set $\Sc$) }\\
In (\ref{eq:ch5:OMEGA:single}), the set $S\subseteq\OMEGA$ is called
the index set of operators (or operator index set) and the compliment
$\Sc=\OMEGA\backslash\SET$ is called the objective set (of objective
function).
\end{defn}

\subsubsection{Sequential objective functions}

In practice, it is more often required to compute a sequence of objective
functions, rather than a single objective function. In this case,
the sequential objective functions can be defined as:

\begin{equation}
\SUM_{\xBold_{\SET_{j}}}\gBold_{1:n}=\SUM_{\xBold_{\SET_{j}}}g(\xBold_{\OMEGA})=g(\xBold_{\Sc_{j}}),\ \seti j{\noperator},\label{eq:ch5:SUM:sequential}
\end{equation}
in which:
\begin{equation}
\OMEGA=\{\Sc_{1},\SET_{1}\}=\ldots=\{\Sc_{\noperator},\SET_{\noperator}\}\label{eq:ch5:OMEGA:sequential}
\end{equation}

In a naive approach, we merely apply the computation (\ref{eq:ch5:SUM:single})
$\kappa$ times, each time with different operator index set $\SET_{j}$.
The computational load is, loosely speaking, about $\kappa$-fold
the computational cost in (\ref{eq:ch5:SUM:single}). Because we often
have $\noperator=n$ , this approach becomes impractical for high
$n$.

For efficient computation, we confine ourselves to a special topological
case, defined below (Definition \ref{(Non-overflowed-set)}) as the
non-overflowed condition for the objective sets $\setd{\Sc}{\noperator}$.
The results of the $\noperator$ formulae in (\ref{eq:ch5:SUM:sequential})
can be extracted, in a single sweep, from computational memory of
a single objective function (\ref{eq:ch5:SUM:single}) upon a union
of operator index sets, $\SET=\SET_{1}\cup\ldots\cup\SET_{\noperator}$
(hence the name ``sequential functions''), if that extraction does
not require any re-computation step and does not yield overflowed
memory (hence the name ``non-overflowed''). Note that, as shown
below, the case of scalar objective sets $\Sc_{j}=j$, which is widely
required in practice, always satisfy this condition. 

\subsection{GDL - the state-of-the-art }

If the two operators, $\SUM$ and $\PROD$, satisfy distributive law
of ring theory (Definition \ref{Pre-semiring}), the total number
of these operators in sequential objective functions (\ref{eq:ch5:SUM:sequential})
can be reduced significantly, as firstly formalized in {[}\citet{ch2:origin:GDL:McEliece}{]}
for the case of scalar objective sets $\Sc_{j}=j$. Such a proposal
motivated practical studies of the generalized distributive law (GDL)
in ring theory in order to design efficient algorithms {[}\citet{ch5:bk:ring:Glazek02}{]}.
Nevertheless, the use of GDL in the literature is still modest. The
main effort in the literature so far is to re-interpret known efficient
algorithms for sum and max operators into ring theory, which, in turn,
can generalize those algorithms for all operators $\SUM$ satisfying
GDL, as briefly reviewed below. 

In the early days, the main interest is to generalize some probabilistic
computational algorithms on the joint distribution $f(\xBold_{\OMEGA})$
into ring theory. For example, two well-known algorithms (in Chapter
\ref{=00005BChapter 6=00005D}) in Markov Chain decoder context -
the Viterbi algorithm (VA) for joint maximum-a-posteriori (MAP) and
the Forward-Backward (FB) algorithm (also known as BCJR algorithm
in channel decoder context) for sequence of marginal MAP {[}\citet{ch2:bk:ToddMoon}{]}
- were generalized in {[}\citet{ch5:art:GDL:ViterbiAlgo90}{]} and
{[}\citet{ch5:PhD:Wiberg96,ch5:art:BCJR:McEliece96}{]}, respectively.
In Bayesian networks, the generalized forms of belief propagation
and message-passing algorithms for a sequence of marginals were proposed
in {[}\citet{ch5:PhD:Nielsen01,ch5:art:MP:Kschischang01}{]}. 

Recently, the primary interest in GDL comes from graph theory. The
trend can be considered as having begun with the semi-tutorial paper
{[}\citet{ch2:origin:GDL:McEliece}{]}, which migrated the early results
into graphical learning language. GDL for graph was also applied in
other fields, like circuit design {[}\citet{ch5:art:circuit:Tong96}{]},
automatics {[}\citet{ch5:art:Automatics:MaxPlus10}{]} and entropy
computation in probability {[}\citet{ch5:art:MP:Entropy11}{]}. However,
a drawback for GDL development in graph theory has been the inconsistency
of the semiring concept, which is still under development in modern
algebra {[}\citet{ch5:bk:ring:Glazek02}{]}. Only recently, an attempt
at unifying the ring concepts for graphs was proposed in {[}\citet{ch5:bk:ring:Gondran:Minoux:08}{]}.

\subsection{The aims of this chapter}

From the above review, we can feasibly grasp the reason for emergence
of GDL in graph theory: the key point is that the graph provides a
structure representation of the original model (\ref{eq:ch5:MODEL}).
From that structure, the operators are then distributed directly into
factors $g_{i}$ via GDL. The computational flow, therefore, relied
on extra concepts and algorithms in graphical topology. For example,
application of GDL requires the notion of junction tree in {[}\citet{ch2:origin:GDL:McEliece}{]},
factor graph in {[}\citet{ch5:art:MP:Kschischang01}{]} or elimination
process {[}\citet{ch5:PhD:Nielsen01}{]} in Bayesian networks. This
indirect approach leads to ambiguity and misleading in counting number
of arithmetic operators and, eventually, in reduction of computational
load. 

For a more direct approach, there are three key steps in this chapter:

- A novel representation of (\ref{eq:ch5:MODEL}), namely the conditionally
independent (CI) structure, will be designed via the set algebra on
the factor index sets $\setd{\omega}{\ndata}$. All factors in (\ref{eq:ch5:MODEL:i}-\ref{eq:ch5:MODEL})
will be partitioned into ``bins'' beforehand, such that the variable
indices are conditionally separated (Fig. \ref{fig:Partitions-of-universe}).
Afterwards, the operator's set $\SET$ in (\ref{eq:ch5:OMEGA:single},\ref{eq:ch5:OMEGA:sequential})
will be divided and distributed into these ``bins'', via GDL. This
scheme does not only separate the stage of factor design (\ref{eq:ch5:OMEGA:model})
from the stage of operator design (\ref{eq:ch5:OMEGA:single},\ref{eq:ch5:OMEGA:sequential}),
but it also separates factors (a concept relating to model representation),
from GDL (a concept relating to operators). 

- For a better understanding of GDL, the basic abstract algebra in
ring theory will be provided from a computational perspective. A novel
theorem (Theorem \ref{thm:GDL}), which guarantees computational reduction
in GDL, will also be proved. Owing to this computational approach,
the insight of many inference tasks in probability and, their computational
load, will also be unified with respect to the computational flow
of GDL. 

- In the probability context, which is the main application of GDL
in this thesis and in current research, the equivalence between CI
structure and CI factorization of the joint distribution will be shown.
This equivalence also reveals a hidden consequence of GDL: it does
not only facilitate inference computations, but also implicitly re-factorizes
the original joint distribution. These development will be important
in explaining the tractability of Bayesian inference in HMC model
in Section \ref{sec:chap6:FB and VA via CI} of this thesis. 

\section{Conditionally independent (CI) topology \label{sec:Conditionally-separable}}

From (\ref{eq:ch5:OMEGA},\ref{eq:ch5:OMEGA:model}) and (\ref{eq:ch5:OMEGA:single},\ref{eq:ch5:OMEGA:sequential}),
we can see that there are three different ways to construct the universe
$\OMEGA$: variable indices, factor indices and operator indices,
respectively.

In this section, the topology of variable indices will be exploited
in two tasks: 

- For factor indices: an algorithm will be designed for exploiting
the conditionally independent (CI) structure embedded in the sequence
$\omega_{1},\ldots,\omega_{n}$ (\ref{eq:ch5:OMEGA:model}) and representing
the universe $\OMEGA$ in two sequence of CI partitions, which we
call no-longer-needed (NLN) and first-appearance (FA) variable indices.

- For operator indices: a sequence of distributed sets over those
two CI partitions will be defined. This task sets up recursive computation
of GDL in the sequel.

The topology results of these tasks are illustrated in Fig. \ref{fig:Partitions-of-universe}
and will be explained in step-by-step below.

\begin{figure}
\begin{centering}
\includegraphics[width=0.4\columnwidth]{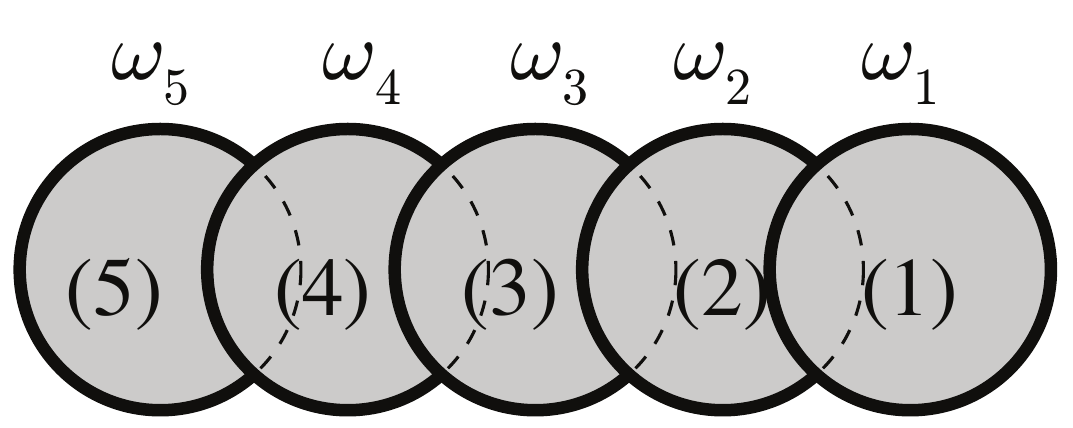}$\ \ \ $\includegraphics[width=0.4\columnwidth]{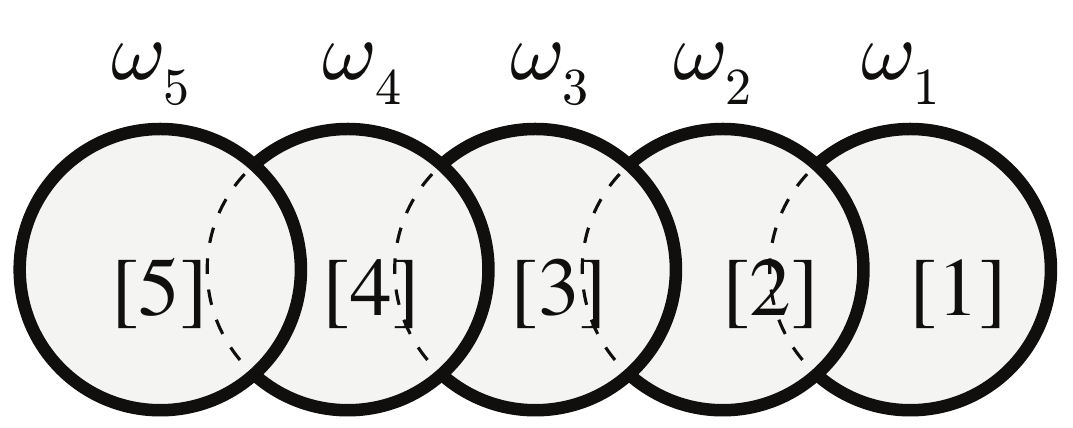}
\par\end{centering}
\begin{centering}
\includegraphics[width=0.4\columnwidth]{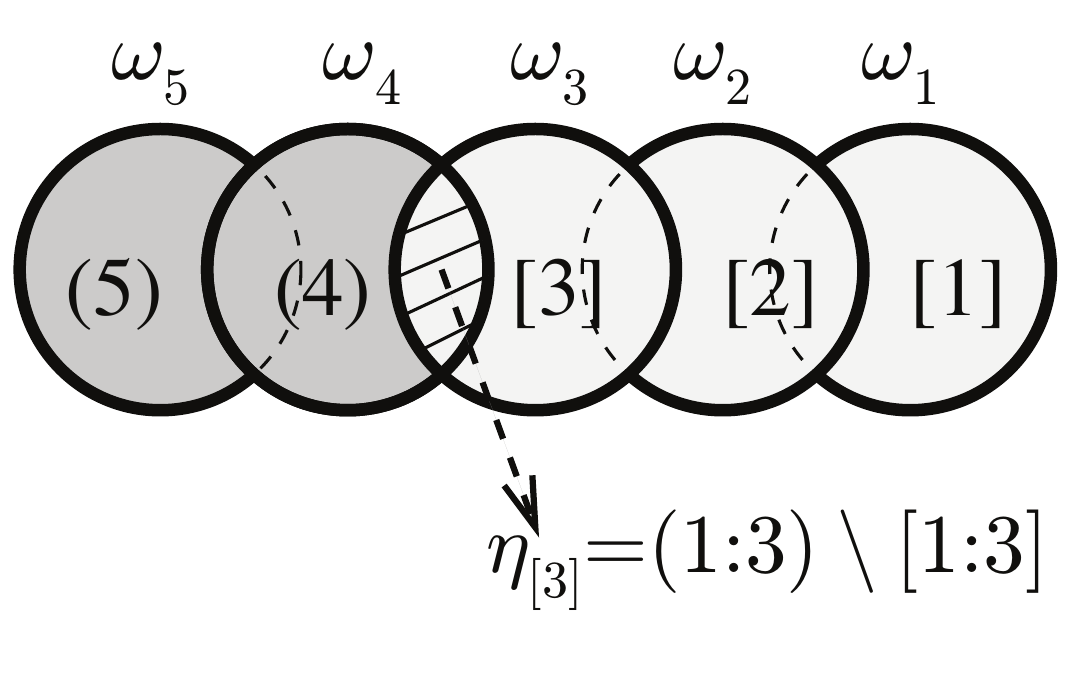}$\ \ \ $\includegraphics[width=0.4\columnwidth]{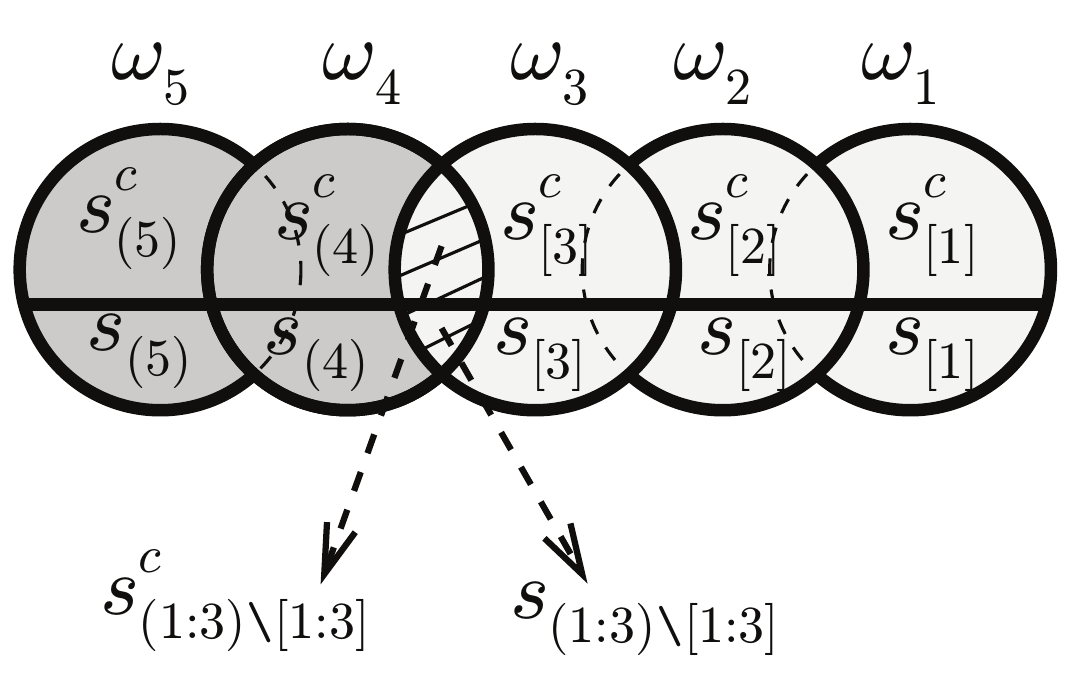}
\par\end{centering}
\caption{\label{fig:Partitions-of-universe}The topology of $n=5$ factor index
sets $\omega_{1},\ldots,\omega_{n}$ in (\ref{eq:ch5:OMEGA:model}),
together with the operator index set $\protect\SET\subseteq\protect\OMEGA$
in (\ref{eq:ch5:OMEGA:single})}
\end{figure}

\subsection{Separated indices of factors}

\subsubsection{No-longer-needed (NLN) algorithm}

In CI structure, the aim is to divide the universe $\OMEGA$ into
$n$ partitions (or ``bins'') of variables. From (\ref{eq:ch5:OMEGA:model}),
we can define two ways to partition $\OMEGA$, one backward and one
forward, as follows: 
\begin{eqnarray}
\OMEGA & = & \omega_{1}\cup\cdots\cup\omega_{n}\label{eq:ch5:(i)-=00005Bi=00005D}\\
 & = & \{\bparent 1,\ldots,\bparent n\}\nonumber \\
 & = & \{\bbracket 1,\ldots,\bbracket n\}\nonumber 
\end{eqnarray}
where the NLN index sets $\bbracket i$ (Definition \ref{(No-longer-needed-indices)})
and the FA index set $\bparent i$ (Definition \ref{(New-coming-indices)})
are (possibly empty) subsets of $\OMEGA$, $\iton$. 

Because $\omega_{i}$ might be overlapped with each other, let us
denote: 
\[
\oBold_{i:j}\TRIANGLEQ\omega_{i}\cup\cdots\cup\omega_{j},\ 1\leq i\leq j\leq n.
\]
Then the task is to extract $\bparent i$ and $\bbracket i$ from
$\{\omega_{1},\ldots,\omega_{n}\}$, as follows: 
\begin{defn}
\textit{(No-longer-needed (NLN) indices)} \label{(No-longer-needed-indices)}\\
Let us consider $n$ backward nested sets $\omega_{n}=\oBold{}_{n:n}\subseteq\oBold_{n-1:n}\subseteq\cdots\subseteq\oBold_{1:n}=\OMEGA$
and extract $n$ partitions of $\OMEGA$, as follows: $\bbracket i=\oBold_{i:n}\backslash\oBold_{i+1:n}$,
$\forall\iinn$ (upper-right schematic in Fig. \ref{fig:Partitions-of-universe}).
We call $\bbracket i$ a set of \textit{no-longer-needed }indices
for set $\omega_{\itime}$, $i=n,\ldots,1$, owing to the fact that
$\bbracket i\subseteq\omega_{i}$ and $\bbracket i\not\subset\oBold_{i+1:n}$. 
\end{defn}

$\ $
\begin{defn}
\textit{(First-appearance (FA) indices) }\label{(New-coming-indices)}\\
Let us consider $n$ forward nested sets $\omega_{1}=\oBold_{1:1}\subseteq\cdots\subseteq\oBold_{1:n-1}\subseteq\oBold_{1:n}=\OMEGA$,
we can also extract other $n$ partitions of $\OMEGA$, as follows:
$\bparent i=\oBold_{1:i+1}\backslash\oBold_{1:i}$, $\forall\iinn$
(upper-left schematic in Fig. \ref{fig:Partitions-of-universe}).
We call $\bparent i$ a set of \textit{first-appearance} indices at
point $\iton$, owing to the fact that $\bparent i\subseteq\omega_{i}$
and $\bparent i\not\subset\oBold_{1:i-1}$.
\end{defn}

For deriving a sequence of $\bbracket i$ and $\bparent i$, induced
by (\ref{eq:ch5:(i)-=00005Bi=00005D}), we can directly apply the
Definition \ref{(No-longer-needed-indices)}, \ref{(New-coming-indices)},
respectively. However, an extraction algorithm can also be designed,
as presented below. 

Firstly, let us design an $m\times n$ occupancy matrix $\OMEGA_{m\times n}$
for the universe $\OMEGA=\omega_{1}\cup\cdots\cup\omega_{n}$, as
follows:

\begin{equation}
\begin{array}{c}
\begin{array}{c}
(\OMEGA_{m\times n})\end{array}\\
\begin{array}{c}
m\\
\vdots\\
1
\end{array}
\end{array}\begin{array}{c}
\begin{array}{ccc}
\omega_{n}\ldots & \omega_{\itime}\ldots & \omega_{1}\end{array}\\
\left[\begin{array}{ccc}
\delta_{m,n} & \delta_{m,\itime} & \delta_{m,1}\\
\vdots & \vdots & \vdots\\
\delta_{1,n} & \delta_{1,\itime} & \delta_{1,1}
\end{array}\right]
\end{array}\label{eq:Adj-Matrix}
\end{equation}
where: 

- the whole matrix is, for convenience, labelled $(\OMEGA_{m\times n})$; 

- $n$ columns, from the first to the last column, are labelled by
$\omega_{\ndata},\ldots,\omega_{1}$, respectively (i.e in the reverse
order);

- $m$ rows, from the first to the last row, are labelled by variable
indices $k=m,\ldots,1$, respectively (i.e in the reverse order) in
the universe $\OMEGA$; 

- $\delta_{\istate,\itime}$ denotes binary indicator (i.e. a Kronecker
function): $\delta_{\istate,\itime}=1$ if $\istate\in\omega_{\itime}$,
and $\delta_{\istate,\itime}=0$ otherwise, $1\leq\istate\leq m$,
$1\leq\itime\leq n$. 

The reason for the reverse order in \foreignlanguage{british}{labelling}
is to preserve the band-diagonal property for the case of the Markov
chain, as illustrated in Example \ref{Occupancy Matrix-HMC}, what
follows shortly. 

From the occupancy matrix (\ref{eq:Adj-Matrix}), we can see that
the necessary and sufficient condition for $\istate\in\bbracket i=\oBold_{i:n}\backslash\oBold_{i+1:n}$,
is that - in row $k$ - value $1$ appears at column $\itime$ and
value $0$ appears at all columns $\{\itime+1,\ldots,\ndata\}$ ,
i.e. $\delta_{\istate,\itime}=1$ and $\delta_{k,j}=0$, with $j=\itime+1,\ldots,n$
. Hence, the set $\bbracket i$, $\iton$, consists of all those row
indices containing value $1$ at column $\omega_{\itime}$. 

Similarly, we can take the indices of rows for which the first-$1$
from the right occurs at column $\itime$, as constituting the $\itime$th
first-appearance (FA) set, $\bparent i$.

We can design an algorithm for construction of the $\ndata$ no-longer-needed
(NLN) sets, $\bbracket i$, as illustrated in Algorithm \ref{alg:chap7:(NLN)-algorithm}.

\begin{algorithm}
\textbf{Initialization:} 

- Constructing occupancy matrix $\OMEGA_{m\times n}$ (\ref{eq:Adj-Matrix})
for $\omega_{1},\ldots,\omega_{n}$. 

- Initialize $\bbracket i=\emptyset$, $\seti{\forall\itime}{\ndata}$

- Initialize $\ndata+1$ counter matrices, $C_{n+1}=\OMEGA_{m\times n}$
and $C_{\itime}=\emptyset$, $\seti{\forall\itime}{\ndata}$ 

\textbf{Recursion:} 

For each $\itime=n,\ldots,1$ (i.e. from the first to the last column),
do: \{ $C_{\itime}\leftarrow C_{\itime+1}$;

For each row $\istate$ at column $\omega_{\itime}$ in matrix $C_{\itime}$,
do: \{

if $\delta_{\istate,\itime}=1$, do: \{

- add value $\istate$ to the set $\bbracket i$ 

- delete row $\istate$ from $C_{\itime}$

\}\} Stop if $C_{\itime}=\emptyset$ \}

\textbf{Return:} the sets $\bbracket i$, $\seti{\itime}{\ndata}$

\textbf{Complexity:} $O(m)\leq O(m(\ndata-\itime_{c}+1))\leq O(mn)$
of Boolean comparison $\delta_{\istate,\itime}$ for $\nstate$ rows
and $(\ndata-\itime_{c}+1)$ columns, until stopping at column $\omega_{\itime_{c}}$,
$1\leq\itime_{c}\leq n$.

\caption{\label{alg:chap7:(NLN)-algorithm}No-longer-needed (NLN) algorithm}
\end{algorithm}
The NLN algorithm (Algorithm \ref{alg:chap7:(NLN)-algorithm}) for
sets $\bparent i$ are can be designed similarly, but taking the the
values $i=1,\ldots,\ndata$ (i.e. from the last to the first column).

Note that, the NLN algorithm (Algorithm \ref{alg:chap7:(NLN)-algorithm})
is merely a straightforward design from Definition \ref{(No-longer-needed-indices)},
without considering any sorting technique. Hence, the computational
complexity of the NLN algorithm is not optimized and ranges from lower
bound $O(\nstate)$ to upper bound $O(\nstate\ndata)$. If sorting
techniques are also applied, we can conjecture that the expectation
of NLN's complexity will be close to $m$. 

In the probability context, the output of NLN algorithm (Algorithm
\ref{alg:chap7:(NLN)-algorithm}) exactly corresponds to a valid probability
chain rule factorization for joint distribution, as explained in Section
\ref{sec:Re-interpretation-of-GDL}.
\begin{rem}
The two-directional topological NLN partition $\bbracket i$ and FA
partition $\bparent i$ of the universe index set $\OMEGA$, defined
in (\ref{eq:ch5:(i)-=00005Bi=00005D}), are novel. Likewise, the NLN
algorithm, whose purpose is to identify NLN and FA partitions of $\OMEGA$
induced by the factor index set algebra (\ref{eq:ch5:OMEGA:model}),
has not been proposed elsewhere in the literature. These partitions
will be important for our proposal to reduce computational load in
evaluating objective functions, later in the thesis.

A related algorithm to the NLN algorithm is the topological sorting
algorithm {[}\citet{ch5:BK:Bible:Algorithms}{]}, which returns a
topological ordering of a directed acyclic graph (DAG) (i.e. a valid
probability chain rule order). The key difference is that, the topological
sort permutes the order of factors $\omega_{\pi(1)},\ldots,\omega_{\pi(n)}$
until a valid chain rule order is achieved, while NLN algorithm maintains
the same order of factors $\omega_{1},\ldots,\omega_{n}$, but identifies
a new probability chain rule order via $\bbracket i,\ \seti{\itime}{\ndata}$,
extracted from $\omega_{1},\ldots,\omega_{n}$ (see Section \ref{sec:Re-interpretation-of-GDL})
\end{rem}

\begin{example}
\label{Occupancy Matrix} Let us assume that $m=5$, $n=4$ and $\omega_{4}=\{5,3,1\}$,
$\omega_{3}=\{4,3\}$, $\omega_{2}=\{3,2\}$, $\omega_{1}=\{2,1\}$.
Then, we can write down the occupancy matrix, as follows:

\begin{equation}
\begin{array}{c}
(\OMEGA_{5\times4})\\
5\\
4\\
3\\
2\\
1
\end{array}\left[\begin{array}{cccc}
\omega_{4} & \omega_{3} & \omega_{2} & \omega_{1}\\
([1]) & 0 & 0 & 0\\
0 & ([1]) & 0 & 0\\
{}[1] & 1 & (1) & 0\\
0 & 0 & [1] & (1)\\
{}[1] & 0 & 0 & (1)
\end{array}\right]\label{eq:Occupancy Matrix}
\end{equation}

where $[\cdot]$, $(\cdot)$ and $([\cdot])$ indicating elements
belonging to no-longer-needed (NLN) indices $\bbracket i$, first-appearance
(FA) indices $\bparent i$ and both of these sets, respectively. 

For illustration, the counter matrices, $C_{i}$, for NLN algorithm
sequentially are: $C_{5}=\OMEGA_{5\times4}$ and:

\[
\begin{array}{c}
(C_{4})\\
-\\
4\\
-\\
2\\
-
\end{array}\left[\begin{array}{cccc}
\omega_{4} & \omega_{3} & \omega_{2} & \omega_{1}\\
- & - & - & -\\
0 & [1] & 0 & 0\\
- & - & - & -\\
0 & 0 & 1 & 1\\
- & - & - & -
\end{array}\right],\ \begin{array}{c}
(C_{3})\\
-\\
-\\
-\\
2\\
-
\end{array}\left[\begin{array}{cccc}
\omega_{4} & \omega_{3} & \omega_{2} & \omega_{1}\\
- & - & - & -\\
- & - & - & -\\
- & - & - & -\\
0 & 0 & [1] & 1\\
- & - & - & -
\end{array}\right],\ \begin{array}{c}
(C_{2})\\
-\\
-\\
-\\
-\\
-
\end{array}\left[\begin{array}{cccc}
\omega_{4} & \omega_{3} & \omega_{2} & \omega_{1}\\
- & - & - & -\\
- & - & - & -\\
- & - & - & -\\
- & - & - & -\\
- & - & - & -
\end{array}\right]
\]

Hence, the NLN algorithm stops at $C_{2}=\emptyset$ and returns the
no-longer-needed (NLN) sets, as follows: $\bbracket 4=\omega_{4}=\{5,3,1\}$,
 $\bbracket 3=\{4\}$, $\bbracket 2=\{2\}$ and $\bbracket 1=\emptyset$.
\end{example}

$\ $
\begin{example}
\label{Occupancy Matrix-HMC} For later use, let us consider a canonical
(and simpler) example for a first-order Markov chain, with $m=5$,
$n=4$, i.e. we have $\omega_{4}=\{5,4\}$, $\omega_{3}=\{4,3\}$,
$\omega_{2}=\{3,2\}$, $\omega_{1}=\{2,1\}$. Then, similarly to Example
\ref{Occupancy Matrix}, we can write down the occupancy matrix, as
follows:

\begin{equation}
\begin{array}{c}
(\OMEGA_{5\times4})\\
5\\
4\\
3\\
2\\
1
\end{array}\left[\begin{array}{cccc}
\omega_{4} & \omega_{3} & \omega_{2} & \omega_{1}\\
([1]) & 0 & 0 & 0\\
{}[1] & (1) & 0 & 0\\
0 & [1] & (1) & 0\\
0 & 0 & [1] & (1)\\
0 & 0 & 0 & ([1])
\end{array}\right]\label{eq:OccupancyMatrix-HMC}
\end{equation}

where $[\cdot]$, $(\cdot)$ and $([\cdot])$ indicating elements
belonging to no-longer-needed (NLN) indices $\bbracket i$, first-appearance
(FA) indices $\bparent i$ and both of these sets, respectively. 

For illustration, the counter matrices, $C_{i}$, for NLN algorithm
sequentially are: $C_{5}=\OMEGA_{5\times4}$ and:

\[
\begin{array}{c}
(C_{4})\\
-\\
-\\
3\\
2\\
1
\end{array}\left[\begin{array}{cccc}
\omega_{4} & \omega_{3} & \omega_{2} & \omega_{1}\\
- & - & - & -\\
- & - & - & -\\
0 & [1] & 1 & 0\\
0 & 0 & 1 & 1\\
0 & 0 & 0 & 1
\end{array}\right],\ \begin{array}{c}
(C_{3})\\
-\\
-\\
-\\
2\\
1
\end{array}\left[\begin{array}{cccc}
\omega_{4} & \omega_{3} & \omega_{2} & \omega_{1}\\
- & - & - & -\\
- & - & - & -\\
- & - & - & -\\
0 & 0 & [1] & 1\\
0 & 0 & 0 & 1
\end{array}\right],\ \begin{array}{c}
(C_{2})\\
-\\
-\\
-\\
-\\
1
\end{array}\left[\begin{array}{cccc}
\omega_{4} & \omega_{3} & \omega_{2} & \omega_{1}\\
- & - & - & -\\
- & - & - & -\\
- & - & - & -\\
- & - & - & -\\
0 & 0 & 0 & [1]
\end{array}\right]
\]

Hence, the NLN algorithm stops at $C_{1}=\emptyset$ and returns the
no-longer-needed (NLN) sets, as follows: $\bbracket 4=\omega_{4}=\{5,4\}$,
$\bbracket 3=\{3\}$, $\bbracket 2=\{2\}$ and $\bbracket 1=\{1\}$.
\end{example}

\subsubsection{Ternary partition and in-process factors \label{subsec:chap5:Ternary-and-inprocess}}

.Let us denote $\bbracket{i:j}=\{\bbracket i,\ldots,\bbracket j\}$,
$\bparent{i:j}=\{\bparent i,\ldots,\bparent j\}$, $1\leq i\leq j\leq n$.
Then we have the following proposition:
\begin{prop}
(Ternary partition)\label{prop:(TernarySET)}\\
For any $\iton$, we can divide $\OMEGA=\oBold_{i+1:n}\cup\oBold_{1:i}$
into ternary partitions: 
\begin{equation}
\OMEGA=\{\bparent{i+1:n},\bcommon i,\bbracket{1:i}\}\label{eq:ch5:factor:Ternary}
\end{equation}
in which the (possibly empty) set $\bcommon i$ is called the set
of common indices (Fig. \ref{fig:Partitions-of-universe}): 
\begin{equation}
\bcommon i\TRIANGLEQ\oBold_{i+1:n}\cap\oBold_{1:i}=\bparent{1:i}\backslash\bbracket{1:i}\label{eq:chap5:eta-common-indices}
\end{equation}
Also, the left and right complement sets in (\ref{eq:ch5:factor:Ternary}),
respectively, are: 
\begin{eqnarray}
\bparent{i+1:n} & = & \oBold_{i+1:n}\backslash\bcommon i\label{eq:ch5:factor:left}\\
\bbracket{1:i} & = & \oBold_{1:i}\backslash\bcommon i\label{eq:ch5:factor:right}
\end{eqnarray}
\end{prop}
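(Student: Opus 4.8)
The plan is to reduce the claim to the elementary disjoint decomposition of a union of two sets, after first rewriting the universe as the union of a ``forward'' block and a ``backward'' block. The starting point is that, for every $\iton$, the model assumption (\ref{eq:ch5:OMEGA:model}) gives
\begin{equation}
\OMEGA=\oBold_{1:\itime}\cup\oBold_{\itime+1:n},\label{eq:plan:split}
\end{equation}
where I adopt the natural boundary conventions $\oBold_{1:0}=\emptyset$ and $\oBold_{n+1:n}=\emptyset$, so that the degenerate case $\itime=n$ (empty forward complement) is covered uniformly. Writing $A\TRIANGLEQ\oBold_{\itime+1:n}$ and $B\TRIANGLEQ\oBold_{1:\itime}$, the set-theoretic identity $A\cup B=(A\backslash B)\sqcup(A\cap B)\sqcup(B\backslash A)$ partitions $\OMEGA$ into three disjoint pieces. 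The middle piece $A\cap B=\oBold_{\itime+1:n}\cap\oBold_{1:\itime}$ is exactly $\bcommon i$ by its definition (\ref{eq:chap5:eta-common-indices}); hence the whole content of the proof is to identify the two outer pieces $A\backslash B$ and $B\backslash A$ with the unions $\bparent{i+1:n}$ and $\bbracket{1:i}$, respectively.

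The core step is a pair of telescoping identities that follow from the nested structure underlying Definitions \ref{(No-longer-needed-indices)} and \ref{(New-coming-indices)}. Because $\oBold_{1:1}\subseteq\cdots\subseteq\oBold_{1:n}=\OMEGA$ is forward-nested, the first-appearance sets are pairwise disjoint and their partial unions recover the blocks, $\bparent{1:j}=\oBold_{1:j}$; in particular $\bparent{1:n}=\OMEGA$, whence $\bparent{i+1:n}=\OMEGA\backslash\bparent{1:i}=\OMEGA\backslash\oBold_{1:\itime}$. Symmetrically, since $\oBold_{n:n}\subseteq\cdots\subseteq\oBold_{1:n}$ is backward-nested, the no-longer-needed sets are pairwise disjoint with $\bbracket{j:n}=\oBold_{j:n}$, giving $\bbracket{1:i}=\OMEGA\backslash\bbracket{i+1:n}=\OMEGA\backslash\oBold_{\itime+1:n}$. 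Feeding these back through (\ref{eq:plan:split}) yields $\OMEGA\backslash\oBold_{1:\itime}=\oBold_{\itime+1:n}\backslash\oBold_{1:\itime}=A\backslash B$ and $\OMEGA\backslash\oBold_{\itime+1:n}=\oBold_{1:\itime}\backslash\oBold_{\itime+1:n}=B\backslash A$, which completes the identification of the outer pieces and establishes (\ref{eq:ch5:factor:Ternary}).

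Finally, the complement formulas (\ref{eq:ch5:factor:left}) and (\ref{eq:ch5:factor:right}) are immediate once $\bcommon i$ is in hand: applying the identity $X\backslash(X\cap Y)=X\backslash Y$ with $X=\oBold_{\itime+1:n}$, $Y=\oBold_{1:\itime}$ gives $\oBold_{\itime+1:n}\backslash\bcommon i=\oBold_{\itime+1:n}\backslash\oBold_{1:\itime}=\bparent{i+1:n}$, and swapping the roles of $X$ and $Y$ gives $\oBold_{1:\itime}\backslash\bcommon i=\oBold_{1:\itime}\backslash\oBold_{\itime+1:n}=\bbracket{1:i}$. I expect the main obstacle to be bookkeeping rather than any conceptual difficulty: one must verify carefully that $\{\bparent j\}$ and $\{\bbracket j\}$ are genuine partitions of $\OMEGA$ (so that the telescoping subtractions $\OMEGA\backslash\bparent{1:i}$ and $\OMEGA\backslash\bbracket{i+1:n}$ are valid), and one must keep track of the empty boundary blocks at $\itime=n$ so that no step silently assumes a non-empty forward complement. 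Both partition facts ultimately rest on the single observation that each variable index $\kstate\in\OMEGA$ has a well-defined first-appearance and last-appearance position in the sequence $\omega_{1},\ldots,\omega_{n}$, which is precisely what makes the nesting and telescoping go through.
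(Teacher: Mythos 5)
Your proof is correct and takes essentially the same route as the paper's: both arguments rest on the telescoping identities $\bbracket{i+1:n}=\oBold_{i+1:n}$ and $\bparent{1:i}=\oBold_{1:i}$ obtained from the nested definitions of the NLN and FA sets, followed by complementation within $\OMEGA$ and the elementary disjoint decomposition of $\OMEGA=\oBold_{i+1:n}\cup\oBold_{1:i}$ around the common set $\bcommon i$. Your explicit verification of pairwise disjointness of the $\bparent j$ and $\bbracket j$, and the boundary conventions $\oBold_{1:0}=\oBold_{n+1:n}=\emptyset$, simply make precise what the paper's three-step proof leaves implicit.
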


\begin{proof}
Firstly, let us recall the notation of the intersection part: $\bcommon i\TRIANGLEQ\oBold_{i+1:n}\cap\oBold_{i}$,
$\iton$. Then, the proof is carried out via following three steps:\textbf{\textit{}}\\
\textbf{\textit{Step 1:}} For proof of (\ref{eq:ch5:factor:right}):
by Definition \ref{(No-longer-needed-indices)} for NLN indices, we
have $\bbracket j\TRIANGLEQ\oBold_{j:n}\backslash\oBold_{j+1:n}$,
hence: 

\[
\bbracket{i+1:n}=\cup_{j=i+1}^{n}\bbracket j=\cup_{j=i+1}^{n}\{\oBold_{j:n}\backslash\oBold_{j+1:n}\}=\oBold_{i+1:n}
\]
where the last equality was derived via a sequence of merging, e.g.
$\{\oBold_{1:n}\backslash\oBold_{2:n}\}\cup\{\oBold_{2:n}\backslash\oBold_{3:n}\}=\oBold_{1:\ndata}$
. Then, since $\bbracket{i+1:n}=\oBold_{i+1:n}$, we also have:

\[
\bbracket{1:i}=\OMEGA\backslash\bbracket{i+1:n}=\OMEGA\backslash\oBold_{i+1:n}=\oBold_{1:i}\backslash\bcommon i
\]
where the last equality is a consequence of two basic set operators,
intersection and union, i.e.: $\bcommon i\TRIANGLEQ\oBold_{i+1:n}\cap\oBold_{1:i}$
and $\OMEGA=\oBold_{i+1:n}\cup\oBold_{1:i}$, respectively.\textbf{\textit{}}\\
\textbf{\textit{Step 2:}}\textbf{ }Similarly, for the proof of (\ref{eq:ch5:factor:left}):
by Definition \ref{(New-coming-indices)} of FA indices, we have: 

\begin{eqnarray*}
\bparent{1:i} & = & \cup_{j=1}^{i}\bparent j=\cup_{j=1}^{i}\{\oBold_{1:j+1}\backslash\oBold_{1:j}\}=\oBold_{1:i}\\
\Rightarrow\bparent{i+1:n} & = & \OMEGA\backslash\bparent{1:i}=\OMEGA\backslash\oBold_{1:i}=\oBold_{i+1:n}\backslash\bcommon i
\end{eqnarray*}
\textbf{\textit{Step 3:}} From above proofs of (\ref{eq:ch5:factor:left})
and (\ref{eq:ch5:factor:right}), the ternary partitions for $\OMEGA$
in (\ref{eq:ch5:factor:Ternary}) can be proved as follows: 

\begin{eqnarray*}
\OMEGA=\oBold_{i+1:n}\cup\oBold_{1:i} & = & \{\oBold_{i+1:n}\backslash\bcommon i,\bcommon i,\oBold_{1:i}\backslash\bcommon i\}\\
 & = & \{\bparent{i+1:n},\eta_{[i]},\bbracket{1:i}\},\ \iton
\end{eqnarray*}
\end{proof}
Proposition \ref{prop:(TernarySET)} also motivates a definition for
\textit{in-process} variable indices, as follows. These will be important
in recursive computation later:
\begin{prop}
(In-process indices)\\
In-process indices are defined as tri-partitioned index-sets, $\ASET_{i}$,
associated with the ternary partition $\OMEGA$ in (\ref{eq:ch5:factor:Ternary}),
as follows: 
\begin{eqnarray}
\ASET_{i} & \TRIANGLEQ & \{\bparent{i+1},\bcommon i,\bbracket i\}\label{eq:ch5:factor:in-process}\\
 & = & \mbox{\ensuremath{\omega_{i+1}\cup\omega_{i}}},\ \iton\nonumber 
\end{eqnarray}
Then, from (\ref{eq:ch5:factor:Ternary}) and (\ref{eq:ch5:factor:in-process}),
we have:

\begin{equation}
\OMEGA=\mathcal{A}_{1}\cup\cdots\cup\mathcal{A}_{n}\label{eq:ch5:OMEGA:unionA}
\end{equation}
\end{prop}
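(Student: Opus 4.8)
The plan is to reduce the union identity (\ref{eq:ch5:OMEGA:unionA}) to the per-pair identity $\ASET_i=\omega_{i+1}\cup\omega_i$ asserted in (\ref{eq:ch5:factor:in-process}), and to reduce that identity in turn to a membership count. First I would attach to every variable index $k\in\OMEGA$ its \emph{first-appearance} position $a(k)\TRIANGLEQ\min\{j:k\in\omega_j\}$ and its \emph{last-appearance} position $b(k)\TRIANGLEQ\max\{j:k\in\omega_j\}$, so that $a(k)\le b(k)$. Definitions \ref{(No-longer-needed-indices)} and \ref{(New-coming-indices)} then read off as the clean equivalences $k\in\bparent{j}\Leftrightarrow a(k)=j$ and $k\in\bbracket j\Leftrightarrow b(k)=j$, while the common-index formula (\ref{eq:chap5:eta-common-indices}) becomes $k\in\bcommon i\Leftrightarrow a(k)\le i \text{ and } b(k)\ge i+1$, since $k\in\oBold_{1:i}\Leftrightarrow a(k)\le i$ and $k\in\oBold_{i+1:n}\Leftrightarrow b(k)\ge i+1$.

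With these labels, the identity $\ASET_i=\omega_{i+1}\cup\omega_i$ becomes a double inclusion. For the forward inclusion, the containments $\bparent{i+1}\subseteq\omega_{i+1}$ and $\bbracket i\subseteq\omega_i$ are immediate, since each of these sets is carved out of the corresponding factor, leaving only $\bcommon i\subseteq\omega_{i+1}\cup\omega_i$ to settle. For the reverse inclusion, I would take $k\in\omega_{i+1}\cup\omega_i$ and dispatch it by a trichotomy on $(a(k),b(k))$: if $a(k)=i+1$ then $k\in\bparent{i+1}$; if $b(k)=i$ then $k\in\bbracket i$; and otherwise necessarily $a(k)\le i$ and $b(k)\ge i+1$, so $k\in\bcommon i$. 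This trichotomy is exhaustive on $\omega_{i+1}\cup\omega_i$ and mutually exclusive, so it simultaneously shows that the three pieces genuinely tile $\omega_{i+1}\cup\omega_i$ — the local counterpart, at the cut between factors $i$ and $i+1$, of the ternary partition established in Proposition \ref{prop:(TernarySET)}.

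The union (\ref{eq:ch5:OMEGA:unionA}) would then drop out by telescoping: $\bigcup_{i=1}^{n}\ASET_i=\bigcup_{i=1}^{n}(\omega_{i+1}\cup\omega_i)=\bigcup_{i=1}^{n}\omega_i=\OMEGA$, the last step being (\ref{eq:ch5:OMEGA:model}) under the harmless convention $\omega_{n+1}=\emptyset$. The main obstacle is the single remaining inclusion $\bcommon i\subseteq\omega_{i+1}\cup\omega_i$: in terms of the labels it demands that every $k$ with $a(k)\le i<i+1\le b(k)$ actually occur in $\omega_i$ or $\omega_{i+1}$, i.e.\ that no variable index \emph{skips} the consecutive pair by living only among $\omega_1,\ldots,\omega_{i-1}$ and $\omega_{i+2},\ldots,\omega_n$. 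This contiguity is exactly the band-diagonal (consecutive-ones) pattern that the reverse-ordered occupancy matrix (\ref{eq:Adj-Matrix}) is built to expose and that holds automatically in the Markov layout of Example \ref{Occupancy Matrix-HMC}; I therefore expect the real work to lie in extracting this contiguity from the nested-set relations $\oBold_{i:n}\supseteq\oBold_{i+1:n}$ (and their forward mirror) underlying Proposition \ref{prop:(TernarySET)}, rather than in any of the bookkeeping above.
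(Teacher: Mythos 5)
Your reduction to the labels $a(k)\TRIANGLEQ\min\{j:k\in\omega_{j}\}$ and $b(k)\TRIANGLEQ\max\{j:k\in\omega_{j}\}$ is correct, and your trichotomy does prove the inclusion $\omega_{i+1}\cup\omega_{i}\subseteq\ASET_{i}$ unconditionally. But the step you flagged as ``the real work'' --- the remaining inclusion $\bcommon i\subseteq\omega_{i+1}\cup\omega_{i}$ --- is not merely hard to extract from the nested-set relations: it cannot follow from them, because the chains $\oBold_{n:n}\subseteq\cdots\subseteq\oBold_{1:n}$ and $\oBold_{1:1}\subseteq\cdots\subseteq\oBold_{1:n}$ hold for \emph{every} family $\{\omega_{1},\ldots,\omega_{n}\}$ and so carry no contiguity information whatsoever. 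In fact the inclusion, and with it the claimed equality $\ASET_{i}=\omega_{i+1}\cup\omega_{i}$ in (\ref{eq:ch5:factor:in-process}), is false in general. The paper's own Example \ref{Occupancy Matrix} is a counterexample: with $\omega_{4}=\{5,3,1\}$, $\omega_{3}=\{4,3\}$, $\omega_{2}=\{3,2\}$, $\omega_{1}=\{2,1\}$, the index $k=1$ has $a(1)=1$ and $b(1)=4$, so at $i=2$ one computes $\bcommon 2=\oBold_{3:4}\cap\oBold_{1:2}=\{1,3\}$, hence $\ASET_{2}=\{4\}\cup\{1,3\}\cup\{2\}=\{1,2,3,4\}$, whereas $\omega_{3}\cup\omega_{2}=\{2,3,4\}$. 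The equality requires exactly the consecutive-ones (interval) property of each row of the occupancy matrix (\ref{eq:Adj-Matrix}) --- satisfied by the Markov layout of Example \ref{Occupancy Matrix-HMC}, violated by the variable that skips from $\omega_{1}$ to $\omega_{4}$ above --- and this must be imposed as an extra hypothesis, not derived. Note that the paper's own proof is no better on this point: it substitutes Definitions \ref{(No-longer-needed-indices)}, \ref{(New-coming-indices)} and (\ref{eq:chap5:eta-common-indices}) and asserts the identity in one line, making precisely the leap you were careful enough to isolate.

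The salvageable part of your argument is worth stating, because it rescues the proposition's final claim without any contiguity assumption. Since your trichotomy gives $\omega_{i+1}\cup\omega_{i}\subseteq\ASET_{i}$ for every $i$, and trivially $\ASET_{i}\subseteq\OMEGA$, the union identity (\ref{eq:ch5:OMEGA:unionA}) follows at once from $\OMEGA=\bigcup_{i=1}^{n}\omega_{i}\subseteq\bigcup_{i=1}^{n}\ASET_{i}\subseteq\OMEGA$, using (\ref{eq:ch5:OMEGA:model}) and no appeal to the failing reverse inclusion; your telescoping through the pairwise equalities is unnecessary as well as unavailable. The correct disposition is therefore: (\ref{eq:ch5:OMEGA:unionA}) is true, and the one-sided half of your argument proves it; the second line of (\ref{eq:ch5:factor:in-process}) should be weakened to $\ASET_{i}\supseteq\omega_{i+1}\cup\omega_{i}$, or retained as an equality only under an explicit row-contiguity assumption on $\omega_{1},\ldots,\omega_{n}$.
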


\begin{proof}
From Definition \ref{(No-longer-needed-indices)}, \ref{(New-coming-indices)}
and (\ref{eq:chap5:eta-common-indices}), we have: 

\begin{eqnarray*}
\ASET_{i} & \TRIANGLEQ & \{\bparent{i+1},\bcommon i,\bbracket i\}\\
 & = & \{\oBold_{1:i+1}\backslash\oBold_{1:i},\oBold_{i+1:n}\cap\oBold_{1:i},\oBold_{i:n}\backslash\oBold_{i+1:n}\}\\
 & = & \mbox{\ensuremath{\omega_{i+1}\cup\omega_{i}}},\ \iton
\end{eqnarray*}
as illustrated in lower-left schematic in Fig. \ref{fig:Partitions-of-universe}
for the case $i=3$.
\end{proof}
From (\ref{eq:ch5:factor:Ternary}), we can see that, in general,
the in-process index sets $\mathcal{A}_{i}$ in (\ref{eq:ch5:OMEGA:unionA})
are not separated partitions of $\OMEGA$. Note that, since $\bcommon i=\oBold_{i+1:n}\cap\oBold_{1:i}$,
it might be empty (e.g. when $\omega_{1},\ldots,\omega_{n}$ are separated
partitions of $\OMEGA=\{\omega_{1},\ldots,\omega_{n}\}$). In contrast,
the set $\ASET_{i}$, defined by (\ref{eq:ch5:factor:in-process}),
cannot be empty, since $\omega_{i}\neq\emptyset$, $\iton$, by definition
(Section \ref{subsec:chap5:Objective-functions}). Moreover, different
from union of tri-partitioned sets $\ASET_{i}$ in (\ref{eq:ch5:OMEGA:unionA}),
the union of common sets $\bcommon 1\cup\cdots\cup\bcommon n$ is
not necessary equal to $\OMEGA$. 

For these reasons, we prefer to compute the operators in (\ref{eq:ch5:SUM:sequential})
via tri-partitioned sets $\ASET_{i}$ instead of common sets $\bcommon i$.

\subsection{Separated indices of operators}

\subsubsection{Binary partition }

From (\ref{eq:ch5:OMEGA:single}) and (\ref{eq:ch5:(i)-=00005Bi=00005D}),
we can distribute the single operator index set $\SET$ and objective
index set $\SET^{c}$ across two $n$ alternative partitions, as follows:
\begin{eqnarray*}
\SET & = & \{s_{\bparent 1},\ldots,s_{\bparent n}\}\\
 & = & \{s_{\bbracket 1},\ldots,s_{\bbracket n}\}\\
\SET^{c} & = & \{s_{\bparent 1}^{c},\ldots,s_{\bparent n}^{c}\}\\
 & = & \{s_{\bbracket 1}^{c},\ldots,s_{\bbracket n}^{c}\}
\end{eqnarray*}
where, $\iton$, and: 
\begin{eqnarray}
s_{\bparent i} & \TRIANGLEQ & \SET\cap\bparent i\label{eq:ch5:s(i)_s=00005Bi=00005D}\\
s_{\bbracket i} & \TRIANGLEQ & \SET\cap\bbracket i\nonumber \\
s_{\bparent 1}^{c} & \TRIANGLEQ & \bparent i\backslash s_{\bparent i}\nonumber \\
s_{\bbracket 1}^{c} & \TRIANGLEQ & \bbracket i\backslash s_{\bbracket i}\nonumber 
\end{eqnarray}
as illustrated in lower-right schematic in Fig. \ref{fig:Partitions-of-universe}.

\subsubsection{Ternary partition and in-process operators}

Proceeding as in Section \ref{subsec:chap5:Ternary-and-inprocess},
then, from (\ref{eq:ch5:OMEGA:single}) and (\ref{eq:ch5:factor:Ternary}),
the ternary partitions of $\SET$ and $\SET^{c}$ can be defined,
respectively, as follows: 
\begin{eqnarray}
\SET & = & \{s_{\bparent{i+1:n}},s_{\bcommon i},s_{\bbracket i}\}\label{eq:ch5:SET:ternary}\\
\SET^{c} & = & \{s_{\bparent{i+1:n}}^{c},s_{\bcommon i}^{c},s_{\bbracket i}^{c}\},\nonumber 
\end{eqnarray}
where, $\iton$, and: 
\begin{eqnarray}
s_{\bcommon i} & \TRIANGLEQ & \SET\cap\bcommon i\label{eq:ch5:s_eta}\\
s_{\bcommon i}^{c} & \TRIANGLEQ & \bcommon i\backslash s_{\bcommon i}\nonumber 
\end{eqnarray}
as illustrated in lower-right schematic in Fig. \ref{fig:Partitions-of-universe}.

\subsubsection{Non-overflowed (NOF) condition \label{subsec:Sub-selection-of-set}}

For sequential objective functions (\ref{eq:ch5:OMEGA:sequential}),
let us consider a sequence of subsets $\SET_{j}$, such that $\SET=\SET_{1}\cup\cdots\cup\SET_{\noperator},$
and $\SET_{j}\subset\SET\subseteq\OMEGA$, $j=1,\ldots,\noperator$.
Denoting objective set $\Ssubstract j\neq\emptyset$ as complement
of $\SET_{j}$ in $\SET$, we will consider a special case of $\Ssubstract j$,
as follows:
\begin{defn}
(\textit{Non-overflowed (NOF) set and NOF condition}) \label{(Non-overflowed-set)}\\
A set $\Ssubstract j$, with $j\in\{1,\ldots,\noperator\}$, is called
non-overflowed (NOF) set if the following NOF condition is satisfied:
\begin{equation}
\exists i\in\{1,\ldots,n\}:\ \Ssubstract j\subseteq\SET_{\ASET_{i}}\TRIANGLEQ\{s_{\bparent{i+1}},s_{\bcommon i},s_{\bbracket i}\}\subseteq\SET,\ j\in\{1,\ldots,\noperator\}\label{eq:ch5:non-overflow:set}
\end{equation}
Otherwise, $\Ssubstract j$ is called an overflowed set. 
\end{defn}

The meaning of the name is as follows: given a tri-partitioned set
$\ASET_{i}$ (\ref{eq:ch5:factor:in-process}) being in process/memory
at point $\iinn$, the objective set $\Ssubstract j\subseteq\SET_{\ASET_{i}}$
can be extracted within current process/memory $\ASET_{i}$ and does
not cause extra/overflowed work. 

For later use, let us also define the complement of $\Ssubstract j$
in $\SET_{\ASET_{i}}$, as follows: 
\begin{eqnarray}
\SET_{\ASET_{i}}^{\backslash j} & \TRIANGLEQ & \SET_{\ASET_{i}}\backslash\Ssubstract j\label{eq:ch5:non-overflow:complement}\\
 & = & \SET_{\ASET_{i}}\cap S_{j}\nonumber \\
 & = & \{s_{\bparent{i+1}}^{\backslash j},s_{\bcommon i}^{\backslash j},s_{\bbracket i}^{\backslash j}\}\nonumber 
\end{eqnarray}
in which the intersection is carried out element-wise. Finally, let
us consider the most special case of \textit{non-overflowed }set,
which is scalar set, as follows:
\begin{prop}
\label{prop:Scalar-non-overflowed}(Objective scalar sets) \\
The singleton sets, $\Ssubstract j=\{j\}\in\SET=\OMEGA$, $j=1,\ldots,\noperator$,
where $\noperator=m$, are non-overflowed sets.
\end{prop}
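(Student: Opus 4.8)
The plan is to exploit the fact that, in the scalar setting of the statement, the operator index set collapses onto the whole universe, so that the ternary \emph{operator} partition degenerates into the ternary \emph{factor} partition already constructed in Section~\ref{subsec:chap5:Ternary-and-inprocess}. First I would record that the hypothesis $\Ssubstract j=\{j\}\in\SET=\OMEGA$ forces $\SET=\OMEGA$ (here $\SET_j=\OMEGA\backslash\{j\}$ is the operator set and $\Ssubstract j=\SET\backslash\SET_j=\{j\}$). Feeding $\SET=\OMEGA$ into the defining intersections (\ref{eq:ch5:s(i)_s=00005Bi=00005D}) and (\ref{eq:ch5:s_eta}), every operator sub-index reduces to its host factor partition: $s_{\bparent{i+1}}=\SET\cap\bparent{i+1}=\bparent{i+1}$, $s_{\bcommon i}=\bcommon i$, and $s_{\bbracket i}=\bbracket i$. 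Consequently $\SET_{\ASET_{i}}=\{s_{\bparent{i+1}},s_{\bcommon i},s_{\bbracket i}\}=\{\bparent{i+1},\bcommon i,\bbracket i\}=\ASET_{i}$, the in-process index set of (\ref{eq:ch5:factor:in-process}); and the outer inclusion $\SET_{\ASET_{i}}=\ASET_{i}\subseteq\OMEGA=\SET$ required by (\ref{eq:ch5:non-overflow:set}) then holds trivially.

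With this reduction in hand, the remaining task is purely set-theoretic: to show that every singleton $\{j\}$, with $j\in\OMEGA$, lands inside some $\ASET_{i}$. The cleanest route is via the no-longer-needed partition. By Definition~\ref{(No-longer-needed-indices)} together with (\ref{eq:ch5:(i)-=00005Bi=00005D}), the sets $\bbracket 1,\ldots,\bbracket n$ form a genuine partition of $\OMEGA$ (each variable index has a well-defined last factor in which it occurs, since $\OMEGA=\omega_{1}\cup\cdots\cup\omega_{n}$). Hence for each $j\in\{1,\ldots,m\}=\OMEGA$ there is exactly one $i\in\{1,\ldots,n\}$ with $j\in\bbracket i$. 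Since $\bbracket i\subseteq\ASET_{i}$ by the very definition (\ref{eq:ch5:factor:in-process}) of the in-process set, I would conclude $\{j\}\subseteq\bbracket i\subseteq\ASET_{i}=\SET_{\ASET_{i}}\subseteq\SET$, which is precisely the NOF condition (\ref{eq:ch5:non-overflow:set}). Therefore every scalar objective set $\Ssubstract j=\{j\}$ is non-overflowed. One could instead invoke the covering property $\OMEGA=\ASET_{1}\cup\cdots\cup\ASET_{n}$ of (\ref{eq:ch5:OMEGA:unionA}) directly, but the NLN partition has the advantage of pinning down the witness $i$ explicitly.

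Honestly, this result is essentially a corollary of the topology already established, so I do not anticipate any deep obstacle. The only place to be careful is the index bookkeeping: I would verify that the simplification $s_{\bullet}\to(\cdot)$ under $\SET=\OMEGA$ is applied consistently to all three components of $\SET_{\ASET_{i}}$, and that the boundary convention $\oBold_{n+1:n}=\emptyset$ implicit in Definition~\ref{(No-longer-needed-indices)} genuinely makes $\bbracket 1,\ldots,\bbracket n$ cover $\OMEGA$ with no gaps. Granted that the NLN sets are a partition---guaranteed by construction because each variable index occurs in at least one factor---the existence of the witness $i$ is immediate and the proof closes in a couple of lines.
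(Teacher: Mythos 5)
Your proof is correct and follows essentially the same route as the paper's: establish $\SET=\OMEGA$, reduce $\SET_{\ASET_{i}}$ to $\ASET_{i}$ via the intersection definitions (\ref{eq:ch5:s(i)_s=00005Bi=00005D}) and (\ref{eq:ch5:s_eta}), and exhibit a witness $i$ with $\Ssubstract j=\{j\}\subseteq\ASET_{i}$. The only cosmetic difference is that you pin the witness down uniquely through the NLN partition $\bbracket i\subseteq\ASET_{i}$, whereas the paper obtains it directly from the covering $\OMEGA=\omega_{1}\cup\cdots\cup\omega_{n}$ via $j\in\omega_{i}\subseteq\omega_{i+1}\cup\omega_{i}=\ASET_{i}$.
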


\begin{proof}
Firstly, by definition of $\omega_{i}$, there always exists $i\in\{1,\ldots,n\}$
such that: 
\begin{equation}
\Ssubstract j=\{j\}\in\omega_{i+1}\cup\omega_{i}=\ASET_{i},\ \exists i\in\{1,\ldots,n\},\forall j\in\{1,\ldots,\noperator\}\label{eq:ch5:scalar_proof:a}
\end{equation}
where the last equality is owing to (\ref{eq:ch5:factor:in-process}). 

Secondly, owing to the assumption $\noperator=m$, we have $\SET=\cup_{j=1}^{m}\Ssubstract j=\cup_{j=1}^{m}\{j\}=\OMEGA$.

Thirdly, by definitions in (\ref{eq:ch5:factor:in-process},\ref{eq:ch5:non-overflow:set}),
we have:

\begin{equation}
\ASET_{i}=\{\bparent{i+1},\bcommon i,\bbracket i\}=\{s_{\bparent{i+1}},s_{\bcommon i},s_{\bbracket i}\}=\SET_{\ASET_{i}},\ \forall i\in\{1,\ldots,m\}\label{eq:ch5:scalar_proof:b}
\end{equation}
owing to $\SET=\OMEGA$ above and (\ref{eq:ch5:s(i)_s=00005Bi=00005D}-\ref{eq:ch5:s_eta}). 

Finally, from (\ref{eq:ch5:scalar_proof:a}) and (\ref{eq:ch5:scalar_proof:b}),
we have $\Ssubstract j\subseteq\SET_{\ASET_{i}}\subseteq\SET=\OMEGA,\ \exists i\in\{1,\ldots,n\},$
which satisfies the NOF condition (\ref{eq:ch5:non-overflow:set}),
$\forall j\in\{1,\ldots,\noperator\}$ and $\noperator=m$. 
\end{proof}
Note that, the assumption $\noperator=m$ is useful in Proposition
\ref{prop:Scalar-non-overflowed}, since it facilitate the verification
of (\ref{eq:ch5:scalar_proof:b}) in the proof. In Bayesian analysis,
the assumption $\noperator=m$ is often valid when a sequence of all
marginals needs to be computed (e.g for computing minimum risk estimator
computation (\ref{eq:ch4:MR:marginalMAP})). For the case $\noperator=\noperator_{0}<m$
but $\noperator_{0}$ is close enough to $m$, we may consider the
augmented case $\noperator=m$, i.e. $S=\OMEGA$,  like above, but
only return the results corresponding to the NOF sets $\Ssubstract j$,
$j\in\OMEGA\backslash\{\noperator_{0}+1,\ldots,m\}$. 

\section{Generalized distributive law (GDL)}

In this section, let us recall the abstract algebra and exploit the
distributive law associated with ring theory. For clarity, let us
firstly emphasize that ring theory can be applied feasibly to a set
of either variables or functions, i.e.:

- In abstract algebra {[}\citet{ch5:bk:ring:Handbook95}{]} and in
graph theory {[}\citet{ch2:origin:GDL:McEliece}{]}, the standard
definitions of a group and a ring begin with a standard set $\RING$,
associated with binary operators. 

- Because the point-wise value of any function $g:\calX\rightarrow\RING$
can be considered as a set $\RING$, for each $x\in\calX$, the above
standard definition of group and ring can be applied to a set of function
values $g(x)$ in the same way as standard set $\RING$ {[}\citet{ch5:bk:Ring:continuous60}{]}.
Hence, the GDL in ring theory can be applied directly to all functions
$g_{i}$ in (\ref{eq:ch5:MODEL}) {[}\citet{ch5:bk:ring:Gondran:Minoux:08}{]},
without the need of re-definition. 

Without loss of generality, let us avoid the former approach, which
is used in {[}\citet{ch2:origin:GDL:McEliece}{]}, and begin with
latter approach, i.e. abstract algebra for a set of functions. As
shown below, the latter is more general and actually an important
step in algorithm design, because the set of functions will clarify
the number of operators required in computing the objective functions
(\ref{eq:ch5:MODEL}).

\subsection{Abstract algebra \label{subsec:Abstract-algebra}}

Let us consider a set $\Ring$ of functions $g:\calX_{\OMEGA}\rightarrow\RING$.
Then we have following definitions:
\begin{defn}
(Commutative semigroup of functions) \label{Semigroup}\\
A commutative semigroup $(\Ring,\boxtimes)$ is a set $\Ring$ of
$\RING$-value functions on domain $\calX_{\OMEGA}$, associated with
closure binary operator $\boxtimes:\Ring\times\Ring\rightarrow\Ring:f(\xBold)\boxtimes g(\xBold)\rightarrow q(\xBold)\TRIANGLEQ(f\boxtimes g)(\xBold)$,
such that following two properties hold:\\
Associative Law: $f\boxtimes(g\boxtimes h)=(f\boxtimes g)\boxtimes h$\\
Commutative Law: $f\boxtimes g=g\boxtimes f$ \\
for any $\xBold\in\calX_{\OMEGA}$ and $f,g,h,q\in\Ring$. 
\end{defn}

$\ $
\begin{defn}
(Commutative pre-semiring of functions) \label{Pre-semiring}\\
A commutative pre-semiring $(\Ring,\boxplus,\odot)$ is a set $\Ring$
of $\RING$-value functions on domain $\spaceX$, associated with
two commutative semigroups $(\Ring,\boxplus)$ and $(\Ring,\odot)$,
such that following two properties hold:\\
Priority order of operation: $f\odot g\boxplus h=(f\odot g)\boxplus h\neq f\odot(g\boxplus h)$\\
Distributive law: $f\odot(g\boxplus h)=(f\odot g)\boxplus(f\odot h)$\\
for $f,g,h\in\Ring$.
\end{defn}

Note that, in abstract algebra, the above concept of pre-semiring
(Definition \ref{Pre-semiring}) and traditional semiring, which additionally
requires the existence of two identity elements (one for $\boxplus$
and one for $\odot$), are not equivalent  {[}\citet{ch5:bk:ring:Handbook95}{]}.
In graph theory, the definition of semiring together with the identity
elements is widely used in many papers {[}\citet{ch5:art:GDL:ViterbiAlgo90,ch2:origin:GDL:McEliece,ch5:art:MP:Entropy11}{]}.
Therefore, we use the name ``pre-semiring'', as proposed in {[}\citet{ch5:origin:PreSemiRing:Minoux01,ch5:bk:ring:Gondran:Minoux:08}{]},
and consider semiring as a special case of a pre-semiring. In contrast
to a semiring structure, the pre-semiring is more relaxed and does
not require the existence of the identity elements.

\subsection{Ring-sum and ring-product \label{subsec:chap5:Ring-sum-and-ring-product}}

Let us denote $\calX_{\omega}\subseteq\spaceX$ as a subspace of $\spaceX$,
with $\omega\subseteq\OMEGA$, as in Section \ref{subsec:chap5:Objective-functions}.
By generalizing the range from the set of real number $\REAL$ to
an arbitrary set $\RING$, the functions in our model (\ref{eq:ch5:MODEL})
become $g_{i}:\ \calX_{\omega_{i}}\rightarrow\RING$. Moreover, it
is feasible to recognize that $g_{i}\in\Ring$, since $\RING^{\calX_{\omega_{i}}}\subseteq\Ring$.
Consequently, we can apply the abstract algebra above in the pre-semiring
$(\Ring,\boxplus,\odot)$ to all functions, $g_{i}$. Note that, by
the above definitions from ring theory, the closure property of any
operators $\boxplus$, $\odot$ applied to $g_{\itime}$ is guaranteed
within the space $\Ring$, but not guaranteed within its sub-space
$\RING^{\calX_{\omega_{i}}}$.

For computation of our specific model (\ref{eq:ch5:MODEL}), we need,
however, to define some specific ring-sum and ring-product operators
properly. These definitions also clarify the number of required operators
in the sequel.
\begin{defn}
(Ring-sum) \label{DEF=00003DRing-sum}\\
Let us consider $g\in\RING^{\calX_{\omega}}$. Given $i\in\omega\subseteq\OMEGA$
and $\calX\TRIANGLEQ\{\bar{x}_{1},\ldots,\bar{x}_{M}\}$, then the
$\itime$th ring-sum is defined as:\\
$\underset{x_{i}}{\boxplus}g(\xBold_{\omega})=g_{\bar{x}_{1}}\boxplus g_{\bar{x}_{2}}\boxplus\cdots\boxplus g_{\bar{x}_{M}}$\\
where $g_{\bar{x}_{j}}\TRIANGLEQ g(x_{i}=\bar{x}_{j},\xBold_{\omega\backslash i})$
with $x_{i}\in\calX$, $j=1,\ldots,M$ and, as before, $M\TRIANGLEQ|\calX|$
(\ref{eq:ch5:DEF:M}) \\
In this way, we can define a projection function: $\underset{x_{S}}{\boxplus}:\ \RING^{\calX_{\omega}}\rightarrow\RING^{\calX_{\omega\backslash S}}$,
$\forall\SET=\{s_{1},\ldots,s_{|\SET|}\}\subseteq\omega$, as follows:

\begin{equation}
\underset{\xBold_{\SET}}{\boxplus}g(\xBold_{\omega})=\underset{x_{s_{|\SET|}}}{\boxplus}\cdots\underset{x_{s_{1}}}{\boxplus}g(\xBold_{\omega})\label{eq:ch5:DEF:Ring-sum}
\end{equation}
 
\end{defn}

$\ $
\begin{defn}
(Ring-product)\label{DEF=00003DRing-product}\\
Ring-product is an augmented function: $\odot_{k=i}^{j}:\ \RING^{\calX_{\omega_{i}}}\times\cdots\times\RING^{\calX_{\omega_{j}}}\rightarrow\RING^{\calX_{\omega}}$,
where $\omega=\omega_{i}\cup\cdots\cup\omega_{j}$, defined as follows:

\begin{equation}
\odot_{k=i}^{j}g_{k}=g_{j}\odot g_{j-1}\odot\cdots\odot g_{i}\label{eq:ch5:DEF:Ring-prod}
\end{equation}
\end{defn}

.

From the definitions above, we will apply the distributive law assumption
of pre-semiring in Definition \ref{Pre-semiring} to our ring-operators,
as follows:
\begin{defn}
(Generalized Distributive Law) \label{DEF=00003DGDL}\\
By definition of distributive law of pre-semiring in Definition \ref{Pre-semiring},
the generalized distributive law (GDL) \textit{ for elements} can
be defined as (from left to right):

\begin{equation}
g_{2}\odot(\underset{x_{S}}{\boxplus}g_{1})=\underset{x_{S}}{\boxplus}(g_{2}\odot g_{1}),\ \forall S\not\subset\omega_{2}\label{eq:ch5:GDL:element}
\end{equation}
Then, the reverse flow (from right to left) of (\ref{eq:ch5:GDL:element})
is called the generalized distributive law (GDL)\textit{ for operators},
i.e.:

\begin{equation}
\underset{x_{S}}{\boxplus}(g_{2}\odot g_{1})=g_{2}\odot(\underset{x_{S}}{\boxplus}g_{1}),\ \forall S\not\subset\omega_{2}\label{eq:ch5:GDL:op}
\end{equation}
\end{defn}

Notice the interesting difference in notation between (\ref{eq:ch5:GDL:element})
and (\ref{eq:ch5:GDL:op}). In ``GDL for elements'' (\ref{eq:ch5:GDL:element}),
it seems that the element $g_{2}$ is distributed across the operators
$\boxplus$, while in ``GDL for operators'' (\ref{eq:ch5:GDL:op}),
it looks like the operator $\underset{x_{\SET}}{\boxplus}$ is actually
the one being distributed. 

The ``GDL for elements'' (\ref{eq:ch5:GDL:element}) - for example,
$a_{3}(a_{1}+a_{2})=a_{3}a_{1}+a_{3}a_{2}$ - is more consistent with
the traditional distributive law in mathematics textbooks. However,
in practice, the ``GDL for operators'' (\ref{eq:ch5:GDL:op}) -
for example, $\sum_{i=1}^{2}(a_{3}a_{i})=a_{3}\left(\sum_{i=1}^{2}a_{i}\right)$
- is more popular and constitutes the definition of the GDL form,
e.g. in {[}\citet{ch2:origin:GDL:McEliece,ch5:PhD:Nielsen01}{]}).

In this thesis, the latter is preferred, i.e. ``GDL for operators''
(\ref{eq:ch5:GDL:op}) will be defined as GDL, although both terms
``GDL for elements'' and ``GDL for operators'' are obviously equivalent. 

\subsection{Computational reduction via the GDL \label{subsec:ch5:Computational-reduction-via-GDL}}

Let us investigate the computational load below by counting the number
of operators $\boxplus$ and $\odot$ involved in the ring-sum (\ref{eq:ch5:DEF:Ring-sum}),
ring-product (\ref{eq:ch5:DEF:Ring-prod}) and GDL (\ref{eq:ch5:GDL:op}). 

Intuitively, the computational reduction, from left hand side to right
hand side  in GDL (\ref{eq:ch5:GDL:op}), comes from the order of
the computation of projection (i.e. ring-sum), and augmentation (i.e.
ring-product). On the left hand side of (\ref{eq:ch5:GDL:op}), the
augmentation is implemented first, followed by projection, while the
order is reversed on the right hand side  of (\ref{eq:ch5:GDL:op}),
i.e projection is implemented first, followed by augmentation. Therefore,
the right hand side of (\ref{eq:ch5:GDL:op}) is more efficient because
it rules out the nuisance variables $x_{S}$ upfront (i.e. immediately
after recognizing that $x_{S}$ is no-longer-needed). In contrast,
the left hand side of (\ref{eq:ch5:GDL:op}) is less efficient because
the nuisance variables $x_{S}$ are dragged along in the augmentation
operator, $g_{2}\odot g_{1}$, and increases the number of computations
required for the final result. 

In practice, however, this reduction does not always imply an actual
reduction in computational load. For example, the quantities involved
in the left hand side, $\underset{x_{\SET}}{\boxplus}(g_{2}\odot g_{1})$
might be known beforehand and retrieved via table-lookup, while the
right hand side quantities might be unknown and hence requires computation.
Nevertheless, the general interest is to count the number of operators,
because that number is typically assumed to be proportional to computational
load in practice.

\subsubsection{Computational load in ring-sum and ring-product \label{subsec:chap5:cost-ring-sum}}

For convenience, let us denote $\calO[\cdot]$ as a counting procedure
returning the order of the number $O_{\boxtimes}$ of operators, $\boxtimes$,
applied on function in brackets, $[\cdot]$. 
\begin{lem}
\label{lem:COST} (Computational load in ring-sum and ring-product)\\
a) The computational load of a ring-sum only depends on dimension
$m$ of its function $g\in\RING^{\calX_{\omega}}$:

\[
\calO[\underset{\xBold_{S}}{\boxplus}g(x_{\omega})]=O_{\SUM}(M^{|\omega|}),\ \forall S\subseteq\omega
\]
b) The computational load of ring-product depends on the combined
dimension of its two functions:

\textup{
\[
\calO[g_{2}\PROD g_{1}]=O_{\PROD}(M^{|\omega_{1}\cup\omega_{2}|})
\]
}
\end{lem}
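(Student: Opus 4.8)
The plan is to count the ring-operators directly from their recursive definitions (\ref{eq:ch5:DEF:Ring-sum}) and (\ref{eq:ch5:DEF:Ring-prod}), bounding the totals by elementary geometric sums. For part (a), the first step is to settle the cost of eliminating a \emph{single} variable. By Definition \ref{DEF=00003DRing-sum}, the ring-sum $\SSUM{x_i}g$ over one index $i$ collapses the $M$ slices $g_{\bar x_1},\ldots,g_{\bar x_M}$ into one value using $M-1$ binary $\SUM$-operators, and this is repeated for each fixed assignment of the remaining $|\omega|-1$ variables. Since there are $M^{|\omega|-1}$ such assignments, eliminating one index from a function of $d$ free variables costs exactly $(M-1)M^{d-1}$ operators.

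The second step is to sum out the whole set $S$ one index at a time, so that the $j$-th elimination acts on a function of $|\omega|-(j-1)$ free variables and costs $(M-1)M^{|\omega|-j}$. Summing over $j=1,\ldots,|S|$ yields
\[
\sum_{j=1}^{|S|}(M-1)M^{|\omega|-j}=M^{|\omega|}\bigl(1-M^{-|S|}\bigr)<M^{|\omega|},
\]
a geometric series dominated by its first term. Hence $\calO[\SSUM{\xBold_S}g(\xBold_\omega)]=O_{\SUM}(M^{|\omega|})$, independently of $|S|$, which is the claimed bound; I would also note that it is already tight (up to the constant $M-1$) when $|S|=1$.

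Part (b) is a direct pointwise count. By Definition \ref{DEF=00003DRing-product}, the augmented function $g_2\PROD g_1$ is defined on the product domain $\calX_\omega$ with $\omega=\omega_1\cup\omega_2$, and each of its $M^{|\omega_1\cup\omega_2|}$ entries is produced by a single $\PROD$-operator combining the relevant slices of $g_1$ and $g_2$. The operator count is therefore exactly $M^{|\omega_1\cup\omega_2|}$, giving $\calO[g_2\PROD g_1]=O_{\PROD}(M^{|\omega_1\cup\omega_2|})$.

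The main obstacle is not the arithmetic but keeping the book-keeping unambiguous. I must state explicitly that the sequential-elimination argument in (a) presupposes that each intermediate ring-sum is stored as a function over the shrinking index set, so that no slice is recomputed; otherwise a naive nesting could inflate the count. A secondary care-point is closure: by Section \ref{subsec:chap5:Ring-sum-and-ring-product}, the operators $\SUM$ and $\PROD$ are guaranteed to close only in the ambient space $\Ring$ and not necessarily in the subspaces $\RING^{\calX_{\omega_i}}$, so I would phrase the argument with all intermediate objects understood to live in $\Ring$. The counting itself is insensitive to this distinction, but flagging it keeps the statement rigorous.
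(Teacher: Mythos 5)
Your proof is correct and takes essentially the same route as the paper's: you eliminate the indices in $S$ one at a time and bound the total by a geometric series whose leading term $M^{|\omega|}$ dominates (the paper writes this as $O_{\SUM}(\sum_{i=|\omega\backslash\SET|}^{|\omega|}M^{i})=O_{\SUM}(M^{|\omega|})$), and for the ring-product you count one $\PROD$ per point of the union domain $\calX_{\omega_{1}\cup\omega_{2}}$, exactly as the paper does. Your exact per-step count $(M-1)M^{d-1}$ and the memoisation/closure caveats merely sharpen the paper's bookkeeping without changing the argument.
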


\begin{proof}
a) For ring-sum, $\underset{x_{i}}{\boxplus}$, in Definition \ref{DEF=00003DRing-sum},
we need $O_{\SUM}(M^{|\omega|})$ of operator $\SUM$. Hence, for
$\underset{\xBold_{\SET}}{\boxplus}$, we need $O_{\SUM}(\sum_{i=|\omega\backslash\SET|}^{|\omega|}M^{i})=O_{\SUM}(M^{|\omega|})$
of operator $\SUM$.

b) Because the result of binary ring-product $g_{1}(\xBold_{\omega_{1}})\PROD g_{2}(\xBold_{\omega_{2}})$
is a function with union domains $g(\xBold_{\omega_{1}\cup\omega_{2}})$,
it requires $O(M^{|\omega_{1}\cup\omega_{2}|})$ of operator $\PROD$
for computing the $M^{|\omega_{1}\cup\omega_{2}|}$ values of $\xBold_{\omega_{1}\cup\omega_{2}}$. 
\end{proof}

\subsubsection{Computational load in GDL}

From Lemma \ref{lem:COST}, we can prove the main theorem of this
chapter, as follows: 
\begin{thm}
\label{thm:GDL} The GDL for operators (\ref{eq:ch5:GDL:op}), if
applicable, always reduces the number of operators; i.e.:

\textup{
\begin{equation}
\calO[\underset{x_{S}}{\boxplus}(g_{2}\odot g_{1})]>\calO[g_{2}\odot(\underset{x_{S}}{\boxplus}g_{1})],\ \forall S\not\subset\omega_{2}\label{eq:ch5:GDL:op:big-O}
\end{equation}
}
\end{thm}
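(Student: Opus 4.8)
The plan is to turn both sides of (\ref{eq:ch5:GDL:op}) into explicit operator tallies via Lemma \ref{lem:COST}, and then compare them summand by summand. First I would fix notation, writing $u\TRIANGLEQ|\omega_{1}\cup\omega_{2}|$, $p\TRIANGLEQ|\omega_{1}|$ and $s\TRIANGLEQ|S|$. Here it is essential to pin down what ``if applicable'' means: the distributive step that extracts $g_{2}$ from the ring-sum is valid only when $g_{2}$ is constant in every summed variable, i.e. $S\cap\omega_{2}=\emptyset$. Moreover the hypothesis $S\not\subset\omega_{2}$ forces $S\neq\emptyset$ (since $\emptyset\subseteq\omega_{2}$ trivially), so together with disjointness and the standing convention $S\subseteq\omega_{1}\cup\omega_{2}$ I obtain $S\subseteq\omega_{1}$ and $s\geq1$. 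This last fact is exactly the lever that produces a \emph{strict} reduction.

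Next I would count the left-hand side $\underset{x_{S}}{\boxplus}(g_{2}\odot g_{1})$ as a two-stage computation. Stage one forms the ring-product $g_{2}\odot g_{1}$, a function on $\omega_{1}\cup\omega_{2}$, at cost $O_{\PROD}(M^{u})$ by Lemma \ref{lem:COST}(b); stage two applies $\underset{x_{S}}{\boxplus}$ to that function, at cost $O_{\SUM}(M^{u})$ by Lemma \ref{lem:COST}(a). Hence the left-hand load is $O_{\PROD}(M^{u})+O_{\SUM}(M^{u})$. The right-hand side $g_{2}\odot(\underset{x_{S}}{\boxplus}g_{1})$ is counted the same way: summing $g_{1}$ (a function on $\omega_{1}$) costs $O_{\SUM}(M^{p})$ and leaves a function on $\omega_{1}\backslash S$; because $S\cap\omega_{2}=\emptyset$, multiplying by $g_{2}$ acts on the index set $\omega_{2}\cup(\omega_{1}\backslash S)=(\omega_{1}\cup\omega_{2})\backslash S$ of cardinality $u-s$, costing $O_{\PROD}(M^{u-s})$. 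Thus the right-hand load is $O_{\SUM}(M^{p})+O_{\PROD}(M^{u-s})$.

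The conclusion then follows by comparing corresponding summands. Since $\omega_{1}\subseteq\omega_{1}\cup\omega_{2}$ gives $p\leq u$, the ring-sum tally does not increase, $O_{\SUM}(M^{p})\leq O_{\SUM}(M^{u})$; and since $s\geq1$ (with $M\geq2$), the ring-product tally strictly decreases, $O_{\PROD}(M^{u-s})<O_{\PROD}(M^{u})$. Adding the two relations yields the strict inequality (\ref{eq:ch5:GDL:op:big-O}).

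I expect the main obstacle to be conceptual rather than computational: making precise what $\calO[\cdot]$ tallies when an expression mixes ring-sums and ring-products, and handling the degenerate case $\omega_{2}\subseteq\omega_{1}$. In that case $p=u$, so the two sum-counts coincide and the entire strict gain comes solely from the product term $M^{u}\to M^{u-s}$. I would therefore state explicitly that $\calO[\cdot]$ counts both operator types with multiplicity, so that a strict drop in even one summand forces a strict drop in the total. This also makes clear that the claim concerns the exact operator count, not merely its big-$O$ order, in which $2M^{u}$ and $M^{u}$ would be indistinguishable.
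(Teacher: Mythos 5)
Your proposal is correct and takes essentially the same route as the paper: both sides are decomposed into a ring-sum tally plus a ring-product tally via Lemma \ref{lem:COST}, giving $O_{\SUM}(M^{|\omega_{1}\cup\omega_{2}|})+O_{\PROD}(M^{|\omega_{1}\cup\omega_{2}|})$ on the left against $O_{\SUM}(M^{|\omega_{1}|})+O_{\PROD}(M^{|\{\omega_{1}\backslash\SET\}\cup\omega_{2}|})$ on the right. The only (cosmetic) difference is where strictness enters: the paper first establishes the non-strict inequality and then rules out equality by noting it would force $\SET\subseteq\omega_{1}\cap\omega_{2}$, contradicting the validity condition $\SET\subseteq\omega_{1}\backslash\{\omega_{1}\cap\omega_{2}\}$, whereas you inject that same disjointness condition at the counting stage to simplify the product exponent to $|\omega_{1}\cup\omega_{2}|-|\SET|$ and obtain the strict drop directly in one summand --- the identical mechanism, applied one step earlier, with the added benefit of handling the degenerate case $\omega_{2}\subseteq\omega_{1}$ transparently.
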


\begin{proof}
From (\ref{eq:ch5:GDL:op}) and Lemma \ref{lem:COST}, we have: 
\begin{eqnarray*}
\calO[\underset{x_{\SET}}{\boxplus}(g_{2}\odot g_{1})] & = & \calO[\underset{x_{\SET}}{\boxplus}\gBold_{1:2}]+\calO[g_{2}\odot g_{1}]\\
 & = & O_{\SUM}(M^{|\omega_{1}\cup\omega_{2}|})+O_{\PROD}(M^{|\omega_{1}\cup\omega_{2}|})
\end{eqnarray*}
where $\gBold_{1:2}=g_{2}\odot g_{1}$ is regarded as a point-wise
function $\gBold_{1:2}:\calX_{\omega_{1}\cup\omega_{2}}\rightarrow\RING$.
Likewise, we have:

\begin{eqnarray*}
\calO[g_{2}\odot(\underset{x_{\SET}}{\boxplus}g_{1})] & = & \calO[\underset{x_{\SET}}{\boxplus}g_{1}]+\calO[g_{2}\odot g(\xBold_{\omega_{1}\backslash\SET})]\\
 & = & O_{\SUM}(M^{|\omega_{1}|})+O_{\PROD}(M^{|\{\omega_{1}\backslash\SET\}\cup\omega_{2}|})
\end{eqnarray*}
Since $|\omega_{1}\cup\omega_{2}|\geq|\omega_{1}|$ and $|\omega_{1}\cup\omega_{2}|\geq|\{\omega_{1}\backslash\SET\}\cup\omega_{2}|$,
the following non-strict inequality follows:

\begin{equation}
\calO[\underset{x_{\SET}}{\boxplus}(g_{2}\odot g_{1})]\geq\calO[g_{2}\odot(\underset{x_{\SET}}{\boxplus}g_{1})]\label{eq:ch5:Proof=00003DGDL:op}
\end{equation}
Equality requires both $|\omega_{1}\cup\omega_{2}|=|\omega_{1}|$
and $|\omega_{1}\cup\omega_{2}|=|\{\omega_{1}\backslash\SET\}\cup\omega_{2}|$,
i.e. $\omega_{2}\subseteq\omega_{1}$ and $\SET\subseteq\omega_{1}\cap\omega_{2}$,
respectively. However, because the condition for GDL is $S\not\subset\omega_{2}$,
or, equivalently $\SET\subseteq\omega_{1}\backslash\{\omega_{1}\cap\omega_{2}\}$,
the equality in (\ref{eq:ch5:Proof=00003DGDL:op}) only happens if
the GDL (\ref{eq:ch5:GDL:op}) is invalid.
\end{proof}
In Theorem \ref{thm:GDL}, the GDL is recognized, for the first time,
as \textit{always} reducing the number of operators in any pre-semiring.
In the literature, there is no result guaranteeing such a reduction
in GDL for all cases in ring theoretic class. The GDL is only explicitly
recognized to reduce the number of operators in the trivial cases,
such as $a(b+c)=ab+ac$ in mathematical textbooks, and is instead
applied to specific models in order to evaluate the reduction on a
case-by-case basis {[}\citet{ch2:origin:GDL:McEliece,ch2:bk:ToddMoon,ch5:art:MP:EntropyRing08,ch5:art:MP:Entropy11}{]}. 

Moreover, Lemma \ref{lem:COST} provides an explicit formula for counting
the number of operators in GDL (\ref{eq:ch5:GDL:op:big-O}) via set
operators, rather than via a complicated graphical topology in {[}\citet{ch2:origin:GDL:McEliece,ch2:bk:ToddMoon}{]}. 

\section{GDL for objective functions \label{sec:chap5:GDL-for-objective}}

In this section, we will propose a novel recursive technique, called
forward-backward (FB) recursion, for computing the objective functions
in (\ref{eq:ch5:SUM:single}) and (\ref{eq:ch5:SUM:sequential}).
We call this technique FB, owing to its similarity with well known
FB algorithm for Hidden Markov Chain in the literature. This similarity
will be clarified in Section \ref{subsec:chap6:FB-HMC} of this thesis. 

\subsection{FB recursion for single objective function \label{sec:Computation-via-distributive}}

After setting up the pre-semiring, $(\RING^{X_{m}},\SUM,\PROD)$,
our aim is to compute the single objective function (\ref{eq:ch5:SUM:single}):

\begin{equation}
\underset{\xBold_{\SET}}{\boxplus}\gBold_{1:n}=\underset{\xBold_{\SET}}{\boxplus}(\odot_{i=1}^{n}g_{i})\label{eq:SUM-PRODUCT}
\end{equation}
with $\SET\subseteq\OMEGA$. From Lemma \ref{lem:COST}, we can see
that the cost of direct computation on the right hand side of (\ref{eq:SUM-PRODUCT})
is exponentially increasing with $m$, i.e. $O(M^{m})$, which is
impractical. Because the GDL (\ref{eq:ch5:GDL:op:big-O}) always reduces
the cost, we will exploit the conditionally independent (CI) structure
in (\ref{eq:SUM-PRODUCT}) and apply the GDL, as shown below.

\subsubsection{Binary tree factorization}

The joint ring-products (\ref{eq:ch5:MODEL}) can be written as:

\begin{equation}
\gBold_{1:n}=\gBold_{i+1:n}\PROD\gBold_{1:i},\ \ \ i\in\{1,\ldots,n-1\}\label{eq:BinaryTree=00003DFB}
\end{equation}
where $\gBold_{1:i}\TRIANGLEQ\odot_{j=1}^{i}g_{j}$ and $\gBold_{i:n}\TRIANGLEQ\odot_{j=i}^{n}g_{j}$,
together with forward and backward recursions:

\begin{eqnarray}
\gBold_{1:j} & = & g_{j}\PROD\gBold_{1:j-1},\ \ \ j=2,\ldots,i\label{eq:BinaryTree=00003DFW}\\
\gBold_{j:n} & = & g_{j}\PROD\gBold_{j+1:n},\ \ \ j=n-1,\ldots,i+1\label{eq:BinaryTree=00003DBW}
\end{eqnarray}
The domains of the functions in (\ref{eq:BinaryTree=00003DFB}-\ref{eq:BinaryTree=00003DBW})
are, respectively:

\begin{eqnarray}
dom(\gBold_{1:j}) & = & \{\xMath_{\bcommon j},\xBold_{\bbracket{1:j}}\}\label{eq:domain=00003DBinaryTree}\\
dom(\gBold_{j+1:n}) & = & \{\xBold_{\bparent{j+1:n}},\xMath_{\bcommon j}\}\nonumber \\
dom(\gBold_{1:n})=\xBold_{\OMEGA} & = & \{\xBold_{\bparent{j+1:n}},\xMath_{\bcommon j},\xBold_{\bbracket{1:j}}\}\nonumber 
\end{eqnarray}
using the notation defined in Proposition \ref{prop:(TernarySET)}.
Then, we will see below that operators, $\SUM$, upon the joint $\gBold_{1:n}$
can be distributed into the binary tree structure (\ref{eq:BinaryTree=00003DFB}),
owing to the GDL-for-operators form (\ref{eq:ch5:GDL:op}).

\subsubsection{Derivation of the FB recursion}

Let us denote $\widehat{\mathbf{g}}_{1:n}\TRIANGLEQ\underset{\xBold_{\SET}}{\boxplus}\gBold_{1:n}=g(\xBold_{S^{c}})$.
Owing to the separable domains in (\ref{eq:domain=00003DBinaryTree}),
we can apply the GDL (\ref{eq:ch5:GDL:op}) to computation of operator
$\underset{\xBold_{\SET}}{\boxplus}$ on (\ref{eq:BinaryTree=00003DFB}),
as follows:

\begin{equation}
\widehat{\mathbf{g}}_{1:n}=\underset{\xMath_{s_{\bcommon i}}}{\boxplus}(\gHat_{i+1:n}\odot\gHat_{1:i}),\ \ \ i\in\{1,\ldots,n-1\}\label{eq:Ternary=00003DGDL}
\end{equation}
where $\mbox{\ensuremath{\gHat_{1:i}}}\TRIANGLEQ\underset{\xBold_{s_{\bbracket{1:i}}}}{\boxplus}\gBold_{1:i}$
and $\gHat_{i:n}\TRIANGLEQ\underset{\xBold_{s_{\bparent{i:n}}}}{\boxplus}\gBold_{i:n}$
, together with forward and backward recursions:

\begin{eqnarray}
\gHat_{1:j} & =\underset{\xMath_{s_{\bbracket j}}}{\boxplus} & (g_{j}\odot\gHat_{1:j-1}),\ \ \ j=2,\ldots,i\label{eq:ch5:GDL:FW}\\
\gHat_{j:n} & =\underset{\xMath_{s_{\bparent j}}}{\boxplus} & (g_{j}\odot\gHat_{j+1:n}),\ \ \ j=n-1,\ldots,i+1\label{eq:ch5:GDL:BW}
\end{eqnarray}

The domain of functions in (\ref{eq:Ternary=00003DGDL}-\ref{eq:ch5:GDL:BW})
are:

\begin{eqnarray}
dom(\mbox{\ensuremath{\gHat_{1:j}}}) & = & \{\xMath_{\bcommon j},\xBold_{s_{\bbracket{1:j}}^{C}}\}\label{eq:domain=00003DFB}\\
dom(\gHat_{j+1:n}) & = & \{\xBold_{s_{\bparent{j+1:n}}^{C}},\xMath_{\bcommon j}\}\nonumber \\
dom(\widehat{\mathbf{g}}_{1:n})=\xBold_{\SET^{c}} & = & \{\xBold_{s_{\bparent{i+1:n}}^{C}},\xMath_{s_{\bcommon i}^{c}},\xBold_{s_{\bbracket{1:i}}^{C}}\}\nonumber 
\end{eqnarray}

Examining (\ref{eq:domain=00003DBinaryTree}) with (\ref{eq:domain=00003DFB}),
we can see that the variable indices in these domains can be feasibly
derived via the set algebras in (\ref{eq:ch5:factor:Ternary}) and
(\ref{eq:ch5:SET:ternary}), respectively. It is also important to
emphasize that, for computing a single objective function, $\widehat{\mathbf{g}}_{1:n}$
in (\ref{eq:Ternary=00003DGDL}), the value $i\in\{1,\ldots,n-1\}$
can be chosen arbitrarily, since each yields the same result $\widehat{\mathbf{g}}_{1:n}$.
These choices, however, may differ in computational load. 

\subsubsection{Intermediate steps}

Note that, when computing FB recursion (\ref{eq:Ternary=00003DGDL}-\ref{eq:ch5:GDL:BW}),
we have already computed the following intermediate results:

\begin{eqnarray}
\gamma_{i} & = & \gHat_{i+1:n}\odot\gHat_{i}\label{eq:GDL=00003DSecondOrder}\\
\gbar_{1:j} & = & g_{j}\odot\gHat_{1:j-1},\ \ \ \mbox{for}\ j=1,\ldots,i\nonumber \\
\gbar_{j:n} & = & g_{j}\odot\gHat_{j+1:n},\ \ \ \mbox{for}\ j=n,\ldots,i+1\nonumber 
\end{eqnarray}
with respect to the chosen index, $i\in\{1\ldots,n\}$. The domains
of the functions in (\ref{eq:GDL=00003DSecondOrder}) are, respectively:
\begin{eqnarray}
dom(\gamma_{i}) & = & \{\xBold_{s_{\bparent{i+1:n}}^{C}},\xMath_{\bcommon i},\xBold_{s_{\bbracket{1:i}}^{C}}\}\label{eq:domain=00003DGDL}\\
dom(\gbar_{1:j}) & = & \{\xMath_{\bcommon j},\xMath_{\bbracket j},\xBold_{s_{\bbracket{1:j-1}}^{C}}\}\nonumber \\
dom(\gbar_{j+1:n}) & = & \{\xBold_{s_{\bparent{j+2:n}}^{C}},\xMath_{\bparent{j+1}},\xMath_{\bcommon j}\}\nonumber 
\end{eqnarray}
which are set combinations of the two sets (\ref{eq:domain=00003DBinaryTree})
and (\ref{eq:domain=00003DFB}). 

These intermediate results (\ref{eq:GDL=00003DSecondOrder}) will
be useful for evaluating the computational load of FB recursion. They
are also valuable resources for efficient computation in sequential
objective functions, which we will consider in Section \ref{sec:Sequence-of-sets}.

\subsubsection{Computational load for a single objective function via FB}

From (\ref{eq:Ternary=00003DGDL}-\ref{eq:ch5:GDL:BW}), we can see
that the computational load for a specific $i\in\{1,\ldots,n\}$ depends
on three steps, one forward recursion (\ref{eq:ch5:GDL:FW}), one
backward recursion (\ref{eq:ch5:GDL:BW}) and one combination of these
(\ref{eq:Ternary=00003DGDL}). Then, the numbers of ring-sum $\SUM$
and ring-product $\PROD$ via the FB recursion (\ref{eq:Ternary=00003DGDL}-\ref{eq:ch5:GDL:BW})
are both equal to $\calO[\underset{\xBold_{S}}{\boxplus}\gBold_{1:n}]=\phi_{\SET}(i)$,
where $\phi_{\SET}(i)$ is noted explicitly to be a function of $i\in\{1,\ldots,n\}$,
as follows:

\begin{eqnarray}
\phi_{\SET}(i) & = & O(M^{W_{i}})+\sum_{j=i+1}^{n}O(M^{B_{j}})+\sum_{j=1}^{i}O(M^{F_{j}})\label{eq:GDL=00003Dtriple-O}
\end{eqnarray}
where $W_{i}$, $B_{j}$, $F_{j}$ denote the domain dimensions of
functions $\gamma_{i}$, $\gbar_{j:n}$, $\gbar_{1:j}$, in (\ref{eq:domain=00003DGDL}),
respectively, and, as before, $M\TRIANGLEQ|\calX|$ (\ref{eq:ch5:DEF:M}).
Note that, in practice, we often have $W,B_{j},F_{j}\ll m$, $j=1,\ldots,n$,
which yields a significant reduction in the total cost:

\[
\phi_{\SET}\ll O(M^{m})
\]

Although we only need to pick up one value $i$ from $\{1,\ldots,n\}$
in order to compute $\widehat{\mathbf{g}}_{1:n}$, this choice of
$i$ can greatly influence the computational load, $\phi_{S}(i)$,
as noted above. For this general case, a criterion for optimal choice
of $i$ has not yet been established in the literature. The discussion
on this optimization issue will be given in Section \ref{sec:chap5:optimizations}.

\subsection{FB recursion for sequential objective functions \label{sec:Sequence-of-sets}}

Let us now study an efficient scheme for computing sequential objective
functions in (\ref{eq:ch5:SUM:sequential}). Furthermore, for simplicity,
let us confine ourselves to the case of non-overflowed (NOF) sets
in Definition \ref{(Non-overflowed-set)}. The main advantage of NOF
condition is that the result of any NOF objective function $g(\Ssubstract j)=\underset{\xBold_{\SET_{j}}}{\boxplus}\gBold_{n}$
can be extracted from the FB recursion for evaluating the union objective
function $g(S^{c})=\underset{\xBold_{\SET}}{\boxplus}\gBold_{1:n}$,
where $\SET=\SET_{1}\cup\cdots\cup\SET_{\kappa}$ and $\OMEGA=\{S^{c},S\}$,
$j\in\{1,\ldots,\kappa\}$. 

Indeed, if NOF condition (Definition \ref{(Non-overflowed-set)})
is satisfied, $\xBold_{\Ssubstract j}$ must belong to union domain
of two recursive functions $\gbar_{1:i}$, $\gbar_{i:n}$ in (\ref{eq:domain=00003DGDL}),
for a specific $\iinn$, say $\itime=\itime_{j}$. Then, given that
specific value $i_{j}$, the result $\underset{\xBold_{S_{j}}}{\boxplus}\gBold_{n}$
for any $j=1,\ldots,\kappa$, can be extracted from a single FB recursion
$\underset{\xBold_{\SET}}{\boxplus}\gBold_{n}$ over $\SET$, without
the need to recompute FB recursion $\underset{\xBold_{S_{j}}}{\boxplus}\gBold_{n}$
for each $S_{j}$. We will divide the computation into two stages,
as presented below.

\subsubsection{FB recursion stage for union objective function}

In the first stage, defining the union set $\SET$, where $\SET=\SET_{1}\cup\cdots\cup\SET_{\kappa}$,
we compute one complete forward recursion (\ref{eq:ch5:GDL:FW}) and
one complete backward recursion (\ref{eq:ch5:GDL:BW}) for $\SET$,
i.e. until $i=n-1$ in forward recursion (\ref{eq:ch5:GDL:FW}) and
until $i=0$ for backward (\ref{eq:ch5:GDL:BW}). After finishing
this step, we have access to all forward and backward functions $\gHat_{1:i}$,
$\gHat_{i:n}$, $\gbar_{1:i}$, $\gbar_{i:n}$, $\forall\iton$, owing
to equations (\ref{eq:ch5:GDL:FW}-\ref{eq:GDL=00003DSecondOrder}).
This is the end of first stage.

Note that, because there is no need to evaluate the combination step
(\ref{eq:Ternary=00003DGDL}) in this stage, the values $\gamma_{i}$
in (\ref{eq:GDL=00003DSecondOrder}) are not computed either.

\subsubsection{FB extraction stage for sequential objective functions}

In the second stage, it is possible to extract two results, the union
objective function $g(S^{c})=\underset{\xBold_{\SET}}{\boxplus}\gBold_{1:n}$
and the sequence of NOF objective functions $g(\Ssubstract j)=\underset{\xBold_{\SET_{j}}}{\boxplus}\gBold_{n}$,
from memorized values $\gbar_{1:i},\gbar_{i:n},\forall\iton,$ of
the first stage. For clarity and completeness, let us consider the
extraction steps for the union and sequential cases, respectively.
The special case of non-overflowed (NOF) sets, namely scalar sets
(Proposition \ref{prop:Scalar-non-overflowed}), will also be discussed.\\
\\

\begin{itemize}
\item \textbf{For union set }$\SET$\textbf{:}
\end{itemize}
In order to compute the value $\gHat_{1:n}\TRIANGLEQ\underset{\xBold_{\SET}}{\boxplus}\gBold_{1:n}$,
we can extract values $\gbar_{1:i},\gbar_{i+1:n}$ at any $\iinn$
in current memory and compute:

\begin{eqnarray}
\gHat_{1:n}\TRIANGLEQ\underset{\xBold_{\SET}}{\boxplus}\gBold_{1:n} & = & \underset{\xBold_{\SET_{\ASET_{i}}}}{\boxplus}\gbar_{\ASET_{i}}\label{eq:GDL=00003Dgn}
\end{eqnarray}
where $\ASET_{i}$ is an in-processing tri-partitioned set, defined
in (\ref{eq:ch5:factor:in-process}), $\SET_{\ASET_{i}}$ are the
associated in-process operator index set (\ref{eq:ch5:non-overflow:set})
and $\gbar_{\ASET_{i}}$ is a new object, defined as follows:

\[
\gbar_{\ASET_{i}}\TRIANGLEQ\gbar_{i+1:n}\PROD\gbar_{1:i}
\]

Applying the GDL (\ref{eq:ch5:GDL:op}) and substituting the three
partitions of $\ASET_{i}$ in (\ref{eq:ch5:factor:in-process}) into
the right hand side of (\ref{eq:GDL=00003Dgn}), we retrieve equation
(\ref{eq:Ternary=00003DGDL}) for $\underset{\xBold_{S}}{\boxplus}\gBold_{1:n}.$
Hence, the step (\ref{eq:GDL=00003Dgn}) can be considered as extraction
step from FB recursions (\ref{eq:ch5:GDL:FW}-\ref{eq:ch5:GDL:BW}).
\begin{itemize}
\item \textbf{For non-overflowed (NOF) sets} $\Ssubstract j$\textbf{:}
\end{itemize}
Given non-overflowed (NOF) sets $\{\Ssubstract 1,\ldots,\Ssubstract{\noperator}\}$,
let us pick up the index $i$ satisfying condition $\Ssubstract j\subseteq\SET_{\ASET_{i}}$
in Definition \ref{(Non-overflowed-set)} and retrieve that in-processing
tri-partitioned set $\ASET_{i}$. For computing the sequence $\underset{\xBold_{\SET_{j}}}{\boxplus}\gBold_{1:n}=\underset{\xBold_{\SET\backslash\Ssubstract j}}{\boxplus}\gBold_{1:n}$,
the set $\SET_{\ASET_{i}}$ in (\ref{eq:GDL=00003Dgn}) can be replaced
with $\SET_{\ASET_{i}}^{\backslash j}$, defined in (\ref{eq:ch5:non-overflow:complement}),
as follows:

\begin{eqnarray}
\underset{\xBold_{\SET_{j}}}{\boxplus}\gBold_{1:n} & = & \underset{\xBold_{\SET_{\ASET_{i}}\backslash\Ssubstract j}}{\boxplus}\gbar_{\ASET_{i}}\label{eq:GDL=00003Dgk}\\
 & = & \underset{\xBold_{s_{\bcommon i}^{\backslash j}}}{\boxplus}(\ghat_{\ASET_{i}})\nonumber 
\end{eqnarray}

in which $j\in\{1,\ldots,\kappa\}$, and $\ghat_{\ASET_{i}}$ is a
new object, defined as follows:

\begin{equation}
\ghat_{\ASET_{i}}\TRIANGLEQ(\underset{\xBold_{s_{\bparent{i+1}}^{\backslash j}}}{\boxplus}\gbar_{i+1:n})\PROD(\underset{\xBold_{s_{\bbracket i}^{\backslash j}}}{\boxplus}\gbar_{1:i})\label{eq:ch5:GDL:gk:in-process}
\end{equation}
Hence, substituting the memorized values $\gbar_{1:i},\gbar_{i:n},\forall\iton$
of the first stage into (\ref{eq:ch5:GDL:gk:in-process}), it is straight-forward
to compute (\ref{eq:GDL=00003Dgk}) in one step (i.e. without recursion)
for any $j\in\{1,\ldots,\kappa\}$.
\begin{itemize}
\item \textbf{For a sequence of scalar sets} $\Ssubstract j=\{j\}\in S=\OMEGA$\textbf{:}
\end{itemize}
Let us recall, from Proposition \ref{prop:Scalar-non-overflowed},
the scalar sets $\Ssubstract j=\{j\}\in S=\OMEGA$ are indeed non-overflowed
(NOF) sets. This special case is of particular interest in practice.
For example, the task of computing all scalar functions $g(x_{j})=\underset{\xBold_{\backslash j}}{\boxplus}\gBold_{n}$,
$\jtom$, is a major concern in applications of GDL {[}\citet{ch2:origin:GDL:McEliece}{]}.
Because the number of variables $m$ is very high, $\noperator$ is
also very high, since $\kappa=m$ in this case. Hence, the above scheme
often significantly reduces the cost in this case of scalar NOF sets.

\subsubsection{Computational load for sequential objective functions}

The total computational load of FB recursion in this case can be evaluated
as in (\ref{eq:GDL=00003Dtriple-O}), with $\kappa$ extraction steps
(\ref{eq:GDL=00003Dgk}) in the second stage and a single step of
fully FB recursion in first stage, as follows: 

\begin{equation}
\phi_{\SET_{1:\kappa}}=\sum_{j=1}^{\kappa}O(M^{W_{j}})+\sum_{i=1}^{n}(O(M^{B_{i}})+O(M^{F_{i}}))\label{eq:GDL=00003Dtriple-O=00003Dk}
\end{equation}
where $W_{j}$ is the dimension of domain of the $\ghat_{\ASET_{i}}$
in (\ref{eq:GDL=00003Dgk}). Comparing (\ref{eq:GDL=00003Dgk},\ref{eq:GDL=00003Dtriple-O=00003Dk})
with (\ref{eq:Ternary=00003DGDL},\ref{eq:GDL=00003Dtriple-O}), respectively,
we can see that $W_{i}-|\Ssubstract j|\leq W_{j}\leq W_{i}$, with
$i,j$ satisfying the NOF condition  in (\ref{eq:ch5:non-overflow:set}).

In practice, the two-step FB recursive scheme of the first stage in
(\ref{eq:ch5:GDL:FW}-\ref{eq:ch5:GDL:BW}) , i.e. with one full forward
recursion and one full backward recursion, is roughly double the cost
of the one-step FB recursion in (\ref{eq:GDL=00003Dtriple-O}). Also,
the cost of the first stage in the two-step FB recursion often dominated
the cost of the second stage (extraction stage), i.e. $W_{i},W_{j}\ll B_{j},F_{j}$
with $i,j$ satisfying the NOF condition  in (\ref{eq:ch5:non-overflow:set}).
In that case, we can expect that $\phi_{\SET_{1:\kappa}}\approx2\phi_{S}$,
i.e. the total cost of computing the sequence $\underset{\xBold_{S_{j}}}{\boxplus}\gBold_{n}$,
is only double the cost of the single task $\underset{\xBold_{S}}{\boxplus}\gBold_{n}$.
Those costs are, therefore, of the same order.

\subsection{Computational bounds and optimization for FB recursion \label{sec:chap5:optimizations}}

By Theorem \ref{thm:GDL}, we know that GDL always reduces the total
number of operators, and, hence, computational load in FB recursion.
A well-posed question is to ask which choice of permutation order
of factors $g_{i}$ in $\gBold_{1:n}$ minimizes the number of operators
in FB recursion. This optimal choice is an open problem in the literature
{[}\citet{ch2:origin:GDL:McEliece}{]}. What we can achieve here is
specification of some computational bounds, and we will discuss some
difficulties of this optimization. Several practical approaches will
also be provided. 

\subsubsection{Bounds on computational complexity}

Firstly, let us consider the upper and lower bounds of the number
of operator in original model (\ref{eq:ch5:MODEL}). The derivation
of these bounds will illustrate the difficulties involved in finding
an optimal GDL-based computational reduction later.
\begin{prop}
\label{prop:ch5:lower-upper}The lower and upper bound of operator's
number in original model $\gBold_{1:n}=\odot_{i=1}^{n}g_{i}$, as
defined in (\ref{eq:ch5:MODEL}), is:

\begin{equation}
O(M^{m})\leq O_{\PROD}[\gBold_{1:n}]\leq O(nM^{m})\label{eq:lower-upper-bound}
\end{equation}
where $M\TRIANGLEQ|\calX|$, as defined in (\ref{eq:ch5:DEF:M}) and
$O_{\PROD}$ is defined as in Section \ref{subsec:chap5:cost-ring-sum}. 
\end{prop}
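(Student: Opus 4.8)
The plan is to evaluate the ring-product $\gBold_{1:n}=\PROD_{i=1}^{n}g_{i}$ through the sequential binary recursion already introduced in (\ref{eq:BinaryTree=00003DFW}), namely $\gBold_{1:j}=g_{j}\PROD\gBold_{1:j-1}$ for $j=2,\ldots,n$, and then to sum the per-step operator counts supplied by Lemma \ref{lem:COST}(b). Since each intermediate function $\gBold_{1:j}$ has domain $\oBold_{1:j}=\omega_{1}\cup\cdots\cup\omega_{j}$, step $j$ is a binary ring-product of $g_{j}$ (domain $\omega_{j}$) with $\gBold_{1:j-1}$ (domain $\oBold_{1:j-1}$), whose union is $\oBold_{1:j}$; by Lemma \ref{lem:COST}(b) this costs $O(M^{|\oBold_{1:j}|})$ ring-products. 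Summing over the $n-1$ steps yields the exact bookkeeping identity
\[
O_{\PROD}[\gBold_{1:n}]=\sum_{j=2}^{n}O(M^{|\oBold_{1:j}|}),
\]
where $M\TRIANGLEQ|\calX|$ as in (\ref{eq:ch5:DEF:M}).

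For the upper bound, I would observe that the index sets are nested, $\oBold_{1:2}\subseteq\cdots\subseteq\oBold_{1:n}=\OMEGA$, so that $|\oBold_{1:j}|\leq|\OMEGA|=m$ for every $j$. Each of the $n-1$ summands is therefore bounded by $O(M^{m})$, and the total is at most $(n-1)O(M^{m})\leq O(nM^{m})$. For the lower bound, the direct argument is that the terminal step $j=n$ produces $\gBold_{1:n}$ on the full domain $\oBold_{1:n}=\OMEGA$, the equality holding by (\ref{eq:ch5:OMEGA:model}); by Lemma \ref{lem:COST}(b) this single step already requires $O(M^{|\OMEGA|})=O(M^{m})$ ring-products, and since all counts are non-negative the total is at least $O(M^{m})$. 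An equivalent information-theoretic view is that the output $\gBold_{1:n}:\calX^{m}\rightarrow\RING$ is specified by $M^{m}$ values, each requiring at least one ring-product when $n\geq 2$, so no evaluation can produce it in fewer than $O(M^{m})$ operations. Either route closes the bound (\ref{eq:lower-upper-bound}).

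The main obstacle I anticipate is not the algebra---which is routine given Lemma \ref{lem:COST}---but the care needed in the big-$O$ accounting: the bound must hold uniformly over all admissible factorizations, so the argument should use only the nesting $|\oBold_{1:j}|\leq m$ and the terminal equality $|\oBold_{1:n}|=m$, and must avoid implicitly assuming a favourable permutation of the factors $g_{i}$. A secondary subtlety is to justify that the lower bound genuinely reflects the cost of the \emph{direct} evaluation (\ref{eq:ch5:MODEL}) rather than a GDL-reduced flow; here the output-size argument is the cleaner one, since it bounds the cost of producing the full-domain object $\gBold_{1:n}$ itself, independently of how the ring-product is bracketed.
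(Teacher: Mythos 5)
Your proof is correct in substance and runs on the same machinery as the paper's: per-combination costing via Lemma \ref{lem:COST}(b), the uniform bound $|\cdot|\leq m$ on every intermediate domain for the upper bound, and the observation that the final combination acts on the full domain $\OMEGA$ for the lower bound. The one scope difference worth flagging is that you carry out the accounting only for the left-deep chain bracketing (\ref{eq:BinaryTree=00003DFW}), whereas the paper's proof explicitly ranges over an \emph{arbitrary} binary parenthesization tree: it notes that any evaluation of $\gBold_{1:n}=\PROD_{i=1}^{n}g_{i}$ corresponds to a binary tree with $n$ leaves, that each of the $n-1$ internal combination nodes costs between $0$ and $O(M^{m})$ since its domain is a union $\oBold_{Left}\cup\oBold_{Right}\subseteq\OMEGA$, and that the root node \emph{always} costs exactly $O(M^{m})$ because $|\oBold_{\pi(1):\pi(i)}\cup\oBold_{\pi(i+1):\pi(n)}|=m$ for every split and permutation. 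You correctly identify this uniformity requirement as the main obstacle, but your explicit argument does not discharge it for non-chain trees; the fix is immediate (any internal node's domain is contained in $\OMEGA$, the root's domain equals $\OMEGA$, and there are exactly $n-1$ binary combinations regardless of bracketing), so this is a presentational gap rather than a mathematical one. Your output-size argument for the lower bound --- that $\gBold_{1:n}$ has $M^{m}$ values, each requiring at least one ring-product when $n\geq2$ --- is a clean, strategy-independent alternative that does not appear in the paper and is arguably the more robust way to make the lower bound uniform over all computational flows.
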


\begin{proof}
For a product $\gBold_{1:n}=\PROD_{i=1}^{n}g_{i}$, we have many ways
to compute $\gBold_{1:n}$ via parenthesization, owing to commutative
property. For example, the following two forms may yield different
costs:

\begin{equation}
\gBold_{1:n}=g_{\pi(1)}\PROD(g_{\pi(2)}\PROD(\cdots(g_{\pi(n)})))\label{eq:parenthesis=00003D1}
\end{equation}

\begin{equation}
\gBold_{1:n}=(g_{\pi(1)}\PROD g_{\pi(2)})\PROD\cdots\PROD(g_{\pi(n-1)}\PROD g_{\pi(n)})\label{eq:parenthesis=00003D2}
\end{equation}
where $\pi$, again, is a permutation of the set $\{1,\ldots,n\}$.
From Lemma \ref{lem:COST}, the computational load for (\ref{eq:parenthesis=00003D1})
is $O(\sum_{i=1}^{n}M^{|\oBold_{\pi(1):\pi(i)}|})$, with $\oBold_{\pi(i):\pi(j)}=\omega_{\pi(i)}\cup\cdots\cup\omega_{\pi(j)}$,
while the cost for (\ref{eq:parenthesis=00003D2}) is $O(\sum_{i=1}^{n-1}M^{|\omega_{\pi(i)}\cup\omega_{\pi(i+1)}|})$. 

In general, we can construct a binary tree corresponding to this task
of recursive parenthesization {[}\citet{ch5:art:GDL:NPcomplete97,ch5:BK:Bible:Algorithms}{]}.
Because we have $n$ leaf nodes for this binary tree, corresponding
to $n$ functions $g_{i}$, the total number of \foreignlanguage{british}{binarily}
combined nodes (internal nodes and root node) is $n$. At the root
of the binary tree is an arbitrary binary partition of $\gBold_{1:n}$,
i.e. $\phi_{Left}\TRIANGLEQ\gBold_{\pi(1):\pi(i)}$ and $\phi_{Right}\TRIANGLEQ\gBold_{\pi(i+1):\pi(n)}$,
where $\gBold_{\pi(i):\pi(j)}=g_{\pi(i)}\PROD\cdots\PROD g_{\pi(j)}$.
Since we always have $|\oBold_{\pi(1):\pi(i)}\cup\oBold_{\pi(i+1):\pi(n)}|=|\OMEGA|=m$,
$\forall i\in\{1,\ldots,n\}$, the computation cost for (\ref{eq:parenthesis=00003D2})
at the root is fixed:

\begin{equation}
\calO[\phi_{Left}\PROD\phi_{Right}]=O(M^{m})\label{eq:Proof=00003DRootNode}
\end{equation}

Because the computation of $\gBold_{\pi(1):\pi(i)}$ and $\gBold_{\pi(i+1):\pi(n)}$,
in turn, can be recursively parenthesized into other binary sub-partitions,
respectively, this scheme constructs a binary tree, in which each
binarily combined node represents the result of ring-product $\phi_{Left}\PROD\phi_{Right}$
of two multiplied functions coming up from the left and right of that
combined node. In a bottom-up manner, the cost of any internal node
is, therefore: 

\begin{equation}
0\leq\calO[\phi_{Left}\PROD\phi_{Right}]=O(M^{|\oBold_{Left:}\cup\oBold_{Right}|})\leq O(M^{m})\label{eq:Proof=00003DInternalNode}
\end{equation}
where $\emptyset\subseteq\oBold_{Left:},\oBold_{Right:}\subseteq\OMEGA$.
Because we have $n-1$ internal nodes with varied cost (\ref{eq:Proof=00003DInternalNode})
and one root node with fixed cost (\ref{eq:Proof=00003DRootNode}),
the number of ring-products $O_{\PROD}[\gBold_{1:n}]$ is therefore
bounded by (\ref{eq:lower-upper-bound}).
\end{proof}

\subsubsection{Minimizing computational load in FB recursion \label{subsec:chap5:Minimizing-compuational-cost}}

As shown in the proof of Proposition \ref{prop:ch5:lower-upper} above,
the total number of commutative ring-products in $\gBold_{1:n}=\PROD_{i=1}^{n}g_{i}$
depends on two issues: (i) the $n!$ permutations and (ii) the choice
of parenthesization between them. By investigating those two issues,
the optimization scheme for $\underset{x_{S}}{\boxplus}\gBold_{1:n}$
was proven to be an NP-complete problem {[}\citet{ch5:art:GDL:NPcomplete97,ch2:origin:GDL:McEliece}{]}.
Hence, a tractable technique for finding the optimum is not available.
Nevertheless, we still have some other options to consider, as follows:

- The FB recursion for GDL was originally inspired by two topological
sets, one forward (NLN) and one backward (FA), in Definition \ref{(No-longer-needed-indices)},\ref{(New-coming-indices)}.
For this reason, a reasonable solution is to extend the FB to multi-direction
approach, instead of two directions, NLN and FA. Such a multi-direction
scheme is actually a merit of graph theory, which facilitates the
visual representation of the model. Nevertheless, a topological CI
structure via set algebra in this chapter is still useful. Because
the operators in pre-semiring $(\Ring,\SUM,\PROD)$ are both binary,
all combinatorial of operators must consist of binary relationship.
Hence, in multi-direction scheme, the CI structure above may be still
applicable to a general Bayesian networks, which concerns about chain
rule order of factors in distribution.

- In the probability context, we can design the permutation $\pi$
in the joint distribution (\ref{eq:JOINT}) such that $f(\xBold_{\OMEGA})$
is factorized into a chain rule order. Then, the product of $f_{\pi(i)}$
can be computed in a reverse order to the chain rule. In this way,
the total number of product will be $O_{\PROD}(\sum_{i=1}^{n}M^{i})=O(M^{n})$,
where $\sum_{i=1}^{n}M^{i}=\frac{M^{n+1}-1}{M-1}-1$, i.e. it always
reaches the lower bound in (\ref{eq:lower-upper-bound}). Such a value
of $\pi$ can always be found in linear time via topological sorting
algorithm {[}\citet{ch5:BK:Bible:Algorithms}{]}.

- In Bayesian inference, the permutation $\pi$ also has an important
role. In Section \ref{sec:Re-interpretation-of-GDL}, we will see
that permutations of the full conditional distributions (\ref{eq:Posterior})
yield different factorization forms for the GDL. Hence, although the
form for re-factorized conditional distribution is not computed in
FB recursion, the computation cost of FB recursion will vary, based
on that implicit re-factorization form (\ref{eq:Ternary=00003DFactorization}).
In this sense, a re-factorization with the minimal number of \foreignlanguage{british}{neighbour}
variables might be preferred, in order to reduce the dimension of
intermediate functions in FB recursion via the GDL. This scheme is
consistent with the minimum message length problem in model representation.

\section{GDL in the probability context \label{sec:chap5:GDL-in-probability}}

Note that, the first step in computing $\underset{\xBold_{S}}{\boxplus}\gBold_{1:n}$
is to identify three elements in the pre-semiring $(\Ring,\boxplus,\odot)$,
i.e. the functional set $\Ring$ and two binary operators $\boxplus,\odot$,
which satisfy all properties in Definitions \ref{Semigroup}-\ref{Pre-semiring}.
This general framework is very flexible in practice. For example,
{[}\citet{ch2:origin:GDL:McEliece,ch2:bk:ToddMoon}{]} gives examples
of semirings that are useful in graph theory and decoding context. 

In this section, we present application of GDL in the probability
context. For this purpose, we will define and apply some practical
pre-semirings, which are summarized in Table \ref{tab:Pre-semirings}.

\begin{table*}
\begin{centering}
\begin{tabular}{|c|c|c|c|c|c|c|}
\hline 
$\RING$ & $g_{i}$ & $\boxplus$ & $\PROD$ & $\underset{\xBold_{\SET}}{\boxplus}(\PROD_{i=1}^{n}g_{i})$ & short name & Purpose \tabularnewline
\hline 
\hline 
$[0,1]$ & $f_{i}$ & + & $\times$ & $\sum_{\xBold_{\SET}}f(\xBold_{\OMEGA})$ & sum-product & Marginalization\tabularnewline
\hline 
$[0,1]$ & $f_{i}$ & $\max$ & $\times$ & $\max_{\xBold_{\OMEGA}}f(\xBold_{\OMEGA})$ & max-product & Joint mode\tabularnewline
\hline 
$(-\infty,1]$ & $\log f_{i}$ & $\max$ & + & $\max_{\xBold_{\OMEGA}}(\log f(\xBold_{\OMEGA}))$ & max-sum & Joint mode\tabularnewline
\hline 
$\DEAL$ & $f_{i}\angle\log q_{i}$ & + & $\times$ & $1\angle E_{f(\xBold_{\OMEGA})}\log q(\xBold_{\OMEGA})$ & Dual number & Entropy\tabularnewline
\hline 
\end{tabular}
\par\end{centering}
\caption{\label{tab:Pre-semirings}Some commutative pre-semirings $(\protect\Ring,\boxplus,\odot)$
for the probability context, with. $f_{i}\protect\TRIANGLEQ f(x_{i}|\protect\xMath_{\eta_{i}})$,
$q_{i}\protect\TRIANGLEQ q(x_{i}|x_{\nu_{i}})$ and $\protect\SET\subseteq\protect\OMEGA=\{1,\ldots n\}$ }
\end{table*}

\subsection{Joint distribution}

Without loss of generality, let us assume that $\nstate=\ndata$,
i.e. the number of factors and variables are the same in this section.
Then, let us consider a joint distribution, $f(\xBold_{\OMEGA})$,
of $n$ discrete random variables $\xBold_{\OMEGA}\in\calX_{\OMEGA}=\calX{}^{n}$,
$\OMEGA=\{1,\ldots,n\}$. By the chain rule of probability, the distribution
$f(\xBold_{\OMEGA})$ can be factorized into a chain rule order, as
follows: 
\begin{equation}
f(\xBold_{\OMEGA})=\prod_{i=1}^{n}f(x_{i}|x_{\eta_{i}})\label{eq:PRIOR}
\end{equation}
where $f(x_{i}|\xEta i)=f(x_{i}|\xBold_{i-1})$ is conditional distribution
of $x_{i}$ given its neighbor variables $\xEta i$, $\iton$. Similar
to our universal model (\ref{eq:ch5:MODEL:i}), we assume that the
value of functions $f(x_{i}|\xEta i)$ and \foreignlanguage{british}{neighbour}
sets $\eta_{i}$ in (\ref{eq:PRIOR}) are known. 

\subsection{GDL for probability calculus}

Given the joint distribution (\ref{eq:PRIOR}), we are interested
in three kinds of inference: (i) the sequence of scalar marginals,
(ii) joint mode and (iii) functional moments. We will explain briefly
how to deploy GDL in each of these contexts, next.

\subsubsection{Sequence of scalar marginals \label{subsec:chap5:Sequence-of-marginals}}

Let us specify pre-semiring, $(\Ring,\boxplus,\odot)$, as ,$([0,1]^{\calX^{\ndata}},+,\times)$,
where $\RING=[0,1]$ (the unit line segment in $\REAL$), and where
$+,\times$ are traditional scalar addition and multiplication for
real numbers. Because $g_{i}=f_{i}\TRIANGLEQ f(x_{i}|\xMath_{\eta_{i}})\in[0,1]^{\calX^{\ndata}}$
is a distribution, we can compute a sequence of marginals $f(x_{i})=\sum_{\xBold_{\backslash i}}f(\xBold_{\OMEGA})$,
where $\xBold_{\backslash i}$ is the complement of $x_{i}$ in $\xBold_{\OMEGA}$,
as follows:

\begin{eqnarray}
f(x_{i})=\sum_{\xBold_{\backslash i}}f(\xBold_{\OMEGA}) & = & \sum_{\xBold_{\backslash i}}(\prod_{i=1}^{n}g_{i}),\ \mbox{with}\ g_{i}=f_{i}\label{eq:Marginal}
\end{eqnarray}

for $\iton$. Then, the sequence of $n$ scalar sets in (\ref{eq:Marginal})
can be computed feasibly via FB recursion for sequential objective
functions, as in Section \ref{sec:Sequence-of-sets}. 

An application of (\ref{eq:Marginal}) for HMC model, namely FB algorithm,
will be presented in the Section \ref{subsec:chap6:FB-HMC}.

\subsubsection{Joint mode \label{subsec:chap5:Joint-mode}}

The elements in joint mode $\xHat_{\OMEGA}=\{\widehat{x}_{1},\ldots,\widehat{x}_{n}\}$,
defined as $\xHat_{\OMEGA}=\arg\max_{\xBold_{\OMEGA}}f(\xBold_{\OMEGA})$,
can be found via either one of two forms, as follows:

\begin{eqnarray}
\widehat{x}_{i} & = & \arg\max_{x_{i}}(\max_{\xBold_{\backslash i}}f(\xBold_{\OMEGA}))\label{eq:x^hat}\\
 & = & \arg\max_{x_{i}}(\max_{\xBold_{\backslash i}}(\log f(\xBold_{\OMEGA})))\nonumber 
\end{eqnarray}

for $\iton$. Corresponding to two ways of computing $\widehat{x}_{i}$
(\ref{eq:x^hat}), we have two ways to assign pre-semiring $(\Ring,\boxplus,\odot)$,
either with $([0,1]^{\calX^{\ndata}},\max,\times)$ or $([-\infty,0]^{\calX^{\ndata}},$
$\max,+)$, respectively, as follows:

\begin{eqnarray}
\max_{\xBold_{\backslash i}}f(\xBold_{\OMEGA}) & = & \max_{\xBold_{\backslash i}}(\prod_{i=1}^{n}g_{i}),\ \mbox{with}\ g_{i}=f_{i}\label{eq:max}\\
\max_{\xBold_{\backslash i}}(\log f(\xBold_{\OMEGA})) & = & \max_{\xBold_{\backslash i}}(\sum_{i=1}^{n}g_{i}),\ \mbox{with}\ g_{i}=\log f_{i}\nonumber 
\end{eqnarray}

for $\iton$. Once again, both sequences of $n$ scalar sets in (\ref{eq:max})
can be computed feasibly via FB recursion for sequential objective
functions in Section \ref{sec:Sequence-of-sets}. Substituting the
results (\ref{eq:max}) into (\ref{eq:x^hat}), we can retrieve the
joint mode. 

An application of (\ref{eq:x^hat}-\ref{eq:max}) for HMC model, namely
bi-directional Viterbi algorithm, will be presented in Section \ref{subsec:chap6:Bi-directional-VA}.

\subsubsection{Entropy}

Consider another joint \textit{reference} distribution of $\xBold_{\OMEGA}\in\calX{}^{n}$,
i.e. $q(\xBold_{\OMEGA})=\prod_{i=1}^{n}q(x_{i}|x_{\nu_{i}})$, where
$q_{i}\TRIANGLEQ q(x_{i}|x_{\nu_{i}})$ is the associated full conditional
distribution of $x_{i}$ given its \foreignlanguage{british}{neighbour}
variables $x_{\nu_{i}}$, $\iton$. Consider the following functional
moment:

\begin{equation}
E_{f(\xBold_{\OMEGA})}\log(q(\xBold_{\OMEGA}))=E_{f(\xBold_{\OMEGA})}(\sum_{i=1}^{n}\log q_{i})\label{eq:Entropy-Task}
\end{equation}

Comparing (\ref{eq:Entropy-Task}) with our GDL model (\ref{eq:SUM-PRODUCT})
, the task is to transform the sum $\sum_{i=1}^{n}$ in (\ref{eq:Entropy-Task})
into some form of real value products $\prod_{i=1}^{n}$, a special
case of ring-product $\PROD_{i=1}^{n}$, in order to achieve a computational
load reduction via GDL. Such a transformation can be effected via
the so-called \textit{dual number} in matrix form {[}\citet{ch5:art:DualNumber:matrix88}{]},
as follows:

\begin{equation}
g_{i}=f_{i}\angle\log q_{i}\TRIANGLEQ f_{i}\left[\begin{array}{cc}
1 & \log q_{i}\\
0 & 1
\end{array}\right]\label{eq:Dual=00003Dgi}
\end{equation}
for $\iton$. The dual number, originally proposed in {[}\citet{ch5:origin:DualNumber:Clifford1873}{]},
belongs to a ring $\DEAL$, not a field like complex numbers $\mathbb{C}$
{[}\citet{ch5:art:DualNumber:complex75}{]}. However, it shares with
complex numbers the property of magnitude $f_{i}$ and angle $\angle\log q_{i}$
(see Appendix \ref{App:chap:Dual-Number}). Therefore, $g_{i}$ belongs
to the ring $(\DEAL^{\calX^{\ndata}},+,\cdot)$, a special case of
pre-semiring $(\Ring,\boxplus,\odot)$, where $+$ and $\cdot$ are
the usual matrix summation and matrix multiplication. Note that, in
the above special matrix form (\ref{eq:Dual=00003Dgi}), it is feasible
to verify that matrix multiplication is commutative. 

Substituting (\ref{eq:Dual=00003Dgi}) into (\ref{eq:Entropy-Task}),
we have:

\begin{eqnarray}
\underset{\xBold_{\OMEGA}}{\sum}\gBold_{1:n} & = & \sum_{\xBold_{\OMEGA}}\prod_{i=1}^{n}g_{i}\label{eq:Entropy-like}\\
 & = & E_{f(\xBold_{\OMEGA})}(1\angle\sum_{i=1}^{n}\log q_{i})\nonumber \\
 & = & 1\angle(E_{f(\xBold_{\OMEGA})}\sum_{i=1}^{n}\log q_{i})\nonumber 
\end{eqnarray}

For the purpose of reducing the number of traditional sum and product
operators, we can apply the GDL to the right hand side of (\ref{eq:Entropy-like}).
Then, the value of the angle is extracted from the result of $\underset{\xBold_{\OMEGA}}{\sum}\gBold_{1:n}$
and reported as the value of $E_{f(\xBold_{\OMEGA})}\log(q(\xBold_{\OMEGA}))$
in (\ref{eq:Entropy-Task}). 

In the literature, the above task of computing $E_{f(\xBold_{\OMEGA})}\log(q(\xBold_{\OMEGA}))$
(\ref{eq:Entropy-Task}) via GDL was specialized to the computation
of entropy {[}\citet{ch5:art:MP:Entropy11}{]}, Kullback-Leibler divergence
(KLD) {[}\citet{ch5:art:MP:EntropyRing08}{]}, and the Expectation-Maximization
(EM) algorithm {[}\citet{ch5:art:EM:semiring09}{]}. In each case,
the derivation relied heavily on complicated operators in ring theory,
rather than on the simple matrix operator in (\ref{eq:Dual=00003Dgi}).
Also, a unified recursion for implementing GDL---such as is achieved
by the FB recursion above---is missing in those papers. 

Another potential application of (\ref{eq:Entropy-like}) is in the
Iterative VB (IVB) algorithm, as presented in Section \ref{subsec:chap4:Variational-Bayes-(VB)}.
Notice the similarity between the expectation for log functions in
(\ref{eq:Entropy-like}) above and in the IVB algorithm (\ref{eq:ch4:IVB})
.

\subsubsection{Bayesian computation}

Let the role of the joint distribution $f(\xBold_{\OMEGA})$ in (\ref{eq:PRIOR})
be a prior in Bayes' rule, as follows:

\begin{eqnarray}
f(\xBold_{\OMEGA}|\yBold_{\OMEGA}) & \propto & f(\yBold_{\OMEGA}|\xBold_{\OMEGA})f(\xBold_{\OMEGA})\label{eq:Posterior}\\
 & = & \prod_{i=1}^{n}f(y_{i}|\xBold_{\nu_{i}},y_{\xi_{i}})f(x_{i}|x_{\eta_{i}})\nonumber 
\end{eqnarray}
where $f(\yBold_{\OMEGA}|\xBold_{\OMEGA})=\prod_{i=1}^{n}f(y_{i}|\xBold_{\nu_{i}},y_{\xi_{i}})$
is a given observation distribution (model) with known (i.e. observed
or realized) discrete values $\yBold_{\OMEGA}\in{\cal Y}^{n}$. Because
the form (\ref{eq:Posterior}) also belongs to our generic model structure
(\ref{eq:ch5:MODEL:i}), we can implement the above inference schemes
for $2n$ factors (\ref{eq:Posterior}) in the same way as for $n$
factors in the general distribution (\ref{eq:PRIOR}). 

\subsection{GDL for re-factorization \label{sec:Re-interpretation-of-GDL}}

In this subsection, the role of the CI topology in Section \ref{sec:Conditionally-separable}
will be explained in the probability context. In this way, we will
appreciate that GDL is, in essence, a tool for exploiting that topology.

For this purpose, let us re-consider the joint distribution $f(\xBold_{\OMEGA})$
in (\ref{eq:PRIOR}). Owing to commutativity of product, we have $n!$
ways to permute those $n$ factors, as follows: 
\begin{equation}
f(\xBold_{\OMEGA})=\prod_{i=1}^{n}f(x_{\pi(i)}|\xEta{\pi(i)})\label{eq:JOINT}
\end{equation}
where $\pi$ is a permutation over the set $\OMEGA$. Let us define
the $n$ full conditionals $g_{i}=f_{\pi(i)}\TRIANGLEQ f(x_{\pi(i)}|\xEta{\pi(i)})$,
together with the sets $\omega_{i}=\{\pi(i),\eta_{\pi(i)}\}$, $\forall i\in\OMEGA=\{1,\ldots,n\}$.
Consider, further, the following binary parenthesization:

\begin{eqnarray}
f(\xBold_{\OMEGA}) & = & (\prod_{j=i+1}^{n}f_{\pi(j)})(\prod_{j=1}^{i}f_{\pi(j)})\label{eq:Binary}\\
 & = & \gBold_{i+1:n}\gBold_{1:i},\ \iinn\nonumber 
\end{eqnarray}

Note that, because the permutation $\pi$ is arbitrary, $f(\xBold_{\OMEGA})$
in (\ref{eq:JOINT}) - for a particular permutation - may not be in
probability chain rule order. Consequently, the forward $\gBold_{1:i}=\prod_{j=1}^{i}f_{\pi(j)}$
and backward $\gBold_{i:n}=\prod_{j=i}^{n}f_{\pi(j)})$ products,
with the same domains as in (\ref{eq:domain=00003DBinaryTree}), are
merely positive functions and may not be distributions in general.
In practice, the model (\ref{eq:JOINT}) happens very often, since
the chain rule order is very often not available (e.g in {[}\citet{ch2:origin:GDL:McEliece}{]}).

\subsubsection{Conditionally independent (CI) factorization}

Given the index set $\omega_{i}=\{\pi(i),\eta_{\pi(i)}\}$ in (\ref{eq:JOINT}),
we will see below that there exists a close relationship between topology
in Section \ref{sec:Conditionally-separable} and re-factorization
forms of (\ref{eq:JOINT}).
\begin{prop}
\label{prop:Reverse-ChainRuleOrder} The first-appearance (FA) $\bparent i$
and no-longer-needed (NLN) $\bbracket i$ sets yield two choices of
probabilistic chain rule order, one forward and one backward, respectively,
for $f(\xBold_{\OMEGA})$ in (\ref{eq:JOINT}), as follows:

\begin{equation}
f(\xBold_{\OMEGA})=\prod_{i=1}^{n}\fBold_{\bparent i}=\prod_{i=1}^{n}\fBold_{\bbracket i}\label{eq:Chain=00003D=00005B=00005D}
\end{equation}

where:

\begin{eqnarray}
\fBold_{\bparent i}\TRIANGLEQ f(x_{\bparent i}|x_{\bcommon{i-1}}) & = & f(x_{\bparent i}|\xBold_{\bparent{1:i-1}}),\ \ \ i=1,\ldots,n\label{eq:f(i)}\\
\fBold_{\bbracket i}\TRIANGLEQ f(x_{\bbracket i}|x_{\bcommon i}) & = & f(x_{\bbracket i}|x_{\bbracket{i+1:n}}),\ \ \ i=n,\ldots,1\nonumber 
\end{eqnarray}
\end{prop}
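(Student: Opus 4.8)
The plan is to prove the two factorizations separately---the forward one indexed by the first-appearance sets $\bparent i$ and the backward one indexed by the no-longer-needed sets $\bbracket i$---and to observe that the two arguments are mirror images of each other under the reversal $i\mapsto n+1-i$. In each case I would split the claimed identity in (\ref{eq:f(i)}) into two parts: the outer product identity $f(\xBold_{\OMEGA})=\prod_{i=1}^{n}f(x_{\bparent i}\mid\xBold_{\bparent{1:i-1}})$ together with the ``long-conditioning'' equality $f(x_{\bparent i}\mid\xBold_{\bparent{1:i-1}})=f(x_{\bparent i}\mid x_{\bcommon{i-1}})$. The first part is nothing more than the ordinary probabilistic chain rule: by Proposition \ref{prop:(TernarySET)} the sets $\bparent 1,\ldots,\bparent n$ form a genuine partition of $\OMEGA$ with $\bparent{1:i}=\oBold_{1:i}$, so $(x_{\bparent 1},\ldots,x_{\bparent n})$ is simply an ordered grouping of all coordinates of $\xBold_{\OMEGA}$ and the joint telescopes into the stated product. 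Hence all the real content lies in the second (conditional-independence) equality, which asserts that once the separator $x_{\bcommon{i-1}}$ is fixed, $x_{\bparent i}$ carries no further dependence on the remaining predecessors $\xBold_{\bbracket{1:i-1}}$.

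To establish that independence I would use the binary parenthesization (\ref{eq:Binary}) at the cut $i-1$, writing $f(\xBold_{\OMEGA})=\gBold_{i:n}\,\gBold_{1:i-1}$, and read off the domains from (\ref{eq:domain=00003DBinaryTree}) via the ternary partition: $\gBold_{1:i-1}$ is a function of $(x_{\bbracket{1:i-1}},x_{\bcommon{i-1}})$ while $\gBold_{i:n}$ is a function of $(x_{\bparent{i:n}},x_{\bcommon{i-1}})$. Thus the joint factors as a product of a function of $(x_{\bbracket{1:i-1}},x_{\bcommon{i-1}})$ and a function of $(x_{\bparent{i:n}},x_{\bcommon{i-1}})$, and the standard factorization criterion for conditional independence (the same non-negative-factorization argument underlying Fisher--Neyman, (\ref{eq:ch4:Neyman})) yields $x_{\bbracket{1:i-1}}\perp x_{\bparent{i:n}}\mid x_{\bcommon{i-1}}$. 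Marginalizing out $x_{\bparent{i+1:n}}$ specializes this to $x_{\bparent i}\perp x_{\bbracket{1:i-1}}\mid x_{\bcommon{i-1}}$, and since $\{x_{\bcommon{i-1}},x_{\bbracket{1:i-1}}\}=\xBold_{\oBold_{1:i-1}}=\xBold_{\bparent{1:i-1}}$, this is exactly the required equality. The boundary index $i=1$ is consistent because $\bcommon 0=\oBold_{1:n}\cap\emptyset=\emptyset$, so the first factor reduces to the marginal $f(x_{\bparent 1})$.

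The backward NLN factorization follows by the same template with the cut placed at $i$ instead of $i-1$: the parenthesization $f(\xBold_{\OMEGA})=\gBold_{i+1:n}\,\gBold_{1:i}$ puts $\gBold_{1:i}$ on $(x_{\bbracket{1:i}},x_{\bcommon i})$ and $\gBold_{i+1:n}$ on $(x_{\bparent{i+1:n}},x_{\bcommon i})$, whence $x_{\bbracket{1:i}}\perp x_{\bparent{i+1:n}}\mid x_{\bcommon i}$; restricting to $x_{\bbracket i}\subseteq x_{\bbracket{1:i}}$ and using $\bbracket{i+1:n}=\oBold_{i+1:n}=\bparent{i+1:n}\cup\bcommon i$ recovers $f(x_{\bbracket i}\mid\xBold_{\bbracket{i+1:n}})=f(x_{\bbracket i}\mid x_{\bcommon i})$, while the chain rule over the reverse-ordered partition $(\bbracket n,\ldots,\bbracket 1)$ supplies the product identity. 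I expect the main obstacle to be stating the conditional-independence factorization criterion cleanly: the factors $\gBold_{1:i}$ and $\gBold_{i:n}$ are in general only positive functions and need not be normalized distributions (as already flagged after (\ref{eq:Binary})), so I must invoke the version of the criterion that requires merely a product of two non-negative functions with the stated separated arguments, and then recover the normalized conditional $f(x_{\bparent i}\mid x_{\bcommon{i-1}})$ \emph{a posteriori} by summing both sides over $x_{\bparent i}$.
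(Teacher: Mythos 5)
Your proposal is correct and follows essentially the same route as the paper's proof: both rest on the binary parenthesization (\ref{eq:Binary}) at the cut, the separable domains (\ref{eq:domain=00003DBinaryTree}) coming from the ternary partition, and the distributive law to identify the conditional given the separator $x_{\bcommon{i-1}}$. The only difference is packaging: where you invoke the abstract non-negative-factorization criterion for conditional independence and then the decomposition property to pass from the block $x_{\bparent{i:n}}$ to the single set $x_{\bparent i}$, the paper computes both conditionals explicitly and finds they coincide as $\gBold_{i:n}/\sum_{\xBold_{\bparent{i:n}}}\gBold_{i:n}$ in (\ref{eq:chap5:conditional_g(i)})--(\ref{eq:chap5:conditional_g(ii)}), closing by induction---which is precisely your criterion verified by hand in this instance.
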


\begin{proof}
We only need to prove the case $\bparent i$, because the case $\bbracket i$
follows the same logic. Then, the key solution is to prove the following
relationship:

\begin{equation}
f(\xBold_{\bparent{i:n}}|\xBold_{\bparent{1:i-1}})=f(\xBold_{\bparent{i:n}}|x_{\bcommon{i-1}}),\ \forall i\in\{1,\ldots,n\}\label{eq:Proof=00003Df(i)}
\end{equation}
because if (\ref{eq:Proof=00003Df(i)}) is valid for all $i$, equations
(\ref{eq:f(i)}) is valid by induction.\\
For proving (\ref{eq:Proof=00003Df(i)}), let us exploit the chain
rule and the properties (\ref{eq:ch5:(i)-=00005Bi=00005D},\ref{eq:ch5:factor:left})
of FA sets $\bparent i$, as follows: 
\begin{eqnarray}
f(\xBold_{\OMEGA}) & = & f(\xBold_{\bparent{i:n}}|\xBold_{\bparent{1:i-1}})f(\xBold_{\bparent{1:i-1}})\label{eq:ch5:proof:CI_chain_rule}\\
f(\xBold_{\oBold_{i:n}}) & = & f(\xBold_{\bparent{i:n}}|x_{\bcommon{i-1}})f(x_{\bcommon{i-1}})\nonumber 
\end{eqnarray}
Then, from (\ref{eq:Binary}), we can compute the joint and marginal
distributions in (\ref{eq:ch5:proof:CI_chain_rule}), as follows:

\begin{eqnarray}
f(\xBold_{\OMEGA}) & = & \gBold_{i:n}\gBold_{1:i-1}\label{eq:joint marginal}\\
f(\xBold_{\bparent{1:i-1}})=\sum_{\xBold_{\bparent{i:n}}}f(\xBold_{\OMEGA}) & = & (\sum_{\xBold_{\bparent{i:n}}}\gBold_{i:n})\gBold_{1:i-1}\nonumber \\
f(\xBold_{\oBold_{i:n}})=\sum_{\xBold_{\bbracket{1:i-1}}}f(\xBold_{\OMEGA}) & = & \gBold_{i:n}(\sum_{\xBold_{\bbracket{1:i-1}}}\gBold_{1:i-1})\nonumber \\
f(x_{\bcommon{i-1}})=\sum_{\{\xBold_{\bparent{i:n}},\xBold_{\bbracket{1:i-1}}\}}f(\xBold_{\OMEGA}) & = & (\sum_{\xBold_{\bparent{i:n}}}\gBold_{i:n})(\sum_{\xBold_{\bbracket{1:i-1}}}\gBold_{1:i-1})\nonumber 
\end{eqnarray}
in which we have applied the distributive law in (\ref{eq:joint marginal}),
owing to separable domains of the functions $\gBold$ (see equations
(\ref{eq:domain=00003DBinaryTree})). Substitute (\ref{eq:joint marginal})
into (\ref{eq:ch5:proof:CI_chain_rule}), we can evaluate both conditional
distributions in (\ref{eq:ch5:proof:CI_chain_rule}):

\begin{eqnarray}
f(\xBold_{\bparent{i:n}}|\xBold_{\bparent{1:i-1}}) & = & \frac{f(\xBold_{\OMEGA})}{f(\xBold_{\bparent{1:i-1}})}=\frac{\gBold_{i:n}}{\sum_{\xBold_{\bparent{i:n}}}\gBold_{i:n}}\label{eq:chap5:conditional_g(i)}\\
f(\xBold_{\bparent{i:n}}|x_{\bcommon{i-1}}) & = & \frac{f(\xBold_{\oBold_{i:n}})}{f(x_{\bcommon{i-1}})}=\frac{\gBold_{i:n}}{\sum_{\xBold_{\bparent{i:n}}}\gBold_{i:n}}\label{eq:chap5:conditional_g(ii)}
\end{eqnarray}
which yield (\ref{eq:Proof=00003Df(i)}). 
\end{proof}
Note that, even though the function $f(\xBold_{\OMEGA})=\gBold_{1:n}=\prod_{\itime=1}^{\ndata}g_{i}$
is a valid probability distribution, the product $\gBold_{i:n}=\prod_{j=i}^{\ndata}g_{j}$,
for arbitrary $i$, is not necessarily a valid distribution, if the
functional factors $g_{i}$, $\iton$, do not follow a probability
chain rule order of $f(\xBold_{\OMEGA})$. In fact, the function $g_{i}$
may not be a valid probability distribution to begin with. 

Even so, owing to GDL and CI structure in (\ref{eq:joint marginal}),
the functions $\gBold_{i:n}$ and $\sum_{\xBold_{\bparent{i:n}}}\gBold_{i:n}$
in (\ref{eq:chap5:conditional_g(i)}-\ref{eq:chap5:conditional_g(ii)})
are not necessarily valid probability distributions in order for equality
(\ref{eq:Proof=00003Df(i)}) to be valid. In order words, the CI structure
has identified a chain rule order, and provided a practical method
for factorizing arbitrary distribution. 

This remark is important and interesting, particularly in probability
context. Given an arbitrary non-negative function $\gBold_{1:n}\in[0,\mathbb{R}^{+})$,
the issue of computing normalizing constant $\sum_{\xBold_{\OMEGA}}\gBold_{1:n}$
for $f(\xBold_{\OMEGA})=\frac{\gBold_{1:n}}{\sum_{\xBold_{\OMEGA}}\gBold_{1:n}}$,
where $f(\xBold_{\OMEGA})\in[0,1]$, is typically prohibitive, because
of the curse of dimensionality. In contrast, the recursive computation
for normalizing constant of conditional distributions in (\ref{eq:chap5:conditional_g(i)}-\ref{eq:chap5:conditional_g(ii)})
is typically efficient, since the number of operators falls exponentially
with number of NLN variables. In other words, it is possible to recursively
factorize $f(\xBold_{\OMEGA})$ and compute its conditional distributions
form in polynomial time, owing to application of GDL in (\ref{eq:chap5:conditional_g(i)}-\ref{eq:chap5:conditional_g(ii)}),
without the need of computing the prohibitive normalizing constant
$\sum_{\xBold_{\OMEGA}}\gBold_{1:n}$ over the whole set of $\xBold_{\OMEGA}$.
Similarly, any moments of $f(\xBold_{\OMEGA})$ can be computed recursively,
via that CI factorization form (\ref{eq:Chain=00003D=00005B=00005D})
of $f(\xBold_{\OMEGA})$, instead of being computed directly over
$f(\xBold_{\OMEGA})$, which essentially requires computing normalizing
constant $\sum_{\xBold_{\OMEGA}}\gBold_{1:n}$.

This interesting CI factorization form can be verified feasibly via
the following Proposition:
\begin{prop}
\label{prop:Ternary=00003DReverseChain}The ternary partition of $\OMEGA$
in Proposition \ref{prop:(TernarySET)} yields a ternary factorization
form for $f(\xBold_{\OMEGA})$, as follows:

\begin{equation}
f(\xBold_{\OMEGA})=f(\xBold_{\bparent{i+1:n}}|\xBold_{\bcommon i})f(\xBold_{\bcommon i})f(\xBold_{\bbracket{1:i}}|\xBold_{\bcommon i})\label{eq:Ternary=00003DFactorization}
\end{equation}

where $i\in\{1,\ldots,n\}$ and:

\begin{eqnarray}
f(\xBold_{\bparent{i+1:n}}|\xBold_{\bcommon i}) & = & \prod_{j=i+1}^{n}\fBold_{\bparent j}\label{eq:sequence=00003Df(i)f=00005Bi=00005D}\\
f(\xBold_{\bbracket{1:i}}|\xBold_{\bcommon i}) & = & \prod_{j=1}^{i}\fBold_{\bbracket j}\nonumber 
\end{eqnarray}
\end{prop}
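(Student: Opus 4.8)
The plan is to obtain this ternary factorization as a direct corollary of the two chain-rule factorizations already established in Proposition \ref{prop:Reverse-ChainRuleOrder}, combined with the separable binary factorization $f(\xBold_{\OMEGA})=\gBold_{i+1:n}\gBold_{1:i}$ of (\ref{eq:Binary}). First I would fix an arbitrary $\iinn$ and recall from Proposition \ref{prop:(TernarySET)} that $\{\bparent{i+1:n},\bcommon i,\bbracket{1:i}\}$ is a genuine disjoint partition of $\OMEGA$, with $\oBold_{1:i}=\{\bcommon i,\bbracket{1:i}\}$ and $\oBold_{i+1:n}=\{\bparent{i+1:n},\bcommon i\}$ by (\ref{eq:ch5:factor:Ternary})--(\ref{eq:ch5:factor:right}). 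The argument then splits into two steps: (i) split $f(\xBold_{\OMEGA})$ across this partition and reduce the conditioning sets to the separator $\xBold_{\bcommon i}$ to obtain (\ref{eq:Ternary=00003DFactorization}); and (ii) identify each conditional factor with the corresponding product of FA, respectively NLN, full conditionals to obtain (\ref{eq:sequence=00003Df(i)f=00005Bi=00005D}).

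For step (i), I would reuse the marginal computations from the proof of Proposition \ref{prop:Reverse-ChainRuleOrder}. Writing $A\TRIANGLEQ\sum_{\xBold_{\bparent{i+1:n}}}\gBold_{i+1:n}$ and $B\TRIANGLEQ\sum_{\xBold_{\bbracket{1:i}}}\gBold_{1:i}$, the separable domains (\ref{eq:domain=00003DBinaryTree}) together with the distributive law give $f(\xBold_{\bcommon i})=AB$, $f(\xBold_{\bparent{i+1:n}},\xBold_{\bcommon i})=\gBold_{i+1:n}B$ and $f(\xBold_{\bbracket{1:i}},\xBold_{\bcommon i})=A\gBold_{1:i}$, exactly in the manner of (\ref{eq:joint marginal}). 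Forming the product $f(\xBold_{\bparent{i+1:n}}|\xBold_{\bcommon i})\,f(\xBold_{\bbracket{1:i}}|\xBold_{\bcommon i})\,f(\xBold_{\bcommon i})$ and cancelling $A$ and $B$ collapses it to $\gBold_{i+1:n}\gBold_{1:i}=f(\xBold_{\OMEGA})$, which is (\ref{eq:Ternary=00003DFactorization}). Equivalently, this step is precisely the statement that $\xBold_{\bparent{i+1:n}}$ and $\xBold_{\bbracket{1:i}}$ are conditionally independent given the separator $\xBold_{\bcommon i}$.

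For step (ii), I would telescope the forward chain rule of Proposition \ref{prop:Reverse-ChainRuleOrder}. Since $\fBold_{\bparent j}=f(x_{\bparent j}|\xBold_{\bparent{1:j-1}})$, the ordinary chain rule gives $\prod_{j=i+1}^{n}\fBold_{\bparent j}=f(\xBold_{\bparent{i+1:n}}|\xBold_{\bparent{1:i}})$; invoking the key relation (\ref{eq:Proof=00003Df(i)}) with $i$ replaced by $i+1$ reduces the conditioning set $\xBold_{\bparent{1:i}}=\xBold_{\oBold_{1:i}}$ to $\xBold_{\bcommon i}$, yielding the first line of (\ref{eq:sequence=00003Df(i)f=00005Bi=00005D}). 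The second line follows by the mirror-image argument applied to the backward NLN chain rule $f(\xBold_{\OMEGA})=\prod_{j}\fBold_{\bbracket j}$ of (\ref{eq:Chain=00003D=00005B=00005D}), whose proof was noted in Proposition \ref{prop:Reverse-ChainRuleOrder} to follow the same logic.

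I expect the main obstacle to be the bookkeeping in step (i): one must verify that the separable-domain hypothesis underlying the distributive-law factorization of the marginals holds for this binary split, i.e. that $\gBold_{1:i}$ depends only on $\{\xBold_{\bcommon i},\xBold_{\bbracket{1:i}}\}$ and $\gBold_{i+1:n}$ only on $\{\xBold_{\bparent{i+1:n}},\xBold_{\bcommon i}\}$, so that the summations over $\xBold_{\bparent{i+1:n}}$ and $\xBold_{\bbracket{1:i}}$ factor cleanly into $A$ and $B$. This is guaranteed by the domain identities (\ref{eq:domain=00003DBinaryTree}) and the ternary partition (\ref{eq:ch5:factor:Ternary}), so the difficulty is notational rather than conceptual; the conditional-independence content is exactly what the CI topology of Section \ref{sec:Conditionally-separable} was constructed to supply.
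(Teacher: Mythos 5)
Your proof is correct and takes essentially the same route as the paper's: both results are read off from Proposition \ref{prop:Reverse-ChainRuleOrder}, with (\ref{eq:sequence=00003Df(i)f=00005Bi=00005D}) obtained by telescoping the FA and NLN chain-rule orders and reducing the conditioning set via (\ref{eq:Proof=00003Df(i)}). The only (cosmetic) difference is in assembling (\ref{eq:Ternary=00003DFactorization}): the paper simply splits the FA product at $i$ and applies the ordinary chain rule to the marginal, $\prod_{j=1}^{i}\fBold_{\bparent j}=f(\xBold_{\bparent{1:i}})=f(\xBold_{\bcommon i})f(\xBold_{\bbracket{1:i}}|\xBold_{\bcommon i})$ using $\bparent{1:i}=\{\bcommon i,\bbracket{1:i}\}$, whereas your step (i) re-derives the same conditional independence given the separator from the distributive-law marginal computations (\ref{eq:joint marginal}) already established in the earlier proof.
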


\begin{proof}
Because the sequences $\mbox{\ensuremath{\fBold}}_{\bparent i},\mbox{\ensuremath{\fBold}}_{\bbracket i}$
are each in chain rule order, in consequence of Proposition \ref{prop:Reverse-ChainRuleOrder},
both equations in (\ref{eq:sequence=00003Df(i)f=00005Bi=00005D})
satisfy the chain rule. For (\ref{eq:Ternary=00003DFactorization}),
we can see that $\prod_{j=1}^{i}\mbox{\ensuremath{\fBold}}_{\bparent i}=f(\xBold_{\bparent{1:i}})=f(\xBold_{\bcommon i})f(\xBold_{\bbracket{1:i}}|\xBold_{\bcommon i})$
satisfies the chain rule, since $\bparent{1:i}=\{\bcommon i,\bbracket{1:i}\}$.
\end{proof}
The re-factorized form (\ref{eq:Ternary=00003DFactorization}) for
a special case, namely HMC model, will be illustrated in Fig. \ref{fig:FB=00003Db}
in Section \ref{sec:chap6:FB and VA via CI}.

\subsubsection{CI topology versus CI factorization}

Note that, the CI topological structure via FB recursion (\ref{eq:Ternary=00003DGDL}-\ref{eq:ch5:GDL:BW})
is a computational technique, while the CI factorization via chain
rule (\ref{eq:Ternary=00003DFactorization}) is a probabilistic methodology.
In order words, the former involves quantitative values and practical
implementation, while the latter gives us insights about model characteristic.
Nevertheless, both of them yields the same result under GDL, as shown
next.

For illustration , let us consider the pre-semiring $([0,1]^{\calX^{\ndata}},\SUM,\PROD)$.
Then, the inference tasks $\underset{\xBold_{\SET}}{\boxplus}f(\xBold_{\OMEGA})=\underset{\xBold_{\SET}}{\boxplus}\PROD_{i=1}^{n}f_{\pi(i)}$,
as summarized in Table \ref{tab:Pre-semirings}, can be computed via
two equivalent forms, the original form (\ref{eq:Binary}) and the
ternary factorization (\ref{eq:Ternary=00003DFactorization}). Applying
GDL to these two forms, respectively, we have:

\begin{eqnarray}
\underset{\xBold_{\SET}}{\boxplus}f(\xBold_{\OMEGA}) & = & \underset{\xBold_{s_{\bcommon i}}}{\boxplus}(\underset{\xBold_{s_{\bparent{i+1:n}}}}{\boxplus}\gBold_{i+1:n}\PROD\underset{\xBold_{s_{\bbracket{1:i}}}}{\boxplus}\gBold_{1:i})\label{eq:Binary-Form}
\end{eqnarray}
and:

\begin{equation}
\underset{\xBold_{\SET}}{\boxplus}f(\xBold_{\OMEGA})=\underset{\xBold_{s_{\bcommon i}}}{\boxplus}(f(\xHat_{\bparent{i+1:n}}|\xBold_{\bcommon i})\PROD f(\xBold_{\bcommon i})\PROD f(\xHat_{\bbracket{1:i}}|\xBold_{\bcommon i}))\label{eq:Ternary-Form}
\end{equation}
where $f(\xHat_{\bparent{i+1:n}}|\xBold_{\bcommon i})\TRIANGLEQ\mbox{\ensuremath{\underset{\xBold_{s_{\bparent{i+1:n}}}}{\boxplus}}}f(\xBold_{\bparent{i+1:n}}|\xBold_{\bcommon i})$
and $f(\xHat_{\bbracket{1:i}}|\xBold_{\bcommon i})\TRIANGLEQ\underset{\xBold_{s_{\bbracket{1:i}}}}{\boxplus}f(\xBold_{\bbracket{1:i}}|\xBold_{\bcommon i})$. 

Comparing (\ref{eq:Binary-Form}) with (\ref{eq:Ternary-Form}), we
can see that the result of GDL applied to the original form (\ref{eq:Binary})
is equivalent to the result of GDL applied to the re-factorization
form (\ref{eq:Ternary=00003DFactorization}), without the need to
compute that re-factorization form (\ref{eq:Ternary-Form}). This
equivalence is useful when computing sequential objective functions
$\underset{\xBold_{\SET_{j}}}{\boxplus}f(\xBold_{\OMEGA})$. For example,
the $n$ scalar marginals can be computed directly via the original
form (\ref{eq:Marginal}), without the need to derive the re-factorization
form (\ref{eq:Ternary=00003DFactorization}). 

\section{Summary}

In this chapter, the generalized distributive law (GDL) was revisited
and new insights were gained from a topological perspective. Let us
summarize here three main achievements of this new perspective:

Firstly, we have defined the GDL via an abstract algebra for functions,
rather than the approach using variables in the literature. Hence,
it was feasible to show that the GDL always reduces the total number
of operators, when applicable. 

Secondly, by separating the concept of operator indices from variable
indices, we have applied set algebra and set up a conditionally independent
(CI) structure for the original model. This topological CI structure
was also shown to be equivalent to CI factorization in the probability
context. Hence, the GDL is better understood as a tool exploiting
original CI structures, rather than being a cause of that CI structure.
Conversely, the design of CI structures, embedded in the original
model, can be guided by the amount of reduction achieved when applying
the GDL.

Thirdly, a new computational structure, namely FB recursion, for the
GDL was also designed. When applied to Bayesian inference, the FB
recursion is also a generalized form of well-known algorithms, such
as the Forward-Backward (FB) algorithm and Viterbi algorithm (VA),
both of which we will study in Chapter \ref{=00005BChapter 6=00005D}.
Furthermore, a new interpretation of entropy computation via the GDL
was also provided. This interpretation will be useful in understanding
the relationship between the Viterbi algorithm (VA) and Variational
Bayes (VB) approximation for the hidden Markov chain (HMC) in the
next chapter.


\global\long\def\REAL{\mathbb{R}}%

\global\long\def\DEAL{\mathbb{D}}%

\global\long\def\COMPLEX{\mathbb{C}}%

\global\long\def\RING{\mathcal{R}}%

\global\long\def\MSET{\mathcal{M}}%

\global\long\def\ASET{\mathcal{A}}%

\global\long\def\OMEGA{\Omega}%

\global\long\def\calO{{\cal O}}%

\global\long\def\calX{\mathcal{X}}%

\global\long\def\ndata{n}%

\global\long\def\nstate{M}%

\global\long\def\itime{i}%

\global\long\def\istate{k}%

\global\long\def\funh#1{h\left(#1\right)}%

\global\long\def\fung#1{g\left(#1\right)}%

\global\long\def\seti#1#2{#1\in\{1,2,\ldots,#2\}}%

\global\long\def\setd#1#2{\{#1{}_{1},#1{}_{2},\ldots,#1_{#2}\}}%

\global\long\def\TRIANGLEQ{\triangleq}%


\global\long\def\fBold{\boldsymbol{f}}%

\global\long\def\TBold{\mathbf{T}}%

\global\long\def\LBold{L}%

\global\long\def\fprofile{f_{p}}%

\global\long\def\fdelta{f_{\delta}}%

\global\long\def\ftdelta{\widetilde{f}_{\delta}}%

\global\long\def\fdbar{\overline{f_{\delta}}}%

\global\long\def\WBold{\mathbf{W}}%

\global\long\def\XBold{\mathbf{X}}%

\global\long\def\ftilde{\widetilde{f}}%

\global\long\def\fnutilde#1{\widetilde{f}^{[#1]}}%

\global\long\def\fctilde{\widetilde{f}_{c}}%

\global\long\def\fbar{\overline{f}}%

\global\long\def\fhat{\widehat{f}}%

\global\long\def\fpbar{\overline{\fprofile}}%

\global\long\def\xBold{\mathbf{x}}%

\global\long\def\sBold{\mathbf{s}}%

\global\long\def\oneBold{\boldsymbol{1}}%

\global\long\def\btheta{\vtheta}%

\global\long\def\ntheta{n}%

\global\long\def\iIVB{\nu}%

\global\long\def\nIVB{\nu_{c}}%

\global\long\def\eIVB{\nu_{e}}%

\global\long\def\KLDVB{KLD_{VB}}%

\global\long\def\KLDFCVB{KLD_{FCVB}}%

\global\long\def\Fc{\mathbb{F}_{c}}%

\global\long\def\Ffc{\mathbb{F}_{f.c}}%

\global\long\def\Normal{\mathcal{N}}%

\global\long\def\Oc{\mathcal{O}}%

\global\long\def\Rc{\mathbb{R}}%

\global\long\def\Psi{\psi}%

\global\long\def\LAMBDA{\boldsymbol{\Lambda}}%

\global\long\def\lAMBDA{\boldsymbol{\lambda}}%

\global\long\def\khat{\widehat{k}}%

\global\long\def\lhat{\widehat{l}}%

\global\long\def\lVA{\widehat{l}^{(VA)}}%

\global\long\def\Lhat{\widehat{L}}%

\global\long\def\LVA{\widehat{L}^{(VA)}}%

\global\long\def\tradVBML{tradVB_{(ML)}}%

\global\long\def\tradFCVBML{tradFCVB_{(ML)}}%

\global\long\def\VBML{VB_{(ML)}}%

\global\long\def\FCVBML{FCVB_{(ML)}}%

\global\long\def\Vst{VB_{(ML)}^{1st}}%

\global\long\def\Fst{FCVB_{(ML)}^{1st}}%

\global\long\def\Mu{Mu}%

\global\long\def\iton{i\in\{1,\ldots,n\}}%

\global\long\def\iinn{i\in\{1,\ldots,n\}}%

\global\long\def\gbar{\overline{g}}%

\global\long\def\fDoppler{f_{D}}%

\global\long\def\Tsample{T}%

\chapter{Variational Bayes variants of the Viterbi algorithm \label{=00005BChapter 6=00005D}}

\section{Introduction}

For state inference of a Hidden Markov Chain (HMC) with known parameters,
we will study, in this chapter, four well known algorithms in the
literature, corresponding to a trade-off between performance and computational
load: 

- Forward-Backward (FB) algorithm can compute the exact marginal posterior
distributions recursively, yet the cost is typically prohibitive in
practice. 

- By confining the inference problem to certainty equivalent (CE)
estimate (Section \ref{subsec:chap4:CE-approx}), the Viterbi algorithm
(VA) is able to compute recursively the exact joint Maximum-a-posteriori
(MAP) state trajectory estimate, with acceptable complexity. 

- Further restricting CE estimate to a local joint MAP, the Iterated
Conditional Modes (ICM) algorithm is even faster than VA, yet ICM's
reliability is undermined because of the dependence on initialization
and a lack of understanding of the methodology currently. 

- Maximum Likelihood (ML) is the fastest estimation method, but neglects
the Markov structure of the hidden field and consequently has the
worst performance.

In this chapter, we will re-interpret these methods within a fully
Bayesian perspective:

- FB will be shown to be a consequence of FB factorization of the
posterior distribution, which is an inhomogeneous HMC. 

- VA actually returns shaping parameters of another HMC approximation,
whose joint MAP estimate is equal to the exact joint MAP trajectory
of posterior. This novel Bayesian interpretation of VA not only reveals
the nature of VA, but also opens up an approximation framework for
HMC. 

- As a variant of VA, but further confined to the independent class
of hidden field posterior, Variational Bayes (VB) approximation is
a reasonable choice for the conditionally independent (CI) structure
of HMC posterior. Owing to this CI structure, a novel speed-up scheme
for iterative VB (IVB) algorithm in VB method will also be proposed
in the chapter. 

- Finally, ICM will be shown to be equivalent to the so-called functionally
constrained VB (FCVB) approximation. 

\subsection{A brief literature review of the Hidden Markov Chain (HMC)}

For many decades, the first-order Hidden Markov model (HMM) has been
widely used as a stochastic model for the dependent (dynamic) sequential
data. The fundamental problems of HMM are to infer both its parameters
and the latent variables. For general treatment of all kind of HMM,
we refer to the textbooks {[}\citet{ch6:bk:HMM:Cappe05,ch6:bk:HMM:Schnatter06}{]},
which have a thorough review of HMMs in the literature. 

Throughout the chapter, we focus on the simplest case of HMM, namely
finite state homogeneous HMC with known parameters. Despite simplicity,
this model has been used successfully in various application domains,
e.g. speech processing {[}\citet{ch6:Art:HMM:speech:Rabiner89}{]},
digital communication {[}\citet{ch6:origin:BCJR:Bahl74,ch2:bk:ToddMoon}{]}
and image analysis {[}\citet{ch6:art:HMM:image:Li_2000}{]}.

The label inference of Markov chain became recursively tractable owing
to Forward-Backward (FB) algorithm, firstly proposed by Baum et al
{[}\citet{ch6:origin:FB:Baum70}{]}. In their Baum-Welch algorithm
(currently known as the Expectation-Maximization (EM) algorithm for
HMC with unknown transition matrix {[}\citet{ch6:bk:HMM:Cappe05}{]}),
the FB algorithms is used as an Expectation step for label field.
FB algorithm was also discovered in other fields under different names,
such as BCJR algorithm {[}\citet{ch6:origin:BCJR:Bahl74}{]} in channel
decoders (as reviewed in Section \ref{subsec:chap2:Stream-Code}),
Kalman filtering and smoothing (two-filters formula) {[}\citet{ch6:art:Kalman:two_filters_69}{]}
in Gaussian linear state space model and the sum-product algorithm
{[}\citet{ch6:art:SumProduct:Pearl88}{]} in graphical learning. 

For HMC, the recursive marginalization in FB, however, are slow and
become a serious problem in applications requiring a fast estimation
method. Hence, the point-estimate-based Viterbi algorithm (VA), firstly
proposed in {[}\citet{ch6:origin:VA:Viterbi67}{]}, was designed to
recursively evaluate the true maximum-a-posteriori (MAP) of joint
trajectory. By replacing marginalization with maximization, VA can
be computed much more quickly than FB, which requires all marginal
inference of each label. Owing to efficient recursive computation,
the application of VA is vast (see for example the history of VA in
{[}\citet{ch6:ART:VA:history06}{]}). VA is often presented via the
so-called weighted length in trellis diagram, a concept in graphical
learning, as firstly formalized in {[}\citet{ch2:origin:VA:Forney73}{]}.
This approach does not explain its relationship with FB properly,
and also lack important insight of its approximated property.

The fast Iterated Conditional Modes (ICM) algorithm, firstly proposed
in {[}\citet{ch4:art:ICM:Besag86}{]}, is widely used in two scenarios.
The first one is Markov random fields {[}\citet{ch4:bk:lossHamming:image95,ch6:PhD:ICM:HMM:Dauwels05,ch4:art:ICM:localMAP_2006}{]},
in which ICM is applied to finding local joint MAP of the hidden label
field with low computational load {[}\citet{ch6:art:ICM:VQ_channel_10}{]}.
The second scenario is Expectation Conditional Maximization (ECM)
algorithm, in which ICM is used to replace the M-step in the Expectation
Maximization (EM) algorithm {[}\citet{ch6:art:ECM:mixture:08}{]}.
However, ECM has been deployed only for non-closed forms of M-step,
with ICM used instead as a closed-form approximation. To the best
of our knowledge, the material in this chapter is the first to study
ICM as a closed-form approximation for the HMC with known parameters,
and to characterize ICM as a VB variant. 

\subsection{The aims of this chapter}

In this chapter, we will provide a deterministic Bayesian approximation
framework for label inference in the HMC, and study the trade-off
between performance and computational load. 

Firstly, the FB algorithm will be presented as a factorization scheme
for an inhomogeneous HMC posterior. 

Then, we will show that VA is a sparse CE-based HMC approximation
of original HMC, in which their joint MAPs are undisturbed. Because
tracking the joint MAP in that sparse HMC is much faster than in the
original HMC, this Bayesian perspective does not only reveal the core-trick
of complexity reduction in VA, but also motivates another Bayesian
approximation, namely a Variational Bayes (VB) approximation from
mean field theory. 

Fundamentally, VB seeks an approximating distribution within the independent
functional class, such that its Kullback-Leibler divergence (KLD)
to the original distribution is minimized. In the literature, VB methodology
has been applied successfully to intractable inference of HMM with
unknown parameters {[}\citet{ch6:art:VB:filtering:AQuinn08,ch6:art:VB:HMM:Titterington09}{]}.
Although the Markov chain with known parameters in this paper is completely
tractable, we still use, for the first time, VB for HMC label inference
as an attempt to further reduce the computational load. 

Furthermore, a novel accelerated scheme will be proposed in order
to reduce computational load of iterative VB algorithm significantly.
In CI structures such as HMC, this accelerated scheme can reduce the
total number of IVB cycles to a factor of $\log(n)$, with $n$ denoting
the number of labels.

As a consequence, a functionally constrained VB (FCVB) approximation
will be developed to produce a local joint MAP estimate for hidden
label field, based on iterative CE propagation among all of VB marginal
distributions. This FCVB scheme will be shown to be equivalent to
ICM algorithm. The virtue of the FCVB optic will be helpful to understand
the property of ICM. Moreover, it will allow us to inherit a novel
accelerated scheme arising in the VB approximation for the HMC model.
The accelerated FCVB constitutes a faster version of ICM. 

In simulation, the performance of FCVB will be shown to be comparable
to that of VA when the transition probabilities in HMC are not too
correlated (i.e. when the correlation coefficient between any two
simulated transition probabilities for the HMC is not too close to
one in magnitude). Note that, FCVB is an iterative scheme, while VA
is not. Notwithstanding this, the independent structure of FCVB makes
its computation per iteration much lower than VA, yielding a much
reduced net computational load. 

Finally, we will briefly recall the Bayesian risk theory of Hamming
distance criterion, which was also reviewed in {[}\citet{ch4:bk:lossHamming:image95,ch4:art:JuriLember11}{]}.
This will allow us to explain the performance ranking we find under
simulations in Chapter \ref{=00005BChapter 8=00005D}. From best to
worst, they are FB, VA, VB and FCVB algorithms.

\section{The Hidden Markov Chain (HMC) \label{sec:chap6:HMC}}

Assume that we receive a sequence of data $\xBold_{n}\TRIANGLEQ[x_{1},\dots,x_{n}]'$
(the observed field), where $x_{i}\TRIANGLEQ x[i]\in\Rc$ are samples
at discrete time point $i\in\{1,\dots,n\}$. Let us consider two simple
stochastic models for $\xBold_{\ndata}$, as follows: 
\begin{itemize}
\item The simplest model for $\xBold_{n}$ is independent identical distributed
(iid) random variables. This model, however, is too strict and neglects
the dependent structure of interest in data.
\item The next relaxation for $\xBold_{n}$ is the non-identical one, i.e.
the independently distributed (id) case, in which we assume $x_{i}$
is sampled from one of $M$ classes of known observation distributions
$f_{k}(x_{i})$, $k\in\{1,\ldots,M\}$. At each time $\seti in$,
let us define a soft classification $M\times1$ vector (an $M$-dimensional
statistic), as follows: 
\begin{equation}
\Psi(x_{i})\TRIANGLEQ[f_{1}(x_{i}),\ldots,f_{M}(x_{i})]'\label{eq:ch6:DEF:psi}
\end{equation}
Furthermore, let us define the label vector $l_{i}\in\{\boldsymbol{\epsilon}(1),\ldots,\boldsymbol{\epsilon}(M)\}$
pointing to the state, $\istate$, of $x_{\itime}$, where $\boldsymbol{\epsilon}(k)$
is the $\istate$th elementary vector: 
\[
\boldsymbol{\epsilon}(k)=\left[\delta[k-1],\dots,\delta[k-M]\right]'
\]
and $\delta[\cdot]$ is Kronecker-$\delta$ function. Owing to the
id structure, the observation model, given the matrix $L_{n}\TRIANGLEQ[l_{1},\ldots,l_{n}]$,
is 
\begin{equation}
f(\xBold_{n}|L_{n})=\prod_{i=1}^{n}f(x_{i}|l_{i})=\exp(\sum_{i=1}^{n}l_{i}'\log\Psi(x_{i}))\label{eq:ch6:Observation}
\end{equation}
where, akin to Matlab convention, operators such as exp and $\log$
are taken element-wise. Note that, we adopt the vector form in right
hand side of (\ref{eq:ch6:Observation}) in order to emphasize its
Exponential Family (EF) structure, as defined in (\ref{eq:ch4:EF:obs}).
\end{itemize}
Then, for the id case, the task of inferring the class of $x_{i}$
is equivalent to inference task of $l_{i}$ in (\ref{eq:ch6:Observation}).
For this purpose, let us consider two simple prior models for label
sequence: 
\begin{itemize}
\item Once again, the simplest model for discrete $l_{i}$ is iid sampling
of the multinomial distribution with known probabilities, $p\TRIANGLEQ[p_{1},\ldots,p_{M}]'$
in the probability simplex (i.e. the sum-to-one simplex), as follows:
\[
f(l_{i}|p)=Mu_{l_{i}}(p)=l_{i}'p=\exp(l_{i}'\log p)
\]
where the vector form is adopted in order to emphasize the CEF form,
defined in (\ref{eq:ch4:EF:prior}). In this thesis, the notation
$Mu_{l_{i}}(p)\equiv Mu_{l_{i}}(1,p)$ denotes multinomial distribution
with one realization in total. Note that, in this case, the $\itime$th
observation model $f(x_{i})$ is a mixture of $M$ known components:
\[
f(x_{i})=\sum_{l_{i}}f(x_{i}|l_{i})f(l_{i}|p)=p'\Psi(x_{i})
\]
Owing to id structure of $\xBold_{n}$ and conjugacy in EF, the posterior
distribution of $l_{i}$ also belongs to id distribution class, as
follows: 
\[
f(l_{i}|\xBold_{n},p)=f(l_{i}|x_{i},p)=Mu_{l_{i}}(\gamma_{i})\propto f(x_{i}|l_{i})f(l_{i}|p)
\]
 in which the length-$\nstate$ vector $\gamma_{i}$ is the shaping
parameter and $\gamma_{i}\propto\Psi(x_{i})\circ p$, $\iton$. The
normalization constant is derived by noting that $\sum_{\istate=1}^{\nstate}\gamma_{\istate,i}=1$,
notation $\propto$ denotes a sum-to-one operator and $\circ$ is
the Hadamard product.
\item However, the above independent mixture model is too strict in practice,
since it neglects the temporal dependence. A widely adopted dependent
model for $l_{i}$ is the homogeneous HMC: 
\begin{equation}
f(\LBold_{n}|\TBold,p)=\prod_{i=2}^{n}f(l{}_{i}|l_{i-1},\TBold)f(l_{1}|p)\label{eq:ch6:PRIOR}
\end{equation}
in which the known parameters are $\nstate\times\nstate$ transition
matrix $\TBold$ with sum-to-one columns (i.e. a left stochastic matrix)
and initial probability vector $p$ in the probability simplex, as
illustrated by the Directed Acyclic Graph (DAG) in Fig. \ref{fig:DAG}.
By definition of the HMC, we have: 
\end{itemize}
\begin{eqnarray}
f(l_{i}|l_{i-1},\TBold) & = & Mu_{l_{i}}(\mbox{\ensuremath{\TBold}}l_{i-1}),\ i\in\{2,\ldots n\}\label{eq:HMC}\\
f(l_{1}|p) & = & Mu_{l_{1}}(p)\nonumber 
\end{eqnarray}

Although the posterior probability $f(\LBold_{n}|\xBold_{n})$ of
individual joint trajectories $L_{n}$ in above HMC can be computed
feasibly, the full inference of $L_{n}$ is intractable owing to the
exponential increase in the number $M^{n}$ of joint trajectories
with $\ndata$, a problem referred to as the curse of dimensionality
{[}\citet{ch6:bk:curse_dimension:Karny97}{]}. This remark will be
clarified in the sequel. The discovery of a tractable Bayesian methodology
for this problem will be central to this chapter. Note that, the key
idea behind computational reduction in this case is simply to confine
our inference task to special cases, and to avoid computing all $\nstate^{\ndata}$
joint posterior probabilities $f(\LBold_{n}|\xBold_{n})$. Those confined
inferences can be marginal distributions, point estimates or distributional
approximations.

Throughout the chapter, there will be some convenient conventions
for shortening notations: $\{\TBold,p\}$ will be omitted occasionally,
e.g. $f(l_{i}|l_{i-1})\equiv f(l_{i}|l_{i-1},\TBold),$ when the context
is clear. The $M\times1$ zero and unity vectors are defined as $\boldsymbol{0}_{M\times1}$
and $\boldsymbol{1}_{M\times1}$, respectively. The range of the index
will be denoted by subscripts, e.g.: $\xBold_{i+1:n}\equiv[x_{i+1},\ldots,x_{n}]'$,
$L_{i+1n}=[l_{i+1},\ldots,l_{n}]$. When $i\notin\{1,\dots n\}$,
empty set convention will be applied, e.g. $L_{n+1:n}=\emptyset$,
and $L_{0}=\emptyset$. The value of arbitrary distribution $f(\theta)$
at $\widehat{\theta}$ will be denoted as $f(\widehat{\theta})\equiv f(\theta=\widehat{\theta})$.
For avoiding ambiguity, point estimates such as $\widehat{l_{i}}$
or $\widehat{L_{n}}$ will be re-defined as marginal MAP, joint MAP,
etc. separately in each section. For computational load, let us denote
exponential, multiplication, addition and maximization operators as
$EXP$, $MUL$, $ADD$ and $MAX$, respectively. 

\begin{figure}
\begin{centering}
\includegraphics[width=0.6\columnwidth]{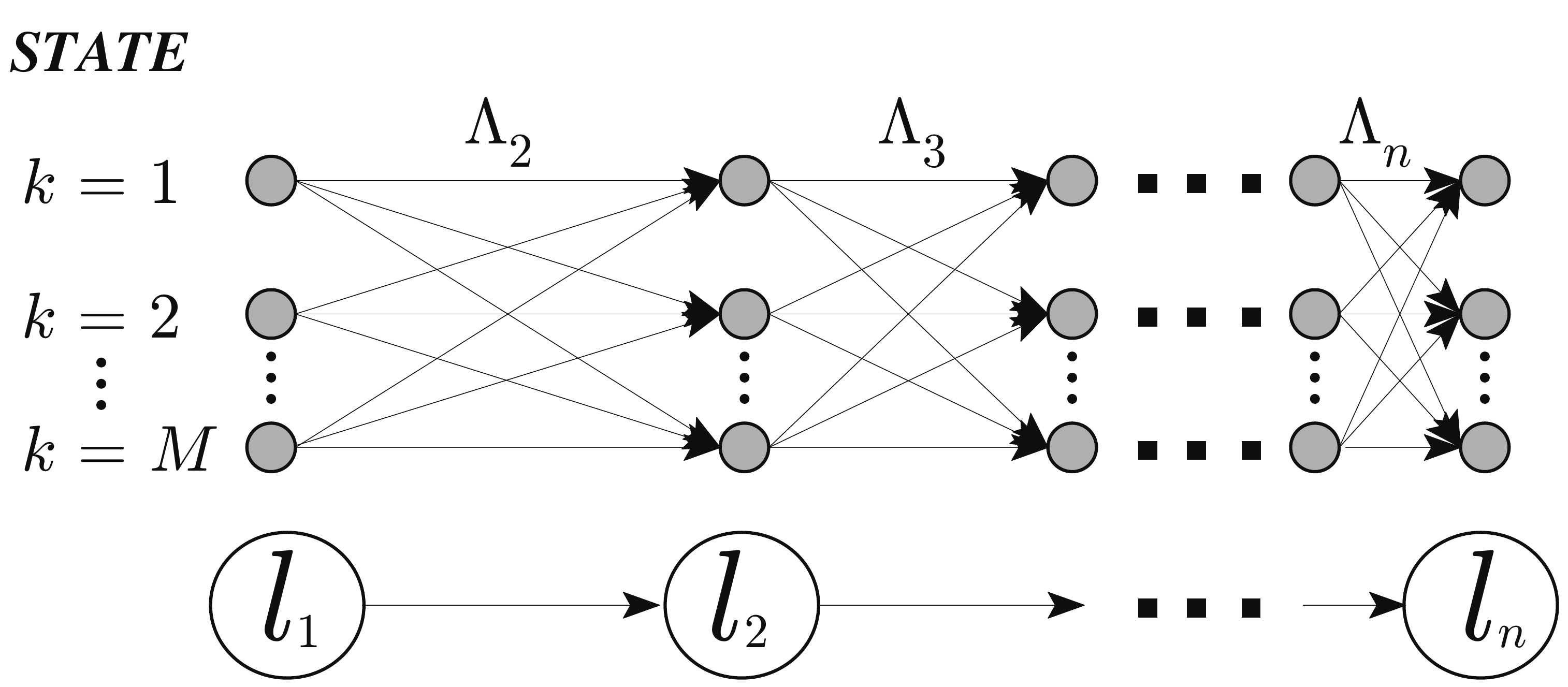}
\par\end{centering}
\begin{centering}
\includegraphics[width=0.6\columnwidth]{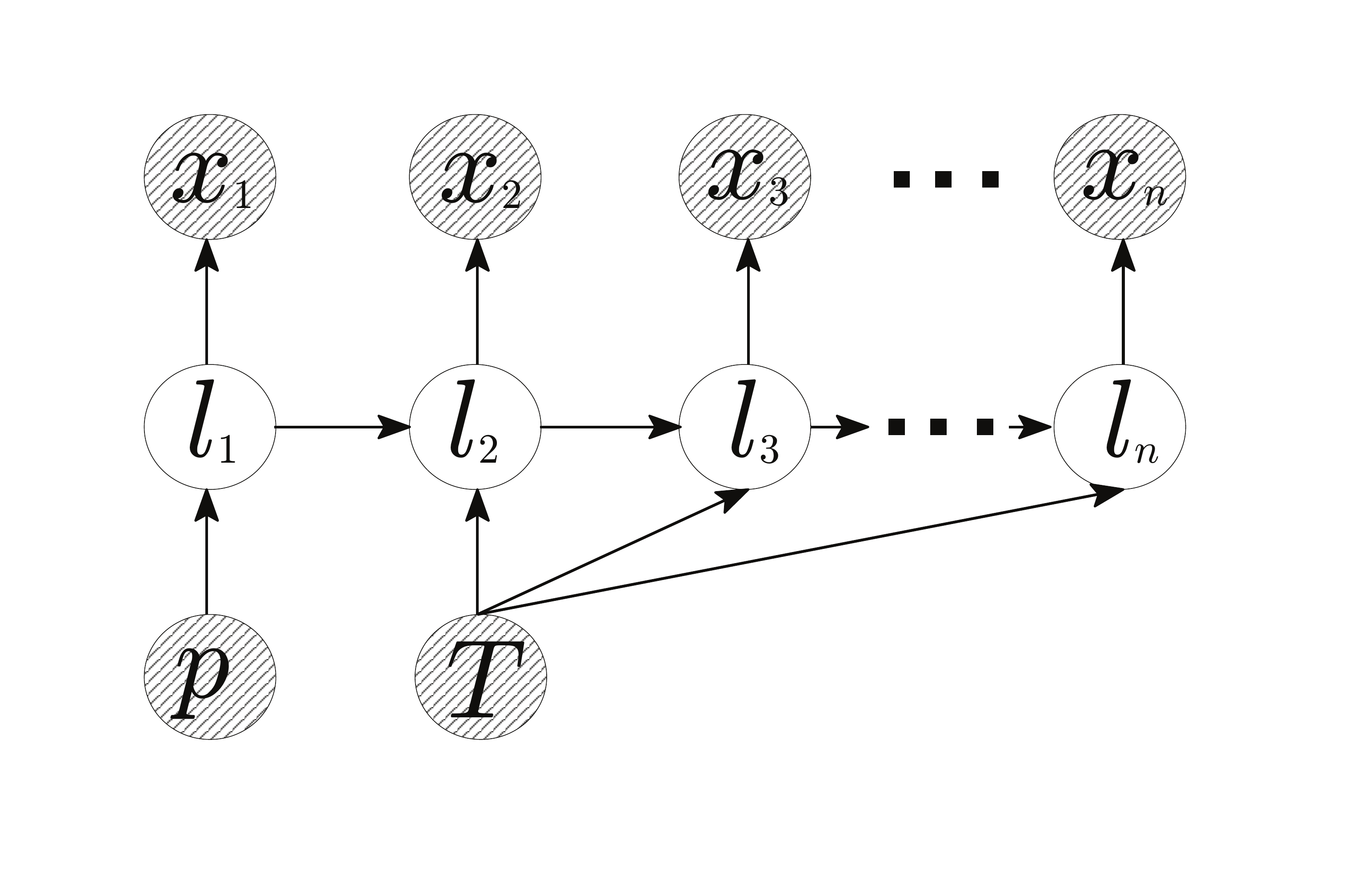}
\par\end{centering}
\caption{\label{fig:DAG} Trellis diagram (top) and DAG (bottom) for HMC. $\protect\LAMBDA_{i}$
are the transition metric lengths at time $\protect\itime$. }
\end{figure}

\subsection{Sequence of marginals for general label field}

The purpose of this sub-section is to study the computational load
when we confine the inference from the $\nstate^{\ndata}$-term joint
$f(\LBold_{n}|\xBold_{n})$ to just $n$ smoothing inference $f(l_{i}|\xBold_{n})=\sum_{L_{\backslash i}}f(L_{n}|\xBold_{n})$,
where $L_{\backslash i}$ is the complement of $l_{i}$ in $L_{n}$,
$i\in\{1,\ldots,n\}$. The general (not necessarily Markov-constrained)
model $f(\xBold_{n},L_{n})$ for label field $L_{\ndata}$ will be
studied in this sub-section, while the HMC model will be studied in
next sub-section. 

\subsubsection{Scalar factorization for label field}

Firstly, let us investigate why direct marginalization over the joint
$f(L_{n}|\xBold_{n})\propto f(\xBold_{n},L_{n})$ is intractable.
By the general chain rule, any general model $f(\xBold_{n},L_{n})$
can be factorized into scalar factors, as follows:

\begin{equation}
f(\xBold_{n},L_{n})=\prod_{i=2}^{n}f(x_{i},l_{i}|\xBold_{i-1},L_{i-1})f(x_{1},l_{1})\label{eq:FACTOR}
\end{equation}

Now, let us examine the cost of computing the sequence of posterior
marginals: because there are $n$ multiplication factors in (\ref{eq:FACTOR}),
the probability $f(\xBold_{n},L_{n})$ of each trajectory $L_{n}$,
given $\xBold_{\ndata}$, needs $O(n)$ of MUL operators. In order
to marginalize out $L_{\backslash i}$, for each $f(l_{i}|\xBold_{n})$,
we have to evaluate all probabilities of $M^{n-1}$ trajectories $L_{\backslash i}$,
i.e. $M^{n-1}\times O(n)\approx O(nM^{n})$ of MULs in total. Finally,
for $n$ smoothings $f(l_{i}|\xBold_{n})$, we would need $\ndata\times O(nM^{n-1})\approx O(n^{2}M^{n})$
of MULs, which increases exponentially with $n$. 

\subsubsection{Binary partitions for label field}

Apart from scalar factors, the general model (\ref{eq:FACTOR}) can
also be factorized into two forward and backward sub-trajectories,
$L_{i}$ and $L_{i+1:n}$, respectively:

\begin{equation}
f(\xBold_{n},L_{n})=f(\xBold_{i+1:n},L_{i+1:n}|\xBold_{i},L_{i})f(\xBold_{i},L_{i})\label{eq:FB=00003Dgeneral}
\end{equation}
where $i\in\{1,\ldots,n\}$. Those two sub-trajectories, in turn,
can be binarily factorized in the manner of a binary tree:

\begin{eqnarray}
f(\xBold_{i},L_{i}) & = & f(x_{i},l_{i}|\xBold_{i-1},L_{i-1})f(\xBold_{i-1},L_{i-1})\label{eq:FWBW=00003Dgeneral}\\
f(\xBold_{i:n},L_{i:n}|\xBold_{i-1},L_{i-1}) & = & f(x_{i},l_{i}|\xBold_{i-1},L_{i-1})f(\xBold_{i+1:n},L_{i+1:n}|\xBold_{i},L_{i})\nonumber 
\end{eqnarray}

Note that, the cost of computing the sequence of posterior marginals
via FB factorization (\ref{eq:FWBW=00003Dgeneral}) is at least the
same as that cost in scalar factorization in (\ref{eq:FACTOR}), because
there is no CI structure for $L_{\ndata}$ in original $\ndata$ factors
(\ref{eq:FACTOR}). Nevertheless, when the general model (\ref{eq:FB=00003Dgeneral})
is specialized to the HMC model, the FB factorization (\ref{eq:FWBW=00003Dgeneral})
will lead to the tractable FB algorithm, as explained below. 

\subsection{Sequence of marginals for the HMC \label{subsec:chap6:FB-HMC}}

By exploiting Markovianity in the HMC model, the FB algorithm, firstly
proposed in {[}\citet{ch6:origin:FB:Baum70}{]}, computes all smoothings
$f(l_{i}|\xBold_{n})$ in a recursive way, without the need to compute
the $M^{n}$ values of $f(L_{n}|\xBold_{n})$ explicitly. In this
sub-section, the traditional FB algorithm will be re-interpreted as
a recursive update of Bayesian sufficient statistics. This fully Bayesian
treatment will be helpful in understanding the underlying methodology,
which we will later present in Section \ref{sec:Point-estimation}
on point estimation.

The recursive FB algorithm, as shown below, can divide the trajectory
$L_{n}$ into sub-trajectories and reduce the complexity down from
exponential form $O(n^{2}M^{n})$ in (\ref{eq:FACTOR}) down to linear
form $\Oc(2nM^{2})$ of MUL. For this purpose, let us study the scalar
factorization in our HMC model first. 

\subsubsection{Markovianity}

Multiplying the id observation model (\ref{eq:ch6:Observation}) by
the HMC prior (\ref{eq:ch6:PRIOR}), the joint distribution for the
HMC context is:

\begin{eqnarray}
f(\xBold_{n},L_{n}|\TBold,p) & = & f(\xBold_{n}|\LBold_{n})f(\LBold_{n}|\TBold,p)\label{eq:Joint}\\
 & = & \prod_{i=2}^{n}f(x_{i},l_{i}|l_{i-1},\TBold)f(x_{1},l_{1}|p)\nonumber 
\end{eqnarray}
in which we have exploited Markovianity of $L_{\ndata}$. The augmented
form for $x_{i}$ and $l_{i}$, $i\in\{1,\ldots,n\}$, is: 

\begin{eqnarray}
f(x_{1},l_{1}|p) & = & f(x_{1}|l_{1})f(l_{1}|p)\label{eq:augment}\\
f(x_{i},l_{i}|l_{i-1},\TBold) & = & f(x_{i}|l_{i})f(l_{i}|l_{i-1},\TBold)\nonumber 
\end{eqnarray}

By comparing the general model (\ref{eq:FACTOR}) with the HMC model
(\ref{eq:Joint}), we can recognize the following Markov property,
which will be exploited throughout the chapter:

\begin{equation}
f(x_{i},l_{i}|l_{i-1})=f(x_{i},l_{i}|\xBold_{i-1},L_{i-1})\label{eq:MarkovProperty}
\end{equation}

\subsubsection{FB recursion}

Let us substitute Markov property (\ref{eq:MarkovProperty}) into
(\ref{eq:FWBW=00003Dgeneral}) and marginalize the result in (\ref{eq:FB=00003Dgeneral}).
In this way, we can evaluate the smoothing marginals $f(l_{i}|\xBold_{n})\propto\sum_{L_{\backslash i}}f(\xBold_{n},L_{n})$
via generalized distributive law (GDL) (Section \ref{sec:chap5:GDL-for-objective}),
as follows:

\begin{equation}
f(l_{i}|\xBold_{n})\propto f(\xBold_{i+1:n}|l_{i})f(l_{i}|\xBold_{i})\label{eq:FB=00003DHMC}
\end{equation}
in which the two marginalized sub-trajectories $f(l_{i}|\xBold_{i})\propto\sum_{L_{i-1}}f(\xBold_{i},L_{i})$
and $f(\xBold_{i+1:n}|l_{i})=\sum_{L_{i+1:n}}f(\xBold_{i+1:n},L_{i+1:n}|l_{i})$,
in turn, can be computed recursively:

\begin{eqnarray}
f(l_{i}|\xBold_{i}) & \propto & \sum_{l_{i-1}}f(x_{i},l_{i}|l_{i-1})f(l_{i-1}|\xBold_{i-1})\label{eq:FWBW=00003DHMC}\\
f(\xBold_{i:n}|l_{i-1}) & = & \sum_{l_{i}}f(x_{i},l_{i}|l_{i-1})f(\xBold_{i+1:n}|l_{i})\nonumber 
\end{eqnarray}
where $i\in\{1,\ldots,n\}$. By replacing direct marginalizations
of $L_{i-1}$ and $L_{i+1:n}$ in (\ref{eq:FWBW=00003Dgeneral}) with
the chain rule for marginalization (\ref{eq:FWBW=00003DHMC}), we
have greatly reduced the total computational load of $O(M^{n-i})$
and $O(M^{i-1})$, for each time $i$, down to $O(M^{2})$ and $O(M^{2})$,
respectively. Hence, the cost for all $n$ smoothings $f(l_{i}|\xBold_{n})$
is $O(2nM^{2})$. 
\begin{rem}
\label{Remark=00003DFB=00003DHMC} Note that, recognizing Markovianity
(\ref{eq:MarkovProperty}) is the vital step for this scheme. Otherwise,
the distributive law cannot be applied to general model (\ref{eq:FB=00003Dgeneral})
to yield (\ref{eq:FB=00003DHMC}). In this sense, the FB factorization
(\ref{eq:FB=00003Dgeneral}-\ref{eq:FWBW=00003Dgeneral}) is a natural
way to exploit the conditional independent (CI) structure in the HMC,
thereby reducing the complexity via the generalized distributive law
(GDL) (see Sections \ref{sec:chap5:GDL-for-objective}, \ref{subsec:chap5:Sequence-of-marginals}).
\end{rem}

\subsubsection{FB algorithm}

From (\ref{eq:HMC}), (\ref{eq:augment}) and (\ref{eq:FWBW=00003DHMC}),
the shaping parameters $\alpha_{i}$ for filtering marginals, $f(l_{i}|\xBold_{i})=Mu_{l_{i}}(\alpha_{i})$,
as well as the un-normalized length-$\nstate$ vector statistics $\beta_{i}\equiv\beta_{i}(\xBold_{i+1:n})$
governing the backward observations model $f(\xBold_{i+1:n}|l_{i})=l_{i}'\beta_{i}$,
$i=1,\ldots,n$, can be evaluated recursively and in parallel, as
follows:

\begin{eqnarray}
\alpha_{1} & \propto & \Psi_{1}\circ p\label{eq:alpha}\\
\alpha_{i} & \propto & \Psi_{i}\circ(\TBold\alpha_{i-1}),\ i=2,\ldots,n\nonumber 
\end{eqnarray}
where $\Psi_{i}$ are the soft-classification vectors (\ref{eq:ch6:DEF:psi}),
and:

\begin{eqnarray}
\beta_{n} & = & \boldsymbol{1}_{M\times1}\label{eq:beta}\\
\beta_{i} & = & \TBold'(\Psi_{i}\circ\beta_{i+1}),\ i=n-1,\dots,1\nonumber 
\end{eqnarray}

By substituting (\ref{eq:alpha}-\ref{eq:beta}) into (\ref{eq:FB=00003DHMC}),
the shaping parameters, $\gamma_{i}$, of the smoothing marginals,
$f(l_{i}|\xBold_{n})=Mu_{l_{i}}(\gamma_{i})$, can be readily evaluated:
\begin{equation}
\gamma_{i}\propto\beta_{i}\circ\alpha_{i},\ i=\{1,\ldots,n\}\label{eq:gamma}
\end{equation}
.

The FB algorithm, firstly proposed\textcolor{red}{{} }in {[}\citet{ch6:origin:FB:Baum70}{]},
consists of simultaneous forward (\ref{eq:alpha}) and backward (\ref{eq:beta})
recursions for computing $\alpha_{i}$ and $\beta_{i}$, respectively.
However, the $\beta_{\itime}$ evaluation (\ref{eq:beta}) typically
incurs a memory overflow. Stabilization is achieved via a normalization
step, which was firstly proposed in {[}\citet{ch6:Art:HMM:speech:Rabiner89}{]},
as shown in Algorithm \ref{alg:chap6:FB-algorithm}.

\begin{algorithm}
\textbf{Storage:} $2n$ length-$\nstate$ vectors $\alpha_{i}$, $\beta_{i}$
in (\ref{eq:alpha}-\ref{eq:beta}) 

\textbf{Recursion}: evaluate (\ref{eq:alpha}-\ref{eq:beta}), and
normalize $\beta_{i}:=\beta_{i}$ /$\sum_{k=1}^{M}\beta_{k,i}$ 

\textbf{Termination: }Evaluate $\gamma_{i}\propto\beta_{i}\circ\alpha_{i}$,
$i\in\{1,\dots,n\}$.

\textbf{Return} $\widehat{l}_{i}=\arg\max_{l_{\itime}}(\gamma_{i}'l_{\itime})$,
$i=1,\dots,n$. 

\caption{\label{alg:chap6:FB-algorithm} FB algorithm}
\end{algorithm}

\section{Point estimation for HMC \label{sec:Point-estimation}}

In practice, it is often desired to compute point estimate $\widehat{L_{n}}=[\widehat{l_{1}},\ldots,\widehat{l_{n}}]$
for the hidden label field. For those discrete labels, the mode is
a reasonable estimate. However, in general, the decision on what inference
should be used leads to a trade-off between performance and computational
load, as shown below.

\subsection{HMC estimation via Maximum likelihood (ML)\label{subsec:ML}}

If the HMC prior $f(\LBold_{n}|\TBold,p)$ is neglected, we can directly
evaluate ML estimate $\widehat{L_{n}}=\arg\max_{L_{n}}f(\xBold_{n}|\LBold_{n})=\arg\max_{L_{n}}\prod_{i=1}^{n}f(x_{i}|l_{i})$,
as follows: $\widehat{l_{i}}=\arg\max_{l_{i}}\left(l_{i}'\Psi_{i}\right)$,
where $i\in\{1,\ldots,n\}$. Because the maximization operator is
very fast and straightforward, the computational complexity of ML
estimator is very low and no memory is required.

\subsection{HMC estimation via the MAP of marginals}

The sequence of marginal MAP can be defined as $\widehat{l_{i}}=\arg\max_{l_{i}}f(l_{i}|\xBold_{n})$,
$i\in\{1,\ldots,n\}$. The smoothing marginals, $f(l_{i}|\xBold_{n})$,
are provided by the output of FB algorithm (\ref{eq:gamma}), i.e.
$\widehat{l_{i}}=\arg\max_{l_{i}}(l_{i}'\mbox{\ensuremath{\gamma_{i}})}$. 

Notice that the sequence of marginal MAP may be a zero-probability
trajectory {[}\citet{ch6:bk:HMM:Cappe05,ch6:bk:HMM:Fraser08}{]}.
Hence, in many cases, the joint MAP of the length-$\ndata$ trajectory,
$L_{\ndata}$, is preferred, since it is always a non-zero-probability
trajectory. We address this task next.

\subsection{HMC estimation via the MAP of trajectory \label{subsec:MAP-of-trajectory}}

By Bayes' rule, the MAP of trajectory is $\widehat{\LBold}_{n}=\underset{\LBold_{n}}{\arg\max}f(\LBold_{n}|\xBold_{n})=\underset{\LBold_{n}}{\arg\max}f(\xBold_{n},\LBold_{n})$.
Because maximizing $f(\LBold_{n}|\xBold_{n})$ directly over $M^{n}$
trajectories is prohibitive, we will compute $\widehat{\LBold}_{n}$
sequentially via $\widehat{l_{i}}\in\widehat{L_{n}}$, $\iton$. This
can be achieved via two approaches: parallel memory-extraction (bi-directional
VA) and recursive memory-extraction (VA). In order to understand the
underlying methodology of the latter, we will present the former first. 

\subsubsection{Bi-directional VA \label{subsec:chap6:Bi-directional-VA}}

The computation of the MAP element $\widehat{l_{i}}\in\widehat{L_{n}}$,
as defined above, will be tractable if we extract it, not from the
joint inference, $f(L_{\ndata}|\xBold_{\ndata})$, but from $n$ profile
smoothing inferences: 
\[
\fprofile(l_{i}|\xBold_{n})\TRIANGLEQ f(l_{i}|\widehat{\LBold}_{\backslash i},\xBold_{n})\propto\max_{\LBold_{\backslash i}}f(\xBold_{n},\LBold_{n})
\]
where $\seti in$. Then, we have:

\begin{equation}
\widehat{l_{i}}=\arg\max_{l_{i}}\fprofile(l_{i}|\xBold_{n}),\ \iton\label{eq:TwoVA=00003DCE}
\end{equation}

In the same binary-tree approach of the FB algorithm (\ref{eq:FB=00003DHMC}),
applying the Markov property (\ref{eq:MarkovProperty}) to the joint
model (\ref{eq:FB=00003Dgeneral}) and maximizing the result, we have:

\begin{equation}
\fprofile(l_{i}|\xBold_{n})\propto f_{p}(\xBold_{i+1:n}|l_{i})f_{p}(l_{i}|\xBold_{i})\label{eq:TwoVA=00003DHMC}
\end{equation}
in which the profile filtering inferences $\fprofile(l_{i}|\xBold_{i})\equiv f(l_{i}|\widehat{\LBold}_{i-1},\xBold_{\itime})\propto\max_{L_{i-1}}f(\xBold_{i},L_{i})$
and backward profile observations $f_{p}(\xBold_{i:n}|l_{i-1})\equiv f(\xBold_{i:n}|\widehat{L}_{i:n},l_{i-1})\propto\max_{L_{i:n}}f(\xBold_{i:n},L_{i:n}|l_{i-1})$,
can be maximized recursively:

\begin{eqnarray}
f_{p}(l_{i}|\xBold_{i}) & \propto & \max_{l_{i-1}}f(x_{i},l_{i}|l_{i-1})\fprofile(l_{i-1}|\xBold_{i-1})\label{eq:TwoVA=00003DFW}\\
f_{p}(\xBold_{i:n}|l_{i-1}) & \propto & \max_{l_{i}}f(x_{i},l_{i}|l_{i-1})f_{p}(\xBold_{i+1:n}|l_{i})\label{eq:TwoVA=00003DBW}
\end{eqnarray}
where $\iton$. By replacing the marginalizations (\ref{eq:FB=00003DHMC})
in the FB algorithm with maximizations, we can evaluate profile distributions
in (\ref{eq:TwoVA=00003DHMC}) in $\log$ domain, which reduces the
computational load from $O(2nM^{2})$ multiplications for FB, down
to $O(2nM^{2})$ additions, with the addition being much faster than
multiplication in practice. This variant scheme is well known as bi-directional
VA in the literature {[}\citet{ch6:art:VA:logMax:Viterbi98}{]}. The
bi-directional VA is also called soft-output variant of VA {[}\citet{ch2:bk:ToddMoon}{]},
because it produces both hard and soft information, i.e. both $\widehat{l_{i}}$
and $\fprofile(l_{i}|\xBold_{n})$, respectively.

\subsubsection{The Viterbi Algorithm (VA) \label{subsec:chap6:The-Viterbi-Algorithm}}

In the second approach, which is the traditional VA, we will, once
again, exploit Markovianity and further reduce computational load
by establishing time-variant relations $h_{i}$: $\widehat{l_{i-1}}=h_{i}(\widehat{l_{i}})$,
$i\in\{2,\ldots,n\}$, where $\widehat{l_{i}}\in\widehat{L_{n}}$,
the joint MAP of trajectory, as before. Then, in the HMC, these $\widehat{l_{i}}$
can be computed recursively, without the need to evaluate the profile
distributions $\fprofile(l_{i}|\xBold_{n})$, as in (\ref{eq:TwoVA=00003DCE}). 

For motivation, let us evaluate the pair $\{\widehat{l_{i-1}},\widehat{l_{i}}\}$
via second-order of profile smoothing distributions $\fprofile(l_{i-1},l_{i}|\xBold_{n})\TRIANGLEQ f(l_{i-1},l_{i}|\widehat{\LBold}_{\backslash\{i-1,i\}},\xBold_{n})$,
expanded from the first-order profiles (\ref{eq:TwoVA=00003DHMC}),
as follows:

\begin{equation}
\fprofile(l_{i-1},l_{i}|\xBold_{n})\propto f_{p}(\xBold_{i+1:n}|l_{i})f(x_{i},l_{i}|l_{i-1})\fprofile(l_{i-1}|\xBold_{i-1})\label{eq:2ndOrder=00003Dprofile}
\end{equation}
with $i\in\{2,\ldots,n\}$. Hence, the second way to find $\{\widehat{l_{i-1}},\widehat{l_{i}}\}$
is a stage wise maximization, in which we need to find one of them
first, $\widehat{l_{i}}=\arg\max_{l_{i}}(\max_{l_{i-1}}\fprofile(l_{i-1},l_{i}|\xBold_{n}))$,
and then substitute $l_{i}=\widehat{l_{i}}$ into (\ref{eq:2ndOrder=00003Dprofile}),
from which $\widehat{l_{i-1}}=h_{i}(l_{i}=\widehat{l_{i}})$, $i\in\{2,\ldots,n\}$,
may be computed as follows:

\begin{align}
\widehat{l_{i-1}} & \TRIANGLEQ\arg\max_{l_{i-1}}\fprofile(l_{i-1}|l_{i}=\widehat{l_{i}},\xBold_{n})\nonumber \\
 & =\arg\max_{l_{i-1}}(f(x_{i},l_{i}=\widehat{l_{i}}|l_{i-1})\fprofile(l_{i-1}|\xBold_{i-1}))\label{eq:condCE}\\
 & \TRIANGLEQ h_{i}(\widehat{l_{i}})\nonumber 
\end{align}
where (\ref{eq:condCE}) follows from (\ref{eq:2ndOrder=00003Dprofile})
and the factor $f_{p}(\xBold_{i+1:n}|l_{i}=\widehat{l_{i}})$ was
excluded in $\arg\max$ operator in (\ref{eq:condCE}). In this way,
$h_{i}(l_{i})$ can be recognized as a conditional certainty equivalence
(CE): $h_{i}(l_{i})=\widehat{l_{i-1}}(l_{i})\TRIANGLEQ\arg\max_{l_{i-1}}\fprofile(l_{i-1}|l_{i},\xBold_{n})$.
Moreover, comparing (\ref{eq:condCE}) with (\ref{eq:TwoVA=00003DFW}),
we can see that $h_{i}(l_{i})$ are consequences of the forward step
and, hence, its values can be stored in the memory in an online manner.
Given $\widehat{l_{n}}$ at the end of the forward step, i.e. via
(\ref{eq:TwoVA=00003DCE}), we can directly trace back all other labels
$\widehat{l_{n-1}},\ldots,\widehat{l_{1}}$ via the stored values
$\widehat{l_{i-1}}=h_{i}(\widehat{l_{i}})$, $i\in\{2,\ldots,n\}$,
without the need to evaluate the backward step (\ref{eq:TwoVA=00003DBW})
and profile smoothing distributions (\ref{eq:TwoVA=00003DCE}-\ref{eq:TwoVA=00003DHMC}).
Hence, VA halves the computational load of bi-directional VA by requiring
computation of the forward step (\ref{eq:TwoVA=00003DFW}) only. 

Now, we can formalize VA via two steps, as detailed next.

\subsubsection*{Viterbi Forward step}

By substituting (\ref{eq:HMC}) into (\ref{eq:augment}), we can express
$f(x_{i},l_{i}|l_{i-1})$ in (\ref{eq:TwoVA=00003DFW}) in exponential
form, as follows:

\begin{align}
f(x_{1},l_{1}|p) & =\exp(-l_{1}'\lAMBDA_{1})\label{eq:augment=00003DEF}\\
f(x_{i},l_{i}|l_{i-1},\TBold) & =\mbox{\ensuremath{\exp}}(-l_{i}'\LAMBDA_{i}l_{i-1}),\ i=2,\dots,n\nonumber 
\end{align}
in which the sufficient statistics $\lAMBDA_{i}$, collected into
a sequence of $n$ length-$M$ vectors, and the information measures
$\LAMBDA_{i}\equiv\LAMBDA_{i}(x_{i})$, being a sequence of $\ndata-1$
$M\times M$ matrices, can be defined as follows:

\begin{eqnarray}
\lAMBDA_{1} & = & -(\log(\Psi_{i})+\log p)\label{eq:lambda1}\\
\LAMBDA_{i} & = & -(\log(\Psi_{i}\mathbf{1}_{M\times1}')+\log\TBold),\ i=2,\dots,n\label{eq:metric length}
\end{eqnarray}
where $\Psi_{i}$ are the soft-classification vectors (\ref{eq:ch6:DEF:psi}).

Finally, substituting (\ref{eq:augment=00003DEF}) back into (\ref{eq:TwoVA=00003DFW}),
the shaping parameters $\overline{\alpha}_{i}$ of the profile filtering
distributions $f_{p}(l_{i}|\xBold_{i})=Mu_{l_{i}}(\overline{\alpha}_{i})$
in (\ref{eq:TwoVA=00003DFW}), $\iton$, can be evaluated in log-domain
as follows:

\begin{equation}
\overline{\alpha}_{i}\propto\exp(-\lAMBDA_{i})\label{eq:VA=00003Dprofile=00003Dalpha}
\end{equation}

\begin{eqnarray}
\lAMBDA_{j,i} & = & \min_{k}(\LAMBDA_{i}(j,k)+\lAMBDA_{k,i-1})\label{eq:VA=00003Dlambda=00003DFW}\\
\kappa_{j,i} & = & \arg\min_{k}(\LAMBDA_{i}(j,k)+\lAMBDA_{k,i-1})\label{eq:VA=00003DCE=00003DBW}
\end{eqnarray}
where $\lAMBDA_{j,i}$ and $\kappa_{j,i}$, $j\in\{1,\ldots,M\}$,
denote $j$th element of vectors $\lAMBDA_{i}$ and $\kappa_{i}$,
respectively, and $\LAMBDA_{i}(j,k)$ denotes element at $j$th row
and $k$th column of matrix $\LAMBDA_{i}$, $1\leq j,k\leq M$. Note
that the conditional CEs $\widehat{l_{i-1}}(l_{i})$ in (\ref{eq:condCE})
can be found feasibly via (\ref{eq:VA=00003DCE=00003DBW}): $h_{i}(l_{i})=\widehat{l_{i-1}}(l_{i})=\epsilon(l_{i}'\kappa_{i})$,
$i\in\{2,\ldots,n\}$, where $\epsilon(j)$ in general denotes the
$j$th elementary vector in $\mathbb{R}^{\Mstate}$.

In the literature, $\lAMBDA_{i}$ and $\LAMBDA_{i}$ are not given
this novel Bayesian perspective. Indeed, the term ``metric length''
is often assigned to the elements $\LAMBDA_{i}(j,k)$ {[}\citet{ch2:origin:VA:Forney73}{]},
owing to their positive value and representation as an edge in trellis
diagram (Fig. \ref{fig:DAG}). 

Moreover, the forward recursions of profile filterings $f_{p}(l_{i}|\xBold_{i})$
in (\ref{eq:TwoVA=00003DFW}) are often considered as maintaining
$M$ ``survival'' maximal trajectories, reduced from the original
$M^{i}$ trajectories in $f(L_{i}|\xBold_{i})$. The reason for this
language is the fact that one of the $M$ length-$\itime$ trajectories
in $f_{p}(l_{i}|\xBold_{i})$ is the prefix of the global MAP trajectory
$\Lhat_{\ndata}$. Hence, each element $\lAMBDA_{j,i}$ in (\ref{eq:VA=00003Dlambda=00003DFW})
is often called the ``weighted length'' of the $j$th survival trajectory
at time $\itime$, $j\in\{1,\ldots,M\}$ {[}\citet{ch2:origin:VA:Forney73}{]}.

\subsubsection*{Viterbi back-tracking step}

From (\ref{eq:condCE}), the joint MAP trajectory, $\widehat{\LBold}_{n}$,
can be evaluated via a fast backward recursion. The last label estimate
is found first, i.e. we have $\widehat{l_{n}}=\underset{l_{n}}{\arg\max}\fprofile(l_{n}|\xBold_{n})=\boldsymbol{\epsilon}(\widehat{k}_{n})$,
where:

\begin{equation}
\khat_{n}=\underset{j}{\arg\min}(\lAMBDA_{j,n})\label{eq:VA=00003Dk_n}
\end{equation}
Then, the previous labels $\widehat{l_{i-1}}=\widehat{l_{i-1}}(l_{i}=\widehat{l_{i}})=\boldsymbol{\epsilon}(\khat_{i-1})$,
$i=2,\dots,n,$ leading to $\widehat{l_{n}}$, can be recursively
traced back using the $\kappa_{i}$ vectors of Viterbi forward step
(\ref{eq:VA=00003DCE=00003DBW}), as follows:

\begin{equation}
\khat_{i-1}=\kappa_{j=\khat_{i},i}\label{eq:VA=00003Dk_i=00003DBW}
\end{equation}

\begin{rem}
\label{Remark=00003DVA=00003DHMC} Notice that, Markovianity is the
vital condition, exploited by the VA in computational reduction. Without
it, the $\max$ operators cannot be distributed recursively in (\ref{eq:TwoVA=00003DHMC}).
Computation, both the recursive addition in (\ref{eq:FWBW=00003DHMC})
and maximization in (\ref{eq:TwoVA=00003DHMC}) are special cases
of the generalized distributive law (GDL) (see Sections \ref{sec:chap5:GDL-for-objective},
\ref{subsec:chap5:Joint-mode}). 
\end{rem}

\subsubsection*{Viterbi algorithm}

The VA, firstly presented in {[}\citet{ch6:origin:VA:Viterbi67}{]}
for channel decoding context, was formalized via trellis diagram for
the HMC in {[}\citet{ch2:origin:VA:Forney73}{]} (Fig. \ref{fig:DAG}).
Note that the MAP of trajectory may change entirely based on the last
observation $x_{n}$, owing to (\ref{eq:VA=00003Dk_n}). The VA (Algorithm
\ref{alg:chap6:Viterbi-Algorithm}) is, therefore, an offline (batch-based)
algorithm.

\begin{algorithm}
\textbf{Storage:} Length-$\nstate$ vectors $\kappa_{i}$, $\iinn$,
as in (\ref{eq:VA=00003DCE=00003DBW})

\textbf{Initialization}: evaluate (\ref{eq:lambda1})

\textbf{Recursion:} For $\iinn$: evaluate (\ref{eq:lambda1}-\ref{eq:metric length})
and (\ref{eq:VA=00003Dlambda=00003DFW}-\ref{eq:VA=00003DCE=00003DBW})

\textbf{Termination: }evaluate (\ref{eq:VA=00003Dk_n}- \ref{eq:VA=00003Dk_i=00003DBW})

\textbf{Return} $\widehat{l}_{i}=\boldsymbol{\epsilon}(\widehat{k}_{i})$,
$i=1,\dots,n$. 

\caption{\label{alg:chap6:Viterbi-Algorithm}Viterbi Algorithm (with similar
convention to {[}\citet{ch2:origin:VA:Forney73}{]})}
\end{algorithm}

\section{Re-interpretation of FB and VA via CI factorization \label{sec:chap6:FB and VA via CI}}

In the Bayesian viewpoint, the Markov property (\ref{eq:MarkovProperty})
not only reduces the computational load for inference on the joint
model (\ref{eq:FB=00003Dgeneral}), but fundamentally, also yields
a CI factorization (\ref{eq:Joint}) for posterior distribution of
label field. In this section, we will re-interpret the outputs of
FB and VA, and show that they are merely a consequence of CI structure,
i.e. FB returns the shaping parameters of HMC re-factorization, while
VA returns the shaping parameters of a degenerated HMC.

\subsection{Inhomogeneous HMC}

In the literature, it has been shown that the posterior distribution
of hidden label field, given id observations (\ref{eq:ch6:Observation})
and prior homogeneous HMC (\ref{eq:ch6:PRIOR}), is an inhomogeneous
HMC {[}\citet{ch6:bk:HMM:Cappe05}{]}. In this sense, HMC prior (\ref{eq:ch6:PRIOR})
is a conjugate prior of id observation model (\ref{eq:ch6:Observation}),
in the sense defined in Section \ref{subsec:chap4:Conjugate-prior}.
We will clarify this result via the following proposition:
\begin{prop}
The posterior distribution of the homogeneous HMC (\ref{eq:ch6:PRIOR}),
given conditionally id observation (\ref{eq:ch6:Observation}), is
an inhomogeneous HMC:

\begin{eqnarray}
f(\LBold_{n}|\xBold_{n}) & = & f(L_{i+1:n}|l_{i},\xBold_{n})f(l_{i}|\xBold_{n})f(L_{i-1}|L_{i:n},\xBold_{n})\label{eq:TERNARY}
\end{eqnarray}
in which $\iinn$ and:

\begin{eqnarray}
f(L_{i-1}|L_{i:n},\xBold_{n})=f(L_{i-1}|l_{i},\xBold_{i-1}) & = & \prod_{j=1}^{i-1}f(l_{j}|l_{j+1},\xBold_{j})\label{eq:Ti=00003DFWBW}\\
f(L_{i+1:n}|L_{i},\xBold_{n})=f(L_{i+1:n}|l_{i},\xBold_{i+1:n}) & = & \prod_{j=i+1}^{n}f(l_{j}|l_{j-1},\xBold_{j:n})\nonumber 
\end{eqnarray}
\end{prop}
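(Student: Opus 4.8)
The plan is to read (\ref{eq:TERNARY}) as a pure chain-rule split of the three label blocks $L_{i-1}$, $l_i$, $L_{i+1:n}$, and then to obtain (\ref{eq:Ti=00003DFWBW}) by collapsing the conditioning sets with the Markov property. First I would peel the blocks off in the order $L_{i-1}$, then $L_{i+1:n}$, then $l_i$,
\[
f(\LBold_n|\xBold_n) = f(L_{i-1}|L_{i:n},\xBold_n)\, f(L_{i+1:n}|l_i,\xBold_n)\, f(l_i|\xBold_n),
\]
which is (\ref{eq:TERNARY}) verbatim up to commutativity of the product, with the middle factor recognisable as the FB smoothing marginal (\ref{eq:gamma}). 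No independence assumption is needed for this step; it is just the probability chain rule.

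The engine for the reductions in (\ref{eq:Ti=00003DFWBW}) is the Markov property (\ref{eq:MarkovProperty}), i.e. the fact that the joint (\ref{eq:Joint}) is a directed chain $l_1\to\cdots\to l_n$ with each $l_j$ emitting $x_j$. Reading off the conditional independences of this DAG, conditioning on the single label $l_i$ severs $\{L_{i-1},\xBold_{i-1}\}$ from $\{L_{i+1:n},\xBold_{i:n}\}$; in particular $f(L_{i-1}|L_{i:n},\xBold_n)=f(L_{i-1}|l_i,\xBold_{i-1})$ and $f(L_{i+1:n}|L_i,\xBold_n)=f(L_{i+1:n}|l_i,\xBold_{i+1:n})$, the two reductions asserted in (\ref{eq:Ti=00003DFWBW}). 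To turn each block into a product of single-label transitions I would apply the chain rule inside the block and invoke the same d-separation termwise: expanding the backward block as $f(L_{i-1}|l_i,\xBold_{i-1})=\prod_{j=1}^{i-1}f(l_j|l_{j+1:i},\xBold_{i-1})$, conditioning on the adjacent label $l_{j+1}$ blocks every path from $l_j$ to the farther labels $l_{j+2:i}$ and to the later observations $x_{j+1:i-1}$, leaving $f(l_j|l_{j+1},\xBold_j)$; the forward block is symmetric and yields $f(l_j|l_{j-1},\xBold_{j:n})$.

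Alternatively, and consistently with the section heading, the whole statement specialises the CI-factorisation machinery of Chapter \ref{=00005BChapter 5=00005D}. Viewing the posterior as $f(\LBold_n|\xBold_n)\propto\odot_{j=1}^n g_j$ with $g_j\TRIANGLEQ f(x_j,l_j|l_{j-1})$ puts it in the generic ring-product form (\ref{eq:ch5:MODEL}) over the label indices, whose occupancy matrix is the band-diagonal one of Example \ref{Occupancy Matrix-HMC}. There the no-longer-needed and first-appearance partitions are singletons and the common set at split $i$ is a single label, $\bcommon i=\{i\}$; Proposition \ref{prop:Ternary=00003DReverseChain} then delivers the ternary factorisation (\ref{eq:Ternary=00003DFactorization}) and Proposition \ref{prop:Reverse-ChainRuleOrder} the forward/backward chain-rule products, which map term-by-term onto (\ref{eq:TERNARY})--(\ref{eq:Ti=00003DFWBW}) once the no-longer-needed block is identified with $L_{i-1}$, the first-appearance block with $L_{i+1:n}$, and the common index with $l_i$.

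I expect the main obstacle to be the bookkeeping of the observation conditioning sets rather than the label ones. The label structure is immediate from the chain DAG, but I must check, for every $j$, that each transition retains exactly the one-sided observation set --- $\xBold_j$ for the backward transitions and $\xBold_{j:n}$ for the forward ones --- i.e. that the neighbouring label in the conditioning set d-separates $l_j$ from all observations on the far side while the emission $x_j$ stays attached to its own block. In the Chapter \ref{=00005BChapter 5=00005D} route this is the same as tracking which emission factor $g_j$ is absorbed into which no-longer-needed/first-appearance conditional, and care is needed because the observations enter as fixed data rather than as variables of the ring-product. Once this observation-index accounting is settled, the inhomogeneity --- the transitions $f(l_j|l_{j+1},\xBold_j)$ depend on $j$ through the data, unlike the homogeneous prior (\ref{eq:ch6:PRIOR}) --- is read off directly, identifying the posterior as an inhomogeneous HMC.
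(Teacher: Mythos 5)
Your proposal is correct and follows essentially the same route as the paper's proof: the ternary split is the plain chain rule, and the two block reductions in (\ref{eq:Ti=00003DFWBW}) are exactly the conditional independences the paper establishes --- there by writing each conditional as a ratio of the joint (\ref{eq:FB=00003Dgeneral}) to its marginal and cancelling via the Markov property (\ref{eq:MarkovProperty}), where you instead read them off by d-separation, with your termwise expansion playing the role of the paper's closing induction over $i$. Your alternative route through the Chapter \ref{=00005BChapter 5=00005D} machinery (Propositions \ref{prop:Reverse-ChainRuleOrder} and \ref{prop:Ternary=00003DReverseChain}) is likewise sound, and is precisely the connection the paper itself remarks upon immediately after the proposition.
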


\begin{proof}
Note that (\ref{eq:TERNARY}) simply follows the probability chain
rule for any general distribution $f(\LBold_{n}|\xBold_{n})$. For
the proof of (\ref{eq:Ti=00003DFWBW}), substituting Markov property
(\ref{eq:MarkovProperty}) into the joint inference (\ref{eq:FB=00003Dgeneral}),
we have:

\begin{eqnarray*}
f(L_{i-1}|L_{i:n},\xBold_{n}) & = & \frac{f(\xBold_{n},L_{n})}{\sum_{L_{i-1}}f(\xBold_{n},L_{n})}=\frac{f(\xBold_{i-1},L_{i})}{\sum_{L_{i-1}}f(\xBold_{i-1},L_{i})}\\
 & = & f(L_{i-1}|l_{i},\xBold_{i-1}),\ \forall\iinn
\end{eqnarray*}

\begin{eqnarray*}
f(L_{i:n}|L_{i-1},\xBold_{n}) & = & \frac{f(\xBold_{n},L_{n})}{\sum_{L_{i:n}}f(\xBold_{n},L_{n})}=\frac{f(\xBold_{i:n},L_{i:n}|l_{i-1})}{\sum_{L_{i:n}}f(\xBold_{i:n},L_{i:n}|l_{i-1})}\\
 & = & f(L_{i:n}|l_{i-1},\xBold_{i:n}),\ \forall\iinn
\end{eqnarray*}

Because the above equations are valid for all $\iinn$, we can derive
the right hand side of (\ref{eq:Ti=00003DFWBW}) by induction.
\end{proof}
\begin{figure}
\begin{centering}
\includegraphics[width=0.8\columnwidth]{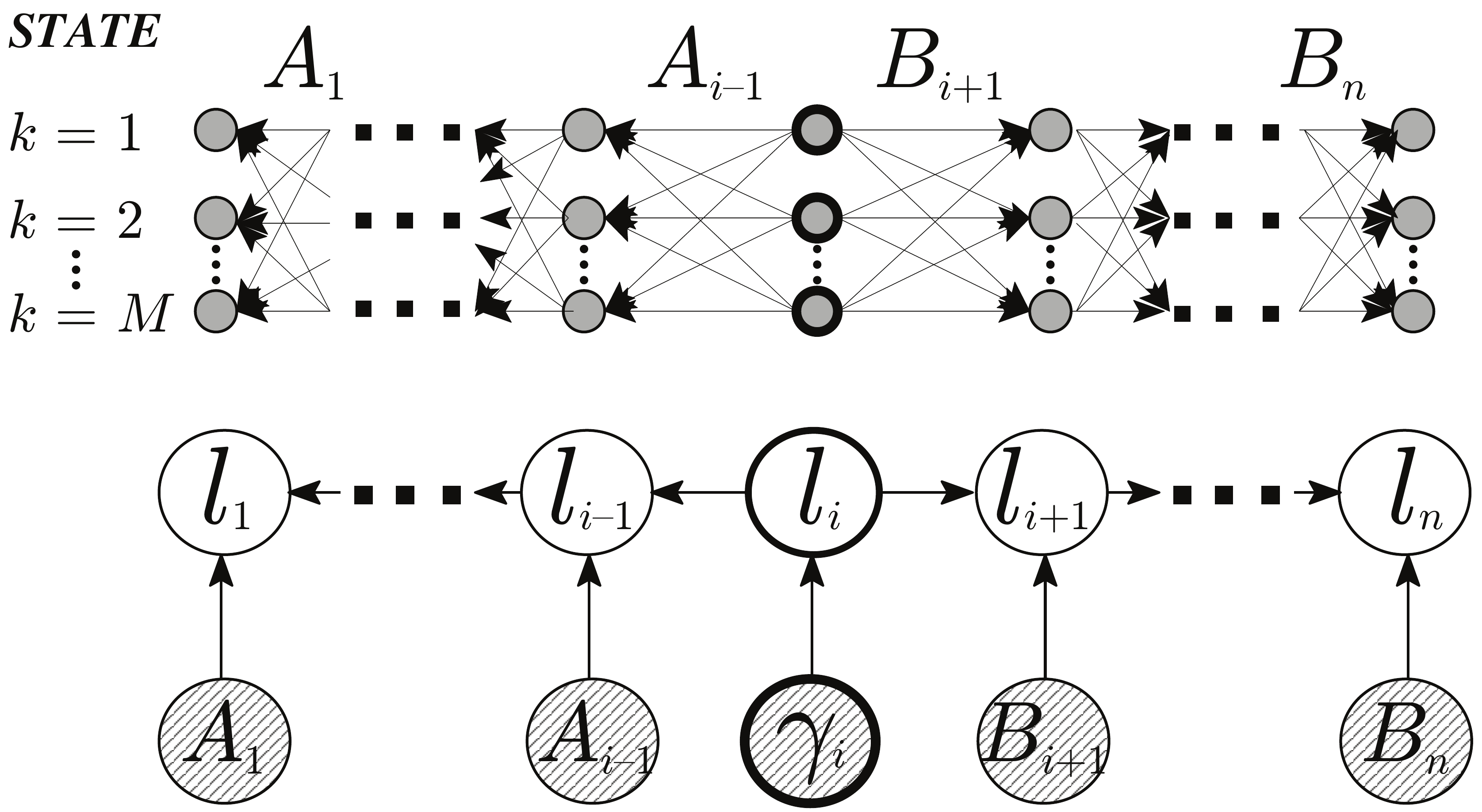}
\par\end{centering}
\caption{Posterior distribution of HMC in Fig. \ref{fig:DAG}, via FB algorithm:
trellis diagram (top) and DAG (bottom). Symbols are explained in the
text. \label{fig:FB=00003Db}}
\end{figure}
Note that, the FB algorithm (Algorithm \ref{alg:chap6:FB-algorithm})
is actually implied by the CI factorization in (\ref{eq:TERNARY}),
as shown in Propositions \ref{prop:Reverse-ChainRuleOrder},\ref{prop:Ternary=00003DReverseChain}.
Then, the backward transitions probabilities: 
\[
f(l_{\itime-1}|l_{\itime},\xBold_{\itime})=\frac{f(x_{\itime},l_{\itime},l_{\itime-1}|\xBold_{\itime-1})}{\sum_{l_{\itime}}f(x_{\itime},l_{\itime},l_{\itime-1}|\xBold_{\itime-1})}
\]
 and forward transitions probabilities: 
\[
f(l_{\itime}|l_{\itime-1},\xBold_{\itime:n})=\frac{f(\xBold_{\itime:n},l_{\itime}|l_{\itime-1})}{\sum_{l_{\itime}}f(\xBold_{\itime:n},l_{\itime}|l_{\itime-1})}
\]
in (\ref{eq:Ti=00003DFWBW}) can be computed via second-order filtering
marginals: 
\[
f(x_{\itime},l_{\itime},l_{\itime-1}|\xBold_{\itime-1})=f(x_{i},l_{i}|l_{i-1})f(l_{i-1}|\xBold_{i-1})
\]
and backward observation model: 
\[
f(\xBold_{\itime:n},l_{\itime:n}|l_{\itime-1})=f(x_{i},l_{i}|l_{i-1})f(\xBold_{i+1:n}|l_{i})
\]
extracted from forward and backward recursions in (\ref{eq:FWBW=00003DHMC}),
respectively. The following corollary will clarify this fact.
\begin{cor}
\label{cor:FB=00003DA,B}The posterior distributions in (\ref{eq:Ti=00003DFWBW})
can be evaluated via FB algorithm (Algorithm \ref{alg:chap6:FB-algorithm}),
as follows:

\begin{eqnarray}
f(l_{i}|\xBold_{n}) & = & Mu_{l_{i}}(\gamma_{\itime}),\ i\in\{1,\ldots,n\}\label{eq:proof=00003Dgamma}\\
f(l_{i}|l_{i+1},\xBold_{i}) & = & Mu_{l_{i}}(A_{i}'l_{i+1}),\ i\in\{2,\ldots,n\}\nonumber \\
f(l_{i+1}|l_{i},\xBold_{i+1:n}) & = & Mu_{l_{i+1}}(B_{i}l_{i}),\ i\in\{1,\ldots,n-1\}\nonumber 
\end{eqnarray}
where $A_{i}(k,:)\propto\TBold(k,:)\circ\alpha_{i}'$ and $B_{i}(:,k)\propto\beta_{i+1}\circ\LAMBDA_{i}(:,k)$
are right- and left-stochastic matrices of sufficient statistics,
with $(k,:)$ and $(:,k)$ denoting $k$th row and $k$th column of
a matrix, respectively, as illustrated in Fig \ref{fig:FB=00003Db}. 
\end{cor}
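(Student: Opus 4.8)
The plan is to treat the three identities in (\ref{eq:proof=00003Dgamma}) separately, in each case exploiting the Markov property (\ref{eq:MarkovProperty}) together with the sufficient statistics $\alpha_i$, $\beta_i$, $\gamma_i$ already delivered by the FB recursions (\ref{eq:alpha})--(\ref{eq:beta}). The first identity, $f(l_i|\xBold_n)=Mu_{l_i}(\gamma_i)$, is nothing more than a restatement of the FB output (\ref{eq:gamma}), so no fresh computation is needed there; it is listed only for completeness, to exhibit all shaping parameters of the inhomogeneous HMC (\ref{eq:Ti=00003DFWBW}) side by side.

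For the backward transition $f(l_i|l_{i+1},\xBold_i)$, I would first apply Bayes' rule conditioned on $\xBold_i$, namely $f(l_i|l_{i+1},\xBold_i)\propto f(l_{i+1}|l_i,\xBold_i)f(l_i|\xBold_i)$, and then invoke Markovianity to discard the observation history, $f(l_{i+1}|l_i,\xBold_i)=f(l_{i+1}|l_i)=Mu_{l_{i+1}}(\TBold l_i)$. Substituting the filtering marginal $f(l_i|\xBold_i)=Mu_{l_i}(\alpha_i)$ from (\ref{eq:alpha}), and regarding the result as a function of $l_i$ with $l_{i+1}=\boldsymbol{\epsilon}(k)$ fixed, produces a kernel proportional to $\TBold(k,:)\circ\alpha_i'$. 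Normalising over $l_i$ identifies this as the $k$th row of a right-stochastic matrix $A_i$, which gives $f(l_i|l_{i+1},\xBold_i)=Mu_{l_i}(A_i'l_{i+1})$, the second identity.

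The forward transition $f(l_{i+1}|l_i,\xBold_{i+1:n})$ is handled symmetrically, but is the step I expect to demand the most care. Here I would expand $f(l_{i+1},\xBold_{i+1:n}|l_i)$, via the augmentation (\ref{eq:augment}), as the product of the augmented transition--observation factor $f(x_{i+1},l_{i+1}|l_i)$ and the backward statistic $f(\xBold_{i+2:n}|l_{i+1})=l_{i+1}'\beta_{i+1}$, then read off the kernel in $l_{i+1}$ for fixed $l_i=\boldsymbol{\epsilon}(k)$. The delicate points are the bookkeeping of which soft-classification vector $\Psi$ is absorbed into $\beta_{i+1}$ versus into the augmented factor, and reconciling the probability-domain object $\LAMBDA_i(:,k)$ occurring in $B_i$ with its log-domain definition in the Viterbi section (\ref{eq:metric length}); I would state explicitly that $\LAMBDA_i(:,k)$ is read here as the $k$th column of the augmented transition--observation matrix implied by the exponential form (\ref{eq:augment}), i.e. its entrywise exponential. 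Collecting the terms yields $B_i(:,k)\propto\beta_{i+1}\circ\LAMBDA_i(:,k)$, and normalising each column exhibits $B_i$ as left-stochastic with $f(l_{i+1}|l_i,\xBold_{i+1:n})=Mu_{l_{i+1}}(B_il_i)$.

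A short closing remark would confirm that $A_i$ and $B_i$ are genuinely stochastic---rows, respectively columns, summing to one---which follows directly from the sum-to-one normalisation carried by the $\propto$ operator. The main obstacle, as noted, is purely the index- and factor-tracking: ensuring that the observation weight at each site is attributed consistently across the forward statistic $\alpha_i$, the backward statistic $\beta_i$, and the augmented factor, and that the resulting matrices carry the correct (right- versus left-) stochastic orientation matching Fig. \ref{fig:FB=00003Db}. Once those conventions are pinned down, each identity reduces to a one-line Hadamard-product computation.
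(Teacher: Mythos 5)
Your proposal is correct and follows essentially the same route as the paper's proof, which likewise applies the chain rule to write $f(l_{i}|l_{i+1},\xBold_{i})\propto f(l_{i+1}|l_{i})f(l_{i}|\xBold_{i})$ and $f(l_{i+1}|l_{i},\xBold_{i+1:n})\propto f(\xBold_{i+1:n}|l_{i+1})f(l_{i+1}|l_{i})$, then simply observes that every factor on the right-hand sides is already delivered by the FB recursions (\ref{eq:alpha})--(\ref{eq:gamma}). Your added care over the bookkeeping---in particular reading $\LAMBDA_{i}(:,k)$ in $B_{i}$ as the entrywise exponential of the augmented transition--observation factor, i.e.\ a column of $\Psi\mathbf{1}'\circ\TBold$, rather than the log-domain metric of (\ref{eq:metric length})---is exactly the reconciliation the paper's two-line proof leaves implicit.
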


\begin{proof}
By the chain rule, we have:

\begin{eqnarray}
f(l_{i}|l_{i+1},\xBold_{i}) & = & \frac{f(l_{i+1}|l_{i})f(l_{i}|\xBold_{i})}{\sum_{l_{i}}f(l_{i+1}|l_{i})f(l_{i}|\xBold_{i})}\label{eq:proof=00003DA}
\end{eqnarray}

\begin{eqnarray}
f(l_{i+1}|l_{i},\xBold_{i+1:n}) & = & \frac{f(\xBold_{i+1:n}|l_{i+1})f(l_{i+1}|l_{i})}{\sum_{l_{i+1}}f(\xBold_{i+1:n}|l_{i+1})f(l_{i+1}|l_{i})}\label{eq:proof=00003DB}
\end{eqnarray}

Notice that all of the terms in right hand side (\ref{eq:proof=00003Dgamma}-\ref{eq:proof=00003DB})
have been derived in (\ref{eq:alpha}-\ref{eq:gamma}).
\end{proof}

\subsection{Profile-approximated HMC}

In order to find the joint MAP of trajectory, $\Lhat_{\ndata}$, the
bi-directional VA computed a sequence of profile distributions $\fprofile(l_{i}|\xBold_{n})$
via bi-directional maximization (\ref{eq:TwoVA=00003DHMC}-\ref{eq:TwoVA=00003DBW})
on joint model (\ref{eq:Joint}). Since joint model (\ref{eq:Joint})
can be re-factorized into CI structure (\ref{eq:TERNARY}), then,
applying the same bi-directional maximization to (\ref{eq:TERNARY}),
we can define $n$ approximated HMCs, corresponding to each choice
of $\iinn$ in (\ref{eq:TERNARY}), as follows:

\begin{equation}
\fbar_{i}(\LBold_{n}|\xBold_{n})=\left(\prod_{j=i+1}^{n}\fprofile(l_{j}|l_{j-1},\xBold_{j:n})\right)\fprofile(l_{i}|\xBold_{n})\left(\prod_{j=1}^{i-1}\fprofile(l_{j}|l_{j+1},\xBold_{j})\right)\label{eq:chap6:bi-VA--HMC}
\end{equation}
where $\iinn$, and:

\begin{eqnarray}
f(l_{i}|\xBold_{n}) & \propto & \max_{L_{\backslash\itime}}f(L_{\itime}|\xBold_{\itime})\label{eq:biVA=00003DFWBW}\\
\fprofile(l_{j}|l_{j+1},\xBold_{j}) & \propto & \max_{L_{j-1}}f(L_{j}|l_{j+1},\xBold_{j})\nonumber \\
\fprofile(l_{j}|l_{j-1},\xBold_{j:n}) & \propto & \max_{L_{j+1:n}}f(L_{j:n}|l_{j-1},\xBold_{j:n})\nonumber 
\end{eqnarray}

In common with the FB algorithm, the backward and forward transition
probabilities in (\ref{eq:biVA=00003DFWBW}) are, respectively: 
\begin{eqnarray*}
\fprofile(l_{j}|l_{j+1},\xBold_{j}) & = & \frac{\fprofile(\xBold_{j},l_{j},l_{j+1})}{\sum_{l_{j}}\fprofile(\xBold_{j},l_{j},l_{j+1})}\\
\fprofile(l_{j}|l_{j-1},\xBold_{j:n}) & = & \frac{f(\xBold_{j:n},l_{j},l_{j-1})}{\sum_{l_{j}}f(\xBold_{j:n},l_{j},l_{j-1})}
\end{eqnarray*}
which can be computed via second-order profile filterings and profile
backward observation: 
\begin{eqnarray*}
\fprofile(\xBold_{i},l_{i},l_{i+1}) & = & f(x_{i},l_{i}|l_{i-1})\fprofile(l_{i-1}|\xBold_{i-1})\\
\fprofile(\xBold_{i:n},l_{i},l_{i-1}) & = & f(x_{i},l_{i}|l_{i-1})\fprofile(\xBold_{i+1:n}|l_{i})
\end{eqnarray*}
These, in turn, can be extracted from forward and backward recursions
in (\ref{eq:TwoVA=00003DFW}-\ref{eq:TwoVA=00003DBW}), respectively.

Note that, because of normalization constants involved in (\ref{eq:biVA=00003DFWBW}),
applying the bi-directional VA algorithm (\ref{eq:TwoVA=00003DHMC})
to (\ref{eq:chap6:bi-VA--HMC}) will not recover the three $\max$
terms on the right hand side of (\ref{eq:biVA=00003DFWBW}). Hence,
the joint MAP of $\fbar_{i}(\LBold_{n}|\xBold_{n})$ is different
from the original joint MAP of $f(\LBold_{n}|\xBold_{n})$ in general.
However, the sequence of modes of the $n$ marginals $\fprofile(l_{i}|\xBold_{n})$
in the $n$ approximations $\fbar_{i}(\LBold_{n}|\xBold_{n})$ is
actually the same as the original joint MAP $\Lhat_{\ndata}$ of $f(\LBold_{n}|\xBold_{n})$.

If we neglect the normalization in (\ref{eq:biVA=00003DFWBW}), we
can find the joint MAP more quickly via VA, as explained below.

\subsection{CE-based approximated HMC \label{subsec:cha6:VA-approximated-HMC}}

VA avoids the normalizations (\ref{eq:biVA=00003DFWBW}) in bi-directional
VA by keeping only their CE values in memory. This scheme yields another
$n$ CE-based approximated HMCs, as follows:

\begin{equation}
\fhat_{i}(\LBold_{n}|\xBold_{n})=\left(\prod_{j=i+1}^{n}\fdelta(l_{j}|l_{j-1},\xBold_{j:n})\right)\fprofile(l_{i}|\xBold_{n})\left(\prod_{j=1}^{i-1}\fdelta(l_{j}|l_{j+1},\xBold_{j})\right)\label{eq:chap6:VA-HMC}
\end{equation}
for $\iinn$ and:

\begin{eqnarray}
\fprofile(l_{i}|\xBold_{n}) & = & Mu_{l_{i}}(\widehat{\gamma_{\itime}}),\ i\in\{1,\ldots,n\}\label{eq:VA=00003DFWBW}\\
\fdelta(l_{j}|l_{j+1},\xBold_{j}) & = & \delta[l_{j}-\lhat_{j}(l_{j+1})]=Mu_{l_{j}}(\widehat{A}_{j}'l_{j})\\
\fdelta(l_{j}|l_{j-1},\xBold_{j:n}) & = & \delta[l_{j}-\lhat_{j}(l_{j-1})]=Mu_{l_{j}}(\widehat{B}_{j}'l_{j})\nonumber 
\end{eqnarray}
in which $\widehat{A}_{j}=[\widehat{l_{j}}(l_{j+1}=\boldsymbol{\epsilon}(1))',\ldots,\widehat{l_{j}}(l_{j+1}=\boldsymbol{\epsilon}(M))']'$
and $\widehat{B}_{j}=[\widehat{l_{j}}(l_{j-1}=\boldsymbol{\epsilon}(1)),\ldots,$
$\widehat{l_{j}}(l_{j-1}=\boldsymbol{\epsilon}(M))]$ are sparse left-stochastic
matrices, with only one element at each column, as illustrated in
Fig. \ref{fig:bi-Viterbi}. The conditional CEs in (\ref{eq:VA=00003DFWBW})
are defined as follows:

\begin{align}
\lhat_{j}(l_{j+1}) & =\arg\max_{l_{j}}\fprofile(l_{j}|l_{j+1},\xBold_{j})=\arg\max_{l_{j}}\fprofile(l_{j},l_{j+1}|\xBold_{n})\nonumber \\
\lhat_{j}(l_{j-1}) & =\arg\max_{l_{j}}\fprofile(l_{j}|l_{j-1},\xBold_{j:n})=\arg\max_{l_{j}}\fprofile(l_{j},l_{j-1}|\xBold_{n})\label{eq:VA=00003DCE=00003DFWBW}
\end{align}
in which the right hand side of (\ref{eq:VA=00003DCE=00003DFWBW})
is implied by CI structure in (\ref{eq:TERNARY}). 

As explained in Section \ref{subsec:chap6:The-Viterbi-Algorithm},
substituting $\lhat_{j+1}$and $\lhat_{j-1}$ of original joint MAP
of $f(L_{n}|\xBold_{n})$ into (\ref{eq:VA=00003DCE=00003DFWBW}),
we can also extract $\lhat_{j}(l_{j+1})$ and $\lhat_{j}(l_{j-1})$
belonging to that joint MAP. Hence, given $\lhat_{i}=\arg\max_{l_{i}}\fprofile(l_{i}|\xBold_{n})$
in $\fhat_{i}(\LBold_{n}|\xBold_{n})$ for any $\iinn$, we can trace
back the original MAP from stored conditional CEs in (\ref{eq:VA=00003DCE=00003DFWBW}).
Although this back-tracking scheme works for any $i\in\{1,\ldots,n\}$,
in practice the traditional VA always assigns $i=n$ in $\fhat_{i}(\LBold_{n}|\xBold_{n})$
and maintains profile filtering distributions $\fprofile(l_{n}|\xBold_{n})$,
in order to achieve an online forward scheme for new data at time
$n+1$, as illustrated in Fig. \ref{fig:Viterbi}. 

It can be proved feasibly that the joint MAP of $\fhat_{i}(\LBold_{n}|\xBold_{n})$,
for any $i\in\{1,\ldots,n\}$, are the same as joint MAP $f(L_{n}|\xBold_{n})$,
since they have the same conditional CEs and the same elements $\lhat_{i}$
of joint MAP. Also, because $\fhat(\LBold_{n}|\xBold_{n})$ only has
$M$ non-zero-probability trajectories, finding the original joint
MAP via $\fhat(\LBold_{n}|\xBold_{n})$ is easier than via the exact
$f(L_{n}|\xBold_{n})$. 

\begin{figure}
\begin{centering}
\includegraphics[width=0.8\columnwidth]{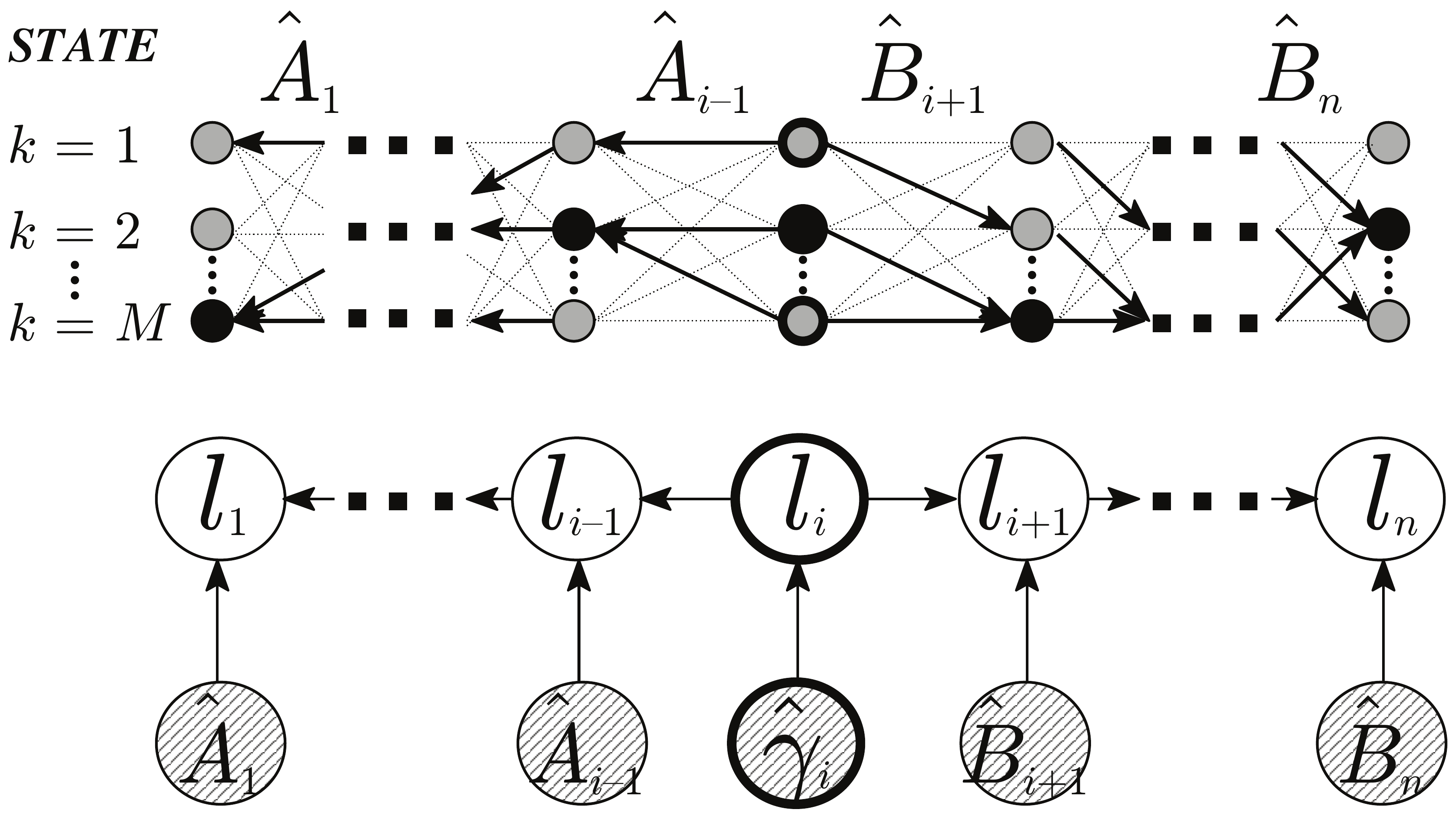}
\par\end{centering}
\caption{\label{fig:bi-Viterbi}Bi-directional Viterbi algorithm (VA) for HMC
posterior in Fig. \ref{fig:FB=00003Db}: trellis diagram (top) and
DAG (bottom). Dotted lines denote zero-probability transitions. Note
that, black circles denote joint MAP of trajectory of $f(\protect\LBold_{n}|\protect\xBold_{n})$
in (\ref{eq:TERNARY}), \textit{but} a sequence of modes of $n$ marginals
$\protect\fprofile(l_{i}|\protect\xBold_{n})$ for $\protect\fbar_{i}(\protect\LBold_{n}|\protect\xBold_{n})$
in (\ref{eq:chap6:bi-VA--HMC}), $\protect\iinn$.}
\end{figure}
\begin{figure}
\begin{centering}
\includegraphics[width=0.8\columnwidth]{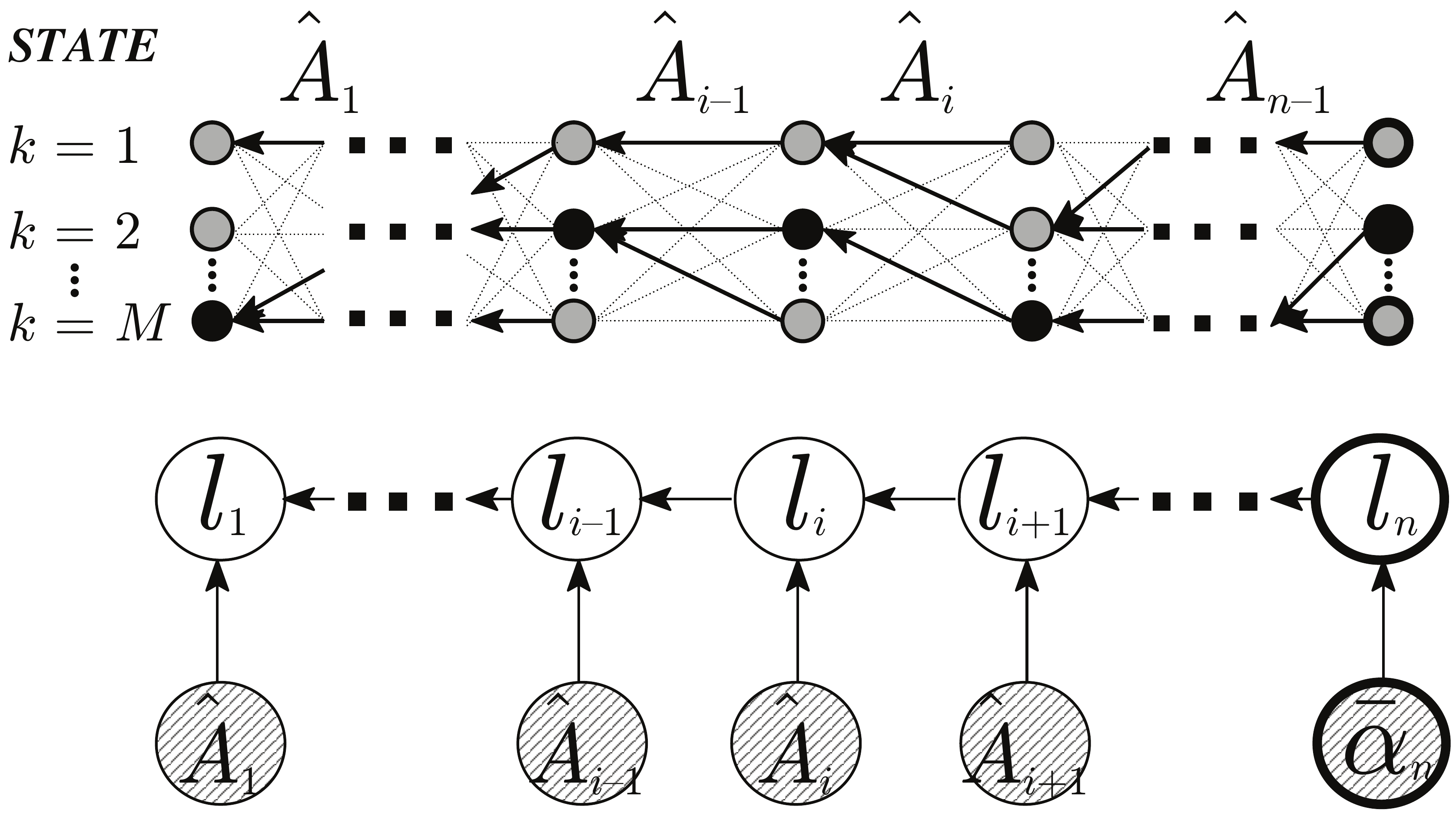}
\par\end{centering}
\caption{\label{fig:Viterbi}Viterbi algorithm (VA) for HMC posterior in Fig.
\ref{fig:FB=00003Db}: trellis diagram (top) and DAG (bottom), with
the same convention as Fig \ref{fig:bi-Viterbi}. Note that, black
circles denote the same joint MAP trajectory of both $f(\protect\LBold_{n}|\protect\xBold_{n})$
in (\ref{eq:TERNARY}) \textit{and} $\protect\fhat_{i}(\protect\LBold_{n}|\protect\xBold_{n})$
in (\ref{eq:chap6:VA-HMC}), for any $\protect\iinn$. These black
circles and $\bar{\alpha}_{\protect\ndata}$ in this figure are exactly
the same as the black circles and $\widehat{\gamma_{\protect\ndata}}$
in Fig \ref{fig:bi-Viterbi}, respectively, when $\protect\itime=\protect\ndata$.}
\end{figure}

\section{Variational Bayes (VB) approximation for CI structure}

In the previous section, we have shown that VA involves a CE-based
approximation of the smoothing marginals $f(l_{i}|\xBold_{n})$ in
the FB algorithm. The VA reduces complexity significantly via two
sequential steps: 
\begin{itemize}
\item The first step is to constrain the inference problem, replacing the
smoothing marginals, $f(l_{i}|\xBold_{n})$, with profile smoothing
distributions, $\fprofile(l_{i}|\xBold_{n})$. 
\item The second step is further to constrain these profile smoothing distributions
to point inferences and directly compute $\widehat{l_{i}}$ of the
joint MAP within this CE-based distribution. 
\end{itemize}
In a similar manner, we seek novel variants of VA via two sequential
steps: 
\begin{itemize}
\item The first step is to derive a distributional approximation of $f(L_{\ndata}|\xBold_{n})$
from the class of independent distributions $\prod_{\itime=1}^{\ndata}\ftilde(l_{i}|\xBold_{n})$
via VB (Section \ref{subsec:chap4:Variational-Bayes-(VB)}). 
\item The second step is further to impose a CE-based constraint upon the
VB marginals, via FCVB (Section \ref{subsec:chap4:Functionally-Constraint-VB}),
in order to reduce complexity significantly.
\end{itemize}
For this purpose, we will apply the VB and FCVB methodology to a general
multivariate posterior distribution in this section, and then to the
HMC in next section.

\subsection{VB approximation via Kolmogorov-Smirnov (KS) distance}

Let us consider a sequence of $\ntheta$ variables $\btheta\TRIANGLEQ\{\theta_{1},\ldots,\theta_{\ntheta}\}$,
provided that there is an arbitrary choice in partition of $\btheta$
into (non-empty) sub-vectors $\vtheta_{i}$. Then, given a joint posterior
distribution $f(\btheta|\xBold)$, the VB method is to seek an approximated
distribution, $\tilde{f}(\btheta|\xBold)$, in independent distribution
class $\mathbb{F}_{c}$, $\breve{f}(\btheta|\xBold)=\prod_{i=1}^{n}\breve{f}(\theta_{i}|\xBold)$,
such that the Kullback-Leibler divergence: 
\[
\KLD_{\ftilde||f}\TRIANGLEQ\KLD(\tilde{f}(\btheta|\xBold)||f(\btheta|\xBold))=E_{\tilde{f}(\btheta|\xBold)}\log\frac{\tilde{f}(\btheta|\xBold)}{f(\btheta|\xBold)}
\]
 is minimized, as explained in Section \ref{subsec:chap4:Variational-Bayes-(VB)}.

\subsubsection{Iterative VB (IVB) algorithm}

Given an arbitrary initial distribution $\fnutilde 0(\btheta|\xBold)=\prod_{i=1}^{\ntheta}\fnutilde 0(\theta_{i}|\xBold)$,
the aim of the IVB algorithm is to update the VB-marginals $\fnutilde{\nu}(\theta_{i}|\xBold)$
iteratively, $\nu=1,2,\ldots$, and cyclically with $\vtheta_{\itime}$,
until a local minimum of $\KLD_{\ftilde||f}$ is reached. This is
achieved as follows (Theorem \ref{thm:chap4:Iterative-VB-(IVB)}):

\begin{eqnarray}
\fnutilde{\nu}(\theta_{i}|\xBold) & \propto & \exp\left(E_{\fnutilde{\nu}(\theta_{\backslash i}|\xBold)}\log f(\btheta|\xBold)\right),\ \iton\label{eq:IVB}
\end{eqnarray}
where we recall that $\theta_{\backslash i}$ denotes the complement
of $\theta_{i}$ in $\btheta$, and: 

\[
\fnutilde{\nu}(\theta_{\backslash i}|\xBold)=\prod_{j=i+1:n}\fnutilde{\nu-1}(\theta_{j}|\xBold)\prod_{j=1:i-1}.\fnutilde{\nu}(\theta_{j}|\xBold)
\]

Note that, in practice, because the normalized form $f(\btheta|\xBold)$
is often unavailable (intractable), we can replace $f(\btheta|\xBold)$
with its unnormalized variant $f(\xBold,\btheta)$ in (\ref{eq:IVB}). 

Also, in the IVB algorithm, we can freely permute the sequence $\{\theta_{\pi(1)},\ldots,\theta_{\pi(n)}\}$
for the update of VB-marginals (\ref{eq:IVB}) in any IVB cycle, where
$\pi$ is a fixed permutation of the index set $\{1,\ldots,n\}$.
Since we have not specified any particular order for the index set
$\{1,\ldots,n\}$, let us denote, for convenience, that $\pi(i)=i$,
$\forall\iton$. 

\subsubsection{Stopping rule for IVB algorithm}

Because $\KLD_{\ftilde||f}$ in IVB algorithm is guaranteed to be
non-increasing with $\nu$ in the IVB algorithm (\ref{eq:IVB}), and
to converge to a local minimum {[}\citet{ch4:BK:AQUINN_06}{]}, we
can propose a stopping rule by choosing a small threshold $\xi_{c}$,
such that $\nu_{c}\TRIANGLEQ\min_{\iIVB}\left(\KLD_{\ftilde||f}^{[\nu]}\leq\xi_{c}\right)$,
as in the definition of the converged cycle number, $\nu_{c}$. However,
the IVB algorithm does not evaluate $\KLD_{\ftilde||f}^{[\nu]}$ itself
and that evaluation is typically prohibitive. 

Therefore, instead of using the joint $\KLD_{\ftilde||f}$, let us
consider individual convergence of each VB-marginal via Kolmogorov-Smirnov
(KS) distance {[}\citet{Kolmogorov_distance}{]}: 

\begin{eqnarray}
KS_{\itime}^{[\nu]} & \TRIANGLEQ & \max_{\theta_{i}}\left|\widetilde{F}^{[\nu]}(\theta_{i}|\xBold)-\widetilde{F}^{[\nu-1]}(\theta_{i}|\xBold)\right|\label{eq:KS}
\end{eqnarray}
where $\widetilde{F}(\theta_{i}|\xBold)$ denotes the cumulative distribution
function (c.d.f) of $\ftilde(\theta_{i}|\xBold)$, $\iinn$. Because
verifying $KS_{\itime}^{[\nu]}$ requires only maximization and addition
operations, this informal scheme for verifying convergence of the
IVB algorithm is much faster than formal evaluation of $\KLD_{\ftilde||f}^{[\nu]}$
itself. Note that, IVB continues to iterate in order to decrease $\KLD_{\ftilde||f}$
and the KS is only proposed for the stopping rule.

Indeed, this stopping rule $KS_{\itime}$ for each VB-marginal is
stricter than that of $\KLD_{\ftilde||f}$ and may lead to higher
value of $\nu_{c}$. This strictness can be compensated by choosing
a greater $\xi_{c}$. In this thesis, the IVB algorithm is considered
to be converged at cycle $\nu_{c}$ if the following condition is
satisfied: 
\begin{equation}
KS_{\itime}^{[\nu_{c}]}\leq\xi=0.01,\ \forall\iinn\label{eq:KS-threshold}
\end{equation}

In simulations, presented in Chapter \ref{=00005BChapter 8=00005D},
no gain in performance was achieved with threshold $\xi$ lower than
$0.01$. Apart from the cost reduction, this KS-based stopping rule
will also lead to a novel speed-up scheme for the IVB algorithm, as
explained in Section \ref{subsec:chap6:Accelerated-IVB-algorithm}.

For later use, let us indicate the status of $KS_{\itime}^{[\nu]}$,
$\iton$, via a sequence of $n$ Boolean indicators $\varrho_{\itime}^{[\nu]}$,
$\iton$, as follows: 
\begin{equation}
\begin{cases}
\varrho_{\itime}^{[\nu]}=0, & \mbox{if}\ KS_{\itime}^{[\nu]}\leq\xi\\
\varrho_{i}^{[\nu]}=1, & \mbox{if}\ KS_{\itime}^{[\nu]}>\xi
\end{cases}\hspace{1em}\iinn,\ \iIVB=1,2,\ldots\label{eq:chap6:KS-indicator}
\end{equation}

Then, the KS-based condition (\ref{eq:KS-threshold}) is equivalent
to the following condition: 
\begin{equation}
\sum_{\itime=1}^{\ntheta}\varrho_{\itime}^{[\nIVB]}=0,\ \mbox{or\ equivalently},\ \varrho_{\itime}^{[\nIVB]}=0,\forall\iinn\label{eq:KS-convergence}
\end{equation}

The IVB algorithm in this case is presented in Algorithm \ref{alg:IVB-algorithm}. 

\textbf{}
\begin{algorithm}
\textbf{Iteration:} 

For $\nu=1,2,\ldots$, do \{

For $i=1,\ldots n$, do \{ 

evaluate either (\ref{eq:IVB}) or (\ref{eq:IVB for CI})

if $KS_{\itime}^{[\nu]}\leq\xi$, \{ set $\varrho_{\itime}=0$ \} 

else: \{set $\varrho_{\itime}=1$\}\}\}\}

\textbf{Termination: }stop if \textbf{$\varrho_{\itime}=0$, $\forall\iinn$}

\textbf{\caption{\label{alg:IVB-algorithm} IVB algorithm using KS-distance stopping
rule\textbf{ }}
}
\end{algorithm}

\subsection{Accelerated IVB approximation \label{subsec:chap6:Accelerated-IVB-algorithm}}

In general, any joint distribution $f(\btheta|\xBold)$ can be binarily
factorized in respect of each choice of $\vtheta_{\itime}$, $\iinn$,
thereby exploiting any CI structure that may be present in the joint
model:

\begin{equation}
f(\btheta|\xBold)=f(\theta_{i}|\theta_{\eta_{i}},\xBold)f(\theta_{\backslash i}|\xBold),\ \iinn\label{eq:C.I}
\end{equation}
where the \foreignlanguage{british}{neighbour} set is $\theta_{\eta_{i}}\subseteq\theta_{\backslash i}$.
Then, the key step in accelerating the IVB algorithm is to exploit
the CI structure of the original distribution $f(\btheta|\xBold)$,
as explained next.

\subsubsection{IVB algorithm for CI structure}

Owing to the CI structure (\ref{eq:C.I}), the IVB algorithm (\ref{eq:IVB})
only involves the \foreignlanguage{british}{neighbour} set $\theta_{\eta_{i}}$,
instead of the whole set $\theta_{\backslash i}$, as follows:

\begin{equation}
\fnutilde{\nu}(\theta_{i}|\xBold)\propto\exp\left(E_{\fnutilde{\nu}(\theta_{\eta_{i}}|\xBold)}\log f(\theta_{i},\theta_{\eta_{i}}|\xBold)\right),\ \iton\label{eq:IVB for CI}
\end{equation}
where: 
\begin{equation}
\fnutilde{\nu}(\theta_{\eta_{i}}|\xBold)=\prod_{j\in\eta_{i}^{BW}}.\fnutilde{\nu-1}(\theta_{j}|\xBold)\prod_{j\in\eta_{i}^{FW}}\fnutilde{\nu}(\theta_{j}|\xBold)\label{eq:ftilde_etaj}
\end{equation}
with $\eta_{i}^{FW}=\eta_{i}\cup\{1:i-1\}$, $\eta_{i}^{BW}=\eta_{i}\cup\{i+1:n\}$
denoting forward and backward \foreignlanguage{british}{neighbour}
index sets, respectively, and $\eta_{i}=\{\eta_{i}^{FW},\eta_{i}^{BW}\}$. 

\subsubsection{Accelerated IVB algorithm \label{subsec:Accelerated-VB-algorithm}}

\begin{figure}
\begin{centering}
\includegraphics[width=0.8\columnwidth]{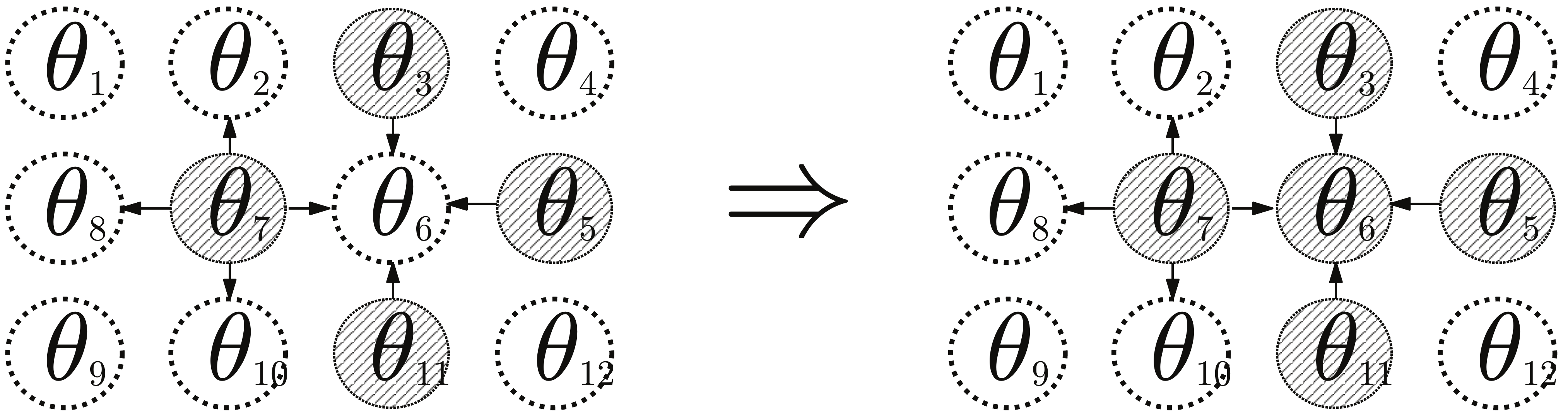}
\par\end{centering}
\caption{\label{fig:many-to-one}Many-to-one scheme. Left: VB-marginals $\protect\ftilde(\protect\vtheta_{3})$,
$\protect\ftilde(\protect\vtheta_{5})$, $\protect\ftilde(\protect\vtheta_{7})$
and $\protect\ftilde(\protect\vtheta_{11})$ in current IVB cycle
$\nu$ are converged (shaded nodes, i.e. $\tau^{[\nu]}=0$). Therefore,
Right: $\protect\ftilde(\protect\vtheta_{6})$ is set as converged
($\tau_{6}^{[\nu+1]}=0$) in next IVB cycle. White nodes denote unknown
converged state of VB-marginals (unknown $\tau_{\protect\itime}$).}
 
\end{figure}
\begin{figure}
\begin{centering}
\includegraphics[width=0.8\columnwidth]{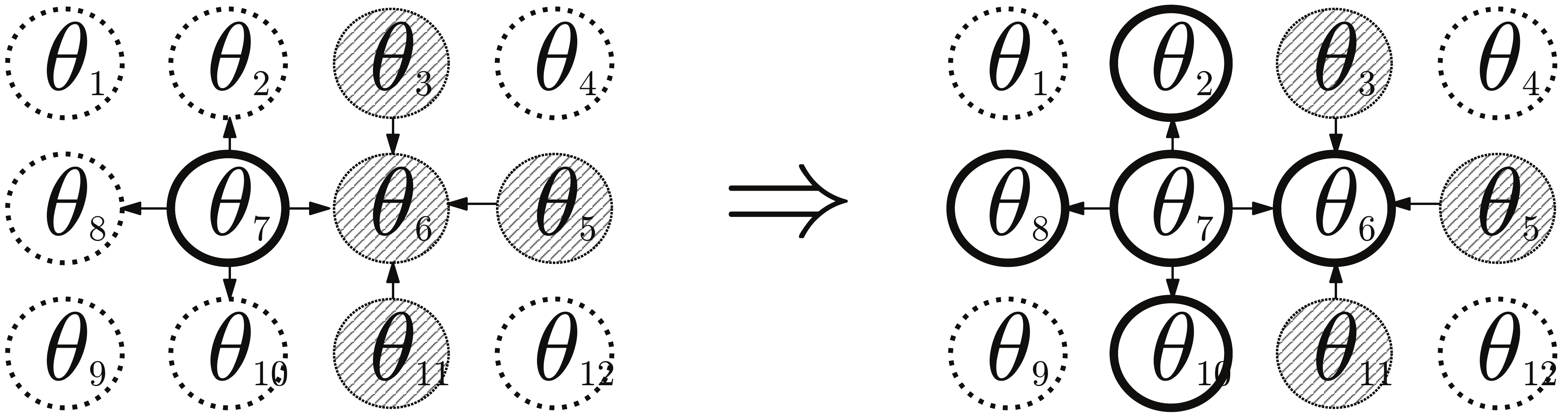}
\par\end{centering}
\caption{\label{fig:one-to-many}One-to-many scheme. Left: $\protect\ftilde(\protect\vtheta_{7})$
is not converged in current IVB cycle $\nu$ (thick, black circle
node $\tau_{7}^{[\nu]}=1$). Therefore, Right: $\protect\ftilde(\protect\vtheta_{2})$,
$\protect\ftilde(\protect\vtheta_{6})$, $\protect\ftilde(\protect\vtheta_{8})$
and $\protect\ftilde(\protect\vtheta_{10})$ are set as not converged
(i.e. $\tau^{[\nu+1]}=1)$ in the next IVB cycle. Shaded nodes denote
converged VB-marginals ($\tau_{i}=0)$. Dotted circles denote unknown
converged state (unknown $\tau_{\protect\itime}$).}
\end{figure}
In IVB (\ref{eq:IVB for CI}), we have to update all $n$ VB-marginals
at any cycle $\nu$, since the KS-based conditions (\ref{eq:chap6:KS-indicator})
needs to be checked all the time. Therefore, if we can quickly identify
the converged VB-marginals at current IVB cycle and exclude them from
next IVB cycle, the computational load will be reduced. 

Hence, the central idea of the accelerated scheme is the concept of
individual convergence. Nevertheless, although the convergence of
$\KLD_{\ftilde||f}$ can be proved to be monotone {[}\citet{ch4:BK:AQUINN_06}{]},
the individual convergence of $KS_{\itime}^{[\nu]}$ for each VB-marginal
is not monotone in general and might vary around the threshold $\xi$.
For resolving this problem, let us formulate two remarks about the
CI structure (\ref{eq:IVB for CI}): 
\begin{itemize}
\item The first remark, called ``many-to-one'', is that the VB-marginal,
$\fnutilde{\nu}(\theta_{i}|\xBold)$, is regarded as converged and
can be ignored at cycle $\nu$, if all of its \foreignlanguage{british}{neighbour}
VB-marginals $\fnutilde{\nu-1}(\theta_{j}|\xBold)$ in (\ref{eq:ftilde_etaj})
already converged at cycle $\nu-1$. 
\item The second remark, called ``one-to-many'', is that the VB-marginal,
$\fnutilde{\nu}(\theta_{i}|\xBold)$, is not yet converged and needs
to be updated at cycle $\nu$, if any of its \foreignlanguage{british}{neighbour}
VB-marginals $\fnutilde{\nu-1}(\theta_{j}|\xBold)$ in (\ref{eq:ftilde_etaj})
is not yet converged at cycle $\nu-1$. 
\end{itemize}
Even though both of those remarks yield, once again, informal convergence
conditions for IVB (\ref{eq:IVB for CI}), they are useful in an accelerated
scheme. For this purpose, let us indicate which VB-marginal $\fnutilde{\nu}(\theta_{i}|\xBold)$
is converged at cycle $\nu$ and which is not, via the toggling values
$\tau_{i}^{[\iIVB]}=0$ and $\tau_{i}^{[\iIVB]}=1$, respectively,
in a sequence of $n$ Boolean indicators $\tau_{i}^{[\nu]}$, $\iton$.
Initially, all the indicators are set to one, i.e. the initialization
$\fnutilde 0(\btheta|\xBold)$ is regarded as not yet converged, as
follows:

\begin{equation}
\tau_{\itime}^{[0]}=1,\ \forall\iinn\label{eq:chap6:accelerate-initial}
\end{equation}

By convention, the first remark yields the ``many-to-one'' scheme
for indicator updates, i.e. each indicator is updated by its \foreignlanguage{british}{neighbour}
indicators, as follows:

\begin{eqnarray}
\sum_{j\in\eta_{i}}\tau_{j}^{[\iIVB-1]}=0 & \Rightarrow & \tau_{\itime}^{[\iIVB]}=0\label{eq:chap6:many-to-one}\\
\sum_{j\in\eta_{i}}\tau_{j}^{[\iIVB-1]}>0 & \Rightarrow & \begin{cases}
\tau_{\itime}^{[\iIVB]}=0, & if\ KS_{\itime}^{[\nu]}\leq\xi\\
\tau_{\itime}^{[\iIVB]}=1, & if\ KS_{\itime}^{[\nu]}>\xi
\end{cases},\ \iinn\nonumber 
\end{eqnarray}
as illustrated in Fig. \ref{fig:many-to-one}. 

Equivalently, the second remark yields the ``one-to-many'' scheme
for indicator updates, i.e. each indicator updates its neighbour indicators,
which involves two steps at each time $\iIVB$. In the first step,
a temporary Boolean sequence is initially set $\varsigma_{\itime}=0$,
$\forall\iinn$, at each time $\iIVB$, then we update those $\varsigma_{\itime}$
by the following relationship:

\begin{eqnarray}
\tau_{\itime}^{[\iIVB-1]}=1 & \Rightarrow & \varsigma_{j}=1,\ \forall j:\ \eta_{j}\ni i,\ \forall\iinn\label{eq:chap6:one-to-many:a}
\end{eqnarray}
in which we do nothing with the case $\tau_{\itime}^{[\iIVB-1]}=0$.
Then, the second step is to update the VB-marginal indicators (Fig.
\ref{fig:one-to-many}):

\begin{eqnarray}
\sum_{j\in\eta_{i}}\tau_{j}^{[\iIVB-1]}=0\Leftrightarrow\varsigma_{\itime}=0 & \Rightarrow & \tau_{i}^{[\iIVB]}=0,\label{eq:chap6:one-to-many:b}\\
\sum_{j\in\eta_{i}}\tau_{j}^{[\iIVB-1]}=1\Leftrightarrow\varsigma_{\itime}=1 & \Rightarrow & \begin{cases}
\tau_{i}^{[\iIVB]}=0, & if\ KS_{\itime}^{[\nu]}\leq\xi\\
\tau_{i}^{[\iIVB]}=1, & if\ KS_{\itime}^{[\nu]}>\xi
\end{cases},\ \iinn\nonumber 
\end{eqnarray}
in which the equivalences ``$\Leftrightarrow$'' in (\ref{eq:chap6:one-to-many:b})
are derived from the relationship (\ref{eq:chap6:one-to-many:a}).
Comparing (\ref{eq:chap6:many-to-one}) with (\ref{eq:chap6:one-to-many:b}),
we can see that the ``one-to-many'' scheme is equivalent to ``many-to-one''
scheme. The common convergence condition for both of ``many-to-one''
and ``one-to-many'' schemes is: 
\begin{equation}
\sum_{\itime=1}^{\ntheta}\tau_{\itime}^{[\nIVB]}=0,\ \mbox{or\ equivalently,}\ \tau_{\itime}^{[\nIVB]}=0,\forall\iinn\label{eq:chap6:accelerate-convergence}
\end{equation}

Of the two, the ``one-to-many'' scheme (\ref{eq:chap6:one-to-many:a}-\ref{eq:chap6:one-to-many:b})
is more useful, because we only need to update one Boolean indicator
$\tau_{i}$ in order to indicate the convergence of neighbour VB-marginals
$\fnutilde{\nu}(\theta_{\eta_{\itime}}|\xBold)$ at cycle $\iIVB$
(hence the name ``one-to-many''), rather than verifying all the
neighbour indicators (hence the name ``many-to-one'') in the first
remark. For simpler programming, in this thesis, the Accelerated IVB
algorithm is designed via ``one-to-many'' scheme (\ref{eq:chap6:one-to-many:b}),
as presented in Algorithm \ref{alg:Accel-IVB-algorithm}. Also, because
the temporary Boolean $\varsigma_{\itime}$ are only introduced in
order to clarify the idea of ``one-to-many'' scheme above, there
is no need to evaluate those $\varsigma_{\itime}$ in Algorithm \ref{alg:Accel-IVB-algorithm}.
To recover the conventional IVB algorithm (Algorithm \ref{alg:IVB-algorithm}),
we only need to omit the indicator condition ``if $\tau_{i}=1$''
in the Accelerated IVB algorithm (Algorithm \ref{alg:Accel-IVB-algorithm}),
i.e. we then update all $n$ VB-marginals in each IVB cycle $\nu$.

\textbf{}
\begin{algorithm}
\textbf{Initialization}: set $\tau_{\itime}=1$, $\forall\iinn$

\textbf{Iteration:} 

For $\nu=1,2,\ldots$, do \{

For $i=1,\ldots n$, do \{ 

if $\tau_{\itime}=1$, \{ 

evaluate (\ref{eq:IVB for CI}) 

if $KS_{\itime}^{[\nu]}\leq\xi$, \{ set $\tau_{\itime}=0$ \} 

else: \{set $\tau_{j}=1$, $\forall j:\ \eta_{j}\ni i$ \}\}\}\}

\textbf{Termination: }stop if\textbf{ $\tau_{\itime}=0$, $\forall\iinn$}

\textbf{\caption{\label{alg:Accel-IVB-algorithm}Accelerated IVB via KS distance stopping
rule }
}
\end{algorithm}
The accelerated scheme can be regarded as another convergence condition
for $\fnutilde{\nu}(\theta_{i}|\xBold)$, $\iinn$, i.e. if we set
$\xi>0$, there might be some VB-marginal updates are excluded in
an IVB cycle in Accelerated IVB, while those VB-marginal updates are
always updated in conventional IVB algorithm. Nevertheless, if we
set $\xi=0$, that excluding step does not yield any difference between
Accelerated IVB's and conventional IVB's outputs, because the accelerated
indicators (\ref{eq:chap6:one-to-many:b}) are equivalent to KS-based
indicators (\ref{eq:chap6:KS-indicator}), as shown in Lemma \ref{lem:(Accelerated-IVB-algorithm)}. 

For $\xi>0$, which is a more reasonable and relaxed case in practice,
such an equivalence is not true in general, as explained above, although
it might be true if $\xi$ is small enough to ensure no change in
the VB-marginal moments in two consecutive iterations.
\begin{lem}
\label{lem:(Accelerated-IVB-algorithm)}\textbf{(Accelerated IVB algorithm)}
If $\xi=0$, the Accelerated IVB algorithm (Algorithm \ref{alg:Accel-IVB-algorithm})
has exactly the same output $\fnutilde{\nu}(\btheta|\xBold)$ as conventional
IVB algorithm (Algorithm \ref{alg:IVB-algorithm}), i.e. we have $\varrho_{\itime}^{[\iIVB]}=\tau_{\itime}^{[\iIVB]}$,
$\forall\iinn$, at any IVB cycle $\iIVB=1,2,\ldots$,$\nIVB$, where
$\nIVB$ is the same number of IVB cycles at convergence in both cases.
\end{lem}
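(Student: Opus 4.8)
The plan is to establish the claim by induction on the IVB cycle index $\iIVB$, showing that the Boolean indicator sequences $\{\varrho_{\itime}^{[\iIVB]}\}$ (from the conventional IVB, Algorithm \ref{alg:IVB-algorithm}) and $\{\tau_{\itime}^{[\iIVB]}\}$ (from the accelerated IVB, Algorithm \ref{alg:Accel-IVB-algorithm}) coincide coordinate-wise at every cycle, \emph{provided} the threshold is set to $\xi=0$. The key observation enabling this is that when $\xi=0$, a VB-marginal $\fnutilde{\iIVB}(\theta_{i}|\xBold)$ is ``converged'' (in the KS sense, $KS_{\itime}^{[\iIVB]}\leq\xi=0$, i.e. $KS_{\itime}^{[\iIVB]}=0$) if and only if it is literally unchanged between cycles $\iIVB-1$ and $\iIVB$. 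This exact-equality characterization is the linchpin: it removes the ``slack'' that exists for $\xi>0$ and forces the accelerated skipping rule to agree with the full update rule.

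First I would set up the base case. At initialization both schemes declare all marginals non-converged: the conventional scheme has no cycle-$0$ indicator per se, but the accelerated scheme sets $\tau_{\itime}^{[0]}=1$ for all $\iinn$ by (\ref{eq:chap6:accelerate-initial}), and the conventional scheme likewise treats $\fnutilde 0(\btheta|\xBold)$ as not yet converged. I would then carry out the inductive step. Assume $\varrho_{\itime}^{[\iIVB-1]}=\tau_{\itime}^{[\iIVB-1]}$ for all $i$. The crucial sub-claim to prove is that, at cycle $\iIVB$, the marginals that the accelerated scheme \emph{skips} (those with $\tau_{\itime}^{[\iIVB-1]}=0$ whose entire \foreignlanguage{british}{neighbour} set has converged, via the ``one-to-many'' rule (\ref{eq:chap6:one-to-many:b})) would in fact have produced $KS_{\itime}^{[\iIVB]}=0$ had the conventional algorithm updated them. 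This follows from the IVB update formula for CI structure (\ref{eq:IVB for CI}): the update of $\fnutilde{\iIVB}(\theta_{i}|\xBold)$ depends only on the \foreignlanguage{british}{neighbour} VB-marginals $\fnutilde{\bullet}(\theta_{\eta_{i}}|\xBold)$ through (\ref{eq:ftilde_etaj}). If every \foreignlanguage{british}{neighbour} marginal indexed in $\eta_{i}$ was exactly unchanged at the previous cycle (which, for $\xi=0$, is precisely what $\tau_{j}^{[\iIVB-1]}=0$ encodes under the induction hypothesis), then the argument of the $\exp$ in (\ref{eq:IVB for CI}) is identical to that at the previous cycle, so the recomputed marginal equals the old one, giving $KS_{\itime}^{[\iIVB]}=0$ and hence $\varrho_{\itime}^{[\iIVB]}=0=\tau_{\itime}^{[\iIVB]}$. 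Conversely, for marginals the accelerated scheme \emph{does} update (those with a non-converged neighbour), both algorithms execute the identical update (\ref{eq:IVB for CI}) and evaluate the identical $KS_{\itime}^{[\iIVB]}$, so their indicators agree trivially.

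Having matched the indicators at every cycle, the equality of the actual VB-marginal distributions $\fnutilde{\iIVB}(\btheta|\xBold)$ follows immediately: on any coordinate where $\tau_{\itime}^{[\iIVB]}=1$ the two algorithms perform the same computation, and on any coordinate where the accelerated scheme skips, the skipped value equals the conventional recomputed value by the sub-claim above (both equal the previous-cycle marginal). Since the convergence condition (\ref{eq:KS-convergence}) for the conventional scheme and (\ref{eq:chap6:accelerate-convergence}) for the accelerated scheme are both ``all indicators zero,'' the identical indicator sequences force termination at the same cycle $\nIVB$, completing the claim.

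\textbf{The main obstacle} I anticipate is rigorously justifying the ``exact unchanged'' sub-claim, namely that a single VB-marginal update is a deterministic function of \emph{only} its neighbours' distributions and that neighbour-invariance at cycle $\iIVB-1$ propagates to self-invariance at cycle $\iIVB$. This requires care with the mixed forward/backward dependence in (\ref{eq:ftilde_etaj}): within a single sweep, the update of $\theta_{i}$ uses already-refreshed marginals $\fnutilde{\iIVB}(\theta_{j}|\xBold)$ for $j\in\eta_{i}^{FW}$ and stale ones $\fnutilde{\iIVB-1}(\theta_{j}|\xBold)$ for $j\in\eta_{i}^{BW}$. One must verify that the accelerated skip/update bookkeeping correctly tracks \emph{which} of these neighbour values actually changed during the current sweep, so that the equivalence $KS_{\itime}^{[\iIVB]}=0 \Leftrightarrow \tau_{\itime}^{[\iIVB]}=0$ is preserved despite the intra-sweep asymmetry. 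The cleanest way to handle this is to phrase the invariant over the full ordered sweep rather than per-coordinate in isolation, and to note that the emphatically exact ($\xi=0$) arithmetic leaves no room for a changed neighbour to masquerade as unchanged; this is exactly why the lemma's hypothesis $\xi=0$ cannot be relaxed, as the surrounding discussion already warns.
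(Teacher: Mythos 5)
Your proof is correct and follows essentially the same route as the paper's: both reduce the claim to the sub-claim that $\sum_{j\in\eta_{i}}\tau_{j}^{[\nu-1]}=0$ with $\xi=0$ forces $KS_{\itime}^{[\nu]}=0$, since exact invariance of the neighbour marginals in (\ref{eq:ftilde_etaj}) propagates through the deterministic update (\ref{eq:IVB for CI}) to exact invariance of $\fnutilde{\nu}(\theta_{i}|\xBold)$, with indicator equality then giving the common $\nIVB$. Your explicit induction over sweeps, and in particular your attention to the forward/backward intra-sweep asymmetry (refreshed $\fnutilde{\nu}(\theta_{j}|\xBold)$ for $j\in\eta_{i}^{FW}$ versus stale $\fnutilde{\nu-1}(\theta_{j}|\xBold)$ for $j\in\eta_{i}^{BW}$, with the one-to-many rule re-flagging $\tau_{i}$ whenever a forward neighbour changes mid-sweep), is a welcome tightening of a point the paper's terser proof leaves implicit.
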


\begin{proof}
If we have $\varrho_{\itime}^{[\iIVB]}=\tau_{\itime}^{[\iIVB]}$,
$\forall\iinn$, at any cycle $\iIVB$, the value $\nIVB$ at convergence
is obviously the same for both stopping rules (\ref{eq:KS-convergence})
and (\ref{eq:chap6:accelerate-convergence}). By comparing (\ref{eq:chap6:KS-indicator})
with (\ref{eq:chap6:one-to-many:b}), it is feasible to recognize
that we only need to prove the following relationship:

\begin{equation}
\sum_{j\in\eta_{i}}\tau_{j}^{[\iIVB-1]}=0\Rightarrow KS_{\itime}^{[\nu]}=0,\ \iinn\label{eq:chap6:proof_rho_tau}
\end{equation}
in order to prove that $\varrho_{\itime}^{[\iIVB]}=\tau_{\itime}^{[\iIVB]}$,
$\forall\iinn$, for the case $\xi=0$. The relationship (\ref{eq:chap6:proof_rho_tau})
can be verified feasibly: Note that $\sum_{j\in\eta_{i}}\tau_{j}^{[\iIVB-1]}=0$
and $\xi=0$ mean $KS_{j\in\eta_{i}}^{[\nu-1]}=0$, i.e.  $\fnutilde{\nu-2}(\theta_{\eta_{i}}|\xBold)=\fnutilde{\nu-1}(\theta_{\eta_{i}}|\xBold)$
(a.s.) in (\ref{eq:ftilde_etaj}), which yields $\fnutilde{\nu-1}(\theta_{i}|\xBold)=\fnutilde{\nu}(\theta_{i}|\xBold)$
(a.s.), owing to (\ref{eq:IVB for CI}), $\iinn$. By definition (\ref{eq:KS}),
we then have $KS_{\itime}^{[\nu]}=0$, $\iinn$.
\end{proof}

\subsubsection{Computational load of Accelerated IVB \label{subsec:chap6:cost-of-Accelerated IVB}}

In the Accelerated IVB algorithm, either via ``many-to-one'' (\ref{eq:chap6:many-to-one})
or ``one-to-many'' schemes (\ref{eq:chap6:one-to-many:b}), we only
need to update $\kappa^{[\nu]}\TRIANGLEQ\sum_{\itime=1}^{\ntheta}\tau_{i}^{[\iIVB]}$
VB-marginals being those that are not yet converged at cycle $\nu$,
and leave out $n-\kappa^{[\nu]}$ converged VB-marginals. Although
the number $0\leq\kappa^{[\nu]}\leq n$ is varying at each cycle $\nu$,
the total number of VB-marginal updates at convergence (i.e. after
$\nu_{c}$ IVB cycles) is: $\sum_{\nu=1}^{\nu_{c}}$$\kappa^{[\nu]}\leq\nu_{c}n$.
If we call $\eIVB\TRIANGLEQ\frac{\sum_{\nu=1}^{\nu_{c}}\kappa^{[\nu]}}{n}$
the effective number of IVB cycles in accelerated scheme, $1\leq\eIVB\leq\nu_{c}$.
Note that, in the first IVB cycle, we have to update all IVB-marginals
from arbitrary initial VB-marginals, i.e. $1\leq\eIVB$, before being
able to identify their convergence. The acceleration rate can be defined
as $\frac{\nu_{c}}{\eIVB}$ and its bound is: 
\begin{equation}
1\leq\frac{\nu_{c}}{\eIVB}\leq\nu_{c}\label{eq:chap6:accelerated rate}
\end{equation}

.

Because the indicator condition (\ref{eq:chap6:one-to-many:b}) is
a relaxed version of KS-based condition (\ref{eq:chap6:KS-indicator}),
the total number of IVB cycles $\nIVB$ in Accelerated IVB and IVB
may not be equal to each other when $\xi>0$, although they are the
same when $\xi=0$, owing to Lemma \ref{lem:(Accelerated-IVB-algorithm)}.
In any case, the computational load of Accelerated IVB depends on
the effective number $\eIVB$, instead of $\nIVB$. Hence, if we denote
$\nIVB$ the true number of IVB cycle for both accelerated and unaccelerated
schemes, the value $\eIVB$ is more important than $\nIVB$ for evaluating
the speed of Accelerated IVB algorithm. In simulations in Chapter
\ref{=00005BChapter 8=00005D}, the value of $\eIVB$ will be shown
to be close to one for the HMC model, on average. 

\subsection{Accelerated FCVB approximation}

Let us recall that the Iterative FCVB approximation in Lemma \ref{lem:(Iterative-FCVB-algorithm)}
involves projecting all VB-marginals $\ftilde(\theta_{i}|\xBold)$
into their MAP values $\widetilde{\fdelta}(\theta_{i}|\xBold)=\delta(\theta_{i}-\widehat{\theta_{i}})$,
$\iinn$, where $\delta(\cdot)$ denotes probability distributions
singular at $\widehat{\theta_{i}}$. The IVB algorithm for CI structure
(\ref{eq:IVB for CI}) now becomes an Iterative FCVB algorithm, as
follows.

\subsubsection{Iterative FCVB algorithm for CI structure}

Owing to the sifting property of $\delta(\cdot)$, the FCVB-marginals
can be updated feasibly, as follows:

\begin{eqnarray}
\widetilde{\fdelta}^{[\nu]}(\theta_{i}|\xBold) & \propto & f(\xBold,\theta_{i},\theta_{\eta_{\itime}}=\widehat{\theta_{\eta_{\itime}}}^{[\nu]}),\ \iton\label{eq:FCVB}
\end{eqnarray}
where $\widehat{\theta_{\eta_{\itime}}}^{[\nu]}=\{\widehat{\theta}_{\eta_{i}^{BW}}^{[\nu-1]},\widehat{\theta}_{\eta_{i}^{FW}}^{[\nu]}\}$,
with the same notation as in (\ref{eq:IVB for CI}). From (\ref{eq:ch4:FCVB:CE})
and (\ref{eq:FCVB}), we only need to update $\fdelta^{[\iIVB]}(\btheta|\xBold)=\prod_{\itime=1}^{\ndata}\delta(\theta_{i}-\widehat{\theta_{i}}^{[\iIVB]})$
via iterative maximization steps, as follows:

\begin{eqnarray}
\widehat{\theta_{i}}^{[\nu]} & = & \arg\max_{\theta_{i}}\widetilde{\fdelta}^{[\nu]}(\theta_{i}|\xBold)\label{eq:FCVB=00003DCE}\\
 & = & \arg\max_{\theta_{i}}f(\xBold,\theta_{i},\theta_{\eta_{\itime}}=\widehat{\theta_{\eta_{\itime}}}^{[\nu]}),\ \iton\nonumber 
\end{eqnarray}
where $\widehat{\theta_{\eta_{\itime}}}^{[\nu]}=\{\widehat{\theta}_{\eta_{i}^{BW}}^{[\nu-1]},\widehat{\theta}_{\eta_{i}^{FW}}^{[\nu]}\}$.

\subsubsection{Stopping rule for Iterative FCVB algorithm}

In general, we can also choose KS distance between FCVB-marginals
$\widetilde{\fdelta}$ in (\ref{eq:FCVB}), as a convergence criterion
for FCVB-marginals, as in IVB. However, in a special case of discrete
$\theta_{i}$, the KS distances $KS_{\theta_{i}}$ will become zero
if $\widehat{\theta_{i}}^{[\nu-1]}=\widehat{\theta_{i}}^{[\nu]},\ \forall\iinn$.
Hence, the latter form will be used as a stopping rule in this thesis.
The convergence condition (\ref{eq:KS-threshold}) in IVB now becomes
the following convergence condition for Iterative FCVB algorithm (Algorithm
\ref{alg:FCVB-discrete}):

\begin{equation}
\widehat{\theta_{i}}^{[\nu-1]}=\widehat{\theta_{i}}^{[\nu]},\ \forall\iinn\label{eq:chap6:FCVB-stopping-rule}
\end{equation}

\textbf{}
\begin{algorithm}
\textbf{Iteration:} 

For $\nu=1,2,\ldots$, do \{

For $i=1,\ldots n$, do \{ 

evaluate (\ref{eq:FCVB=00003DCE})

if $\widehat{\theta_{i}}^{[\nu-1]}=\widehat{\theta_{i}}^{[\nu]}$,
\{ set $\varrho_{\itime}=0$ \} 

else: \{set $\varrho_{\itime}=1$\}\}\}\}

\textbf{Termination: }stop if\textbf{ $\varrho_{\itime}=0$, $\forall\iinn$}

\textbf{\caption{\label{alg:FCVB-discrete}Iterative FCVB for discrete r.v. $\protect\vtheta$}
}
\end{algorithm}

\subsubsection{Accelerated FCVB algorithm \label{subsec:Accelerated-FCVB-algorithm}}

In common with the computational flow of the Accelerated IVB algorithm
(Algorithm \ref{alg:Accel-IVB-algorithm}), we can design the Accelerated
FCVB algorithm (Algorithm \ref{alg:Accel-FCVB-discrete}) by replacing
the traditional IVB step (\ref{eq:IVB for CI}) and the KS-based convergence
condition (\ref{eq:KS-threshold}) with traditional FCVB step (\ref{eq:FCVB=00003DCE})
and the stopping rule (\ref{eq:chap6:FCVB-stopping-rule}), respectively.

Because the Lemma \ref{lem:(Accelerated-IVB-algorithm)} is applicable
to Accelerated VB scheme when converged KS distance is set zero, a
similar Lemma can be proposed for Accelerated FCVB scheme, as follows: 
\begin{lem}
\label{lem:(Accelerated-FCVB-algorithm)}\textbf{(Accelerated FCVB
algorithm)} The Accelerated FCVB algorithm (Algorithm \ref{alg:Accel-FCVB-discrete})
has the same output $\widehat{\theta_{i}}^{[\nu]}$ as Iterative FCVB
algorithm (Algorithm \ref{alg:FCVB-discrete}), i.e. we have $\varrho_{\itime}^{[\iIVB]}=\tau_{\itime}^{[\iIVB]}$,
$\forall\iinn$, at any IVB cycle $\iIVB=1,2,\ldots$
\end{lem}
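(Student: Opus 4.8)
The plan is to mirror the proof of Lemma \ref{lem:(Accelerated-IVB-algorithm)}, but to replace the distributional Kolmogorov--Smirnov test with the exact-equality stopping rule (\ref{eq:chap6:FCVB-stopping-rule}) that is available for discrete $\vtheta$. Because the FCVB-marginals are degenerate point masses, convergence of component $i$ is detected exactly by $\widehat{\theta_i}^{[\nu-1]}=\widehat{\theta_i}^{[\nu]}$ rather than by a thresholded distance; this is precisely why the present statement carries no counterpart of the hypothesis $\xi=0$ used in Lemma \ref{lem:(Accelerated-IVB-algorithm)}, and I expect the equivalence $\varrho_{\itime}^{[\iIVB]}=\tau_{\itime}^{[\iIVB]}$ to hold unconditionally here.

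First I would record the deterministic-update property of the FCVB step. From (\ref{eq:FCVB=00003DCE}), $\widehat{\theta_i}^{[\nu]}$ is the $\arg\max$ over $\theta_i$ of $f(\xBold,\theta_i,\theta_{\eta_{\itime}}=\widehat{\theta_{\eta_{\itime}}}^{[\nu]})$, hence a deterministic function of the fixed data $\xBold$ and of the current neighbour estimates $\widehat{\theta_{\eta_{\itime}}}^{[\nu]}=\{\widehat{\theta}_{\eta_i^{BW}}^{[\nu-1]},\widehat{\theta}_{\eta_i^{FW}}^{[\nu]}\}$ alone. After fixing a deterministic tie-breaking rule for the $\arg\max$, identical neighbour estimates produce an identical $\widehat{\theta_i}$. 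The key implication to establish, the FCVB analogue of (\ref{eq:chap6:proof_rho_tau}), is
\[
\sum_{j\in\eta_i}\tau_j^{[\iIVB-1]}=0\ \Rightarrow\ \widehat{\theta_i}^{[\nu-1]}=\widehat{\theta_i}^{[\nu]},\ \iinn
\]
which should follow because $\tau_j^{[\iIVB-1]}=0$ for every $j\in\eta_i$ (together with the induction hypothesis $\varrho=\tau$ and the convention that a skipped marginal retains its stored value) forces $\widehat{\theta_{\eta_{\itime}}}^{[\nu]}=\widehat{\theta_{\eta_{\itime}}}^{[\nu-1]}$, so the deterministic $\arg\max$ reproduces the previous estimate.

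With this implication in hand I would run the same induction on $\nu$ as in Lemma \ref{lem:(Accelerated-IVB-algorithm)}: comparing the FCVB stopping indicator built from (\ref{eq:chap6:FCVB-stopping-rule}) with the one-to-many update rule (\ref{eq:chap6:one-to-many:b}), the displayed implication shows that whenever the accelerated scheme sets $\tau_i^{[\nu]}=0$ on the grounds that all neighbours have converged, the exact-equality test independently returns $\varrho_i^{[\nu]}=0$, while in the remaining case both indicators defer to the same test $\widehat{\theta_i}^{[\nu-1]}\overset{?}{=}\widehat{\theta_i}^{[\nu]}$. Hence $\varrho_{\itime}^{[\iIVB]}=\tau_{\itime}^{[\iIVB]}$ for all $\iinn$ at every cycle, the two schemes skip and recompute exactly the same marginals, and they reach the common termination $\tau_{\itime}^{[\nu_c]}=0$, $\forall\iinn$, after the same number of cycles with identical outputs $\widehat{\theta_i}^{[\nu]}$.

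The main obstacle I anticipate is the forward/backward asymmetry of the in-cycle update order encoded in $\widehat{\theta_{\eta_{\itime}}}^{[\nu]}=\{\widehat{\theta}_{\eta_i^{BW}}^{[\nu-1]},\widehat{\theta}_{\eta_i^{FW}}^{[\nu]}\}$: the backward block is read from cycle $\nu-1$ whereas the forward block is read from the current cycle $\nu$, so the "unchanged neighbour" hypothesis must be invoked at different cycles for the two blocks, and for the forward block it relies on the separate invariant that a \emph{skipped} marginal keeps its last computed value $\widehat{\theta_j}^{[\nu]}=\widehat{\theta_j}^{[\nu-1]}$. The induction therefore has to carry this frozen-value invariant alongside $\varrho=\tau$. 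A secondary but genuine issue is the possible non-uniqueness of the $\arg\max$ in (\ref{eq:FCVB=00003DCE}); without an explicit tie-breaking convention the map from neighbour estimates to $\widehat{\theta_i}$ need not be single-valued and the equivalence could fail on tie configurations, so I would fix that convention at the outset, after which the discreteness of $\vtheta$ makes the remaining bookkeeping cleaner than in the IVB counterpart, with no threshold slack to control.
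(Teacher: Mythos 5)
Your proof is correct, but it takes a more self-contained route than the paper does. The paper disposes of this lemma in one sentence: it observes that the exact-equality stopping rule (\ref{eq:chap6:FCVB-stopping-rule}) is precisely the condition that the Kolmogorov--Smirnov distance between the consecutive degenerate FCVB-marginals $\widetilde{\fdelta}^{[\nu]}(\theta_{i}|\xBold)$ and $\widetilde{\fdelta}^{[\nu-1]}(\theta_{i}|\xBold)$ is zero, so the lemma is an immediate corollary of Lemma \ref{lem:(Accelerated-IVB-algorithm)} with $\xi=0$ --- which matches your opening observation that no threshold hypothesis is needed here. You instead re-derive the induction from scratch, replacing the distributional implication (\ref{eq:chap6:proof_rho_tau}) by its point-estimate analogue via determinism of the $\arg\max$ map in (\ref{eq:FCVB=00003DCE}). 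What your route buys is explicitness about two things the paper's reduction (and indeed its proof of Lemma \ref{lem:(Accelerated-IVB-algorithm)} itself) leaves silent: the frozen-value invariant that a skipped marginal retains its stored estimate, which is what lets the ``unchanged neighbour'' hypothesis be invoked at cycle $\nu-1$ for the backward block and at cycle $\nu$ for the forward block; and the tie-breaking convention needed for the $\arg\max$ to be single-valued, without which neither algorithm is even a well-defined function and the claimed output equality could fail on tie configurations. What the paper's route buys is brevity, at the cost of inheriting these implicit assumptions; your version is the one a careful reader would want, and it is consistent with, rather than in conflict with, the published argument.
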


\begin{proof}
The Lemma is a simple consequence of Lemma \ref{lem:(Accelerated-IVB-algorithm)},
because the stopping rule (\ref{eq:chap6:FCVB-stopping-rule}) is
equivalent to the convergence condition $KS_{\vtheta_{\itime}}=0,$
$\iinn$, where $KS_{\vtheta_{\itime}}$ in this case is the KS distance
between $\widetilde{\fdelta}^{[\nu]}(\theta_{i}|\xBold)$ and $\widetilde{\fdelta}^{[\nu-1]}(\theta_{i}|\xBold)$
in (\ref{eq:FCVB}-\ref{eq:FCVB=00003DCE}).
\end{proof}
\textbf{}
\begin{algorithm}
\textbf{Initialization}: set $\tau_{\itime}=1$, $\forall\iinn$

\textbf{Iteration:} 

For $\nu=1,2,\ldots$, do \{

For $i=1,\ldots n$, do \{ 

if $\tau_{\itime}=1$, \{ 

evaluate (\ref{eq:FCVB=00003DCE})

if $\widehat{\theta_{i}}^{[\nu-1]}=\widehat{\theta_{i}}^{[\nu]}$,
\{ set $\tau_{\itime}=0$ \} 

else: \{set $\tau_{j}=1$, $\forall j:\ \eta_{j}\ni i$ \}\}\}\}

\textbf{Termination: }stop if\textbf{ $\tau_{\itime}=0$, $\forall\iinn$}

\textbf{\caption{\label{alg:Accel-FCVB-discrete}Accelerated FCVB for discrete r.v.
$\protect\vtheta$}
}
\end{algorithm}

\section{VB-based inference for the HMC \label{sec:chap6:VB-infer-for-HMC}}

To the best of our knowledge, the VB methodology for computation in
HMC with known parameters (\ref{eq:ch6:Observation},\ref{eq:HMC})
has not been reported in the literature. In this section, we elaborate
the VB and FCVB approaches for this model, as well as the novel accelerated
schemes, presented in Section \ref{subsec:Accelerated-VB-algorithm}
and \ref{subsec:Accelerated-FCVB-algorithm}. We emphasize the novel
computation flows that result, comparing them to VA and confirming
that the Accelerated FCVB solution is an improved version of ICM,
but now furnished with a Bayesian justification and perspective.

\subsection{IVB algorithms for the HMC \label{subsec:VB-for-HMC}}

In common with the binary-tree approach in the FB and VA methods,
the VB approximation also exploits the Markov property (\ref{eq:MarkovProperty})
and provides two approximations for computation on the forward and
backward trajectories. The computational load of VB for the HMC is
therefore $O(\nu_{c}nM^{2})$ MULs.

Let us define an independent class, $\calF_{c}$, of $n$ variables
for the label field: $\breve{f}(L_{n}|\xBold_{n})=\prod_{i=1}^{n}\breve{f}(l_{i}|\xBold_{n})$.
By substituting the Markov property (\ref{eq:MarkovProperty}) into
the general binary factorization (\ref{eq:FB=00003Dgeneral}) and
then applying the IVB algorithm for CI structure (\ref{eq:IVB for CI}),
we can find the VB-smoothing marginals, $\ftilde(l_{i}|\xBold_{n})$,
feasibly: 
\begin{align}
\ftilde^{[\nu]}(l{}_{i}|\xBold_{n}) & \propto\ftilde^{[\nu-1]}(\xBold_{i+1:n}|l_{i})\ftilde^{[\nu]}(l_{i}|\xBold_{i}),\ \iton\label{eq:VB=00003Dsmoothing}
\end{align}
where the VB-forward filtering distributions $\ftilde(l_{i}|\xBold_{\itime})$
and VB-backward observation models $\ftilde(\xBold_{i+1:n}|l_{i})$
can be defined as follows:

\begin{eqnarray}
\ftilde^{[\nu]}(l_{1}|\xBold_{1})\equiv\ftilde^{[\nu]}(l_{1}|x_{1}) & \propto & f(x_{1}|l_{1})f(l_{1}|p)\label{eq:VB=00003DFW-BW}\\
\ftilde^{[\nu]}(l_{i}|\xBold_{i}) & \propto & \exp(E_{\ftilde^{[\nu]}(l_{i-1}|\xBold_{\ndata})}\log f(x_{i},l_{i}|l_{i-1},\TBold))\nonumber \\
\ftilde^{[\nu-1]}(\xBold_{i+1:n}|l_{i}) & \propto & \exp(E_{\ftilde^{[\nu-1]}(l_{i+1}|\xBold_{n})}\log f(l_{i+1}|l_{i},\TBold))\nonumber 
\end{eqnarray}
for $i\in\{2,\ldots,n\}$. Note that, although the expectations are
taken over $\ftilde^{[\nu]}(l_{i-1}|\xBold_{\ndata})$ and $\ftilde^{[\nu]}(l_{i+1}|\xBold_{\ndata})$
in (\ref{eq:VB=00003DFW-BW}), the notation $\xBold_{i}$ and $\xBold_{i+1:n}$
in distributions $\ftilde^{[\nu]}(l_{i}|\xBold_{i})$ and $\ftilde^{[\nu-1]}(\xBold_{i+1:n}|l_{i})$
are preserved in order to emphasize their forward and backward meaning,
respectively. 

For the discrete label field, the VB-marginals are, of course, always
of multinomial form, i.e. $\fnutilde{\nu}(l{}_{i}|\xBold_{n})=Mu_{l_{i}}(p_{i}^{[\nu]})$,
$i=1,\dots,n$. Moreover, substituting (\ref{eq:ch6:Observation},\ref{eq:HMC})
to (\ref{eq:VB=00003Dsmoothing}), via (\ref{eq:VB=00003DFW-BW}),
the iterative updates for shaping parameters $p_{i}^{[\nu]}$ are
revealed:

\begin{align}
p_{1}^{[\nu]} & \propto\exp(\log\Psi_{1}+(\log\TBold')p_{2}^{[\nu-1]}+p)\label{eq:VB=00003Dshaping}\\
p_{i}^{[\nu]} & \propto\exp(\log\Psi_{i}+(\log\TBold')p_{i+1}^{[\nu-1]}+(\log\TBold)p_{i-1}^{[\nu]})\nonumber 
\end{align}
for $i\in\{2,\ldots,n\}$. At convergence $\iIVB=\nIVB$, the VB inference
for HMC state is, from (\ref{eq:VB=00003Dsmoothing}): 
\begin{equation}
\ftilde^{[\nIVB]}(L_{n}|\xBold_{n})=\prod_{i=1}^{n}\ftilde^{[\nIVB]}(l_{i}|\xBold_{n})\label{eq:VB-HMC}
\end{equation}

The associate VB MAP estimate is, characteristically, the set of VB-marginal
MAP estimates: $\widehat{l_{i}}=\arg\max_{l_{i}}\ftilde(l_{i}|\xBold_{n})$,
$\iinn$, which will be used as state estimates of HMC.

\subsubsection{IVB stopping rule for the HMC via KS distance}

The recursive update of shaping parameters (\ref{eq:VB=00003Dshaping})
per IVB cycle requires $O(2nM^{2})$ MULs, $O(nM)$ EXPs and $O(2nM^{2})$
ADDs. Because the cost of MULs dominates the others, the computational
load of each IVB cycle is almost the same as that of FB (\ref{eq:alpha}-\ref{eq:gamma}).
Nominally, then, for $\nu_{c}$ IVB cycles at convergence, the VB
algorithm (Algorithm \ref{alg:IVB-HMC}) is $\nu_{c}$ times slower
than the FB algorithm. However, with the accelerated scheme of Section
\ref{subsec:Accelerated-VB-algorithm}, the Accelerated IVB for the
HMC (Algorithm \ref{alg:Accel-IVB-HMC}) is only $\eIVB$ times slower
than the FB algorithm, where $1\leq\eIVB\leq\nu_{c}$, as we will
show in simulation (Chapter \ref{=00005BChapter 8=00005D}).

\begin{algorithm}
\textbf{Initialization}: initialize $p_{i}^{[0]}$, $\forall\iinn$

\textbf{Iteration:} 

For $\nu=1,2,\ldots$, do \{

For $i=1,\ldots n$, do \{ 

evaluate (\ref{eq:VB=00003Dshaping}) 

if $KS_{\itime}^{[\nu]}\leq\xi$, \{ set $\varrho_{\itime}=0$ \} 

else: \{set $\varrho_{\itime}=1$\}\}\}\}

\textbf{Termination: }stop if\textbf{ $\varrho_{\itime}=0$, $\forall\iinn$}

\textbf{Return} $\widehat{l}_{i}=\boldsymbol{\epsilon}(k_{i})$, with
$k_{i}=\arg\max_{k}(p_{k,i})$, $i=1,\dots,n$. 

\caption{\label{alg:IVB-HMC} IVB algorithm for the HMC }
\end{algorithm}
$\ $

\begin{algorithm}
\textbf{Initialization:} initialize $p_{i}^{[0]}$ and set $\tau_{i}=1$,
$\iinn$

\textbf{Iteration:} 

For $\nu=1,2,\ldots$, do \{

For $i=1,\ldots n$, do \{ 

if $\tau_{\itime}=1$, \{ 

evaluate (\ref{eq:VB=00003Dshaping}) 

if $KS_{\itime}^{[\nu]}\leq\xi$, \{ set $\tau_{\itime}=0$ \} 

else: \{set $\tau_{j}=1$, $\forall j:\ \eta_{j}\ni i$ \}\}\}\}

\textbf{Termination:} stop if $\tau_{i}=0$, $\forall\iinn$. 

\textbf{Return} $\widehat{l}_{i}=\boldsymbol{\epsilon}(k_{i})$, with
$k_{i}=\arg\max_{k}(p_{k,i})$, $i=1,\dots,n$. 

\caption{\label{alg:Accel-IVB-HMC}Accelerated IVB algorithm for the HMC }
\end{algorithm}

\subsubsection{IVB Stopping rule for the HMC via KLD \label{sec:KLD-for-HMC}}

For comparison in simulations later, let us also consider the traditional
stopping rule for the IVB algorithms (Algorithm \ref{alg:IVB-HMC}-\ref{alg:Accel-IVB-HMC})
involving computation of the KLD. By the definition, the KLD for VB
in the HMC case can be evaluated as:

\[
KLD_{\ftilde||f}=E_{\ftilde(L_{n}|\xBold_{n})}\log\ftilde(L_{n}|\xBold_{n})-E_{\ftilde(L_{n}|\xBold_{n})}\log f(L_{n}|\xBold_{n})
\]
where $\ftilde(L_{n}|\xBold_{n})=\prod_{\itime=1}^{\ndata}\ftilde(l_{\itime}|\xBold_{n})$.
From IVB algorithm (Section \ref{subsec:VB-for-HMC}), the first term
of $KLD_{\ftilde||f}$ is equal to:

\[
\sum_{i=1}^{n}E_{\ftilde(l_{i}|\xBold_{n})}\log\ftilde(l_{i}|\xBold_{n})=\sum_{i=1}^{n}p_{i}'\log p_{i}
\]
From forward factorization of the posterior HMC $f(L_{n}|\xBold_{n})$
(\ref{eq:TERNARY}), the second term of $KLD_{\ftilde||f}$ is:

\begin{eqnarray*}
E_{\ftilde(L_{n}|\xBold_{n})}\log f(L_{n}|\xBold_{n}) & = & \sum_{i=1}^{n-1}E_{\ftilde(l_{i}|\xBold_{n})\ftilde(l_{i+1}|\xBold_{n})}\log f(l_{i}|l_{i+1},\xBold_{i})+E_{\ftilde(l_{n}|\xBold_{n})}\log f(l_{n}|x_{n})\\
 & = & \sum_{i=1}^{n-1}p_{i}'\left(\log A_{i}\right)p_{i+1}+p_{n}'\log\alpha_{n}
\end{eqnarray*}
where $A_{i}$ is computed via Corollary \ref{cor:FB=00003DA,B}. 

\subsection{FCVB algorithms for the HMC \label{subsec:chap6:FCVB-algorithms-HMC}}

By definition, the FCVB-marginals $\widetilde{\fdelta}^{[\nu]}(l_{i}|\xBold_{n})=Mu_{l_{i}}(\widehat{p}_{i}^{[\nu]})$
are of the same multinomial form as the VB-marginals, but with different
shaping parameters, $\widehat{p}_{i}^{[\nu]}$, as specified next. 

\subsubsection{Accelerated FCVB for the homogeneous HMC}

Similarly to the IVB algorithm for HMC (\ref{eq:VB=00003Dsmoothing}-\ref{eq:VB=00003DFW-BW}),
the shaping parameters $\widehat{p}_{i}^{[\nu]}$ of FCVB-marginals
(as illustrated in Fig \ref{fig:VBV}) can be evaluated in a similar
way of (\ref{eq:VB=00003Dshaping}):

\begin{eqnarray}
\log\widehat{p}_{1}^{[\nu]} & = & \log\Psi_{1}+\log\TBold(\widehat{k}_{2}^{[\nu-1]},:)'+\log p+const\label{eq:FCVB=00003Dshaping}\\
\log\widehat{p}_{i}^{[\nu]} & = & \log\Psi_{i}+\log\vartheta(:,\widehat{k}_{i+1}^{[\nu-1]},\widehat{k}_{i-1}^{[\nu]})+const\nonumber \\
\log\widehat{p}_{n}^{[\nu]} & = & \log\Psi_{n}+\log\TBold(:,\widehat{k}_{n-1}^{[\nu]})+const\nonumber 
\end{eqnarray}
where $\TBold(k,:)$ , $\TBold(:,k)$ are $k$th row, column of $\TBold$,
respectively, and: 
\begin{equation}
\widehat{k}_{i}^{[\nu]}=\arg\max_{k}(\log\widehat{p}_{k,i}^{[\nu]}),\ i=\{1,\ldots,n\}\label{eq:FCVB=00003Dk}
\end{equation}

The $M^{3}$ precomputed values of $\vartheta$ are defined as:

\begin{equation}
\log\vartheta(:,\widehat{k}_{i+1}^{[\nu-1]},\widehat{k}_{i-1}^{[\nu]})=\log\TBold(\widehat{k}_{i+1}^{[\nu-1]},:)'+\log\TBold(:,\widehat{k}_{i-1}^{[\nu]})\label{eq:Tdouble}
\end{equation}

For the homogeneous HMC, the purpose of extra precomputed step (\ref{eq:Tdouble})
is to reduce the complexity of FCVB by half of the additions. Although
this method requires extra memory of $O(M^{3})$ for $\vartheta$,
it is safe to assume that $O(M^{3})\leq O(n)$, where $O(n)$ is the
memory requirement of the non-precomputed scheme.

Note that, in the traditional Iterative FCVB (Algorithm \ref{alg:FCVB-HMC}),
the total computational load up for the $\nu_{c}$ cycles is $O(\nu_{c}nM)$.
However, in accelerated scheme (Algorithm \ref{alg:Accel-FCVB-HMC}),
we only need to update a subset, $\kappa^{[\nu]}$, of the $n$ labels
in the $\nu$th cycle, where $0\leq\kappa^{[\nu]}=\sum_{\itime=1}^{\ndata}\tau_{\itime}^{[\iIVB]}\leq\ndata$.
At convergence, the computation cost of accelerated FCVB is therefore
$O(\eIVB nM)$, where $1\leq\eIVB\TRIANGLEQ\frac{\sum_{\nu=1}^{\nu_{c}}\kappa^{[\nu]}}{n}\leq\nIVB$,
as discussed in Section \ref{subsec:chap6:cost-of-Accelerated IVB}.

\textbf{}
\begin{algorithm}
\textbf{Initialization}: initialize $k_{i}^{[0]}\in\{1,\dots,M\}$,
evaluate (\ref{eq:Tdouble}), $\forall\iinn$

\textbf{Iteration:} 

For $\nu=1,2,\ldots$, do \{

For $i=1,\ldots n$, do \{ 

evaluate (\ref{eq:FCVB=00003Dshaping})

if $k_{i}^{[\nu-1]}=k_{i}^{[\nu]}$, \{ set $\varrho_{\itime}=0$
\} 

else: \{set $\varrho_{\itime}=1$\}\}\}\}

\textbf{Termination: }stop if \textbf{$\varrho_{\itime}=0$, $\forall\iinn$}

\textbf{Return} $\widehat{l}_{i}=\boldsymbol{\epsilon}(\widehat{k}_{i}^{[\nIVB]})$,
$i=1,\dots,n$. 

\textbf{\caption{\label{alg:FCVB-HMC}Iterative FCVB for the homogeneous HMC}
}
\end{algorithm}
\textbf{}
\begin{algorithm}
\textbf{Initialization}: initialize $k_{i}^{[0]}\in\{1,\dots,M\}$,
evaluate (\ref{eq:Tdouble}) and set $\tau_{\itime}=1$, $\forall\iinn$

\textbf{Iteration:} 

For $\nu=1,2,\ldots$, do \{

For $i=1,\ldots n$, do \{ 

if $\tau_{\itime}=1$, \{ 

evaluate (\ref{eq:FCVB=00003Dshaping})

if $k_{i}^{[\nu-1]}=k_{i}^{[\nu]}$, \{ set $\tau_{\itime}=0$ \} 

else: \{set $\tau_{j}=1$, $\forall j:\ \eta_{j}\ni i$ \}\}\}\}

\textbf{Termination: }stop if\textbf{ $\tau_{\itime}=0$, $\forall\iinn$}

\textbf{Return} $\widehat{l}_{i}=\boldsymbol{\epsilon}(\widehat{k}_{i}^{[\nIVB]})$,
$i=1,\dots,n$. 

\textbf{\caption{\label{alg:Accel-FCVB-HMC}Accelerated FCVB for the homogeneous HMC}
}
\end{algorithm}
\begin{figure}
\begin{centering}
\includegraphics[width=0.8\columnwidth]{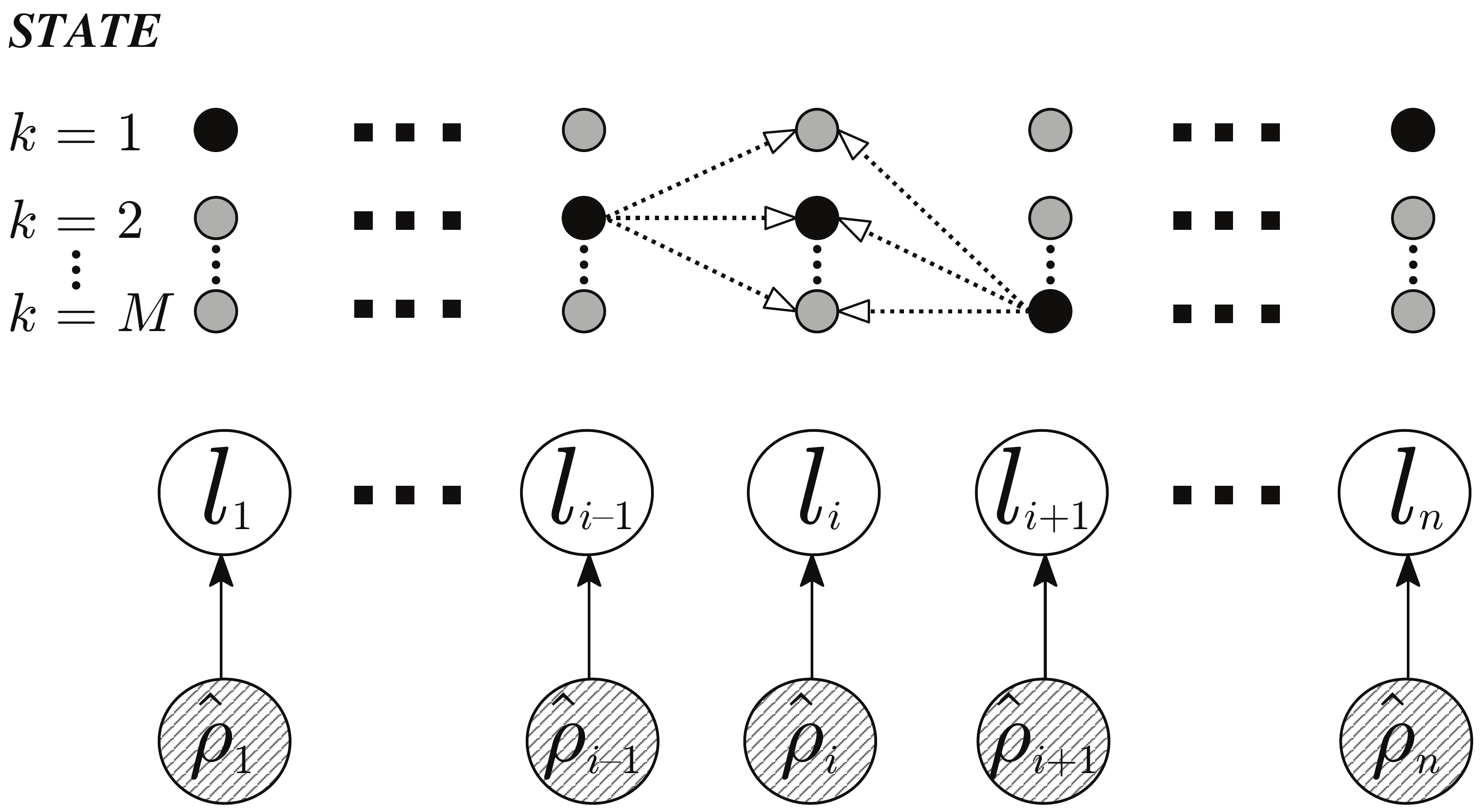}
\par\end{centering}
\caption{FCVB-marginals for HMC posterior (Fig. \ref{fig:FB=00003Db}): trellis
diagram (top) and DAG (bottom). Dotted arrows denote the CE substitutions
in the Iterative FCVB algorithm. Black circles denote the mode of
FCVB-marginal. \label{fig:VBV}}
\end{figure}

\subsubsection{Further acceleration via a bubble-sort-like procedure \label{ch6:sub:bubble-sort-like}}

\begin{table}
\begin{centering}
\begin{tabular}{|c|c|c|c|c|c|}
\hline 
 & ML & FCVB & VA & FB & VB\tabularnewline
\hline 
\hline 
EXP & -- & -- & -- & -- & $O(nM\nu_{c})$\tabularnewline
\hline 
MUL & -- & -- & -- & $O(2nM^{2})$ & $O(2nM^{2}\nu_{c})$\tabularnewline
\hline 
ADD & -- & $O(nM\nu_{c})$ & $O(nM^{2})$ & $O(2nM^{2})$ & $O(2nM^{2}\nu_{c})$\tabularnewline
\hline 
MAX & $O(nM)$ & $O(nM\nu_{c})$ & $O(nM^{2})$ & $O(nM)$ & $O(nM)$\tabularnewline
\hline 
Memory & -- & $O(n+M^{3})$ & $O(nM)$ & $O(2nM)$ & $O(nM)$\tabularnewline
\hline 
\end{tabular}
\par\end{centering}
\caption{\label{tab:ch6:ComputationalComplexity}Computational and memory cost
of algorithms for homogeneous HMC. $\nu_{c}$ is the number of IVB
cycles at convergence (for accelerated scheme, $\nu_{c}$ is replaced
by $\protect\eIVB$). From the lowest to highest (typical) computational
load: ML (Section \ref{subsec:ML}), FCVB (Algorithm \ref{alg:FCVB-HMC}-\ref{alg:Accel-FCVB-HMC}),
VA (Algorithm \ref{alg:chap6:Viterbi-Algorithm}), FB (Algorithm \ref{alg:chap6:FB-algorithm})
and VB (Algorithm \ref{alg:IVB-HMC}-\ref{alg:Accel-IVB-HMC}).}
\end{table}
For comparing the computational complexity of above algorithms in
the homogeneous HMC, a summary of their computational and memory cost
is given in Table \ref{tab:ch6:ComputationalComplexity}. Then, from
Table \ref{tab:ch6:ComputationalComplexity}, it looks like each FCVB
cycle must be slower than ML, since each FCVB cycle seems to always
require more operators than ML. In practice, however, each FCVB cycle
can be implemented more quickly than ML. To achieve this, let us consider
the computational load in hardware level.

The task of finding the maximum value of length-$M$ vector requires
$M$ MAX operations, in which each MAX involves four steps (i-iv),
as illustrated in Algorithm \ref{alg:Traditional-MAX}. Similarly,
the task of finding the sum of a length-$M$ vector requires $M$
ADD operations, in which each ADD requires two steps (a-b), as illustrated
in Algorithm \ref{alg:Traditional-ADD}. Hence, in practice, one MAX
is often considered to be equivalent to $2$ ADDs {[}\citet{ch8:1Max_2Sum}{]}.
In terms of computational load, if both maximum and sum are required,
they can be computed more quickly via a Max-Sum combination, defined
in Algorithm \ref{alg:MAX-ADD}. 

For further speed-up, notice that if the step (ii) in one MAX operation
(Algorithm \ref{alg:Traditional-MAX}) does not detect any higher
value than current maximum value, steps (iii) and (iv) will then not
be implemented in that MAX operation. We can, therefore, design a
pilot-based MAX scheme, which initializes the current maximum value
with the \textit{pilot} element in length-$M$ vector, as illustrated
in Algorithm \ref{alg:pilot-MAX}. The pilot element can be chosen
as any of the $M$ vector elements. Therefore, in the ideal case where
the pilot element is the true maximum, steps (iii)-(iv) can be avoided
completely, i.e. the cost of pilot-based MAX (i.e. each iteration
in Algorithm \ref{alg:pilot-MAX-ADD}) can be as low as half of that
of conventional MAX (i.e. each iteration in Algorithm \ref{alg:Traditional-MAX}).
Likewise, in the ideal case, one pilot-MAX-ADD (i.e. each iteration
in Algorithm \ref{alg:pilot-MAX-ADD}) only needs three steps (i)-(ii)-(b),
which means its cost is in the range $\eta\in\left[\frac{3}{4},\frac{5}{4}\right]$
of the cost of conventional MAX (in Algorithm \ref{alg:Traditional-MAX}).

Now, because each iterative FCVB cycle (\ref{eq:FCVB=00003Dshaping}-\ref{eq:FCVB=00003Dk})
involves Max-Sum scheme and ML requires traditional Max scheme, the
considerations above can be applied to comparing the relative costs
of FCVB and ML. Since the current FCVB cycle relies on the label estimates
in the previous cycle, the latter can be used as pilot elements in
current FCVB cycle. Therefore, the pilot-Max-Sum scheme (Algorithm
\ref{alg:pilot-MAX-ADD}) is applicable in FCVB cycles. In contrast,
there is no scheme for picking pilot elements in ML, and therefore
the ML has to rely on traditional Max scheme. For this reason, the
cost of each pilot-based FCVB cycle is in the range $\eta\in\left[\frac{3}{4},\frac{5}{4}\right]$
of the cost of traditional ML, i.e. it is possible for each FCVB cycle
to run faster than ML.

Notice that, the traditional MAX (Algorithm \ref{alg:Traditional-MAX})
is simply the first step in the \textit{bubble sort} algorithm {[}\citet{ch5:BK:Bible:Algorithms}{]},
in which the maximum value ``floats'' up progressively after each
comparison step (ii). Then, the pilot-based MAX procedure above, which
requires \textit{a priori} knowledge on pilot element, can be loosely
regarded as the first step of a pilot-based bubble sort, i.e. the
maximum value will ``float'' up more quickly, given a good pilot
element.

Obviously, we may have more than one way to make FCVB cycle run faster
than ML. The above procedure is merely to illustrate such a possibility. 

\begin{algorithm}
\textbf{Initialization}: set $k_{max}=1$, $k=1$ and $v_{max}=v_{1}$

\textbf{Iteration: }

(i) increase pointer by 1 (i.e. $k\leftarrow k+1$) and retrieve $v_{k}$ 

(ii) 1 binary comparison between $v_{max}$ and $v_{k}$ 

If $v_{\max}<v_{k}$ in (ii), do: \{

(iii) 1 storage for new maximum value (i.e. $v_{max}\leftarrow v_{k}$)

(iv) 1 storage for position of new maximum value (i.e. $k_{max}\leftarrow k$)
\}

\textbf{Termination: }stop if\textbf{ $k=\nstate$}

\textbf{Return: }$v_{max}$ and $k_{max}$

\caption{\label{alg:Traditional-MAX}Traditional Max for finding maximum  $v_{max}$
of $v=[v_{1},\ldots,v_{\protect\nstate}]'$}
\end{algorithm}
\begin{algorithm}
\textbf{Initialization}: set $v_{sum}=0$ and $k=1$

\textbf{Iteration:}

(a) increase pointer by 1 (i.e. $k\leftarrow k+1$) and retrieve $v_{k}$ 

(b) 1 binary addition $v_{sum}\leftarrow v_{sum}+v_{k}$ 

\textbf{Termination: }stop if\textbf{ $k=\nstate$}

\textbf{Return: }$v_{sum}$ 

\caption{\label{alg:Traditional-ADD}Traditional Sum for finding sum $v_{sum}$
of $v=[v_{1},\ldots,v_{\protect\nstate}]'$}
\end{algorithm}
\begin{algorithm}
\textbf{Initialization}: initialize $k_{pilot},$ set $k=1$, and
$v_{max}=v_{k_{pilot}}$

\textbf{Iteration: }

Implement (i-iv) in Algorithm \ref{alg:Traditional-MAX}

\textbf{Termination: }stop if\textbf{ $k=\nstate$}

\textbf{Return: }$v_{max}$ and $k_{max}$

\caption{\label{alg:pilot-MAX}Pilot-based Max for finding maximum $v_{max}$
of $v=[v_{1},\ldots,v_{\protect\nstate}]'$}
\end{algorithm}
\begin{algorithm}
\textbf{Initialization}: set $k_{max}=1$, $k=1$, $v_{max}=v_{1}$
and $v_{sum}=0$

\textbf{Iteration: }

Implement (i)-(ii)-(b)-(iii)-(iv) in Algorithm \ref{alg:Traditional-MAX}-\ref{alg:Traditional-ADD}.

\textbf{Termination: }stop if\textbf{ $k=\nstate$}

\textbf{Return: }$v_{max}$, $k_{max}$ and $v_{sum}$

\caption{\label{alg:MAX-ADD}Max-Sum for finding maximum $v_{max}$ and sum
$v_{sum}$ of $v=[v_{1},\ldots,v_{\protect\nstate}]'$}
\end{algorithm}
\begin{algorithm}
\textbf{Initialization}: initialize $k_{pilot}$ and set $k=1$, $v_{max}=v_{k_{pilot}}$
and $v_{sum}=0$

\textbf{Iteration: }

Implement (i)-(ii)-(b)-(iii)-(iv) in Algorithm \ref{alg:Traditional-MAX}-\ref{alg:Traditional-ADD}.

\textbf{Termination: }stop if\textbf{ $k=\nstate$}

\textbf{Return: }$v_{max}$, $k_{max}$ and $v_{sum}$

\caption{\label{alg:pilot-MAX-ADD}Pilot-based Max-Sum for finding maximum
$v_{max}$ and sum $v_{sum}$ of $v=[v_{1},\ldots,v_{\protect\nstate}]'$}
\end{algorithm}

\section{Performance versus computational load \label{sec:chap6:Perform-vs-compu}}

In this section, we will examine the trade-off between performance
and computational costs for each of the algorithms above. For comparison
of estimators performance, the bit-error-rate (BER) is widely adopted
in practice {[}\citet{ch2:bk:SEP:Haykin06}{]}. Since minimizing BER
can also be interpreted as minimizing Hamming distance, the simulation
results can be explained intuitively via the Bayesian risk perspective
(Section \ref{subsec:chap4:Bayes-risk}). 

\subsection{Bayesian risk for HMC inference \label{subsec:chap6:Bayesian-risk-for-HMC}}

Let us incorporate Hamming distance into a loss function $Q(\widehat{L_{n}},L_{n})$,
that quantifies the cost of errors in the estimated HMC , $\widehat{l_{i}}\in\widehat{L_{n}}$,
relative to the simulated field, $l_{i}\in L_{n}$, $\iinn$, as follows:
\[
Q(\widehat{L_{n}},L_{n})=1-\frac{1}{n}\sum_{i=1}^{n}\delta[\widehat{l_{i}}-l_{i}]
\]
where $Q:\ M^{n}\times M^{n}\rightarrow[0,1]$ and $Q(\widehat{L_{n}},L_{n})=0\Leftrightarrow\widehat{L_{n}}=L_{n}$.
As shown in Lemma \ref{lem:chap4:MR-Hamming}, the estimate $\widehat{L_{n}*}$
minimizing expected loss - i.e. the minimum risk (MR) estimate - is:
\begin{equation}
\widehat{l_{i}^{*}}=\arg\max_{l_{i}}f(l_{i}|\xBold_{n})\label{eq:chap8:MR-Hamming}
\end{equation}
where $\widehat{l_{i}^{*}}\in\widehat{L_{n}*}$, $\iinn$. This MR
risk provides insight into the observed trade-off in simulations (Chapter
\ref{=00005BChapter 8=00005D}), as follows:
\begin{itemize}
\item FB can provide MR risk estimate $\widehat{L_{n}*}$ , being a sequence
of marginal MAPs (\ref{eq:chap8:MR-Hamming}), where $f(l_{i}|\xBold_{n})$
are the smoothing marginals already computed in FB. 
\item The performance of VA is close to that of FB, with low computational
load, for two reasons: VA replaces the marginal MAP of the smoothing
marginals, $f(l_{i}|\xBold_{n})$, for FB with the MAP of the CE-based
profile inferences, $\fprofile(l_{i}|\xBold_{n})$, with typically
very little difference between these two estimates. However, the computational
load of $\fprofile(l_{i}|\xBold_{n})$ is very low, compared to that
of $f(l_{i}|\xBold_{n})$. 
\item ML yields the worst performance because it returns the estimates based
on the local observation model, i.e. $\widehat{l_{i}}=\arg\max_{l_{i}}f(x_{i}|l_{i})$,
without any prior regularization, i.e. $f(x_{i}|l_{i})$ is a bad
approximation of $f(l_{i}|\xBold_{n})$ since it does not involve
the HMC structure into account. However, the local  observation structure
makes ML work so fast.
\end{itemize}
An important role for the novel VB-approaches to HMC inference is
in furnishing new trade-offs between computational load and accuracy,
other than the extremes confined by ML on one hand, and FB and VA
on the other. This flexibility is achieved via the following two design
steps:
\begin{itemize}
\item In the first step, VB can return the MAP estimates of the respective
VB-marginals, $\widehat{l_{i}}=\arg\max_{l_{i}}\ftilde(l_{i}|\xBold_{n})$
(\ref{eq:VB=00003Dsmoothing}). Since the VB-marginals are approximated
via HMC-regularized model (\ref{eq:Joint}), this VB point estimate
enjoys far better performance that that of ML. 
\item The second step is to reduce the complexity of VB via the CE approach
in FCVB (\ref{eq:FCVB=00003Dshaping}). Since the FCVB performance
is slightly worse than VB, as illustrated in simulation (Chapter \ref{=00005BChapter 8=00005D}),
its performance trade-off can be explained via VB's structure. 
\end{itemize}

\subsection{Accelerated schemes \label{subsec:chap6:Accelerated-schemes}}

The number $\nIVB$ of IVB cycles in the HMC case will be shown, in
simulations in Chapter \ref{=00005BChapter 8=00005D}, as a factor
of logarithmic of either $\ndata$, the number of data, or $\nstate$,
the number of state. We conjecture that this phenomenon is relevant
to the exponential forgetting property of posterior marginals $f(l_{i}|\xBold_{n})$
(Corollary 2.1 in {[}\citet{ch4:art:JuriLember11}{]}): 
\begin{equation}
||f(l_{i}|\xBold_{i-k:i+k})-f(l_{i}|\xBold_{-\infty:\infty})||\leq\frac{C}{\upsilon^{k-1}},\ k\geq0\label{eq:Lember}
\end{equation}
where $\upsilon$ is a constant, $\upsilon>1$, $C$ is a non-negative
finite random variable, $\xBold_{-\infty:\infty}$ denotes an infinite
number $n$ of data and $||\cdot||$ is the total variation distance
(i.e. $\mathcal{L}_{1}$-norm). This property states that $f(l_{i}|\xBold_{n})$,
with high enough $n$, only depends on a factor of $\log(n)$ data.
Because each IVB-cycle updating in HMC (\ref{eq:VB=00003Dsmoothing})
projects one more backward datum into the VB-marginals $\ftilde(l_{i}|\xBold_{n})$,
we conjecture that we only need a factor of $\log(\ndata)$ IVB cycles
in order for $\ftilde(l_{i}|\xBold_{n})$ to converge, i.e. $\nu_{c}\approx O(\log(n))$. 

For the accelerated schemes, $\nu_{c}$ is replaced by $\eIVB$, which
we will find is almost a constant and close to $1$, on average, for
any value of $n$ in simulations (Chapter \ref{=00005BChapter 8=00005D}).
Hence, the accelerated scheme for FCVB and VB reduces significantly
the computational load of traditional FCVB and VB for the HMC. Perhaps,
this is because these schemes only update the non-converged VB-marginals,
whose number decreases in a factor of $\log(n)$, owing to (\ref{eq:Lember}).
This decrease is, therefore, likely to cancel out the $\nu_{c}\approx O(\log(n))$.

\subsection{FCVB algorithm versus VA \label{subsec:chap6:FCVB-versus-VA}}

By combining the CE approach with the independent class approximation,
$\fdelta\in\mathcal{F}_{\delta}$, FCVB is faster than the non-iterative
VA scheme, despite FCVB being an iterative scheme. From Table \ref{tab:ch6:ComputationalComplexity},
we can see that FCVB reduces the computational load from $O(nM^{2})$
for VA down to $O(nM\nu_{c})$. Hence, the computational load of VA
increases quadratically with $M$, while FCVB's computational load
increases only linearly with $\nstate$. Moreover, from simulations
(Chapter \ref{=00005BChapter 8=00005D}) for FCVB, it will be shown
that $\eIVB\ll\nu_{c}\approx O(\log(\nstate))$ when $\nstate$ is
large.

A key advantage of the FCVB scheme is applicability in many practical
applications. Note that, VA is mostly applied to finite state HMC,
because the conditional CE (\ref{eq:condCE}) is very hard to evaluate
for continuous states, other than in the Gaussian context of the Kalman
filter {[}\citet{ch6:art:VA:continuous_2011}{]}. In contrast, since
Iterative FCVB algorithm is equivalent to the ICM algorithm {[}\citet{ch4:art:ICM:Besag86}{]},
with application in the general Hidden Markov Model (HMM) context,
the Accelerated FCVB - a faster version of ICM - can also be applied
feasibly to the continuous state case. 

Another application of VB and FCVB for the HMC is the online context.
Because VB-marginals only depend on their first order neighbour in
(\ref{eq:VB=00003Dshaping}) and (\ref{eq:FCVB=00003Dshaping}), the
VB-marginal updates in IVB cycles can be run consecutively and in
parallel, i.e. $\ftilde^{[\iIVB]}(l_{\itime-1}|\xBold_{\ndata})$
can be updated right after $\ftilde^{[\iIVB-1]}(l_{\itime}|\xBold_{\ndata})$
is updated, without the need to wait for the $(\iIVB-1)$th IVB cycle
to be finished. If we design a lag-window of duration equal to $\nu_{c}$,
the $\nIVB$ FCVB and VB cycles can be evaluated, consecutively and
in parallel, as an online algorithm and return exactly the same results
as the offline case. 

\section{Summary}

In this chapter, fully Bayesian inference and its computation for
the HMC has been developped, revealing a key insight into all of the
state-of-the-art algorithms, namely FB, VA and ICM; i.e. the Markov
property of the HMC has been shown to be a necessary foundation for
their efficient recursive computational flow. In replacing the exact
marginal computations in FB algorithm, the VA has been shown to produce
CE-based approximate inferences, where CE substitution is used to
reduce the computational load significantly. Importantly, the joint
MAP estimate remains invariant under this CE substitution.

Inspired by this insight, FCVB, which is a CE-based variant of the
independent-structure VB approximation, has been proposed as a VA
variant to bridge the trade-off gap between VA and ML. Although FCVB
has previously been reported as the ICM algorithm in the literature,
this novel VB-based derivation and implementation yields insight into
the regimes of operation where the algorithm is expected to perform
well. Indeed, we will see in simulations in Chapter \ref{=00005BChapter 8=00005D}
that when correlation in the HMC is not too high, FCVB is an attractive
algorithm, since its performance is then close to VA, with much lower
computational load. Empirically, these simulations also show that
the accelerated scheme, as designed in this chapter, reduces the number
of IVB cycles to about one, on average. 

Finally, the accelerated ICM/FCVB algorithm---proposed in this chapter---can
work in both online and offline modes, with no difference at the final
output, as noted in Section \ref{subsec:chap6:FCVB-versus-VA}. In
contrast, FB and VA are exclusively the offline schemes.


\global\long\def\xbold{\mathbf{x}}%

\global\long\def\btheta{\mathbf{\boldsymbol{\theta}}}%

\global\long\def\xdata{x}%

\global\long\def\vtheta{\theta}%

\global\long\def\htheta{\widehat{\theta}}%

\global\long\def\vxi{\xi}%

\global\long\def\vphi{\phi}%

\global\long\def\vpsi{\psi}%

\global\long\def\veta{\eta}%

\global\long\def\REAL{\mathbb{R}}%

\global\long\def\calTheta{\Theta}%

\global\long\def\spaceO{\Omega}%

\global\long\def\calE{\mathcal{E}}%

\global\long\def\calF{\mathcal{F}}%

\global\long\def\ndata{n}%

\global\long\def\nstate{m}%

\global\long\def\itime{i}%

\global\long\def\istate{k}%

\global\long\def\funh#1{h\left(#1\right)}%

\global\long\def\fung#1{g\left(#1\right)}%

\global\long\def\seti#1#2{#1=1,2,\ldots,#2}%

\global\long\def\setd#1#2{\{#1{}_{1},#1{}_{2},\ldots,#1_{#2}\}}%

\global\long\def\TRIANGLEQ{\triangleq}%


\global\long\def\vinterest{\vtheta_{i}}%

\global\long\def\vnuisance{\vtheta_{\backslash i}}%

\global\long\def\zbinary{\vtheta_{\itime},\vtheta_{\backslash\itime}}%

\global\long\def\ftilde{\widetilde{f}}%

\global\long\def\NGauss{\mathcal{N}}%

\global\long\def\CGauss{\mathcal{CN}}%

\global\long\def\xBold{\mathbf{x}}%

\global\long\def\gBold{\mathbf{g}}%

\global\long\def\noise{z}%

\global\long\def\bnoise{\mathbf{z}}%

\global\long\def\Ibold{\mathbf{I}}%

\global\long\def\xmodel{x}%

\global\long\def\AcT{a}%

\global\long\def\ACT{\mathcal{A}}%

\global\long\def\Tdecision{\theta}%

\global\long\def\TDecision{\Theta}%

\global\long\def\Dinference{x}%

\global\long\def\DInference{x}%

\global\long\def\xmodel{x}%

\global\long\def\tmodel{\theta}%

\global\long\def\OSinewave{\Omega}%

\global\long\def\PSymbol{P}%

\global\long\def\qbit{q}%

\global\long\def\asymbol{a}%

\global\long\def\aSymbol{\boldsymbol{a}}%

\global\long\def\SAlphabet{S}%

\global\long\def\MAlphabet{M}%

\global\long\def\ttransform{\theta}%

\global\long\def\ptransform{\phi}%

\global\long\def\ztransformation{\zeta}%

\global\long\def\latransformation{\lambda}%

\global\long\def\ramplitude{r_{a}}%

\global\long\def\muamplitude{\mu_{a}}%

\global\long\def\Ndata{n}%

\global\long\def\rnoise{r_{e}}%

\chapter{The transformed Variational Bayes (TVB) approximation \label{=00005BChapter 7=00005D}}

\section{Motivation \label{sec:chap7:Motivation}}

Although the VB approximation has been proposed as an efficient approximation
for intractable distributions, there is still much room for improvement. 

The VB approximation is, of course, a parametric distribution, $f_{\theta}\TRIANGLEQ f(\theta|s)$,
where $s$ denotes the parameter. For simplicity, let us consider
a binary partition $\theta=\{\zbinary\}$, where, again, $\theta_{\backslash i}$
is the complement set of $\theta_{i}$ in $\vtheta$, with neither
$\theta_{\backslash i}$ nor $\theta_{i}$ is empty. As explained
in Section \ref{subsec:chap4:Variational-Bayes-(VB)}, the VB approximation
reaches a local minimum of Kullback-Leibler divergence $KLD(\tilde{f}_{\theta}||f_{\theta})=E_{\tilde{f}_{\theta}}\log(\tilde{f}_{\theta}/f_{\theta})$,
for approximate distributions $\ftilde_{\vtheta}\in\mathcal{F}_{c}$,
the class of factored distributions (being those for which $\theta_{\backslash i}$
and $\theta_{i}$ are independent, given $s$).

The idea behind the transformed VB (TVB) approximation is illustrated
in Fig. \ref{fig:ch7:TVB}. The transformed distribution $f_{\phi}\TRIANGLEQ f(\phidelay|s')$,
with parameter $s'$ (a function of $s$), should be designed to minimize
the $KLD$ of \textit{its} VB approximation. Since $KLD(\tilde{f}_{\phi}||f_{\phi})<KLD(\tilde{f}_{\theta}||f_{\theta})$,
it follows that $KLD(\tilde{f}_{\theta}^{TVB}||f_{\theta})<KLD(\tilde{f}_{\theta}||f_{\theta})$.

Obviously, VB yields an accurate representation, i.e. $KLD(\tilde{f}_{\theta}||f_{\theta})=0$,
iff $\theta_{i}$, $\theta_{\backslash i}$ are already independent
in $f_{\theta}$. An illustrative context is the multivariate normal
distribution ${\cal N}_{\theta}(0,\Sigma)$. The $KLD$ of its VB
approximation is strictly greater than zero when $\Sigma\neq D$ (diagonal).
Let us consider a rotation operator via eigenvectors of the covariance
matrix, i.e. $\phi=Q^{-1}\theta$, where $\Sigma^{-1}=Q\Lambda Q^{-1}$
and $\Lambda$ is the diagonal eigenvalue matrix. The distribution
of transformed variables $\phi$ is the independent multivariate normal
$N_{\phi}\left(0,\Lambda\right)$, and the VB approximation in this
transformed metric has $KLD(\tilde{f}_{\phi}||f_{\phi})=0$.

\begin{figure}
\begin{centering}
\includegraphics[width=0.8\columnwidth]{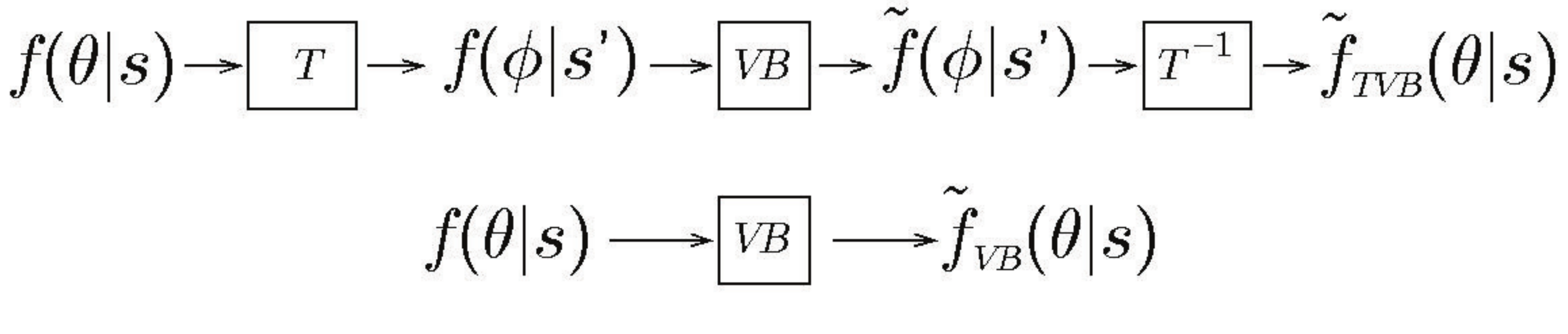} 
\par\end{centering}
\caption{\label{fig:ch7:TVB}Transformed VB (top) and VB (below) approximations.
$T$ denotes a bijective transformation, $T:\protect\vtheta\rightarrow T(\protect\vtheta)=\phi$.}
\end{figure}
\begin{sidewaysfigure}
\begin{centering}
\includegraphics[width=1\textwidth]{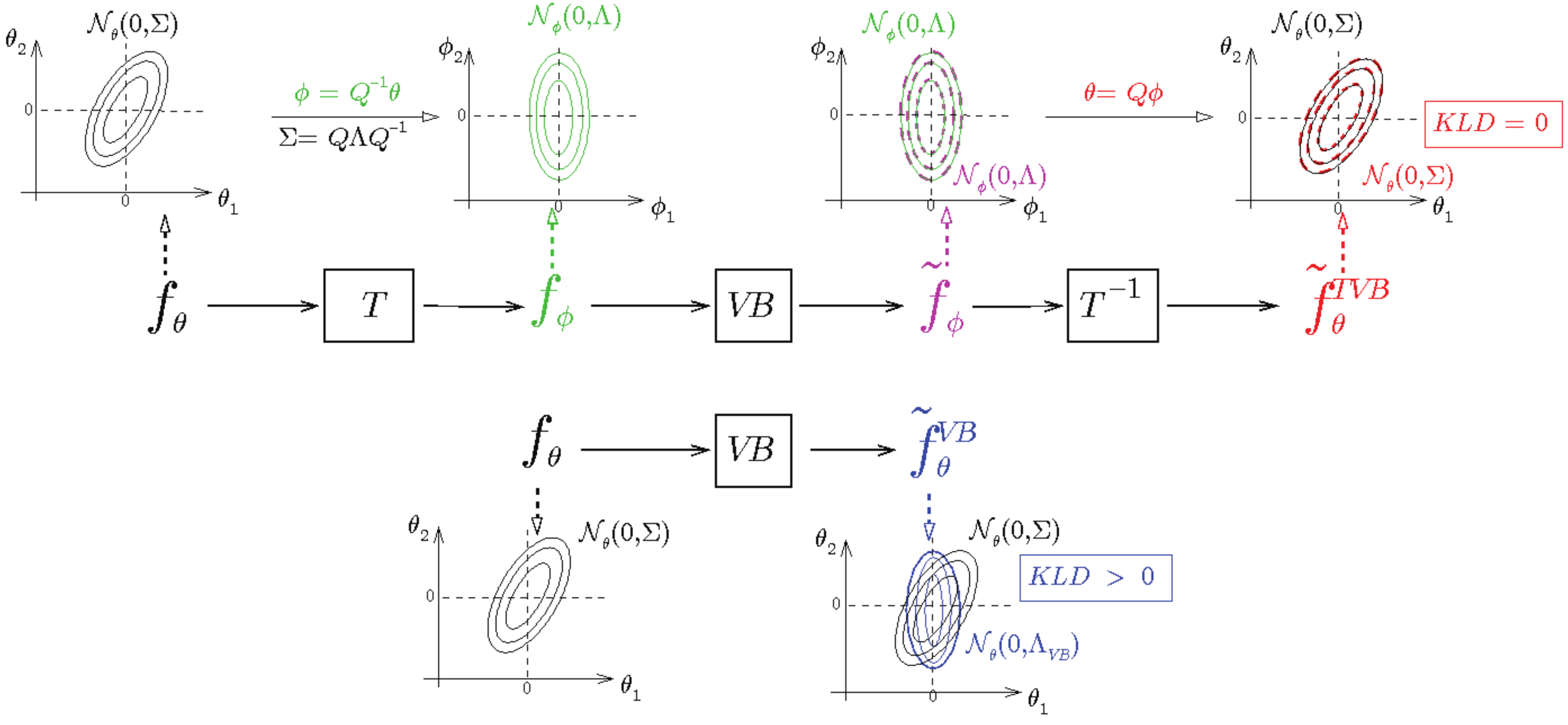}
\par\end{centering}
\caption{\label{fig:ch7:TVB_Normal}Transformed VB (top) with orthogonalization
and VB (below) approximations for the multivariate normal distribution.}

\end{sidewaysfigure}

\section{Transformed VB approximation \label{sec:chap7:Transformed-VB-approximation}}

In the literature, Cox-Reid orthogonalization {[}\citet{ch7:origin:Cox_Reid:87}{]}
has been applied broadly to parameter estimation to achieve robustness,
yet very few paper consider this approach for Bayesian inference.
Based on the Fisher information matrix, its approach is to decouple
a joint distribution up to second order. A slightly more general transformation
will be proposed in this section in order to increase the quality
of the VB approximation. 

\subsection{Distributional transformation}

Let us consider a distribution $f(\vtheta)$ of continuous r.v. $\vtheta$.
Then, given a bijective mapping $\vphi=\fung{\vtheta}$, we can derive
the distribution $f(\vphi)$ of continuous r.v. $\vphi$, as follows
{[}\citet{ch7:bk:Transform:theorem63,ch7:bk:transform:theorem09}{]}:

\[
f(\vphi)=\left.\frac{f(\vtheta)}{J(\vtheta)}\right|_{\vtheta=g^{-1}(\vphi)}
\]
where $J(\vtheta)\TRIANGLEQ\left|\det\left(\frac{d\vphi}{d\vtheta}\right)\right|$
is the Jacobian determinant of the transformation, $\vphi=\fung{\vtheta}$,
and $\left|\cdot\right|$ denotes magnitude. 

\subsection{Locally diagonal Hessian \label{subsec:chap7:Locally-diagonal-Hessian}}

As presented in Section \ref{subsec:chap4:Asymptotic-inference},
let us consider the negative logarithm of the transformed distribution
$L(\phi)\TRIANGLEQ\log f(\vphi)$, expanded up to the second order
of the Taylor approximation at a point $\vphi_{0}$ at which $L(\phi)$
is infinitely differentiable, as follows:
\[
L(\phi)=L(\phi_{0})+(\phi-\phi_{0})'\nabla L(\phi_{0})-\frac{1}{2}(\phi-\phi_{0})'H(\phi_{0})(\phi-\phi_{0})+\ldots
\]
where $\nabla L(\phi_{0})$ and $H(\phi_{0})\TRIANGLEQ-\nabla^{2}L(\phi_{0})$
are gradient vector and Hessian matrix, respectively, evaluated at
$\phi_{0}$. In contrast to the Fisher information matrix approach
in {[}\citet{ch7:origin:Cox_Reid:87}{]}, we propose to design the
transformation, $g(\cdot)$, in order to diagonalize the Hessian matrix.
Its insight is, actually, a quadratic decoupling up to second order,
which yields the asymptotic independence (see Proposition \ref{prop:ch4:posteriorCLT}). 

Such a transformation can be designed via matrix decomposition. The
transformed Hessian, $H(\phi)$, is desired to be diagonal locally
at $\vphi=\phi_{0}=g(\theta_{0})$, in which a specific value $\theta_{0}$
can be chosen freely, typically a certainty equivalent (CE) such as
mean, mode, etc. For this purpose, a linear transformation matrix
$A$ can be defined such that $g(\theta)=A\theta$, where $A$ is
an invertible matrix. The Jacobian of matrix transformation is also
feasible to compute in this case: $J(\vtheta)=\det(A)$, a constant.
We consider two designs for $A$, as follows: 
\begin{itemize}
\item \textbf{Method}\textit{$(I)$ - }\textbf{Eigen decomposition}\textit{:}
Let $A=Q^{-1}$, where $H(\theta_{0})=Q\Lambda Q^{-1}$ in the original
(untransformed) metric. Then, the Hessian in the transformed metric
is $H(\phi_{0})=\Lambda$, $\forall\vphi_{0}\in\Phi$, becomes diagonal
at $\phi_{0}$ and $J(\vtheta)=\det(Q^{-1})=1$.
\item \textbf{Method}\textit{$(II)$ - }\textbf{LDU decomposition}\textit{:}
Let $A=U$, where $H(\theta_{0})=LDU$, and $L=U'$ is a lower triangular
matrix with unit diagonal. Then, transformed Hessian $H(\phi_{0})=D$
is diagonal by design, $\forall\vphi_{0}\in\Phi$, and $J(\vtheta)=\det(U)=1$.
\end{itemize}
While the Eigen decomposition is easier to implement in practice,
the LDU decomposition has one advantage: the variable corresponding
to the last row of $A=U$ is kept unchanged. 

\section{Spherical distribution family \label{sec:chap7:Spherical-family}}

The spherical distribution refers to the family of distributions that
are closed under any diagonalization transformation {[}\citet{ch7:art:spherical:dist70}{]}.
Recently, in Bayesian analysis, it has been shown to be an observation
model, whose conjugate prior is the so-called dispersion elliptical
squared-radial (DESR) distribution - an extension of normal-gamma
family\textbf{ }{[}\citet{ch7:art:spherical:bayes06}{]}. The form
of spherical distribution is defined as 
\begin{equation}
f(\theta|\mu,\Sigma)\propto\left|\Sigma\right|^{-\frac{1}{2}}\psi(u(\theta))\label{eq:ch7:spherical_dist}
\end{equation}
where $\mu$ is the mean vector and $\Sigma$ is the covariance matrix
for $\vtheta$, $\psi$ is a function satisfying the normalizing condition
for $f(\theta|\mu,\Sigma)$ and 
\begin{equation}
u(\theta)=(\theta-\mu)'\Sigma^{-1}(\theta-\mu)\label{eq:chap8:u}
\end{equation}
is the quadratic form implied by $\mu$ and $\Sigma$.

Let us, once again, denote $L(\vtheta)\TRIANGLEQ-\log f(\vtheta)$.
Then, by the chain rule for the composite functions, its Hessian matrix
can be derived, as follows:

\begin{align}
H(\theta)=-\nabla^{2}L(\vtheta) & =-\frac{\partial^{2}L(\vtheta)}{\partial u^{2}}\nabla u(\theta)\nabla u\left(\theta\right)'-\frac{\partial L(\vtheta)}{\partial u}\nabla^{2}u(\theta)\nonumber \\
 & =-\frac{\partial^{2}L(\vtheta)}{\partial u^{2}}\Sigma^{-1}(\theta-\mu)(\theta-\mu)'\Sigma^{-1}-\frac{\partial L(\vtheta)}{\partial u}\Sigma^{-1}\label{eq:Hessian-spherical}
\end{align}

\subsection{Multivariate Normal distribution}

Since the first term in (\ref{eq:Hessian-spherical}) is zero at the
mean $\mu$, and $\Sigma^{-1}$ in the second term is diagonalizable,
$H(\phi_{0})$ can be locally diagonalized at the mean $\theta_{0}=\mu$
for any spherical distribution via the local diagonalization methods
\textit{(I)} and \textit{(II)}. In particular, $H(\vphi_{0})$ is
globally diagonal in the multivariate normal distribution, since,
for this choice of linear transformations, we have $\frac{\partial^{2}L(\vtheta)}{\partial u^{2}}=0$,
$\forall\vtheta\in\Theta$. This corresponds to our setting in Section
\ref{sec:chap7:Motivation}.

\subsection{Bivariate power exponential (PE) distribution}

Let us study another illustrative example for this spherical family
(\ref{eq:ch7:spherical_dist}), namely bivariate power exponential
(PE) distribution, defined as follows {[}\citet{ch7:art:PowerExp:bivariate98}{]}:

\begin{align}
f(\theta|\mu,\Sigma) & =\frac{\sqrt{2}}{\pi\sqrt{\pi}}\left|\Sigma\right|^{-\frac{1}{2}}\exp\left(-\frac{1}{2}u(\theta)^{2}\right)\label{eq:chap8:PE}
\end{align}
where $\vtheta\in\mathbb{R}^{2}$ and $u(\theta)$ is given in (\ref{eq:chap8:u}).
Because it is difficult to express two true marginals in closed form
{[}\citet{ch7:art:PowerExp:marginal:no_closedform}{]}, let us study
the VB approximation for bivariate PE distribution next. 

\subsubsection{VB approximation for bivariate PE}

Without loss of generality, let us assume that $\mu=0$ in the sequel.
Hence, we can write $f(\theta|\Sigma)\TRIANGLEQ f(\theta|\mu=0,\Sigma)$,
where $\Sigma\TRIANGLEQ\left[\begin{array}{cc}
\sigma_{1}^{2} & \rho\sigma_{1}\sigma_{2}\\
\rho\sigma_{1}\sigma_{2} & \sigma_{2}^{2}
\end{array}\right]$ and $\rho$ is the correlation coefficient. At cycle $\nu$, the
VB-marginals (Theorem \ref{thm:chap4:Iterative-VB-(IVB)}) for (\ref{eq:chap8:PE})
can be computed as follows: 

\begin{eqnarray*}
\tilde{f}^{[\nu]}(\theta_{i}) & \propto & \exp\left(E_{\tilde{f}^{[\nu-1]}(\theta_{\backslash i})}\log f(\theta|\Sigma)\right),\hspace{1em}i\in\{1,2\}\\
 & \propto & \exp\left(-\frac{1}{2}E_{\tilde{f}^{[\nu-1]}(\theta_{\backslash i})}\left(\theta'\Sigma^{-1}\theta\right)^{2}\right)\\
 & \propto & \exp\left(-\frac{1}{2\left(1-\rho^{2}\right)^{2}}\sum_{k=1}^{4}\alpha_{i,k}\theta_{i}^{k}\right)
\end{eqnarray*}
where eight VB shaping parameters $\alpha_{i,k}$ are:

\[
\begin{cases}
\alpha_{i,4} & =\frac{1}{\sigma_{i}^{4}}\\
\alpha_{i,3} & =-\frac{4\rho}{\sigma_{i}^{3}\sigma_{\backslash i}}\left[\widehat{\theta_{\backslash i}}\right]\\
\alpha_{i,2} & =\frac{2(1-\rho^{2})}{\sigma_{i}^{2}\sigma_{\backslash i}^{2}}\left[\widehat{\theta_{\backslash i}^{2}}\right]\\
\alpha_{i,1} & =-\frac{4\rho}{\sigma_{i}\sigma_{\backslash i}^{3}}\left[\widehat{\theta_{\backslash i}^{3}}\right]
\end{cases}
\]
with $\widehat{\theta_{\backslash i}}$, $\widehat{\theta_{\backslash i}^{2}}$
and $\widehat{\theta_{\backslash i}^{3}}$ denoting first, second
and third VB moments of $\vtheta_{\backslash i}$, i.e with respect
to $\tilde{f}^{[\nu-1]}(\theta_{\backslash i})$.

\subsubsection{TVB approximation for bivariate PE}

The transformed distribution of $\phi=A\theta$ are designed as $A=Q^{-1}$
, where $\Sigma^{-1}=Q\Lambda Q^{-1}$ for \textit{(I)} and $A=U$,
where $\Sigma^{-1}=LDU$ for \textit{(II).} 

For conciseness, only the case \textit{(I)} is presented below, with
similar findings for the case \textit{(II)}. The transformed distribution
of $\vphi=A\vtheta$ under \textit{(I)} is:

\begin{eqnarray*}
f(\phi|\Lambda) & \propto & \left|\Sigma\right|^{-\frac{1}{2}}\exp\left[-\frac{1}{2}\left(\vphi'\Lambda^{-1}\vphi\right)^{2}\right]
\end{eqnarray*}
The VB approximations are derived as follows:

\begin{eqnarray*}
\tilde{f}^{[\nu]}(\phi_{i}) & \propto & \exp\left(E_{\tilde{f}^{[\nu-1]}(\phi_{\backslash i})}\log f(\phi|\Lambda)\right),\ i\in\{1,2\}\\
 & \propto & \exp\left(-\frac{1}{2}E_{\tilde{f}^{[\nu-1]}(\theta_{\backslash i})}\left(\vphi'\Lambda\vphi\right)^{2}\right)\\
 & \propto & \exp\left(-\frac{1}{2}\left(\lambda_{i}\phi_{i}^{2}+\lambda_{\backslash i}\left[\widehat{\phi_{\backslash i}^{2}}\right]\right)^{2}\right)
\end{eqnarray*}
where $\lambda_{\itime}$,$\lambda_{\backslash i}$ are the two eigenvalues
of $\Sigma^{-1}$ (i.e. the diagonal elements of $\Lambda$) and $\widehat{\phi_{\backslash i}^{2}}$
is the second moment of $\vphi$ with respect to $\tilde{f}^{[\nu-1]}(\vphi_{\backslash i})$.
At convergence, the VB approximation for the transformed $f(\phi|\Lambda)$
distribution are:

\begin{eqnarray*}
\widetilde{f}(\phi|\Lambda) & = & \widetilde{f}(\phi_{1}|\Lambda)\widetilde{f}(\phi_{2}|\Lambda)\\
 & \propto & \exp\left(-\frac{1}{2}\left[\left(\lambda_{1}\phi_{1}^{2}+\lambda_{2}\left[\widehat{\phi_{2}^{2}}\right]\right)^{2}+\left(\lambda_{2}\phi_{2}^{2}+\lambda_{1}\left[\widehat{\phi_{1}^{2}}\right]\right)^{2}\right]\right)
\end{eqnarray*}

Substituting $\phi=Q^{-1}\theta$ into $\widetilde{f}(\phi|\Lambda)$,
we will retrieve the TVB approximation of $f(\vtheta|\Sigma)$, i.e.
$\tilde{f}_{TVB}(\vtheta|\Sigma)$: Also, the normalizing constant
can be evaluated numerically.

\subsubsection{Contour plot}

The results for $\mu=[2.5,1]^{T}$, $\sigma_{1}=0.5$, $\sigma_{2}=1.5$
and various correlation coefficients, $\rho\in(0,1)$, are shown in
Fig. \ref{fig:PE Contour} and Fig. \ref{fig:PE KLD}. In Fig. \ref{fig:PE KLD},
we can see that the $KLD$ of TVB approximation is equal to its minimum
value of original VB and becomes invariant with $\rho$ owing to diagonalization
methods. Although the relationship between \textit{(I)} and \textit{(II)}
is not expressed explicitly, the coincide $KLD$ values show that
they seem to be equivalent to each other. 

\begin{figure}
\begin{centering}
\includegraphics[width=0.8\columnwidth]{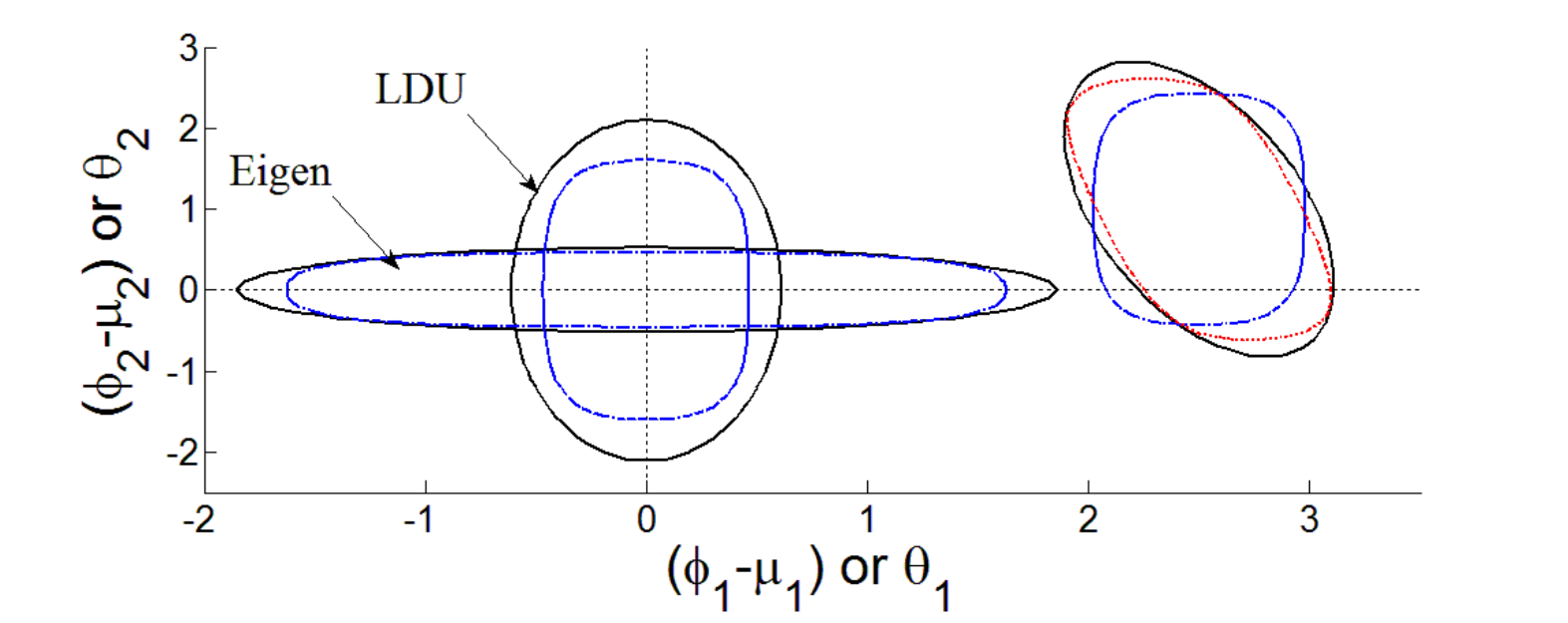} 
\par\end{centering}
\caption{\foreignlanguage{american}{\label{fig:PE Contour}Bivariate PE distribution and its approximations
($\rho=-0.5$). \foreignlanguage{english}{On the right: $f_{\theta},$
$\tilde{f}_{\theta}^{VB}$, $\tilde{f}_{\theta}^{TVB}$ are denoted
by $(-,\ -.\ ,\ :\ )$ respectively; on the left: $f_{\phi}$ and
$\tilde{f}_{\phi}$ are denoted by $(-,\ -.\ )$ respectively. For
clarity, the VB and TVB approximations are shifted to the origin in
that figure.}}}
\end{figure}
\begin{figure}
\begin{centering}
\includegraphics[width=0.6\columnwidth]{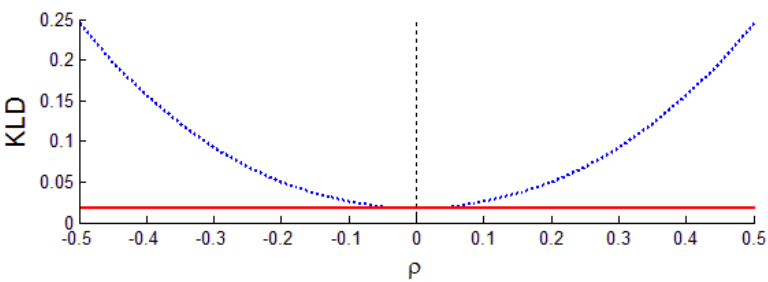} 
\par\end{centering}
\caption{\foreignlanguage{american}{\label{fig:PE KLD}KLD of VB (above, dotted line) and TVB for bivariate
PE distribution}}
\end{figure}

\section{Frequency inference in the single-tone sinusoidal model in AWGN \label{sec:chap7:Frequency-inference}}

In this section, the Bayesian inference for single-tone frequency
will be illustrated. Despite being simple, this canonical model in
digital receivers is non-linear with frequency and, hence, intractable
for frequency's posterior computation. The performance of two posterior's
approximations, VB and TVB, will also be compared and illustrated. 

Based on receiver's model in equation (\ref{eq:ch3:x_k:case2}), let
us consider a received sinusoidal sequence $\xBold_{\ndata}\TRIANGLEQ[x_{1},\ldots,x_{\ndata}]'$
over the AWGN channel:
\begin{equation}
x_{i}=a\sin(\Omega i)+\noise_{i},\ \iinn\label{eq:chap7:x_i}
\end{equation}
 where $\noise_{i}\overset{i.i.d}{\sim}\mathcal{N}(0,\rnoise)$. The
observed sequence (\ref{eq:chap7:x_i}) can be written in vector form,
as follows:

\begin{equation}
\xBold_{\ndata}=a\gBold_{\ndata}+\bnoise_{\ndata},\ \iinn\label{eq:chap7:x_n}
\end{equation}
where $\bnoise_{\ndata}\TRIANGLEQ[\noise_{1},\ldots,\noise_{\ndata}]'$
is the vector of noise samples, and $\gBold_{\ndata}\TRIANGLEQ[\sin(\OMEGA1),\ldots,\sin(\OMEGA\ndata)]'$
is the vector of regression function. The implied observation distribution
is: 
\begin{eqnarray*}
f(\xBold_{\ndata}|a,\Omega) & = & \prod_{i=1}^{\Ndata}\mathcal{N}_{x_{i}}\left(a\sin(\Omega i),\rnoise\right)\\
 & = & \mathcal{N}_{\xBold_{\ndata}}\left(a\gBold_{\ndata},\rnoise\Ibold\right)
\end{eqnarray*}
where $\Ibold$ is the identity matrix. For prior knowledge, $\Omega$
is chosen as uniform over $[0,\pi)$ (rad/sample) \textit{a priori}.
For amplitude $a$, the conjugate prior in this sinusoidal model is
$f(a)=\mathcal{N}_{a}\left(\mu_{a},r_{a}\right)$. This conjugacy
was noted in {[}\citet{ch2:art:sync_freq:MCMC_05}{]} and elsewhere.

\subsection{Joint posterior distribution}

The joint posterior for $\{a,\Omega\}$ can be derived as follows:

\begin{eqnarray}
f(a,\OMEGA|\xBold_{\Ndata}) & \propto & \prod_{i=1}^{\Ndata}\mathcal{N}_{x_{i}}\left(a\sin(\Omega i),\rnoise\right)\mathcal{N}_{a}\left(\mu_{a},r_{a}\right)\nonumber \\
 & \propto & \exp\left(-\frac{a^{2}\left(\frac{1}{r(\Omega)}\right)-2\mu\left(\Omega\right)\frac{1}{r(\Omega)}a}{2}\right)\label{eq:ch8:POSTERIOR:kernel}\\
 & \propto & \mathcal{N}_{a}\left(\mu\left(\Omega\right),r(\Omega)\right)\exp\left(\frac{\mu\left(\Omega\right)^{2}}{2r(\Omega)}\right)\sqrt{2\pi r(\Omega)}\label{eq:ch8:POSTERIOR:factor}
\end{eqnarray}
in which: 

\begin{eqnarray}
\frac{1}{r(\Omega)} & \TRIANGLEQ & \frac{\sum_{i=1}^{N}\sin^{2}(\Omega i)}{r_{e}}+\frac{1}{r_{a}}\label{eq:chap7:r_Omega}\\
 & = & \frac{\left\Vert \gBold_{\ndata}\right\Vert ^{2}}{r_{e}}+\frac{1}{r_{a}}\nonumber \\
\mu\left(\Omega\right) & \TRIANGLEQ & r(\Omega)\left(\frac{\sum_{i=1}^{N}x_{i}\sin(\Omega i)}{r_{e}}+\frac{\mu_{a}}{r_{a}}\right)\nonumber \\
 & = & \frac{\frac{X_{I}(e^{j\OMEGA})}{r_{e}}+\frac{\mu_{a}}{r_{a}}}{\frac{\left\Vert \gBold_{\ndata}\right\Vert ^{2}}{r_{e}}+\frac{1}{r_{a}}}\label{eq:chap7:mu_Omega}
\end{eqnarray}
expressed in term of $X_{I}(e^{j\OMEGA})\TRIANGLEQ Im\{X(e^{j\OMEGA})\}=\xBold_{\ndata}'\gBold_{\ndata}$
and in term of $\left\Vert \gBold_{\ndata}\right\Vert =\gBold_{\ndata}'\gBold_{\ndata}$
.

From (\ref{eq:ch8:POSTERIOR:factor}), we can see that Gaussian form
is preserved for $f(a|\Omega,\xBold_{\ndata})=\mathcal{N}_{a}(\mu\left(\Omega\right),$
$r(\Omega))$, owing to the conjugate prior $f(a)=\mathcal{N}_{a}\left(\mu_{a},r_{a}\right)$
defined above, and the marginal distribution for $\OMEGA$ can be
expressed in closed form as:

\begin{equation}
f(\Omega|\xBold_{\ndata})\propto\exp\left(\frac{\mu\left(\Omega\right)^{2}}{2r(\Omega)}\right)\sqrt{r(\Omega)}\label{eq:chap7:marginal_Omega}
\end{equation}
However, the distribution (\ref{eq:chap7:marginal_Omega}) is non-standard
and intractable in $\OMEGA$, owing to the nonlinear dependence on
$\Omega$, a result that is widely known {[}\citet{ch3:origin:Freq:ML74,ch7:PhD:aquinn92}{]}.
This sinusoidal model is therefore a canonical candidate for distributional
approximation. 

For later use, let us compute joint MAP estimate $\{\widehat{a},\widehat{\Omega}\}$.
From (\ref{eq:ch8:POSTERIOR:factor}), we can see feasibly that $\widehat{a}=\mu\left(\widehat{\Omega}\right)$
and, by substituting that $\widehat{a}$ to $\mathcal{N}_{a}\left(\mu\left(\Omega\right),r(\Omega)\right)$in
(\ref{eq:ch8:POSTERIOR:factor}), we have: 
\begin{eqnarray*}
\widehat{\Omega} & = & \arg\max_{\Omega}\exp\left(\frac{\mu\left(\Omega\right)^{2}}{2r(\Omega)}\right)\\
 & = & \arg\max_{\Omega}\frac{\left(\frac{X_{I}(e^{j\OMEGA})}{r_{e}}+\frac{\mu_{a}}{r_{a}}\right)^{2}}{2\left(\frac{\left\Vert \gBold_{\ndata}\right\Vert ^{2}}{r_{e}}+\frac{1}{r_{a}}\right)}
\end{eqnarray*}

\subsection{VB approximation}

From (\ref{eq:ch8:POSTERIOR:kernel}), the VB approximation (Theorem
\ref{thm:chap4:Iterative-VB-(IVB)}) for $f(a,\OMEGA|\xBold_{\Ndata})$
is (Fig. \ref{fig:ch7:TVB}):

\begin{eqnarray}
\ftilde_{VB}(a,\OMEGA|\xBold_{\Ndata}) & = & \ftilde_{VB}(a|\xBold_{\ndata})\ftilde_{VB}(\OMEGA|\xBold_{\ndata})\label{eq:chap8:VB}\\
 & \propto & \mathcal{N}_{a}\left(\mu_{1},\sigma_{1}^{2}\right)\exp\left(\frac{\alpha_{1}\mu\left(\Omega\right)+\alpha_{2}}{2r(\Omega)}\right)\nonumber 
\end{eqnarray}
in which the iterative VB's shaping parameters are:

\begin{eqnarray}
\mu_{1} & = & E_{\ftilde_{VB}(\OMEGA|\xBold_{\ndata})}\left[\mu\left(\Omega\right)\right]\label{eq:chap7:VB_shaping_para}\\
\sigma_{1}^{2} & = & E_{\ftilde_{VB}(\OMEGA|\xBold_{\ndata})}\left[r(\Omega)\right]\nonumber \\
\alpha_{1} & = & E_{\ftilde_{VB}(a|\xBold_{\ndata})}\left[2a\right]=2\mu_{1}\nonumber \\
\alpha_{2} & = & E_{\ftilde_{VB}(a|\xBold_{\ndata})}\left[-a^{2}\right]=-(\mu_{1}^{2}+\sigma_{1}^{2})\nonumber 
\end{eqnarray}
with $\ftilde_{VB}(\OMEGA|\xBold_{\ndata})\propto\exp\left(\frac{\alpha_{1}\mu\left(\Omega\right)+\alpha_{2}}{2r(\Omega)}\right)$
and $\ftilde_{VB}(a|\xBold_{\ndata})=\mathcal{N}_{a}\left(\mu_{1},\sigma_{1}^{2}\right)$. 

\subsection{TVB approximation}

The TVB approximation via LDU diagonalization (i.e. Method (II) in
Section \ref{subsec:chap7:Locally-diagonal-Hessian}) will be considered
next.

Because the Hessian matrix in this case is a symmetric $2\times2$
matrix, the upper off-diagonal element $u_{12}$ in matrix $U$ of
LDU decomposition for Hessian matrix is $u_{12}=H_{12}/H_{11}=r(\Omega)\left(\sum_{i=1}^{\Ndata}\frac{2i\cos(\Omega i)\left(x_{i}-2a\sin(\Omega i)\right)}{\rnoise}\right)$,
where $H_{11}\TRIANGLEQ-\frac{\partial^{2}\log f(a,\OMEGA|\xBold_{\Ndata})}{\partial a^{2}}$
and $H_{12}\TRIANGLEQ-\frac{\partial^{2}\log f(a,\OMEGA|\xBold_{\Ndata})}{\partial a\partial\Omega}$.
The transformed variable $[\lambda,\Omega]'=U[a,\Omega]'$ in this
case is: 
\begin{equation}
\lambda=a+\widehat{u}_{12}\Omega\label{eq:chap8:LDU_transform}
\end{equation}
where $\widehat{u}_{12}$ denotes $u_{12}$ evaluated at joint MAP
estimate $\{\widehat{a},\widehat{\Omega}\}$. 

Note that, the Jacobian in the LDU transformation is always unity,
i.e. $\det(U)=1$, as explained in Section \ref{subsec:chap7:Locally-diagonal-Hessian}.
Then, by changing $a$ to $\lambda$, the transformed distribution
is $f(\lambda,\Omega|\xBold_{n})=\left.f(a,\Omega|\xBold_{n})\right|_{a=\lambda-\widehat{u}_{12}\Omega}$
and the VB approximation (Theorem \ref{thm:chap4:Iterative-VB-(IVB)})
for $f(\lambda,\Omega|\xBold_{n})$ is (Fig. \ref{fig:ch7:TVB}):

\begin{eqnarray}
\ftilde(\lambda,\OMEGA|\xBold_{\Ndata}) & = & \ftilde(\lambda|\xBold_{\ndata})\ftilde(\OMEGA|\xBold_{\ndata})\label{eq:chap8:VB_for_transform}\\
 & \propto & \mathcal{N}_{\lambda}\left(\mu_{2},\sigma_{2}^{2}\right)\exp\left(\frac{\beta_{1}\mu_{0}\left(\Omega\right)+\beta_{2}}{2r(\Omega)}\right)\exp\left(\frac{\mu^{2}\left(\Omega\right)-\mu_{0}^{2}\left(\Omega\right)}{2r(\Omega)}\right)\nonumber 
\end{eqnarray}
in which: 
\[
\mu_{0}\left(\Omega\right)\TRIANGLEQ\mu\left(\Omega\right)+u_{12}\Omega
\]
and the iterative VB's shaping parameters are:

\begin{eqnarray}
\mu_{2} & = & E_{\ftilde(\OMEGA|\xBold_{\ndata})}\left[\mu\left(\Omega\right)\right]\label{eq:chap7:TVB_shaping_para}\\
\sigma_{2}^{2} & = & E_{\ftilde(\OMEGA|\xBold_{\ndata})}\left[r(\Omega)\right]\nonumber \\
\beta_{1} & = & E_{\ftilde(\lambda|\xBold_{\ndata})}\left[2\lambda\right]=2\mu_{2}\nonumber \\
\beta_{2} & = & E_{\ftilde(\lambda|\xBold_{\ndata})}\left[-\lambda^{2}\right]=-(\mu_{2}^{2}+\sigma_{2}^{2})\nonumber 
\end{eqnarray}
with $\ftilde(\OMEGA|\xBold_{\ndata})\propto\exp\left(\frac{\beta_{1}+\beta_{2}\mu_{0}\left(\Omega\right)}{2r(\Omega)}+\frac{\mu^{2}\left(\Omega\right)-\mu_{0}^{2}\left(\Omega\right)}{2r(\Omega)}\right)$
and $\ftilde(\lambda|\xBold_{\ndata})=\mathcal{N}_{\lambda}\left(\mu_{2},\sigma_{2}^{2}\right)$.

Note that, the Jacobian in inverse LDU transformation is, once again,
unity. By applying the inverse transformation, i.e. $a=\lambda-\widehat{u}_{12}\Omega$
(from (\ref{eq:chap8:LDU_transform})), the TVB approximation for
$f(a,\OMEGA|\xBold_{\Ndata})$ is $\left.\ftilde(\lambda,\OMEGA|\xBold_{\Ndata})\right|_{\lambda=a+\widehat{u}_{12}\Omega}$,
i.e. (Fig. \ref{fig:ch7:TVB}):

\begin{eqnarray}
\ftilde_{TVB}(a,\OMEGA|\xBold_{\Ndata}) & = & \ftilde_{TVB}(a|\Omega,\xBold_{\ndata})\ftilde_{TVB}(\OMEGA|\xBold_{\ndata})\label{eq:chap8:TVB}\\
 & = & \mathcal{N}_{a}\left(\mu_{2}-\widehat{u}_{12}\Omega,\sigma_{2}^{2}\right)\ftilde(\OMEGA|\xBold_{\ndata})\nonumber 
\end{eqnarray}
where $\ftilde_{TVB}(\OMEGA|\xBold_{\ndata})=\ftilde(\OMEGA|\xBold_{\ndata})$,
defined in (\ref{eq:chap8:VB_for_transform}). 

Comparing TVB approximations (\ref{eq:chap8:TVB}) with VB (\ref{eq:chap8:VB})
, we can see that the two factors in TVB are not independent anymore
like those in VB, owing to linearization (\ref{eq:chap8:LDU_transform})
at specific point $\{a,\Omega\}$ in coefficient $u_{12}$. This result
shows that TVB is, in this case, a non-naive mean field approximation. 
\begin{rem}
\label{Remark:chap7:As-a-remark}Note the similarity between $\ftilde_{VB}(\OMEGA|\xBold_{\ndata})$
in (\ref{eq:chap8:VB}) and $\ftilde_{TVB}(\OMEGA|\xBold_{\ndata})=\ftilde(\OMEGA|\xBold_{\ndata})$
in (\ref{eq:chap8:VB_for_transform},\ref{eq:chap8:TVB}), whose key
difference is the extra factor $\exp\left(\frac{\mu^{2}\left(\Omega\right)-\mu_{0}^{2}\left(\Omega\right)}{2r(\Omega)}\right)$.
Because this extra factor involves the second order $\mu^{2}\left(\Omega\right)$
like the true marginal $f(\OMEGA|\xBold_{\ndata})$ in (\ref{eq:ch8:POSTERIOR:factor}),
the frequency estimation of TVB is expected to be better than that
of VB, which only takes into account the first order $\mu\left(\Omega\right)$.
\end{rem}

\subsection{Simulation}

All iterative shaping parameters, $\mu$, $\sigma$, $\alpha$, $\beta$
in (\ref{eq:chap7:VB_shaping_para}, \ref{eq:chap7:TVB_shaping_para})
are evaluated numerically at DFT-bins of $\Omega$. The performance
of $\Omega$ estimators for various schemes is shown in Fig. \ref{fig:sinewave}.
Here, $\ndata=1024$ and $\Omega=1.1$ DFT-bins, i.e. $\Omega=1.1\frac{2\pi}{\ndata}$
rad/sample. The value 1.1 was chosen so that the true digital frequency
$\Omega$ is always off-bin, no matter how high the resolution of
the DFT-bin quantization is. Also, the fact that $\Omega$ is close
to one represents a stressful regime for the $\OMEGA$ estimator {[}\citet{ch3:origin:Freq:ML74,ch7:PhD:aquinn92}{]}.
The SNR is $SNR=\frac{\muamplitude^{2}+\ramplitude}{2\rnoise}$, where
the prior parameters were chosen as $\muamplitude=1$ and $\ramplitude=0.1$,
representing small variance of normalized attenuation. The number
of Monte Carlo runs is $10^{6}$. 

\subsubsection{Performance of frequency estimates \label{subsec:chap7:Performance-of-frequency}}

Because the loss function is the root mean square (RMS) error, the
posterior mean $\widehat{\Omega}_{MEAN}$ is the minimum-risk estimator
(\ref{eq:ch4:Bayesian_Estimator}), as verified in Fig. \ref{fig:sinewave}.
Owing to prior information, the joint MAP $\widehat{\Omega}_{MAP}$
estimator is slightly better than joint ML $\widehat{\OMEGA}_{ML}$
. However, because both of them can only detect the frequency at DFT-bins,
they are much worse than $\widehat{\Omega}_{MEAN}$ in this off-bin
case. 

For illustration of the VB schemes. both VB $\widehat{\OMEGA}_{VB}$
and TVB $\widehat{\OMEGA}_{TVB}$ estimators are chosen as the mean
of $\ftilde_{VB}(\OMEGA|\xBold_{\ndata})$ and $\ftilde_{TVB}(\OMEGA|\xBold_{\ndata})$,
respectively. Because the joint distribution of $\OMEGA$ concentrates
around single point at high SNR, the performance of a naive mean field
approximation like VB does increase but is still not good in this
case. In contrast, the performance of the TVB estimator is much higher
and close to joint MAP performance, owing to its linearization around
$\widehat{\Omega}_{MAP}$. 

\subsubsection{Evaluation of computational load via FFT \label{subsec:chap7:FFT}}

Since FFT algorithm was applied to all schemes, their computational
load was of the same efficient order $O(\ndata\log\ndata)$. Note
that, owing to the asymptotic nature of $O(\cdot)$, we do not consider
the difference of a factor $O(1)$ in the computational load of the
schemes above. In practice, iterative schemes such as VB and TVB may
be $\nu$-times slower than a standard FFT-based scheme such as ML,
where $\nu$ is the number of iterations at convergence. The number
of iteration $\nu$ for VB and TVB was fixed at $5$, since no increase
in performance was visible for higher $\nu$. 

Owing to FFT, the Bayesian posterior inference $\widehat{\Omega}_{MEAN}$
yields a good trade-off scheme, i.e. small loss in speed but high
gain in performance, compared to ML estimator via periodogram. However,
the iterative VB and TVB methods, although being $\nu$ times slower
than Bayesian scheme, do not yield better performance than $\widehat{\Omega}_{ML}.$
or $\widehat{\Omega}_{MAP}.$ Hence, although they cannot be recommended
for this single-tone problem, the simulation has verified the superiority
of TVB to VB method, which is the main motivation of designing TVB.

\begin{figure}
\begin{centering}
\includegraphics[width=0.8\columnwidth]{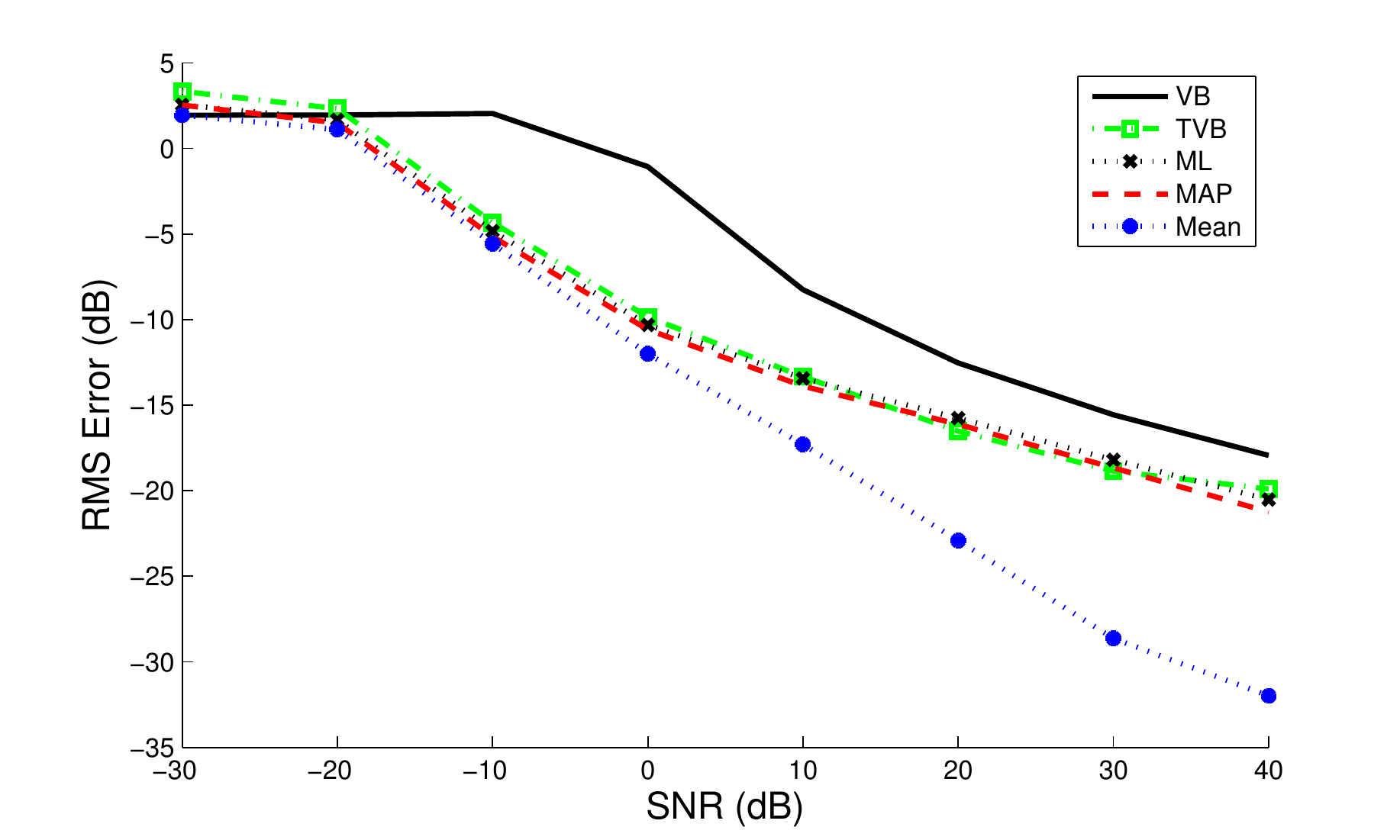}
\par\end{centering}
\caption{\foreignlanguage{american}{\label{fig:sinewave}\foreignlanguage{english}{ Root mean square (RMS)
error in estimation of $\Omega=1.1$ DFT-bins for the single-tone
sinusoidal AWGN, via conventional methods and via VB variants}}}
\end{figure}

\section{Summary}

This chapter began with an illustrative example, which showed that
rotation can render independent (i.e.$\ $decouple) bivariate Gaussian
random variables and, hence, reduce the Kullback-Leibler divergence
(KLD) in the VB approximation to zero. This idea encouraged us to
design the rotation via the transformed Hessian matrix in any posterior
distribution at a desired CE point, with the VB approximation then
being performed in the transformed metric. This novel scheme was defined
as transformed VB (TVB) approximation. Thanks to the asymptotic independence
property of the transformed posterior distribution, as previously
reviewed in Section \ref{subsec:chap4:Asymptotic-inference}, this
rotation scheme achieved a local decoupling (independence), and, hence,
the anticipated reduction in KLD via TVB. The spherical distribution
family, which is a generalized form of Gaussian distribution, is closed
under rotation and, hence, was used as an illustrative example for
TVB scheme. 

Finally, TVB was applied to the frequency carrier offset estimation
problem, a simplified context for frequency synchronization in the
AWGN channel, as explained in Chapter \ref{=00005BChapter 3=00005D}.
As expected, the accuracy of TVB was shown to be much better than
that of VB in simulation. This improvement will motivate further research,
which will be discussed in Chapter \ref{=00005BChapter 9=00005D}.


\global\long\def\REAL{\mathbb{R}}%

\global\long\def\DEAL{\mathbb{D}}%

\global\long\def\COMPLEX{\mathbb{C}}%

\global\long\def\RING{\mathcal{R}}%

\global\long\def\MSET{\mathcal{M}}%

\global\long\def\ASET{\mathcal{A}}%

\global\long\def\OMEGA{\Omega}%

\global\long\def\calO{{\cal O}}%

\global\long\def\calX{\mathcal{X}}%

\global\long\def\ndata{n}%

\global\long\def\nstate{M}%

\global\long\def\itime{i}%

\global\long\def\istate{k}%

\global\long\def\funh#1{h\left(#1\right)}%

\global\long\def\fung#1{g\left(#1\right)}%

\global\long\def\seti#1#2{#1=1,2,\ldots,#2}%

\global\long\def\setd#1#2{\{#1{}_{1},#1{}_{2},\ldots,#1_{#2}\}}%

\global\long\def\TRIANGLEQ{\triangleq}%


\global\long\def\fBold{\boldsymbol{f}}%

\global\long\def\TBold{\mathbf{T}}%

\global\long\def\LBold{L}%

\global\long\def\fprofile{f_{p}}%

\global\long\def\fdelta{f_{\delta}}%

\global\long\def\ftdelta{\widetilde{f}_{\delta}}%

\global\long\def\fdbar{\overline{f_{\delta}}}%

\global\long\def\WBold{\mathbf{W}}%

\global\long\def\XBold{\mathbf{X}}%

\global\long\def\ftilde{\widetilde{f}}%

\global\long\def\fnutilde#1{\widetilde{f}^{[#1]}}%

\global\long\def\fctilde{\widetilde{f}_{c}}%

\global\long\def\fbar{\overline{f}}%

\global\long\def\fhat{\widehat{f}}%

\global\long\def\fpbar{\overline{\fprofile}}%

\global\long\def\xBold{\mathbf{x}}%

\global\long\def\sBold{\mathbf{s}}%

\global\long\def\oneBold{\boldsymbol{1}}%

\global\long\def\KLDVB{KLD_{VB}}%

\global\long\def\KLDFCVB{KLD_{FCVB}}%

\global\long\def\Fc{\mathbb{F}_{c}}%

\global\long\def\Ffc{\mathbb{F}_{f.c}}%

\global\long\def\Normal{\mathcal{N}}%

\global\long\def\Oc{\mathcal{O}}%

\global\long\def\Rc{\mathbb{R}}%

\global\long\def\Psi{\psi}%

\global\long\def\LAMBDA{\boldsymbol{\Lambda}}%

\global\long\def\lAMBDA{\boldsymbol{\lambda}}%

\global\long\def\khat{\widehat{k}}%

\global\long\def\lhat{\widehat{l}}%

\global\long\def\lVA{\widehat{l}^{(VA)}}%

\global\long\def\Lhat{\widehat{L}}%

\global\long\def\LVA{\widehat{L}^{(VA)}}%

\global\long\def\tradVBML{tradVB_{(ML)}}%

\global\long\def\tradFCVBML{tradFCVB_{(ML)}}%

\global\long\def\VBML{VB_{(ML)}}%

\global\long\def\FCVBML{FCVB_{(ML)}}%

\global\long\def\Vst{VB_{(ML)}^{1st}}%

\global\long\def\Fst{FCVB_{(ML)}^{1st}}%

\global\long\def\Mu{Mu}%

\global\long\def\iton{i=1,\ldots,n}%

\global\long\def\iinn{i=\{1,\ldots,n\}}%

\global\long\def\gbar{\bar{g}}%

\global\long\def\fDoppler{f_{D}}%

\global\long\def\Tsample{T_{s}}%

\chapter{Performance evaluation of VB variants for digital detection \label{=00005BChapter 8=00005D}}

In Section \ref{sec:chap7:Frequency-inference}, we have considered
one of three basic digital receivers of Chapter \ref{=00005BChapter 3=00005D},
namely pilot-based unsynchronized frequency receiver. In this chapter,
we will consider the other two receivers, namely digital detections
in AWGN and quantized Rayleigh fading channels.

Firstly, a toy problem will be investigated. In the communication
context, we consider a Markov source transmitted over an AWGN channel,
with known parameters. The performance will be studied in two scenarios:
a fixed number, $M$, of state and fixed number, $\ndata$, of computational
time-resource.

Secondly, an appropriate model for practical Rayleigh fading channel
will be studied. Since the bivariate Rayleigh distribution is a complicated
function, it is often quantized to yield a closed-form Markov channel
{[}\citet{ch3:ART:FadingMarkov:tutorial08}{]}. By way of correlation
coefficient in the bivariate Rayleigh distribution, we can investigate
the influence of correlation via transition matrix of Markov channel
over the performance of inference methods.

In this chapter, we will apply the FB, VA, VB and FCVB algorithms
in Chapter \ref{=00005BChapter 6=00005D} to those two Markovian digital
detection scenarios. The simulation evidence will be provided and
illustrate the trade-offs between performance and computational load.
Also, in order to avoid the ambiguity, $\log(0)=-\infty$, and to
protect the convention $0\log0=0$, we assign $\log(0)=-10^{10}$
in the simulations. 

\section{Markov source transmitted over AWGN channel \label{sec:chap8:Markov-source-AWGN}}

Using the receiver model in equation (\ref{eq:ch3:x_k:case1}), let
us consider an AWGN channel with the classical Wold decomposition:

\[
x_{i}=s_{i}+e_{i},\ i=1,\ldots n
\]
where $x_{i}$ are the complex, observed noisy signal (data) samples;
$e_{i}$ is a realization of complex AWGN with variance per dimension
$N_{0}/2$ (Watts), and $N_{0}$ (Watts per radian/samples) is the
power spectral density (PSD). The source symbol $s_{i}$ is $i$th
realization of an $M$ states homogeneous Markov chain. Each state
is then mapped to a constellation point in a rectangular Gray-code
$M$-QAM. 

Because the observation model (\ref{eq:ch3:f(x|s):case1}) in this
case corresponds to the i.d. observation model in (\ref{eq:ch6:Observation})
and the distribution for HMC sequence $s_{i}$ corresponds to HMC
prior model in (\ref{eq:ch6:PRIOR}), this toy model is recognized
as a time-homogeneous HMC (\ref{eq:Joint}) with $M$ known Gaussian
components. 

In simulation, each element of $M\times M$ transition matrix $\TBold$
for the source was generated as an iid realization from uniform distributions
$U(0,1)$, and then, the columns are normalized to satisfy the stochastic
matrix constraint. The length-$\nstate$ vector with uniform elements,
$1/M$, was chosen as the initial probability vector $p$ of the HMC.
For the channel, $M$-QAM constellation point represents a block of
$\log_{2}M$ source bits. Therefore, although our label estimates
return a symbol-error-rate (SER), which is a Hamming distance on the
sequence of symbols, SER is often converted to bit-error-rate (BER)
in the literature {[}\citet{ch2:bk:Richardson}{]}. In this section,
BER is confined to a range much lower than $10^{-1}$, reflecting
the requirement in practice {[}\citet{ch2:bk:SEP:Haykin06}{]}. The
amplitudes, $a_{i}$, of the constellation points are also normalized,
such that the average Energy per bit ($E_{b}$) is unity, i.e. $E_{b}=\frac{\sum_{i=1}^{M}a_{i}^{2}/M}{\log_{2}M}=1$,
c.f. {[}\citet{ch2:bk:ToddMoon}{]}. By this way, we can regard the
average SNR per bit, $SNR_{b}\TRIANGLEQ E_{b}/N_{0}$, as an interpretation
of signal-to-noise ($SNR$) ratio.

\subsection{Initialization for VB and FCVB}

For initialization purposes, the initial shaping parameters $p_{k,i}^{[0]}$
(\ref{eq:VB=00003Dshaping}) and $\widehat{p}_{k,i}^{[0]}$ (\ref{eq:FCVB=00003Dshaping})
of multinomial distribution for VB and FCVB can be chosen either uniform,
with $p_{k,i}^{[0]}=\widehat{p}_{k,i}^{[0]}=1/M$, $k=1,\ldots,M$,
$\forall\iinn$ or ML-based scheme with $p_{i}^{[0]}=\widehat{p}_{i}^{[0]}\propto\Psi_{i}$,
$\iinn$, where $\psi_{i}$ is defined in (\ref{eq:ch6:DEF:psi}).
Since the ML estimate is fast, this initialization scheme does not
greatly affect the overall complexity of VB and FCVB, as summarized
in Table \ref{tab:ch6:ComputationalComplexity}. 

In all simulations, the converged performance of these two initializations
are similar. However, the ML-based initialization often reduces the
total number of IVB cycles, $\nu_{c}$, by $1$, i.e. one cycle of
IVB (for VB and FCVB) with uniform initialization has the effect of
ML-based initialization (for VB and FCVB). This means that the ML-based
initialization scheme is slightly faster in the simulations. Hence,
for clarity, only the curves of ML-based initialization are shown
in the figures.

For convention, the terms $\tradVBML$, $\tradFCVBML$ and $\VBML$,
$\FCVBML$ denote traditional and accelerated VB and FCVB, respectively,
with ML-based initialization. Since, in the first IVB cycle, the traditional
and accelerated methods are identical, let us call them $\Vst$ and
$\Fst$, respectively.

\subsection{Performance of HMC source estimates}

\begin{figure}
\begin{centering}
\includegraphics[width=1\columnwidth]{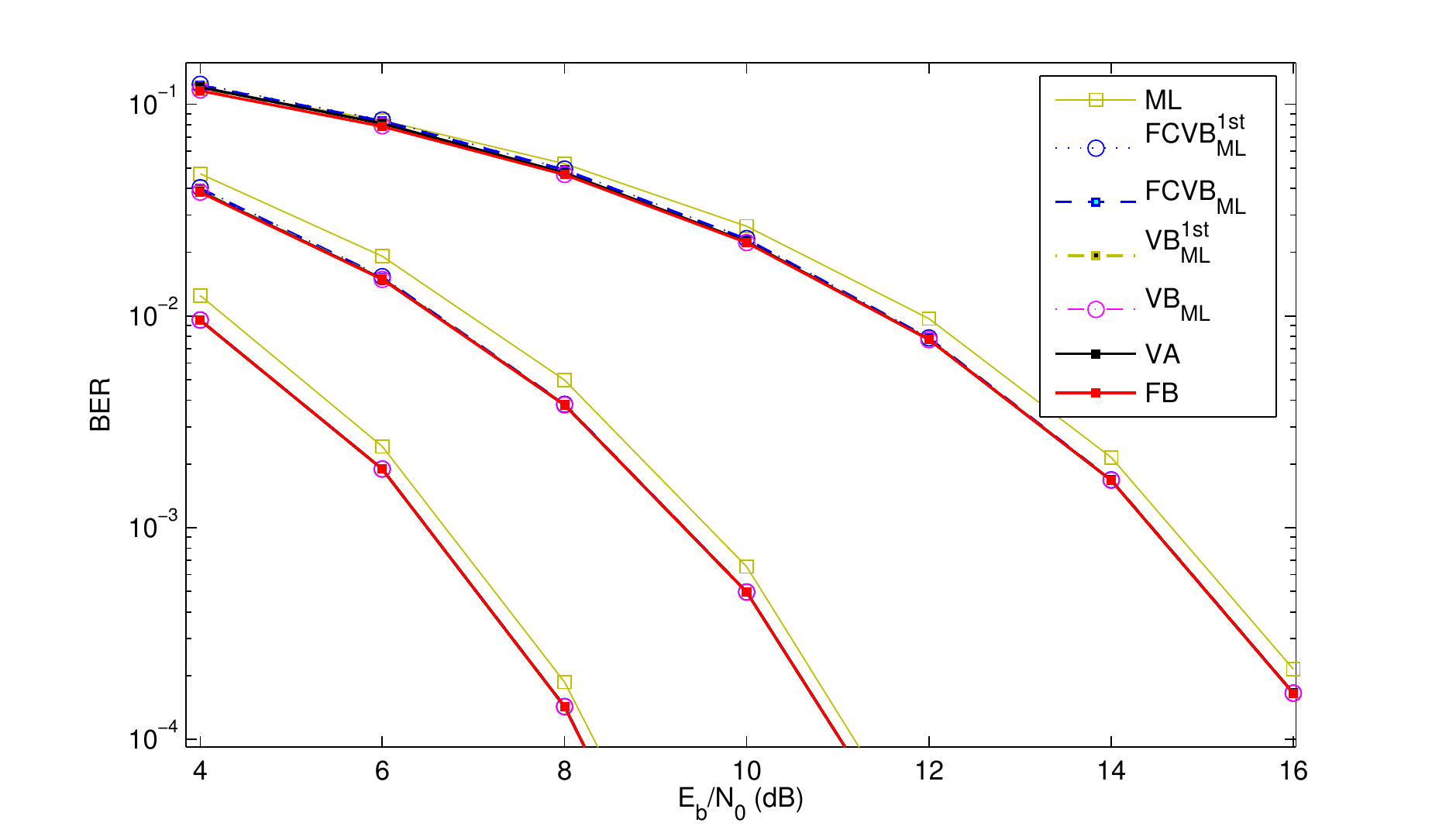}
\par\end{centering}
\centering{}\caption{\label{fig:M=00003D2-8-64}BER versus SNR per bit, $E_{b}/N_{0}$
(dB), for $2,8,64$-QAM (left,middle,right), with $10^{5}$ Monte
Carlo runs.}
\end{figure}
In Fig. \ref{fig:M=00003D2-8-64}, the BER performance is plotted
versus $E_{b}/N_{0}$, for the competing schemes in Chapter \ref{=00005BChapter 6=00005D}.
In all cases of $M\in\{2,8,64\}$, ML is the worst estimator, as anticipated
in Section \ref{subsec:chap6:Bayesian-risk-for-HMC}, while the performances
of all other algorithms are identical. Intuitively, this is an expected
result for VB approximation and its variant, FCVB, all of which seek
an approximating distribution in the class $\mathcal{F}_{c}$ and
$\mathcal{F}_{\delta}$, respectively, of independent distributions.
The reason is that the correlation coefficient, $\rho$, implied by
a transition matrix with uniform elements, is low.

\begin{figure}
\begin{centering}
\includegraphics[width=1\columnwidth]{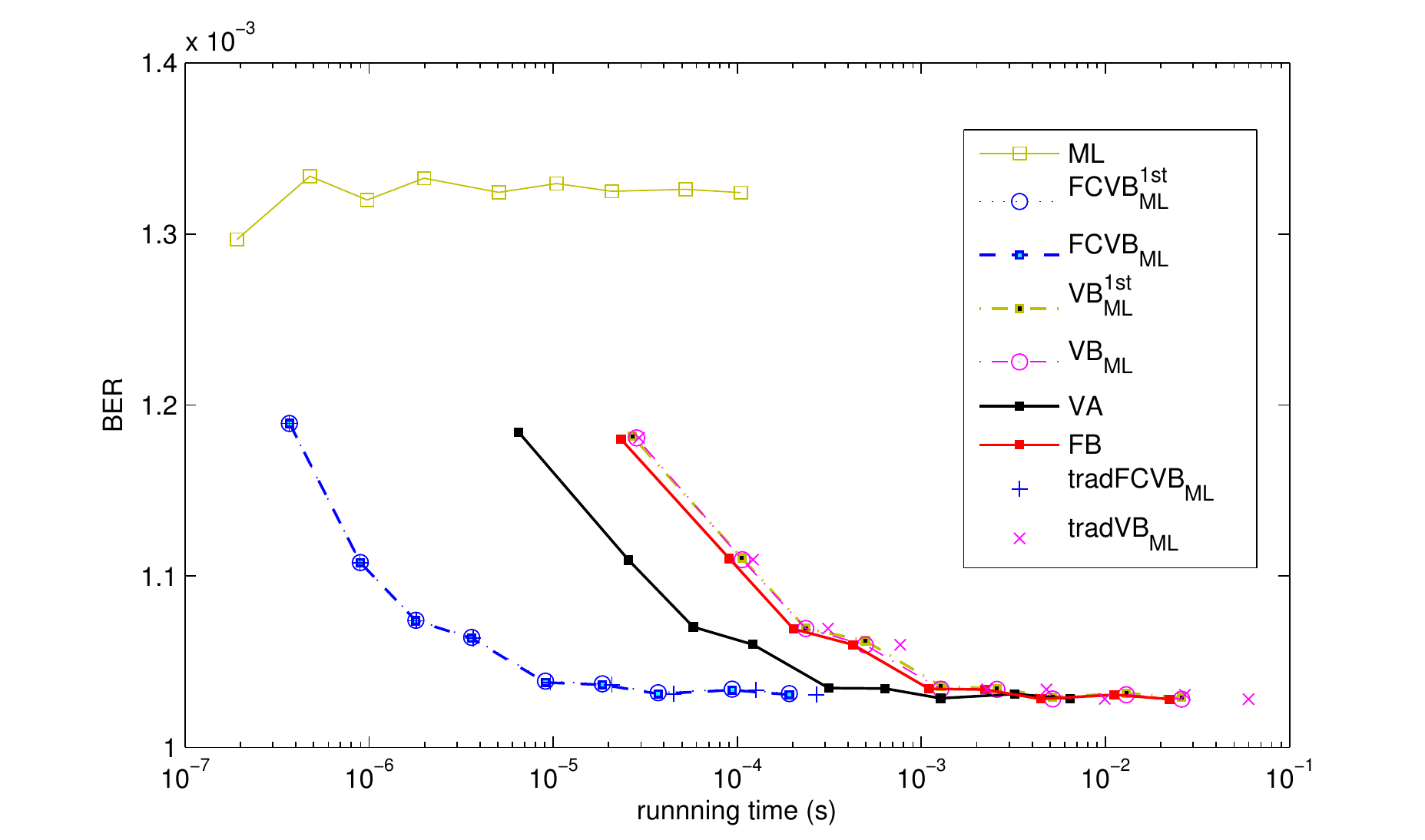}
\par\end{centering}
\centering{}\caption{\label{fig:TimeResource}BER versus running-time for 64-QAM, with
$10^{5}$ Monte Carlo runs. From left to right: $n=\{2,5,10,20,50,100,200,500,1000\}$.
The running-time is measured by C++ implementation and 3 GHz Core2Duo
Intel processor.}
\end{figure}
The effect on performance of a constrained running-time is illustrated
in Fig. \ref{fig:TimeResource}, in which we set $M=64$ , $E_{b}/N_{0}=14.5$
(dB), and we varied the number of data $n$ such that BER performance
of all methods are convergent to $10^{-3}$. In this scenario of high
SNR, the algorithms have almost identical performance, but with different
running-times. Hence, all curves in Fig. \ref{fig:TimeResource} appear
as x-shifted variants of each other. 

The results show that, given a fixed-time resource, we can run the
low complexity FCVB with more data than is possible for other algorithms.
The maximum gain in FCVB's performance over VA's is about $12$\%,
with a fixed time resource around $10$ microseconds. The simulation
results in Fig. \ref{fig:TimeResource}, for the case $\ndata=50$,
are also extracted in Fig. \ref{fig:ch8:Layman} in order to illustrate
the superiority of the Accelerated $\FCVBML$ to VA and FB methods. 

As explained in Section \ref{subsec:chap6:Bayesian-risk-for-HMC},
the FB algorithm is the most accurate method, since it returns the
sequence of marginal MAP estimates of the HMC labels, i.e. the exact
minimum risk (MR) estimate in this case (\ref{eq:ch4:MR:marginalMAP}).
The VA and $\FCVBML$, despite of not being MR risk estimators, return
the exact global and local MAP trajectory estimate for the HMC, respectively.
For the weakly correlated HMC model in Fig. \ref{fig:ch8:Layman},
we can see that these two methods are very close, to the extent that
they are visually identical to FB in performance.

\begin{figure}
\begin{centering}
\includegraphics[width=1\columnwidth]{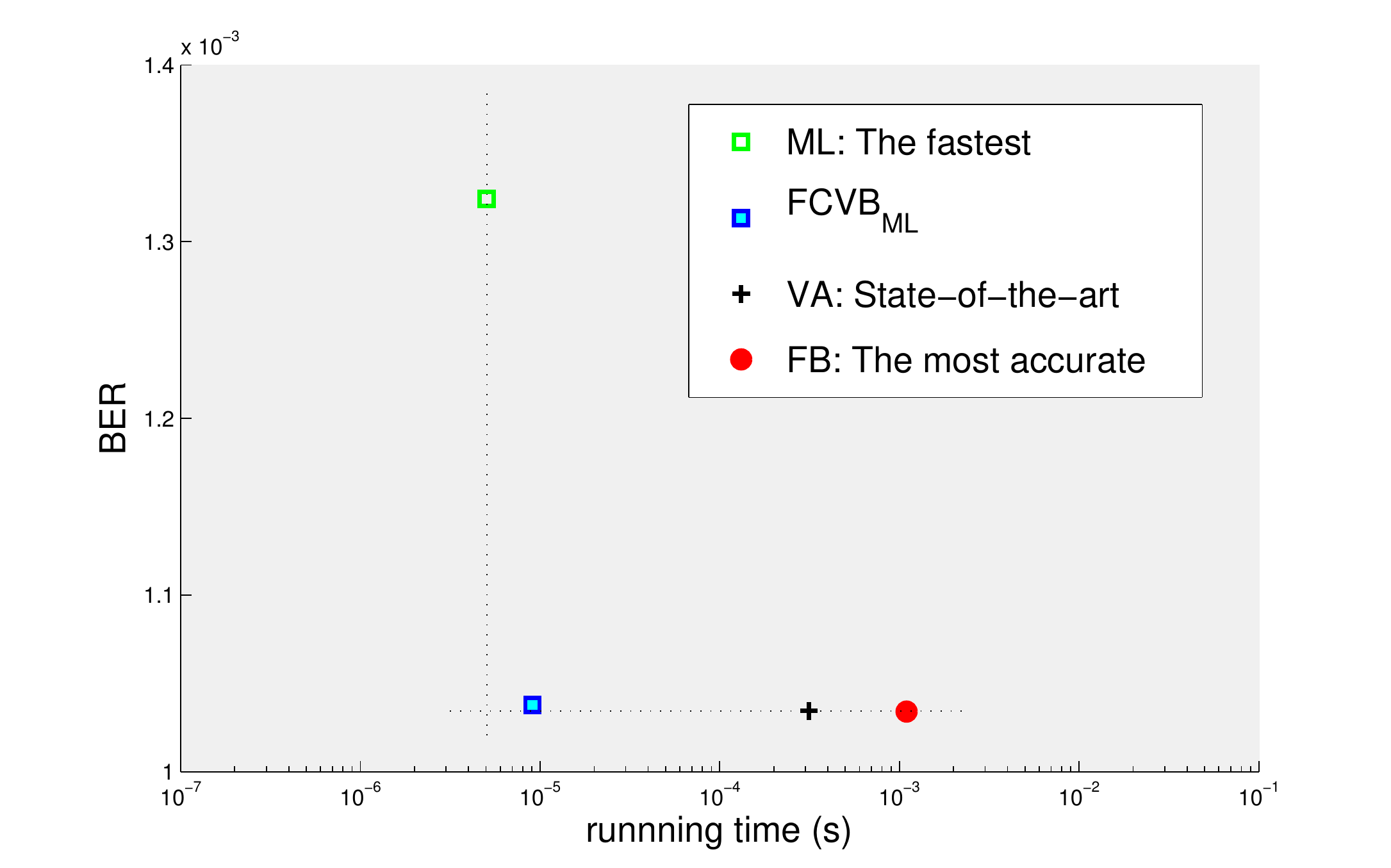}
\par\end{centering}
\centering{}\caption{\label{fig:ch8:Layman} BER versus running-time for 64-QAM (extracted
from Fig. \ref{fig:TimeResource}, with the case $n=50$).}
\end{figure}

\subsection{Computational load of HMC source estimates \label{subsec:chap8:Computational-cost-AWGN}}

For brevity, we only consider the accelerated scheme for VB-based
inference here. The comparison between traditional and accelerated
schemes will be studied in Section \ref{subsec:ch8:Empirical-speed-up}.

From Table \ref{tab:ch6:ComputationalComplexity}, we can roughly
estimate the cost of each algorithm via the number of equivalent operators.
by normalizing the cost of $ML$ as $O(1)$, as follows (from the
lowest to highest anticipated cost):
\begin{itemize}
\item For FCVB: By normalizing the cost of $ML$ as $O(1)$, the cost of
each FCVB cycle and VA should be $O(1)$ and $O(M)$, respectively.
Because $\FCVBML$ requires at least one IVB cycle $O(\eta\eIVB)$
and $O(1)$ $ML$-based initialization, in respect to ML's cost, the
total cost of $\FCVBML$ should be at least $O(\eta\eIVB+1)\geq O(1.75)$,
where $\eIVB\geq1$ and $\eta\in\left[0.75,1.25\right]$, as explained
in Section \ref{ch6:sub:bubble-sort-like}. 
\item For VA: the cost should be $O(M)$. 
\item For FB: because the cost of MUL is not deterministic, we consider
FB to be at least three times slower than VA, as is often noted in
the literature {[}\citet{ch8:BCJR_3x_VA}{]}. 
\item For VB: Each VB cycle is slightly slower than FB, hence the total
cost of $\VBML$ is at least $\eIVB$-fold slower than FB. 
\end{itemize}
In summary, we can predict the running time of $\FCVBML$, $VA$,
$FB$ and $\VBML$ versus $ML$'s to be $O(\eta\eIVB+1)\geq O(1.75)$,
$O(\nstate)$, $O(3\nstate)$ and $O(3\nstate\eIVB)$, respectively. 

Let us verify above computational prediction via the simulation results
in Fig. \ref{fig:TimeResource}, $M=64$. the average ratios of running
time of $\FCVBML$, $VA$, $FB$ and $\VBML$ versus $ML$'s were
found to be $1.83$, $57.1$, $200.3$ and $233.9$, respectively.
These results are consistent with our estimates of the ratios in the
last paragraph. 

Also, from Fig. \ref{fig:TimeResource}, we can see that FCVB is completely
superior to VA in this case, since they achieve similar performance,
given the same number $\ndata$ of data, but FCVB runs much faster.
The average gain in FCVB's speed over VA's is about $57.1/1.83=31.2$
times, i.e. around half of $M=64$. This gain in simulation is also
consitent with the theoretical gain $\nstate/(\eta\eIVB+1)\leq36.6$
in computational load, with $\nstate=64$, $\eIVB\geq1$ and $\eta\in\left[0.75,1.25\right]$,
as explained above.

In the same simulation of Fig. \ref{fig:TimeResource}, we also found
that the average $\bar{\eIVB}$ for $\FCVBML$ is $\bar{\eIVB}=1.01$,
i.e. $\FCVBML$ almost converged right after the first IVB cycle.
Then, we can deduct the average value $\bar{\eta}$ to be $\bar{\eta}=\frac{(1.83-1)}{\bar{\eIVB}}=0.82$.
This value $\bar{\eta}=0.82$ belongs to the theoretical range $\left[0.75,1.25\right]$. 
\begin{rem}
Note that, if we exclude the relative cost $O(1)$ of ML's initialization
step in the $\FCVBML$'s relative cost $O(\eta\eIVB+1)$, the average
computational load $\overline{\eta\eIVB}$ of Accelerated FCVB algorithm
in this case is only $\overline{\eta\eIVB}=1.83-1=0.83$ times of
$ML$'s cost, i.e., in this case, the Accelerated FCVB is faster than
the currently-supposed fastest algorithm, ML, for HMC estimate.
\end{rem}

\subsection{Evaluation of VB-based acceleration rate \label{subsec:ch8:Empirical-speed-up}}

From two Lemmas \ref{lem:(Accelerated-IVB-algorithm)}-\ref{lem:(Accelerated-FCVB-algorithm)},
the BER performance for accelerated scheme is expected to be the same
as traditional scheme for VB-based (VB and FCVB) inference, which
is indeed the case in simulations in this chapter. For comparison
purpose, the gain factor in this case is the acceleration rate $\frac{\nu_{c}}{\eIVB}$
in speed, as defined in (\ref{eq:chap6:accelerated rate}), between
the effective number $\eIVB$ and total number $\nIVB$ of IVB cycles.

\begin{figure}
\begin{centering}
\includegraphics[width=1\columnwidth]{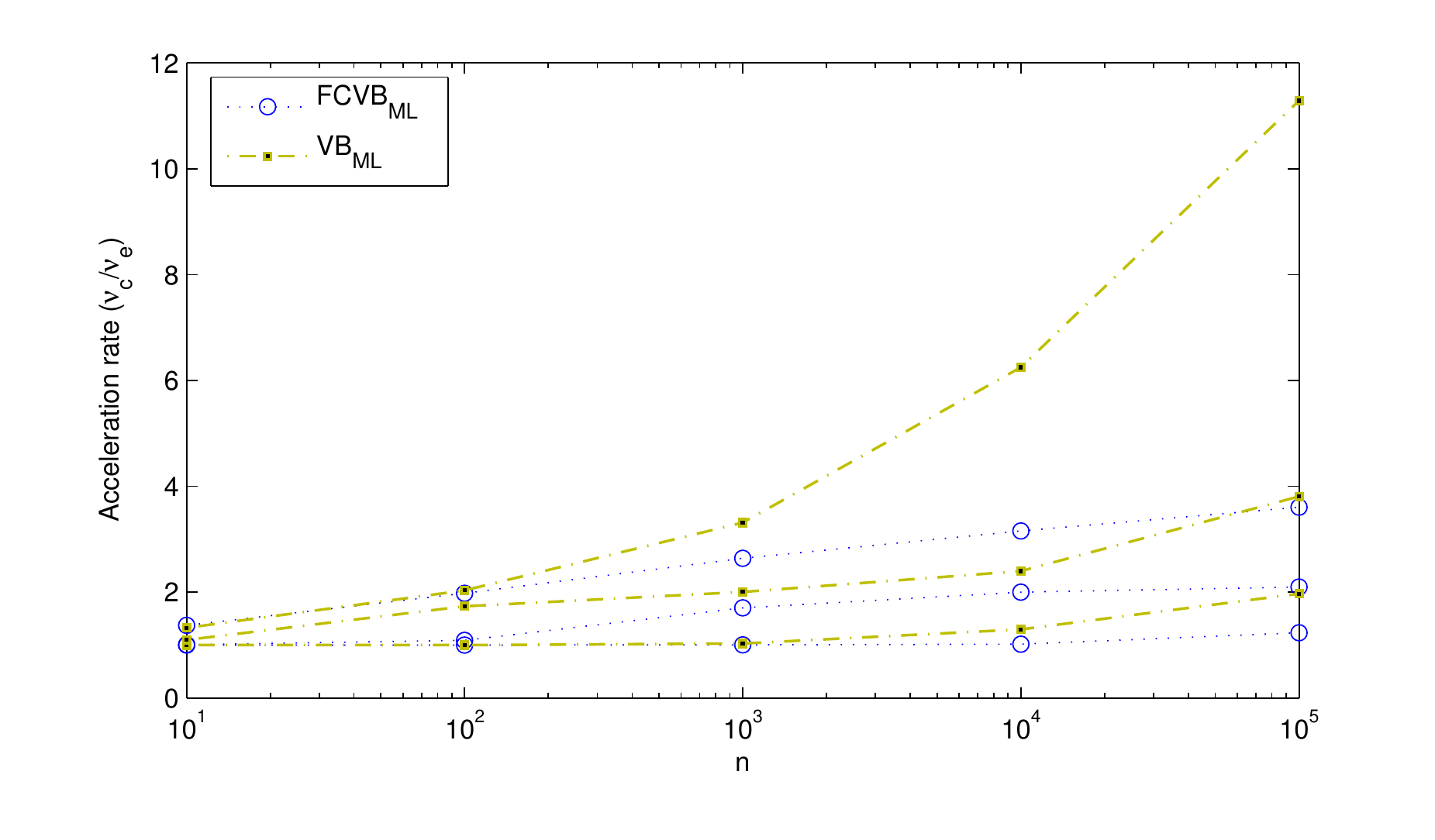}
\par\end{centering}
\begin{centering}
\includegraphics[width=1\columnwidth]{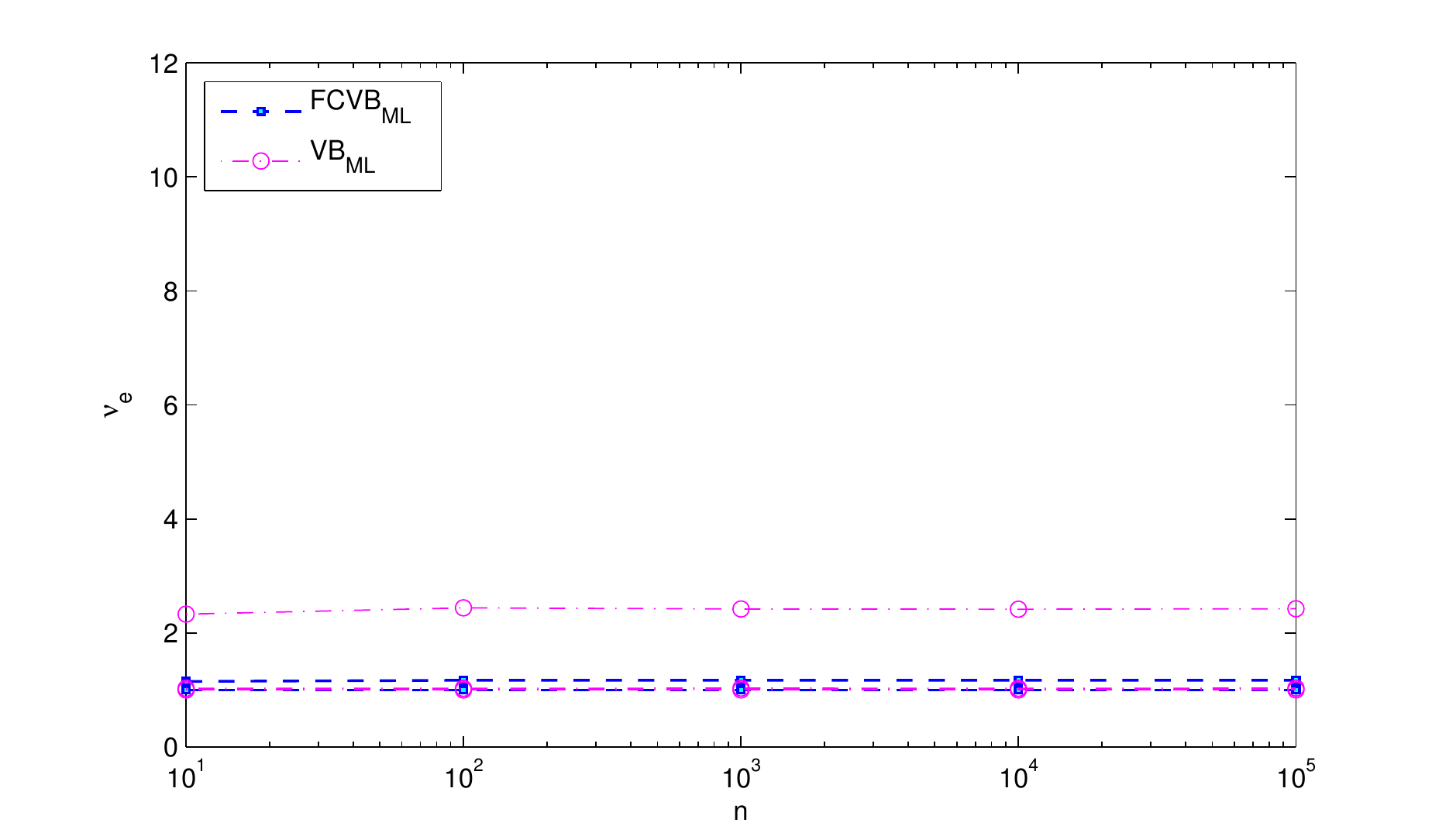}
\par\end{centering}
\centering{}\caption{\label{fig:N}Acceleration rate $\protect\nIVB/\protect\eIVB$ (above)
and effective number $\protect\eIVB$ of IVB cycles (below) versus
number of data $n$, with $10^{3}$ Monte Carlo runs.. Three curves
for each algorithm, from bottom to top, correspond to number of states:
$M=\{2,8,64\}$. Note that, some curves are almost identical to each
other.}
\end{figure}
In Fig. \ref{fig:M=00003D2-8-64}, the overall average values of $\nu_{c}$
for $\tradVBML$ and $\tradFCVBML$ and those of $\eIVB$ for $\VBML$,
$\FCVBML$ are $2.6\pm0.7$, $1.5\pm0.3$ and $1.6\pm0.6$, $1.1\pm0.1$
over $10^{5}$ Monte Carlo runs, respectively. Hence the acceleration
rate $\nIVB/\eIVB$ in this context is about $1.5$ for these VB and
FCVB schemes.

In Fig. \ref{fig:N}-\ref{fig:M} (lower panels), we can see that
$\eIVB$ for both $\VBML$ and $\FCVBML$ are very small, $O(1)$,
and independent of $n$, even at $n=10^{5}$. At high values of $M$
($32$ and $64$), $\eIVB$ for $\VBML$ increases considerably, while
$\eIVB$ for $\FCVBML$ increases only slightly. 

Compared with $\tradVBML$ and $\tradFCVBML$, we can see that the
acceleration rate of $\FCVBML$ is approximately linear in the $\log$
of $\ndata$ or $\nstate$, i.e. $O\left(\frac{\nu_{c}}{\eIVB}\right)=O(\log\nstate)$
and $O\left(\frac{\nu_{c}}{\eIVB}\right)=O(\log n)$, with fixed $n$
and fixed $M$, respectively. The acceleration rate of $\VBML$ is
super-linear against a $\log$ scale, when $n$ and $M$ are high.
Hence, for $\VBML$, we have $O\left(\frac{\nu_{c}}{\eIVB}\right)\geq O(\log\nstate)$
and $O\left(\frac{\nu_{c}}{\eIVB}\right)\geq O(\log n)$, with fixed
$n$ and fixed $M$, respectively. As a consequence, from simulation
results of $\eIVB$ and acceleration rate $\frac{\nIVB}{\eIVB}$,
we can deduct that the converged IVB cycles, $\nu_{c}$, of $\tradVBML$
and $\tradFCVBML$ are also logarithmically scale against  both $n$
and $M$. 

Overall, this $\log$scale phenomenon may be relevant to exponential
forgetting property of HMC, as explained in Section \ref{subsec:chap6:Accelerated-schemes}.
The simulations also show that VB requires slightly more number of
IVB cycles than FCVB, possibly because FCVB circulates hard-information,
which is likely to converge faster than soft-information used in VB.

\begin{figure}
\begin{centering}
\includegraphics[width=1\columnwidth]{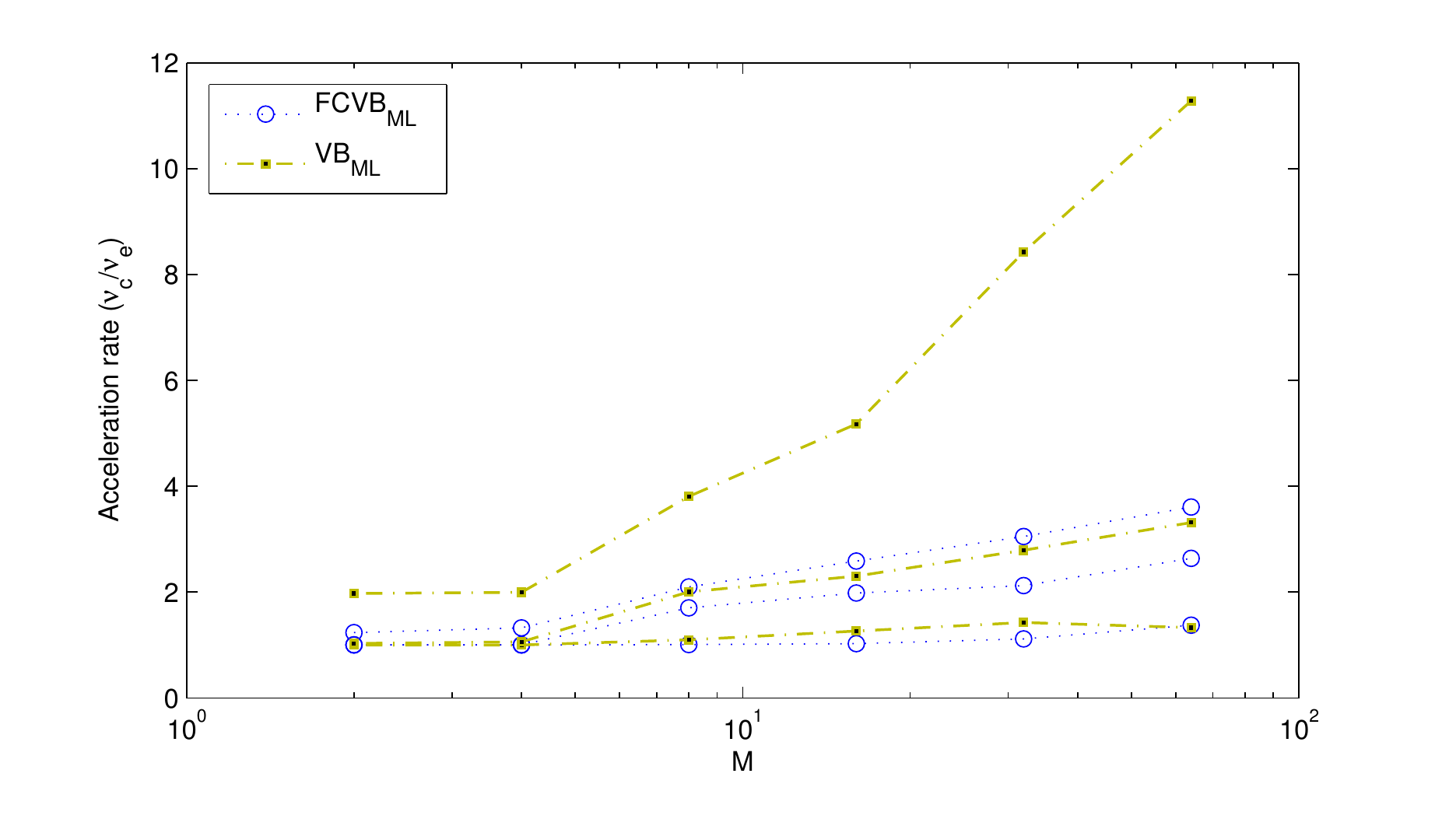}
\par\end{centering}
\begin{centering}
\includegraphics[width=1\columnwidth]{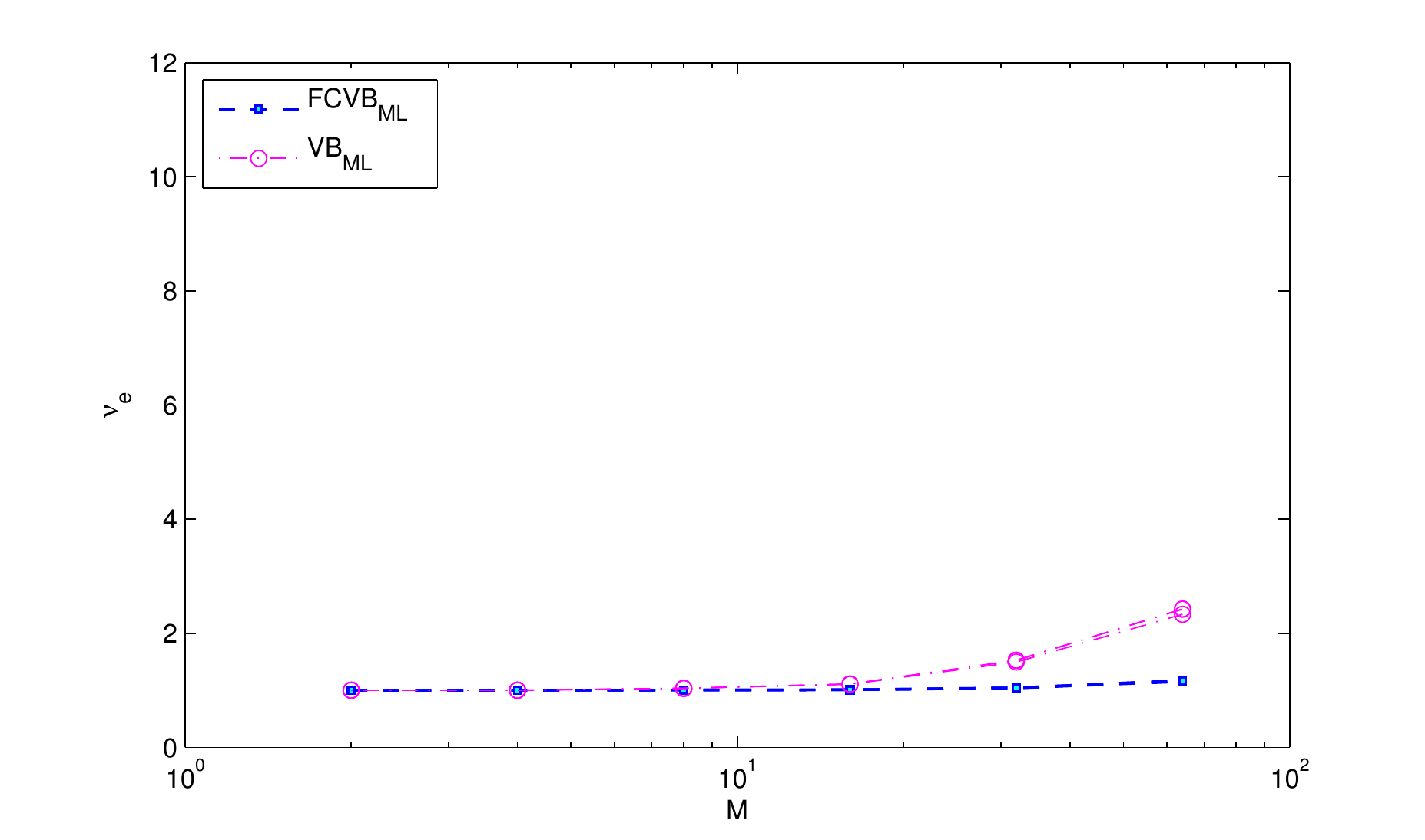}
\par\end{centering}
\centering{}\caption{\label{fig:M}Acceleration rate $\protect\nIVB/\protect\eIVB$ (above)
and effective number $\protect\eIVB$ of IVB cycles (below) versus
number of states $\protect\nstate$, with $10^{3}$ Monte Carlo runs..
Three curves for each algorithm, from bottom to top, correspond to
number of data: $n=\{10^{1},10^{3},10^{5}\}$. Note that, some curves
are almost identical to each other.}
\end{figure}

\section{Rayleigh fading channel \label{subsec:chap8:Rayleigh-fading-channel}}

Although the Rayleigh fading channel was thoroughly reviewed in Section
\ref{sec:ch3:Fading-channel}, some key points for simulation will
be summarized here for clarity. 

Let us recall that, in practice, the receiver may be moving with velocity
$v$. Because of the Doppler effect, this movement causes a fading
rate $g_{i}\TRIANGLEQ\left\Vert \hdelay_{\itime}\right\Vert $ (\ref{eq:ch3:g=00005Bk=00005D:case3})
for the amplitude's average of scattering received signals, as illustrated
in Fig. \ref{fig:chap3:fading_path}. A statistical model for $|\hdelay_{\itime}|$,
firstly proposed in {[}\citet{ch2:origin:Fading:Clark68}{]}, is an
envelope of a stationary complex Gaussian process, whose autocorrelation
function (ACF) is given by (\ref{eq:ch3:ACF(t)}): 
\begin{equation}
\rho(\Tsample)=\sigma^{2}J_{0}(2\pi\fDoppler\Tsample)\label{eq:chap8:rho-ACF}
\end{equation}
where $\sigma^{2}$ is the variance of the complex Gaussian process
per dimension, $\fDoppler=v/\lambda$ (Hz) is the maximum Doppler
frequency, $\lambda$ (m) is the transmitted carrier wavelength, $J_{0}(\cdot)$
is the zero-order Bessel function of the first kind and $\Tsample$
is the sampling period. Note that, at any sampling time $\iinn$,
the marginal distributions of $\gbar_{i}$ and $\gbar_{i}^{2}$ for
this Gaussian process are Rayleigh {[}\citet{Rayleigh_process,Vehicular_Fading}{]}
and $\chi^{2}(2)$ {[}\citet{ch3:bk:Fading:cavers00}{]} distribution,
respectively, as shown in equations (\ref{eq:ch3:Chi-squared}-\ref{eq:ch3:Rayleigh}).
Hence, this model is called a Rayleigh fading channel. Because it
is prohibitive to evaluate $\gbar_{i}$ via ARMA process, a quantized
HMC model {[}\citet{ch3:ART:FadingMarkov:tutorial08}{]} for $\gbar_{i}$
is currently a popular choice for the decoder over fading channel.
In our simulation, all values of $\gbar_{i}$ are generated from the
quantized HMC, as defined below.

\subsection{Markov source with HMC fading channel}

From receiver's model in equation (\ref{eq:ch3:receiver_case3}),
let us consider a fading channel model, with the same Markov source
$s_{i}$ and notations in previous section:

\begin{equation}
x_{i}=\gbar_{i}s_{i}+e_{i},\ \iinn\label{eq:Rayleigh_Markov}
\end{equation}

where $\gbar_{i}$ is one of quantized $K$-levels of Rayleigh distribution
$f(g_{i})$. The transition matrix, $\TBold_{c}$, of fading HMC is
a quantized version of the conditional distribution: $f(g_{i}|g_{i-1})=f(g_{i},g_{i-1})/f(g_{i})$,
where $f(g_{i},g_{i-1})$ is a time-invariant bi-variate Rayleigh
distribution (see Appendix. \ref{App:chap:Quantization} for details).
Then, the model (\ref{eq:Rayleigh_Markov}) can be augmented to be
an HMC with $MK$ states. Let us then define $MK\times MK$ transition
matrix as $\TBold_{cs}=\TBold_{c}\bigotimes\TBold_{s}$, where $\bigotimes$
denotes Kronecker product for matrix. Because $K=8$ channel quantization
levels are found to be accurate enough for $\fDoppler\Tsample\leq0.01$
{[}\citet{ch2:art:Fading:Capacity_05}{]}. Then, in the sequel, let
us consider an $\nstate=16$-QAM signal transmitted over a Rayleigh
channel, which yields an augmented source-channel HMC of $MK=128$
states. The algorithms for HMC in Chapter \ref{=00005BChapter 6=00005D}
will infer the $MK$-state label of the augmented HMC, $\gbar_{i}s_{i}$,
in (\ref{eq:Rayleigh_Markov}). We can then marginalize out the $K$
channel levels to compute the $M$-state Markovian source. Hence,
the BER in our simulations took only the source state estimates into
account. 

For parameter settings, since $g_{i}$ and $s_{i}$ are assumed independent,
the fading power $E(g_{i}^{2})=2\sigma^{2}$ is normalized to unity
in this section, i.e. $\sigma^{2}=0.5$, so that the average SNR per
bit $SNR_{b}$ is still the same as average energy per bit of the
source, i.e. $SNR_{b}=E_{b}/N_{0}$ (Section \ref{sec:chap8:Markov-source-AWGN}).
Also, as shown in (\ref{eq:chap8:rho-ACF}), the correlation coefficient,
$\rho\TRIANGLEQ\rho(\Tsample)$, of the time-invariant $f(g_{i},g_{i-1})$
is a function of the normalized Doppler frequency $\fDoppler\Tsample$,
whose meaning is explained in Section \ref{subsec:chap2:Fading-channel}.
This relationship is illustrated in in Fig. \ref{fig:rho_vs_fdT}.
Then, we can vary $\rho$ via three practical regimes of fading channel,
i.e. slow, intermediate and fast fading regimes, corresponding to
$\fDoppler\Tsample\lesssim0.01$, $0.01\lesssim\fDoppler\Tsample\lesssim0.4$,
and $\fDoppler\Tsample\gtrsim0.4$, respectively {[}\citet{ch3:ART:FadingMarkov:tutorial08}{]}.
Because the exact thresholds for those three regimes are not clearly
defined in the literature, let us re-define the range $\fDoppler\Tsample\leq0.01$,
$0.01<\fDoppler\Tsample\leq0.1$, and $\fDoppler\Tsample>0.1$ for
those three regimes, respectively, in this thesis. 

Note that, the correlation in $\TBold_{c}$ is implied by value of
$\rho$ in (\ref{eq:chap8:rho-ACF}). Because the correlation in our
uniform samples-based $\TBold_{s}$ is low, the correlation in $\TBold_{cs}$
mostly depends on $\rho$. Hence, by varying $\rho$, we are actually
varying the correlation in the augmented HMC model (\ref{eq:Rayleigh_Markov}).

\begin{figure}
\begin{centering}
\includegraphics[width=0.8\columnwidth]{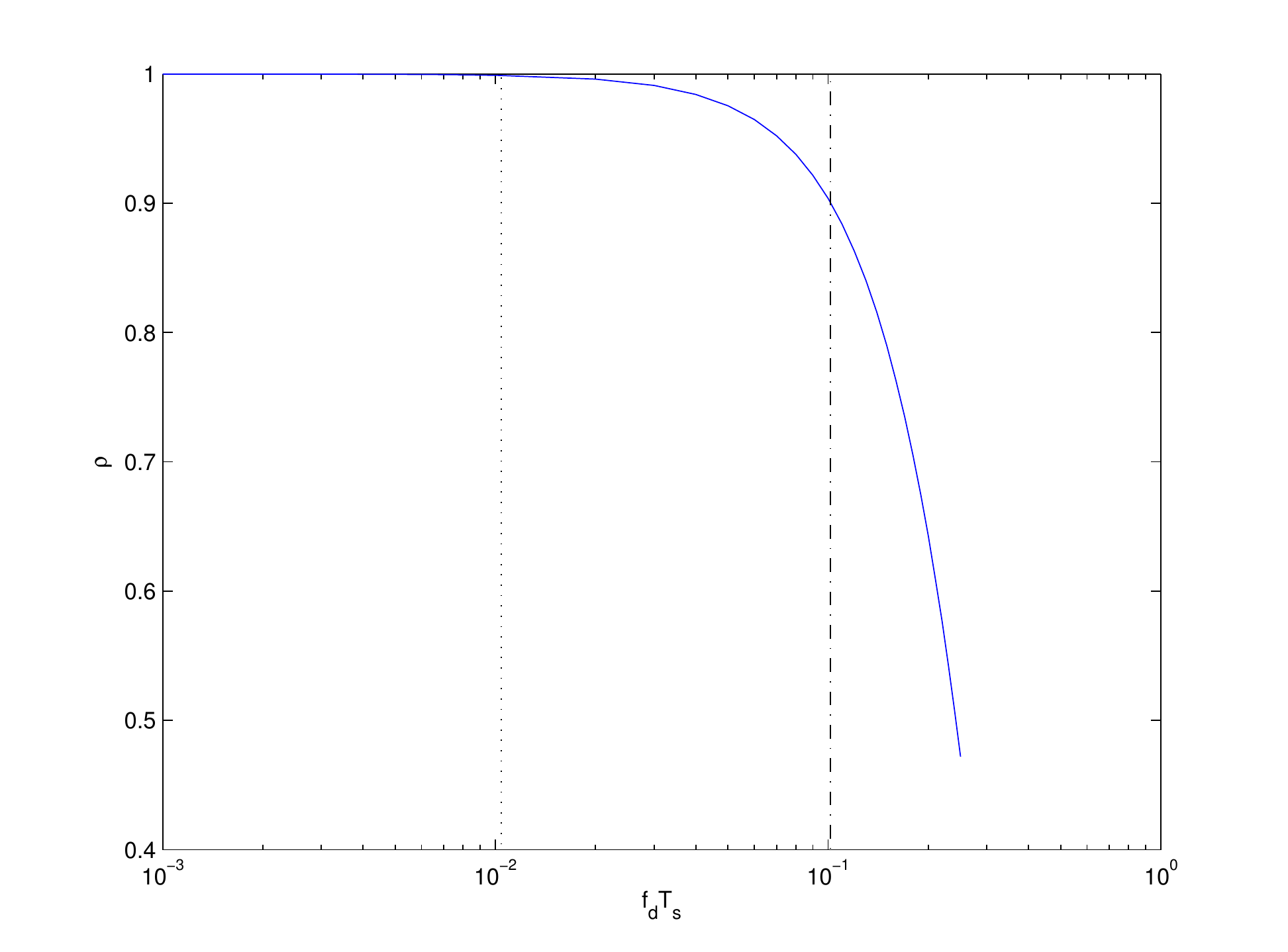}
\par\end{centering}
\centering{}\caption{\label{fig:rho_vs_fdT}Correlation coefficient $\rho=J_{0}(2\pi\protect\fDoppler\protect\Tsample)$
versus normalized Doppler frequency $\protect\fDoppler\protect\Tsample$.
From left to right: three fading regimes are slow, intermediate, and
fast fading regimes, corresponding to $\protect\fDoppler\protect\Tsample\protect\leq0.01$,
$0.01<\protect\fDoppler\protect\Tsample\protect\leq0.1$, and $\protect\fDoppler\protect\Tsample>0.1$,
respectively.}
\end{figure}

\subsection{Performance of source estimates in HMC channel}

\begin{figure}
\begin{centering}
\includegraphics[width=1\columnwidth]{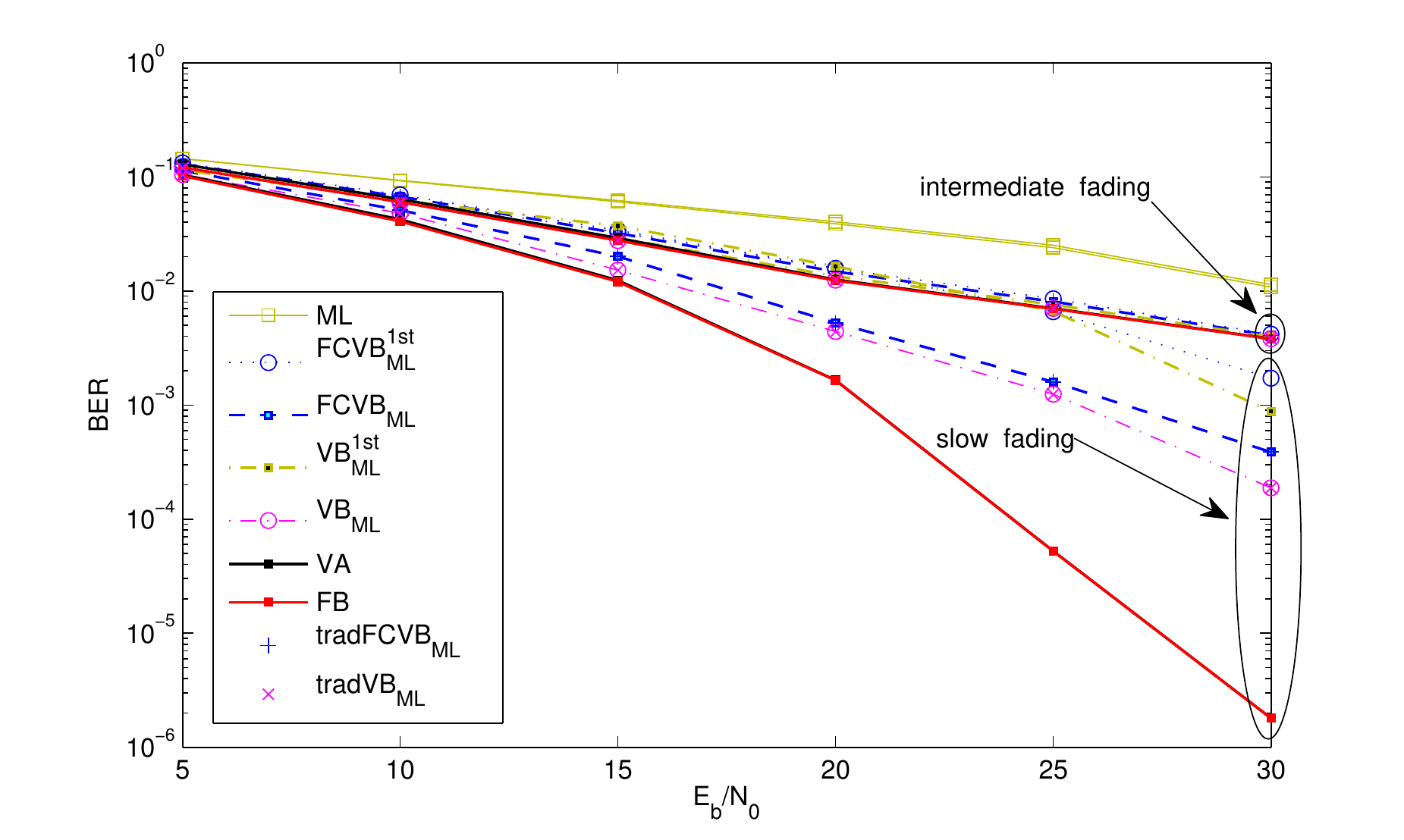}
\par\end{centering}
\centering{}\caption{\label{fig:Rayleigh_BER}BER versus SNR per bit, $E_{b}/N_{0}$(dB),
for HMC source ($\protect\nstate=16$-QAM)) over HMC fading channel
($K=8$ levels), with $10^{5}$ Monte Carlo runs.}
\end{figure}
For evaluating performance, the simulation against variable SNR per
bit, $E_{b}/N_{0}$, is displayed in Fig. \ref{fig:Rayleigh_BER}.
Two values $\fDoppler\Tsample=0.01$ and $\fDoppler\Tsample=0.1$
are considered as slow and intermediate fading regimes, respectively,
in this figure. 

We can see that, in the fast fading regime, the correlation coefficient
$\rho$ is not too high (less than $0.9$), all the algorithms (except
ML) have the same performance and similar to those for the toy HMC
example in Fig. \ref{fig:M=00003D2-8-64}. This result is expected,
because when the fast fading channel becomes dominant, the samples
between two time point becomes more independent. In the slow fading
regime, which is more popular in practice {[}\citet{ch2:art:Fading:Capacity_05}{]},
the performances of both FB and VA are almost coincide with each other
and better than those in the fast fading regime. However, VB's and
FCVB's performance become closer to ML than to FB or VA in high SNR
per bit. This fact implies that VB and FCVB approximations become
less and less accurate. In order to corroborate this finding, two
plots of BER and $KLD_{\ftilde||f}$ versus $\rho$ are displayed
in Fig. \ref{fig:BER-vs-rho} and Fig. \ref{fig:KLD-vs-rho}, respectively
(for the computation of $KLD_{\ftilde||f}$, see Section \ref{sec:KLD-for-HMC}).

\begin{figure}
\begin{centering}
\includegraphics[width=1\columnwidth]{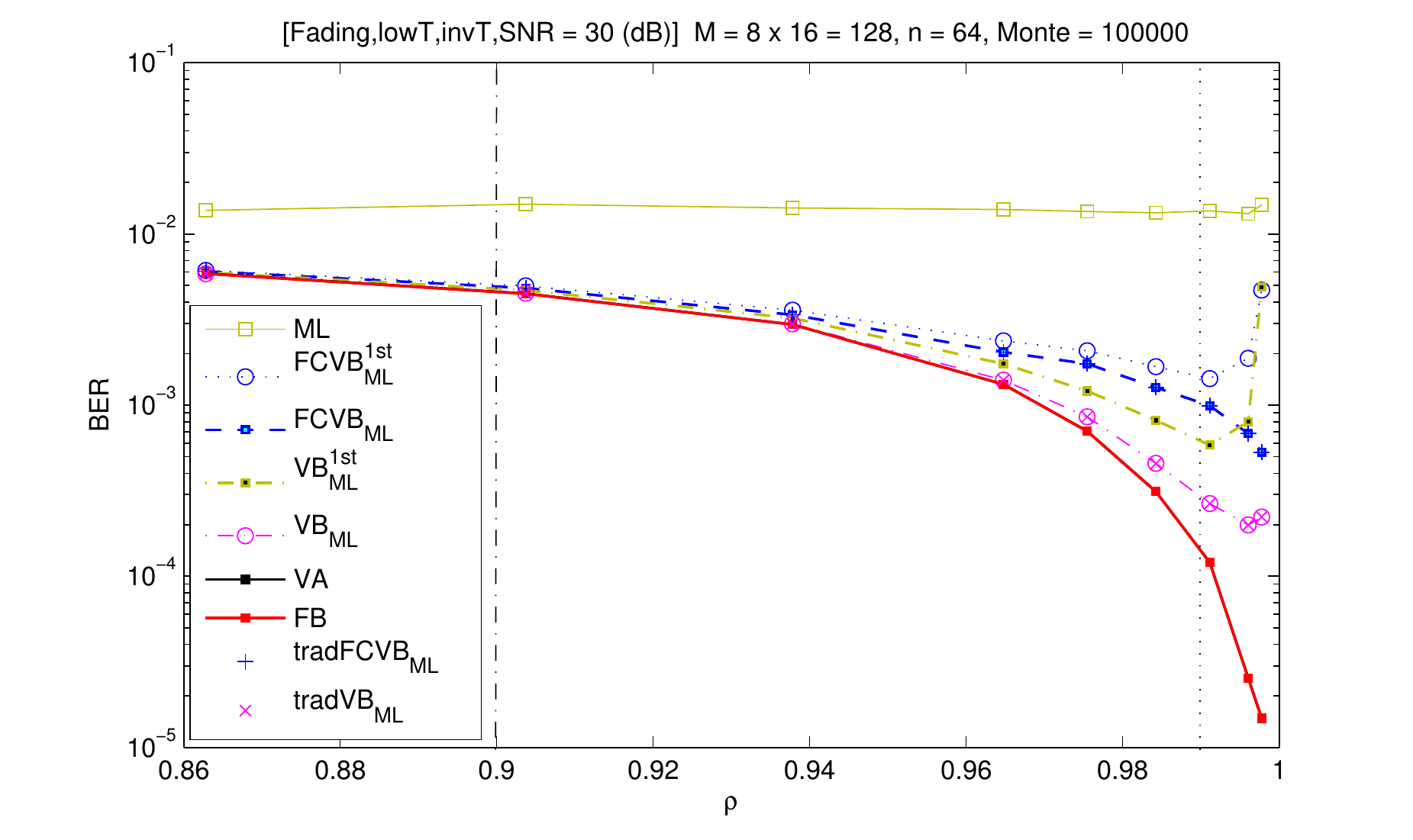}
\par\end{centering}
\centering{}\caption{\label{fig:BER-vs-rho}BER versus correlation coefficient $\rho$
of Rayleigh channel, at $E_{b}/N_{0}=30$ dB, for three fading regimes
in Fig. \ref{fig:rho_vs_fdT}, with $10^{5}$ Monte Carlo runs.}
\end{figure}
In Fig. \ref{fig:BER-vs-rho}, we focus on the case $E_{b}/N_{0}=30$
dB in Fig. \ref{fig:Rayleigh_BER}. By varying the fading channel
from slow to fast regimes, i.e. from $\fDoppler\Tsample\leq0.01$
up to $\fDoppler\Tsample\geq0.1$, we can investigate many cases,
$\rho\geq0.99$ down to $\rho\leq0.9$, respectively. In all cases,
ML's performance does not change and remains with the worst performance.
For the fast regime ($\rho\leq0.9$), all algorithms (except ML) have
virtually the same performance. For the intermediate regime, $0.9\leq\rho\leq0.99$,
there is a trade-off in performance between two groups of exact and
approximate estimates, i.e. between FB (and VA) and VB (and FCVB).
For the slow regime ($\rho\geq0.99$), although VB and FCVB's estimates
are still better than ML's, their performance deteriorates compared
to FB and VA. 

\begin{figure}
\begin{centering}
\includegraphics[width=1\columnwidth]{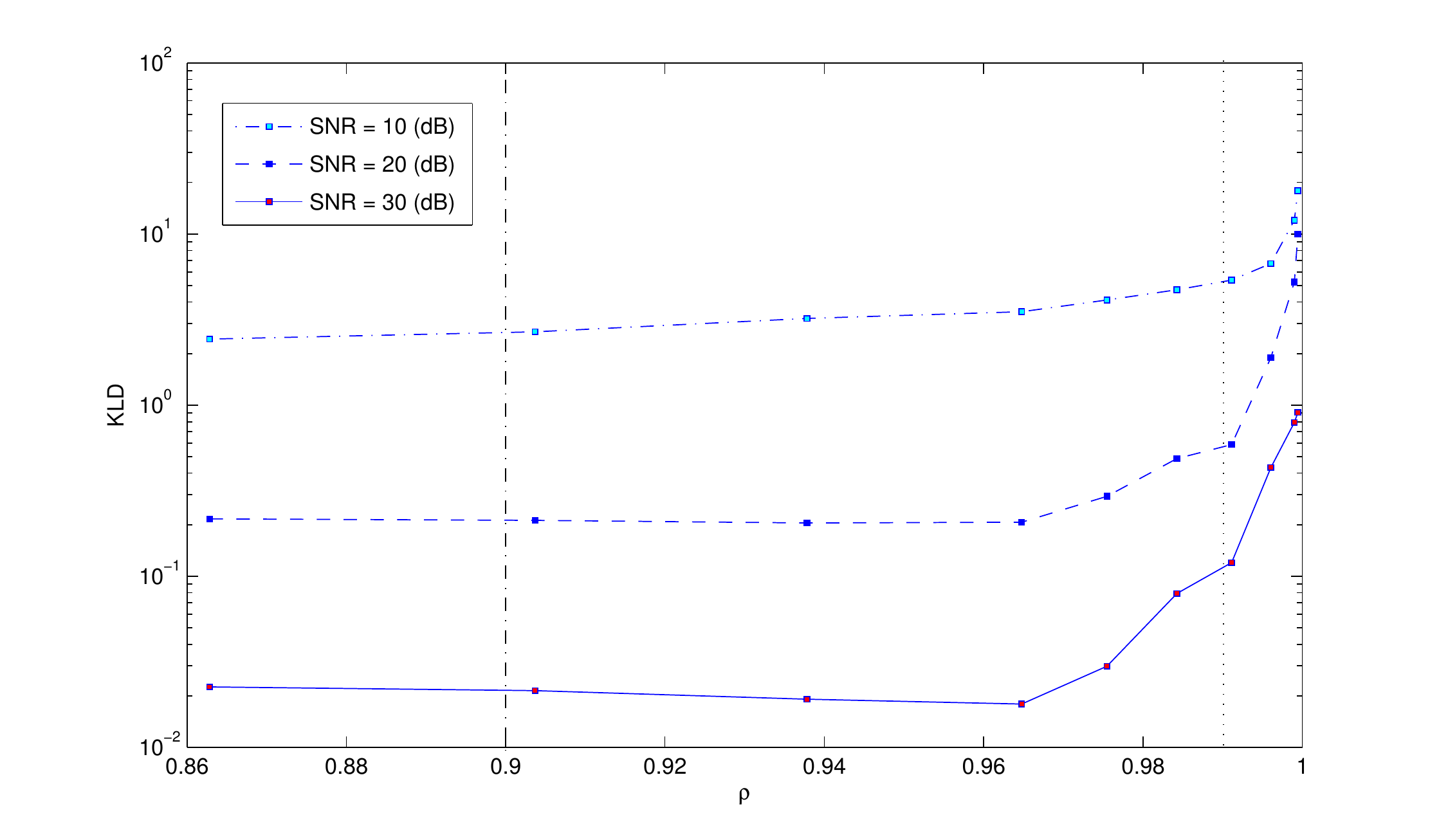}
\par\end{centering}
\centering{}\caption{\label{fig:KLD-vs-rho}$KLD_{\protect\ftilde||f}$, for the VB approximation
of source-channel HMC, versus $\rho$, for the slow fading regime
in Fig. \ref{fig:Rayleigh_BER}, with $10^{4}$ Monte Carlo runs.}
\end{figure}
The dependence of VB's accuracy of approximation on $\rho$ is shown
clearly in Fig. \ref{fig:KLD-vs-rho}. For fast and intermediate regimes
($\rho\leq0.99$), $KLD_{\ftilde||f}$ is small, implying that VB
yields a good approximation. For the slow regime ($\rho\geq0.99$),
the $KLD_{\ftilde||f}$ increases sharply with $\rho$, and, hence,
VB for the HMC is not a good approximation under this fading conditions.
Since FCVB is a CE-based version of VB, the trend of $KLD_{\ftilde||f}$
in Fig. \ref{fig:KLD-vs-rho} explains the diminished performance
of VB and FCVB compared to FB and VA, observed in Fig. \ref{fig:BER-vs-rho}.
We also see that this phenomenon is repeated for many values of SNR
per bit $E_{b}/N_{0}$, although $KLD_{\ftilde||f}$ becomes smaller
(i.e. VB yields a better approximation) in higher SNR regimes, as
expected.

The empirical results on relationship between digital detection accuracy
and correlation coefficient $\rho$ also proposes a trade-off situation
in practice: 
\begin{itemize}
\item By increasing $\rho$, the performance of Markov-based algorithms
(i.e. FB and VA), is likely to be increased, but the approximations
in class of independent distributions (i.e. VB and FCVB) is decreased.
The higher $\rho$ is, the more significant this phenomenon becomes.
This fact is actually reasonable, since the original model become
more correlated in this case. 
\item In simulations, it is shown that there are three working regimes for
FCVB algorithm. If correlation in transition matrix of HMC is not
high ($\rho\leq0.9$), FCVB can achieve the same performance as VA
and FB. When the correlation is high ($0.9\leq\rho\leq0.99$), FCVB
yields a trade-off between performance and computational load. And
finally, if the correlation is too high ($\rho\geq0.99$), FCVB is
not an attractive algorithm, since approximations in independent class
for HMC are not suitable. 
\end{itemize}

\subsection{Computational load of source estimates in HMC channel \label{subsec:chap8:cost-HMC-Rayleigh}}

The average running time (over all tested $\rho$) of all algorithms
in Fig. \ref{fig:BER-vs-rho} are displayed in Fig. \ref{fig:Fading-Time}.
This result shows that FCVB is an attractive algorithm, with much
lower complexity than VA. The ratios of averaged running-time of $\FCVBML$,
$VA$, $FB$ and $\VBML$ versus $ML$'s are $2.03$, $139.7$, $399.7$
and $575.4$, respectively. For the number of IVB cycles, we have
$\eIVB=1.04\pm0.04$, $\nu_{c}=1.79\pm0.54$ and $\eIVB=1.24\pm0.10$,
$\nu_{c}=2.28\pm0.64$ for $\FCVBML$, $\tradFCVBML$ and $\VBML$,
$\tradVBML$, respectively. Hence, the acceleration rate is about
$1.25$ for both FCVB and VB schemes in this augmented HMC context.
These results are all consistent with Table \ref{tab:ch6:ComputationalComplexity},
and with the explanation in sections \ref{subsec:chap8:Computational-cost-AWGN}-\ref{subsec:ch8:Empirical-speed-up}.

\begin{figure}
\begin{centering}
\includegraphics[width=1\columnwidth]{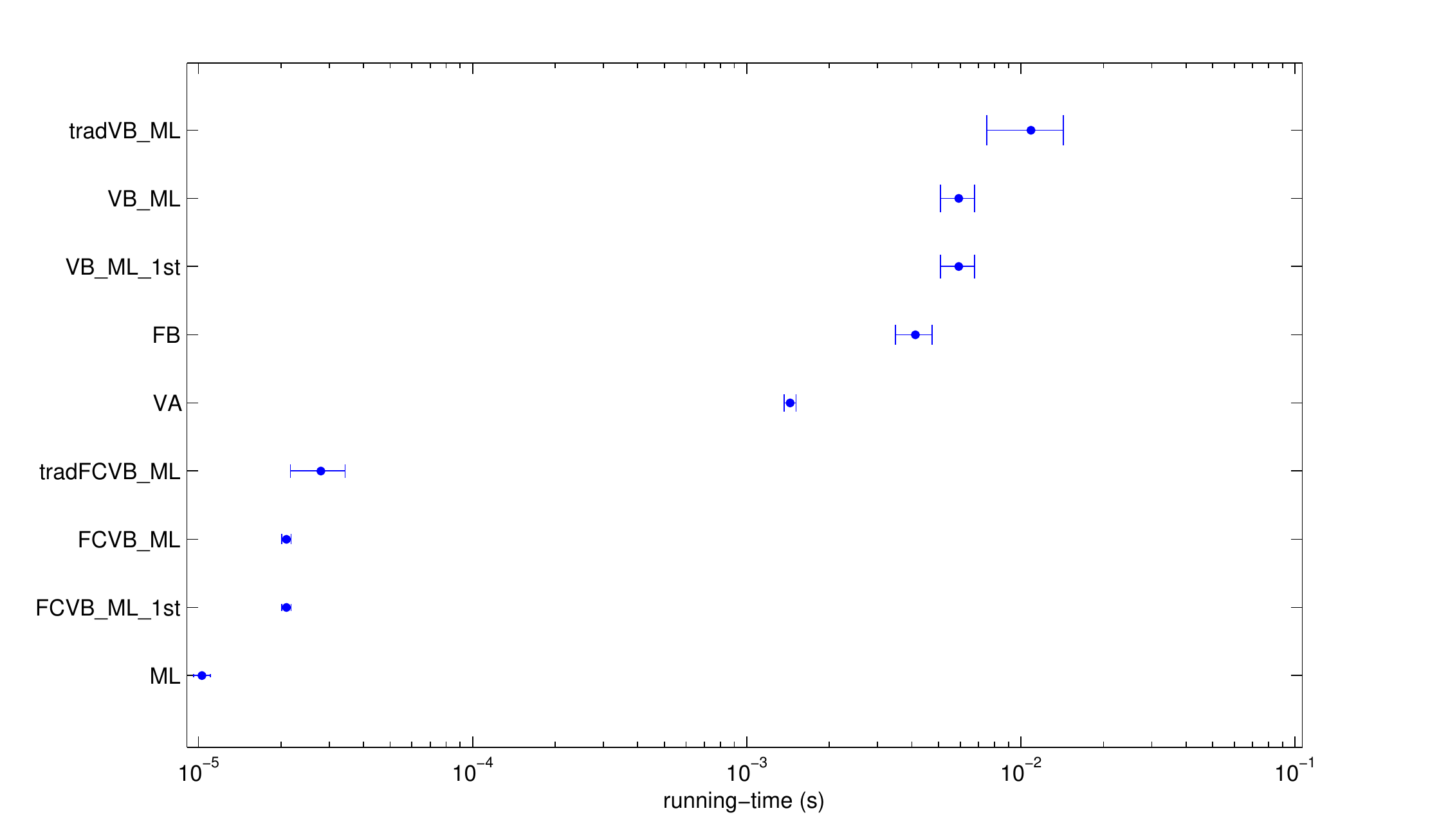}
\par\end{centering}
\centering{}\caption{\label{fig:Fading-Time} Running-time averaged over all tested $\rho$
in Fig. \ref{fig:BER-vs-rho}, with $10^{5}$ Monte Carlo runs. The
running-time is measured by C++ implementation and 3 GHz Core2Duo
Intel processor.}
\end{figure}

\section{Summary}

This chapter presents simulation results in the context of two digital
receiver models, developed in Chapter \ref{=00005BChapter 3=00005D}.
The first assumes a Markov source, whose transition matrix was generated
uniformly and simulated as input to a synchronized AWGN channel. In
the simulations, the accuracy of all state-of-the-art algorithms FB,
VA and ICM (i.e. FCVB) was shown to be the same, but the accelerated
ICM/FCVB algorithm reduced the effective number of iteration cycles
to about one, on average. 

Secondly, the same Markov source was used as an input to a synchronized
finite state Markov fading channel, with number of data, $n$, and
number of states, $M$. In this case, although the computational load
$O(nM)$ of the accelerated FCVB scheme was still much smaller than
$O(nM^{2})$ of VA, its accuracy was only comparable to VA's when
the correlation coefficient of Rayleigh fading process was not too
high. This trade-off facility can be applied in other contexts, as
discussed in the next chapter.

\chapter{Contributions of the thesis and future work\label{=00005BChapter 9=00005D}}

The pathway established in Chapter \ref{=00005BChapter 2=00005D}
has ended. It started by reviewing the telecommunications literature
for the purpose of identifying the main challenges that we wanted
to address using Bayesian methodology. The interior chapters have
provided progress with this aim. In this closing chapter, we will
summarize the contributions of the thesis and offer suggestions for
future work, before closing the thesis with concluding remarks. 

\section{Progress achieved by the thesis \label{sec:chap9:Progress-achieved}}

In this section, the strengths and weaknesses of the thesis will be
reflected by considering the major contributions and suggesting future
work. In turn, these can be divided into four principal themes, corresponding
to four key tasks in Chapter \ref{=00005BChapter 1=00005D}. For each
theme, the discussion will be presented in three steps: key achievements
in this thesis, the generalization, and proposals for future work.

\subsection{Optimizing computational flow via ring theory}

For this theme, the thesis' purpose is to reduce the computational
load involving the evaluation of objective functions arising in inference
problems relevant to telecommunication. 
\begin{itemize}
\item \textit{Main contributions:}
\end{itemize}
The thesis' original idea is to separate computational flow into two
domains: operators and variables (Section \ref{sec:Conditionally-separable}).
Respectively, two key contributions are: in the former domain, a novel
theorem (Theorem \ref{thm:GDL}) on the computational perspective
of the generalized distributive law (GDL) in ring theory, and in the
latter domain, no-longer-needed (NLN) algorithm (Algorithm \ref{alg:chap7:(NLN)-algorithm}).
The theorem guarantees computational reduction for any valid application
of GDL upon operators, while the NLN algorithm exploits the conditionally
independent (CI) topology of variables. Together, they open up two
other contributions. The first one is the efficient Forward-Backward
(FB) computational flow (Section \ref{sec:chap5:GDL-for-objective})
for distributing the operators over variables when needed, while the
second one is the discovery of explicit formulae for counting the
number of operators, ring-sum and ring-products (Section \ref{subsec:ch5:Computational-reduction-via-GDL}),
in that FB flow.

In general probability context, there is an increase exponentially
of number of operators with the number of data in computation of the
joint distribution, because the number of state increases exponentially
(the curse of dimensionality). However, the number of operators falls
exponentially with the number of NLN variables. In hidden Markov chain
context, the FB algorithm and VA, two special cases of FB recursion,
exploit the latter in combating the former. Hence, the FB recursion
and GDL helps interpret the linear dependence on data in the complexity
of FB and VA, as a consequence of numerical cancellation over the
number of operators.
\begin{itemize}
\item \textit{Generalization:}
\end{itemize}
In principle, the recursive FB flow is applicable to objective functions
to which GDL is valid. For example, it is applicable to computation
of the message-passing algorithm in graphical learning {[}\citet{ch2:origin:GDL:McEliece,ch2:bk:ToddMoon}{]}
and Markov random field {[}\citet{ch2:origin:MCMC:Gibbs}{]}, computation
of marginalization, maximization, entropy in Bayesian learning, evaluation
of Iterative VB algorithm {[}\citet{ch4:BK:AQUINN_06,ch4:art:Jordan:VB08}{]}. 
\begin{itemize}
\item \textit{Future work:}
\end{itemize}
Based on the above generalization, future work can be designed in
two directions: optimizing the FB flow and finding more applications. 

- For the former: the objective is to find the CI topology that minimizes
the computational reduction achieved via FB recursion. Although the
global minimization is an NP-complete problem, as discussed in Section
\ref{sec:chap5:optimizations}, a local solution can be found by extending
the computational flow from two-directions in FB recursion to multi-directions
in a topological graph. Another potential solution is to apply topological
sorting algorithms to the CI topology before implementing the FB recursion,
owing to the explicit formulae for determining computational complexity
in this case.

- For the latter: the objective is to verify whether GDL is valid
for a particular objective function. One key property is Markovianity,
owing to its natural CI topology. Because Markovianity is assumed
in many efficient algorithms in telecommunications system, as reviewed
in Section \ref{sec:chap2:Review-of-digital}, the FB recursion can
be applied to studying the computational reduction in these algorithms,
e.g. in forward-backward lattice filters and in other scenarios in
telecommunications. Another interesting issue is to explore the relationship
between recursion (Section \ref{sec:chap6:FB and VA via CI}) and
iteration (Section \ref{sec:chap6:VB-infer-for-HMC}) in Markovian
objective functions. If the recursion is carried out via GDL, it is
likely that the computational load can be further reduced in two cases:
recursion embedded in iteration, and iteration embedded in recursion.
The first case was considered in Accelerated FCVB algorithm in this
thesis (Section \ref{subsec:chap6:FCVB-algorithms-HMC}), while the
second case can be explored in future online variance of VB scheme.

\subsection{Variational Bayes (VB) inference }

For this theme, the thesis' purpose is to extend the VB methodology
in order to achieve more accurate deterministic distributional approximation.
Our aim was not to reduce computational load, \textit{per se}.
\begin{itemize}
\item \textit{Main contributions}
\end{itemize}
The thesis' original idea is to weaken the coupling between parameters
in the transformed metric by diagonalizing the Hessian matrix at a
specific point. The VB approximation is then applied to the transformed
posterior distribution. The technique is referred to as the transformed
VB (TVB) approximation (Section \ref{sec:chap7:Transformed-VB-approximation}).
Compared with VB, the TVB scheme was shown to yield significant improvement
in accuracy when the Hessian of the transformed distribution is designed
to be diagonal at its MAP point. Intuitively, this improvement is
achieved because the transformed variables are asymptotically independent,
in which case the VB approximation is exact. Note that, the TVB approximation
has a fundamental output as an approximate distribution in the \textit{original}
metric (Fig. \ref{fig:ch7:TVB}), a novel contribution, when compared
to classical orthogonalization approaches, whose purpose is to produce
estimates of transformed variables.
\begin{itemize}
\item \textit{Generalization:}
\end{itemize}
In principle, the TVB approximation is applicable to any multivariate
posterior distribution, whose desired marginalization is intractable.
In practice, for tractability of Iterative VB algorithm, the transformed
distribution should be separable-in-parameters (Definition \ref{DEF:(Separable-in-parameter-(SEP)}),
i.e its logarithm can be factorized into product of functions for
each parameter separately. 
\begin{itemize}
\item \textit{Future work:}
\end{itemize}
Based on the above generalization, future work can be designed in
two directions: optimizing computational load and designing new transformations,
such that the transformed distribution is separable-in-parameter (Definition
\ref{DEF:(Separable-in-parameter-(SEP)}). 

- For the former: the current TVB approximation may involve computational
intensive IVB cycles. A potential solution is to replace the involved
expectation with maximization via the FCVB scheme (Lemma \ref{lem:(Iterative-FCVB-algorithm)}).
However, this scheme reduces to a point estimation and neglects all
the moments, which, in turn, may significantly reduce the quality
of distributional approximation (Fig. \ref{fig:BER-vs-rho}).

- For the latter: Two potential transformations are global diagonalization
of transformed Hessian matrix and frequentist's transformation techniques
{[}\citet{ch7:origin:Box_Cox:64,ch2:art:BoxCox:sakia92,ch2:art:transform:semi_para2008}{]}.
The task is, however, not trivial, because of difficulty with each
of transformation design. Global diagonalization of the Hessian matrix
is only feasible for bivariate distribution {[}\citet{ch7:origin:Cox_Reid:87}{]}.
Furthermore, in frequentist's transformation, the inverse transformation
can be applied in the point estimate. In contrast, the inverted distribution,
i.e. the TVB approximation, may be highly complicated and, in particular,
its marginalization is not available. 

These difficulties show that further work are required for TVB. Even
in the current form, TVB needs to be applied on a case-by-case basis,
since the certainty equivalent (CE) points, like the MAP point, may
not be available at the beginning. Nevertheless, the TVB shows the
potential for relaxing VB methodology to achieve more accurate distributional
approximation.

\subsection{Inference for the Hidden Markov Chain}

The previous two themes focus exclusively on computational reduction
and enhancing accuracy, respectively, but not on both together. The
third main theme of this thesis is to provide new algorithms which
achieves better trade-offs between performance and speed for label's
inference in the HMC.
\begin{itemize}
\item \textit{Main contributions:}
\end{itemize}
The thesis' idea is to replace the VA with the ICM algorithm for better
trade-off. Two contributions, one for performance and one for speed,
were achieved using this approach. 

- For performance, a Bayesian interpretation was given for both the
ICM (Section \ref{subsec:chap4:Functionally-Constraint-VB}) and the
VA algorithm (Section \ref{subsec:cha6:VA-approximated-HMC}). We
show that the criteria are to preserve the global MAP trajectory and
the local MAP trajectory at any recursive and iterative step, respectively.
This interpretation also explained why the accuracy of VA and ICM
are comparable when correlation in the HMC is not too high. 

- For speed, an accelerated scheme was designed for the VB scheme,
in which any VB marginals that have converged are flagged and are
not updated in the next IVB iteration. We shows that this accelerate
scheme provides the same output as original scheme (Lemma \ref{lem:(Accelerated-IVB-algorithm)},\ref{lem:(Accelerated-FCVB-algorithm)}).
Since ICM can be re-interpreted as the functionally constrained VB
(FCVB) approximation, the computational load of Accelerated ICM/FCVB
was reduced, in simulation, from $O(\nu nM)$ down to nearly $O(nM)$,
where $\nu$, $n$ and $M$ are number of ICM iterations, number of
time points and number of states in HMC, respectively. 
\begin{itemize}
\item \textit{Generalization:}
\end{itemize}
In principle, the accelerated scheme for Iterative VB and ICM algorithm
can be applied to any inference problem involving hidden field of
CI variables, notably the Markov random field, when correlation is
not too strong.\\
\\

\begin{itemize}
\item \textit{Future work:}
\end{itemize}
Based on the above principle, we may investigate further the computational
reduction achieved by the accelerated scheme. The simulation in the
thesis showed that, in the HMC, the number of iteration for traditional
VB and ICM/FCVB is almost linear to the logarithm of both $n$ and
$M$ (Section \ref{subsec:ch8:Empirical-speed-up}). Also, the effective
number of IVB cycle for accelerated VB and ICM/FCVB scheme was close
to one and stayed nearly constant with $n$ and $M$ (Section \ref{subsec:ch8:Empirical-speed-up}).
This reduction in log-scale suggests HMC's exponentially-forgetting
property, whose influence on the number of IVB cycle should be investigated. 

\subsection{Inference in digital receivers}

For this theme, the thesis' purpose is to apply the three themes above
to practical concern in the digital demodulation in digital receivers.
\begin{itemize}
\item \textit{Main contributions:}
\end{itemize}
The thesis' idea was to apply the Accelerated ICM/FCVB algorithm and
TVB approximation to demodulation in digital receivers. Two main contributions,
one for Markovian digital detector and one for frequency synchronization,
were given in the thesis.

For Markovian digital detector (Chapter \ref{=00005BChapter 8=00005D}),
the Accelerated ICM/FCVB algorithm was applied to detecting modulated
bit stream transmitted over a quantized Rayleigh fading channel. When
the fading is not too slow, i.e. correlation between samples is not
too high, the performance of Accelerated ICM/FCVB is comparable to
the state-of-the-art VA, but with a greatly reduction of computational
load (Section \ref{subsec:chap8:Computational-cost-AWGN},\ref{subsec:chap8:cost-HMC-Rayleigh}). 

For frequency synchronization, the full Bayesian inference was studied
for a toy problem, namely frequency inference for the single-tone
sinusoidal model in AWGN channel (Section \ref{sec:chap7:Frequency-inference}).
Note that, when the frequency is off-bin, the posterior mean yields
far more accurate (Fig. \ref{fig:sinewave}) than periodogram-based
ML estimate, since posterior mean is continuous value while the DFT-based
periodogram is not (Section \ref{subsec:chap7:Performance-of-frequency}).
The accuracy of the TVB approximation was also found significantly
better than that of the VB approximation (Fig. \ref{fig:sinewave})
from the point of view of posterior mean (Remark \ref{Remark:chap7:As-a-remark}).
It is important to remember that all of these techniques - VB, TVB,
and exact posterior mean, as well as ML - are all computed via the
DFT (and implemented via FFT), and therefore have similar computational
load.
\begin{itemize}
\item \textit{Generalization:}
\end{itemize}
In principle, the Accelerated ICM/FCVB can be successfully applied
to the finite-state Markov channel (FSMC) {[}\citet{ch3:ART:FadingMarkov:tutorial08}{]}
when correlation is not too high. Also, the TVB method is attractive
for maintaining accuracy in nonlinear synchronization problem {[}\citet{ch2:art:sync_phase:SEP_ICASSP}{]}. 
\begin{itemize}
\item \textit{Future work:}
\end{itemize}
Based on the above principle, future work can be proposed in two directions:
Markovian digital decoder and carrier synchronization.

- For Markovian digital decoder (Chapter \ref{=00005BChapter 8=00005D}),
perhaps the most obvious proposal is to replace the VA with the Accelerated
ICM/FCVB. The evidence supporting proposal was provided in (Fig. \ref{fig:TimeResource},\ref{fig:ch8:Layman},\ref{fig:Fading-Time}),
showing great increase in speed without much loss of accuracy. Note
that, the Accelerated ICM/FCVB is more broadly applicable than the
Markovian context of VA. Furthermore, the Accelerated ICM/FCVB can
be implemented in both online and offline scenarios, yield the same
output in these cases, while VA is the offline algorithm.

- For carrier synchronization, the Bayesian inference is mostly preferred
when the accuracy is a premium. For example, the accuracy in frequency
and phase synchronization is critical in OFDM scheme for 4G system
(Section \ref{subsec:chap2:Memoryless-modulation}), and in joint
decoding and synchronization {[}\citet{ch2:art:sync_Turbo:Herzet07}{]}.
In the future, the challenge will to elaborate VB and TVB solution
for these problems.

\section{Conclusion}

The thesis has considered both the computational side of VB-based
inference methodology and its application in digital receivers.

For the inference tasks we considered, the mathematical tools were
Bayesian methodology and ring theory, whose purpose is to update the
belief on unknown quantities and to generalize the operators for computing
these beliefs, respectively. The required computations were efficiently
implemented via two approaches, namely recursive flow via the generalized
distributive law (GDL) from ring theory, and iterative deterministic
approximation via the Variational Bayes (VB) approximation in mean
field theory. Two key contributions were given for each of the two
approaches. For GDL, the first contribution was a novel theorem on
GDL, guaranteeing the reduction in the number of operators and providing
the formula for quantifying this reduction. Secondly, a novel Forward-Backward
(FB) recursion for achieving this reduction was derived. Meanwhile,
for VB, the first contribution was the Transformed VB (TVB) scheme
for asymptotically decoupling the transformed distribution to which
VB is applied. Secondly, we develop a novel accelerated scheme for
VB, reducing the effective number of iterative VB cycles to about
one in the case of hidden Markov chain (HMC) inference. 

For digital receivers, the four achievements in inference methodology
above were then applied to digital demodulation, which consists of
synchronization and digital detection. Respectively, a TVB-based frequency
synchronizer and a fast digital detector for the quantized Rayleigh
fading channel were derived in the thesis. Each performs well in specific
operating conditions, specified in Section \ref{sec:chap7:Frequency-inference}
and Section \ref{subsec:chap8:Rayleigh-fading-channel}, respectively.
However, further work is needed to formalize these operating conditions
and to achieve a robust extension of the algorithms. Nevertheless,
these two applications illustrate the applicability to telecommunications
systems of the novel inference methodologies, described in the previous
paragraph. Undoubtedly, these approaches can address the technical
demands of digital decoders in 4G mobile systems, as reviewed in Section
\ref{subsec:ch2:Challenges-in-mobile}.

As an outcome of this thesis, two related journal papers, based on
Chapter \ref{=00005BChapter 5=00005D} and Chapter \ref{=00005BChapter 6=00005D}
respectively, are about to be submitted to the IEEE Transactions on
Information Theory. The novel algorithms derived from the generalized
distributive law (GDL) in Chapter \ref{=00005BChapter 5=00005D},
which will be reported in the first of these papers, should have impact
in the future design of optimal computational flows for arbitrary
networks, particularly Bayesian networks. The novel Variational Bayes
(VB) variants of the Viterbi algorithm, developed in Chapter \ref{=00005BChapter 6=00005D}
of this thesis, will be published in the second of these forthcoming
journal papers, and were partly published in {[}\citet{ch9:VH:VBV_ISSC}{]}.
As explained in Chapter \ref{=00005BChapter 6=00005D}, these methods
lead to better trade-offs between computational load and accuracy
than the state-of-the-art Viterbi algorithm, and should yield more
efficient decoders for hidden Markov chains. Note that preliminary
work on Bayesian inference of hidden discrete fields was published
in {[}\citet{ch9:VH:onlineVB-ISSC}{]}. Finally, Chapter \ref{=00005BChapter 7=00005D}
of this thesis, which proposes a novel inference scheme for frequency
inference, was partly published in {[}\citet{ch9:VH:TVB_ICASSP}{]}.
A fuller account of the TVB methodology in signal processing will
be submitted to the IEEE Transactions on Signal Processing at the
end of this year.

\appendix 

\chapter{Dual number \label{App:chap:Dual-Number}}

A dual number $d\in\DEAL\subset\mathbb{R}^{2\times2}$ {[}\citet{App:Dual:book68,ch5:art:DualNumber:complex75}{]}
may be defined in two ways, as either (i) $d=a\mathbf{I}_{2}+b\boldsymbol{\epsilon}=\left[\begin{array}{cc}
a & b\\
0 & a
\end{array}\right]$, $a\in\mathbb{R}$ is called the real part and $b\in\REAL$ is called
the dual part, or (ii) $d=\{Re(d),Im(d)\}$, i.e. :

\begin{eqnarray*}
Re(d) & \TRIANGLEQ & a\mathbf{I}_{2}=a\left[\begin{array}{cc}
1 & 0\\
0 & 1
\end{array}\right]\\
Im(d) & \TRIANGLEQ & b\boldsymbol{\epsilon}=b\left[\begin{array}{cc}
0 & 1\\
0 & 0
\end{array}\right]
\end{eqnarray*}
where the latter is a Catersian form representation of $d$ alternatively,
writing $d=a(\mathbf{I}_{2}+\frac{b}{a}\boldsymbol{\epsilon})$, and
$\mathbf{I}_{2}\TRIANGLEQ\left[\begin{array}{cc}
1 & 0\\
0 & 1
\end{array}\right]$ , $\boldsymbol{\epsilon}\TRIANGLEQ\left[\begin{array}{cc}
0 & 1\\
0 & 0
\end{array}\right]$. We can propose a polar-form, representation of $d$, as follows: 

\[
d=a\angle\tan\theta
\]
with the argument $\tan\theta\TRIANGLEQ b/a$ and $a\neq0$. Then,
with the sum and product in $\DEAL$ defined as usual matrix sum and
product, it is easy to verify that:

\begin{eqnarray*}
d_{1}+d_{2} & = & (a_{1}+a_{2})+\epsilon(b_{1}+b_{2})\\
d_{1}d_{2} & = & (a_{1}a_{2})\angle(\tan\theta_{1}+\tan\theta_{2})
\end{eqnarray*}
where $\epsilon$ is called the dual unit, and $\epsilon^{2}=0$,
corresponding to the matrix form:

\[
\boldsymbol{\epsilon}\boldsymbol{\epsilon}=\mathbf{0}\TRIANGLEQ\left[\begin{array}{cc}
0 & 0\\
0 & 0
\end{array}\right]
\]

\chapter{Quantization for the fading channel \label{App:chap:Quantization}}

Our aim is to derive the probability mass function (pmf) induced by
quantization of the amplitude paramater of the Rayleigh fading channel.
In common with the literature {[}\citet{Rayleigh_process,Fading_vehicular,ch2:art:Fading:Capacity_05}{]},
we design the quantization thresholds such that each quantized state,
$\overline{g}_{k,i}$ of $g_{i},$ is equi-probable.

At each time $i$, the first-order Rayleigh distribution is quantized
to $K$-levels, as follows:

\begin{equation}
f(g_{i})=\begin{cases}
\frac{g_{i}}{\sigma^{2}}\exp\left(-\frac{g_{i}^{2}}{2\sigma^{2}}\right) & ,\ g_{i}\geq0\\
0 & ,\ \mbox{otherwise}
\end{cases}\label{eq:Rayleigh}
\end{equation}
where $E(g_{i}^{2})=2\sigma^{2}$ is called the fading energy and
$\sigma^{2}$ is called the variance of underlying complex Gaussian
process per dimension {[}\citet{ch3:ART:FadingMarkov:tutorial08}{]}
(see Section \ref{subsec:chap3:Rayleigh-process} for details). Note
that, the fading energy $E(g_{i}^{2})=2\sigma^{2}$ can also be found
via distribution  $f(g_{i}^{2})$, which is the $\chi^{2}$ distribution
with two degree of freedoms in this case {[}\citet{ch3:bk:Fading:cavers00}{]}:

\[
f(g_{i}^{2})=\chi_{g_{i}^{2}}^{2}(2)=\frac{1}{2\sigma^{2}}\exp\left(-\frac{g_{i}^{2}}{2\sigma^{2}}\right)
\]
whose the mean is $E(g_{i}^{2})=2\sigma^{2}$.

For quantization, an equiprobable partitioning approach similar to
{[}\citet{Rayleigh_process,Fading_vehicular}{]} will be applied.
Let us consider the continuous distribution function (c.d.f) of Rayleigh
distribution (\ref{eq:Rayleigh}), as follows: 

\begin{equation}
F(g_{i})=\begin{cases}
1-\exp\left(-\frac{g_{i}^{2}}{2\sigma^{2}}\right) & ,\ g_{i}\geq0\\
0 & ,\ \mbox{otherwise}
\end{cases}\label{eq:Rayleigh-cdf}
\end{equation}

Now we can find $K$ thresholds $\zeta_{1},\ldots,\zeta_{K}$ of $K$
equiprobable intervals, i.e. $F(\zeta_{k})-F(\zeta_{k-1})=\frac{1}{K}$,
with $\zeta_{0}=0$, which yields: 
\begin{equation}
F(\zeta_{k})=k/K,\ k=1,\ldots,K\label{eq:F(K)}
\end{equation}
From (\ref{eq:Rayleigh-cdf}), these $\zeta_{k}$ can be expressed
in closed form, as follows:

\[
\zeta_{k}=\sqrt{-2\sigma^{2}\log(1-\frac{k}{K})},\ k=1,\ldots K-1
\]
where, for truncation at $k=K$, we set $\zeta_{K}=5$, since $F(\zeta_{K}=5)\approx1-10^{-11}$,
if $E(g_{i}^{2})=2\sigma^{2}=1$. Then, the state of quantized fading
channel are defined as the continuous mean $\overline{g}_{k,i}$ of
each interval:

\[
\overline{g}_{k,i}=K\int_{\zeta_{k-1}}^{\zeta_{k}}g_{i}f(g_{i})dg_{i},\ k=1,\ldots,K
\]
which can be computed numerically {[}\citet{Fading_vehicular}{]}.
Under this $K$-state quantization procedure, we can define the bivariate
(second-order) pmf of the $K^{2}$-state samples. The bivariate Rayleigh
probability for a pair $g_{i}$, $g_{j}$ is also quantized into $K\times K$
intervals, as follows:

\begin{equation}
\Pr[\overline{g}_{m,i},\overline{g}_{k,j}]=\int_{g_{i}=\zeta_{m-1}}^{g_{i}=\zeta_{m}}\int_{g_{j}=\zeta_{k-1}}^{g_{j}=\zeta_{k}}f(g_{i},g_{j})dg_{i}dg_{j}\label{eq:Rayleigh joint}
\end{equation}
where the integral of the bivariate distribution, $f(g_{i},g_{j})$,
can be computed numerically via the following form of bivariate Rayleigh
distribution $f(g_{i},g_{j})$ {[}\citet{ch2:art:Fading:Capacity_05}{]}:
\begin{eqnarray}
f(g_{i},g_{j}) & = & \frac{g_{i}g_{j}}{\sigma^{4}(1-\rho^{2})}\exp\left(-\frac{(g_{i}^{2}+g_{j}^{2})}{2\sigma^{2}(1-\rho^{2})}\right)I_{0}\left(\frac{g_{i}g_{j}}{\sigma^{2}}\frac{\rho}{(1-\rho^{2})}\right)\label{eq:bivariate_Rayleigh}
\end{eqnarray}
in which $I_{0}$ denotes zero-order modified Bessel function of the
first kind and $\rho$ is the correlation coefficient between $g_{i}$
and $g_{j}$. Note that, when $\rho$ is very close to $1$, then
the argument of $I_{0}(\cdot)$ in (\ref{eq:bivariate_Rayleigh})
is large. For computation in that case, we can replace the above $I_{0}(\cdot)$
with its approximation $I_{0}(x)\approx\exp(x)/\sqrt{2\pi x}$, for
large $x$ {[}\citet{Bessel_approx}{]}. 

From (\ref{eq:Rayleigh}) and (\ref{eq:Rayleigh joint}), the conditional
pmf of the quantized Rayleigh fading amplitude can be defined as $f(l_{i}|l_{i-1})=Mu_{l_{i}}(\TBold_{K}l_{i-1})$,
where $l_{i}\in\{\boldsymbol{\epsilon}(1),\ldots,\boldsymbol{\epsilon}(K)\}$
is label variable pointing to $K$ quantized levels $\overline{g}_{i}=[\overline{g}_{1,i},\ldots,\overline{g}_{K,i}]'$
at time $i$ and $\TBold_{K}$ is positive $K\times K$ transition
probability matrix of an homogeneous Markov chain, with elements:
$t_{k,m}=\frac{\Pr[\overline{g}_{k,i},\overline{g}_{m,i-1}]}{\Pr[\overline{g}_{m,i-1}]}=\Pr[\overline{g}_{k,i},\overline{g}_{m,i-1}]\times K$,
$1\leq k,m\leq K$. The columns of $\TBold_{K}$ are then normalized
to 1, by definition, in order to avoid any numerical computation's
error. Finally, note that all initial probabilities of this Markov
chain in this equi-probable scheme are equal to $1/K$, from (\ref{eq:F(K)}).

\bibliographystyle{klunamed}
\bibliography{chap1,chap2,chap3,chap4,chap5,chap6,chap7,chap8,chap9,Appendix}

\begin{thebibliography}{}

\bibitem[\protect\citeauthoryear{3GPP}{2005}]{ch2:tech:3GPP:release7}
3GPP: 2005, `Technical specification group radio access network; Requirements
  for "Evolved UTRA (E-UTRA)" and "Evolved UTRAN (E-UTRAN)" (Release 7)'.
\newblock Technical report, 3GPP TR 25.913.

\bibitem[\protect\citeauthoryear{Ahmed}{1991}]{ch2:origin:DCT:Ahmed91}
Ahmed, N.: 1991, `How I came up with the discrete cosine transform'.
\newblock {\em Digital Signal Processing} {\bf 1}(1), 4--9.

\bibitem[\protect\citeauthoryear{Ahmed et~al.}{1974}]{ch2:origin:DCT:Ahmed74}
Ahmed, N., T. Natarajan, and K.~R. Rao: 1974, `Discrete Cosine Transform'.
\newblock {\em IEEE Transactions on Computers} {\bf C-23}(1), 90--93.

\bibitem[\protect\citeauthoryear{Aitchison and
  Dunsmore}{1980}]{ch4:bk:predict80}
Aitchison, J. and I.~R. Dunsmore: 1980, {\em Statistical Prediction Analysis}.
\newblock CUP Archive.

\bibitem[\protect\citeauthoryear{Aji and
  McEliece}{2000}]{ch2:origin:GDL:McEliece}
Aji, S.~M. and R.~J. McEliece: 2000, `The generalized distributive law'.
\newblock {\em IEEE Transactions on Information Theory} {\bf 46}(2), 325--343.

\bibitem[\protect\citeauthoryear{Akaike}{1974}]{ch2:origin:AIC:74}
Akaike, H.: 1974, `A new look at the statistical model identification'.
\newblock {\em IEEE Transactions on Automatic Control} {\bf 19}(6), 716--723.

\bibitem[\protect\citeauthoryear{Aldrich}{1997}]{ch2:origin:Fisher_ML:1912_1922}
Aldrich, J.: 1997, `{R. A. Fisher} and the Making of {M}aximum {L}ikelihood
  1912-1922'.
\newblock {\em Statistical Science} {\bf 12}(3), 162--176.

\bibitem[\protect\citeauthoryear{Andersen}{1970}]{ch4:art:EF:Andersen70}
Andersen, E.~B.: 1970, `Sufficiency and Exponential Families for Discrete
  Sample Spaces'.
\newblock {\em Journal of the American Statistical Association} {\bf 65}(331),
  1248--1255.

\bibitem[\protect\citeauthoryear{Anderson and
  Svensson}{2003}]{ch2:bk:TCM:Anderson_2003}
Anderson, J.~B. and A. Svensson: 2003, {\em Coded Modulation Systems}.
\newblock Springer.

\bibitem[\protect\citeauthoryear{Arellano-Valle
  et~al.}{2006}]{ch7:art:spherical:bayes06}
Arellano-Valle, R.~B., G. del Pino, and P. Iglesias: 2006, `{B}ayesian
  inference in spherical linear models: robustness and conjugate analysis'.
\newblock {\em J. Multivar. Anal.} {\bf 97}, 179--197.

\bibitem[\protect\citeauthoryear{Armstrong}{1933}]{ch2:patent:FM}
Armstrong, E.~H.: 1933, `Radio signaling system'.

\bibitem[\protect\citeauthoryear{Arnold}{2009}]{ch7:bk:transform:theorem09}
Arnold, B.~C.: 2009, {\em Advances in Mathematical and Statistical Modeling}.
\newblock Springer.

\bibitem[\protect\citeauthoryear{Attias}{1999}]{ch4:art:VB:Attias:1999}
Attias, H.: 1999, `Inferring parameters and structure of latent variable models
  by {V}ariational {B}ayes'.
\newblock {\em Electronic Proceedings of the Fifteenth Annual Conference on
  Uncertainty in Artificial Intelligence}.

\bibitem[\protect\citeauthoryear{Baddour and
  Beaulieu}{2005}]{ch3:art:Fading:AR05}
Baddour, K. and N. Beaulieu: 2005, `Autoregressive modeling for fading channel
  simulation'.
\newblock {\em IEEE Transactions on Wireless Communications} {\bf 4}(4),
  1650--1662.

\bibitem[\protect\citeauthoryear{Bahl et~al.}{1974}]{ch6:origin:BCJR:Bahl74}
Bahl, L., J. Cocke, F. Jelinek, and J. Raviv: 1974, `Optimal decoding of linear
  codes for minimizing symbol error rate'.
\newblock {\em IEEE Trans. Inform. Theory} {\bf 20}, 284--287.

\bibitem[\protect\citeauthoryear{Bard}{1974}]{ch2:bk:nonlinear:Bard74}
Bard, Y.: 1974, {\em Nonlinear Parameter Estimation}.
\newblock Academic Press, New York.

\bibitem[\protect\citeauthoryear{Bartlett}{1948}]{ch2:origin:PSD:Barlett48}
Bartlett, M.~S.: 1948, `Smoothing Periodograms from Time-Series with Continuous
  Spectra'.
\newblock {\em Nature (London)} {\bf 161}, 686--687.

\bibitem[\protect\citeauthoryear{Baum et~al.}{1970}]{ch6:origin:FB:Baum70}
Baum, L.~E., T. Petrie, G. Soules, and N. Weiss: 1970, `A Maximization
  Technique Occurring in the Statistical Analysis of Probabilistic Functions of
  Markov Chains'.
\newblock {\em The Annals of Mathematical Statistics} {\bf 41}(1), 164--171.

\bibitem[\protect\citeauthoryear{Bayes}{1763}]{ch2:origin:Bayes:Bayes1763}
Bayes, T.: 1763, `An Essay towards solving a Problem in the Doctrine of
  Chances'.
\newblock {\em Philosophical Transactions of the Royal Society of London} {\bf
  53}, 370--418.

\bibitem[\protect\citeauthoryear{Beal}{2003}]{ch4:PhdThesis:VB:VEM:2003}
Beal, M.: 2003, `Variational Algorithms for Approximate Bayesian Inference'.
\newblock Ph.D. thesis, University College London.

\bibitem[\protect\citeauthoryear{Berger and
  Bernardo}{1992}]{ch7:art:Transform:ref_prior:Bernardo92}
Berger, J. and J. Bernardo: 1992, `On the development of reference priors'.
\newblock {\em Bayesian Statistics} {\bf 4}, 35--60.

\bibitem[\protect\citeauthoryear{Berger}{1985}]{ch4:bk:Bayes:bible:Berger85}
Berger, J.~O.: 1985, {\em Statistical Decision Theory and Bayesian Analysis}.
\newblock Springer.

\bibitem[\protect\citeauthoryear{Bergmans}{1995}]{ch2:art:sync_time:Bergman95}
Bergmans, J.: 1995, `Efficiency of data-aided timing recovery techniques'.
\newblock {\em IEEE Transactions on Information Theory} {\bf 41}(5),
  1397--1408.

\bibitem[\protect\citeauthoryear{Berlekamp}{1968}]{ch2:origin:AlgebraicCode:68}
Berlekamp, E.~R.: 1968, {\em Algebraic Coding Theory}.
\newblock New York: McGraw-Hill.

\bibitem[\protect\citeauthoryear{Bernardo}{1979}]{ch4:origin:prior:Bernardo79}
Bernardo, J.: 1979, `Reference posterior distributions for Bayesian inference
  (with discussion)'.
\newblock {\em J. Roy. Statist. Soc., Ser. {B}} {\bf 41}, 113Ð147.

\bibitem[\protect\citeauthoryear{Bernardo and
  Smith}{1994}]{ch2:BK:Bernardo:Bayes94}
Bernardo, J.~M. and A.~F.~M. Smith: 1994, {\em Bayesian Theory}.
\newblock John Wiley \& Sons.

\bibitem[\protect\citeauthoryear{Bernoulli}{1713}]{ch2:origin:Bernoulli:Prob1713}
Bernoulli, J.: 1713, {\em Ars conjectandi, opus posthumum. Accedit Tractatus de
  seriebus infinitis, et epistola gallice scripta de ludo pilae reticularis.
  (English translation: The Art of Conjecturing, together with Letter to a
  Friend on Sets in Court Tennis (2005), Johns Hopkins Univ Press)}.
\newblock Thurneysen Brothers.

\bibitem[\protect\citeauthoryear{Berrou}{2011}]{ch2:bk:Turbo:Berrou2011}
Berrou, C. (ed.): 2011, {\em Codes and turbo codes}.
\newblock Springer.

\bibitem[\protect\citeauthoryear{Berrou
  et~al.}{1993}]{ch2:origin:Turbo:Berrou93}
Berrou, C., A. Glavieux, and P. Thitimajshima: 1993, `Near Shannon limit
  error-correcting coding and decoding'.
\newblock pp. 1064--1070.

\bibitem[\protect\citeauthoryear{Besag}{1986}]{ch4:art:ICM:Besag86}
Besag, J.: 1986, `On the statistical analysis of dirty pictures'.
\newblock {\em Journal of the Royal Statistical Society} {\bf B-48}, 259--302.

\bibitem[\protect\citeauthoryear{Bierens}{2004}]{ch2:bk:Wold:Bierens2004}
Bierens, H.~J.: 2004, {\em Introduction to the Mathematical and Statistical
  Foundations of Econometrics}.
\newblock Cambridge University Press.

\bibitem[\protect\citeauthoryear{Bierens}{2012}]{ch2:art:Wold:Bierens2012}
Bierens, H.~J.: 2012, `The Wold Decomposition'.

\bibitem[\protect\citeauthoryear{Bierman}{1977}]{ch2:origin:RLS:squareroot77}
Bierman, G.: 1977, {\em Factorization Methods for Discrete Sequential
  Estimation}.
\newblock Academic Press, New York.

\bibitem[\protect\citeauthoryear{Blackman and
  Tukey}{1958}]{ch2:origin:PSD:BlackmanTukey58}
Blackman, R. and J. Tukey: 1958, {\em The Measurement of Power Spectra}.
\newblock Dover, New York.

\bibitem[\protect\citeauthoryear{Blackwell}{1947}]{ch4:origin:Rao_Blackwell:Blackwell47}
Blackwell, D.: 1947, `Conditional expectation and unbiased sequential
  estimation'.
\newblock {\em Annals of Mathematical Statistics} {\bf 18}(1), 105--110.

\bibitem[\protect\citeauthoryear{Bose and
  Ray-Chaudhuri}{1960}]{ch2:origin:BCH:BoseChaudhuri}
Bose, R. and D. Ray-Chaudhuri: 1960, `On a Class of Error-Correcting Binary
  Codes'.
\newblock {\em Inf. and Control} {\bf 3}, 68--79.

\bibitem[\protect\citeauthoryear{Bowman}{1958}]{Bessel_approx}
Bowman, F.: 1958, {\em Introduction to {B}essel functions}.
\newblock Courier Dover Publications.

\bibitem[\protect\citeauthoryear{Box}{1979a}]{ch4:origin:Models:Box79a}
Box, G. E.~P.: 1979a, {\em Robustness in the Strategy of Scientific Model
  Building}.
\newblock Defense Technical Information Center.

\bibitem[\protect\citeauthoryear{Box}{1979b}]{ch4:origin:Models:Box79b}
Box, G. E.~P.: 1979b, `Some Problems of Statistics and Everyday Life'.
\newblock {\em Journal of the American Statistical Association} {\bf 74}(365),
  1--4.

\bibitem[\protect\citeauthoryear{Box and Cox}{1964}]{ch7:origin:Box_Cox:64}
Box, G. E.~P. and D.~R. Cox: 1964, `An analysis of transformations'.
\newblock {\em Journal of the Royal Statistical Society. Series {B}} {\bf
  26}(2), 211--252.

\bibitem[\protect\citeauthoryear{Bregni}{2002}]{ch2:bk:SEP:sync02}
Bregni, S.: 2002, {\em Synchronization of Digital Telecommunications Networks}.
\newblock Wiley.

\bibitem[\protect\citeauthoryear{Bromberg and
  Progri}{2005}]{ch2:art:sync_freq:MCMC_05}
Bromberg, M. and I. Progri: 2005, `Bayesian parameter estimation for time and
  frequency synchronization'.
\newblock {\em Wireless Telecommunications Symposium, 2005} pp. 126--130.

\bibitem[\protect\citeauthoryear{Brown}{1986}]{ch4:art:EF:Brown86}
Brown, L.~D.: 1986, `Fundamentals of Statistical Exponential Families with
  Applications in Statistical Decision Theory'.
\newblock {\em Lecture Notes-Institute of Mathematical Statistics} {\bf 9},
  1--279.

\bibitem[\protect\citeauthoryear{Brown and
  Wang}{2002}]{ch3:art:Freq:iterative02}
Brown, T. and M. Wang: 2002, `An iterative algorithm for single-frequency
  estimation'.
\newblock {\em IEEE Transactions on Signal Processing} {\bf 50}(11),
  2671--2682.

\bibitem[\protect\citeauthoryear{Callen}{1985}]{ch4:bk:VB:Mean_field:thermostatistics}
Callen, H.~B.: 1985, {\em Thermodynamics and an Introduction to
  Thermostatistics}.
\newblock Wiley, 2 edition.

\bibitem[\protect\citeauthoryear{Capon}{1969}]{ch2:origin:PSD:filter_bank69}
Capon, J.: 1969, `High-resolution frequency-wavenumber spectrum analysis'.
\newblock {\em Proceedings of the IEEE} {\bf 57}(8), 1408--1418.

\bibitem[\protect\citeauthoryear{Cappe et~al.}{2005}]{ch6:bk:HMM:Cappe05}
Cappe, O., E. Moulines, and T. Ryden: 2005, {\em Inference in Hidden Markov
  Models}.
\newblock Springer-Verlag.

\bibitem[\protect\citeauthoryear{Carayannis
  et~al.}{1983}]{ch2:origin:RLS:fast83}
Carayannis, G., D. Manolakis, and N. Kalouptsidis: 1983, `A fast sequential
  algorithm for least-squares filtering and prediction'.
\newblock {\em IEEE Transactions on Acoustics, Speech and Signal Processing}
  {\bf 31}(6), 1394--1402.

\bibitem[\protect\citeauthoryear{Casella and
  Berger}{2002}]{ch4:bk:LOTUS:not_amusing:2002}
Casella, G. and R.~L. Berger: 2002, {\em Statistical inference}.
\newblock Thomson Learning.

\bibitem[\protect\citeauthoryear{Cavers}{2000}]{ch3:bk:Fading:cavers00}
Cavers, J.~K.: 2000, {\em Mobile Channel Characteristics}.
\newblock Kluwer Academic Pub.

\bibitem[\protect\citeauthoryear{Chaitin}{1969}]{ch2:origin:Kolmogorov:Chaitin69}
Chaitin, G.: 1969, `On the length of programs for computing finite binary
  sequences: Statistical considerations'.
\newblock {\em Journal of the ACM} {\bf 16}, 145--159.

\bibitem[\protect\citeauthoryear{Chang}{1966}]{ch2:origin:OFDM:Chang66}
Chang, R.~W.: 1966, `Synthesis of Band-Limited Orthogonal Signals for
  Multichannel Data Transmission'.
\newblock {\em Bell Systems Technical Journal} {\bf 45}, 1775--1796.

\bibitem[\protect\citeauthoryear{Chen et~al.}{2004}]{ch2:art:LDPC:+NASA}
Chen, L., J. Xu, I. Djurdjevic, and S. Lin: 2004, `Near-Shannon-Limit
  Quasi-Cyclic Low-Density Parity-Check Codes'.
\newblock {\em IEEE Trans. on Communications} {\bf 52}(7), 1038--1042.

\bibitem[\protect\citeauthoryear{Chen
  et~al.}{2003}]{ch2:art:detector:eg_joint_03}
Chen, R.-R., R. Koetter, U. Madhow, and D. Agrawal: 2003, `Joint noncoherent
  demodulation and decoding for the block fading channel: a practical framework
  for approaching {S}hannon capacity'.
\newblock {\em IEEE Transactions on Communications (Volume:51 , Issue: 10 )}
  {\bf 51}(10), 1676--1689.

\bibitem[\protect\citeauthoryear{Chen and Pratt}{1984}]{ch2:origin:JPEG:Chen84}
Chen, W.-H. and W. Pratt: 1984, `Scene Adaptive Coder'.
\newblock {\em IEEE Transactions on Communications} {\bf COM-32}, 225--232.

\bibitem[\protect\citeauthoryear{Cheng}{1988}]{ch5:art:DualNumber:matrix88}
Cheng, S.-K.: 1988, `Symbolic Computation of the Jacobian of Manipulators Using
  Dual Number Transformations'.
\newblock {\em Industrial Electronics Society, IECON 1988.} {\bf 1}, 176--181.

\bibitem[\protect\citeauthoryear{Chigansky and
  Ritov}{2011}]{ch6:art:VA:continuous_2011}
Chigansky, P. and Y. Ritov: 2011, `On the Viterbi process with continuous state
  space'.
\newblock {\em Bernoulli} {\bf 17}(2), 609--627.

\bibitem[\protect\citeauthoryear{Choudrey}{2002}]{ch4:PhDThesis:VB:ICA:2002}
Choudrey, R.~A.: 2002, `Variational Methods for Bayesian Independent Component
  Analysis'.
\newblock Ph.D. thesis, University of Oxford.

\bibitem[\protect\citeauthoryear{Chow et~al.}{2012}]{ch2:art:ASK:LED:2012}
Chow, C.~W., C.~H. Yeh, Y. Liu, and Y.~F. Liu: 2012, `Digital signal processing
  for light emitting diode based visible light communication (Invited paper)'.
\newblock {\em Proc. IEEE Photon. Soc. Newslett., Res. Highlights} {\bf Oct},
  9--13.

\bibitem[\protect\citeauthoryear{Cimini}{1985}]{ch2:origin:OFDM:Cimini85}
Cimini, L.: 1985, `Analysis and Simulation of a Digital Mobile Channel Using
  Orthogonal Frequency Division Multiplexing'.
\newblock {\em IEEE Transactions on Communications} {\bf 33}(7), 665--675.

\bibitem[\protect\citeauthoryear{Clarke}{1968}]{ch2:origin:Fading:Clark68}
Clarke, R.~H.: 1968, `A statistical theory of mobile radio reception'.
\newblock {\em Bell Systems Technical Journal} {\bf 47}, 957--1000.

\bibitem[\protect\citeauthoryear{Clifford}{1873}]{ch5:origin:DualNumber:Clifford1873}
Clifford, W.~K.: 1873, `Preliminary sketch of bi-quaternions'.
\newblock {\em Proceedings of the London Mathematical Society} {\bf 4},
  381--395.

\bibitem[\protect\citeauthoryear{Cormen et~al.}{2001}]{ch5:BK:Bible:Algorithms}
Cormen, T.~H., C.~E. Leiserson, R.~L. Rivest, and C. Stein: 2001, {\em
  Introduction To Algorithms}.
\newblock MIT Press, 2 edition.

\bibitem[\protect\citeauthoryear{Cortes
  et~al.}{2008}]{ch5:art:MP:EntropyRing08}
Cortes, C., M. Mohri, A. Rastogi, and M. Riley: 2008, `On the Computation of
  the Relative Entropy of Probabilistic Automata'.
\newblock {\em Int. J. Found. Comput. Sci.} {\bf 19}, 219--242.

\bibitem[\protect\citeauthoryear{Costello and
  Forney}{2007}]{ch2:Art:ChannelCode:Review07}
Costello, D. and G. Forney: 2007, `Channel Coding: The Road to Channel
  Capacity'.
\newblock {\em Proceedings of the IEEE} {\bf 95}(6), 1150--1177.

\bibitem[\protect\citeauthoryear{Cover and
  Thomas}{2006}]{ch2:BK:CoverAndThomas}
Cover, T.~M. and J.~A. Thomas: 2006, {\em Elements of Information Theory}.
\newblock John Wiley \& Sons.

\bibitem[\protect\citeauthoryear{Cox}{2012}]{ch2:bk:4G:LTE12}
Cox, C.: 2012, {\em An Introduction to LTE: LTE, LTE-Advanced, SAE and 4G
  Mobile Communications}.
\newblock John Wiley \& Sons.

\bibitem[\protect\citeauthoryear{Cox and Reid}{1987}]{ch7:origin:Cox_Reid:87}
Cox, D. and N. Reid: 1987, `Parameter orthogonality and approximate conditional
  inference'.
\newblock {\em J. R. Statist, Soc. B} {\bf 49}, 1--39.

\bibitem[\protect\citeauthoryear{Cox}{2006}]{ch4:bk:SufficientStat:Cox06}
Cox, D.~R.: 2006, {\em Principles of Statistical Inference}.
\newblock Cambridge University Press.

\bibitem[\protect\citeauthoryear{Cramer}{1946}]{ch2:origin:CramerRao:Cramer}
Cramer, H.: 1946, {\em Mathematical Methods of Statistics}.
\newblock Princeton Univ. Press.

\bibitem[\protect\citeauthoryear{Dahlman et~al.}{2011}]{ch2:bk:4G:LTE11}
Dahlman, E., S. Parkvall, and J. Skold: 2011, {\em 4G- LTE/LTE-Advanced for
  Mobile Broadband}.
\newblock Academic Press.

\bibitem[\protect\citeauthoryear{Darmois}{1935}]{ch4:origin:EF:Darmois35}
Darmois, G.: 1935, `Sur les lois de probabilites a estimation exhaustive'.
\newblock {\em C.R. Acad. Sci. Paris} {\bf 200}, 1265Ð1266.

\bibitem[\protect\citeauthoryear{Dauwels}{2005}]{ch6:PhD:ICM:HMM:Dauwels05}
Dauwels, J.: 2005, `On Graphical Models for Communications and Machine
  Learning: Algorithms, Bounds, and Analog Implementation'.
\newblock Ph.D. thesis, ETH Zurich.

\bibitem[\protect\citeauthoryear{de~Moivre}{1738}]{ch2:origin:CLT:Moivre1738}
de~Moivre, A.: 1738, {\em The Doctrine of Chances}.
\newblock Woodfall, 2 edition.

\bibitem[\protect\citeauthoryear{Djordjevic and
  Vasic}{2006}]{ch2:art:modem:differential_2006}
Djordjevic, I. and B. Vasic: 2006, `Multilevel coding in {M}-ary
  {DPSK}/differential {QAM} high-speed optical transmission with direct
  detection'.
\newblock {\em IEEE Journal of Lightwave Technology} {\bf 24}(1), 420--428.

\bibitem[\protect\citeauthoryear{Dogandzic and
  Zhang}{2006}]{ch4:art:ICM:localMAP_2006}
Dogandzic, A. and B. Zhang: 2006, `Distributed Estimation and Detection for
  Sensor Networks Using Hidden Markov Random Field Models'.
\newblock {\em IEEE Trans. Signal Processing} {\bf 54}(8), 3200--3215.

\bibitem[\protect\citeauthoryear{Donoho}{2006}]{ch2:art:ComSensing:Donoho}
Donoho, D.: 2006, `Compressed sensing'.
\newblock {\em IEEE Trans. Info. Theory} {\bf 52}(4), 1289--1306.

\bibitem[\protect\citeauthoryear{Du and Swamy}{2010}]{ch2:bk:4G:RF10}
Du, K.-L. and M.~N.~S. Swamy: 2010, {\em Wireless Communication Systems: From
  RF Subsystems to 4G Enabling Technologies}.
\newblock Cambridge University Press.

\bibitem[\protect\citeauthoryear{Dudley}{1958}]{ch2:origin:VQ:Dudley50}
Dudley, H.: 1958, `Phonetic pattern recognition vocoder for narrow-band speech
  transmission'.
\newblock {\em J. Acoust. SOC. Amer.} {\bf 30}(8), 733--739.

\bibitem[\protect\citeauthoryear{Duhamel and
  Kieffer}{2009}]{ch2:bk:SourceChannel:Supelec}
Duhamel, P. and M. Kieffer: 2009, {\em Joint Source-Channel Decoding: A
  Cross-Layer Perspective with Applications in Video Broadcasting}.
\newblock Academic Press.

\bibitem[\protect\citeauthoryear{Durbin}{1960}]{ch2:origin:LevinsonDurbin:Durbin60}
Durbin, J.: 1960, `The fitting of time series models'.
\newblock {\em Rev. Inst. Int. Stat.} {\bf 28}, 233--243.

\bibitem[\protect\citeauthoryear{Elias}{1955}]{ch2:origin:ConvolutionCode}
Elias, P.: 1955, `Coding for Noisy Channels'.
\newblock {\em IRE Conv. Record} pp. 37--47.

\bibitem[\protect\citeauthoryear{Ericsson}{2011}]{ch2:tech:4G:Ericsson11}
Ericsson: 2011, `Traffic and Market Data Report'.
\newblock Technical report.

\bibitem[\protect\citeauthoryear{EU}{2013}]{ch2:link:5G:EU13}
EU: 2013, `Mobile communications: Fresh {Û50} million {EU} research grants in
  2013 to develop {5G} technology'.
\newblock \url{http://europa.eu/rapid/press-release_IP-13-159_en.htm}.
\newblock Accessed: 2013-09-26.

\bibitem[\protect\citeauthoryear{Falconer and
  Ljung}{1978}]{ch2:origin:RLS:fast78}
Falconer, D. and L. Ljung: 1978, `Application of Fast Kalman Estimation to
  Adaptive Equalization'.
\newblock {\em IEEE Transactions on Communications} {\bf 26}(10), 1439--1446.

\bibitem[\protect\citeauthoryear{Falconer and
  Salz}{1977}]{ch2:art:sync_time:time_phase_77}
Falconer, D. and J. Salz: 1977, `Optimal reception of digital data over the
  Gaussian channel with unknown delay and phase jitter'.
\newblock {\em IEEE Transactions on Information Theory} {\bf 23}(1), 117--126.

\bibitem[\protect\citeauthoryear{Farhang-Boroujeny}{1999}]{ch2:bk:AdaptiveFilter99}
Farhang-Boroujeny, B.: 1999, {\em Adaptive Filters - Theory and Applications}.
\newblock Wiley.

\bibitem[\protect\citeauthoryear{Fazel and
  Kaiser}{2003}]{ch2:bk:MC-CDMA:Fazel03}
Fazel, K. and S. Kaiser: 2003, {\em Multi-Carrier and Spread Spectrum Systems}.
\newblock Wiley.

\bibitem[\protect\citeauthoryear{Fettweis and
  Meyr}{1990}]{ch5:art:GDL:ViterbiAlgo90}
Fettweis, G. and H. Meyr: 1990, `High rate {V}iterbi processor: A systolic
  array solution'.
\newblock {\em IEEE Journal on Selected Areas in Communications} {\bf 8}(8),
  1520--1534.

\bibitem[\protect\citeauthoryear{Fisher}{1922}]{ch2:origin:Fisher_info:Fisher22}
Fisher, R.~A.: 1922, `On the Mathematical Foundations of Theoretical
  Statistics'.
\newblock {\em Philosophical Transactions of the Royal Society of London.
  Series A} {\bf 222}, 309--368.

\bibitem[\protect\citeauthoryear{Fitz}{1994}]{ch3:origin:Freq:Fitz94}
Fitz, M.: 1994, `Further results in the fast estimation of a single frequency'.
\newblock {\em IEEE Transactions on Communications (Volume:42 , Issue: 234 )}
  {\bf 42}(234), 862--864.

\bibitem[\protect\citeauthoryear{Forney and
  Ungerboeck}{1998}]{ch3:art:AWGN:Forney98}
Forney, G.~D. and G. Ungerboeck: 1998, `Modulation and Coding for Linear
  Gaussian Channels'.
\newblock {\em IEEE Trans. Info. Theory} {\bf 44}(6), 2384--2415.

\bibitem[\protect\citeauthoryear{Forney}{1973}]{ch2:origin:VA:Forney73}
Forney, G.D., J.: 1973, `The viterbi algorithm'.
\newblock {\em Proceedings of the IEEE} {\bf 61}(3), 268--278.

\bibitem[\protect\citeauthoryear{Fraser}{2008}]{ch6:bk:HMM:Fraser08}
Fraser, A.~M.: 2008, {\em Hidden Markov Models and Dynamical Systems}.
\newblock SIAM.

\bibitem[\protect\citeauthoryear{Fraser and
  Potter}{1969}]{ch6:art:Kalman:two_filters_69}
Fraser, D.~C. and J.~E. Potter: 1969, `The optimum linear smoother as a
  combination of two optimum linear filters'.
\newblock {\em IEEE Trans. Automat. Control} {\bf AC-7}(8), 387--390.

\bibitem[\protect\citeauthoryear{Frechet}{1948}]{ch2:origin:random:element48}
Frechet, M.: 1948, `Les elementsn aleatoires de nature quelconque dans un
  espace distancie'.
\newblock {\em Annales de l'Institut Henri Poincare} {\bf 10}(4), 215--310.

\bibitem[\protect\citeauthoryear{Freeman}{1963}]{ch7:bk:Transform:theorem63}
Freeman, H.: 1963, {\em An Introduction to Statistical Inference}.
\newblock Addison-Wesley, Reading, MA.

\bibitem[\protect\citeauthoryear{Fruhwirth-Schnatter}{2006}]{ch6:bk:HMM:Schnatter06}
Fruhwirth-Schnatter, S.: 2006, {\em Finite mixture and Markov switching
  models}.
\newblock Springer.

\bibitem[\protect\citeauthoryear{Gallager}{1962}]{ch2:origin:LDPC:Gallager}
Gallager, R.~G.: 1962, `Low-Density Parity-Check Codes'.
\newblock {\em IRE Trans. on Info. Theory} {\bf IT-8}, 21--28.

\bibitem[\protect\citeauthoryear{Gauss}{1821}]{ch2:origin:LMS:Gauss1821}
Gauss, C.~F.: 1821, {\em Theoria combinationis obsercationunt erronbus minimis
  obnoxiae}.
\newblock An English translation can be found in GaussÕs work (1803-1826) on
  the Theory of Least Squares. Trans. H. F. Trotter. Statist. Techniques Res.
  Group. Tech. Rep. No. 5. Princeton University.

\bibitem[\protect\citeauthoryear{Gelman et~al.}{2003}]{ch4:bk:Gelman03}
Gelman, A., J.~B. Carlin, H.~S. Stern, and D.~B. Rubin: 2003, {\em Bayesian
  Data Analysis, Second Edition}.
\newblock Taylor \& Francis.

\bibitem[\protect\citeauthoryear{Gelman
  et~al.}{1996}]{ch4:art:predict:modelcheck96}
Gelman, A., X.~L. Meng, and H.~S. Stern: 1996, `Posterior predictive assessment
  of model fitness via realized discrepancies (with discussion)'.
\newblock {\em Statistica Sinica} {\bf 6}, 733--807.

\bibitem[\protect\citeauthoryear{Geman and Geman}{1984}]{ch2:origin:MCMC:Gibbs}
Geman, S. and D. Geman: 1984, `Stochastic Relaxation, {G}ibbs Distributions,
  and the {B}ayesian Restoration of Images'.
\newblock {\em IEEE Transactions on Pattern Analysis and Machine Intelligence}
  {\bf 6}(6), 721--741.

\bibitem[\protect\citeauthoryear{Gersho and
  Gray}{1992}]{ch2:bk:VQ_and_signalcompress}
Gersho, A. and R.~M. Gray: 1992, {\em Vector Quantization and Signal
  Compression}.
\newblock Springer.

\bibitem[\protect\citeauthoryear{Ghahramani}{2005}]{ch4:bk:LOTUS:proof:2005}
Ghahramani, S.: 2005, {\em Fundamentals of Probability: With Stochastic
  Processes}.
\newblock Pearson/Prentice Hall, 3 edition.

\bibitem[\protect\citeauthoryear{Ghirmai}{2013}]{ch4:art:partical_filtering:ICASSP_2013}
Ghirmai, T.: 2013, `Data detection using particle filtering in relay-based
  communication system with {L}aplace {AR} channel model'.
\newblock {\em IEEE International Conference on Acoustics, Speech and Signal
  Processing (ICASSP), 2013} pp. 6303--6307.

\bibitem[\protect\citeauthoryear{Gillman and
  Jerison}{1960}]{ch5:bk:Ring:continuous60}
Gillman, L. and M. Jerison: 1960, {\em Rings of continuous functions}.
\newblock Van Nostrand.

\bibitem[\protect\citeauthoryear{Glazek}{2002}]{ch5:bk:ring:Glazek02}
Glazek, K.: 2002, {\em A Guide to the Literature on Semirings and their
  Applications in Mathematics and Information Sciences: With Complete
  Bibliography}.
\newblock Springer.

\bibitem[\protect\citeauthoryear{Gohberg}{1986}]{ch2:origin:Schur_algo:Gohberg86}
Gohberg, I.: 1986, {\em Schur methods in operator theory and signal
  processing}.
\newblock Birkhauser Verlag, Stuttgart.

\bibitem[\protect\citeauthoryear{Gomez
  et~al.}{1998}]{ch7:art:PowerExp:bivariate98}
Gomez, E., M. Gomez-Viilegasa, and J. Marina: 1998, `A multivariate
  generalization of the power exponential family of distributions'.
\newblock {\em Communications in Statistics - Theory and Methods} {\bf 27}(3),
  589--600.

\bibitem[\protect\citeauthoryear{Gomez
  et~al.}{2008}]{ch7:art:PowerExp:marginal:no_closedform}
Gomez, E., M.~A. Gomez-Villegasa, and J.~M. Marinb: 2008, `Multivariate
  Exponential Power Distributions as Mixtures of Normal Distributions with
  Bayesian Applications'.
\newblock {\em Communications in Statistics - Theory and Methods} {\bf 37}(6),
  972--985.

\bibitem[\protect\citeauthoryear{Gondran and
  Minoux}{2008}]{ch5:bk:ring:Gondran:Minoux:08}
Gondran, M. and M. Minoux: 2008, {\em Graphs, Dioids and Semirings: New Models
  and Algorithms}.
\newblock Springer.

\bibitem[\protect\citeauthoryear{Goyal
  et~al.}{2008}]{ch2:Art:ComSensing:Magazine}
Goyal, V., A. Fletcher, and S. Rangan: 2008, `Compressive Sampling and Lossy
  Compression'.
\newblock {\em IEEE Signal Processing Magazine} {\bf 25}(2), 48--56.

\bibitem[\protect\citeauthoryear{Graell~i Amat
  et~al.}{2009}]{ch2:art:modem:CPM_2009}
Graell~i Amat, A., C. Nour, and C. Douillard: 2009, `Serially concatenated
  continuous phase modulation for satellite communications'.
\newblock {\em IEEE Transactions on Wireless Communications} {\bf 8}(6),
  3260--3269.

\bibitem[\protect\citeauthoryear{Green and
  Silverman}{1994}]{ch2:bk:Silverman:Penalized94}
Green, P.~J. and B.~W. Silverman: 1994, {\em Nonparametric Regression and
  Generalized Linear Models: A Roughness Penalty Approach}.
\newblock Chapman {\&} Hall.

\bibitem[\protect\citeauthoryear{Grunwald et~al.}{2005}]{ch2:bk:MDL:Advances}
Grunwald, P.~D., I.~J. Myung, and M.~A. Pitt (eds.): 2005, {\em Advances in
  Minimum Description Length: Theory and Applications}.
\newblock MIT Press.

\bibitem[\protect\citeauthoryear{Ha}{2010}]{ch2:bk:Tri_T_Ha}
Ha, T.~T.: 2010, {\em Theory and Design of Digital Communication Systems}.
\newblock Cambridge University Press.

\bibitem[\protect\citeauthoryear{Hamming}{1950}]{ch2:origin:HammingCode}
Hamming, R.: 1950, `Error detecting and error correcting codes'.
\newblock {\em Bell Syst. Tech. Journal} {\bf 29}, 41--56.

\bibitem[\protect\citeauthoryear{Hanzo}{2003}]{ch2:bk:MC-CDMA:Hanzo03}
Hanzo, L.: 2003, {\em {OFDM} and {MC-CDMA} for Broadband Multi-User
  Communications, {WLANs} and Broadcasting}.
\newblock Wiley.

\bibitem[\protect\citeauthoryear{Hardouin
  et~al.}{2010}]{ch5:art:Automatics:MaxPlus10}
Hardouin, L., C.~A. Maia, B. Cottenceau, and M. Lhommeau: 2010, `Observer
  Design for (max , +) Linear Systems.'.
\newblock {\em IEEE Trans. Automat. Contr.} {\bf 55}(2), 538--543.

\bibitem[\protect\citeauthoryear{Hartley}{1928}]{ch2:origin:ShannonHartley:Hartley}
Hartley, R.~V.: 1928, `Transmission of Information'.
\newblock {\em Bell Syst. Tech. J.} p. 535.

\bibitem[\protect\citeauthoryear{Haupt
  et~al.}{2008}]{ch2:art:ComSensing_app:sensor}
Haupt, J., W. Bajwa, M. Rabbat, and R. Nowak.: 2008, `Compressed sensing for
  networked data. , 25(2):92Ð101, 2008.'.
\newblock {\em IEEE Signal Processing Mag.} {\bf 25}(2), 92--101.

\bibitem[\protect\citeauthoryear{Hayes}{1996}]{ch2:bk:Hayes:DSP96}
Hayes, M.~H.: 1996, {\em Statistical digital signal processing and modeling}.
\newblock John Wiley \& Sons.

\bibitem[\protect\citeauthoryear{Haykin and Moher}{2006}]{ch2:bk:SEP:Haykin06}
Haykin, S.~S. and M. Moher: 2006, {\em Introduction to analog and digital
  communications}.
\newblock Wiley, 2 edition.

\bibitem[\protect\citeauthoryear{Hazewinkel}{1995}]{ch5:bk:ring:Handbook95}
Hazewinkel, M.: 1995, {\em Handbook of Algebra}, Vol.~1.
\newblock Elsevier.

\bibitem[\protect\citeauthoryear{Herzet
  et~al.}{2007}]{ch2:art:sync_Turbo:Herzet07}
Herzet, C., N. Noels, V. Lottici, H. Wymeersch, M. Luise, M. Moeneclaey, and L.
  Vandendorpe: 2007, `Code-Aided {T}urbo Synchronization'.
\newblock {\em Proceedings of the IEEE} {\bf 95}(6), 1255--1271.

\bibitem[\protect\citeauthoryear{Hocquengbem}{1959}]{ch2:origin:BCH:Hocquengbem}
Hocquengbem, A.: 1959, `Codes Correcteurs D' erreurs'.
\newblock {\em Chiffres} {\bf 2}, 147--156.

\bibitem[\protect\citeauthoryear{Hotelling}{1933}]{ch2:origin:PCA:Hotelling33}
Hotelling, H.: 1933, `Analysis of a complex of statistical variables into
  principal components'.
\newblock {\em Journal of Educational Psychology} {\bf 24}.

\bibitem[\protect\citeauthoryear{Hsu}{1982}]{ch2:origin:RLS:squareroot82}
Hsu, F.~M.: 1982, `Square root Kalman filtering for high-speed data received
  over fading dispersive HF channels'.
\newblock {\em IEEE Transactions on Information Theory} {\bf IT-28}, 753--763.

\bibitem[\protect\citeauthoryear{Huang and
  Schultheiss}{1963}]{ch2:art:VQ:TransformedGauss}
Huang, J. and P. Schultheiss: 1963, `Block Quantization of Correlated Gaussian
  Random Variables'.
\newblock {\em IEEE Transactions on Communications Systems} {\bf 11}(3),
  289--296.

\bibitem[\protect\citeauthoryear{Huffman}{1952}]{ch2:origin:HuffmanCode}
Huffman, D.~A.: 1952, `A Method for the Construction of Minimum-Redundancy
  Codes'.
\newblock {\em Proceedings of the Institute of Radio Engineers} {\bf 40}(9),
  1098--1101.

\bibitem[\protect\citeauthoryear{Hughes}{1899}]{ch2:origin:Hertz:1887}
Hughes, D.~E.: 1899, `Research in Wireless Telegraphy'.
\newblock {\em The Electrician} {\bf 43}, 40--41.

\bibitem[\protect\citeauthoryear{IEEE}{2014}]{ch2:link:5G:IEEE14}
IEEE: 2014, `IEEE Signal Processing Magazine, Special Issue on Signal
  Processing for the {5G} Revolution'.
\newblock
  \url{http://www.signalprocessingsociety.org/uploads/special_issues_deadlines/5G_revolution.pdf}.

\bibitem[\protect\citeauthoryear{Ilic et~al.}{2011}]{ch5:art:MP:Entropy11}
Ilic, V.~M., M.~S. Stankovic, and B. Todorovic: 2011, `Entropy Message
  Passing'.
\newblock {\em IEEE Transactions on Information Theory} {\bf 57}(1), 375--380.

\bibitem[\protect\citeauthoryear{Iniewski}{2011}]{ch2:bk:mobile:system_2011}
Iniewski, K.: 2011, {\em Convergence of Mobile and Stationary Next-Generation
  Networks}.
\newblock John Wiley \& Sons.

\bibitem[\protect\citeauthoryear{Jaakkola and
  Jordan}{2000}]{ch4:art:VB:Jordan:2000}
Jaakkola, T. and M. Jordan: 2000, `Bayesian parameter estimation via
  variational methods'.
\newblock {\em Statistics and Computing} {\bf 10}, 25--37.

\bibitem[\protect\citeauthoryear{Jaynes}{1980}]{ch4:origin:MaxEnt:Jaynes80}
Jaynes, E.: 1980, `Marginalization and prior probabilities'.
\newblock In: A. Zellner (ed.): {\em Bayesian Analysis in Econometrics and
  Statistics}.
\newblock North-Holland, Amsterdam.

\bibitem[\protect\citeauthoryear{Jaynes}{1983}]{ch4:origin:MaxEnt:Jaynes83}
Jaynes, E.: 1983, {\em Papers on Probability, Statistics and Statistical
  Physics}.
\newblock Reidel, Dordrecht.

\bibitem[\protect\citeauthoryear{Jaynes}{2003}]{ch4:bk:Bayes:bible:Jaynes}
Jaynes, E.~T.: 2003, {\em Probability Theory: The Logic of Science}.
\newblock Cambridge University Press.

\bibitem[\protect\citeauthoryear{Jeffreys}{1946}]{ch4:origin:prior:Jeffreys46}
Jeffreys, H.: 1946, `An Invariant Form for the Prior Probability in Estimation
  Problems'.
\newblock {\em Proceedings of the Royal Society of London. Series {A}} {\bf
  186}(1007), 453--461.

\bibitem[\protect\citeauthoryear{Jegou et~al.}{2011}]{ch2:art:VQ:pattern2011}
Jegou, H., M. Douze, and C. Schmid: 2011, `Product Quantization for Nearest
  Neighbor Search'.
\newblock {\em IEEE Transactions on Pattern Analysis and Machine Intelligence}
  {\bf 33}(1), 117--128.

\bibitem[\protect\citeauthoryear{J.Max}{1960}]{ch2:origin:VQ:Lloyd_Max:60}
J.Max: 1960, `Quantizing for Minimum Distortion'.
\newblock {\em IRE Transactions on Information Theory} {\bf IT-6}, 7--12.

\bibitem[\protect\citeauthoryear{Karhunen}{1947}]{ch2:origin:PCA:Karhunen47}
Karhunen, H.: 1947, `Uber Lineare Methoden in der
  Wahrscheinlich-Keitsrechunung'.
\newblock {\em Annales Academiae Fennicae, Series A}.

\bibitem[\protect\citeauthoryear{Kass and
  Wasserman}{1996}]{ch4:art:prior:Kass96}
Kass, R.~E. and L. Wasserman: 1996, `The Selection of Prior Distributions by
  Formal Rules'.
\newblock {\em Journal of the American Statistical Association} {\bf 91}(435),
  1343Ð1370.

\bibitem[\protect\citeauthoryear{Kay}{1989}]{ch3:origin:Freq:Kay89}
Kay, S.: 1989, `A fast and accurate single frequency estimator'.
\newblock {\em IEEE Transactions on Acoustics, Speech and Signal Processing}
  {\bf 37}(12), 1987--1990.

\bibitem[\protect\citeauthoryear{Kay}{1998}]{ch2:BK:Kay:Estimation98}
Kay, S.~M.: 1998, {\em Fundamentals of Statistical Signal Processing:
  Estimation Theory}, Vol.~1.
\newblock Prentice-Hall PTR.

\bibitem[\protect\citeauthoryear{Kelker}{1970}]{ch7:art:spherical:dist70}
Kelker, D.: 1970, `Distribution theory of spherical distributions and a
  location-scale parameter generalization'.
\newblock {\em Indian Journal of Statistics, Series {A}} {\bf 32}(4), 419--430.

\bibitem[\protect\citeauthoryear{Klein}{2006}]{ch3:art:Freq:fast06}
Klein, J.: 2006, `Fast algorithms for single frequency estimation'.
\newblock {\em IEEE Trans. on Signal Processing} {\bf 54}(5), 1762--1770.

\bibitem[\protect\citeauthoryear{Klein and
  Press}{1992}]{ch4:art:predict:classification92b}
Klein, R. and S.~J. Press: 1992, `Adaptive {B}ayesian classification of spatial
  data'.
\newblock {\em J. Amer. Stattst. Assoc.} {\bf 87}, 844--851.

\bibitem[\protect\citeauthoryear{Koekemoer and
  Swanepoel}{2008}]{ch2:art:transform:semi_para2008}
Koekemoer, G. and J.~W.~H. Swanepoel: 2008, `A semi-parametric method for
  transforming data to normality'.
\newblock {\em Statistics and Computing} {\bf 18}(3), 241--257.

\bibitem[\protect\citeauthoryear{Kolmogorov}{1950}]{ch4:origin:Rao_Blackwell:Kolmogorov50}
Kolmogorov, A.: 1950, `Unbiased estimates'.
\newblock {\em Izv. Akad. Nauk SSSR Ser. Mat.} {\bf 14}(4), 303--326.

\bibitem[\protect\citeauthoryear{Kolmogorov}{1965}]{ch2:origin:Kolmogorov:Kolmogorov65}
Kolmogorov, A.: 1965, `Three approaches to the quantitative definition of
  information.'.
\newblock {\em Problems of Information Transmission} {\bf 1}(1), 1--7.

\bibitem[\protect\citeauthoryear{Kolmogorov}{1933}]{ch2:origin:Kolmogorov:Axiom33}
Kolmogorov, A.~N.: 1933, {\em Grundbegriffe der Wahrscheinlichkeitsrechnung
  (English translation (1950): Foundations of the theory of probability.
  Chelsea, New York)}.
\newblock Springer, Berlin.

\bibitem[\protect\citeauthoryear{Kong and Shwedyk}{1999}]{Fading_vehicular}
Kong, H. and E. Shwedyk: 1999, `Sequence detection and channel state estimation
  over finite state Markov channels'.
\newblock {\em IEEE Trans. Veh. Technol.} {\bf 48}(3), 833--839.

\bibitem[\protect\citeauthoryear{Koopman}{1936}]{ch4:origin:EF:Koopman36}
Koopman, B.: 1936, `On distribution admitting a sufficient statistic'.
\newblock {\em Transactions of the American Mathematical Society} {\bf 39}(3),
  399Ð409.

\bibitem[\protect\citeauthoryear{Kotz
  et~al.}{1997}]{ch4:bk:Johnson_Kotz:DisMul97}
Kotz, S., N. Balakrishnan, and N.~L. Johnson: 1997, {\em Discrete Multivariate
  Distributions}.
\newblock John Wiley \& Sons.

\bibitem[\protect\citeauthoryear{Kotz
  et~al.}{2004a}]{ch4:bk:Johnson_Kotz:ContMul04}
Kotz, S., N. Balakrishnan, and N.~L. Johnson: 2004a, {\em Continuous
  Multivariate Distributions, Models and Applications}.
\newblock John Wiley \& Sons.

\bibitem[\protect\citeauthoryear{Kotz
  et~al.}{2004b}]{ch4:bk:Johnson_Kotz:ContUni04}
Kotz, S., N. Balakrishnan, and N.~L. Johnson: 2004b, {\em Continuous Univariate
  Distributions, Volumes 1-2}.
\newblock John Wiley \& Sons.

\bibitem[\protect\citeauthoryear{Kotz
  et~al.}{2005}]{ch4:bk:Johnson_Kotz:DisMul05}
Kotz, S., N. Balakrishnan, and N.~L. Johnson: 2005, {\em Univariate Discrete
  Distributions}.
\newblock John Wiley \& Sons.

\bibitem[\protect\citeauthoryear{Kramer and
  Mathews}{1956}]{ch2:origin:TransformCoding}
Kramer, H.~P. and M.~V. Mathews: 1956, `A linear coding for transmitting a set
  of correlated signals'.
\newblock {\em IRE Transactions on Information Theory} {\bf IT-23}, 41--46.

\bibitem[\protect\citeauthoryear{Kschischang
  et~al.}{2001}]{ch5:art:MP:Kschischang01}
Kschischang, F., S. Member, B.~J. Frey, and H. andrea Loeliger: 2001, `Factor
  Graphs and the Sum-Product Algorithm'.
\newblock {\em IEEE Transactions on Information Theory} {\bf 47}, 498--519.

\bibitem[\protect\citeauthoryear{Lam et~al.}{1997}]{ch5:art:GDL:NPcomplete97}
Lam, C.-C., P. Sadayappan, and R. Wenger: 1997, `On Optimizing A Class Of
  Multi-Dimensional Loops With Reductions For Parallel Execution'.
\newblock {\em Parallel Processing Letters} {\bf 7}, 157--168.

\bibitem[\protect\citeauthoryear{Landau}{1967}]{ch2:origin:subNyquist:Landau67}
Landau, H.: 1967, `Necessary density conditions for sampling and interpolation
  of certain entire functions'.
\newblock {\em Acta Mathematica} {\bf 117}(1), 37--52.

\bibitem[\protect\citeauthoryear{Laplace}{1774}]{ch2:origin:Bayes:Laplace1774}
Laplace, P.-S.: 1774, `Memoire sur la probabilite des causes par les
  evenements'.
\newblock {\em presented at l'Academie des Sciences} {\bf vi}, 621.

\bibitem[\protect\citeauthoryear{Laplace}{1781}]{ch2:origin:Bayes:Laplace1781}
Laplace, P.-S.: 1781, `Memoire sur les probabilites'.
\newblock {\em Memoires de l'Academie royale des sciences de Paris} pp.
  383--485.

\bibitem[\protect\citeauthoryear{Laplace}{1810}]{ch4:origin:posteriorCLT:Laplace1810}
Laplace, P.~S.: 1810, `Memoire sur les formules qui sont fonctions de tres
  grands nombres et sur leurs applications aux probabilites'.
\newblock {\em Memoires de l'Academie des Sciences de Paris}.

\bibitem[\protect\citeauthoryear{Laplace}{1814}]{ch4:origin:prior:Laplace1814}
Laplace, P.~S.: 1814, {\em Essai philosophique sur les probabilites}.
\newblock Mme. Ve. Courcier.

\bibitem[\protect\citeauthoryear{Lavine and
  West}{1992}]{ch4:art:predict:classification92}
Lavine, M. and M. West: 1992, `A {B}ayesian method for classification and
  discrimination'.
\newblock {\em Canadian J. Statist.} {\bf 20}, 451--461.

\bibitem[\protect\citeauthoryear{Le~Cam}{1953}]{ch4:origin:posteriorCLT:LeCam53}
Le~Cam, L.: 1953, `On some asymptotic properties of maximum likelihood
  estimates and related Bayes estimates'.
\newblock {\em University of California Publications in Statistics} {\bf
  1}(11), 277--330.

\bibitem[\protect\citeauthoryear{Lehmann and
  Casella}{1998}]{ch4:bk:PointEst:Lehmann98}
Lehmann, E. and G. Casella: 1998, {\em Theory of Point Estimation}.
\newblock Springer.

\bibitem[\protect\citeauthoryear{Lember}{2011}]{ch4:art:JuriLember11}
Lember, J.: 2011, `On approximation of smoothing probabilities for hidden
  {M}arkov models'.
\newblock {\em Statistics Probability Letters} {\bf 81}(9), 310--316.

\bibitem[\protect\citeauthoryear{Levinson}{1947}]{ch2:origin:LevinsonDurbin:Levinson47}
Levinson, N.: 1947, `The {W}iener {RMS} error criterion in filter design and
  prediction'.
\newblock {\em J. Math. Phys.} {\bf 25}, 261--278.

\bibitem[\protect\citeauthoryear{Li et~al.}{2000}]{ch6:art:HMM:image:Li_2000}
Li, J., R. Gray, and R. Olshen: 2000, `Multiresolution image classification by
  hierarchical modeling with two dimensional hidden Markov models'.
\newblock {\em IEEE Trans. Inform. Theory} {\bf 46}(5), 1826--1841.

\bibitem[\protect\citeauthoryear{Li and Eisner}{2009}]{ch5:art:EM:semiring09}
Li, Z. and J. Eisner: 2009, `First- and Second-Order Expectation Semirings with
  Applications to Minimum-Risk Training on Translation Forests'.
\newblock In: {\em Proceedings of the Conference on Empirical Methods in
  Natural Language Processing (EMNLP)}. Singapore, pp. 40--51.

\bibitem[\protect\citeauthoryear{Linde
  et~al.}{1980}]{ch2:origin:VQ:LBG_algo:80}
Linde, Y., A. Buzo, and R. Gray: 1980, `An algorithm for vector quantization
  design'.
\newblock {\em IEEE Transactions on Communications} {\bf 28}, 84--95.

\bibitem[\protect\citeauthoryear{Lindley}{2000}]{ch4:art:Bayes:Lindley00}
Lindley, D.~V.: 2000, `The Philosophy of Statistics'.
\newblock {\em Journal of the Royal Statistical Society: Series {D}} {\bf
  49}(3), 293--337.

\bibitem[\protect\citeauthoryear{Lindsey}{1972}]{ch2:bk:sync_phase:Lindsey72}
Lindsey, W.~C.: 1972, {\em Synchronization Systems in Communication and
  Control}.
\newblock Pearson Education.

\bibitem[\protect\citeauthoryear{Liseo}{1993}]{ch7:art:Transform:nuisance:Liseo93}
Liseo, B.: 1993, `Elimination of nuisance parameters with Reference Priors'.
\newblock {\em Biometrika} {\bf 80}(2), 295--304.

\bibitem[\protect\citeauthoryear{Liseo}{2006}]{ch7:art:Transform:nuisance:Liseo06}
Liseo, B.: 2006, `The elimination of nuisance parameters'.
\newblock {\em Handbook of Statistics} {\bf 25}, 193--219.

\bibitem[\protect\citeauthoryear{Lloyd}{1982}]{ch2:origin:VQ:Lloyd_Max:82}
Lloyd, S.: 1982, `Least Squares Quantization in PCM'.
\newblock {\em IEEE Transactions on Information Theory} {\bf IT-28}, 127--135.

\bibitem[\protect\citeauthoryear{Loeve}{1948}]{ch2:origin:PCA:Loeve}
Loeve, M.: 1948, `Fonctions Aleatoires de Seconde Ordre'.
\newblock In: P. Levy (ed.): {\em Processus Stochastiques et Mouvement
  Brownien}.
\newblock Hermann.

\bibitem[\protect\citeauthoryear{Lookabaugh and
  Gray}{1989}]{ch2:art:VQ:vec_vs_scalar}
Lookabaugh, T.~D. and R.~M. Gray: 1989, `High-resolution quantization theory
  and the vector quantizer advantage'.
\newblock {\em IEEE Trans. Inform. Theory} {\bf 35}, 1020--1033.

\bibitem[\protect\citeauthoryear{Luise and
  Reggiannini}{1995}]{ch3:art:Freq:FitzVariant95}
Luise, M. and R. Reggiannini: 1995, `Carrier frequency recovery in all-digital
  modems for burst-mode transmissions'.
\newblock {\em IEEE Transactions on Communications} {\bf 43}(2,3,4),
  1169--1178.

\bibitem[\protect\citeauthoryear{Lyapunov}{1900}]{ch2:origin:CLT:Lyapunov:00}
Lyapunov, A.~M.: 1900, `Sur une proposition de la theorie des probabilites'.
\newblock {\em Bulletin de lÕ{A}cademie Imperiale des Sciences de {St.
  P}etersbourg} {\bf 5}(13), 359--386.

\bibitem[\protect\citeauthoryear{MacDonald}{1979}]{ch2:origin:mobile_cellular:1970}
MacDonald, V.~H.: 1979, `The cellular concept'.
\newblock {\em Bell Sys. Tech. J.} {\bf 58}(1), 15--41.

\bibitem[\protect\citeauthoryear{MacKay}{1999}]{ch2:art:LDPC:CapacityProof}
MacKay, D.: 1999, `Good error-correcting codes based on very sparse matrices'.
\newblock {\em IEEE Trans. Info. Theory} {\bf 45}(2), 399--431.

\bibitem[\protect\citeauthoryear{MacKay}{1995}]{ch4:art:VB:MacKay:ensemble_learning}
MacKay, D.~J.: 1995, `Developments in Probabilistic Modelling with Neural
  Networks - Ensemble Learning'.
\newblock {\em Proceedings of the third Annual Symposium on Neural Networks,
  Nijmegen, The Netherlands, Springer} p. 191Ð198.

\bibitem[\protect\citeauthoryear{MacKay and
  Neal}{1995}]{ch2:origin:LDPC:MacKay}
MacKay, D.~J. and R.~M. Neal: 1995, `Good Codes Based on Very Sparse Matrices'.
\newblock {\em Cryptography and Coding 5th IMA Conference} {\bf 1025},
  100--111.

\bibitem[\protect\citeauthoryear{MacQueen}{1967}]{ch2:origin:k_means_MacQueen67}
MacQueen, J.: 1967, `Some methods for classification and analysis of
  multivariate observations'.
\newblock {\em Proc. of the Fifth Berkeley Symposium on Math. Stat. and Prob.}
  {\bf 1}, 281--296.

\bibitem[\protect\citeauthoryear{Madhow}{2008}]{ch3:BK:DigiComm:Madhow08}
Madhow, U.: 2008, {\em Fundamentals of Digital Communication}.
\newblock Cambridge University Press.

\bibitem[\protect\citeauthoryear{Marcellin and
  Fischer}{1990}]{ch2:origin:TCQ:Marcellin_Fisher}
Marcellin, M. and T. Fischer: 1990, `Trellis coded quantization of memoryless
  and Gauss-Markov sources'.
\newblock {\em IEEE Transactions on Communications} {\bf 38}, 82--93.

\bibitem[\protect\citeauthoryear{Marcellina
  et~al.}{2002}]{ch2:art:TCQ:JPEG2000}
Marcellina, M.~W., M.~A. Lepleyb, A. Bilgina, T.~J. Flohrc, T.~T. Chinend, and
  J.~H. Kasnere: 2002, `An overview of quantization in {JPEG} 2000'.
\newblock {\em Signal Processing: Image Communication} {\bf 17}(1), 73--84.

\bibitem[\protect\citeauthoryear{Maxwell}{1873}]{ch2:origin:Maxwell:1873}
Maxwell, J.~C.: 1873, {\em A Treatise on Electricity and Magnetism}.
\newblock Oxford: Clarendon Press, (new york: dover, 1954) edition.

\bibitem[\protect\citeauthoryear{Mayer}{1875}]{ch2:origin:AM_Mayer}
Mayer, A.~M.: 1875, `Researches in acoustics'.
\newblock {\em Phil. Mag.} {\bf 49}, 352--365.

\bibitem[\protect\citeauthoryear{Mc{E}liece}{1996}]{ch5:art:BCJR:McEliece96}
Mc{E}liece, R.~J.: 1996, `On the {BCJR} trellis for linear block codes'.
\newblock {\em IEEE Trans. Inform. Theory} {\bf 42}(4), 1072--1091.

\bibitem[\protect\citeauthoryear{Mcgrory and
  Titterington}{2009}]{ch6:art:VB:HMM:Titterington09}
Mcgrory, C.~A. and D.~M. Titterington: 2009, `Variational bayesian analysis for
  hidden markov models'.
\newblock {\em Aust. N. Z. J. Stat} {\bf 51}, 227--244.

\bibitem[\protect\citeauthoryear{Mengali and
  D'Andrea}{1997}]{ch3:bk:Freq:Mengali97}
Mengali, U. and A.~N. D'Andrea: 1997, {\em Synchronization Techniques for
  Digital Receivers}.
\newblock Springer.

\bibitem[\protect\citeauthoryear{Mengali and
  Morelli}{1997}]{ch3:art:Freq:FitzVariant97}
Mengali, U. and M. Morelli: 1997, `Data-aided frequency estimation for burst
  digital transmission'.
\newblock {\em IEEE Transactions on Communications} {\bf 45}(1), 23--25.

\bibitem[\protect\citeauthoryear{Metropolis
  et~al.}{1953}]{ch2:origin:MCMC:Metropolis}
Metropolis, N., A.~W. Rosenbluth, M.~N. Rosenbluth, A.~H. Teller, and E.
  Teller: 1953, `Equation of State Calculations by Fast Computing Machines'.
\newblock {\em The Journal of Chemical Physics} {\bf 21}(6), 1087--1092.

\bibitem[\protect\citeauthoryear{Meyr et~al.}{1997}]{ch2:bk:sync_phase:Meyr97}
Meyr, H., M. Moeneclaey, and S.~A. Fechtel: 1997, {\em Digital Communication
  Receivers, Synchronization, Channel Estimation, and Signal Processing}.
\newblock Wiley.

\bibitem[\protect\citeauthoryear{Minoux}{2001}]{ch5:origin:PreSemiRing:Minoux01}
Minoux, M.: 2001, `Extension of MacMahon\'s Master Theorem to pre-semi-ring'.
\newblock {\em Linear Algebra Appl.} {\bf 338}, 19--26.

\bibitem[\protect\citeauthoryear{Mishali and
  Eldar}{2009}]{ch2:art:ComSensing:analog}
Mishali, M. and Y. Eldar: 2009, `Blind Multiband Signal Reconstruction -
  Compressed Sensing for Analog Signals'.
\newblock {\em IEEE Transactions on Signal Processing} {\bf 57}(3), 993--1009.

\bibitem[\protect\citeauthoryear{Mishra}{2004}]{ch2:bk:4G:cellular04}
Mishra, A.~R.: 2004, {\em Fundamentals of Cellular Network Planning and
  Optimisation: 2G/2.5G/3G Evolution to 4G}.
\newblock Wiley.

\bibitem[\protect\citeauthoryear{Moon}{2005}]{ch2:bk:ToddMoon}
Moon, T.~K.: 2005, {\em Error Correction Coding: Mathematical Methods and
  Algorithms}.
\newblock Wiley.

\bibitem[\protect\citeauthoryear{Morelli and Lin}{2013}]{ch3:art:Freq:esprit13}
Morelli, M. and H. Lin: 2013, `{ESPRIT}-Based Carrier Frequency Offset
  Estimation for {OFDM} Direct-Conversion Receivers'.
\newblock {\em IEEE Communications Letters} {\bf 17}(8), 1513--1516.

\bibitem[\protect\citeauthoryear{Muller}{1954}]{ch2:origin:ReedMuller:Muller}
Muller, D.: 1954, `Application of Boolean Switching Algebra to Switching Cicuit
  Design'.
\newblock {\em IEEE Trans. on Computers} {\bf 3}, 6--12.

\bibitem[\protect\citeauthoryear{Neyman}{1935}]{ch4:origin:NeymanFactor:Neyman35}
Neyman, J.: 1935, `Sur un teorema concerte le cosidette statistiche
  sufficenti'.
\newblock {\em Giom. 1st. Ital.} {\bf 6}, 320--334.

\bibitem[\protect\citeauthoryear{Nielsen and
  Garcia}{2009}]{ch4:art:EF:Nielsen09}
Nielsen, F. and V. Garcia: 2009, `Statistical exponential families: A digest
  with flash cards'.
\newblock Technical report, CoRR.

\bibitem[\protect\citeauthoryear{Nielsen}{2001}]{ch5:PhD:Nielsen01}
Nielsen, T.~D.: 2001, `Graphical Models for Partially Sequential Decision
  problems'.
\newblock Ph.D. thesis, Department of Computer Science, Aalborg University.

\bibitem[\protect\citeauthoryear{Nyquist}{1924}]{ch2:origin:ShannonNyquist:Nyquist_rate}
Nyquist, H.: 1924, `Certain Factors Affecting Telegraph Speed'.
\newblock {\em Bell Syst. Tech. J.} {\bf 3}, 324.

\bibitem[\protect\citeauthoryear{Nyquist}{1928}]{ch2:origin:ShannonNyquist:Nyquist}
Nyquist, H.: 1928, `Certain Topics in Telegraph Transmission Theory'.
\newblock {\em Transactions of the American Institute of Electrical Engineers}
  {\bf 47}(2), 617--644.

\bibitem[\protect\citeauthoryear{Pakzad and
  Anantharam}{2004}]{ch5:art:GDL:NewLook04}
Pakzad, P. and V. Anantharam: 2004, `A new look at the generalized distributive
  law'.
\newblock {\em IEEE Transaction Information Theory} {\bf 50}(6), 1132--1155.

\bibitem[\protect\citeauthoryear{Palmer}{2009}]{ch3:PhD:Freq:pilot09}
Palmer, J.~M.: 2009, `Real-Time Carrier Frequency Estimation Using Disjoint
  Pilot Symbol Blocks'.
\newblock Ph.D. thesis, Brigham Young University.

\bibitem[\protect\citeauthoryear{Panchanathan and
  Goldberg}{1991}]{ch2:art:VQ:AdaptiveVQ}
Panchanathan, S. and M. Goldberg: 1991, `Adaptive Algorithm for Image Coding
  Using Vector Quantization'.
\newblock {\em Signal Processing: Image Communication} {\bf 4}, 81--92.

\bibitem[\protect\citeauthoryear{Parmigiani and
  Inoue}{2009}]{ch4:BK:DecisionTheory:09}
Parmigiani, G. and L. Inoue: 2009, {\em Decision Theory: Principles and
  Approaches}.
\newblock John Wiley \& Sons.

\bibitem[\protect\citeauthoryear{Pearl}{1988}]{ch6:art:SumProduct:Pearl88}
Pearl, J.: 1988, {\em Probabilistic Reasoning in Intelligent Systems: Networks
  of Plausible Inference}.
\newblock Morgan Kaufmann.

\bibitem[\protect\citeauthoryear{Peterson and
  Anderson}{1987}]{ch4:origin:VB_in_neural_network:87}
Peterson, C. and J. Anderson: 1987, `A mean field theory learning algorithm for
  neural networks'.
\newblock {\em Complex Systems} {\bf 1}, 995--1019.

\bibitem[\protect\citeauthoryear{Peterson and
  E.~J.~Weldon}{1972}]{ch2:origin:AlgebraicCode:72}
Peterson, W.~W. and J. E.~J.~Weldon: 1972, {\em Error-Correcting Codes}.
\newblock Cambridge, MA: MIT Press.

\bibitem[\protect\citeauthoryear{Pitman and
  Wishart}{1936}]{ch4:origin:EF:Pitman36}
Pitman, E. and J. Wishart: 1936, `Sufficient statistics and intrinsic
  accuracy'.
\newblock {\em Mathematical Proceedings of the Cambridge Philosophical Society}
  {\bf 32}(4), 567--579.

\bibitem[\protect\citeauthoryear{Prange}{1957}]{ch2:origin:CyclicCode:Prange}
Prange, E.: 1957, {\em Cyclic error-correcting codes in two symbols}.
\newblock Air Force Cambridge Res. Center, Cambridge, MA. Tech. Note
  AFCRC-TN-57-103.

\bibitem[\protect\citeauthoryear{Proakis}{2007}]{ch2:bk:Proakis:Comm01}
Proakis, J.~G.: 2007, {\em Digital Communications}.
\newblock McGraw-Hill, 4 edition.

\bibitem[\protect\citeauthoryear{Proakis and
  Manolakis}{2006}]{ch2:bk:Proakis:DSP06}
Proakis, J.~G. and D.~G. Manolakis: 2006, {\em Digital signal processing}.
\newblock Prentice Hall, 4 edition.

\bibitem[\protect\citeauthoryear{Progri}{2011}]{ch2:bk:modem:TCM_2011}
Progri, I.: 2011, {\em Geolocation of {RF} Signals: Principles and
  Simulations}.
\newblock Springer.

\bibitem[\protect\citeauthoryear{Quinn}{1992}]{ch7:PhD:aquinn92}
Quinn, A.: 1992, `Bayesian Point Inference in Signal Processing'.
\newblock Ph.D. thesis, University of Cambridge.

\bibitem[\protect\citeauthoryear{Quinn
  et~al.}{2011}]{ch2:art:sync_phase:SEP_ICASSP}
Quinn, A., J.-P. Barbot, and P. Larzabal: 2011, `The Bayesian inference of
  phase'.
\newblock {\em IEEE International Conference on Acoustics, Speech and Signal
  Processing (ICASSP), 2011} pp. 4276--4279.

\bibitem[\protect\citeauthoryear{Rabiner}{1989}]{ch6:Art:HMM:speech:Rabiner89}
Rabiner, L.~R.: 1989, `A tutorial on hidden markov models and selected
  applications in speech recognition'.
\newblock {\em Proceedings of the IEEE} {\bf 77}(2), 257--286.

\bibitem[\protect\citeauthoryear{Rachev}{1991}]{Kolmogorov_distance}
Rachev, S.~T.: 1991, {\em Probability Metrics and Stability of Stochastic
  Models}.
\newblock John Wiley \& Sons.

\bibitem[\protect\citeauthoryear{Raftery
  et~al.}{1995}]{ch4:art:predict:model95}
Raftery, A., D. Madigan, and C.~T. Volinsky: 1995, `Accounting for Model
  Uncertainty in Survival Analysis Improves Predictive Performance'.
\newblock In: {\em In Bayesian Statistics 5}. pp. 323--349, University Press.

\bibitem[\protect\citeauthoryear{Ramsay and
  Silverman}{2005}]{ch2:bk:Silverman:RKHS05}
Ramsay, J.~O. and B.~W. Silverman: 2005, {\em Functional Data Analysis}.
\newblock Springer.

\bibitem[\protect\citeauthoryear{Rao}{1945}]{ch2:origin:CramerRao:Rao}
Rao, C.~R.: 1945, `Information and the accuracy attainable in the estimation of
  statistical parameters'.
\newblock {\em Bulletin of the Calcutta Mathematical Society} {\bf 37}, 81--89.

\bibitem[\protect\citeauthoryear{Rao}{1965}]{ch4:origin:Rao_Blackwell:Rao65}
Rao, C.~R.: 1965, {\em Linear Statistical Inference and Its Applications}.
\newblock John Wiley and Sons.

\bibitem[\protect\citeauthoryear{Reed}{1954}]{ch2:origin:ReedMuller:Reed}
Reed, I.: 1954, `A Class of Multiple-Error-Correcting Codes and a Decoding
  Scheme'.
\newblock {\em IEEE Trans. Information Theory} {\bf 4}, 38--49.

\bibitem[\protect\citeauthoryear{Reed and
  Solomon}{1960}]{ch2:origin:ReedSolomonCode}
Reed, I. and G. Solomon: 1960, `Polynomial Codes over Certain Finite Fields'.
\newblock {\em J. SOC. Indust. Appl. Math} {\bf 8}, 300--304.

\bibitem[\protect\citeauthoryear{Richardson and
  Urbanke}{2008}]{ch2:bk:Richardson}
Richardson, T. and R. Urbanke: 2008, {\em Modern Coding Theory}.
\newblock Cambridge University Press.

\bibitem[\protect\citeauthoryear{Richardson and
  Urbanke}{2001}]{ch2:art:LDPC:Richardson01}
Richardson, T.~J. and R.~L. Urbanke: 2001, `Efficient encoding of low-density
  parity-check codes'.
\newblock {\em IEEE Trans. Info. Theory} {\bf 47}(2), 638--656.

\bibitem[\protect\citeauthoryear{Rife and
  Boorstyn}{1974}]{ch3:origin:Freq:ML74}
Rife, D. and R. Boorstyn: 1974, `Single tone parameter estimation from
  discrete-time observations'.
\newblock {\em IEEE Transactions on Information Theory} {\bf 20}(5), 591--598.

\bibitem[\protect\citeauthoryear{Rimoldi}{1988}]{ch2:art:modem:CPM_1988}
Rimoldi, B.: 1988, `A decomposition approach to {CPM}'.
\newblock {\em IEEE Transactions on Information Theory} {\bf 34}(2), 260--270.

\bibitem[\protect\citeauthoryear{Rissanen}{1978}]{ch2:origin:MDL:Rissanen78}
Rissanen, J.: 1978, `Modeling by the shortest data description.'.
\newblock {\em Automatica} {\bf 14}, 465--471.

\bibitem[\protect\citeauthoryear{Rissanen and
  Langdon}{1979}]{ch2:origin:ArithmeticCode:79}
Rissanen, J. and G. Langdon: 1979, `Arithmetic coding'.
\newblock {\em IBM Journal of Research and Development} {\bf 23}(2), 149--162.

\bibitem[\protect\citeauthoryear{Robert}{2007}]{ch4:BK:Bible:Robert07}
Robert, C.: 2007, {\em The {B}ayesian Choice: From Decision-Theoretic
  Foundations to Computational Implementation}.
\newblock Springer.

\bibitem[\protect\citeauthoryear{Ross}{1970}]{ch4:origin:LOTUS:Ross70}
Ross, S.~M.: 1970, {\em Applied Probability Models with Optimization
  Applications}.
\newblock Dover Publications.

\bibitem[\protect\citeauthoryear{Roy et~al.}{1986}]{ch2:origin:ESPRIT:86}
Roy, R., A. Paulraj, and T. Kailath: 1986, `{ESPRIT} - A subspace rotation
  approach to estimation of parameters of cisoids in noise'.
\newblock {\em IEEE Transactions on Acoustics, Speech and Signal Processing}
  {\bf 34}(5), 1340--1342.

\bibitem[\protect\citeauthoryear{Ryan and Lin}{2009}]{ch2:bk:ShuLin}
Ryan, W. and S. Lin: 2009, {\em Channel Codes: Classical and Modern}.
\newblock Cambridge University Press.

\bibitem[\protect\citeauthoryear{Sadeghi
  et~al.}{2008}]{ch3:ART:FadingMarkov:tutorial08}
Sadeghi, P., R. Kennedy, P. Rapajic, and R. Shams: 2008, `Finite-state {M}arkov
  modeling of fading channels - a survey of principles and applications'.
\newblock {\em IEEE Signal Processing Magazine} {\bf 25}(5), 57--80.

\bibitem[\protect\citeauthoryear{Sadeghi and
  Rapajic}{2005}]{ch2:art:Fading:Capacity_05}
Sadeghi, P. and P. Rapajic: 2005, `Capacity analysis for finite-state {M}arkov
  mapping of flat-fading channels'.
\newblock {\em IEEE Transactions on Communications} {\bf 53}(5), 833--840.

\bibitem[\protect\citeauthoryear{Sakia}{1992}]{ch2:art:BoxCox:sakia92}
Sakia, R.~M.: 1992, `The Box-Cox Transformation Technique: A Review'.
\newblock {\em Journal of the Royal Statistical Society. Series D (The
  Statistician)} {\bf 41}(2), 169--178.

\bibitem[\protect\citeauthoryear{Saltzberg}{1967}]{ch2:origin:OFDM:Saltzberg67}
Saltzberg, B.: 1967, `Performance of an Efficient Parallel Data Transmission
  System'.
\newblock {\em IEEE Transactions on Communication Technology} {\bf 15}(6),
  805--811.

\bibitem[\protect\citeauthoryear{Sauter}{2012}]{ch2:bk:4G:beyond13}
Sauter, M.: 2012, {\em 3G, 4G and Beyond: Bringing Networks, Devices and the
  Web Together}.
\newblock John Wiley \& Sons.

\bibitem[\protect\citeauthoryear{Sayed et~al.}{2005}]{ch2:art:GPS:challenges05}
Sayed, A., A. Tarighat, and N. Khajehnouri: 2005, `Network-based wireless
  location: challenges faced in developing techniques for accurate wireless
  location information'.
\newblock {\em IEEE Signal Processing Magazine} {\bf 22}(4), 24--40.

\bibitem[\protect\citeauthoryear{Sayood}{2006}]{ch2:bk:compression:KhalidSayood}
Sayood, K.: 2006, {\em Introduction to Data Compression}.
\newblock Elsevier.

\bibitem[\protect\citeauthoryear{Schervish}{1995}]{ch4:bk:LOTUS:Schervish95}
Schervish, M.~J.: 1995, {\em Theory of Statistics}.
\newblock Springer.

\bibitem[\protect\citeauthoryear{Schmidt}{1986}]{ch2:origin:MUSIC:86}
Schmidt, R.: 1986, `Multiple emitter location and signal parameter estimation'.
\newblock {\em IEEE Transactions on Antennas and Propagation} {\bf 34}(3),
  276--280.

\bibitem[\protect\citeauthoryear{Schulze and Lueders}{2005}]{ch2:bk:OFDM_CDMA}
Schulze, H. and C. Lueders: 2005, {\em Theory and Applications of {OFDM} and
  {CDMA}: Wideband Wireless Communications}.
\newblock John Wiley \& Sons.

\bibitem[\protect\citeauthoryear{Schur}{1917}]{ch2:origin:Schur_algo:Shcur17}
Schur, I.: 1917, `On power series which are bounded in the interior of the unit
  circle'.
\newblock {\em J. Reine Angew. Math.} {\bf 147}, 205--232.

\bibitem[\protect\citeauthoryear{Shannon}{1959}]{ch2:origin:Shannon:RateDistortProof}
Shannon, C.: 1959, `Coding Theorems for a Discrete Source with a Fidelity
  Criterion'.
\newblock {\em IRE International Convention Records} {\bf 7}, 142--163.

\bibitem[\protect\citeauthoryear{Shannon}{1948}]{ch2:origin:Shannon:Limit}
Shannon, C.~E.: 1948, `A Mathematical Theory of Communication'.
\newblock {\em The Bell System Technical Journal} {\bf 27}, 379--423, 623--656.

\bibitem[\protect\citeauthoryear{Shannon}{1949}]{ch2:origin:ShannonNyquist:Shannon}
Shannon, C.~E.: 1949, `Communication in the Presence of Noise'.
\newblock {\em Proceedings of the IRE} {\bf 37}(1), 10--21.

\bibitem[\protect\citeauthoryear{Simpson}{1755}]{ch2:origin:Bayes:Simpson1755}
Simpson, T.: 1755, `A Letter to the Right Honourable George Earl of
  Macclesfield, President of the Royal Society, on the Advantage of Taking the
  Mean of a Number of Observations, in Practical Astronomy'.
\newblock {\em Phil. Trans.} {\bf 49}, 82--93.

\bibitem[\protect\citeauthoryear{Smidl and Quinn}{2006}]{ch4:BK:AQUINN_06}
Smidl, V. and A. Quinn: 2006, {\em The Variational Bayes Method in Signal
  Processing}.
\newblock Springer.

\bibitem[\protect\citeauthoryear{Smidl and
  Quinn}{2008}]{ch6:art:VB:filtering:AQuinn08}
Smidl, V. and A. Quinn: 2008, `Variational Bayesian Filtering'.
\newblock {\em IEEE trans. on signal processing} {\bf 56}(10), 5020--5030.

\bibitem[\protect\citeauthoryear{Solomonoff}{1964}]{ch2:origin:Kolmogorov:Solomonoff64}
Solomonoff, R.: 1964, `A formal theory of inductive inference, part 1 and part
  2'.
\newblock {\em Information and Control} {\bf 7}(2), 224--254.

\bibitem[\protect\citeauthoryear{Stark and
  Pernkopf}{2010}]{ch6:art:ICM:VQ_channel_10}
Stark, M. and F. Pernkopf: 2010, `On optimizing the computational complexity
  for VQ-based single channel source separation.'.
\newblock In: {\em ICASSP}. pp. 237--240.

\bibitem[\protect\citeauthoryear{Stephenson and
  Radmore}{1990}]{ch4:bk:VB:variational_calculus:1990}
Stephenson, G. and P.~M. Radmore: 1990, {\em Advanced Mathematical Methods for
  Engineering and Science Students}.
\newblock Cambridge University Press.

\bibitem[\protect\citeauthoryear{Stigler}{1986}]{ch2:BK:statistics:History86}
Stigler, S.~M.: 1986, {\em The History of Statistics: The Measurement of
  Uncertainty Before 1900}.
\newblock Harvard University Press.

\bibitem[\protect\citeauthoryear{Stirling}{1730}]{ch2:origin:CLT:Stirling}
Stirling, J.: 1730, {\em Methodus Differentialis, sive tractatus de summation
  et interpolation serierum infinitarium. (English translation (1749): The
  Differential Method: A Treatise of the Summation and Interpolation of
  Infinite Series)}.
\newblock London.

\bibitem[\protect\citeauthoryear{Stuber}{2011}]{ch3:BK:Fading:Stuber11}
Stuber, G.~L.: 2011, {\em Principles of Mobile Communication}.
\newblock Springer, 3 edition.

\bibitem[\protect\citeauthoryear{Tan and Beaulieu}{2000}]{Rayleigh_process}
Tan, C.~C. and N.~C. Beaulieu: 2000, `On first-order Markov modeling for the
  Rayleigh fading channel'.
\newblock {\em IEEE Transactions on Communications} {\bf 48}(12), 2032--2040.

\bibitem[\protect\citeauthoryear{Terry and
  Heiskala}{2002}]{ch2:bk:modem:TCM_2002}
Terry, J. and J. Heiskala: 2002, {\em {OFDM} Wireless {LANs}: A Theoretical and
  Practical Guide}.
\newblock Sams Publishing.

\bibitem[\protect\citeauthoryear{Thompson
  et~al.}{2014a}]{ch1:5G:IEEE_part1:2014}
Thompson, J., X. Ge, H.-C. Wu, R. Irmer, H. Jiang, G. Fettweis, and S.
  Alamouti: 2014a, `5G wireless communication systems: prospects and challenges
  [Guest Editorial]'.
\newblock {\em IEEE Communications Magazine} {\bf 52}(2), 62--64.

\bibitem[\protect\citeauthoryear{Thompson
  et~al.}{2014b}]{ch1:5G:IEEE_part2:2014}
Thompson, J., X. Ge, H.-C. Wu, R. Irmer, H. Jiang, G. Fettweis, and S.
  Alamouti: 2014b, `5G wireless communication systems: prospects and challenges
  part 2 [Guest Editorial]'.
\newblock {\em IEEE Communications Magazine} {\bf 52}(5), 24--26.

\bibitem[\protect\citeauthoryear{Tong and Lam}{1996}]{ch5:art:circuit:Tong96}
Tong, C.~W. and K.~P. Lam: 1996, `Closed semiring optimization circuits using a
  connectionist approach'.
\newblock {\em IEEE Transactions on Circuits and Systems {I}: Fundamental
  Theory and Applications} {\bf 43}(6), 478--482.

\bibitem[\protect\citeauthoryear{Tran and Quinn}{2010}]{ch9:VH:onlineVB-ISSC}
Tran, V.~H. and A. Quinn: 2010, `Online Bayesian inference for mixture of known
  components'.
\newblock {\em IEEE ISSC conference}.

\bibitem[\protect\citeauthoryear{Tran and Quinn}{2011a}]{ch9:VH:TVB_ICASSP}
Tran, V.~H. and A. Quinn: 2011a, `The transformed Variational Bayes
  approximation'.
\newblock {\em ICASSP conference} pp. 4236--4239.

\bibitem[\protect\citeauthoryear{Tran and Quinn}{2011b}]{ch9:VH:VBV_ISSC}
Tran, V.~H. and A. Quinn: 2011b, `Variational Bayes Variants of the Viterbi
  Algorithm'.
\newblock {\em IEEE ISSC conference}.

\bibitem[\protect\citeauthoryear{Tretter}{1985}]{ch3:origin:Freq:Tretter85}
Tretter, S.: 1985, `Estimating the frequency of a noisy sinusoid by linear
  regression'.
\newblock {\em IEEE Transactions on Information Theory} {\bf 31}(6), 832--835.

\bibitem[\protect\citeauthoryear{Tropp
  et~al.}{2010}]{ch2:art:ComSensing:subNyquist}
Tropp, J.~A., J.~N. Laska, M.~F. Duarte, J.~K. Romberg, and R.~G. Baraniuk:
  2010, `Beyond Nyquist, Efficient Sampling of Sparse Bandlimited Signals'.
\newblock {\em IEEE Trans. Info. Theory} {\bf 56}(1), 520--544.

\bibitem[\protect\citeauthoryear{Tsai}{2009}]{ch2:art:modem:CSK_2009}
Tsai, Y.-R.: 2009, `{M}-ary spreading-code-phase-shift-keying modulation for
  {DSSS} multiple access systems'.
\newblock {\em IEEE Transactions on Communications} {\bf 57}(11), 3220--3224.

\bibitem[\protect\citeauthoryear{Tucker}{1971}]{ch2:art:AM_history}
Tucker, D.: 1971, `The early history of amplitude modulation, sidebands and
  frequency-division-multiplex'.
\newblock {\em Radio and Electronic Engineer} {\bf 41}(1), 43--47.

\bibitem[\protect\citeauthoryear{UK}{2012}]{ch2:link:5G:UK12}
UK: 2012, `University of Surrey secures £35m for new {5G} research centre'.
\newblock
  \url{http://www2.surrey.ac.uk/mediacentre/press/2012/90791_the_university_of_surrey_secures_35m_for_new_5g_research_centre.htm}.
\newblock Accessed: 2013-09-26.

\bibitem[\protect\citeauthoryear{Ungerboeck}{1982}]{ch2:origin:TCM:Gottfried82}
Ungerboeck, G.: 1982, `Channel coding with multilevel{/}phase signals'.
\newblock {\em IEEE Transactions on Information Theory} {\bf 28}(1), 56--67.

\bibitem[\protect\citeauthoryear{Vasanawala
  et~al.}{2010}]{ch2:art:ComSensing_app:MRI}
Vasanawala, S.~S., M.~T. Alley, B.~A. Hargreaves, R.~A. Barth, J.~M. Pauly, and
  M. Lustig: 2010, `Improved Pediatric MR Imaging with Compressed Sensing'.
\newblock {\em Radiology} {\bf 256}(2), 607--616.

\bibitem[\protect\citeauthoryear{Veldkamp}{1975}]{ch5:art:DualNumber:complex75}
Veldkamp, G.~R.: 1975, `On the Use of Dual Numbers, Vectors and Matrices in
  Instantaneous Spatial Kinematics'.
\newblock {\em Mechanism and Machine Theory} {\bf 11}(2), 141--156.

\bibitem[\protect\citeauthoryear{Viterbi}{1967}]{ch6:origin:VA:Viterbi67}
Viterbi, A.: 1967, `Error bounds for convolutional codes and an asymptotically
  optimum decoding algorithm'.
\newblock {\em IEEE Transactions on Information Theory} {\bf 13}(2), 260--269.

\bibitem[\protect\citeauthoryear{Viterbi}{1998}]{ch6:art:VA:logMax:Viterbi98}
Viterbi, A.~J.: 1998, `An intuitive justification and a simplified
  implementation of the MAP decoder for convolutional codes'.
\newblock {\em Selected Areas in Communications, IEEE Journal on} {\bf 16},
  260--264.

\bibitem[\protect\citeauthoryear{Viterbi}{2006}]{ch6:ART:VA:history06}
Viterbi, A.~J.: 2006, `A personal history of the Viterbi algorithm'.
\newblock {\em IEEE Signal Processing Magazine} {\bf 23}(4), 120--142.

\bibitem[\protect\citeauthoryear{von Neumann and
  Morgenstern}{1944}]{ch4:origin:unility:vonNeumann44}
von Neumann, J. and O. Morgenstern: 1944, {\em Theory of Games and Economic
  Behavior}.
\newblock John Wiley \& Sons, Inc., New York.

\bibitem[\protect\citeauthoryear{Wainwright and
  Jordan}{2008}]{ch4:art:Jordan:VB08}
Wainwright, M.~J. and M.~I. Jordan: 2008, `Graphical models, exponential
  families, and variational inference'.
\newblock {\em Foundations and Trends in Machine Learning} {\bf 1}(1-2),
  1--305.

\bibitem[\protect\citeauthoryear{Wald}{1949}]{ch4:origin:loss:Wald49}
Wald, A.: 1949, `Statistical decision functions'.
\newblock {\em Annals of Mathematical Statistics} {\bf 20}, 165--205.

\bibitem[\protect\citeauthoryear{Walker}{1931}]{ch2:origin:YuleWalker:Walker31}
Walker, G.: 1931, `On Periodicity in Series of Related Terms'.
\newblock {\em Proceedings of the Royal Society of London, Ser. A} {\bf 131},
  518--532.

\bibitem[\protect\citeauthoryear{Wang et~al.}{2009}]{ch2:bk:4G:video09}
Wang, H., L. Kondi, A. Luthra, and S. Ci: 2009, {\em 4G Wireless Video
  Communications}.
\newblock John Wiley \& Sons.

\bibitem[\protect\citeauthoryear{Wang and Chang}{1996}]{Vehicular_Fading}
Wang, H.~S. and P.-C. Chang: 1996, `On verifying the first-order {M}arkovian
  assumption for a {R}ayleigh fading channel model'.
\newblock {\em Vehicular Technology, IEEE Transactions on} {\bf 45}(2),
  353--357.

\bibitem[\protect\citeauthoryear{Wang and
  Moayeri}{1995}]{ch2:art:Fading:FSMC_95}
Wang, H.~S. and N. Moayeri: 1995, `Finite-state {M}arkov channel - a useful
  model for radio communication channels'.
\newblock {\em IEEE Transactions on Vehicular Technology} {\bf 44}(1),
  163--171.

\bibitem[\protect\citeauthoryear{Warwick and
  Karny}{1997}]{ch6:bk:curse_dimension:Karny97}
Warwick, K. and M. Karny: 1997, {\em Computer Intensive Methods in Control and
  Signal Processing: The Curse of Dimensionality}.
\newblock Springer.

\bibitem[\protect\citeauthoryear{Welch}{1967}]{ch2:origin:PSD:Welch67}
Welch, P.~D.: 1967, `The use of fast Fourier transform for the estimation of
  power spectra - A method based on time averaging over short, modified
  periodograms'.
\newblock {\em IEEE Transactions on Audio and Electroacoustics} {\bf 15}(2),
  70--73.

\bibitem[\protect\citeauthoryear{Welch}{1984}]{ch2:origin:LZW}
Welch, T.: 1984, `A technique for high-performance data compression'.
\newblock {\em IEEE Computer} {\bf 17}(6), 8--19.

\bibitem[\protect\citeauthoryear{Wiberg}{1996}]{ch5:PhD:Wiberg96}
Wiberg, N.: 1996, `Codes and Decoding on General Graphs'.
\newblock Ph.D. thesis, Department of Electrical Engineering, Linkoping
  University, Sweden.

\bibitem[\protect\citeauthoryear{Widrow and
  Hoff}{1960}]{ch2:origin:RLS:standard60}
Widrow, B. and M.~E. Hoff: 1960, `Adaptive switching circuits'.
\newblock {\em IRE WESCON Conv.Rec.} {\bf 4}, 96--104.

\bibitem[\protect\citeauthoryear{Wiener}{1949}]{ch2:origin:Wiener49}
Wiener, N.: 1949, {\em Extrapolation, Interpolation, and Smoothing of
  Stationary Time Series}.
\newblock Wiley, New York.

\bibitem[\protect\citeauthoryear{Winkler}{1995}]{ch4:bk:lossHamming:image95}
Winkler, G.: 1995, {\em Image Analysis, Random Fields and {M}arkov {C}hain
  {M}onte {C}arlo Methods}.
\newblock Springer.

\bibitem[\protect\citeauthoryear{Wold}{1954}]{ch2:origin:Wold:thesis54}
Wold, H. O.~A.: 1954, `A study in the analysis of stationary time series'.
\newblock Ph.D. thesis, University of Michigan.

\bibitem[\protect\citeauthoryear{Wolpert}{2004}]{ch2:origin:Bayes:modern1980}
Wolpert, R.~L.: 2004, `A Conversation with James O. Berger'.
\newblock {\em Statist. Sci.} {\bf 19}(1), 205--218.

\bibitem[\protect\citeauthoryear{Wong and
  Hajek}{1985}]{ch4:bk:LOTUS:proof:1985}
Wong, E. and B. Hajek: 1985, {\em Stochastic processes in engineering systems}.
\newblock Springer.

\bibitem[\protect\citeauthoryear{Wu}{2001}]{ch8:1Max_2Sum}
Wu, P.: 2001, `On the complexity of turbo decoding algorithms'.
\newblock {\em Vehicular Technology Conference} {\bf 2}, 1439--1443.

\bibitem[\protect\citeauthoryear{Wu}{2000}]{ch8:BCJR_3x_VA}
Wu, Z.: 2000, {\em Coding and Iterative Detection for Magnetic Recording
  Channels}.
\newblock Springer.

\bibitem[\protect\citeauthoryear{Wyner and Ziv}{1994}]{ch2:art:LZ77:optimal}
Wyner, A.~D. and J. Ziv: 1994, `The sliding window LempelÐZiv algorithm is
  asymptotically optimal'.
\newblock {\em Proc. IEEE} {\bf 82}(6), 872--877.

\bibitem[\protect\citeauthoryear{Xiao
  et~al.}{2006}]{ch2:art:Fading:simulation06}
Xiao, C., Y. Zheng, and N. Beaulieu: 2006, `Novel Sum-of-Sinusoids Simulation
  Models for {R}ayleigh and {R}ician Fading Channels'.
\newblock {\em IEEE Transactions on Wireless Communications} {\bf 5}(12),
  3667--3679.

\bibitem[\protect\citeauthoryear{Yaglom}{1968}]{App:Dual:book68}
Yaglom, I.: 1968, {\em Complex Numbers in Geometry}.
\newblock Academic Press.

\bibitem[\protect\citeauthoryear{Yua and
  Zhengb}{2011}]{ch4:art:particle_filtering:OFDM_2011}
Yua, Y. and X. Zhengb: 2011, `Particle filter with ant colony optimization for
  frequency offset estimation in {OFDM} systems with unknown noise
  distribution'.
\newblock {\em Signal Processing, Elsevier} {\bf 91}(5), 1339Ð1342.

\bibitem[\protect\citeauthoryear{Yule}{1927}]{ch2:origin:YuleWalker:Yule27}
Yule, G.~U.: 1927, `On a Method of Investigating Periodicities in Disturbed
  Series, with Special Reference to {W}olfer's Sunspot Numbers'.
\newblock {\em Philosophical Transactions of the Royal Society of London, Ser.
  A} {\bf 226}, 267--298.

\bibitem[\protect\citeauthoryear{Zador}{1963}]{ch2:origin:VQ:Zador63}
Zador, P.~L.: 1963, `Development and evaluation of procedures for quantizing
  multivariate distributions'.
\newblock Ph.D. thesis, Stanford University.

\bibitem[\protect\citeauthoryear{Zhao and Yu}{2008}]{ch6:art:ECM:mixture:08}
Zhao, J.-H. and P.~L.~H. Yu: 2008, `Fast ML Estimation for the Mixture of
  Factor Analyzers via an ECM Algorithm'.
\newblock {\em IEEE trans. on neural networks} {\bf 19}(11), 1956--1957.

\bibitem[\protect\citeauthoryear{Ziv and Lempel}{1977}]{ch2:origin:LZ77}
Ziv, J. and A. Lempel: 1977, `A universal algorithm for data compression'.
\newblock {\em IEEE Transactions on Information Theory} {\bf IT-23}(3),
  337--343.

\bibitem[\protect\citeauthoryear{Ziv and Lempel}{1978}]{ch2:origin:LZ78}
Ziv, J. and A. Lempel: 1978, `Compression of individual sequences via
  variable-rate coding'.
\newblock {\em IEEE Transactions on Information Theory} {\bf IT-24}(5),
  530--536.

\end{thebibliography}

\end{document}